\newcounter{theorem}
\renewcommand{\thetheorem}{\arabic{section}.\arabic{theorem}}
\newenvironment{thm}[1]{\par
\begin{sloppypar}\refstepcounter{theorem}%
\noindent{\bf #1 \thetheorem.}\it{}}{\end{sloppypar}}
\newenvironment{theorem}{\begin{thm}{Theorem}}{\end{thm}}
\newenvironment{proposition}{\begin{thm}{Proposition}}{\end{thm}}
\newenvironment{corollary}{\begin{thm}{Corollary}}{\end{thm}}
\newenvironment{lemma}{\begin{thm}{Lemma}}{\end{thm}}
\newenvironment{defi}[1]{\par
\begin{sloppypar}\refstepcounter{theorem}%
\noindent{\bf #1 \thetheorem.}\rm{}}{\end{sloppypar}}
\newenvironment{definition}{\begin{defi}{Definition}}{\end{defi}}
\newenvironment{remark}{\begin{defi}{Remark}}{\end{defi}}
\newenvironment{ex}{\begin{defi}{Example}}{\end{defi}}
\def\R{{\rm I\kern-.2em R}}   
     \def\cqfd{\hfill $\square$}
\def\X{\mathcal X}
\def\id{{\rm id\hspace*{-1.5pt}l}}
\def\dbl{[\hspace*{-1.5pt}[}
\def\dbr{]\hspace*{-1.5pt}]}
\def\dbar{\;\;\bar{}\!\!\!d}
\def\Rd{\mathbb{R}^d}
\def\S{\mathscr{S}\big(\mathbb{R}^{d}\big)}
\def\S2{\mathscr{S}\big(\mathbb{R}^{2d}\big)}
\def\S'{\mathscr{S}^\prime\big(\mathbb{R}^{d}\big)}
\def\S'2{\mathscr{S}^\prime\big(\mathbb{R}^{2d}\big)}
\def\Ran{\mathfrak{R}an}
\def\bb1{{\rm{1}\hspace{-3pt}\mathbf{l}}}
\def\Sp{\mathcal{S}p}
\begin{document}

\title{The Peierls-Onsager Effective Hamiltonian in a complete gauge covariant setting: Description of the spectrum.}

\date{\today}

\author{Viorel Iftimie\footnote{Institute
of Mathematics Simion Stoilow of the Romanian Academy, P.O.  Box
1-764, Bucharest, RO-70700, Romania.} \ \ and Radu
Purice\footnote{Institute
of Mathematics Simion Stoilow of the Romanian Academy, P.O.  Box
1-764, Bucharest, RO-70700, Romania.}
\footnote{Laboratoire Europ\'een Associ\'e CNRS Franco-Roumain {\it Math-Mode}}}

\maketitle

\begin{abstract}
Using the procedures in \cite{Bu} and \cite{GMS} and the magnetic pseudodifferential calculus we have developped in \cite{MP1,MPR1,IMP1,IMP2} we construct an effective Hamitonian that describes the spectrum in any compact subset of the real axis for a large class of periodic pseudodifferential Hamiltonians in a bounded smooth magnetic field, in a completely gauge covariant setting, without any restrictions on the vector potential and without any adiabaticity hypothesis.

\end{abstract}

\section{Introduction}\label{S.0}

In this paper we consider once again the construction of an effective Hamiltonian for a particle described by a periodic Hamiltonian and subject also to a magnetic field that will be considered bounded and smooth but neither periodic nor slowly varying. Our aim is to use some of the ideas in \cite{Bu,GMS} in conjunction with the magnetic pseudodifferential calculus developed in \cite{MP1,IMP1,IMP2,MPR1} and obtain the following improvements:
\begin{enumerate}
 \item cover also the case of pseudodifferential operators, as for example the relativistic Schr\"{o}dinger operators with principal symbol $<\eta>$;
\item consider magnetic fields that are neither constant nor slowly variable, and thus working in a manifestly covariant form and obtain results that clearly depend only on the magnetic field;
\item give up the adiabatic hypothesis (slowly variable fields) and consider only the intensity of the magnetic field as a small parameter;
\item consider hypothesis formulated only in terms of the magnetic field and not of the vector potential one uses.
\end{enumerate}
Let us point out from the beginning, that as in \cite{GMS} we construct an effective Hamiltonian associated to any compact interval of the energy spectrum but its significance concerns only the description of the real spectrum as a subset of $\mathbb{R}$. In a forthcoming paper our covariant magnetic pseudodifferential calculus will be used in order to construct an effective dynamics associated to any spectral band of the periodic Hamiltonian. Let us mention here that the magnetic pseudodifferential calculus has been used in the Peierls-Onsager problem in \cite{DNL} where some improvements of the results in \cite{PST} are obtained but still in an adiabatic setting. In fact in our following paper we intend to extend the results in \cite{DNL} and construct a more natural framework for the definition of the Peierls-Onsager effective dynamics associated to a spectral band.

Finally let us also point out here that an essential ingredient in the method elaborated in \cite{GMS} is a necessary and sufficient criterion for a tempered distribution to belong to some given Hilbert spaces (Propositions 3.2 and 3.6 in \cite{GMS}). In our 'magnetic' setting some similar criteria have to be proved and this obliges us to some different formulations that allows to avoid a gap in the original proof given in \cite{GMS,DS}.

\subsection{The problem}
\setcounter{equation}{0}
\setcounter{theorem}{0}

We shall constantly use the notation $\X\equiv\Rd$, its dual $\X^*$ being cannonically isomorphic to $\Rd$; let $<.,.>:\X^*\times\X\rightarrow\mathbb{R}$ denote the duality relation. We shall always denote by $\Xi:=\X\times\X^*$ (considered as a symplectic space with the canonical symplectic form $\sigma(X,Y):=\langle\xi,y\rangle-\langle\eta,x\rangle$); we shall denote by $\overline{\Xi}:=\X^*\times\X$.

We shall consider a discrete abelian localy compact subgroup $\Gamma\subset\X$. It is isomorphic to $\mathbb{Z}^d$ and we can view it as a lattice $\Gamma:=\oplus_{j=1}^d\mathbb{Z}e_j$, with $\{e_1,\ldots, e_d\}$ an algebraic basis of $\Rd$ (that we shall call the basis of $\Gamma$).

We consider the quotient group $\Rd/\Gamma$ that is canonically isomorphic to the $d$-dimensional torus $\mathbb{T}\equiv\mathbb{T}_\Gamma\equiv\mathbb{T}$ and let us denote by $\mathfrak{p}:\Rd\ni x\mapsto\hat{x}\in\mathbb{T}$ the canonical projection onto the quotient. Let us consider an \textit{elementary cell}:
$$
E_\Gamma\,=\,\left\{y=\sum\limits_{j=1}^dt_je_j\in\Rd\,\mid\,0\leq t_j<1,\ \forall j\in\{1,\ldots,d\}\right\},
$$
having the interior (as subset of $\Rd$) locally homeomorphic to its projection on $\mathbb{T}$. The dual lattice of $\Gamma$ is then its polar set in $\X^*$ defined as
$$
\Gamma_*\,:=\,\left\{\gamma^*\in\X^*\,\mid\,<\gamma^*,\gamma>\in(2\pi)\mathbb{Z},\ \forall\gamma\in\Gamma\right\}.
$$
Considering the dual basis $\{e^*_1,\ldots,e^*_d\}\subset\X^*$ of $\{e_1,\ldots,e_d\}$, defined by $<e^*_j,e_k>=(2\pi)\delta_{jk}$, we have evidently that $\Gamma_*:=\oplus_{j=1}^d\mathbb{Z}e^*_j$. By definition, we have that $\Gamma_*\subset\X^*$ is the polar of $\Gamma\subset\X$. We define $\mathbb{T}_{\Gamma_*}:=\X^*/\Gamma_*\equiv\mathbb{T}_*\equiv\mathbb{T}_*$ and $E_{\Gamma_*}$ and notice that $\mathbb{T}_{\Gamma_*}$ is isomorphic to the dual group of $\Gamma$ (in the sense of abelian localy compact groups).

We evidently have the following gourp isomorphisms $\X\cong\Gamma\times\mathbb{T}_{\Gamma}$, $\X^*\cong\Gamma_*\times\mathbb{T}_{\Gamma_*}$ that are not topological isomorphisms.

We shall consider the following \textit{free Hamiltonian}:
\begin{equation}\label{0.1}
 H_{0,V}:=-\Delta+V(y),\quad V\in BC^\infty(\mathcal{X},\mathbb{R}),\ \Gamma-\text{periodic},
\end{equation} 
that describes the evolution of an electron in a periodic crystal without external fields. The above operator has a self-adjoint extension in $L^2(\mathcal{X})$ that commutes with the translations $\tau_\gamma$ for any $\gamma\in\Gamma$. We can thus apply the Floquet-Bloch theory. For any $\xi\in\mathcal{X}^*$ we can define the operator 
$$
H_{0,V}(\xi):=\big(D_y+\xi\big)^2+V(y)
$$
that has a self-adjoint extension in $L^2(\mathbb{T})$ that has compact resolvent. Thus its spectrum consists in a growing sequence of finite multiplicity eigenvalues $\lambda_1(\xi)\leq\lambda_2(\xi)\leq...$ that are continuous and $\Gamma^*$-periodic functions of $\xi$. Thus, if we denote by $J_k:=\lambda_k(\mathbb{T}_*)$, we can write that
\begin{equation}
 \sigma\big(H_{0,V}\big)\ =\ \underset{k=1}{\overset{\infty}{\cup}}J_k
\end{equation} 
and it follows that this spectrum is absolutely continuous.

The above analysis implies the following statement that can be considered as \textit{the spectral form of the Onsager-Peierls substitution} in a trivial situation (with 0 magnetic field):
\begin{equation}\label{0.3}
\ \lambda\in\sigma\big(H_{0,V}\big)\ \quad\Longleftrightarrow\quad\ \exists k\geq1\ 0\in\sigma\big(\lambda-\lambda_k(D)\big)\ ,
\end{equation} 
where $\lambda_k(D)$ is the image of the multiplication operator with the function $\lambda_k(\xi)$ on $L^2(\mathcal{X}^*)$ under the conjugation with the Fourier transform (i.e. the Weyl quantization of the symbol $\lambda_k$) and thus defines a bounded self-adjoint operator on $L^2(\mathcal{X})$.

The problem we are interested in, consists in superposing a \textit{magnetic field} $B$ in the above situation; let us first consider a constant magnetic field $B=(B_{jk})_{1\leq j,k\leq d}$ with $B_{jk}=-B_{kj}$. Let us recall that using \textit{the transversal gauge} one can define the following \textit{vector potential} $A=(A_j)_{1\leq j\leq d}$
\begin{equation*}
 A_j(x)\ :=\ -\frac{1}{2}\underset{1\leq k\leq d}{\sum}B_{jk}x_k.
\end{equation*} 
We are considering $A$ as a differential 1-form on $\mathcal{X}$ so that $B$ is the 2-form given by the exterior differential of $A$. Then the associated \textit{magnetic Hamiltonian} is defined as
\begin{equation}
 H_{A,V}:=\big(D+A\big)^2+V(y),
\end{equation} 
that has also a self-adjoint extension in $L^2(\mathcal{X})$. The structure of the spectrum of this operator may be very different of the structure of $\sigma\big(H_{0,V}\big)$ (for example it may be pure point with infinite multiplicity!), but one expects that modulo some small correction (depending on $|B|$), for small $|B|$ the property \ref{0.3} with $D$ replaced by $D+A$ should still be true. More precisely it is conjectured that there exists a symbol $r_k(x,\xi;B,\lambda)$ (in fact a $BC^\infty(\Xi)$ function) such that
\begin{equation}
 \underset{|B|\rightarrow0}{\lim}r_k(x,\xi;B,\lambda)=0\ \text{in}\ BC^\infty(\Xi)
\end{equation} 
and for $\lambda$ in a compact neighborhood of $J_k$ and for small $|B|$ we have that
\begin{equation}\label{0.6}
\ \lambda\in\sigma\big(H_{0A,V}\big)\ \quad\Longleftrightarrow\quad0\in\sigma\big(\lambda-\lambda_k(D+A(x))+r_k(x,D+A(x);B,\lambda)\big),
\end{equation} 
where $r_k(x,D+A(x);B,\lambda)$ is the Weyl quantization of $r_k(x,\xi+A(x);B,\lambda)$.

The first rigorous proof of such a result appeared in \cite{N2} for a simple spectral band (i.e. $\lambda_k(\xi)$ is a non-degenerated eigenvalue of $H_{0,V}(\xi)$ for any $\xi\in\mathcal{X}^*$ and $J_k\cap J_l=\emptyset,\forall l\ne k$). In \cite{HS1} the authors study this case of a simple spectral band but also the general case, by using Wannier functions. In these references the operator appearing on the right hand side of the equivalence \ref{0.6} is considered to act in the Hilbert space $\big[l^2(\Gamma)\big]^N$ (with $N=1$ for the simple spectral band). In fact we shall prove that for a simple spectral band one can replace $l^2(\Gamma)$ with $L^2(\mathcal{X})$. Let us also notice that if one would like to consider also non-constant magnetic fields, then the above Weyl quantization of $A(x)$-dependent symbols gives operators that are not gauge covariant and thus unsuitable for a physical interpretation.

\subsection{The result by Gerard, Martinez and Sj\"{o}strand}

In \cite{GMS} the above three authors consider the evolution of an electron (ignoring the spin) in a periodic crystal under the action of exterior non-constant, slowly varying, magnetic and electric fields. More precisely the magnetic field $B$ is defined as $B=dA$ with a vector potential
\begin{equation}\label{0.7}
 A=(A_1,\ldots,A_d),\quad A_j\in C^\infty(\mathcal{X};\mathbb{R}),\quad \partial^\alpha A_j\in BC^\infty(\mathcal{X})\ \forall |\alpha|\geq1,
\end{equation} 
and the electric potential is described by 
\begin{equation}\label{0.8}
 \phi\in BC^\infty(\mathcal{X};\mathbb{R}).
\end{equation} 
The Hamiltonian is taken to be 
\begin{equation}\label{0.9}
 P_{A,\phi}\ =\ \underset{1\leq j\leq d}{\sum}\big(D_{y_j}+A_j(\epsilon y)\big)^2+V(y)+\phi(\epsilon y),
\end{equation} 
with $|\epsilon|$ small enough; this defines also a self-adjoint operator in $L^2(\mathcal{X})$. In this situation, in order to define an effective Hamiltonian, the authors apply an idea of Buslaev \cite{Bu} (see also \cite{HS1}); this idea consists in ``doubling'' the number of variables and separating the periodic part (that is also ``rapidly varying'') from the non-periodic part (that is also ``slowly varying''). One defines the following operator acting on $\mathcal{X}^2$:
\begin{equation}\label{0.10}
 \widetilde{P}_{A,\phi}\ :=\ \underset{1\leq j\leq d}{\sum}\big(\epsilon D_{x_j}+D_{y_j}+A_j(x)\big)^2+V(y)+\phi(x).
\end{equation} 

Let us point out the very interesting connection between the operators $P_{A,\phi}$ and $\widetilde{P}_{A,\phi}$. If we consider the following change of variables:
$$
\pi_\epsilon:\mathcal{X}^2\rightarrow\mathcal{X}^2,\quad\pi_\epsilon(x,y):=(x-\epsilon y,y),
$$
then for any tempered distribution $F\in\mathscr{S}^\prime(\mathcal{X})$ we have that:
\begin{equation}\label{0.11}
 \big(\widetilde{P}_{A,\phi}\circ\pi^*_\epsilon\big)(\delta_0\otimes F)\ =\ \pi^*_\epsilon\big(\delta_0\otimes(P_{A,\phi}F)\big).
\end{equation} 
Buslaev considers the operator $\widetilde{P}_{A,\phi}$ as a semi-classical operator valued pseudodifferential operator on $\mathcal{X}$ and uses the above remark in order to obtain asymptotic solutions for the equation $P_{A,\phi}u=\lambda u$. Let us develop a little bit this idea in the frame of our previous {\it magnetic pseudodifferential calculus} (\cite{MP1,IMP1,IMP2}), presenting some arguments that will be useful in our proofs.

We recall that given a symbol $a(y,\eta)$ defined on $\Xi$ and a potential vector $A$ defined on $\mathcal{X}$, one can define two 'candidates' for the semi-classical 'magnetic' quantization of the symbol $a$:
\begin{equation}\label{0.12}
 \big(\mathfrak{Op}_{A,h}(a)u\big)(x):=\iint_\Xi e^{i<\eta,x-y>}a\big(\frac{x+y}{2},h\eta+A\big(\frac{x+y}{2}\big)\big)u(y)dy\dbar\eta,\ \forall u\in\mathscr{S}(\mathcal{X}),
\end{equation} 
that is used in \cite{GMS} but is not gauge covariant, and
\begin{equation}\label{0.13}
  \big(\mathfrak{Op}^A_{h}(a)u\big)(x):=\iint_\Xi e^{i<\eta,x-y>}\omega_{h^{-1}A}(x,y)a\big(\frac{x+y}{2},h\eta\big)u(y)dy\dbar\eta,\ \forall u\in\mathscr{S}(\mathcal{X}),
\end{equation} 
with $\omega_{A}(x,y):=\exp\{-i\int_{[x,y]}A\}$, that has been introduced in \cite{MP1} and is gauge covariant. In both the above formulae $h$ is a strictly positive parameter.

For $A=0$ the two quantizations above coincide with the semi-classical Weyl quantization of $a$ denoted by $\mathfrak{Op}_h(a)$. For $h=1$ we use the notations $\mathfrak{Op}_A(a)$, $\mathfrak{Op}^A(a)$ and $\mathfrak{Op}(a)$.

Let us come back now to the operators $P_{A,\phi}$ and $\widetilde{P}_{A,\phi}$ and consider the following notations
$$
A_\epsilon(x):=A(\epsilon x),
$$
\begin{equation}\label{0.14}
 \left\{\begin{array}{lcl}
         p(x,y,\eta)&:=&|\eta|^2+V(y)+\phi(x)\\
	 \widetilde{p}(x,y,\xi,\eta)&:=&|\xi+\eta|^2+V(y)+\phi(x)=p(x,y,\xi+\eta)\\
         \overset{\circ}{p}_\epsilon(y,\eta)&:=&p(\epsilon y,y,\eta).
        \end{array}\right.
\end{equation} 
We evidently have that
\begin{equation}\label{0.15}
 P_{A,\phi}\ =\ \mathfrak{Op}_{A_\epsilon}(\overset{\circ}{p}_\epsilon),
\end{equation} 
while $\widetilde{P}_{A,\phi}$ may be thought as being obtained through the following procedure: one computes the Weyl quantization of $\widetilde{p}$ considered as a symbol in the variables $(y,\eta)\in\Xi$ and obtains an operator valued symbol in the variables $(x,\xi)\in\Xi$, that is then quantized by $\mathfrak{Op}_{A,\epsilon}$:
\begin{equation}\label{0.16}
  \left\{\begin{array}{lcl}
          \mathfrak{q}(x,\xi)&:=&\mathfrak{Op}(\widetilde{p}(x,.,\xi,.)\\
          \widetilde{P}_{A,\phi}&:=&\mathfrak{Op}_{A,\epsilon}(\mathfrak{q}).
         \end{array}\right.
\end{equation} 
The rather strange presence of the parameter $\epsilon$ in \ref{0.9} has to be considered as a reflection of the semi-classical quantization used in \ref{0.16} and of the formula \ref{0.11}. In order to deal with a more natural class of perturbations (thus to eliminate the slow variation hypothesis!) one has to give up the semi-classical quantization in the second step and insert the parameter $\epsilon$ in the symbol (like in \ref{0.15}).

Let us briefly review now the main steps of the argument in \cite{GMS}. As previously remarked, they consider a magnetic field described by a vector potential satisfying \ref{0.7}. They propose to consider the following generalization of $P_{A,\phi}$. 
\begin{itemize}
 \item The starting point is a symbol $p(x,y,\eta)$ that is polynomial in the variable $\eta\in\mathcal{X}^*$ and satisfies the following relations:\\
$i)\quad\ \  p(x,y,\eta)\ =\ \underset{|\alpha|\leq m}{\sum}a_\alpha(x,y)\eta^\alpha,\quad a_\alpha\in BC^\infty(\mathcal{X}\times\mathcal{X};\mathbb{R}),\ m\in\mathbb{N}^*,$\\
$ii)\quad\  a_\alpha(x,y+\gamma)=a_\alpha(x,y),\ \forall|\alpha|\leq m,\ \forall\gamma\in\Gamma, $\\
$iii)\quad \exists c>0\ \text{such that}\ p_m(x,y,\eta):=\underset{|\alpha|=m}{\sum}a_\alpha(x,y)\eta^\alpha\geq c|\eta|^m,\ \forall(x,y)\in\mathcal{X}\times\mathcal{X},\ \forall\eta\in\mathcal{X}^*$ i.e. $p$ is an elliptic symbol (let us notice that this condition implies that $m$ is even).
\item They introduce then the symbols:
$$
\overset{\circ}{p}_\epsilon(y,\eta):=p(\epsilon y,y,\eta),\ \forall\epsilon>0;\qquad\widetilde{p}(x,y,\xi,\eta):=p(x,y,\xi+\eta).
$$
\item The interest is focused on the self-adjoint operator in $L^2(\mathcal{X})$ defined by:
\begin{equation}\label{0.17}
 P_\epsilon\ :=\ \mathfrak{Op}_{A_\epsilon}(\overset{\circ}{p}_\epsilon).
\end{equation} 
\item The auxiliary operator (obtained by doubling the variables) is the self-adjoint operator in $L^2(\mathcal{X}^2)$ defined by:
\begin{equation}\label{0.18}
 \widetilde{P}_\epsilon\ :=\ \mathfrak{Op}_{A,\epsilon}(\mathfrak{q}),\quad\mathfrak{q}(x,\xi):=\mathfrak{Op}\big(\widetilde{p}(x,.,\xi,.)\big).
\end{equation} 
\item One verifies that a relation similar to \ref{0.11} is still verified:
\begin{equation}\label{0.19}
 \big(\widetilde{P}_\epsilon\circ\pi^*_\epsilon\big)\big(\delta_0\otimes F\big)\ =\ \pi^*_\epsilon\big(\delta_0\otimes(P_\epsilon F)\big),\ \forall F\in\mathscr{S}^\prime(\mathcal{X}).
\end{equation} 
\end{itemize}

In order to define an effective Hamiltonian to describe the spectrum of $P_\epsilon$, in \cite{GMS} the authors bring together three important ideas from the literature on the subject.
\begin{enumerate}
 \item First, the idea introduced in \cite{Bu,GRT} of ``doubling the variables" and considering the operator $\widetilde{P}_\epsilon$.
\item Then, the use of an operator valued pseudodifferential calculus, idea introduced in \cite{Bu} and having a rigorous development in \cite{B-K}.
\item The formulation of a Grushin type problem, as proposed in \cite{HS1}.
\end{enumerate}
In the following we shall discuss the use of the Grushin type problem in our spectral problem. The ideas are the following. First one fixes some compact interval $I\subset\mathbb{R}$ and some $\epsilon_0>0$ small enough. Then one has to take into account that $\mathbb{T}$ is compact and thus any elliptic self-adjoint operator in $L^2(\mathbb{T})$ is Fredholm and becomes bijective on specific finite co-dimension subspaces. Thus one can find $N\in\mathbb{N}^*$ and $N$ functions $\phi_j\in C^\infty(\mathcal{X}\times\mathcal{X}\times\mathcal{X}^*)$ (with $1\leq j\leq N$) that are $\Gamma$-periodic in the second variable and such that the following statement is true:
\begin{proposition}
 If we define:
\begin{itemize}
 \item the operator valued symbols
\begin{equation}\label{0.20}
 \left\{\begin{array}{lll}
         R_+:\Xi\rightarrow\mathbb{B}\big(L^2(\mathbb{T});\mathbb{C}^N\big),\ &R_+(x,\xi)u:=\left\{\left<u,\phi_j(x,.,\xi)\right>_{L^2(\mathbb{T})}\right\}_{1\leq j\leq N}\in\mathbb{C}^N,&\forall u\in L^2(\mathbb{T}),\\
&\\
         R_-:\Xi\rightarrow\mathbb{B}\big(\mathbb{C}^N;L^2(\mathbb{T})\big),\ &R_-(x,\xi)c:=\underset{1\leq j\leq N}{\sum}c_j\phi_j(x,.,\xi)\in L^2(\mathbb{T}),&\forall c:=(c_j)_{1\leq j\leq N}\in\mathbb{C}^N,
        \end{array}\right.
\end{equation} 
\item the associated operators obtained by a semi-classical (non-covariant) quantization:
\begin{equation}\label{0.21}
 \boldsymbol{R_\pm}(\epsilon)\ :=\ \mathfrak{Op}_{A,\epsilon}\big(R_\pm\big).
\end{equation} 
\item Then, for $\lambda\in I,\ \epsilon\in(0,\epsilon_0]$, the operator:
\begin{equation}\label{0.22}
 \mathcal{P}_\epsilon\ :=\ \left(\begin{array}{cc}
                                  \widetilde{P}_\epsilon-\lambda&\boldsymbol{R_-}(\epsilon)\\
                                  \boldsymbol{R_+}(\epsilon)&\boldsymbol{0}
                                 \end{array}\right)
\end{equation} 
is self-adjoint in $L^2\big(\mathcal{X}\times\mathbb{T}\big)\oplus L^2\big(\mathcal{X};\mathbb{C}^N\big)$ and has an inverse:
\begin{equation}\label{0.23}
 \mathcal{E}(\epsilon,\lambda)\ :=\ \left(\begin{array}{cc}
                                  \boldsymbol{E}(\epsilon,\lambda)&\boldsymbol{E}_+(\epsilon,\lambda)\\
                                  \boldsymbol{E}_-(\epsilon,\lambda)&\boldsymbol{E}_{-+}(\epsilon,\lambda)
                                 \end{array}\right).
\end{equation} 
\end{itemize}
\end{proposition}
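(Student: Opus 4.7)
The plan is to set up the Grushin reduction in the usual way: build the $\phi_j$'s out of near-eigenfunctions of the operator-valued symbol $\mathfrak{q}(x,\xi)-\lambda$ on the torus fiber, then invert the enlarged matrix symbol pointwise and transfer the conclusion to operators through the semi-classical magnetic calculus of \cite{MP1,IMP1,IMP2,MPR1}.

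First I would analyze $\mathfrak{q}(x,\xi)=\mathfrak{Op}\bigl(\widetilde{p}(x,\cdot,\xi,\cdot)\bigr)$ as a smooth operator-valued symbol on $\Xi$. Since $\widetilde{p}(x,y,\xi,\eta)=p(x,y,\xi+\eta)$ with $p$ uniformly elliptic of order $m$ and $\Gamma$-periodic in $y$, each $\mathfrak{q}(x,\xi)$ is essentially self-adjoint on $L^2(\mathbb{T})$ with compact resolvent and a discrete spectrum $\{\mu_k(x,\xi)\}_{k\geq 1}$ accumulating only at $+\infty$. The $BC^\infty$ control of the coefficients $a_\alpha$ yields a Weyl-type lower bound $\mu_k(x,\xi)\geq ck^{m/d}-C$ uniform in $(x,\xi)$, so there is a finite $N$ for which $\mathfrak{q}(x,\xi)$ has at most $N$ eigenvalues in a fixed neighborhood of the compact interval $I$, uniformly in $(x,\xi)$. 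I would then construct the $\phi_j$'s by the Riesz-projector method: around every base point $(x_0,\xi_0)$ pick a contour $\Gamma_0\subset\mathbb{C}$ enclosing the spectrum of $\mathfrak{q}(x_0,\xi_0)$ in a neighborhood of $I$ and set
$$
\Pi(x,\xi):=\frac{1}{2\pi i}\oint_{\Gamma_0}\bigl(z-\mathfrak{q}(x,\xi)\bigr)^{-1}\,dz,
$$
which is smooth of bounded rank near $(x_0,\xi_0)$. A locally finite partition of unity on $\X\times\mathbb{T}_*$, together with the Floquet unitary that implements the $\Gamma_*$-periodicity in $\xi$, allows one to glue local smooth orthonormal frames of $\Ran\,\Pi(x,\xi)$ into a globally smooth, $\Gamma$-periodic-in-$y$, bounded family $\{\phi_j(x,\cdot,\xi)\}_{j=1}^{N}$ whose span contains $\Ran\,\Pi(x,\xi)$ for every $(x,\xi)$. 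By construction the resulting $R_{\pm}$ lie in the relevant Hörmander class of operator-valued symbols and satisfy $R_{-}(x,\xi)=R_{+}(x,\xi)^{*}$ pointwise, so the magnetic quantizations $\boldsymbol{R}_{\pm}(\epsilon)$ are bounded with $\boldsymbol{R}_{-}(\epsilon)=\boldsymbol{R}_{+}(\epsilon)^{*}$.

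Self-adjointness of $\mathcal{P}_\epsilon$ then follows from self-adjointness of $\widetilde{P}_\epsilon$ (ellipticity of $\mathfrak{q}$ combined with the magnetic calculus) and the bounded adjointness of the off-diagonal blocks. For the invertibility, the key observation is that the matrix-valued symbol
$$
\mathfrak{P}(x,\xi,\lambda):=\begin{pmatrix}\mathfrak{q}(x,\xi)-\lambda & R_{-}(x,\xi)\\ R_{+}(x,\xi) & 0\end{pmatrix}
$$
is pointwise invertible: on $(\bb1-\Pi)L^{2}(\mathbb{T})$ the operator $\mathfrak{q}-\lambda$ is invertible with a norm-bound uniform in $(x,\xi)$, while on $\Pi L^{2}(\mathbb{T})$ the lack of invertibility of $\mathfrak{q}-\lambda$ is compensated by $R_{\pm}$, which is injective/surjective by the frame property. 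Solving the $2\times 2$ block system explicitly produces a smooth inverse symbol $\mathfrak{E}(x,\xi,\lambda)$ in the same class. Quantizing by $\mathfrak{Op}_{A,\epsilon}$ yields a parametrix $\mathcal{E}_{0}(\epsilon,\lambda)$ whose composition with $\mathcal{P}_\epsilon$ differs from the identity by a remainder of size $O(\epsilon)$ in the Moyal calculus; a Neumann series, convergent for $\epsilon\in(0,\epsilon_{0}]$ with $\epsilon_{0}$ small enough, then produces the exact two-sided inverse $\mathcal{E}(\epsilon,\lambda)$ with the claimed block structure.

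The main obstacle I expect is the smooth global construction of the $\phi_j$'s: eigenvalue crossings and varying multiplicities of $\mu_k(x,\xi)$ rule out any naive choice of Bloch eigenfunctions, and one survives only by letting $N$ strictly exceed any pointwise multiplicity and working with the fixed-contour Riesz projectors together with a partition of unity. A secondary but delicate point is the $\epsilon$-uniform control of the Moyal composition for mixed scalar/matrix-valued symbols living on different symplectic factors, which is precisely what the semi-classical magnetic calculus of \cite{MP1,IMP1,IMP2,MPR1} is designed to handle and which justifies the Neumann inversion of the parametrix.
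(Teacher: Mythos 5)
Your proposal takes a genuinely different route from the paper's own construction of the Grushin data. You build the frame $\{\phi_j\}$ from Riesz projectors $\Pi(x,\xi)$ of the full fibered symbol $\mathfrak{q}(x,\xi)$, glued by a partition of unity over $\X\times\mathbb{T}_*$; the paper instead builds them in Lemmas \ref{L.3.1} and \ref{L.3.3} from spectral projections of the \emph{unperturbed} operator $P_{\Gamma,\xi_0}$ (which does not depend on $x$ or $\epsilon$), approximates the resulting eigenfunctions by compactly supported $\overset{\circ}{\psi}_j\in C^\infty_0(\overset{\circ}{E})$, and extends by Floquet quasi-periodicity via \eqref{3.7}. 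The paper's $\psi_j$'s are $x$-independent by design, so the global construction only needs compactness of $\mathbb{T}_*$; the $x$-dependence of $\mathfrak{q}_\epsilon$ is then absorbed by a Neumann series in $\epsilon$ in Theorem \ref{T.4.3} (Proposition \ref{P.A.27}). Your approach goes after the $x$-dependence directly at the symbol level, which is closer in spirit to what GMS do, and has the advantage of working symbol-by-symbol without invoking the special $\epsilon=0$ reference operator. The trade-off is that it imports non-trivial global topology in $(x,\xi)$.

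There is one gap you should close: the finiteness of $N$. A locally finite partition of unity on $\X\times\mathbb{T}_*$ produces infinitely many charts, since $\X=\mathbb{R}^d$ is non-compact, so gluing local orthonormal frames of $\Ran\Pi(x,\xi)$ chart by chart a priori yields an \emph{infinite} family $\{\phi_{j,\nu}\}$. To extract a finite family whose span still contains $\Ran\Pi(x,\xi)$ everywhere, you need the cover to have uniformly bounded multiplicity $M$ (this follows from the uniform $BC^\infty$-bounds on $\mathfrak{q}$, hence uniform Lipschitz control of $\Pi$), then color the balls with $M$ colors so that same-colored balls are disjoint, and sum over each color class: $\widetilde\phi_{j,c}:=\sum_{\nu\ \text{of color}\ c}\chi_\nu\phi_j^{(\nu)}$. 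Only then do you get $N=M\cdot\operatorname{rank}\Pi$ functions with the required spanning property, after which you still must verify the $BC^\infty$-in-$(x,\xi)$ (operator-valued) symbol estimates for $R_\pm$ built this way — this is not automatic from the local construction, whereas the paper's explicit formula \eqref{3.7} for $\psi_j$ makes the symbol-class verification in Lemma \ref{L.3.7} routine. Once the finiteness and symbol estimates are secured, your pointwise block inversion of $\mathfrak{P}(x,\xi,\lambda)$ and the parametrix-plus-Neumann argument for $\mathcal{P}_\epsilon$ are correct and parallel the paper's Theorem \ref{T.4.3}.
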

Moreover one can prove then that $\boldsymbol{E}_{-+}(\epsilon,\lambda)=\mathfrak{Op}_{A,\epsilon}\big(E^{-+}_{\epsilon\lambda}\big)$ with $E^{-+}_{\epsilon\lambda}\in BC^\infty\big(\Xi;\mathbb{B}(\mathbb{C}^N)\big)$ uniformly for $(\epsilon,\lambda)\in(0,\epsilon_0]\times I$. Then it is easy to prove that:
\begin{proposition}
 The operator $\boldsymbol{E}_{-+}(\epsilon,\lambda)$ is bounded and self-adjoint in $L^2\big(\mathcal{X};\mathbb{C}^N\big)$ and we have the following equivalence:
\begin{equation}\label{0.24}
\lambda\in\sigma(\widetilde{P}_\epsilon)\quad\Longleftrightarrow\quad 0\in\sigma(\boldsymbol{E}_{-+}(\epsilon,\lambda)).
\end{equation} 
\end{proposition}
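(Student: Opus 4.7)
The plan has three steps: boundedness, self-adjointness, and the spectral equivalence via Schur complement.

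\textbf{Boundedness and self-adjointness.} Boundedness of $\boldsymbol{E}_{-+}(\epsilon,\lambda)$ on $L^2(\mathcal{X};\mathbb{C}^N)$ is immediate from the preceding assertion that $E^{-+}_{\epsilon\lambda}\in BC^\infty\big(\Xi;\mathbb{B}(\mathbb{C}^N)\big)$ uniformly in $(\epsilon,\lambda)\in(0,\epsilon_0]\times I$, combined with the matrix-valued Calder\'on--Vaillancourt theorem for the semi-classical magnetic Weyl calculus $\mathfrak{Op}_{A,\epsilon}$ established in \cite{MP1,IMP1,IMP2}. For self-adjointness, the previous proposition ensures that $\mathcal{P}_\epsilon$ is self-adjoint and bijective on $L^2(\mathcal{X}\times\mathbb{T})\oplus L^2(\mathcal{X};\mathbb{C}^N)$, so its inverse $\mathcal{E}(\epsilon,\lambda)$ is self-adjoint; writing that identity blockwise forces the diagonal block $\boldsymbol{E}_{-+}(\epsilon,\lambda)$ to be self-adjoint (and, as a by-product, $\boldsymbol{E}_-(\epsilon,\lambda)=\boldsymbol{E}_+(\epsilon,\lambda)^*$).

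\textbf{Schur-complement identities.} From $\mathcal{P}_\epsilon\mathcal{E}(\epsilon,\lambda)=I$ and $\mathcal{E}(\epsilon,\lambda)\mathcal{P}_\epsilon=I$ one reads off the four column relations
$$
(\widetilde{P}_\epsilon-\lambda)\boldsymbol{E}+\boldsymbol{R}_-(\epsilon)\boldsymbol{E}_-=I,\quad (\widetilde{P}_\epsilon-\lambda)\boldsymbol{E}_++\boldsymbol{R}_-(\epsilon)\boldsymbol{E}_{-+}=0,\quad \boldsymbol{R}_+(\epsilon)\boldsymbol{E}=0,\quad \boldsymbol{R}_+(\epsilon)\boldsymbol{E}_+=I,
$$
together with their dual row counterparts
$$
\boldsymbol{E}(\widetilde{P}_\epsilon-\lambda)+\boldsymbol{E}_+\boldsymbol{R}_+(\epsilon)=I,\quad \boldsymbol{E}_-(\widetilde{P}_\epsilon-\lambda)+\boldsymbol{E}_{-+}\boldsymbol{R}_+(\epsilon)=0,\quad \boldsymbol{E}\boldsymbol{R}_-(\epsilon)=0,\quad \boldsymbol{E}_-\boldsymbol{R}_-(\epsilon)=I.
$$

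\textbf{The two implications.} If $\boldsymbol{E}_{-+}(\epsilon,\lambda)$ is invertible on $L^2(\mathcal{X};\mathbb{C}^N)$, a short direct check using the identities above shows that $\boldsymbol{E}-\boldsymbol{E}_+\boldsymbol{E}_{-+}^{-1}\boldsymbol{E}_-$ is a two-sided inverse of $\widetilde{P}_\epsilon-\lambda$, so $\lambda\notin\sigma(\widetilde{P}_\epsilon)$. Conversely, if $\widetilde{P}_\epsilon-\lambda$ is invertible with bounded inverse $Q_\lambda$, then $(\widetilde{P}_\epsilon-\lambda)\boldsymbol{E}_+=-\boldsymbol{R}_-(\epsilon)\boldsymbol{E}_{-+}$ gives $\boldsymbol{E}_+=-Q_\lambda\boldsymbol{R}_-(\epsilon)\boldsymbol{E}_{-+}$, and combining with $\boldsymbol{R}_+(\epsilon)\boldsymbol{E}_+=I$ yields $-\boldsymbol{R}_+(\epsilon)Q_\lambda\boldsymbol{R}_-(\epsilon)\boldsymbol{E}_{-+}=I$; the dual row identities produce the same operator as a right inverse, whence $\boldsymbol{E}_{-+}(\epsilon,\lambda)$ is invertible with inverse $-\boldsymbol{R}_+(\epsilon)Q_\lambda\boldsymbol{R}_-(\epsilon)$, i.e.\ $0\notin\sigma(\boldsymbol{E}_{-+}(\epsilon,\lambda))$.

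No step is a genuine obstacle --- which is why the authors announce the statement as \emph{easy}. The only point deserving attention is the bookkeeping of functional spaces: one must check that every composition appearing in the Schur formulas is well-defined between the correct Hilbert spaces (in particular that $Q_\lambda\boldsymbol{R}_-(\epsilon)$ has range inside the domain on which $\boldsymbol{R}_+(\epsilon)$ acts as a bounded operator, and that the product $\boldsymbol{E}_+\boldsymbol{E}_{-+}^{-1}\boldsymbol{E}_-$ maps into the domain of $\widetilde{P}_\epsilon$), facts which follow from the smoothness and symbol estimates on $R_\pm$ and $E^{-+}_{\epsilon\lambda}$ and from the ellipticity of $\widetilde{p}$.
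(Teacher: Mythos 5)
Your proof is correct and follows essentially the same Schur-complement argument that the paper itself uses when it proves the analogous statement for its own covariant Grushin problem (Corollary \ref{C.4.5} and the proof of Theorem \ref{T.0.1}): read off the block relations from $\mathcal{P}\mathcal{E}=\mathds{1}$, and in each direction substitute one equation into another to exhibit an explicit inverse of $\widetilde{P}_\epsilon-\lambda$ or of $\boldsymbol{E}_{-+}$. The only small difference is that you also invoke the row identities from $\mathcal{E}\mathcal{P}=\mathds{1}$ to produce two-sided inverses explicitly, whereas the paper constructs a one-sided inverse from the column identities alone and implicitly relies on self-adjointness of the operators involved to upgrade it to bijectivity.
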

Finally, in order to pass from $\widetilde{P}_\epsilon$ to $P_\epsilon$ one makes use of \ref{0.19} and of some unitary transforms of the spaces $l^2(\Gamma)$ and $L^2(\mathcal{X})$. More precisely the following two explicit Hilbert spaces are considered in \cite{GMS}:
\begin{itemize}
 \item 
\begin{equation}\label{0.25}
 \mathfrak{V}_{0,\epsilon}\ :=\ \left\{\ F\in\mathscr{S}^\prime(\mathcal{X})\ \mid\ F=\underset{\gamma\in\Gamma}{\sum}f_\gamma\delta_{\epsilon\gamma},\ \forall (f_\gamma)_{\gamma\in\Gamma}\in l^2(\Gamma)\ \right\},\quad\|F\|_{\mathfrak{V}_{0,\epsilon}}:=\|f\|_{l^2(\Gamma)},
\end{equation} 
that is evidently unitarily equivalent to $l^2(\Gamma)$;
\item 
\begin{equation}\label{0.26}
 \mathfrak{L}_{0,\epsilon}\ :=\ \left\{\ F\in\mathscr{S}^\prime(\mathcal{X}^2)\ \mid\ F(x,y)=\underset{\gamma\in\Gamma}{\sum}v(x)\delta_{0}(x-\epsilon(y-\gamma)),\ \forall v\in L^2(\mathcal{X})\ \right\},\quad\|F\|_{\mathfrak{L}}:=\|v\|_{L^2(\mathcal{X})},
\end{equation} 
that is evidently unitarily equivalent to $L^2(\mathcal{X})$.
\end{itemize}
The following step is to extend the operators $\widetilde{P}_\epsilon$ and $\boldsymbol{R_\pm}(\epsilon)$, considered as pseudodifferential operators, to continuous operators from $\mathscr{S}^\prime(\mathcal{X}^2)$ to $\mathscr{S}^\prime(\mathcal{X}^2)$ and respectively from $\mathscr{S}^\prime(\mathcal{X})$ to $\mathscr{S}^\prime(\mathcal{X})$. Then we can directly restrict them to the subspaces $\mathfrak{L}_{0,\epsilon}$ and respectively $\mathfrak{V}_{0,\epsilon}^N$ and due to the $\Gamma$-periodicity of the initial symbol $p$ the authors prove as in \cite{GMS} that they leave these spaces invariant and that the matrix operator $\mathcal{P}_\epsilon$ still defines an invertible self-adjoint operator in $\mathcal{L}_{0,\epsilon}\oplus\mathfrak{V}_{0,\epsilon}^N$ having as inverse the coresponding restriction of $\mathcal{E}(\epsilon,\lambda)$; moreover the restriction of $\boldsymbol{E}_{-+}(\epsilon,\lambda)$ is a bounded self-adjoint operator in $\mathfrak{V}_{0,\epsilon}^N$ and we still have 
the 
property \ref{0.24}. The remark that allows one to end the analysis is that $P_\epsilon$ acting in $L^2(\mathcal{X})$ is unitarily equivalent with the transformed operator $\widetilde{P}_\epsilon$ acting on $\mathfrak{L}_{0,\epsilon}$ and thus \ref{0.24} implies directly the following equivalence:
\begin{equation}\label{0.27}
\lambda\in\sigma(P_\epsilon)\quad\Longleftrightarrow\quad 0\in\sigma(\boldsymbol{E}_{-+}(\epsilon,\lambda)),
\end{equation} 
with $\boldsymbol{E}_{-+}(\epsilon,\lambda)$ bounded self-adjoint operator on $\mathfrak{V}_{0,\epsilon}^N$ that can be evidently identified with a bounded self-adjoint operator on $[l^2(\Gamma)]^N$.

\subsection{Summary of our results}\label{S.1.3}

Let us briefly comment upon our hypothesis. First the magnetic field $B_\epsilon:=2^{-1}\underset{1\leq j,k\leq d}{\sum}B_{\epsilon,jk}dx_j\wedge dx_k$ is a closed 2-form valued smooth function ($B_{\epsilon,jk}=-B_{\epsilon,kj}$)
\begin{description}
 \item[H.1] For any pair of indices $(j,k)$ between $1$ and $d$ we are given a function $[-\epsilon_0,\epsilon_0]\ni\epsilon\mapsto B_{\epsilon,jk}\in BC^\infty(\mathcal{X};\mathbb{R})$ such that $\underset{\epsilon\rightarrow0}{\lim} B_{\epsilon,jk}=0\ \text{in}\ BC^\infty(\mathcal{X};\mathbb{R})$, for some $\epsilon_0>0$.
\end{description}
Using the transversal gauge we can define a vector potential $A_\epsilon$ (described by a 1-form valued smooth function defined on $[-\epsilon_0,\epsilon_0]$) such that $B_\epsilon=dA_\epsilon$:
\begin{equation}\label{0.28}
 A_{\epsilon,j}(x)\ :=\ -\underset{1\leq k\leq d}{\sum}x_k\int_0^1B_{\epsilon,jk}(sx)sds.
\end{equation} 
We shall not suppose that our vector potential $A_\epsilon$ satisfy \eqref{0.7}, but we shall always suppose the following behavior (that results from our hypothesis (H.1) and \eqref{0.28}):
\begin{equation}\label{0.29}
 \underset{\epsilon\rightarrow0}{\lim} <x>^{-1}A_{\epsilon,j}(x)=0\ \text{in}\ BC^\infty(\mathcal{X};\mathbb{R}).
\end{equation} 

The symbols we are considering are also considered as functions
$$
[-\epsilon_0,\epsilon_0]\ni\epsilon\mapsto p_\epsilon\in C^\infty\big(\mathcal{X}\times\mathcal{X}\times\mathcal{X}^\prime\big)
$$
satisfying conditions of type $S^m_1$ with $m>0$ uniformly in $\epsilon\in[-\epsilon_0,\epsilon_0]$:
\begin{description}
 \item[H.2] $\exists m>0,\ \text{such that}\ \forall(\widetilde{\alpha},\beta)\in\mathbb{N}^{2d}\times\mathbb{N}^d,\ \ \exists C_{\widetilde{\alpha}\beta}>0$ such that
$$
\left|\left(\partial^{\widetilde{\alpha}}_{x,y}\partial^\beta_\eta p_\epsilon\right)(x,y,\eta)\right|\ \leq\ C_{\widetilde{\alpha}\beta}<\eta>^{m-|\beta|},\quad\forall(x,y,\eta)\in\mathcal{X}\times\mathcal{X}\times\mathcal{X}^\prime,\ \forall\epsilon\in[-\epsilon_0,\epsilon_0],
$$
\item[H.3] $\underset{\epsilon\rightarrow0}{\lim}p_\epsilon\ =\ p_0$ in $S^m_1(\mathcal{X})$,
\item[H.4] $\forall\alpha\in\mathbb{N}^d$ with $|\alpha|\geq1$ we have $\underset{\epsilon\rightarrow0}{\lim}\big(\partial^\alpha_xp_\epsilon\big)=0$ in $S^m_1(\mathcal{X})$,
\item[H.5] $p_\epsilon$ is an elliptic symbol uniformly in $\epsilon\in[-\epsilon_0,\epsilon_0]$, i.e. $\exists C>0,\exists R>0$ such that
$$
p_\epsilon(x,y,\eta)\geq C|\eta|^m,\quad\forall(x,y,\eta)\in\mathcal{X}\times\mathcal{X}\times\mathcal{X}^\prime \text{with}\ |\eta|\geq R,\ \forall\epsilon\in[-\epsilon_0,\epsilon_0],
$$
\item[H.6] $p_\epsilon$ is $\Gamma$-periodic with respect to the second variable, i.e.
$$
p_\epsilon(x,y+\gamma,\eta)\ =\ p_\epsilon(x,y,\eta),\quad\forall\gamma\in\Gamma,\ \forall(x,y,\eta)\in\mathcal{X}\times\mathcal{X}\times\mathcal{X}^\prime,\ \forall\epsilon\in[-\epsilon_0,\epsilon_0].
$$
\end{description}
Let us remark here that the hypothesis (H.3) and (H.4) imply that the limit $p_0$ only depends on the second and third variables ($(y,\eta)\in\Xi$) and thus we can write
$$
p_\epsilon(x,y,\eta)\ :=\ p_0(y,\eta)\ +\ r_\epsilon(x,y,\eta),\quad\underset{\epsilon\rightarrow0}{\lim}r_\epsilon(x,y,\eta)=0\ \text{in}\ S^m_1(\mathcal{X}\times\Xi).
$$

Let us also notice that our hypothesis (H.3) is not satisfied if we consider a perturbation of the form (adiabatic electric field) $\phi(\epsilon y)$ but is verified for a perturbation of the form $\epsilon\phi(y)$. One can consider a weaker hypothesis, allowing also for the adiabatic electric field perturbation, without losing the general construction of the effective Hamiltonian, but some consequences that we shall prove would no longer be true.

We associate to our symbols the two types of symbols proposed in \cite{GMS}:
$$
\overset{\circ}{p}_\epsilon(y,\eta):=p_\epsilon(y,y,\eta),\quad\widetilde{p}_\epsilon(x,y,\xi,\eta):=p_\epsilon(x,y,\xi+\eta).
$$

The operator we want to study is 
\begin{equation}\label{0.30}
 P_\epsilon:=\mathfrak{Op}^{A_\epsilon}\big(\overset{\circ}{p}_\epsilon\big).
\end{equation} 
The auxiliary operator is defined as:
\begin{equation}\label{0.31}
 \widetilde{P}_\epsilon:=\mathfrak{Op}^{A_\epsilon}(\mathfrak{q}_\epsilon),\quad\mathfrak{q}_\epsilon(x,\xi):=\mathfrak{Op}\big(\widetilde{p}_\epsilon(x,.,\xi,.)\big).
\end{equation} 
Let us notice that in particular all the above hypothesis are satisfied if we take $B_\epsilon:=\epsilon B$ with $B$ a magnetic field with components of class $BC^\infty(\mathcal{X})$, $A_\epsilon=\epsilon A$ with $A$ a potential vector associated to $B$ by \ref{0.28} and $P_\epsilon$ one of the following possible Schr\"{o}dinger operators:
\begin{equation}\label{0.32}
 P_\epsilon=\underset{1\leq j\leq d}{\sum}\big(D_{y_j}+\epsilon A_j(y)\big)^2+V(y)+\epsilon\phi(y),
\end{equation} 
\begin{equation}\label{0.33}
 P_\epsilon=\mathfrak{Op}^{\epsilon A}(<\eta>)+V(y)+\epsilon\phi(y),
\end{equation} 
\begin{equation}\label{0.34}
 P_\epsilon=\sqrt{\mathfrak{Op}^{\epsilon A}(|\eta|^2)+1}+V(y)+\epsilon\phi(y),
\end{equation} 
where $V$ and $\phi$ satisfy  \ref{0.1} and \ref{0.8}.

In the Proposition \ref{P.A.28} of the Appendix we prove that the difference between \ref{0.33} and \ref{0.34} is of the form $\mathfrak{Op}^{\epsilon A}(q_\epsilon)$ with $\underset{\epsilon\rightarrow0}{\lim}q_\epsilon=0$ in $S^0_1(\Xi)$.

The connection between the operators \ref{0.30} and \ref{0.31} is the following:
\begin{equation}\label{0.35}
\big( \widetilde{P}_\epsilon\circ\pi^*_1\big)\big(\delta_0\otimes F\big)\ =\ \pi^*_1\big(\delta_0\otimes(P_\epsilon F)\big),\ \forall F\in\mathscr{S}^\prime(\mathcal{X}). 
\end{equation} 

In order to define an effective Hamiltonian for $P_\epsilon$ we shall apply the same ideas as in \cite{GMS} with the important remark that the operator valued pseudodifferential calculus we use is not a semi classical calculus but the 'magnetic' calculus so that all our constructions are explicitly gauge covariant. This fact obliges us to a lot of new technical lemmas in order to deal with this new calculus. Our main result is the following:
\begin{theorem}\label{T.0.1}
 We assume the Hypothesis [H.1]-[H.6]. For any compact interval $I\subset\mathbb{R}$ there exists $\epsilon_0>0$ and $N\in\mathbb{N}^*$ such that $\forall\lambda\in I$ and $\forall\epsilon\in[-\epsilon_0,\epsilon_0]$ there exists a bounded self-adjoint operator $\boldsymbol{E}_{-+}(\epsilon,\lambda):=\mathfrak{Op}^{A_\epsilon}\big(E^{-+}_{\epsilon,\lambda}\big)$ acting in $[\mathfrak{V}_{0,1}]^N$, where $E^{-+}_{\epsilon,\lambda}\in BC^\infty\big(\Xi;\mathbb{B}(\mathbb{C}^N)\big)$ uniformly in $(\epsilon,\lambda)\in[-\epsilon_0,\epsilon_0]\times I$ and is $\Gamma^*$-periodic in the variable $\xi\in\mathcal{X}^*$ for which the following equivalence is true:
\begin{equation}\label{0.36}
\lambda\in\sigma(P_\epsilon)\quad\Longleftrightarrow\quad 0\in\sigma\big(\boldsymbol{E}_{-+}(\epsilon,\lambda)\big).
\end{equation} 
\end{theorem}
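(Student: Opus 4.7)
The strategy follows the three-step architecture of Gérard--Martinez--Sjöstrand, but with the gauge-covariant magnetic calculus $\mathfrak{Op}^{A_\epsilon}$ replacing the non-covariant semi-classical quantization $\mathfrak{Op}_{A,\epsilon}$ of \eqref{0.21}. First I would pass from $P_\epsilon$ to the auxiliary operator $\widetilde{P}_\epsilon$ of \eqref{0.31}, which in the magnetic calculus appears as $\mathfrak{Op}^{A_\epsilon}(\mathfrak{q}_\epsilon)$ for the operator-valued symbol $\mathfrak{q}_\epsilon(x,\xi)=\mathfrak{Op}\bigl(\widetilde{p}_\epsilon(x,\cdot,\xi,\cdot)\bigr)$ acting on $L^2(\mathbb{T})$ by $\Gamma$-periodicity (H.6). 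Uniform ellipticity (H.5) together with compactness of $\mathbb{T}$ implies that $\mathfrak{q}_\epsilon(x,\xi)-\lambda$ is Fredholm of index zero, uniformly in $(x,\xi)\in\Xi$, $\lambda\in I$ and small $\epsilon$. Choosing $N$ large enough, one can then pick symbols $\phi_j(x,\cdot,\xi)\in C^\infty(\mathbb{T})$, $\Gamma$-periodic in the second variable, whose span supplements the range of $\mathfrak{q}_\epsilon(x,\xi)-\lambda$ uniformly; this makes the symbolic Grushin matrix
\begin{equation*}
\mathcal{P}_\epsilon(x,\xi,\lambda)=\begin{pmatrix}\mathfrak{q}_\epsilon(x,\xi)-\lambda & R_-(x,\xi)\\ R_+(x,\xi) & 0\end{pmatrix}
\end{equation*}
pointwise bijective on $L^2(\mathbb{T})\oplus\mathbb{C}^N$ with a smooth inverse that is bounded uniformly in $(\epsilon,\lambda)$.

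Second, I would lift this to the operator level via magnetic quantization, so that $\boldsymbol{R_\pm}(\epsilon):=\mathfrak{Op}^{A_\epsilon}(R_\pm)$ and the Grushin matrix $\mathcal{P}_\epsilon$ of \eqref{0.22} acts in $L^2(\mathcal{X}\times\mathbb{T})\oplus L^2(\mathcal{X};\mathbb{C}^N)$. The crucial point is that this operator-level Grushin problem is invertible and that its inverse is again of the form $\mathfrak{Op}^{A_\epsilon}$ of a smooth operator-valued symbol, so that in particular $\boldsymbol{E}_{-+}(\epsilon,\lambda)=\mathfrak{Op}^{A_\epsilon}(E^{-+}_{\epsilon,\lambda})$ with $E^{-+}_{\epsilon,\lambda}\in BC^\infty(\Xi;\mathbb{B}(\mathbb{C}^N))$ uniformly in $(\epsilon,\lambda)$. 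This rests on the stability of the magnetic Moyal product under composition and inversion, via a Beals-type criterion adapted to the magnetic phase $\omega_{A_\epsilon}$. The $\Gamma^*$-periodicity of $E^{-+}_{\epsilon,\lambda}$ in $\xi$ comes from the $\Gamma$-periodicity of $p_\epsilon$ and the $\phi_j$ in the $y$-variable together with the corresponding covariance of the magnetic product. The standard Schur complement identity for $\mathcal{P}_\epsilon$ then yields the spectral equivalence $\lambda\in\sigma(\widetilde{P}_\epsilon)\Longleftrightarrow 0\in\sigma(\boldsymbol{E}_{-+}(\epsilon,\lambda))$ on $L^2(\mathcal{X};\mathbb{C}^N)$.

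Third, I would transfer the equivalence from $\widetilde{P}_\epsilon$ to $P_\epsilon$ using the intertwining identity \eqref{0.35}: $P_\epsilon$ on $L^2(\mathcal{X})$ is unitarily equivalent through $\pi_1^*$ to the restriction of $\widetilde{P}_\epsilon$ to the subspace $\mathfrak{L}_{0,1}\subset\mathscr{S}^\prime(\mathcal{X}^2)$ defined in \eqref{0.26}, while $\boldsymbol{R_\pm}(\epsilon)$ restrict to operators between $\mathfrak{L}_{0,1}$ and $\mathfrak{V}_{0,1}^N\simeq [l^2(\Gamma)]^N$. Stability of these subspaces under $\widetilde{P}_\epsilon$, $\boldsymbol{R_\pm}(\epsilon)$, and the entries of the inverse $\mathcal{E}(\epsilon,\lambda)$ follows from the $\Gamma$-periodicity of $p_\epsilon$ and $\phi_j$ in $y$, combined with the gauge-covariance of $\mathfrak{Op}^{A_\epsilon}$, which in the present setting plays the role that the explicit semi-classical scaling $\mathfrak{Op}_{A,\epsilon}$ played in \cite{GMS}. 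Once stability is established, restricting the Schur complement of $\mathcal{P}_\epsilon$ to $\mathfrak{V}_{0,1}^N$ yields exactly \eqref{0.36}.

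The principal obstacle is the identification of the inverse of the operator-level Grushin problem as a genuine magnetic pseudodifferential operator. This is precisely where one must reformulate the criterion characterising tempered distributions lying in the appropriate magnetic Sobolev space (the analogue of Propositions 3.2 and 3.6 of \cite{GMS}) in a way that avoids the gap signalled in the introduction, and then combine it with a magnetic Beals criterion in order to recognise $\boldsymbol{E}_{-+}(\epsilon,\lambda)$ as the quantization of a symbol of class $BC^\infty$. A secondary technical difficulty is to exploit hypotheses (H.3)--(H.4) to control the $x$-dependent remainder $r_\epsilon$ uniformly, so that the symbolic Grushin inverse, and hence $E^{-+}_{\epsilon,\lambda}$, depends smoothly on $(\epsilon,\lambda)$ down to $\epsilon=0$.
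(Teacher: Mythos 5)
Your outline correctly reproduces the global architecture of the proof (auxiliary operator, Grushin problem, Schur complement, restriction to $\mathfrak{L}_0\times\mathfrak{V}_0^N$), and you identify the right key obstacles. But there are two substantive problems with the central step.

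The paper does not set up the Grushin problem by choosing, for each $(x,\xi,\epsilon,\lambda)$, functions $\phi_j(x,\cdot,\xi)$ supplementing the cokernel of the pointwise-Fredholm operator $\mathfrak{q}_\epsilon(x,\xi)-\lambda$. Pointwise Fredholmness alone gives no uniform bound on $N$, no smoothly varying choice of supplement, and no control as $|x|\to\infty$. Instead, the construction is intrinsically perturbative: hypotheses (H.3)--(H.4) force the limit symbol $p_0$ to be independent of the slow variable, so the Grushin data $R_\pm(\xi)$ --- and hence the functions $\phi_j$, which depend on $\xi$ only, not on $x$, $\epsilon$ or $\lambda$ --- are constructed once and for all for the periodic operator $P_0$ via a G{\aa}rding-type coercivity estimate on the compact dual torus (Lemmas \ref{L.3.1} and \ref{L.3.3}, Proposition \ref{P.3.8}). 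Invertibility of $\mathcal{P}_{\epsilon,\lambda}$ for small $\epsilon$ is then a Neumann-series perturbation (Theorem \ref{T.4.3} and Proposition \ref{P.A.27}), not a general magnetic Beals inversion of the full operator-level problem. Your $x$- and $\epsilon$-dependent supplement construction is a genuinely different route, closer to the original GMS scheme, but you do not supply the extra uniformity arguments that this harder route would require.

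More seriously, your claim that the $\Gamma^*$-periodicity of $E^{-+}_{\epsilon,\lambda}$ in $\xi$ ``comes from the $\Gamma$-periodicity of $p_\epsilon$ and the $\phi_j$ in the $y$-variable together with the corresponding covariance of the magnetic product'' is not correct, and the omission is consequential. Periodicity of $\phi_j$ in the cell variable yields nothing about the behaviour of $R_\pm$ under $\xi\mapsto\xi+\gamma^*$. The paper must also build into the $\phi_j$ the nontrivial covariance $\phi_j(y,\xi+\gamma^*)=e^{-i\langle\gamma^*,y\rangle}\phi_j(y,\xi)$ (Lemma \ref{L.3.7}(c), a consequence of the explicit ansatz \eqref{3.7} for the $\psi_j$ and the relation \eqref{3.25}). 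This covariance is precisely what produces $R_+(\xi+\gamma^*)=R_+(\xi)\sigma_{\gamma^*}$ and $R_-(\xi+\gamma^*)=\sigma_{-\gamma^*}R_-(\xi)$ (\eqref{4.23}), which feed the commutation relations of Lemma \ref{L.6.4}, and these in turn are exactly what make $\mathfrak{E}_{-+}(\epsilon,\lambda)$ bounded and self-adjoint on $\mathfrak{V}_0^N$ (Lemmas \ref{L.6.2}, \ref{L.6.5}). Without the $\xi$-covariance of the Grushin data, the restriction to $\mathfrak{L}_0\times\mathfrak{V}_0^N$ does not close and the equivalence \eqref{0.36} cannot be obtained. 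Your construction neither imposes this covariance nor explains how it would emerge from a uniform choice of supplement, so the chain of reasoning has a genuine gap at this point.
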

 
A direct consequence of the above theorem is a stability property for the spectral gaps of the operator $P_\epsilon$ of the same type as that obtained in \cite{AS,N1,AMP} for the Schr\"{o}dinger operator.
\begin{corollary}\label{C.0.2}
 Under the hypothesis [H.1]-[H.6], for any compact interval $K\subset\mathbb{R}$ disjoint from $\sigma(P_0)$, there exists $\epsilon_0>0$ such that $\forall\epsilon\in[-\epsilon_0,\epsilon_0]$ the interval $K$ is disjoint from $\sigma(P_\epsilon)$.
\end{corollary}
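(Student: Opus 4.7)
The plan is to apply Theorem \ref{T.0.1} with $I := K$ and then combine the resulting effective Hamiltonian description with a Neumann series argument exploiting the continuity of $\boldsymbol{E}_{-+}(\epsilon,\lambda)$ in $\epsilon$ at $\epsilon = 0$. Theorem \ref{T.0.1} supplies some $\epsilon_0 > 0$, some $N \in \mathbb{N}^*$, and a bounded self-adjoint family $\boldsymbol{E}_{-+}(\epsilon,\lambda) = \mathfrak{Op}^{A_\epsilon}\big(E^{-+}_{\epsilon,\lambda}\big)$ on $[\mathfrak{V}_{0,1}]^N$, with symbols uniformly in $BC^\infty\big(\Xi;\mathbb{B}(\mathbb{C}^N)\big)$ on $[-\epsilon_0,\epsilon_0] \times K$, together with the spectral equivalence \eqref{0.36}.

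At $\epsilon = 0$ we have $A_0 = 0$ by [H.1] and \eqref{0.28}, so $P_0 = \mathfrak{Op}\big(\overset{\circ}{p}_0\big)$. The assumption $K \cap \sigma(P_0) = \emptyset$ combined with \eqref{0.36} at $\epsilon = 0$ gives that $\boldsymbol{E}_{-+}(0,\lambda)$ is invertible for every $\lambda \in K$. Since $\lambda \mapsto \boldsymbol{E}_{-+}(0,\lambda)$ is operator-norm continuous (an immediate consequence of the smooth dependence on $\lambda$ in the Grushin construction \eqref{0.22}--\eqref{0.23}), compactness of $K$ yields a finite constant $M := \sup_{\lambda \in K} \big\| \boldsymbol{E}_{-+}(0,\lambda)^{-1} \big\|$.

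The key step is then to establish the uniform operator-norm convergence
\[
\lim_{\epsilon \to 0}\, \sup_{\lambda \in K}\, \big\| \boldsymbol{E}_{-+}(\epsilon,\lambda) - \boldsymbol{E}_{-+}(0,\lambda) \big\| \;=\; 0.
\]
I would split the difference as
\[
\boldsymbol{E}_{-+}(\epsilon,\lambda) - \boldsymbol{E}_{-+}(0,\lambda) \;=\; \mathfrak{Op}^{A_\epsilon}\!\big(E^{-+}_{\epsilon,\lambda} - E^{-+}_{0,\lambda}\big) \;+\; \big(\mathfrak{Op}^{A_\epsilon} - \mathfrak{Op}\big)\!\big(E^{-+}_{0,\lambda}\big).
\]
For the first summand, the $\epsilon$-continuity of $p_\epsilon$ in $S^m_1$ afforded by [H.3]--[H.4] should propagate through every step of the Grushin inversion to give $E^{-+}_{\epsilon,\lambda} \to E^{-+}_{0,\lambda}$ in the topology of $BC^\infty\big(\Xi;\mathbb{B}(\mathbb{C}^N)\big)$, uniformly in $\lambda \in K$; a Calder\'on--Vaillancourt type estimate of the magnetic calculus from \cite{MP1,IMP1,IMP2} then converts this into norm convergence. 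The second summand is controlled by the continuity of the magnetic Weyl quantization with respect to the vector potential, which produces a term shrinking to zero because $B_\epsilon \to 0$ in $BC^\infty(\mathcal{X};\mathbb{R})$ by [H.1] (so that $<x>^{-1}A_\epsilon \to 0$ via \eqref{0.29}).

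Once this continuity is secured, the Neumann series finishes the argument: for $|\epsilon|$ small enough we have $M \cdot \big\| \boldsymbol{E}_{-+}(\epsilon,\lambda) - \boldsymbol{E}_{-+}(0,\lambda) \big\| < 1$ uniformly in $\lambda \in K$, so $\boldsymbol{E}_{-+}(\epsilon,\lambda)$ remains invertible and $0 \notin \sigma\big(\boldsymbol{E}_{-+}(\epsilon,\lambda)\big)$; by \eqref{0.36} this gives $K \cap \sigma(P_\epsilon) = \emptyset$. The principal obstacle I expect is the propagation of $\epsilon$-continuity through the construction of $E^{-+}_{\epsilon,\lambda}$: one must recheck that each auxiliary symbol and each magnetic Moyal-product manipulation used in solving the Grushin problem behaves continuously in $\epsilon$ in the relevant $S^0_1$ topology, which is somewhat delicate because the unbounded nature of $A_\epsilon$ (only controlled through \eqref{0.29}) enters at every step of the magnetic pseudodifferential calculus.
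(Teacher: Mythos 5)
Your high-level strategy coincides with the paper's: apply Theorem~\ref{T.0.1} with $I=K$, use the spectral gap at $\epsilon=0$ to get uniform invertibility of $\boldsymbol{E}_{-+}(0,\lambda)$ on $K$, establish operator-norm smallness of $\boldsymbol{E}_{-+}(\epsilon,\lambda)-\boldsymbol{E}_{-+}(0,\lambda)$, and conclude by a Neumann series. Your two-term decomposition of the difference is exactly the natural way to organize the continuity argument, and the paper effectively does the same via its formula \eqref{6.23} and the symbol estimate in Theorem~\ref{T.4.3}.

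There is, however, a genuine gap in the crucial norm-convergence step. The effective Hamiltonian $\boldsymbol{E}_{-+}(\epsilon,\lambda)=\mathfrak{Op}^{A_\epsilon}\big(E^{-+}_{\epsilon,\lambda}\big)$ acts on $[\mathfrak{V}_0]^N\cong[l^2(\Gamma)]^N$, not on $L^2(\mathcal{X};\mathbb{C}^N)$, and you need the operator-norm smallness on $\mathbb{B}\big([\mathfrak{V}_0]^N\big)$ to run the Neumann series there. A Calder\'on--Vaillancourt estimate of the magnetic calculus from \cite{MP1,IMP1,IMP2}, which is what you invoke, yields boundedness and norm control on $L^2(\mathcal{X})$ only. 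Passing from symbol convergence in $BC^\infty\big(\Xi;\mathbb{B}(\mathbb{C}^N)\big)$ to operator-norm convergence on $[\mathfrak{V}_0]^N$ requires a dedicated result: the paper's Lemma~\ref{L.6.2}, which exploits the $\Gamma^*$-periodicity of $E^{-+}_{\epsilon,\lambda}$ in $\xi$ (a property established in Lemma~\ref{L.6.5}) to expand the symbol into a Fourier series, write $\mathfrak{Op}^{A_\epsilon}(\mathfrak{q})$ on $\mathfrak{V}_0^N\cong[l^2(\Gamma)]^N$ as a weighted shift (formulas \eqref{6.7}--\eqref{6.8}), and then bound the $[l^2(\Gamma)]^N$ operator norm by a seminorm of the symbol. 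Without noticing the $\Gamma^*$-periodicity and without a boundedness criterion specifically on $\mathfrak{V}_0^N$, the step converting symbol convergence to operator-norm convergence does not go through.
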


In fact we obtain a much stronger result, giving the optimal regularity property but only for $\epsilon=0$ i.e. at vanishing magnetic field.
\begin{proposition}\label{P.6.12}
We denote by $\mathfrak{d}_H(F_1,F_2)$ the Hausdorff distance between the two closed subsets $F_1$ and $F_2$ of $\mathbb{R}$. Then, under the Hypothesis H.1 - H.7 and I.1 - I.3, there exists a strictly positive constant $C$ such that
\begin{equation}\label{6.29a}
\mathfrak{d}_H\left(\sigma\big(P_\epsilon\big)\cap I\,,\,\sigma\big(P_0\big)\cap I\right)\,\leq\, C\epsilon\qquad\forall\epsilon\in[-\epsilon_0,\epsilon_0].
\end{equation}
\end{proposition}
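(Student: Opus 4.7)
The plan is to leverage the effective Hamiltonian equivalence \eqref{0.36} to reduce the Hausdorff comparison of $\sigma(P_\epsilon)\cap I$ and $\sigma(P_0)\cap I$ to a quantitative study of $\boldsymbol{E}_{-+}(\epsilon,\lambda)$ as a function of the two parameters $(\epsilon,\lambda)$. The argument rests on two ingredients: Lipschitz dependence of the effective Hamiltonian in $\epsilon$, and strict monotonicity in $\lambda$ at $\epsilon=0$.

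First I would establish a Lipschitz estimate
$$
\bigl\|\boldsymbol{E}_{-+}(\epsilon,\lambda)-\boldsymbol{E}_{-+}(0,\lambda)\bigr\|\,\leq\, C\epsilon,\qquad\forall\lambda\in I,\ \forall\epsilon\in[-\epsilon_0,\epsilon_0].
$$
To do so, I would revisit the construction that produces Theorem \ref{T.0.1} and show that the Grushin matrix $\mathcal{P}_\epsilon$ in \eqref{0.22} satisfies $\mathcal{P}_\epsilon-\mathcal{P}_0=\epsilon\,\mathcal{R}_\epsilon$ with $\mathcal{R}_\epsilon$ uniformly bounded between the appropriate magnetic Sobolev spaces. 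This is exactly the role of the additional hypotheses H.7 and I.1--I.3: they promote the qualitative limits $B_\epsilon\to 0$, $r_\epsilon\to 0$ used in Theorem \ref{T.0.1} to quantitative Lipschitz-in-$\epsilon$ bounds on $B_\epsilon$, $A_\epsilon$, and on the remainder $r_\epsilon=p_\epsilon-p_0$. Writing the resolvent identity
$$
\mathcal{E}(\epsilon,\lambda)=\mathcal{E}(0,\lambda)-\epsilon\,\mathcal{E}(0,\lambda)\,\mathcal{R}_\epsilon\,\mathcal{E}(\epsilon,\lambda),
$$
reading off the $(2,2)$-entry, and applying a Calder\'on--Vaillancourt-type estimate for the magnetic calculus to the resulting $BC^\infty(\Xi;\mathbb{B}(\mathbb{C}^N))$ symbol estimate $\|E^{-+}_{\epsilon,\lambda}-E^{-+}_{0,\lambda}\|=O(\epsilon)$ gives the claimed operator bound.

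Next I would prove strict monotonicity at $\epsilon=0$:
$$
-\partial_\lambda\boldsymbol{E}_{-+}(0,\lambda)\,\geq\,c\,I\quad\text{on }[\mathfrak{V}_{0,1}]^N,\qquad\forall\lambda\in I,
$$
for some $c>0$. Since $\mathcal{P}_\epsilon$ depends on $\lambda$ only through the scalar $-\lambda$ in its $(1,1)$-block, differentiating $\mathcal{E}\mathcal{P}_\epsilon=I$ gives the Schur-type identity whose $(2,2)$-entry reads $\partial_\lambda\boldsymbol{E}_{-+}(\epsilon,\lambda)=\boldsymbol{E}_-(\epsilon,\lambda)\boldsymbol{E}_-(\epsilon,\lambda)^*$. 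At $\epsilon=0$ there is no magnetic field and the symbol $\overset{\circ}{p}_0$ is $\Gamma$-periodic, so the Grushin problem fibres through Floquet--Bloch over $\mathbb{T}_{\Gamma_*}$. On each fibre, the functions $\phi_j$ are chosen in the construction of Theorem \ref{T.0.1} so that $R_-(0,\cdot,\xi)$ is a bounded-below injection into $L^2(\mathbb{T})$; a short computation on the fibre (or equivalently the identification $E^{-+}(0,\lambda)(\xi)\simeq \mathrm{diag}(\lambda-\lambda_{k_j}(\xi))_j$ modulo invertible factors) shows that $-\partial_\lambda E^{-+}(0,\lambda)(x,\xi)$ is bounded below by a strictly positive constant as an $N\times N$ matrix, uniformly on $I\times\Xi$. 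The G\aa rding inequality for the magnetic calculus then transfers this symbolic lower bound to the operator lower bound $c\,I$.

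Finally, Step 1 and Step 2 combine via \eqref{0.36}: for any $\lambda_0\in\sigma(P_\epsilon)\cap I$ we have $0\in\sigma\bigl(\boldsymbol{E}_{-+}(\epsilon,\lambda_0)\bigr)$, hence by Step 1, $\mathrm{dist}\bigl(0,\sigma(\boldsymbol{E}_{-+}(0,\lambda_0))\bigr)\leq C\epsilon$; the monotonicity of Step 2 implies that the map $\lambda\mapsto\sigma(\boldsymbol{E}_{-+}(0,\lambda))$ shifts at rate at least $c$, so there exists $\mu_0$ with $|\lambda_0-\mu_0|\leq(C/c)\epsilon$ and $0\in\sigma(\boldsymbol{E}_{-+}(0,\mu_0))$, i.e.\ $\mu_0\in\sigma(P_0)$. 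Interchanging the roles of $\epsilon$ and $0$ (the Lipschitz estimate and monotonicity at $\epsilon=0$ are symmetric in the logical sense they need to be applied) yields the opposite inequality and hence \eqref{6.29a}. The main obstacle is Step 1: although Theorem \ref{T.0.1} already produces $\boldsymbol{E}_{-+}(\epsilon,\lambda)\to\boldsymbol{E}_{-+}(0,\lambda)$, extracting the \emph{linear} rate $O(\epsilon)$ requires tracking, through each step of the Grushin construction and through each magnetic Moyal product $\sharp^{B_\epsilon}$, that no remainder is merely $o(1)$ rather than $O(\epsilon)$; this is precisely where the stronger hypotheses H.7 and I.1--I.3 are used in an essential way, and it is the only place where one cannot hope to extend the bound to all $\epsilon\in[-\epsilon_0,\epsilon_0]$ simultaneously without them.
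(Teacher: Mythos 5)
Your argument is correct in outline, but it takes a genuinely different route from the paper's. You reduce the Hausdorff estimate to two ingredients — (i) a Lipschitz-in-$\epsilon$ bound $\|\boldsymbol{E}_{-+}(\epsilon,\lambda)-\boldsymbol{E}_{-+}(0,\lambda)\|\leq C\epsilon$ and (ii) uniform strict monotonicity of $\lambda\mapsto\boldsymbol{E}_{-+}(0,\lambda)$ — and then argue by a spectral-flow comparison. Ingredient (i) is exactly what the paper establishes in Remark~\ref{R.6.12} and \eqref{6.27} under the extra Hypotheses I.1--I.3, so that part matches. Ingredient (ii), however, is not used in the paper at all. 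Instead, the paper exploits the explicit Schur-complement identity $\mathfrak{E}_{-+}(\epsilon,\lambda)^{-1}=-\mathfrak{R}_{+,\epsilon}\big(\widetilde{P}_\epsilon^{\prime\prime\prime}-\lambda\big)^{-1}\mathfrak{R}_{-,\epsilon}$ from \eqref{6.20}: given $\mathrm{dist}\big(\lambda,\sigma(P_0)\big)>C\epsilon$, the resolvent bound $\|(\widetilde{P}_0^{\prime\prime\prime}-\lambda)^{-1}\|\leq(C\epsilon)^{-1}$ and the uniform bounds \eqref{6.28} on $\mathfrak{R}_{\pm,\epsilon}$ give $\|\mathfrak{E}_{-+}(0,\lambda)^{-1}\|\leq C_2^2(C\epsilon)^{-1}$; combined with \eqref{6.27} this makes $\mathfrak{E}_{-+}(\epsilon,\lambda)=\mathfrak{E}_{-+}(0,\lambda)+\mathfrak{S}_{-+}(\epsilon,\lambda)$ invertible by Neumann series once $C>C_1C_2^2$, so $\lambda\notin\sigma(P_\epsilon)$. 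This avoids proving any lower bound on $\partial_\lambda\boldsymbol{E}_{-+}$; the paper's resolvent formula converts "$\lambda$ is far from $\sigma(P_0)$" directly into "$0$ is far from $\sigma(\boldsymbol{E}_{-+}(0,\lambda))$" with the right linear rate, which is precisely what your monotonicity lemma is designed to achieve but by a more circuitous argument.

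Two small remarks on your version. First, there is a sign slip: from $\mathcal{E}\mathcal{P}_\epsilon=I$ one gets $\partial_\lambda\boldsymbol{E}_{-+}=\boldsymbol{E}_-\boldsymbol{E}_-^*\geq0$ (positive, as is consistent with the simple-band formula $E^{-+}_{0,\lambda}=\lambda-\lambda_k(\xi)+\ldots$), so the strict lower bound you want is $\partial_\lambda\boldsymbol{E}_{-+}(0,\lambda)\geq c\,I$, not $-\partial_\lambda\boldsymbol{E}_{-+}(0,\lambda)\geq c\,I$. Second, the uniform lower bound $c>0$ on $\boldsymbol{E}_-\boldsymbol{E}_-^*=\boldsymbol{E}_+^*\boldsymbol{E}_+$ does hold — the $(2,2)$-entry of $\mathcal{P}_\epsilon\mathcal{E}=I$ gives $\mathfrak{R}_+\boldsymbol{E}_+=I$, hence $\boldsymbol{E}_+$ is bounded below by $\|\mathfrak{R}_+\|^{-1}$ — but this step is only sketched in your write-up and deserves to be made explicit, since the whole comparison rate $C/c$ depends on it and the bound must be uniform in $\lambda\in I$. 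With these two points tightened, your argument gives an alternative, somewhat longer proof of the same result; the trade-off is that your route makes the geometric content ("the effective Hamiltonian moves monotonically in $\lambda$") explicit, whereas the paper's direct resolvent estimate is shorter and needs fewer auxiliary bounds.
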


Let us consider now the case of a simple spectral band and generalize the result we discuss previously in this case. By hypothesis we have that $\tau_\gamma P_0=P_0\tau_\gamma$, $\forall\gamma\in\Gamma$ and we can apply the Floquet-Bloch theory. We denote by $\lambda_1(\xi)\leq\lambda_2(\xi)\leq\ldots$ the eigenvalues of the operators $P_{0,\xi}:=\mathfrak{Op}\big(p_0(\cdot,\xi+\cdot)\big)$ that are self-adjoint in $L^2(\mathbb{T})$; they are continuous functions on the torus $\mathbb{T}^{*d}:=\mathcal{X}^*/\Gamma^*$ (and they are even $C^\infty$ in the case of a simple spectral band). Thus $\sigma(P_0)=\underset{j=1}{\overset{d}{\cup}}J_j$ with $J_j:=\lambda_j(\mathbb{T}^{*d})$. Let us consider now the following new Hypothesis:
\begin{description}
 \item[H.7] There exists $k\geq1$ such that $J_k$ {\it is a simple spectral band for} $P_0$, i.e. $\forall\xi\in\mathbb{T}^{*d}$ we have that $\lambda_k(\xi)$ is a non-degenerate eigenvalue of $P_0$ and for any $l\ne k$ we have that $J_l\cap J_k=\emptyset$.
\end{description}
\begin{proposition}\label{P.0.3}
 Assume the hypothesis [H.1]-[H.7] are true and that moreover we have that $p_0(y,-\eta)=p_0(y,\eta)$, $\forall(y,\eta)\in\Xi$. Let $I\subset\mathbb{R}$ be a compact neighborhood of $J_k$ disjoint from $\underset{l\ne k}{\cup}J_l$. Then there exists $\epsilon_0>0$ such that $\forall(\epsilon,\lambda)\in[-\epsilon_0,\epsilon_0]\times I$ in Theorem \ref{T.0.1} we can take $N=1$ and 
\begin{equation}\label{0.37}
 E^{-+}_{\epsilon,\lambda}(x,\xi)\ =\ \lambda-\lambda_k(\xi)+r_{\epsilon,\lambda}(x,\xi),\quad\text{with}\quad\underset{\epsilon\rightarrow0}{\lim}r_{\epsilon,\lambda}=0,\ \text{in}\ BC^\infty(\Xi),\ \text{uniformly in}\ \lambda\in I.
\end{equation} 
\end{proposition}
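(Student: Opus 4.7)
My approach is to exploit the simple-band hypothesis to explicitly choose $N=1$ and to take the function $\phi_1$ in the Grushin problem of \ref{0.20} to be a smooth global Bloch eigenfunction, independent of the $x$-variable. Under hypothesis [H.7], the spectral projection $\pi_k(\xi)$ of $P_{0,\xi}$ onto $\lambda_k(\xi)$ is of rank one, and depends smoothly and $\Gamma^*$-periodically on $\xi\in\mathcal{X}^*$ (by analytic perturbation theory on the gap around $J_k$). The hypothesis $p_0(y,-\eta)=p_0(y,\eta)$ implies that complex conjugation intertwines $P_{0,\xi}$ and $P_{0,-\xi}$, which kills the obstruction (the first Chern class of the associated line bundle over $\mathbb{T}^*$) to the existence of a globally smooth, $\Gamma^*$-periodic, normalized eigenfunction $\phi_k(y,\xi)$, at least in the dimensions relevant here; I would import this point as a standard Bloch-bundle triviality argument. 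Setting $\phi_1(x,y,\xi):=\phi_k(y,\xi)$ (no dependence on $x$) gives $R_\pm$ with $\Gamma$-periodicity in $y$ and satisfying all the symbol bounds required in \ref{0.20}, uniformly in $(\epsilon,\lambda)$.

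With this choice, I would first treat the case $\epsilon=0$ by direct computation at the symbol level. Since $A_0=0$ and $p_0$ is $x$-independent, $\widetilde{P}_0$ is simply the Weyl quantization, in the variable $x$, of the operator-valued symbol $\xi\mapsto P_{0,\xi}$ acting in $L^2(\mathbb{T})$. Hence the Grushin matrix of \ref{0.22} is itself the Weyl quantization of the operator-valued symbol
\begin{equation*}
\mathcal{P}_0(\xi;\lambda):=\begin{pmatrix} P_{0,\xi}-\lambda & R_-(\xi) \\ R_+(\xi) & 0 \end{pmatrix}.
\end{equation*}
Decomposing $u=u_\parallel\phi_k(\cdot,\xi)+u_\perp$ with $u_\perp\perp\phi_k(\cdot,\xi)$, the simple-band gap condition makes $P_{0,\xi}-\lambda$ invertible on the orthogonal complement uniformly for $\lambda\in I$, and a one-line Schur-complement computation gives $E^{-+}_{0,\lambda}(x,\xi)=\lambda-\lambda_k(\xi)$, which is automatically $\Gamma^*$-periodic and $x$-independent.

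To extend to $\epsilon\neq 0$, I would invoke Theorem \ref{T.0.1} applied to the present choice of $\phi_1$: once it is shown that the Grushin problem built from this $\phi_1$ is well-posed at $\epsilon=0$ (which is exactly what the previous paragraph establishes), the stability part of the construction in Theorem \ref{T.0.1}, combined with the perturbative behavior $p_\epsilon=p_0+r_\epsilon$ with $r_\epsilon\to 0$ in $S^m_1(\mathcal{X}\times\Xi)$ from [H.3]--[H.4] and the vanishing of the magnetic field $B_\epsilon\to 0$ from [H.1], yields a parametrix and then an exact inverse whose $(-,+)$-block symbol $E^{-+}_{\epsilon,\lambda}$ differs from $E^{-+}_{0,\lambda}$ by a symbol $r_{\epsilon,\lambda}\in BC^\infty(\Xi)$ tending to $0$ as $\epsilon\to 0$ uniformly in $\lambda\in I$. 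Writing $E^{-+}_{\epsilon,\lambda}(x,\xi)=\lambda-\lambda_k(\xi)+r_{\epsilon,\lambda}(x,\xi)$ concludes the proof, the $\Gamma^*$-periodicity in $\xi$ being preserved at every stage.

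\emph{Main obstacle.} The delicate step, and the only one requiring the extra time-reversal hypothesis, is the construction of a \emph{globally} smooth $\Gamma^*$-periodic normalized eigenfunction $\phi_k(\cdot,\xi)$: locally on $\mathbb{T}^*$ it exists trivially by analytic perturbation theory, but the global choice is obstructed unless the Bloch line bundle is topologically trivial, which is precisely what $p_0(y,-\eta)=p_0(y,\eta)$ ensures. Once this smooth section is available, identifying $N=1$ and proving the asserted asymptotic form of $E^{-+}_{\epsilon,\lambda}$ reduces to the Schur-complement computation at $\epsilon=0$ and the continuity already provided by Theorem \ref{T.0.1}.
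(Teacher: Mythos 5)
Your proposal follows essentially the same route as the paper's Section 7.1: construct a globally smooth, $\Gamma^*$-equivariant, conjugation-symmetric Bloch eigenfunction, use it as the single function $\phi_1$ in the Grushin problem with $N=1$, compute the effective symbol at $\epsilon=0$ to be $\lambda-\lambda_k(\xi)$, and invoke the perturbative limit \eqref{4.12} of Theorem \ref{T.4.3} for $\epsilon\neq 0$. The one substantive gap lies in the step you yourself flag as "the delicate step." Invoking triviality of the Bloch line bundle (your hedge ``at least in the dimensions relevant here'' is unnecessary: conjugation symmetry forces $2c_1=0$ in the torsion-free group $H^2(\mathbb{T}^d;\mathbb{Z})$ in every dimension) only yields a \emph{continuous} global section. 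It does not by itself provide the quantitative regularity $\phi\in C^\infty(\Xi;\mathcal{K}_{lm,0})$ for every $l\in\mathbb{N}$, nor the exact transformation laws $\phi(y,\xi+\gamma^*)=e^{-i\langle\gamma^*,y\rangle}\phi(y,\xi)$ and $\overline{\phi(y,\xi)}=\phi(y,-\xi)$, all of which the Grushin machinery relies on (Lemma \ref{L.3.7} and Remark \ref{R.7.11} for the uniform symbol bounds on $R_\pm$; Lemma \ref{L.4.6} for the commutation relation that makes $E^{-+}_{\epsilon,\lambda}$ $\Gamma^*$-periodic in $\xi$). The paper closes this gap constructively in Lemma \ref{L.7.10}: a Helffer--Sj\"ostrand-style inductive gluing of local sections over the Brillouin zone, with phase corrections chosen so that $\Gamma^*$-equivariance and conjugation symmetry hold exactly; then a convolution regularization followed by re-projection onto $\mathcal{N}_k(\xi)$ to restore $C^\infty$-smoothness; and finally an elliptic bootstrap (iterating $\check{P}_0(\xi)^l\phi=\lambda_k(\xi)^l\phi$) to obtain values in every $\mathcal{K}_{lm,0}$. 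If you replace your one-line appeal to triviality by this construction, or cite triviality and then supply the equivariance, regularization and bootstrap steps separately, your outline becomes a complete proof. One further small imprecision: $\phi$ is $\Gamma^*$-\emph{quasi}-periodic in $\xi$, not $\Gamma^*$-periodic; it is the resulting symbol $E^{-+}_{\epsilon,\lambda}$ that is genuinely $\Gamma^*$-periodic.
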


In the case of a constant magnetic field, under some more assumptions on the symbol $p_\epsilon$ we can have even more information concerning the operator $\boldsymbol{E}_{-+}(\epsilon,\lambda)$.
\begin{proposition}\label{P.0.4}
 Assume that the hypothesis [H.1]-[H.7] are true and that $B_\epsilon$ are constant magnetic fields (for any $\epsilon$) and that the symbols $p_\epsilon$ do not depend on the first variable ($x\in\mathcal{X}$). Then we can complete the conclusion of Theorem \ref{T.0.1} with the following statements:
\begin{enumerate}
 \item $\boldsymbol{E}_{-+}(\epsilon,\lambda)$ is a bounded self-adjoint operator in $[L^2(\mathcal{X})]^N$.
\item The symbol $E^{-+}_{\epsilon,\lambda}$ is independent of the first variable ($y\in\mathcal{X}$) and is $\Gamma^*$-periodic in the second variable ($\xi\in\mathcal{X}^*$).
\end{enumerate}
\end{proposition}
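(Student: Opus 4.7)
The plan is to exploit the magnetic translation invariance that emerges under the extra hypotheses. Since $p_\epsilon(x,y,\eta)=p_\epsilon(y,\eta)$ does not depend on its first variable and $B_\epsilon$ is constant, one has $\widetilde p_\epsilon(x,y,\xi,\eta)=p_\epsilon(y,\xi+\eta)$ and $\mathfrak{q}_\epsilon(x,\xi)=\mathfrak{q}_\epsilon(\xi)$, so $\widetilde P_\epsilon=\mathfrak{Op}^{A_\epsilon}(\mathfrak{q}_\epsilon)$ is the magnetic Weyl quantization of a position-independent symbol in a constant magnetic field. As such it commutes with every magnetic translation $T^{A_\epsilon}_a$, $a\in\mathcal{X}$, via the standard intertwining relation $T^{A_\epsilon}_a\,\mathfrak{Op}^{A_\epsilon}(b)\,(T^{A_\epsilon}_a)^*=\mathfrak{Op}^{A_\epsilon}(b(\cdot-a,\cdot))$ proved in \cite{MP1,IMP1}.

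My first step is to check that the Grushin data constructed for Theorem \ref{T.0.1} can be chosen to inherit this symmetry. The cutoff functions $\phi_j$ come from the spectral analysis of $P_{0,\xi}:=\mathfrak{Op}(p_0(\cdot,\xi+\cdot))$ on $L^2(\mathbb{T})$, an operator that here carries no $x$-dependence, so one may take $\phi_j(x,y,\xi)=\phi_j(y,\xi)$. Consequently the symbols $R_\pm(x,\xi)=R_\pm(\xi)$ and their quantizations $\boldsymbol{R_\pm}(\epsilon)=\mathfrak{Op}^{A_\epsilon}(R_\pm)$ commute with magnetic translations, just as $\widetilde P_\epsilon-\lambda$ does. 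The Grushin matrix $\mathcal{P}_\epsilon$ therefore commutes with the diagonal action of $T^{A_\epsilon}_a$ on $L^2(\mathcal{X}\times\mathbb{T})\oplus L^2(\mathcal{X};\mathbb{C}^N)$, and by uniqueness of the inverse, so does every entry of $\mathcal{E}(\epsilon,\lambda)$; in particular $\boldsymbol{E}_{-+}(\epsilon,\lambda)$ commutes with $T^{A_\epsilon}_a$ for all $a\in\mathcal{X}$.

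Applying the intertwining relation to $b=E^{-+}_{\epsilon,\lambda}$ then forces $E^{-+}_{\epsilon,\lambda}(x-a,\xi)=E^{-+}_{\epsilon,\lambda}(x,\xi)$ for every $a\in\mathcal{X}$, hence $E^{-+}_{\epsilon,\lambda}(x,\xi)=E^{-+}_{\epsilon,\lambda}(\xi)$; combined with the $\Gamma^*$-periodicity in $\xi$ already established in Theorem \ref{T.0.1}, this yields part 2. For part 1 the symbol $E^{-+}_{\epsilon,\lambda}\in BC^\infty(\mathcal{X}^*;\mathbb{B}(\mathbb{C}^N))$ is Hermitian-valued (inherited from $R_+^*=R_-$ and the self-adjointness of the Grushin problem), and the magnetic Calder\'on--Vaillancourt theorem from \cite{IMP1} then makes $\mathfrak{Op}^{A_\epsilon}(E^{-+}_{\epsilon,\lambda})$ bounded self-adjoint on $[L^2(\mathcal{X})]^N$.

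The step I expect to be the most delicate is reconciling the two natural function spaces on which $\boldsymbol{E}_{-+}(\epsilon,\lambda)$ is a priori defined: Theorem \ref{T.0.1} produces it as a bounded operator on $[\mathfrak{V}_{0,1}]^N$, while part 1 requires reading it as the $L^2$-bounded operator $\mathfrak{Op}^{A_\epsilon}(E^{-+}_{\epsilon,\lambda})$. One must verify that the symbol identification extracted from the Grushin inversion in Theorem \ref{T.0.1} is really the same magnetic pseudodifferential operator when the distributional extension is restricted to either $\mathfrak{V}_{0,1}^N$ or to $[L^2(\mathcal{X})]^N$; the $x$-independence of the symbol and the constancy of $B_\epsilon$ are exactly what is needed to remove the lattice-compatibility restrictions otherwise necessary for such extensions to make sense.
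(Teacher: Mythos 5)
Your treatment of part (2) is essentially the paper's argument: you use the intertwining relation $T_{\epsilon,a}\,\mathfrak{Op}^{A_\epsilon}(q) = \mathfrak{Op}^{A_\epsilon}\big((\tau_a\otimes\id)q\big)\,T_{\epsilon,a}$ (the paper's Lemma \ref{L.8.1}), note that $\mathcal{P}_{\epsilon,\lambda}$ has an $x$-independent symbol hence commutes with the diagonal magnetic translations, pass that commutation to the inverse $\mathcal{E}_{\epsilon,\lambda}$, and use the injectivity of the magnetic quantization to conclude $E^{-+}_{\epsilon,\lambda}$ is $x$-independent. The $\Gamma^*$-periodicity is already in hand from Lemma \ref{L.6.5}. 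That part is sound.

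The gap is in part (1). You prove that $\mathfrak{Op}^{A_\epsilon}(E^{-+}_{\epsilon,\lambda})$ is bounded and self-adjoint on $[L^2(\mathcal{X})]^N$, but that fact is already contained in Theorem \ref{T.4.3} in full generality and carries no new information. The actual content of part (1) is that under the extra hypotheses the spectral equivalence of Theorem \ref{T.0.1}, namely $\lambda\in\sigma(P_\epsilon)\Leftrightarrow 0\in\sigma(\boldsymbol{E}_{-+}(\epsilon,\lambda))$, holds with $\boldsymbol{E}_{-+}(\epsilon,\lambda)$ read in $[L^2(\mathcal{X})]^N$ rather than $[\mathfrak{V}_{0,1}]^N$, i.e.\ one may discard the distributional spaces $\mathfrak{V}_0$, $\mathfrak{L}_0$ of Section 5 entirely. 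You identify this "reconciling the two function spaces" step as delicate but never resolve it. The paper's route is different and decisive: using formula \eqref{1.5} together with the identity $\omega_{\tau_{-x}A_\epsilon}(y,\tilde y)=\omega_{A_\epsilon}(y,\tilde y)e^{i\langle A_\epsilon(x),y-\tilde y\rangle}$ (valid precisely because $A_\epsilon$ is linear), one shows $\boldsymbol{\chi}^*\widetilde{P}_\epsilon(\boldsymbol{\chi}^*)^{-1}$ is unitarily equivalent to $\id\otimes P_\epsilon$ on $L^2(\mathcal{X}^2)$, so $\sigma(\widetilde{P}_\epsilon')=\sigma(P_\epsilon)$; then Proposition \ref{P.2.14} gives $\sigma(\widetilde{P}_\epsilon')=\sigma(\widetilde{P}_\epsilon'')$ and Corollary \ref{C.4.5} yields the equivalence with $0\in\sigma(\mathfrak{E}_{-+}(\epsilon,\lambda))$ at the $L^2(\mathcal{X};\mathbb{C}^N)$ level. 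Without an argument of this kind — or an explicit proof that the $\mathfrak{V}_0^N$ and $L^2^N$ realizations of $\mathfrak{E}_{-+}(\epsilon,\lambda)$ have $0$ in their spectra simultaneously — part (1) is not established.
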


\subsection{Overview of the paper}

The first Section analysis the properties of the auxiliary operator, its self-adjointness and its connection with our main Hamiltonian. The second Section recalls some facts about the Floquet-Bloch theory and the connection between the spectra of the auxiliary operator in $L^2(\mathcal{X}\times\mathcal{X})$ and $L^2(\mathcal{X}\times\mathbb{T})$. In Section 3 we introduce a Grushin type problem and define the principal part of the symbol of the effective Hamiltonian. In Section 4 we use a perturbative method to construct the Peierls effective Hamiltonian and study the connection between its spectrum and the spectrum of the auxiliary operator acting in $L^2(\mathcal{X}\times\mathbb{T})$. Section 5 is devoted to the definition and study of some auxiliary Hilbert spaces of tempered distributions $\mathfrak{V}_0$ and $\mathfrak{L}_0$. In Section 6 we make a rigorous study of the Peierls Hamiltonian reducing the local study of its spectrum to a spectral problem of the effective Hamiltonian acting in $[\mathfrak{V}_0]^N$ (the proof of Theorem \ref{T.0.1}). We also consider an application to the stability of spectral gaps (Corollary \ref{C.0.2}). The Lipschitz regularity of the boundaries of the spectral bands at vanishing magnetic field is also obtained as an application of the main Theorem. The 7-th Section is devoted to the case of a simple spectral band (Proposition \ref{P.0.3}). In Section 8 we consider the particular case of a constant magnetic field (Proposition \ref{P.0.4}). An Appendix is gathering a number of facts concerning operator valued symbols and their associated magnetic pseudodifferential operators and some spaces of periodic distributions. Some of the notations and definitions we use in the paper are introduced in this Appendix.

\section{The auxiliary operator in $L^2(\mathcal{X}\times\mathcal{X})$}\label{S.1}
\setcounter{equation}{0}
\setcounter{theorem}{0}

Let us consider given a family $\{B_\epsilon\}_{\epsilon\in[-\epsilon_0,\epsilon_0]}$ of magnetic fields on $\mathcal{X}$ satisfying Hypothesis H.1 and $\{A_\epsilon\}_{\epsilon\in[-\epsilon_0,\epsilon_0]}$ an associated family of vector potentials (we shall always work with the vector potentials given by formula \eqref{0.23}). Let us also consider a given family of symbols $\{p_\epsilon\}_{\epsilon\in[-\epsilon_0,\epsilon_0]}$ that satisfy the Hypothesis H.2 - H.6.

We shall use the following convention: if $f$ is a function defined on $\mathcal{X}\times\Xi$, we denote by $f^\circ$ the function defined on $\Xi$ by taking the restriction of $f$ at the subset $\Delta_{\mathcal{X}\times\mathcal{X}}\times\mathcal{X}^*$ where $\Delta_{\mathcal{X}\times\mathcal{X}}:=\{(x,x)\in\mathcal{X}\times\mathcal{X}\,\mid\,x\in\mathcal{X}\}$ is the diagonal of the Cartesian product and we denote by $\widetilde{f}$ the function on $\Xi\times\Xi$ defined by the formula $\widetilde{f}(X,Y)\equiv\widetilde{f}(x,\xi,y,\eta):=f(x,y,\xi+\eta)$.

It is evident that with the above hypothesis and notations we have that $p_\epsilon^\circ\in S^m_1(\Xi)$ and is elliptic, both properties being uniform in $\epsilon\in[-\epsilon_0,\epsilon_0]$. Then the operator $P_\epsilon:=\mathfrak{Op}^{A_\epsilon}(p_\epsilon^\circ)$, the main operator we are interested in, is self-adjoint and lower semi-bounded in $L^2(\mathcal{X})$ having the domain $\mathcal{H}^m_{A_\epsilon}(\mathcal{X})$ (the magnetic Sobolev space of order $m$ defined in Definition \ref{D.1.4} below); moreover, with the choice of vector potential that we made, it is essentially self-adjoint on the space of Schwartz test functions $\mathscr{S}(\mathcal{X})$.

Taking into account the example \ref{A.8} it follows that $\{\widetilde{p}_\epsilon\}_{|\epsilon|\leq\epsilon_0}\in S^m_{1,\epsilon}(\mathcal{X}\times\Xi)$ so that by defining $\mathfrak{q}_\epsilon(X):=\mathfrak{Op}\big(\widetilde{p}_\epsilon(X,.)\big)$, we have that for any $s\in\mathbb{R}$ the following is true
$$
\{\mathfrak{q}_\epsilon\}_{|\epsilon|\leq\epsilon_0}\ \in\ S^0_{0,\epsilon}\big(\Xi;\mathbb{B}\big(\mathcal{H}^{s+m}_{\bullet}(\mathcal{X});\mathcal{H}^s_{\bullet}(\mathcal{X})\big)\big)
$$
with the notations introduced at the begining of subsection 9.1 of the Appendix. We can then define the {\it auxiliary operator} $\widetilde{P}_\epsilon:=\mathfrak{Op}^{A_\epsilon}\big(\mathfrak{q}_\epsilon\big)$ that will play a very important role in our arguments. In the following Proposition we collect the properties of the operator $\widetilde{P}_\epsilon$ that result from Proposition \ref{A.7} and exemple \ref{A.8}.

\begin{lemma}\label{L.1.1}
 With the above notations and under the above Hypothesis H.1 - H.6  we have that
\begin{enumerate}
 \item For any $s\in\mathbb{R}$:
$$
\widetilde{P}_\epsilon\ \in\ \mathbb{B}\left(\mathscr{S}\big(\mathcal{X};\mathcal{H}^{s+m}(\mathcal{X})\big);\mathscr{S}\big(\mathcal{X};\mathcal{H}^{s}(\mathcal{X})\big)\right)\ \cap\ \mathbb{B}\left(\mathscr{S}^\prime\big(\mathcal{X};\mathcal{H}^{s+m}(\mathcal{X})\big);\mathscr{S}^\prime\big(\mathcal{X};\mathcal{H}^{s}(\mathcal{X})\big)\right),
$$
uniformly in $\epsilon\in[-\epsilon_0,\epsilon_0]$.
\item 
$
\widetilde{P}_\epsilon\ \in\ \mathbb{B}\left(\mathscr{S}\big(\mathcal{X}^2\big);\mathscr{S}\big(\mathcal{X}^2\big)\right)\ \cap\ \mathbb{B}\left(\mathscr{S}^\prime\big(\mathcal{X}^2\big);\mathscr{S}^\prime\big(\mathcal{X}^2\big)\right)
$,
uniformly in $\epsilon\in[-\epsilon_0,\epsilon_0]$.
\item $\widetilde{P}_\epsilon$ considered as unbounded operator in $L^2(\mathcal{X}^2)$ with domain $\mathscr{S}(\mathcal{X}^2)$ is symetric for any $\epsilon\in[-\epsilon_0,\epsilon_0]$.
\end{enumerate}
\end{lemma}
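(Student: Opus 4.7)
The plan is to dispatch the three claims in order, leveraging the operator-valued magnetic pseudodifferential calculus developed in the Appendix. For (1), I would invoke Proposition \ref{A.7} applied to the operator-valued symbol $\mathfrak{q}_\epsilon$: the computation preceding the lemma already places $\{\mathfrak{q}_\epsilon\}_{|\epsilon|\leq\epsilon_0}$ in $S^0_{0,\epsilon}\bigl(\Xi;\mathbb{B}(\mathcal{H}^{s+m}_{\bullet}(\mathcal{X});\mathcal{H}^s_{\bullet}(\mathcal{X}))\bigr)$ for \emph{every} $s\in\mathbb{R}$, and the Appendix result delivers boundedness of $\mathfrak{Op}^{A_\epsilon}(\mathfrak{q}_\epsilon)$ as a continuous map $\mathscr{S}(\mathcal{X};\mathcal{H}^{s+m}(\mathcal{X}))\to\mathscr{S}(\mathcal{X};\mathcal{H}^s(\mathcal{X}))$, together with its transposed version on $\mathscr{S}^\prime$. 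Uniformity in $\epsilon$ is built into the class $S^0_{0,\epsilon}$ and propagates through the quantization.

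For (2), an analogous argument works with operator-valued symbols taking values in the bounded endomorphisms of $\mathscr{S}(\mathcal{X})$, respectively of $\mathscr{S}^\prime(\mathcal{X})$. Since $\widetilde{p}_\epsilon$ is $BC^\infty$ in $(x,y)$ and of symbol type $S^m_1$ in the last variable, the inner Weyl quantization $\mathfrak{q}_\epsilon(x,\xi)=\mathfrak{Op}(\widetilde{p}_\epsilon(x,\cdot,\xi,\cdot))$ is at each $(x,\xi)$ a continuous endomorphism of $\mathscr{S}(\mathcal{X})$ whose seminorm bounds are uniform in $(x,\xi)\in\Xi$ and in $\epsilon$. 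The outer magnetic quantization in the first group of variables then produces a continuous endomorphism of $\mathscr{S}(\mathcal{X};\mathscr{S}(\mathcal{X}))\simeq\mathscr{S}(\mathcal{X}^2)$; transposition handles $\mathscr{S}^\prime(\mathcal{X}^2)$.

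For (3), I would verify symmetry by a direct kernel computation resting on two facts. First, $\widetilde{p}_\epsilon(x,y,\xi,\eta)=p_\epsilon(x,y,\xi+\eta)$ is real-valued (a consequence of H.5 and of the convention that the symbols are physical observables), so each $\mathfrak{q}_\epsilon(x,\xi)$ is a symmetric operator on $L^2(\mathcal{X})$ by the standard scalar Weyl calculus. Second, the magnetic phase $\omega_{A_\epsilon}(x,x^\prime)=\exp\{-i\int_{[x,x^\prime]}A_\epsilon\}$ satisfies $\omega_{A_\epsilon}(x^\prime,x)=\overline{\omega_{A_\epsilon}(x,x^\prime)}$ because $A_\epsilon$ is a real $1$-form. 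For $u,v\in\mathscr{S}(\mathcal{X}^2)$ one writes the oscillatory integral
\[
\langle\widetilde{P}_\epsilon u,v\rangle_{L^2(\mathcal{X}^2)}=(2\pi)^{-d}\iiint e^{i\langle\xi,x-x^\prime\rangle}\omega_{A_\epsilon}(x,x^\prime)\bigl\langle\mathfrak{q}_\epsilon(\tfrac{x+x^\prime}{2},\xi)u(x^\prime,\cdot),v(x,\cdot)\bigr\rangle_{L^2(\mathcal{X})}\,dx\,dx^\prime\,d\xi,
\]
then swaps $x\leftrightarrow x^\prime$ in the complex conjugate, applies the two symmetry properties above, and recovers $\langle u,\widetilde{P}_\epsilon v\rangle$.

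The main obstacle is (3): since $p_\epsilon$ grows like $\langle\eta\rangle^m$, the displayed integral is only oscillatory, and the reordering of the $(x,x^\prime,\xi)$-integration with the inner $L^2_y$ pairing must be justified, e.g.\ by a standard regularisation (a cutoff $\chi(\delta\eta)$ in the Weyl integral defining $\mathfrak{q}_\epsilon$, followed by $\delta\to 0$) or equivalently by a density and duality argument using Example \ref{A.8}. Parts (1) and (2) are essentially corollaries of the Appendix results, so the technical heart of the lemma lies in this symmetry verification.
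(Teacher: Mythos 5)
Your plan for parts (1) and (2) matches the paper exactly: Lemma~\ref{L.1.1} is stated with the remark that its content follows from Proposition~\ref{P.A.7} and Example~\ref{E.A.8}, and your reduction to the operator-valued symbol $\mathfrak{q}_\epsilon\in S^0_{0,\epsilon}\big(\Xi;\mathbb{B}(\mathcal{H}^{s+m}_{\bullet};\mathcal{H}^{s}_{\bullet})\big)$ and to the $\mathbb{B}\big(\mathscr{S}(\mathcal{X}^2)\big)\cap\mathbb{B}\big(\mathscr{S}^\prime(\mathcal{X}^2)\big)$ statement in \eqref{A.13} is precisely the route the authors intend.

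For part (3) your conclusion is right and your two ingredients (reality of $p_\epsilon$, hence symmetry of each scalar Weyl operator $\mathfrak{q}_\epsilon(x,\xi)$; and $\overline{\omega_{A_\epsilon}(x,x^\prime)}=\omega_{A_\epsilon}(x^\prime,x)$) are the correct ones. The route you propose, though, is more laborious than necessary: the regularisation issue you flag (exchanging the oscillatory $(x,x^\prime,\xi)$-integration with the inner $L^2$-pairing when $p_\epsilon$ grows like $\langle\eta\rangle^m$) has already been taken care of once and for all in Proposition~\ref{P.A.7}. Point~(3) of that Proposition establishes the formal-adjoint identity $\big[\mathfrak{Op}^{A}(p)\big]^* = \mathfrak{Op}^{A}(p^*)$ for operator-valued magnetic symbols, and its proof contains exactly the integration-by-parts and cut-off argument you were about to redo. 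So the efficient argument is: $p_\epsilon$ real $\Rightarrow$ $\mathfrak{q}_\epsilon(x,\xi)^*=\mathfrak{q}_\epsilon(x,\xi)$ by the scalar Weyl calculus $\Rightarrow$ $\widetilde{P}_\epsilon^*=\mathfrak{Op}^{A_\epsilon}(\mathfrak{q}_\epsilon^*)=\widetilde{P}_\epsilon$ on $\mathscr{S}(\mathcal{X}^2)$ by Proposition~\ref{P.A.7}(3) together with the equality \eqref{A.11}. Your direct kernel computation is an acceptable alternative, but it effectively reproves part of the Appendix; invoking the adjoint formula avoids the very obstacle you single out as the technical heart of the lemma. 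One small caveat you correctly flag: the reality of $p_\epsilon$ is not spelled out verbatim among H.2--H.6, but it is forced by the pointwise lower bound in H.5 for large $\eta$ and is clearly the standing convention (all examples \eqref{0.32}--\eqref{0.34} are real), so it is safe to use.
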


Let us now discuss some different forms of the operator $\widetilde{P}_\epsilon$ that we shall use further. We consider first the isomorphisms:
\begin{equation}\label{1.1}
 \boldsymbol{\psi}:\mathcal{X}^2\rightarrow\mathcal{X}^2,\quad\boldsymbol{\psi}(x,y)\ :=\ (x,x-y);\qquad\boldsymbol{\psi}^{-1}=\boldsymbol{\psi};
\end{equation}
\begin{equation}\label{1.2}
 \boldsymbol{\chi}:\mathcal{X}^2\rightarrow\mathcal{X}^2,\quad\boldsymbol{\chi}(x,y)\ :=\ (x+y,y);\qquad\boldsymbol{\chi}^{-1}(x,y)\ =\ (x-y,y).
\end{equation}
The operators $\boldsymbol{\psi}^*$ and $\boldsymbol{\chi}^*$ that they induce on $L^2(\mathcal{X})$ ($ \boldsymbol{\psi}^*u:=u\circ\boldsymbol{\psi}$) are evidently unitary.

\begin{lemma}\label{L.1.2}
 For any $u\in\mathscr{S}(\mathcal{X}^2)$, the image $\widetilde{P}_\epsilon u$ may be written in any of the following three equivalent forms:
\begin{equation}\label{1.3}
 \big(\widetilde{P}_\epsilon u\big)(x,y)\ =\ (2\pi)^{-d/2}\int_{\mathcal{X}}\int_{\mathcal{X}^*}e^{i<\eta,y-\tilde{y}>}\omega_{A_\epsilon}(x,x+\tilde{y}-y)\,p_\epsilon\big(x-y+(y+\tilde{y})/2,(y+\tilde{y})/2,\eta\big)\,u(x+\tilde{y}-y,\tilde{y})\,d\tilde{y}\,d\eta,
\end{equation}
\begin{equation}\label{1.4}
 \big(\boldsymbol{\psi}^*\widetilde{P}_\epsilon\boldsymbol{\psi}^* u\big)(x,y)\ =\ \left[\mathfrak{Op}^{A_\epsilon}\left(\big((\id\otimes\tau_y\otimes\id)p_\epsilon\big)^\circ\right)u(.,y)\right](x),
\end{equation}
\begin{equation}\label{1.5}
 \big(\boldsymbol{\chi}^*\widetilde{P}_\epsilon(\boldsymbol{\chi}^*)^{-1} u\big)(x,y)\ =\ \left[\mathfrak{Op}^{(\tau_{-x}A_\epsilon)}\left(\big((\tau_x\otimes\id\otimes\id)p_\epsilon\big)^\circ\right)u(x,.)\right](y).
\end{equation}
\end{lemma}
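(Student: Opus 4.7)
The plan is to deduce \eqref{1.3} directly from the definition of $\widetilde{P}_\epsilon$, and then to obtain \eqref{1.4} and \eqref{1.5} from \eqref{1.3} by conjugating with the affine changes of variable $\boldsymbol{\psi}$ and $\boldsymbol{\chi}$. The whole statement reduces to a computation with oscillatory integrals on $\mathscr{S}(\mathcal{X}^2)$, and no new analytic input beyond the magnetic Weyl formula \eqref{0.13} is needed.

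For \eqref{1.3}, I would unfold $\widetilde{P}_\epsilon=\mathfrak{Op}^{A_\epsilon}(\mathfrak{q}_\epsilon)$: applying \eqref{0.13} to the operator-valued symbol $\mathfrak{q}_\epsilon(x,\xi)=\mathfrak{Op}\big(\widetilde{p}_\epsilon(x,\cdot,\xi,\cdot)\big)$ and substituting the Weyl formula for $\mathfrak{q}_\epsilon(x,\xi)$ produces a fourfold integral in the variables $(\xi,z,\eta,\tilde{y})$ with phase $\langle\xi,x-z\rangle+\langle\eta,y-\tilde{y}\rangle$ and symbol $p_\epsilon\big((x+z)/2,(y+\tilde{y})/2,\xi+\eta\big)$. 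The crucial step is the change of variable $\theta:=\xi+\eta$ (keeping $\eta$ fixed), after which the $\xi$-dependent part of the exponential becomes $e^{i\langle\theta-\eta,x-z\rangle}$. Integrating in $\eta$ then produces the Fourier delta $(2\pi)^d\delta(y-\tilde{y}-x+z)$, and the subsequent $z$-integration forces $z=x+\tilde{y}-y$, yielding exactly the integrand of \eqref{1.3}.

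For \eqref{1.4}, since $\boldsymbol{\psi}^{-1}=\boldsymbol{\psi}$ one has $(\boldsymbol{\psi}^*\widetilde{P}_\epsilon\boldsymbol{\psi}^* u)(x,y)=(\widetilde{P}_\epsilon\boldsymbol{\psi}^* u)(x,x-y)$. I would then apply \eqref{1.3} with second argument $x-y$ and use $(\boldsymbol{\psi}^* u)\big(x+\tilde{y}-(x-y),\tilde{y}\big)=u(y+\tilde{y},y)$; the substitution $z:=y+\tilde{y}$ converts the integral into the standard magnetic Weyl form in the $x$-variable, with symbol $p_\epsilon\big((x+z)/2,(x+z)/2-y,\eta\big)$ acting on the function $z\mapsto u(z,y)$, which is the diagonal restriction of $(\id\otimes\tau_y\otimes\id)p_\epsilon$. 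The derivation of \eqref{1.5} is entirely parallel: from $(\boldsymbol{\chi}^*\widetilde{P}_\epsilon(\boldsymbol{\chi}^*)^{-1}u)(x,y)=(\widetilde{P}_\epsilon(\boldsymbol{\chi}^*)^{-1}u)(x+y,y)$ and $((\boldsymbol{\chi}^*)^{-1}u)(x+\tilde{y},\tilde{y})=u(x,\tilde{y})$, \eqref{1.3} delivers an integral in the $y$-variable with magnetic factor $\omega_{A_\epsilon}(x+y,x+\tilde{y})$ and symbol $p_\epsilon\big(x+(y+\tilde{y})/2,(y+\tilde{y})/2,\eta\big)$; these are then rewritten as $\omega_{\tau_{-x}A_\epsilon}(y,\tilde{y})$ and as the diagonal restriction of $(\tau_x\otimes\id\otimes\id)p_\epsilon$, respectively.

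There is no serious analytic obstacle; the only point that requires care is the bookkeeping of the magnetic phases and of the translations $\tau_{\pm x},\tau_y$ under the affine coordinate changes $\boldsymbol{\psi}$ and $\boldsymbol{\chi}$. These reduce to the algebraic identity $\omega_{A_\epsilon}(x+a,x+b)=\omega_{\tau_{-x}A_\epsilon}(a,b)$ and its variants, which follow by a direct parametrization of the segments entering the definition of $\omega$. Thus the usefulness of \eqref{1.4} and \eqref{1.5} is that they recast the two-dimensional magnetic phase on $\mathcal{X}^2$ as a parameter-dependent one-dimensional magnetic phase on $\mathcal{X}$, which is precisely the form required for the subsequent Grushin-type analysis.
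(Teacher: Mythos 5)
Your argument follows the paper's proof essentially step for step: the same shear change of variables in the doubled oscillatory integral to isolate a Dirac delta and eliminate one pair of integration variables for \eqref{1.3}, and the same conjugate-then-substitute computations for \eqref{1.4} and \eqref{1.5}, including the magnetic-phase identity $\omega_{A_\epsilon}(x+a,x+b)=\omega_{(\tau_{-x}A_\epsilon)}(a,b)$. The only cosmetic difference is that for \eqref{1.3} you integrate out $\eta$ rather than $\xi$ after the substitution, which is symmetric and immaterial.
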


\begin{proof}
 Let us fix $u\in\mathscr{S}(\mathcal{X}^2)$. Starting from the definitions of $\widetilde{P}_\epsilon$ and $\mathfrak{q}_\epsilon$ and using oscillating integral techniques, we get
$$
\big(\widetilde{P}_\epsilon u\big)(x,y)\ =\ (2\pi)^{-d/2}\int_{\mathcal{X}}\int_{\mathcal{X}^*}e^{i<\xi,x-\tilde{x}>}\omega_{A_\epsilon}(x,\tilde{x})\left[\mathfrak{q}_\epsilon\big((x+\tilde{x})/2,\xi\big)\,u(\tilde{x},.)\right](y)\,d\tilde{x}\,d\xi=
$$
$$
=(2\pi)^{-d}\int_{\mathcal{X}}\int_{\mathcal{X}^*}e^{i<\xi,x-\tilde{x}>}\omega_{A_\epsilon}(x,\tilde{x})\left[\int_{\mathcal{X}}\int_{\mathcal{X}^*}e^{i<\eta,y-\tilde{y}>}\,p_\epsilon\big((x+\tilde{x})/2,(y+\tilde{y})/2,\xi+\eta\big)\,u(\tilde{x},\tilde{y})\,d\tilde{y}\,d\eta\right]d\tilde{x}\,d\xi=
$$
$$
=(2\pi)^{-d}\int_{\mathcal{X}}\int_{\mathcal{X}}\int_{\mathcal{X}^*}e^{i<\eta,y-\tilde{y}>}\,\omega_{A_\epsilon}(x,\tilde{x})\,p_\epsilon\big((x+\tilde{x})/2,(y+\tilde{y})/2,\eta\big)\left[\int_{\mathcal{X}^*}e^{i<\xi,x-\tilde{x}-y+\tilde{y}>}\,d\xi\right]u(\tilde{x},\tilde{y})\,d\tilde{x}\,d\tilde{y}\,d\eta.
$$
By the Fourier inversion theorem the inner oscillating integral is in fact $(2\pi)^{d/2}\delta_0(x-\tilde{x}-y+\tilde{y})=(2\pi)^{d/2}[\tau_{x-y+\tilde{y}}\delta_0](\tilde{x})$ and we can eliminate the integrals over $\xi\in\mathcal{X}^*$ and over $\tilde{x}\in\mathcal{X}$ in order to obtain formula \eqref{1.3}.

In order to prove \eqref{1.4}, we apply \eqref{1.3} to $\boldsymbol{\psi}^*u$, that meaning to replace in \eqref{1.3} $u(x+\tilde{y}-y,\tilde{y})$ with $u(x+\tilde{y}-y,x-y)$ and we finally replace $y$ with $x-y$, obtaining
$$
\big(\boldsymbol{\psi}^*\widetilde{P}_\epsilon\boldsymbol{\psi}^* u\big)(x,y)\ =\ (2\pi)^{-d/2}\int_{\mathcal{X}}\int_{\mathcal{X}^*}e^{i<\eta,x-y-\tilde{y}>}\,\omega_{A_\epsilon}(x,\tilde{y}+y)\,p_\epsilon\big(y+(x-y+\tilde{y})/2,(x-y+\tilde{y})/2,\eta\big)\,u(\tilde{y}+y,y)\,d\tilde{y}\,d\eta.
$$
Changing the integration variable $\tilde{y}$ to $\tilde{y}-y$ we obtain
$$
\big(\boldsymbol{\psi}^*\widetilde{P}_\epsilon\boldsymbol{\psi}^* u\big)(x,y)\ =\ (2\pi)^{-d/2}\int_{\mathcal{X}}\int_{\mathcal{X}^*}e^{i<\eta,x-\tilde{y}>}\,\omega_{A_\epsilon}(x,\tilde{y})\,p_\epsilon\big((x+\tilde{y})/2,(x+\tilde{y})/2-y,\eta\big)\,u(\tilde{y},y)\,d\tilde{y}\,d\eta,
$$
i.e. \eqref{1.4}

The formula \eqref{1.5} can be easily obtained in a similar way, starting with \eqref{1.3} applied to $\big(\boldsymbol{\chi}^*\big)^{-1} u$, i.e. replacing in \eqref{1.3} $u(x+\tilde{y}-y,\tilde{y})$ with $u(x-y,\tilde{y})$ and finally replacing the variable $x\in\mathcal{X}$ with $x+y$; this gives us the result
$$
\big(\boldsymbol{\chi}^*\widetilde{P}_\epsilon(\boldsymbol{\chi}^*)^{-1} u\big)(x,y)\ =\ (2\pi)^{-d/2}\int_{\mathcal{X}}\int_{\mathcal{X}^*}e^{i<\eta,y-\tilde{y}>}\,\omega_{A_\epsilon}(x+y,x+\tilde{y})\,p_\epsilon\big(x+(y+\tilde{y})/2,(y+\tilde{y})/2,\eta\big)\,u(x,\tilde{y})\,d\tilde{y}\,d\eta.
$$
We end the proof of \eqref{1.5} by noticing that the following equalities are true (see also \eqref{A.29}):
$$
\omega_{A_\epsilon}(x+y,x+\tilde{y})=\exp\{-i\int_{[x+y,x+\tilde{y}]}\hspace*{-0.8cm}A_\epsilon\hspace*{0.5cm}\}=\exp\left\{i\left\langle y-\tilde{y},\int_0^1A_\epsilon\big((1-s)(x+y)+s(x+\tilde{y})\big)\,ds\right\rangle\right\}=
$$
$$
=\exp\left\{i\left\langle y-\tilde{y},\int_0^1A_\epsilon\big(x+(1-s)y+s\tilde{y}\big)\,ds\right\rangle\right\}=\exp\left\{i\left\langle y-\tilde{y},\int_0^1\big(\tau_{-x}A_\epsilon\big)\big((1-s)y+s\tilde{y}\big)\,ds\right\rangle\right\}=
$$
$$
=\exp\{-i\int_{[y,\tilde{y}]}\hspace*{-0.5cm}\big(\tau_{-x}A_\epsilon\big)\}=\omega_{(\tau_{-x}A_\epsilon)}(y,\tilde{y}).
$$
\end{proof}
The following Corollary is a direct consequence of Lemma \ref{L.1.2}.
\begin{corollary}\label{C.1.3}
 We have the following two relations between the operators $\widetilde{P}_\epsilon$ and $P_\epsilon$:
\begin{enumerate}
 \item For any $v\in\mathscr{S}^\prime(\mathcal{X})$
\begin{equation}\label{1.8}
 \big(\boldsymbol{\chi}^*\widetilde{P}_\epsilon(\boldsymbol{\chi}^*)^{-1}(\delta_0\otimes v)\big)\ =\ \delta_0\otimes\big(P_\epsilon v\big),
\end{equation}
\begin{equation}\label{1.8'}
 \big(\boldsymbol{\psi}^*\widetilde{P}_\epsilon\boldsymbol{\psi}^*(v\otimes\delta_0)\big)\ =\ \big(P_\epsilon v\big)\otimes\delta_0.
\end{equation}
\item If $p_\epsilon$ does not depend on its second variable $y\in\mathcal{X}$, then the following equality is true:
\begin{equation}\label{1.7}
 \boldsymbol{\psi}^*\widetilde{P}_\epsilon\boldsymbol{\psi}^*\ =\ P_\epsilon\otimes\id.
\end{equation}
\end{enumerate}
\end{corollary}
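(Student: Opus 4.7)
The strategy is to read both parts off directly from the alternative representations \eqref{1.4} and \eqref{1.5} established in Lemma \ref{L.1.2}, exploiting the observation that at the parameter value $y=0$ (respectively $x=0$) the operator-valued symbol appearing there collapses precisely to $P_\epsilon = \mathfrak{Op}^{A_\epsilon}(p_\epsilon^\circ)$, because $\tau_0 = \id$ and (in the second case) $\tau_{-0}A_\epsilon = A_\epsilon$.

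For \eqref{1.8'}, I would apply \eqref{1.4} to the distribution $u := v\otimes\delta_0 \in \mathscr{S}'(\mathcal{X}^2)$, which is legitimate by the extension stated in Lemma \ref{L.1.1}(2). Writing
$$
S_y\ :=\ \mathfrak{Op}^{A_\epsilon}\bigl(((\id\otimes\tau_y\otimes\id)p_\epsilon)^\circ\bigr),
$$
the right-hand side of \eqref{1.4} on $u$ becomes formally $\delta_0(y)\,(S_y v)(x)$. Since $y\mapsto S_y v$ is smooth in $y$ (the dependence of the symbol on $y$ is through translation of a $BC^\infty$ factor, by Hypothesis H.2), the distributional identity $\delta_0(y)f(y) = \delta_0(y)f(0)$ yields $\delta_0(y)(S_y v)(x) = \delta_0(y)(S_0 v)(x) = \delta_0(y)(P_\epsilon v)(x)$, which is exactly $(P_\epsilon v)\otimes\delta_0$. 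I would make this rigorous by pairing against an arbitrary test function $\varphi\otimes\psi\in\mathscr{S}(\mathcal{X}^2)$ and computing the $y$-integral explicitly. The identity \eqref{1.8} follows from the completely symmetric argument applied to \eqref{1.5} with $u := \delta_0\otimes v$; here one additionally uses $\tau_{-0}A_\epsilon = A_\epsilon$ so that the magnetic quantization collapses to $\mathfrak{Op}^{A_\epsilon}$ at $x=0$.

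For part 2, if $p_\epsilon$ is independent of its second variable, then $(\id\otimes\tau_y\otimes\id)p_\epsilon = p_\epsilon$ for every $y\in\mathcal{X}$, so the operator-valued symbol in \eqref{1.4} is constant in $y$ and equal to $p_\epsilon^\circ$. Thus \eqref{1.4} reduces to
$$
\bigl(\boldsymbol{\psi}^*\widetilde{P}_\epsilon\boldsymbol{\psi}^* u\bigr)(x,y)\ =\ \bigl(P_\epsilon u(\cdot,y)\bigr)(x),
$$
which is by definition the action of $P_\epsilon\otimes\id$ on $u\in\mathscr{S}(\mathcal{X}^2)$, and the identity extends to $\mathscr{S}'(\mathcal{X}^2)$ by the continuity statement of Lemma \ref{L.1.1}.

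The only delicate point is the distributional bookkeeping in part 1, namely rigorously turning the informal identity $\delta_0(y)\,S_y v = \delta_0(y)\,S_0 v$ into a statement between tempered distributions on $\mathcal{X}^2$; this is handled by the duality argument sketched above and does not require any ingredient beyond Lemmas \ref{L.1.1} and \ref{L.1.2}.
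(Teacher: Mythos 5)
Your proof is correct and follows essentially the same route as the paper's. Both hinge on the explicit representations \eqref{1.4}--\eqref{1.5} of Lemma \ref{L.1.2} together with the observation that the operator-valued symbol collapses to $p_\epsilon^\circ$ (and $\tau_{-x}A_\epsilon$ to $A_\epsilon$) at the special point $0$; the only difference is in the bookkeeping for the delta: the paper replaces $\delta_0$ by a mollifier sequence $\varphi_n\otimes v$ (resp.\ $v\otimes\varphi_n$), applies the formula, which is licit for test functions, and passes to the limit $n\to\infty$, whereas you substitute $\delta_0$ directly and make sense of the resulting expression by pairing against test functions, which is the same limiting argument phrased via duality. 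Part 2 is handled identically in both.

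Two small remarks. First, be careful when you invoke Lemma \ref{L.1.1}(2): it grants the existence of a continuous extension of $\widetilde{P}_\epsilon$ to $\mathscr{S}'(\mathcal{X}^2)$, not that the slice formula \eqref{1.4} itself applies verbatim to $v\otimes\delta_0$, since $u(\cdot,y)$ is not a well-defined family of distributions in that case; your duality argument (or the paper's approximation) is exactly what is needed to bridge this, so you should not present the extension of the formula as automatic. Second, for \eqref{1.8} you correctly note that in the $\boldsymbol{\chi}^*$ version \eqref{1.5} the vector potential $\tau_{-x}A_\epsilon$ also has to collapse to $A_\epsilon$ at $x=0$; this is a genuine second ingredient absent from the $\boldsymbol{\psi}^*$ computation, and it is good that you flagged it explicitly.
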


\begin{proof}
 It is enough to prove \eqref{1.8} for any $v\in\mathscr{S}(\mathcal{X})$. We choose $\varphi\in C^\infty_0(\mathcal{X})$ so that $\{\varphi_n\}_{n\in\mathbb{N}^*}$ with $\varphi_n(x):=n^d\varphi(nx)$ is a $\delta_0$-sequence. We apply now \eqref{1.5} to $u:=\varphi_n\otimes v$ in order to obtain
$$
\big(\boldsymbol{\chi}^*\widetilde{P}_\epsilon(\boldsymbol{\chi}^*)^{-1}(\varphi_n\otimes v)\big)(x,y)=\varphi_n(x)\left[\mathfrak{Op}^{(\tau_{-x}A_\epsilon)}\left(\big((\tau_x\otimes\id\otimes\id)p_\epsilon\big)^\circ\right)v\right](y).
$$
We end the proof of \eqref{1.8} by taking the limit $n\rightarrow\infty$ and noticing that for $x=0$ we obtain in the second factor above: $\big((\tau_0\otimes\id\otimes\id)p_\epsilon\big)^\circ=\big((\id\otimes\id\otimes\id)p_\epsilon\big)^\circ=p_\epsilon$. For \eqref{1.8'} we use exactly the same preocedure starting from \eqref{1.4}.

The equality \eqref{1.7} follows directly from \eqref{1.4} under the given assumption on $p_\epsilon$ that implies that $\big((\id\otimes\tau_y\otimes\id)p_\epsilon\big)^\circ=p_\epsilon^\circ$.
\end{proof}

In order to study the continuity and the self-adjointness of $\widetilde{P}_\epsilon$ we need some more function spaces related to the magnetic Sobolev spaces; in order to define these spaces we shall need the family of operator valued symbols $\{\mathfrak{q}_{s,\epsilon}\}_{(s,\epsilon)\in\mathbb{R}\times[-\epsilon_0,\epsilon_0]}$ introduced in Remark \ref{R.A.25} and the associated operators:
$$
Q_{s,\epsilon}:=\mathfrak{Op}^{A_\epsilon}\big(\mathfrak{q}_{s,\epsilon}\big),\qquad Q^\prime_{s,\epsilon}:=Q_{s,\epsilon}\otimes\id.
$$
Let us still denote by $\widetilde{Q}_{s,\epsilon}:=\boldsymbol{\psi}^*Q^\prime_{s,\epsilon}\boldsymbol{\psi}^*$ with $\boldsymbol{\psi}$ from \eqref{1.1} and let us notice that due to Corollary \ref{C.1.3} (2) the operators $\widetilde{Q}_{s,\epsilon}$ and $Q_{s,\epsilon}$ are in the same relation as the pair $\widetilde{P}_\epsilon$ and $P_\epsilon$.
\begin{definition}\label{D.1.4}
 For magnetic fields $\{B_\epsilon\}_{\epsilon\in[-\epsilon_0,\epsilon_0]}$ verifying Hypothesis H.1 and for choices of vector potentials given by \eqref{0.28} we define the following spaces.
\begin{enumerate}
 \item The magnetic Sobolev space of order $s\in\mathbb{R}$ (as defined in \cite{IMP1}) is 
\begin{equation}\label{1.9}
 \mathcal{H}^s_{A_\epsilon}(\mathcal{X})\ :=\ \left\{\,u\in\mathscr{S}^\prime(\mathcal{X})\,\mid\,Q_{s,\epsilon}u\in L^2(\mathcal{X})\,\right\},
\end{equation}
endowed with the following natural quadratic norm
\begin{equation}\label{1.10}
 \|u\|_{\mathcal{H}^s_{A_\epsilon}(\mathcal{X})}\ :=\ \left\|Q_{s,\epsilon}u\right\|_{L^2(\mathcal{X})},\quad\forall u\in\mathcal{H}^s_{A_\epsilon}(\mathcal{X})
\end{equation}
that makes it a Hilbert space containing $\mathscr{S}(\mathcal{X})$ as a dense subspace.
\item We shall define also $\mathcal{H}^\infty_{A_\epsilon}(\X):=\underset{s\in\mathbb{R}}{\bigcap}\mathcal{H}^s_{A_\epsilon}(\X)$ with the projective limit topology.
\item For $s\in\mathbb{R}$ we consider also the spaces 
\begin{equation}\label{1.11}
 \widetilde{\mathcal{H}}^s_{A_\epsilon}(\mathcal{X}^2)\ :=\ \left\{\,u\in\mathscr{S}^\prime(\mathcal{X}^2)\,\mid\,\widetilde{Q}_{s,\epsilon}u\in L^2(\mathcal{X}^2)\,\right\},
\end{equation}
endowed with the following natural quadratic norm
\begin{equation}\label{1.12}
 \|u\|_{\widetilde{\mathcal{H}}^s_{A_\epsilon}(\mathcal{X}^2)}\ :=\ \left\|\widetilde{Q}_{s,\epsilon}u\right\|_{L^2(\mathcal{X}^2)},\quad\forall u\in\widetilde{\mathcal{H}}^s_{A_\epsilon}(\mathcal{X}^2)
\end{equation}
that makes it a Hilbert space containing $\mathscr{S}(\mathcal{X}^2)$ as a dense subspace.
\end{enumerate}
\end{definition}

\begin{remark}\label{R.1.5}
 $\boldsymbol{\psi}^*$ is a unitary operator from $\widetilde{\mathcal{H}}^s_{A_\epsilon}(\mathcal{X}^2)$ onto $\mathcal{H}^s_{A_\epsilon}(\mathcal{X})\otimes L^2(\mathcal{X})$.
\end{remark}

\begin{proof}
 Let us choose some $u\in\mathscr{S}^\prime(\mathcal{X}^2)$ and notice that
$$
u\in\widetilde{\mathcal{H}}^s_{A_\epsilon}(\mathcal{X}^2)\ \Leftrightarrow\ \boldsymbol{\psi}^*Q^\prime_{s,\epsilon}\boldsymbol{\psi}^*u\in L^2(\mathcal{X}^2)\ \Leftrightarrow\ \big(Q_{s,\epsilon}\otimes\id\big)\boldsymbol{\psi}^*u\in L^2(\mathcal{X}^2)\ \Leftrightarrow\ \boldsymbol{\psi}^*u\in\mathcal{H}^s_{A_\epsilon}(\mathcal{X})\otimes L^2(\mathcal{X})
$$
and evidently we have that
$$
\left\|\boldsymbol{\psi}^*u\right\|_{\mathcal{H}^s_{A_\epsilon}(\mathcal{X})\otimes L^2(\mathcal{X})}\ =\ \|u\|_{\widetilde{\mathcal{H}}^s_{A_\epsilon}(\mathcal{X}^2)}.
$$
\end{proof}

We can prove now a continuity property of the operator $\widetilde{P}_\epsilon$ on the spaces defined by \eqref{1.11}.
\begin{lemma}\label{L.1.16}
 For any $s\in\mathbb{R}$ we have that $\widetilde{P}_\epsilon\in\mathbb{B}\big(\widetilde{\mathcal{H}}^{s+m}_{A_\epsilon}(\mathcal{X}^2);\widetilde{\mathcal{H}}^s_{A_\epsilon}(\mathcal{X}^2)\big)$ uniformly for $\epsilon\in[-\epsilon_0,\epsilon_0]$.
\end{lemma}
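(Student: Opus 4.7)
The plan is to exploit the unitary equivalence of Remark~\ref{R.1.5} together with the explicit ``partially decomposable'' form of $\boldsymbol{\psi}^*\widetilde{P}_\epsilon\boldsymbol{\psi}^*$ given by formula \eqref{1.4} of Lemma~\ref{L.1.2}. Indeed, Remark~\ref{R.1.5} tells us that $\boldsymbol{\psi}^*$ is a unitary isomorphism $\widetilde{\mathcal{H}}^{t}_{A_\epsilon}(\mathcal{X}^2)\cong\mathcal{H}^{t}_{A_\epsilon}(\mathcal{X})\otimes L^2(\mathcal{X})$ for every $t\in\mathbb{R}$, so the lemma is equivalent to the statement that $T_\epsilon:=\boldsymbol{\psi}^*\widetilde{P}_\epsilon\boldsymbol{\psi}^*$ is bounded from $\mathcal{H}^{s+m}_{A_\epsilon}(\mathcal{X})\otimes L^2(\mathcal{X})$ to $\mathcal{H}^s_{A_\epsilon}(\mathcal{X})\otimes L^2(\mathcal{X})$ uniformly in $\epsilon\in[-\epsilon_0,\epsilon_0]$. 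Identifying the tensor products with $L^2(\mathcal{X};\mathcal{H}^{s+m}_{A_\epsilon})$ and $L^2(\mathcal{X};\mathcal{H}^{s}_{A_\epsilon})$, formula \eqref{1.4} exhibits $T_\epsilon$ as a ``fibered'' family of magnetic pseudodifferential operators acting pointwise in the parameter $y\in\mathcal{X}$, each with symbol $a_{\epsilon,y}(x,\eta):=\big((\id\otimes\tau_y\otimes\id)p_\epsilon\big)^\circ(x,\eta)=p_\epsilon(x,x-y,\eta)$.

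First I would verify that the family $\{a_{\epsilon,y}\}_{(y,\epsilon)\in\mathcal{X}\times[-\epsilon_0,\epsilon_0]}$ is bounded in $S^m_1(\Xi)$. By the Leibniz rule applied to $x\mapsto(x,x-y)$ one has
\begin{equation*}
\partial_x^{\alpha}\partial_\eta^{\beta}a_{\epsilon,y}(x,\eta)\;=\;\sum_{\alpha'+\alpha''=\alpha}\binom{\alpha}{\alpha'}\bigl(\partial_{x}^{\alpha'}\partial_{y}^{\alpha''}\partial_\eta^{\beta}p_\epsilon\bigr)(x,x-y,\eta),
\end{equation*}
and each term is bounded by $C_{\alpha'\alpha''\beta}\langle\eta\rangle^{m-|\beta|}$ uniformly in $(x,y,\eta,\epsilon)$ thanks to Hypothesis H.2. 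Hence the $S^m_1$ seminorms of $a_{\epsilon,y}$ are controlled independently of $y$ and $\epsilon$.

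Next I would invoke the continuity result for magnetic pseudodifferential operators between magnetic Sobolev spaces (Proposition~\ref{A.7} and the definitions recalled in the Appendix, in the spirit of \cite{IMP1}): a symbol of class $S^m_1(\Xi)$ gets quantized by $\mathfrak{Op}^{A_\epsilon}$ to a bounded operator $\mathcal{H}^{s+m}_{A_\epsilon}(\mathcal{X})\to\mathcal{H}^s_{A_\epsilon}(\mathcal{X})$, with operator norm controlled by a finite number of $S^m_1$ seminorms of the symbol and depending on $A_\epsilon$ only through derivatives of the magnetic field $B_\epsilon$. Together with the uniform bound from the first step and with Hypothesis H.1 (which yields uniform bounds on every derivative of $B_\epsilon$), this produces a constant $C$, independent of $y$ and of $\epsilon\in[-\epsilon_0,\epsilon_0]$, such that
\begin{equation*}
\bigl\|\mathfrak{Op}^{A_\epsilon}(a_{\epsilon,y})\,v\bigr\|_{\mathcal{H}^s_{A_\epsilon}(\mathcal{X})}\;\leq\;C\,\|v\|_{\mathcal{H}^{s+m}_{A_\epsilon}(\mathcal{X})}\qquad\forall v\in\mathcal{H}^{s+m}_{A_\epsilon}(\mathcal{X}).
\end{equation*}

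The final step is then a direct Fubini computation on the tensor product: for $u\in\mathscr{S}(\mathcal{X}^2)$ (dense in both spaces by Definition~\ref{D.1.4}) and using \eqref{1.4} slicewise,
\begin{equation*}
\|T_\epsilon u\|^2_{\mathcal{H}^s_{A_\epsilon}\otimes L^2}\;=\;\int_{\mathcal{X}}\bigl\|\mathfrak{Op}^{A_\epsilon}(a_{\epsilon,y})u(\cdot,y)\bigr\|^2_{\mathcal{H}^s_{A_\epsilon}}\,dy\;\leq\;C^2\!\int_{\mathcal{X}}\|u(\cdot,y)\|^2_{\mathcal{H}^{s+m}_{A_\epsilon}}\,dy\;=\;C^2\|u\|^2_{\mathcal{H}^{s+m}_{A_\epsilon}\otimes L^2},
\end{equation*}
and extension by density concludes. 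The main technical obstacle lies in the second step: making precise that the magnetic Calder\'on--Vaillancourt type estimate for $\mathfrak{Op}^{A_\epsilon}$ between $\mathcal{H}^{s+m}_{A_\epsilon}$ and $\mathcal{H}^{s}_{A_\epsilon}$ really is uniform in the whole family $\{A_\epsilon\}$, so that the constant $C$ depends only on finitely many $S^m_1$ seminorms of the symbols $a_{\epsilon,y}$ and on the $BC^\infty$-seminorms of the components $B_{\epsilon,jk}$ (which by H.1 are uniformly bounded on $[-\epsilon_0,\epsilon_0]$); this is where the gauge-covariant formalism of \cite{IMP1,IMP2} becomes essential and justifies the fact that the Sobolev continuity does not deteriorate as $\epsilon$ varies.
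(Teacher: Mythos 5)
Your proof is correct and follows essentially the same route as the paper: use Remark~\ref{R.1.5} and formula~\eqref{1.4} to reduce to a parametrized family of magnetic pseudodifferential operators with symbols $p_\epsilon(x,x-y,\xi)$, observe that this family is bounded in $S^m_1(\Xi)$ uniformly in $(y,\epsilon)$ by H.2, invoke the magnetic Sobolev continuity of \cite{IMP1}, and integrate in $y$. The paper states the uniform $S^m_1$ bound without writing out the Leibniz computation, but that is a cosmetic difference.
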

\begin{proof}
 Due to \eqref{1.4} and Remark \ref{R.1.5} it is enough to prove that the application
$$
\mathscr{S}(\mathcal{X}^2)\ni u\mapsto\left[\mathfrak{Op}^{A_\epsilon}\left(\big((\id\otimes\tau_y\otimes\id)p_\epsilon\big)^\circ\right)u(.,y)\right](x)\in\mathscr{S}(\mathcal{X}^2)
$$
has a continuous extension from $\mathcal{H}^{s+m}_{A_\epsilon}(\mathcal{X})\otimes L^2(\mathcal{X})$ to $\mathcal{H}^s_{A_\epsilon}(\mathcal{X})\otimes L^2(\mathcal{X})$ uniformly for $\epsilon\in[-\epsilon_0,\epsilon_0]$. But 
$$
\big((\id\otimes\tau_y\otimes\id)p_\epsilon\big)^\circ(x,\xi)\ =\ p_\epsilon(x,x-y,\xi)
$$
and the family $\{p_\epsilon(x,x-y,\xi)\}_{(y,\epsilon)\in\mathcal{X}\times[-\epsilon_0,\epsilon_0]}$ of symbols (in the variables $(x,\xi)\in\Xi$) is a bounded subset of $S^m_1(\Xi)$ (due to our Hypothesis). Applying the continuity properties of magnetic pseudodifferential operators in magnetic Sobolev spaces proved in \cite{IMP1} we conclude that there exists a constant $C>0$ such that 
$$
\left\|\mathfrak{Op}^{A_\epsilon}\left(\big((\id\otimes\tau_y\otimes\id)p_\epsilon\big)^\circ\right)u(.,y)\right\|_{\mathcal{H}^s_{A_\epsilon}(\mathcal{X})}^2\ \leq\ C^2\|u(,.y)\|_{\mathcal{H}^{s+m}_{A_\epsilon}(\mathcal{X})}^2
$$
for any $u\in\mathscr{S}(\mathcal{X}^2)$, $\forall y\in\mathcal{X}$ and $\forall\epsilon\in[-\epsilon_0,\epsilon_0]$. We end then the proof by integrating the above inequality with respect to $y\in\mathcal{X}$.
\end{proof}

In order to prove the self-adjointness of $\widetilde{P}_\epsilon$ in $L^2(\mathcal{X}^2)$ we use the following Remark.
\begin{remark}\label{R.1.17}
 Suppose given $r\in S^m_1(\mathcal{X}\times\Xi)$. Then evidently $r(.,y,.)\in  S^m_1(\Xi)$ uniformly for $y\in\mathcal{X}$. If $B$ is a magnetic field on $\mathcal{X}$ with components of class $BC^\infty(\mathcal{X})$ and $A$ an associated vector potential having components of class $C^\infty_{\text{\sf pol}}(\mathcal{X})$ we define the {\it magnetic pseudodifferential operator with parameter} $y\in\mathcal{X}$
\begin{equation}\label{1.13}
 \big(\mathfrak{R}u\big)(x,y)\ :=\ (2\pi)^{-d/2}\int_{\mathcal{X}}\int_{\mathcal{X}^*}e^{i<\xi,x-\tilde{x}>}\omega_A(x,\tilde{x})\,r\big((x+\tilde{x})/2,y,\xi\big)\,u(\tilde{x},y)\,d\tilde{x}\,d\xi,\quad\forall u\in\mathscr{S}(\mathcal{X}^2),\ \forall(x,y)\in\mathcal{X}^2.
\end{equation}
A straightforward modification of the arguments from \cite{IMP1}, and denoting by $\mathfrak{R}_\epsilon$ the operator defined as above in \eqref{1.13} with a vector potential $A_\epsilon$, allows to prove that
\begin{equation}\label{1.14}
 \mathfrak{R}_\epsilon\ \in\ \mathbb{B}\big(\mathscr{S}(\mathcal{X}^2);\mathscr{S}(\mathcal{X}^2)\big)\ \cap\ \mathbb{B}\big(\mathcal{H}^{s+m}_{A_\epsilon}(\mathcal{X})\otimes L^2(\mathcal{X});\mathcal{H}^{s}_{A_\epsilon}(\mathcal{X})\otimes L^2(\mathcal{X})\big),\quad\forall s\in\mathbb{R}.
\end{equation}
Moreover, if $r$ is elliptic, then for any $u\in L^2(\mathcal{X}^2)$ and any $s\in\mathbb{R}$ we have the equivalence relation:
\begin{equation}\label{1.15}
 u\ \in\ \mathcal{H}^{s+m}_{A_\epsilon}(\mathcal{X})\otimes L^2(\mathcal{X})\ \Longleftrightarrow\ \mathfrak{R}_\epsilon u\ \in\ \mathcal{H}^{s}_{A_\epsilon}(\mathcal{X})\otimes L^2(\mathcal{X}).
\end{equation}
\end{remark}

\begin{proposition}\label{P.1.18}
 $\widetilde{P}_\epsilon$ is a self-adjoint operator in $L^2(\mathcal{X}^2)$ with domain $\widetilde{\mathcal{H}}^m_{A_\epsilon}(\mathcal{X}^2)$. It is essentially self-adjoint on $\mathscr{S}(\mathcal{X}^2)$.
\end{proposition}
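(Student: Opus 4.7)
The strategy is to transport the problem via the unitary involution $U:=\boldsymbol{\psi}^*$ on $L^2(\mathcal{X}^2)$ and then apply the elliptic theory for magnetic pseudodifferential operators with a parameter that is summarised in Remark \ref{R.1.17}. By Remark \ref{R.1.5}, $U$ maps $\widetilde{\mathcal{H}}^m_{A_\epsilon}(\mathcal{X}^2)$ unitarily onto $\mathcal{H}^m_{A_\epsilon}(\mathcal{X})\otimes L^2(\mathcal{X})$, so it suffices to prove that $\widehat{P}_\epsilon:=U\widetilde{P}_\epsilon U$ is self-adjoint on that space with $\mathscr{S}(\mathcal{X}^2)$ as a core. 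Formula \eqref{1.4} shows that $\widehat{P}_\epsilon$ is precisely a magnetic pseudodifferential operator with parameter $y\in\mathcal{X}$ in the sense of \eqref{1.13}, associated to the symbol $r_\epsilon(x,y,\xi):=p_\epsilon(x,x-y,\xi)$.

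Hypothesis H.2 implies that $\{r_\epsilon(\cdot,y,\cdot)\}_{(y,\epsilon)}$ is a bounded subset of $S^m_1(\Xi)$, and H.5 makes it uniformly elliptic of order $m$. Symmetry of $\widehat{P}_\epsilon$ on $\mathscr{S}(\mathcal{X}^2)$ follows from Lemma \ref{L.1.1}(3) transported by $U$. I would then combine Lemma \ref{L.1.16} with the elliptic equivalence \eqref{1.15} of Remark \ref{R.1.17}, applied to $\widehat{P}_\epsilon$, to identify $\mathcal{H}^m_{A_\epsilon}(\mathcal{X})\otimes L^2(\mathcal{X})$ as exactly the set of $u\in L^2(\mathcal{X}^2)$ with $\widehat{P}_\epsilon u\in L^2(\mathcal{X}^2)$ in the distributional sense. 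Thus the operator, considered on this domain, is symmetric and already agrees with its maximal $L^2$-extension.

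Self-adjointness is then obtained by showing that $\widehat{P}_\epsilon\pm i\lambda$ has full range for $\lambda>0$ large enough. Here I would invoke the standard parametrix construction of the magnetic pseudodifferential calculus of \cite{IMP1}: since $r_\epsilon+i\lambda$ is elliptic of order $m$ and has a bounded symbolic inverse in $S^{-m}_1$ uniformly in $(y,\epsilon)$, the associated parametrix is bounded from $L^2(\mathcal{X}^2)$ into $\mathcal{H}^m_{A_\epsilon}(\mathcal{X})\otimes L^2(\mathcal{X})$ and inverts $\widehat{P}_\epsilon+i\lambda$ modulo a remainder of arbitrarily small operator norm. A Neumann series argument then yields a true inverse, hence surjectivity of $\widehat{P}_\epsilon\pm i\lambda$, and the basic criterion gives self-adjointness. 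Essential self-adjointness on $\mathscr{S}(\mathcal{X}^2)$ follows because $\mathscr{S}(\mathcal{X}^2)$ is dense in $\widetilde{\mathcal{H}}^m_{A_\epsilon}(\mathcal{X}^2)$ (Definition \ref{D.1.4}(3)) and $\widetilde{P}_\epsilon$ is continuous from that space into $L^2(\mathcal{X}^2)$ by Lemma \ref{L.1.16}, so the closure of the restriction to $\mathscr{S}(\mathcal{X}^2)$ coincides with the self-adjoint realisation just produced.

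The main technical obstacle is the uniform-in-$y$ behaviour of the parametrix: one must verify that the symbol estimates on $r_\epsilon$ transfer to $L^2(\mathcal{X}_x)$-operator bounds that are uniform in the parameter $y$, so that the Neumann series argument is valid on the direct integral $L^2(\mathcal{X}^2)=\int^\oplus_\mathcal{X}L^2(\mathcal{X}_x)\,dy$. This is essentially the parameter-dependent version of the statements gathered in Remark \ref{R.1.17}, and is routine given the framework of \cite{IMP1} and the symbol class $S^m_{1,\epsilon}$ recalled in the Appendix.
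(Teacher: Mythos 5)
Your reduction via $\boldsymbol{\psi}^*$ to the parameter-dependent magnetic operator $\mathfrak{R}_\epsilon$ with symbol $r_\epsilon(x,y,\xi)=p_\epsilon(x,x-y,\xi)$, the symmetry argument, and the appeal to the elliptic equivalence \eqref{1.15} of Remark \ref{R.1.17} all coincide with the paper's proof. But you then talk yourself out of having finished. Once you have identified $\mathcal{H}^m_{A_\epsilon}(\mathcal{X})\otimes L^2(\mathcal{X})$ with the maximal $L^2$-domain and noted symmetry there, self-adjointness is already in hand: for $v\in\mathcal{D}(\mathfrak{R}_\epsilon^*)$, testing against $u\in\mathscr{S}(\mathcal{X}^2)$ shows $\mathfrak{R}_\epsilon v\in L^2(\mathcal{X}^2)$ distributionally, hence $v$ lies in the maximal domain $=\mathcal{D}(\mathfrak{R}_\epsilon)$, and the reverse inclusion $\mathcal{D}(\mathfrak{R}_\epsilon)\subset\mathcal{D}(\mathfrak{R}_\epsilon^*)$ is just symmetry. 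That is exactly how the paper closes the argument.

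Your subsequent paragraph — the range criterion for $\widehat{P}_\epsilon\pm i\lambda$ with a parametrix and a Neumann series — is therefore redundant, and it is the costly part: it demands building a parametrix for the parameter-dependent magnetic calculus whose operator bounds are uniform in $y$, which you correctly flag as the main obstacle, but which the direct adjoint-domain argument completely avoids. So the plan is correct, but only because a sufficient proof is already contained in its first half; trim the parametrix detour and end with the observation that $\mathcal{D}(\mathfrak{R}_\epsilon^*)$ is contained in the maximal domain characterised by \eqref{1.15}.
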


\begin{proof}
 Following Lemma \ref{L.1.16} the operator $\widetilde{P}_\epsilon$ with domain $\widetilde{\mathcal{H}}^m_{A_\epsilon}(\mathcal{X}^2)$ is well defined in $L^2(\mathcal{X}^2)$. Moreover we know by Lemma \ref{L.1.1} (3) that $\widetilde{P}_\epsilon$ is symmetric when defined on $\mathscr{S}(\mathcal{X}^2)$ that is dense in $\widetilde{\mathcal{H}}^m_{A_\epsilon}(\mathcal{X}^2)$ for its own norm-topology so that we conclude that $\widetilde{P}_\epsilon$ is symmetric as defined on $\widetilde{\mathcal{H}}^m_{A_\epsilon}(\mathcal{X}^2)$.

 Considering now equation \eqref{1.4} and Remark \ref{R.1.5} it follows that $\widetilde{P}_\epsilon$ as considered in the hypothesis of the Proposition is self-adjoint if and only if the operator $\mathfrak{R}_\epsilon$ defined by  \eqref{1.13} with a symbol $r_\epsilon(x,y,\xi):=p_\epsilon(x,x-y,\xi)$ is self-adjoint in $L^2(\mathcal{X}^2)$ with domain $\mathcal{H}^m_{A_\epsilon}(\mathcal{X})\otimes L^2(\mathcal{X})$. Using the symmetry of $\widetilde{P}_\epsilon$ and its unitary equivalence with $\mathfrak{R}_\epsilon$ we conclude that $\mathfrak{R}_\epsilon$ is also symmetric on its domain.

Let us fix some $v\in\mathcal{D}(\mathfrak{R}_\epsilon^*)$; thus we know that $v\in L^2(\mathcal{X}^2)$ and there exists some $f\in L^2(\mathcal{X}^2)$ such that 
$$
\left(\mathfrak{R}_\epsilon u,v\right)_{L^2(\mathcal{X}^2)}\ =\ \left(u,f\right)_{L^2(\mathcal{X}^2)},\quad\forall u\in\mathscr{S}(\mathcal{X}^2).
$$
We conclude that $\mathfrak{R}_\epsilon v=f$ in $\mathscr{S}^\prime(\mathcal{X}^2)$ and due to \eqref{1.15} we have that $v\in \mathcal{H}^{s}_{A_\epsilon}(\mathcal{X})\otimes L^2(\mathcal{X})$. In conclusion we get that $\mathfrak{R}_\epsilon^*=\mathfrak{R}_\epsilon$.

The last statement of the Proposition follows from the density of $\mathscr{S}(\mathcal{X}^2)$ in $\widetilde{\mathcal{H}}^m_{A_\epsilon}(\mathcal{X}^2)$ and the continuity property proved in Lemma \ref{L.1.16}.
\end{proof}

\section{The auxiliary operator in $L^2\big(\mathcal{X}\times\mathbb{T}\big)$}\label{S.2}
\setcounter{equation}{0}
\setcounter{theorem}{0}

We begin with some elements concerning the Floquet-Bloch theory. We use the notations from the beginning of subsection 9.2 of the Appendix.
\begin{definition}\label{D.2.1}
\hspace*{0.5cm}$\mathscr{S}^\prime_\Gamma\big(\mathcal{X}^2\times\mathcal{X}^*\big):=$
$$
:=\left\{v\in\mathscr{S}^\prime\big(\mathcal{X}^2\times\mathcal{X}^*\big)\,\mid\,v(x,y+\gamma,\theta)=e^{i<\theta,\gamma>}v(x,y,\theta)\,\forall\gamma\in\Gamma,\ v(x,y,\theta+\gamma^*)=v(x,y,\theta)\,\forall\gamma^*\in\Gamma^*\right\},
$$
endowed with the topology induced from $\mathscr{S}^\prime\big(\mathcal{X}^2\times\mathcal{X}^*\big)$.
\end{definition}

\begin{definition}\label{D.2.2}
\hspace*{0.5cm}$\mathscr{F}_0\big(\mathcal{X}^2\times\mathcal{X}^*\big):=\mathscr{S}^\prime_\Gamma\big(\mathcal{X}^2\times\mathcal{X}^*\big)\cap L^2_{\text{\sf loc}}\big(\mathcal{X}^2\times\mathcal{X}^*\big)\cap L^2\big(\mathcal{X}\times E\times E^*\big)$ endowed with the quadratic norm
\begin{equation}\label{2.1}
\|v\|_{\mathscr{F}_0}\ :=\ \sqrt{|E^*|^{-1}\int_{\mathcal{X}}\int_{E}\int_{E^*}|v(x,y,\theta)|^2dx\,dy\,d\theta},\qquad\forall v\in\mathscr{F}_0\big(\mathcal{X}^2\times\mathcal{X}^*\big),
\end{equation}
that makes $\mathscr{F}_0\big(\mathcal{X}^2\times\mathcal{X}^*\big)$ into a Hilbert space.
\end{definition}
We evidently have a continuous embedding of $\mathscr{F}_0\big(\mathcal{X}^2\times\mathcal{X}^*\big)$ into $\mathscr{S}^\prime\big(\mathcal{X}^2\times\mathcal{X}^*\big)$.

\begin{lemma}\label{L.2.3}
 The following map defined on $\mathscr{S}\big(\mathcal{X}^2\big)$:
\begin{equation}\label{2.2}
 \big(\mathcal{U}_\Gamma u\big)(x,y,\theta):=\sum\limits_{\gamma\in\Gamma}e^{i<\theta,\gamma>}u(x,y-\gamma),\qquad \forall(x,y)\in\mathcal{X}^2,\forall\theta\in\mathcal{X}^*,\forall u\in\mathscr{S}\big(\mathcal{X}^2\big),
\end{equation}
extends as a unitary operator $\mathcal{U}_\Gamma:L^2\big(\mathcal{X}^2\big)\rightarrow\mathscr{F}_0\big(\mathcal{X}^2\times\mathcal{X}^*\big)$.

The inverse of the above operator is explicitely given by
\begin{equation}\label{2.3}
 \big(\mathcal{W}_\Gamma v\big)(x,y):=|E^*|^{-1}\int_{E^*}v(x,y,\theta)d\theta,\qquad \forall(x,y)\in\mathcal{X}^2,\,\forall v\in\mathscr{F}_0\big(\mathcal{X}^2\times\mathcal{X}^*\big).
\end{equation}
\end{lemma}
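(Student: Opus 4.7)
The plan is to first verify that $\mathcal{U}_\Gamma$ is well-defined and isometric on the dense subspace $\mathscr{S}(\mathcal{X}^2) \subset L^2(\mathcal{X}^2)$, then to exhibit $\mathcal{W}_\Gamma$ as a two-sided inverse via Fourier series on the torus $\mathcal{X}^*/\Gamma^*$.

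First I would observe that for $u \in \mathscr{S}(\mathcal{X}^2)$ the series defining $(\mathcal{U}_\Gamma u)(x,y,\theta)$ is absolutely and uniformly convergent together with all derivatives, because $u$ is rapidly decaying in $y$. The two (quasi-)periodicity relations defining $\mathscr{S}^\prime_\Gamma(\mathcal{X}^2\times\mathcal{X}^*)$ are then immediate: for $\gamma_0 \in \Gamma$, substituting $\gamma \mapsto \gamma + \gamma_0$ in the sum gives the factor $e^{i\langle\theta,\gamma_0\rangle}$; for $\gamma^* \in \Gamma^*$ one uses $\langle\gamma^*,\gamma\rangle \in 2\pi\mathbb{Z}$. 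Smoothness makes $\mathcal{U}_\Gamma u \in L^2_{\text{\sf loc}}$, and the isometry computation below will show it also belongs to $L^2(\mathcal{X}\times E \times E^*)$.

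The key computation is the isometry property. For $u \in \mathscr{S}(\mathcal{X}^2)$, expanding the square and interchanging sum with integral yields
\begin{equation*}
\|\mathcal{U}_\Gamma u\|^2_{\mathscr{F}_0} \;=\; |E^*|^{-1} \int_\mathcal{X}\!\int_E \sum_{\gamma,\gamma' \in \Gamma} u(x,y-\gamma)\overline{u(x,y-\gamma')} \left(\int_{E^*} e^{i\langle\theta,\gamma-\gamma'\rangle} d\theta\right) dy\, dx.
\end{equation*}
The inner $\theta$-integral equals $|E^*|\delta_{\gamma,\gamma'}$ by orthogonality of characters on $\mathcal{X}^*/\Gamma^*$, so only the diagonal terms survive. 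Unfolding $\sum_{\gamma\in\Gamma}\int_E$ to $\int_{\mathcal{X}}$ gives $\|u\|_{L^2(\mathcal{X}^2)}^2$. Hence $\mathcal{U}_\Gamma$ extends by continuity to an isometry $L^2(\mathcal{X}^2) \to \mathscr{F}_0(\mathcal{X}^2\times\mathcal{X}^*)$.

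To show surjectivity and identify the inverse, fix $v \in \mathscr{F}_0$. By Fubini, for a.e. $(x,y) \in \mathcal{X} \times E$ the slice $v(x,y,\cdot)$ lies in $L^2(E^*)$, so it admits an expansion
\begin{equation*}
v(x,y,\theta) \;=\; \sum_{\gamma\in\Gamma} c_\gamma(x,y)\, e^{i\langle\theta,\gamma\rangle}, \qquad c_\gamma(x,y)\;=\;|E^*|^{-1}\!\int_{E^*} v(x,y,\theta)\,e^{-i\langle\theta,\gamma\rangle}\,d\theta,
\end{equation*}
convergent in $L^2(E^*)$. Using the quasi-periodicity $v(x,y-\gamma,\theta)=e^{-i\langle\theta,\gamma\rangle}v(x,y,\theta)$ one computes $c_\gamma(x,y) = c_0(x,y-\gamma) = (\mathcal{W}_\Gamma v)(x,y-\gamma)$, so that formally $\mathcal{U}_\Gamma \mathcal{W}_\Gamma v = v$. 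The Parseval identity applied to $v(x,y,\cdot)$ then yields
\begin{equation*}
\|v\|_{\mathscr{F}_0}^2 \;=\; \int_\mathcal{X}\!\int_E \sum_{\gamma\in\Gamma} |(\mathcal{W}_\Gamma v)(x,y-\gamma)|^2\, dy\, dx \;=\; \|\mathcal{W}_\Gamma v\|^2_{L^2(\mathcal{X}^2)},
\end{equation*}
which simultaneously shows that $\mathcal{W}_\Gamma v \in L^2(\mathcal{X}^2)$ and that $\mathcal{W}_\Gamma$ is isometric. Combined with $\mathcal{U}_\Gamma\mathcal{W}_\Gamma=\mathrm{id}_{\mathscr{F}_0}$, this forces $\mathcal{U}_\Gamma$ to be unitary with inverse $\mathcal{W}_\Gamma$. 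The one mildly delicate point is the Fubini justification for defining $\mathcal{W}_\Gamma v$ pointwise from a distribution in $\mathscr{F}_0$, but the embedding $\mathscr{F}_0 \subset L^2(\mathcal{X}\times E \times E^*)$ built into Definition 2.2 handles this.
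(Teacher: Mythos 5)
Your proposal is correct and follows essentially the same route as the paper's proof: Fourier series on the dual torus $\mathcal{X}^*/\Gamma^*$ together with Parseval's identity, first for the isometry of $\mathcal{U}_\Gamma$ on $\mathscr{S}(\mathcal{X}^2)$, then for the surjectivity via the relation $c_\gamma(x,y)=c_0(x,y-\gamma)$ forced by the quasi-periodicity of $v\in\mathscr{F}_0$. The only cosmetic differences are that you establish the isometry by expanding $|\mathcal{U}_\Gamma u|^2$ and invoking character orthogonality directly (rather than, as the paper does, identifying the Fourier coefficients $\widehat{\mathcal{U}_\Gamma u}(x,y,\gamma)=u(x,y-\gamma)$ and then applying Parseval), and that you verify the composition $\mathcal{U}_\Gamma\mathcal{W}_\Gamma=\mathrm{id}_{\mathscr{F}_0}$ whereas the paper checks $\mathcal{W}_\Gamma\mathcal{U}_\Gamma=\mathrm{id}$ on test functions; both orders of composition, combined with the two isometries, yield unitarity.
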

\begin{proof}
 Let us notice that for $u\in\mathscr{S}\big(\mathcal{X}^2\big)$ the series in \eqref{2.2} converges in $\mathscr{E}\big(\mathcal{X}^2\times\mathcal{X}^*\big)$ and its sum, that we denoted by $\mathcal{U}_\Gamma u$ evidently verifies the two relations that characterize the subspace $\mathscr{S}^\prime_\Gamma\big(\mathcal{X}^2\times\mathcal{X}^*\big)$ as subspace of $\mathscr{S}^\prime\big(\mathcal{X}^2\times\mathcal{X}^*\big)$:
\begin{equation}\label{2.4}
 \big(\mathcal{U}_\Gamma u\big)(x,y,\theta+\gamma^*)=\big(\mathcal{U}_\Gamma u\big)(x,y,\theta),\qquad\forall(x,y,\theta)\in\mathcal{X}^2\times\mathcal{X}^*,\ \forall\gamma^*\in\Gamma^*,
\end{equation}
\begin{equation}\label{2.5}
 \big(\mathcal{U}_\Gamma u\big)(x,y+\alpha,\theta)=\sum\limits_{\gamma\in\Gamma}e^{i<\theta,\gamma>}u(x,y+\alpha-\gamma)=e^{i<\theta,\alpha>}\big(\mathcal{U}_\Gamma u\big)(x,y,\theta),\quad\forall(x,y,\theta)\in\mathcal{X}^2\times\mathcal{X}^*,\ \forall\alpha\in\Gamma.
\end{equation}

In particular, considering a fixed pair $(x,y)\in\mathcal{X}^2$, we obtain an element $ \big(\mathcal{U}_\Gamma u\big)(x,y,.)\in\mathscr{S}\big(\mathbb{T}^{*,d}\big)$ and due to Remark \ref{A.10} we can compute its Fourier series that converges in $\mathscr{S}\big(\mathbb{T}^{*,d}\big)$:
\begin{equation}\label{2.6}
 \big(\mathcal{U}_\Gamma u\big)(x,y,\theta)=\sum\limits_{\gamma\in\Gamma}\big(\widehat{\mathcal{U}_\Gamma u}\big)(x,y,\gamma)e^{i<\theta,\gamma>},\quad\big(\widehat{\mathcal{U}_\Gamma u}\big)(x,y,\gamma)=|E^*|^{-1}\int_{E^*}e^{-i<\theta,\gamma>}\big(\mathcal{U}_\Gamma u\big)(x,y,\theta)\,d\theta.
\end{equation}
We also have the Parseval equality
\begin{equation}\label{2.7}
 \sum\limits_{\gamma\in\Gamma}\left|\big(\widehat{\mathcal{U}_\Gamma u}\big)(x,y,\gamma)\right|^2\ =\ |E^*|^{-1}\int_{E^*}\left|\big(\mathcal{U}_\Gamma u\big)(x,y,\theta)\right|^2d\theta.
\end{equation}
Comparing \eqref{2.2} and \eqref{2.6} we conclude that $\big(\widehat{\mathcal{U}_\Gamma u}\big)(x,y,\gamma)=u(x,y-\gamma)$; replacing then in \eqref{2.7} we get the equality
\begin{equation}\label{2.8}
 |E^*|^{-1}\int_{E^*}\left|\big(\mathcal{U}_\Gamma u\big)(x,y,\theta)\right|^2d\theta\ =\ \sum\limits_{\gamma\in\Gamma}\left|u(x,y-\gamma)\right|^2.
\end{equation}
If we integrate the above equality over $\mathcal{X}\times E$ we obtain that
\begin{equation}\label{2.9}
 \sqrt{|E^*|^{-1}\int_{\mathcal{X}}\int_{E}\int_{E^*}\left|\big(\mathcal{U}_\Gamma u\big)(x,y,\theta)\right|^2dx\,dy\,d\theta}\ =\ \sqrt{\int_{\mathcal{X}}\int_{\mathcal{X}}|u(x,y)|^2dx\,dy}.
\end{equation}
We conclude that $\mathcal{U}_\Gamma u\in\mathscr{F}_0\big(\mathcal{X}^2\times\mathcal{X}^*\big)$ and $\mathcal{U}_\Gamma$ extends to an isometry $\mathcal{U}_\Gamma:L^2\big(\mathcal{X}^2\big)\rightarrow\mathscr{F}_0\big(\mathcal{X}^2\times\mathcal{X}^*\big)$. Thus, to end our proof it is enough to prove that the operator $\mathcal{W}_\Gamma$ is an inverse for $\mathcal{U}_\Gamma$. Let us consider some $v\in\mathscr{F}_0\big(\mathcal{X}^2\times\mathcal{X}^*\big)$, then for almost every $x\in\mathcal{X}$ and $y\in E$ we can write that
\begin{equation}\label{2.10}
 v(x,y,\theta)=\sum\limits_{\gamma\in\Gamma}\hat{v}_\gamma(x,y)e^{i<\theta,\gamma>},\quad\text{in }L^2(E^*),
\end{equation}
with
\begin{equation}\label{2.11}
 \hat{v}_\gamma(x,y)=|E^*|^{-1}\int_{E^*}e^{-i<\theta,\gamma>}v(x,y,\theta)d\theta=|E^*|^{-1}\int_{E^*}v(x,y-\gamma,\theta)d\theta=\hat{v}_0(x,y-\gamma).
\end{equation}
Using the above identity \eqref{2.11} and the Parseval equality we notice that we have the following equalities that finally imply that $\mathcal{W}_\Gamma v\in L^2(\mathcal{X}^2)$ and the fact that the map $\mathcal{W}_\Gamma:\mathscr{F}_0\big(\mathcal{X}^2\times\mathcal{X}^*\big)\rightarrow L^2(\mathcal{X}^2)$ is an isometry.
$$
\|\mathcal{W}_\Gamma v\|^2_{L^2(\mathcal{X}^2)}=\int_{\mathcal{X}}\int_{\mathcal{X}}|\hat{v}_0(x,y)|^2dx\,dy=\int_{\mathcal{X}}\left[\sum\limits_{\gamma\in\Gamma}\int_{E}|\hat{v}_0(x,y-\gamma)|^2dy\right]dx=\int_{\mathcal{X}}\left[\sum\limits_{\gamma\in\Gamma}\int_{E}|\hat{v}_\gamma(x,y)|^2dy\right]dx=
$$
$$
=\int_{\mathcal{X}}\left[\int_{E}\sum\limits_{\gamma\in\Gamma}|\hat{v}_\gamma(x,y)|^2dy\right]dx=\int_{\mathcal{X}}\left[\int_{E}|E^*|^{-1}\int_{E^*}|v(x,y,\theta)|^2d\theta\,dy\right]dx=\|v\|_{\mathscr{F}_0\big(\mathcal{X}^2\times\mathcal{X}^*\big)}^2.
$$
Moreover, for any $u\in\mathscr{S}(\mathcal{X}^2)$ we have that
$$
\big(\mathcal{W}_\Gamma\mathcal{U}_\Gamma u\big)(x,y)=|E^*|^{-1}\int_{E^*}\left[\sum\limits_{\gamma\in\Gamma}e^{i<\theta,\gamma>}u(x,y-\gamma)\right]d\theta=|E^*|^{-1}\int_{E^*}u(x,y)d\theta=u(x,y),\qquad\forall(x,y)\in\mathcal{X}^2.
$$
\end{proof}

\begin{lemma}\label{L.2.4}
 With the above definitions for the operators $\mathcal{U}_\Gamma$ and $\mathcal{W}_\Gamma$ we have that
\begin{enumerate}
 \item $\mathcal{U}_\Gamma$ admits a continuous extension to $\mathscr{S}^\prime(\mathcal{X}^2)$ with values in $\mathscr{S}^\prime_\Gamma(\mathcal{X}^2\times\mathcal{X}^*)$.
\item $\mathcal{W}_\Gamma$ admits a continuous extension to $\mathscr{S}^\prime_\Gamma(\mathcal{X}^2\times\mathcal{X}^*)$ with values in $\mathscr{S}^\prime(\mathcal{X}^2)$.
\item We have the equalities: $\mathcal{U}_\Gamma\mathcal{W}_\Gamma=\id_{\mathscr{S}^\prime_\Gamma(\mathcal{X}^2\times\mathcal{X}^*)}$, $\mathcal{W}_\Gamma\mathcal{U}_\Gamma=\id_{\mathscr{S}^\prime(\mathcal{X}^2)}$.
\end{enumerate}
\end{lemma}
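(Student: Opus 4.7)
The plan is to extend both $\mathcal{U}_\Gamma$ and $\mathcal{W}_\Gamma$ by duality through explicit transpose operators acting between Schwartz spaces. For $\mathcal{U}_\Gamma$, pairing the formula \eqref{2.2} against a test function $\phi\in\mathscr{S}(\mathcal{X}^2\times\mathcal{X}^*)$ and making the change of variable $y\mapsto y+\gamma$ inside each term of the defining series suggests the transpose
$$ (\mathcal{U}_\Gamma^T\phi)(x,y)\,:=\,\sum_{\gamma\in\Gamma}\int_{\mathcal{X}^*}e^{i\langle\theta,\gamma\rangle}\phi(x,y+\gamma,\theta)\,d\theta. $$
Since the partial Fourier transform in $\theta$ preserves the Schwartz class and $\phi$ decays rapidly in $y+\gamma$, the series converges absolutely with all its derivatives, so $\mathcal{U}_\Gamma^T:\mathscr{S}(\mathcal{X}^2\times\mathcal{X}^*)\to\mathscr{S}(\mathcal{X}^2)$ is continuous. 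For $\mathcal{W}_\Gamma$ we fix once and for all $\chi_0\in C^\infty_0(\mathcal{X}^*)$ with $\sum_{\gamma^*\in\Gamma^*}\chi_0(\theta+\gamma^*)\equiv 1$ and set $\mathcal{W}_\Gamma^T u:=|E^*|^{-1}u\otimes\chi_0$, which is manifestly a continuous map $\mathscr{S}(\mathcal{X}^2)\to\mathscr{S}(\mathcal{X}^2\times\mathcal{X}^*)$. We then define
$$\langle\mathcal{U}_\Gamma u,\phi\rangle:=\langle u,\mathcal{U}_\Gamma^T\phi\rangle,\qquad\langle\mathcal{W}_\Gamma v,u\rangle:=\langle v,\mathcal{W}_\Gamma^T u\rangle,$$
using the induced pairing on $\mathscr{S}'_\Gamma\subset\mathscr{S}'(\mathcal{X}^2\times\mathcal{X}^*)$ in the second case.

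A direct computation on $\mathcal{U}_\Gamma^T\phi$ applied to translated test functions shows that $\mathcal{U}_\Gamma u$ satisfies the two defining relations of $\mathscr{S}'_\Gamma$, using crucially that $e^{i\langle\gamma^*,\gamma\rangle}=1$ for $(\gamma,\gamma^*)\in\Gamma\times\Gamma^*$. Consistency of the extensions with the original formulas \eqref{2.2} and \eqref{2.3} on Schwartz functions (respectively on $\mathscr{F}_0$) follows by direct computation; for the second one, the $\Gamma^*$-periodicity of $v$ allows us to rewrite $\int_{E^*}v\,d\theta=\int_{\mathcal{X}^*}\chi_0(\theta)\,v\,d\theta$. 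The identity $\mathcal{W}_\Gamma\mathcal{U}_\Gamma=\id_{\mathscr{S}'(\mathcal{X}^2)}$ is equivalent, by duality, to $\mathcal{U}_\Gamma^T\mathcal{W}_\Gamma^T=\id_{\mathscr{S}(\mathcal{X}^2)}$; from the explicit formulas
$$ (\mathcal{U}_\Gamma^T\mathcal{W}_\Gamma^T u)(x,y)=|E^*|^{-1}\sum_{\gamma\in\Gamma}u(x,y+\gamma)\int_{\mathcal{X}^*}e^{i\langle\theta,\gamma\rangle}\chi_0(\theta)\,d\theta, $$
this reduces to the Poisson-type identity $\int_{\mathcal{X}^*}e^{i\langle\theta,\gamma\rangle}\chi_0(\theta)\,d\theta=\int_{E^*}e^{i\langle\theta,\gamma\rangle}\,d\theta=|E^*|\,\delta_{\gamma,0}$, obtained by folding the integral against the partition of unity and using $\langle\gamma^*,\gamma\rangle\in 2\pi\mathbb{Z}$.

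The main obstacle is the converse identity $\mathcal{U}_\Gamma\mathcal{W}_\Gamma=\id_{\mathscr{S}'_\Gamma}$, which by duality amounts to showing that the Schwartz function $\Psi:=\phi-|E^*|^{-1}(\mathcal{U}_\Gamma^T\phi)\otimes\chi_0$ is annihilated by every $v'\in\mathscr{S}'_\Gamma$, for every $\phi\in\mathscr{S}(\mathcal{X}^2\times\mathcal{X}^*)$. A Poisson-summation computation in the same spirit as the previous one first yields the pointwise identity
$$ \sum_{\gamma\in\Gamma}\sum_{\gamma^*\in\Gamma^*}e^{i\langle\theta,\gamma\rangle}\Psi(x,y+\gamma,\theta+\gamma^*)\,\equiv\,0. $$
Choosing smooth partitions of unity $\eta\in C^\infty_0(\mathcal{X})$ and $\zeta\in C^\infty_0(\mathcal{X}^*)$ with $\sum_{\gamma\in\Gamma}\eta(y-\gamma)\equiv 1$ and $\sum_{\gamma^*\in\Gamma^*}\zeta(\theta-\gamma^*)\equiv 1$, one writes $\Psi$ as the $\mathscr{S}$-convergent series $\Psi=\sum_{\gamma,\gamma^*}\eta(y-\gamma)\zeta(\theta-\gamma^*)\Psi$, applies the two defining relations of $\mathscr{S}'_\Gamma$ termwise in order to shift each $(y-\gamma,\theta-\gamma^*)$ back to $(y,\theta)$ inside the pairing with $v'$, and reassembles to obtain $\langle v',\Psi\rangle=\langle v',\eta(y)\zeta(\theta)\sum_{\gamma,\gamma^*}e^{i\langle\theta,\gamma\rangle}\Psi(x,y+\gamma,\theta+\gamma^*)\rangle=0$, concluding the proof.
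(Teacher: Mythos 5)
Your argument is correct, and the route differs from the paper's in the proof of part (3). For the extensions, your duality framing is just a re-statement of what the paper does: the explicit transpose $\mathcal{U}_\Gamma^T\phi(x,y)=\sum_\gamma\int e^{i\langle\theta,\gamma\rangle}\phi(x,y+\gamma,\theta)\,d\theta$ is exactly the inner expression in the paper's estimate \eqref{2.12}, and your $\mathcal{W}_\Gamma^T u=|E^*|^{-1}u\otimes\chi_0$ unpacks the paper's $|E^*|^{-1}\langle v_\vartheta,1\rangle_{\mathbb{T}^{*,d}}$ once the torus pairing is written out with a partition of unity (Remark \ref{A.9}). The divergence is in the invertibility. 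For $\mathcal{W}_\Gamma\mathcal{U}_\Gamma=\id$ the paper simply invokes weak-$*$ density of $\mathscr{S}$ in $\mathscr{S}'$; you instead verify the dual identity $\mathcal{U}_\Gamma^T\mathcal{W}_\Gamma^T=\id$ on $\mathscr{S}(\mathcal{X}^2)$ by folding $\int_{\mathcal{X}^*}e^{i\langle\theta,\gamma\rangle}\chi_0(\theta)\,d\theta$ against $\Gamma^*$-translates; this is more explicit than needed but harmless. For the harder identity $\mathcal{U}_\Gamma\mathcal{W}_\Gamma=\id_{\mathscr{S}'_\Gamma}$, the paper slices $v$ by a test function $\vartheta$, identifies $v_\vartheta$ with a distribution on the dual torus, and matches its Fourier coefficients with those of $(\mathcal{U}_\Gamma\mathcal{W}_\Gamma v)_\vartheta$; you instead reduce to the pointwise periodization identity $\sum_{\gamma,\gamma^*}e^{i\langle\theta,\gamma\rangle}\Psi(x,y+\gamma,\theta+\gamma^*)\equiv 0$ (which follows from Poisson summation and $\langle\gamma^*,\gamma\rangle\in2\pi\mathbb{Z}$, as one verifies: the cross term in $\Psi$ reproduces $\sum_{\beta,\gamma^*}e^{i\langle\theta,\beta\rangle}\phi(x,y+\beta,\theta+\gamma^*)$ exactly) and then propagate this to $\langle v',\Psi\rangle=0$ by the compactly supported partitions $\eta,\zeta$ together with the two defining relations of $\mathscr{S}'_\Gamma$. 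Your version is self-contained in the ambient space $\mathscr{S}'(\mathcal{X}^2\times\mathcal{X}^*)$ and avoids the auxiliary torus-distribution machinery, at the cost of a Poisson summation step and the bookkeeping needed to justify termwise application of $v'$ and the final reassembly of the sum inside the pairing, both of which you correctly note converge in $\mathscr{S}$ thanks to the compact supports of $\eta$ and $\zeta$.
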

\begin{proof}
 Let us consider a tempered distribution $u\in\mathscr{S}^\prime(\mathcal{X}^2)$; in order to prove the convergence of the series \eqref{2.2} in the sense of tempered distributions on $\mathcal{X}^2\times\mathcal{X}^*)$ we choose a test function $\varphi\in\mathscr{S}(\mathcal{X}^2\times\mathcal{X}^*)$ and notice that for any $\nu\in\mathbb{N}$ there exist a seminorm $\|.\|_\nu$ on $\varphi\in\mathscr{S}(\mathcal{X}^2\times\mathcal{X}^*)$ and a seminorm $\|.\|^\prime_\nu$ on $\mathscr{S}^\prime(\mathcal{X}^2)$ such that the following is true:
\begin{equation}\label{2.12}
 \left|\left\langle e^{i<\theta,\gamma>}u(x,y-\gamma),\varphi(x,y,\theta)\right\rangle\right|= \left|\left\langle u(x,y),\int_{\mathcal{X}^*}e^{i<\theta,\gamma>}\varphi(x,y+\gamma,\theta)d\theta\right\rangle\right|\leq\|u\|^\prime_\nu\|\varphi\|_\nu<\gamma>^{-\nu}.
\end{equation}
This last inequality implies the convergence of the series \eqref{2.2} in the sense of tempered distributions on $\mathcal{X}^2\times\mathcal{X}^*)$ and the fact that the map $\mathcal{U}_\Gamma:\mathscr{S}^\prime(\mathcal{X}^2)\rightarrow\mathscr{S}^\prime(\mathcal{X}^2\times\mathcal{X}^*)$ is continuous. The fact that $\mathcal{U}_\Gamma u$ belongs to $\mathscr{S}^\prime_\Gamma(\mathcal{X}^2\times\mathcal{X}^*)$ results either by a direct calculus or by approximating with test functions.

Let us fix now some $v\in\mathscr{S}^\prime_\Gamma(\mathcal{X}^2\times\mathcal{X}^*)$; then, for any test function $\vartheta\in\mathscr{S}(\mathcal{X}^2)$, the application
\begin{equation}\label{2.13}
 v_\vartheta:\mathscr{S}(\mathcal{X})\rightarrow\mathbb{C},\quad\langle v_\vartheta,\varphi\rangle:=\langle v,\vartheta\otimes\varphi\rangle,\quad\forall\varphi\in\mathscr{S}(\mathcal{X}^*)
\end{equation}
is a $\Gamma^*$-periodic tempered distribution that can be canonically identified with an element from $\mathscr{S}^\prime\big(\mathbb{T}\big)$.

We define $\mathcal{W}_\Gamma v$ by
\begin{equation}\label{2.14}
 \left\langle \mathcal{W}_\Gamma v,\vartheta\right\rangle:=|E^*|^{-1}\langle v_\vartheta,1\rangle_{\mathbb{T}^{*,d}},\quad\forall\vartheta\in\mathscr{S}(\mathcal{X}^2).
\end{equation}
It is straightforward to verify that $\mathcal{W}_\Gamma v\in\mathscr{S}^\prime(\mathcal{X}^2)$ and that the application $\mathcal{W}_\Gamma:\mathscr{S}^\prime_\Gamma(\mathcal{X}^2\times\mathcal{X}^*)\rightarrow\mathscr{S}^\prime(\mathcal{X}^2)$ is linear and continuous. Moreover it is evident that for the case $v\in\mathscr{F}_0\big(\mathcal{X}^2\times\mathcal{X}^*\big)$, the definition \eqref{2.14} coincides with \eqref{2.3}.

The equality $\mathcal{W}_\Gamma\mathcal{U}_\Gamma=\id_{\mathscr{S}^\prime(\mathcal{X}^2)}$ results from the one valid on the test functions by the density of $\mathscr{S}(\mathcal{X}^2)$ in $\mathscr{S}^\prime(\mathcal{X}^2)$. In order to prove the other equality we notice that for any $v\in\mathscr{S}^\prime_\Gamma(\mathcal{X}^2\times\mathcal{X}^*)$ and any $\vartheta\in\mathscr{S}(\mathcal{X}^2)$ the tempered distribution $v_\vartheta$ defined in \eqref{2.13} belongs to $\mathscr{S}^\prime\big(\mathbb{T}^{*,d}\big)$ and thus may be written as the sum of a Fourier series converging as tempered distribution in $\mathscr{S}^\prime\big(\mathcal{X}^*\big)$:
\begin{equation}\label{2.15}
 v_\vartheta\ =\ |E^*|^{-1}\sum\limits_{\gamma\in\Gamma}\left\langle v_\vartheta,e^{-i<.,\gamma>}\right\rangle_{\mathbb{T}^{*,d}}e^{i<.,\gamma>}.
\end{equation}
But, from \eqref{2.2} we have that
$$
\big(\mathcal{U}_\Gamma\mathcal{W}_\Gamma v\big)(x,y,\theta)=\sum\limits_{\gamma\in\Gamma}e^{i<\theta,\gamma>}\big(\mathcal{W}_\Gamma v\big)(x,y-\gamma),\quad\text{in }\mathscr{S}^\prime(\mathcal{X}^2\times\mathcal{X}^*).
$$
Let us notice that due to \eqref{2.14} we can write
$$
\left\langle \big(\mathcal{W}_\Gamma v\big)(x,y-\gamma),\vartheta(x,y)\right\rangle=\left\langle \big(\mathcal{W}_\Gamma v\big)(x,y),\vartheta(x,y+\gamma)\right\rangle=|E^*|^{-1}\left\langle v_{(\id\otimes\tau_{-\gamma})\vartheta},1\right\rangle_{\mathbb{T}^{*,d}}.
$$
On the other hand, from \eqref{2.13} we deduce that $\forall\varphi\in\mathscr{S}(\mathcal{X}^*)$ one has that
\begin{equation}\label{2.16}
 \left\langle v_{(\id\otimes\tau_{-\gamma})\vartheta},\varphi\right\rangle=\left\langle v,\left[\big(\id\otimes\tau_{-\gamma}\big)\vartheta\right]\otimes\varphi\right\rangle=\left\langle\big(\id\otimes\tau_\gamma\otimes\id\big)v,\vartheta\otimes\varphi\right\rangle=
\end{equation}
$$
=\left\langle v,\vartheta\otimes\big(e^{-i<.,\gamma>}\varphi\big)\right\rangle=\left\langle v_\vartheta,e^{-i<.,\gamma>}\varphi\right\rangle.
$$
Let us recall the relation between a $\Gamma^*$-periodic tempered distribution and the distribution it induces on the torus (as described in the Remark \ref{A.9}): if $\psi\in C^\infty_0(\mathcal{X}^*)$ is choosen such that $\sum\limits_{\gamma^*\in\Gamma^*}\tau_{\gamma^*}\psi=1$ on $\mathcal{X}^*$ (such a choice is evidently possible), then for any $\rho\in\mathscr{S}(\mathbb{T}^{*,d})$ we have that
$$
\left\langle v_{(\id\otimes\tau_{-\gamma})\vartheta},\rho\right\rangle_{\mathbb{T}^{*,d}}=\left\langle v_{(\id\otimes\tau_{-\gamma})\vartheta},\psi\rho\right\rangle_{\mathcal{X}},\qquad\left\langle v_\vartheta,\rho\right\rangle_{\mathbb{T}^{*,d}}=\left\langle v_\vartheta,\psi\rho\right\rangle_{\mathcal{X}}.
$$
Thus it follows that \eqref{2.16} implies that
$$
\left\langle v_{(\id\otimes\tau_{-\gamma})\vartheta},1\right\rangle_{\mathbb{T}^{*,d}}=\left\langle v_{(\id\otimes\tau_{-\gamma})\vartheta},\psi\right\rangle_{\mathcal{X}}=\left\langle v_\vartheta,e^{-i<.,\gamma>}\psi\right\rangle_{\mathcal{X}}=\left\langle v_\vartheta,e^{-i<.,\gamma>}\right\rangle_{\mathbb{T}^{*,d}}.
$$
We conclude that
$$
\big(\mathcal{U}_\Gamma\mathcal{W}_\Gamma v\big)_\vartheta\ =\ |E^*|^{-1}\sum\limits_{\gamma\in\Gamma}\left\langle v_\vartheta,e^{-i<.,\gamma>}\right\rangle_{\mathbb{T}^{*,d}}e^{i<.,\gamma>}\ =\ v_\vartheta,
$$
so that finally we obtain that $\mathcal{U}_\Gamma\mathcal{W}_\Gamma=\id_{\mathscr{S}^\prime_\Gamma(\mathcal{X}^2\times\mathcal{X}^*)}$.
\end{proof}

\begin{lemma}\label{L.2.5}
 Let $\widetilde{P}_\epsilon$ be the operator defined in Section 1 and $\widetilde{P}_{\epsilon,\Gamma}:=\widetilde{P}_\epsilon\otimes\id$.
\begin{enumerate}
 \item $\widetilde{P}_{\epsilon,\Gamma}$ is a linear continuous operator in $\mathscr{S}^\prime_\Gamma\big(\mathcal{X}^2\times\mathcal{X}^*\big)$.
\item $\mathcal{U}_\Gamma\widetilde{P}_\epsilon=\widetilde{P}_{\epsilon,\Gamma}\mathcal{U}_\Gamma$ on $\mathscr{S}^\prime\big(\mathcal{X}^2\big)$.
\end{enumerate}
\end{lemma}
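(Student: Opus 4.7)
Both parts hinge on the single commutation identity
\[
\widetilde P_\epsilon\, T^y_\gamma \;=\; T^y_\gamma\, \widetilde P_\epsilon\qquad(\gamma\in\Gamma),
\]
where $T^y_\gamma u(x,y):=u(x,y-\gamma)$ denotes translation by $\gamma$ in the second variable on $\mathscr{S}^\prime(\mathcal{X}^2)$. My plan is first to establish this identity, then to derive both parts of the lemma from it essentially formally. For the commutation, I would use formula \eqref{1.4}: after conjugation by $\boldsymbol{\psi}^*$, the operator $\widetilde{P}_\epsilon$ becomes, at each fixed $y$, the magnetic pseudodifferential operator $\mathfrak{Op}^{A_\epsilon}(p_\epsilon(\cdot,\cdot-y,\cdot))$ acting on functions of $x$ alone. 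Hypothesis H.6 asserts that $p_\epsilon$ is $\Gamma$-periodic in the second variable, so the parameter $y$ enters only modulo $\Gamma$; consequently the conjugated operator commutes with translations in $y$ by $\Gamma$. Since an elementary computation gives $\boldsymbol{\psi}^*T^y_\gamma\boldsymbol{\psi}^* = T^y_{-\gamma}$ (using $\boldsymbol{\psi}^2=\id$), this transfers back to the stated commutation for $\widetilde P_\epsilon$.

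\textbf{Part 1.} A distribution $v\in\mathscr{S}^\prime(\mathcal{X}^2\times\mathcal{X}^*)$ lies in $\mathscr{S}^\prime_\Gamma$ iff (a) it is $\Gamma^*$-periodic in $\theta$ and (b) $v(x,y+\gamma,\theta)=e^{i\langle\theta,\gamma\rangle}v(x,y,\theta)$ for all $\gamma\in\Gamma$. Condition (a) is inherited by $\widetilde{P}_{\epsilon,\Gamma}v$ trivially since $\widetilde{P}_{\epsilon,\Gamma}=\widetilde{P}_\epsilon\otimes\id$ does not touch $\theta$. For (b), condition (b) for $v$ rewrites as $(T^y_{-\gamma}\otimes\id)v = M_\gamma v$, where $M_\gamma$ denotes multiplication by the smooth function $e^{i\langle\theta,\gamma\rangle}$. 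Since $\widetilde{P}_{\epsilon,\Gamma}$ commutes with $M_\gamma$ (it acts only in $(x,y)$) and with $T^y_{-\gamma}\otimes\id$ (by the backbone identity tensored with $\id$), applying $\widetilde{P}_{\epsilon,\Gamma}$ gives $(T^y_{-\gamma}\otimes\id)\widetilde{P}_{\epsilon,\Gamma}v = M_\gamma\widetilde{P}_{\epsilon,\Gamma}v$, which is (b) for $\widetilde{P}_{\epsilon,\Gamma}v$. Continuity of $\widetilde{P}_{\epsilon,\Gamma}$ on $\mathscr{S}^\prime(\mathcal{X}^2\times\mathcal{X}^*)$ is immediate from Lemma \ref{L.1.1}(2) and the continuity of tensor products with $\id$; restricting to the closed subspace $\mathscr{S}^\prime_\Gamma$ with its induced topology yields part 1.

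\textbf{Part 2 and main obstacle.} Both sides of $\mathcal{U}_\Gamma\widetilde{P}_\epsilon=\widetilde{P}_{\epsilon,\Gamma}\mathcal{U}_\Gamma$ are continuous maps from $\mathscr{S}^\prime(\mathcal{X}^2)$ to $\mathscr{S}^\prime_\Gamma(\mathcal{X}^2\times\mathcal{X}^*)$, by Lemma \ref{L.1.1}(2), Lemma \ref{L.2.4}(1), and part 1. Since $\mathscr{S}(\mathcal{X}^2)$ is sequentially dense in $\mathscr{S}^\prime(\mathcal{X}^2)$ for the weak-$*$ topology, it suffices to verify the identity for $u\in\mathscr{S}(\mathcal{X}^2)$. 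There the series in \eqref{2.2} converges in $\mathscr{S}^\prime(\mathcal{X}^2\times\mathcal{X}^*)$ (as noted in Lemma \ref{L.2.4}), so by continuity of $\widetilde{P}_{\epsilon,\Gamma}$ on that space I may apply it termwise:
\[
\widetilde{P}_{\epsilon,\Gamma}(\mathcal{U}_\Gamma u)(x,y,\theta)\;=\;\sum_{\gamma\in\Gamma}e^{i\langle\theta,\gamma\rangle}\,\widetilde{P}_\epsilon\bigl(T^y_\gamma u\bigr)(x,y).
\]
Invoking the backbone commutation, each summand equals $e^{i\langle\theta,\gamma\rangle}(T^y_\gamma\widetilde{P}_\epsilon u)(x,y) = e^{i\langle\theta,\gamma\rangle}(\widetilde{P}_\epsilon u)(x,y-\gamma)$, and reassembling recovers $\mathcal{U}_\Gamma(\widetilde{P}_\epsilon u)$. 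The only genuinely non-formal step of the whole argument is the commutation $\widetilde{P}_\epsilon T^y_\gamma = T^y_\gamma\widetilde{P}_\epsilon$, where H.6 must be used at exactly the right spot; once one passes through formula \eqref{1.4} and recognises the $y$-dependence as a $\Gamma$-periodic parameter, the rest of the proof reduces to bookkeeping on continuous maps and dense subspaces.
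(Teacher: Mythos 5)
Your proposal is correct and takes essentially the same approach as the paper: both parts rest on the commutation $\widetilde{P}_\epsilon(\id\otimes\tau_\gamma)=(\id\otimes\tau_\gamma)\widetilde{P}_\epsilon$ for $\gamma\in\Gamma$, established via Lemma \ref{L.1.2} and Hypothesis H.6, and then Part 2 is checked for test functions and extended by density. The only cosmetic difference is that you verify the commutation by conjugating with $\boldsymbol{\psi}^*$ and using \eqref{1.4}, whereas the paper reads it off directly from \eqref{1.3} with the substitution $\tilde{y}\mapsto\tilde{y}+\gamma$; both are one-line uses of the $\Gamma$-periodicity of $p_\epsilon$ in its second argument.
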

\begin{proof}
 We evidently have that $\widetilde{P}_{\epsilon,\Gamma}:\mathscr{S}\big(\mathcal{X}^2\times\mathcal{X}^*\big)\rightarrow\mathscr{S}\big(\mathcal{X}^2\times\mathcal{X}^*\big)$ and $\widetilde{P}_{\epsilon,\Gamma}:\mathscr{S}^\prime\big(\mathcal{X}^2\times\mathcal{X}^*\big)\rightarrow\mathscr{S}^\prime\big(\mathcal{X}^2\times\mathcal{X}^*\big)$ are linear and continuous. It is thus sufficient to prove that $\forall v\in\mathscr{S}\big(\mathcal{X}^2\times\mathcal{X}^*\big)$ we have that:
\begin{equation}\label{2.17}
 \left[\Big(\widetilde{P}_\epsilon\otimes\id\Big)v\right](x,y,\theta+\gamma^*)=\left[\Big(\widetilde{P}_\epsilon\otimes\id\Big)(\id\otimes\id\otimes\tau_{-\gamma^*})v\right](x,y,\theta),\quad\forall(x,y,\theta)\in\mathcal{X}^2\times\mathcal{X}^*,\ \forall\gamma^*\in\Gamma^*
\end{equation}
and
\begin{equation}\label{2.18}
\left[\Big(\widetilde{P}_\epsilon\otimes\id\Big)v\right](x,y+\gamma,\theta)=\left[\Big(\widetilde{P}_\epsilon\otimes\id\Big)(\id\otimes\tau_{-\gamma}\otimes\id)v\right](x,y,\theta),\quad\forall(x,y,\theta)\in\mathcal{X}^2\times\mathcal{X}^*,\ \forall\gamma\in\Gamma.
\end{equation}
While the equality \eqref{2.17} is obvious, for the equality \eqref {2.18} we use \eqref{1.3} (with $\tilde{y}$ replaced by $\tilde{y}+\gamma$) and the $\Gamma$-periodicity of $p_\epsilon$ with respect to the second variable.

For the second point of the Lemma we use \eqref{2.2} and \eqref{2.18} and notice that for any $u\in\mathscr{S}(\mathcal{X}^2)$ and for any $(x,y,\theta)\in\mathcal{X}^2\times\mathcal{X}^*$ we have that
$$
\Big(\widetilde{P}_{\epsilon,\Gamma}\mathcal{U}_\Gamma u\Big)(x,y,\theta)=\sum\limits_{\gamma\in\Gamma}e^{i<\theta,\gamma>}\left[\widetilde{P}_\epsilon\big(\id\otimes\tau_{\gamma}\big)u\right](x,y)=\sum\limits_{\gamma\in\Gamma}e^{i<\theta,\gamma>}\big(\widetilde{P}_\epsilon u\big)(x,y-\gamma)=\Big(\mathcal{U}_\Gamma\widetilde{P}_\epsilon u\Big)(x,y,\theta)
$$
\end{proof}

We shall study the self-adjointness of $\widetilde{P}_\epsilon$ acting in some new spaces of functions that are periodic in one argument. Let us first consider the operator $\widetilde{P}_{\epsilon,\Gamma}$.

\begin{definition}\label{D.2.6}
 Recalling the operator $\widetilde{Q}_{s,\epsilon}$ from Definition \ref{D.1.4} (2) we define the operator $\widetilde{Q}_{s,\epsilon,\Gamma}:=\widetilde{Q}_{s,\epsilon}\otimes\id$.
\end{definition}

As we have already noticed about the Definition \ref{D.1.4} (2), the operator $\widetilde{Q}_{s,\epsilon}$ is obtained from $Q_{s,\epsilon}$  by ''doubling the variable`` in the same way as $\widetilde{P}_\epsilon$ is associated to $P_\epsilon$. Then we have a result similar to Lemma \ref{L.2.5} and deduce that $\widetilde{Q}_{s,\epsilon,\Gamma}$ is a linear continuous operator in $\mathscr{S}^\prime_\Gamma\big(\mathcal{X}^2\times\mathcal{X}^*\big)$.

\begin{definition}\label{D.2.6.b}
 For any $s\in\mathbb{R}$ we define 
$$
\mathscr{F}_{s,\epsilon}\big(\mathcal{X}^2\times\mathcal{X}^*\big):=\left\{v\in\mathscr{S}^\prime_\Gamma\big(\mathcal{X}^2\times\mathcal{X}^*\big)\,\mid\,\widetilde{Q}_{s,\epsilon,\Gamma}v\in\mathscr{F}_0\big(\mathcal{X}^2\times\mathcal{X}^*\big)\right\}
$$
that is evidently a Hilbert space for the quadratic norm
\begin{equation}\label{2.19}
 \|v\|_{\mathscr{F}_{s,\epsilon}}\ :=\ \left\|\widetilde{Q}_{s,\epsilon,\Gamma}v\right\|_{\mathscr{F}_0}\qquad\forall v\in\mathscr{F}_{s,\epsilon}\big(\mathcal{X}^2\times\mathcal{X}^*\big).
\end{equation}
\end{definition}

\begin{lemma}\label{L.2.7}
 Let $\widetilde{\mathcal{H}}^s_{A_\epsilon}\big(\mathcal{X}^2\big)$ be the Hilbert space defined in \eqref{1.11}. Then $\mathcal{U}_\Gamma:\widetilde{\mathcal{H}}^s_{A_\epsilon}\big(\mathcal{X}^2\big)\rightarrow\mathscr{F}_{s,\epsilon}\big(\mathcal{X}^2\times\mathcal{X}^*\big)$ is unitary.
\end{lemma}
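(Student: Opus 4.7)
The plan is to reduce the statement to the unitarity of $\mathcal{U}_\Gamma:L^2(\mathcal{X}^2)\to\mathscr{F}_0(\mathcal{X}^2\times\mathcal{X}^*)$ from Lemma \ref{L.2.3} via an intertwining identity
\begin{equation*}
\mathcal{U}_\Gamma\,\widetilde{Q}_{s,\epsilon}\ =\ \widetilde{Q}_{s,\epsilon,\Gamma}\,\mathcal{U}_\Gamma\qquad\text{on }\mathscr{S}^\prime(\mathcal{X}^2),
\end{equation*}
which is exactly the analog for $\widetilde{Q}_{s,\epsilon}$ of the intertwining proved in Lemma \ref{L.2.5} for $\widetilde{P}_\epsilon$. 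Establishing this identity is the only nontrivial step; everything else is a direct translation of norms.

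To obtain the intertwining, I would first observe that $\widetilde{Q}_{s,\epsilon}=\boldsymbol{\psi}^*(Q_{s,\epsilon}\otimes\id)\boldsymbol{\psi}^*$ commutes with translations $\id\otimes\tau_\gamma$ in the second variable for every $\gamma\in\Gamma$. Indeed, $Q_{s,\epsilon}\otimes\id$ acts only on the first variable and hence commutes with every $\id\otimes\tau_\gamma$, while a direct computation from \eqref{1.1} gives $(\id\otimes\tau_\gamma)\boldsymbol{\psi}^*=\boldsymbol{\psi}^*(\id\otimes\tau_{-\gamma})$; composing these two facts yields the desired commutation. Then, repeating verbatim the final computation in the proof of Lemma \ref{L.2.5} (where only this commutation property of $\widetilde{P}_\epsilon$ is used, supplied there by the $\Gamma$-periodicity of $p_\epsilon$ in its second argument), one obtains the intertwining on $\mathscr{S}(\mathcal{X}^2)$, and then on $\mathscr{S}^\prime(\mathcal{X}^2)$ by the continuity extensions of Lemma \ref{L.2.4}.

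Isometry of $\mathcal{U}_\Gamma:\widetilde{\mathcal{H}}^s_{A_\epsilon}(\mathcal{X}^2)\to\mathscr{F}_{s,\epsilon}(\mathcal{X}^2\times\mathcal{X}^*)$ then follows at once. For $u\in\widetilde{\mathcal{H}}^s_{A_\epsilon}(\mathcal{X}^2)$ we have $\widetilde{Q}_{s,\epsilon}u\in L^2(\mathcal{X}^2)$, and by Lemma \ref{L.2.4} the distribution $\mathcal{U}_\Gamma u$ belongs to $\mathscr{S}^\prime_\Gamma(\mathcal{X}^2\times\mathcal{X}^*)$; the intertwining gives
\begin{equation*}
\widetilde{Q}_{s,\epsilon,\Gamma}\,\mathcal{U}_\Gamma u\ =\ \mathcal{U}_\Gamma\,\widetilde{Q}_{s,\epsilon}u\ \in\ \mathscr{F}_0(\mathcal{X}^2\times\mathcal{X}^*),
\end{equation*}
so that $\mathcal{U}_\Gamma u\in\mathscr{F}_{s,\epsilon}$ and, using the unitarity from Lemma \ref{L.2.3},
\begin{equation*}
\|\mathcal{U}_\Gamma u\|_{\mathscr{F}_{s,\epsilon}}\ =\ \|\mathcal{U}_\Gamma\,\widetilde{Q}_{s,\epsilon}u\|_{\mathscr{F}_0}\ =\ \|\widetilde{Q}_{s,\epsilon}u\|_{L^2(\mathcal{X}^2)}\ =\ \|u\|_{\widetilde{\mathcal{H}}^s_{A_\epsilon}(\mathcal{X}^2)}.
\end{equation*}

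For surjectivity I would run the inverse direction with $\mathcal{W}_\Gamma$. Given $v\in\mathscr{F}_{s,\epsilon}$, set $u:=\mathcal{W}_\Gamma v\in\mathscr{S}^\prime(\mathcal{X}^2)$; applying $\mathcal{W}_\Gamma$ to the intertwining identity and using $\mathcal{W}_\Gamma\mathcal{U}_\Gamma=\id$ from Lemma \ref{L.2.4}(3), one gets $\widetilde{Q}_{s,\epsilon}u=\mathcal{W}_\Gamma\widetilde{Q}_{s,\epsilon,\Gamma}v$. Since $\widetilde{Q}_{s,\epsilon,\Gamma}v\in\mathscr{F}_0$ and $\mathcal{W}_\Gamma$ restricts to a unitary $\mathscr{F}_0\to L^2(\mathcal{X}^2)$, this places $\widetilde{Q}_{s,\epsilon}u$ in $L^2(\mathcal{X}^2)$, so $u\in\widetilde{\mathcal{H}}^s_{A_\epsilon}(\mathcal{X}^2)$ and $\mathcal{U}_\Gamma u=v$. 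The main obstacle, as indicated, is the translation-commutation step underlying the intertwining; once that is in place, the rest is essentially an application of Lemma \ref{L.2.3}.
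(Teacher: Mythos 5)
Your proof is correct and follows essentially the same route as the paper: establish the intertwining $\mathcal{U}_\Gamma\widetilde{Q}_{s,\epsilon}=\widetilde{Q}_{s,\epsilon,\Gamma}\mathcal{U}_\Gamma$ on $\mathscr{S}^\prime(\mathcal{X}^2)$ as the analogue of Lemma \ref{L.2.5} for $\widetilde{Q}_{s,\epsilon}$, then read off isometry by chaining the norm identities through Lemma \ref{L.2.3}, and get surjectivity by running the same argument with $\mathcal{W}_\Gamma$ and $\mathcal{W}_\Gamma\mathcal{U}_\Gamma=\id$. The paper is terser at the one nontrivial point — it simply invokes a ``result similar to Lemma \ref{L.2.5}'' for $\widetilde{Q}_{s,\epsilon,\Gamma}$ — whereas you spell out why $\widetilde{Q}_{s,\epsilon}$ commutes with $\id\otimes\tau_\gamma$ directly from the factorization $\widetilde{Q}_{s,\epsilon}=\boldsymbol{\psi}^*(Q_{s,\epsilon}\otimes\id)\boldsymbol{\psi}^*$ and the conjugation identity $(\id\otimes\tau_\gamma)\boldsymbol{\psi}^*=\boldsymbol{\psi}^*(\id\otimes\tau_{-\gamma})$; this is a cleaner justification than appealing to $\Gamma$-periodicity of an associated symbol, but it is the same underlying fact, so I regard it as the same proof with the analogy made explicit.
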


\begin{proof}
 Let us pick $u\in\widetilde{\mathcal{H}}^s_{A_\epsilon}\big(\mathcal{X}^2\big)$; thus we know that $u\in\mathscr{S}^\prime\big(\mathcal{X}^2\big)$ and $\widetilde{Q}_{s,\epsilon}u\in L^2(\mathcal{X}^2)$. We denote by $v:=\mathcal{U}_\Gamma u\in\mathscr{S}^\prime_\Gamma\big(\mathcal{X}^2\times\mathcal{X}^*\big)$. From Lemma \ref{L.2.5} we deduce that $\mathcal{U}_\Gamma\widetilde{Q}_{s,\epsilon}=\widetilde{Q}_{s,\epsilon,\Gamma}\mathcal{U}_\Gamma$ on $\mathscr{S}^\prime\big(\mathcal{X}^2\big)$, and from Lemma \ref{L.2.3} we have that $ \mathcal{U}_\Gamma\widetilde{Q}_{s,\epsilon}u\in\mathscr{F}_0\big(\mathcal{X}^2\times\mathcal{X}^*\big)$. We conclude that $\widetilde{Q}_{s,\epsilon,\Gamma}v\in\mathscr{F}_0\big(\mathcal{X}^2\times\mathcal{X}^*\big)$ so that $v\in\mathscr{F}_{s,\epsilon}\big(\mathcal{X}^2\times\mathcal{X}^*\big)$. Moreover,
$$
\left\|\mathcal{U}_\Gamma u\right\|_{\mathscr{F}_{s,\epsilon}}=\|v\|_{\mathscr{F}_{s,\epsilon}}=\left\|\widetilde{Q}_{s,\epsilon,\Gamma}v\right\|_{\mathscr{F}_0}=\left\|\mathcal{U}_\Gamma\widetilde{Q}_{s,\epsilon}u\right\|_{\mathscr{F}_0}=\left\|\widetilde{Q}_{s,\epsilon}u\right\|_{L^2(\mathcal{X}^2)}=\|u\|_{\widetilde{\mathcal{H}}^s_{A_\epsilon}},
$$
implying that $\mathcal{U}_\Gamma:\widetilde{\mathcal{H}}^s_{A_\epsilon}\big(\mathcal{X}^2\big)\rightarrow\mathscr{F}_{s,\epsilon}\big(\mathcal{X}^2\times\mathcal{X}^*\big)$ is isometric. In order to prove its surjectivity we consider $v\in\mathscr{F}_{s,\epsilon}\big(\mathcal{X}^2\times\mathcal{X}^*\big)$; then $v\in\mathscr{S}^\prime_\Gamma\big(\mathcal{X}^2\times\mathcal{X}^*\big)$ and also $\widetilde{Q}_{s,\epsilon,\Gamma}v\in\mathscr{F}_0\big(\mathcal{X}^2\times\mathcal{X}^*\big)$. Let us define $u:=\mathcal{W}_\Gamma v\in\mathscr{S}^\prime(\mathcal{X}^2)$ (Lemma \ref{L.2.4}). Then we apply Lemma \ref{L.2.5} and deduce that $\mathcal{W}_\Gamma\widetilde{Q}_{s,\epsilon,\Gamma}=\widetilde{Q}_{s,\epsilon}\mathcal{W}_\Gamma$ on $\mathscr{S}^\prime_\Gamma\big(\mathcal{X}^2\times\mathcal{X}^*\big)$, so that we have $\widetilde{Q}_{s,\epsilon}u=\mathcal{W}_\Gamma\widetilde{Q}_{s,\epsilon,\Gamma}v\in L^2(\mathcal{X}^2)$ and we conclude that $u\in\widetilde{\mathcal{H}}^s_{A_\epsilon}\big(\mathcal{X}^2\big)$ and 
$\mathcal{U}_\Gamma u=v$.
\end{proof}

\begin{lemma}\label{L.2.8}
 The operator $\widetilde{P}_{\epsilon,\Gamma}$ defined on $\mathscr{F}_{m,\epsilon}\big(\mathcal{X}^2\times\mathcal{X}^*\big)$ is self-adjoint as operator acting in the Hilbert space $\mathscr{F}_0\big(\mathcal{X}^2\times\mathcal{X}^*\big)$.
\end{lemma}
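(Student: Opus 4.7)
The strategy is to transfer the self-adjointness of $\widetilde{P}_\epsilon$ on $\widetilde{\mathcal{H}}^m_{A_\epsilon}(\mathcal{X}^2)\subset L^2(\mathcal{X}^2)$ (already established in Proposition \ref{P.1.18}) to the operator $\widetilde{P}_{\epsilon,\Gamma}$ on $\mathscr{F}_{m,\epsilon}(\mathcal{X}^2\times\mathcal{X}^*)\subset\mathscr{F}_0(\mathcal{X}^2\times\mathcal{X}^*)$ via the unitary $\mathcal{U}_\Gamma$. All the intertwining and domain identifications needed for this transfer have been set up in the preceding lemmas, so the proof will essentially amount to assembling them.

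First I would recall that by Lemma \ref{L.2.3} the map $\mathcal{U}_\Gamma$ is a unitary isomorphism from $L^2(\mathcal{X}^2)$ onto $\mathscr{F}_0(\mathcal{X}^2\times\mathcal{X}^*)$, and by Lemma \ref{L.2.7} its restriction is a unitary isomorphism from $\widetilde{\mathcal{H}}^m_{A_\epsilon}(\mathcal{X}^2)$ onto $\mathscr{F}_{m,\epsilon}(\mathcal{X}^2\times\mathcal{X}^*)$. Next, by Lemma \ref{L.2.5}(2) we have the operator identity $\mathcal{U}_\Gamma\widetilde{P}_\epsilon=\widetilde{P}_{\epsilon,\Gamma}\mathcal{U}_\Gamma$ on $\mathscr{S}^\prime(\mathcal{X}^2)$, which on restriction to $\widetilde{\mathcal{H}}^m_{A_\epsilon}(\mathcal{X}^2)$ yields
\begin{equation*}
\widetilde{P}_{\epsilon,\Gamma}\;=\;\mathcal{U}_\Gamma\,\widetilde{P}_\epsilon\,\mathcal{U}_\Gamma^{-1}\quad\text{on }\mathscr{F}_{m,\epsilon}(\mathcal{X}^2\times\mathcal{X}^*).
\end{equation*}
Thus $\widetilde{P}_{\epsilon,\Gamma}$ with domain $\mathscr{F}_{m,\epsilon}(\mathcal{X}^2\times\mathcal{X}^*)$ is the unitary conjugate of the self-adjoint operator $(\widetilde{P}_\epsilon,\widetilde{\mathcal{H}}^m_{A_\epsilon}(\mathcal{X}^2))$ from Proposition \ref{P.1.18}, hence self-adjoint in $\mathscr{F}_0(\mathcal{X}^2\times\mathcal{X}^*)$.

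The one point that deserves a small verification is that the domain appearing in the statement, namely $\mathscr{F}_{m,\epsilon}(\mathcal{X}^2\times\mathcal{X}^*)$ as defined in Definition \ref{D.2.6.b}, really corresponds under $\mathcal{U}_\Gamma$ to the domain $\widetilde{\mathcal{H}}^m_{A_\epsilon}(\mathcal{X}^2)$; this is exactly the content of Lemma \ref{L.2.7}, which was proved by noting that $\mathcal{U}_\Gamma$ intertwines $\widetilde{Q}_{m,\epsilon}$ with $\widetilde{Q}_{m,\epsilon,\Gamma}$ and by using the isometry property established in \eqref{2.9}. With these two ingredients in hand (unitary equivalence of the operators and of their natural Hilbert-space domains), no additional calculation is needed: self-adjointness is preserved under unitary conjugation, and the lemma follows. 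The only possible obstacle would have been a mismatch between the a priori definition of $\mathscr{F}_{m,\epsilon}$ through the operator $\widetilde{Q}_{m,\epsilon,\Gamma}$ and the image under $\mathcal{U}_\Gamma$ of $\widetilde{\mathcal{H}}^m_{A_\epsilon}(\mathcal{X}^2)$, but this has already been taken care of in Lemma \ref{L.2.7}.
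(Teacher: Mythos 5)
Your proof is correct and follows exactly the same route as the paper's: cite Proposition \ref{P.1.18} for the self-adjointness of $\widetilde{P}_\epsilon$ on $\widetilde{\mathcal{H}}^m_{A_\epsilon}(\mathcal{X}^2)$, Lemma \ref{L.2.7} for the unitarity of $\mathcal{U}_\Gamma$ on both $L^2(\mathcal{X}^2)$ and the domain, Lemma \ref{L.2.5} for the intertwining relation, and then invoke invariance of self-adjointness under unitary conjugation. Your closing remark about verifying the domain correspondence under $\mathcal{U}_\Gamma$ correctly identifies the only point where something could go wrong and correctly notes it was already settled in Lemma \ref{L.2.7}.
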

\begin{proof}
 By Proposition \ref{P.1.18}, $\widetilde{P}_{\epsilon,\Gamma}$ is self-adjoint as operator acting in $L^2(\mathcal{X}^2)$, with domain $\widetilde{\mathcal{H}}^m_{A_\epsilon}\big(\mathcal{X}^2\big)$. By Lemma \ref{L.2.7} the operators $\mathcal{U}_\Gamma:L^2(\mathcal{X}^2)\rightarrow\mathscr{F}_0\big(\mathcal{X}^2\times\mathcal{X}^*\big)$ and $\mathcal{U}_\Gamma:\widetilde{\mathcal{H}}^m_{A_\epsilon}\big(\mathcal{X}^2\big)\rightarrow\mathscr{F}_{m,\epsilon}\big(\mathcal{X}^2\times\mathcal{X}^*\big)$ are unitary.Finally, by Lemma \ref{L.2.5} we have that $\widetilde{P}_{\epsilon,\Gamma}\mathcal{U}_\Gamma=\mathcal{U}_\Gamma\widetilde{P}_{\epsilon}$ on $\widetilde{\mathcal{H}}^m_{A_\epsilon}\big(\mathcal{X}^2\big)$, so that $\widetilde{P}_{\epsilon,\Gamma}$ is unitarily equivalent with $\widetilde{P}_{\epsilon}$.
\end{proof}

We shall need some more function spaces in order to come back to the operator $\widetilde{P}_{\epsilon}$.
\begin{definition}\label{D.2.9}
 Let $\theta\in\mathcal{X}^*$ and $s\in\mathbb{R}$.
\begin{enumerate}
 \item $\mathscr{S}^\prime_\theta\big(\mathcal{X}^2\big):=\left\{u\in\mathscr{S}^\prime\big(\mathcal{X}^2\big)\,\mid\,\big(\id\otimes\tau_{-\gamma}\big)u=e^{i<\theta,\gamma>}u\ \forall\gamma\in\Gamma\right\}$ endowed with the topology induced from $\mathscr{S}^\prime\big(\mathcal{X}^2\big)$.
\item $\mathcal{H}^s_{\theta,\epsilon}\big(\mathcal{X}^2\big):=\left\{u\in\mathscr{S}^\prime_\theta\big(\mathcal{X}^2\big)\,\mid\,\widetilde{Q}_{s,\epsilon}u\in L^2\big(\mathcal{X}\times E\big)\right\}$ endowed with the following quadratic norm
\begin{equation}\label{2.20}
 \|u\|_{\mathcal{H}^s_{\theta,\epsilon}}:=\left\|\widetilde{Q}_{s,\epsilon}u\right\|_{L^2(\mathcal{X}\times E)},\quad\forall u\in\mathcal{H}^s_{\theta,\epsilon}\big(\mathcal{X}^2\big).
\end{equation}
\item $\mathcal{K}^s_\epsilon\big(\mathcal{X}^2\big):=\mathcal{H}^s_{0,\epsilon}\big(\mathcal{X}^2\big)$.
\end{enumerate}
\end{definition}
As we already noticed in the proof of Lemma \ref{L.2.5}, for any $u\in\mathscr{S}^\prime\big(\mathcal{X}^2\big)$ the following equality holds:
$$
\big(\id\otimes\tau_{-\gamma}\big)\widetilde{P}_{\epsilon}u=\widetilde{P}_{\epsilon}\big(\id\otimes\tau_{-\gamma}\big)u,\quad\forall\gamma\in\Gamma.
$$
It follows that the operators $\widetilde{P}_{\epsilon}$ and $\widetilde{Q}_{s,\epsilon}$ leave the space $\mathscr{S}^\prime_\theta\big(\mathcal{X}^2\big)$ invariant. We shall use the notation $\mathscr{S}^\prime_0\big(\mathcal{X}^2\big)\equiv\mathscr{S}^\prime_\Gamma\big(\mathcal{X}^2\big)$. Let us also notice that for $s=0$ the spaces defined in (2) and (3) above do not depend on $\epsilon$ and will be denoted by $\mathcal{H}_{\theta}\big(\mathcal{X}^2\big)$ and respectively by $\mathcal{K}\big(\mathcal{X}^2\big)$; this last one may be identified with $L^2\big(\mathcal{X}\times\mathbb{T}\big)$.

\begin{lemma}\label{L.2.10}
 Let us consider the map $\boldsymbol{\psi}$ defined by \eqref{1.1}. Then for any $s\in\mathbb{R}$ the adjoint $\boldsymbol{\psi}^*$ is a unitary operator $\mathcal{K}^s_\epsilon\big(\mathcal{X}^2\big)\rightarrow
 \mathcal{H}^s_{A_\epsilon}\big(\mathcal{X}\big)\otimes L^2\big(\mathbb{T}\big)$. In particular $\mathcal{K}^s_\epsilon\big(\mathcal{X}^2\big)$ is a Hilbert space for the norm \eqref{2.20} having $\mathscr{S}\big(\mathcal{X}\times\mathbb{T}\big)$ as a dense subspace.
\end{lemma}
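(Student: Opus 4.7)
The strategy is to exploit the defining identity $\widetilde{Q}_{s,\epsilon}=\boldsymbol{\psi}^{*}(Q_{s,\epsilon}\otimes\id)\boldsymbol{\psi}^{*}$ together with the involution property $(\boldsymbol{\psi}^{*})^{2}=\id$, reducing everything to the claim that $\boldsymbol{\psi}^{*}$ preserves the $L^{2}(\X\times E)$-norm on functions that are $\Gamma$-periodic in the second variable. First I would check that $\boldsymbol{\psi}^{*}$ maps $\mathscr{S}^{\prime}_{0}(\X^{2})$ into itself: for $u$ with $u(x,y+\gamma)=u(x,y)$, one has $(\boldsymbol{\psi}^{*}u)(x,y+\gamma)=u(x,x-y-\gamma)=u(x,x-y)=(\boldsymbol{\psi}^{*}u)(x,y)$. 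Combined with the observation already recorded in Section 2 that $\widetilde{Q}_{s,\epsilon}$ commutes with $\id\otimes\tau_{\gamma}$, this shows that both sides of the desired isometry make sense on $\mathscr{S}^{\prime}_{0}(\X^{2})$.

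The core step is a measure-preservation calculation. For any $f\in L^{2}_{\text{loc}}(\X^{2})$ that is $\Gamma$-periodic in the second argument and each fixed $x\in\X$, the substitution $z=x-y$ gives
\begin{equation*}
\int_{E}|f(x,x-y)|^{2}\,dy\ =\ \int_{x-E}|f(x,z)|^{2}\,dz\ =\ \int_{E}|f(x,z)|^{2}\,dz,
\end{equation*}
since $E$ and $x-E$ are both measurable fundamental domains for $\Gamma$ and $f(x,\cdot)$ is $\Gamma$-periodic. Integrating in $x$ yields $\|\boldsymbol{\psi}^{*}f\|_{L^{2}(\X\times E)}=\|f\|_{L^{2}(\X\times E)}$. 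Applying this with $f:=\widetilde{Q}_{s,\epsilon}u$ and combining with $(Q_{s,\epsilon}\otimes\id)(\boldsymbol{\psi}^{*}u)=\boldsymbol{\psi}^{*}(\widetilde{Q}_{s,\epsilon}u)$ (a direct consequence of the defining identity and $(\boldsymbol{\psi}^{*})^{2}=\id$), I obtain
\begin{equation*}
\|\boldsymbol{\psi}^{*}u\|_{\mathcal{H}^{s}_{A_{\epsilon}}(\X)\otimes L^{2}(\mathbb{T})}\ =\ \|(Q_{s,\epsilon}\otimes\id)(\boldsymbol{\psi}^{*}u)\|_{L^{2}(\X\times E)}\ =\ \|\widetilde{Q}_{s,\epsilon}u\|_{L^{2}(\X\times E)}\ =\ \|u\|_{\mathcal{K}^{s}_{\epsilon}},
\end{equation*}
after identifying $\mathcal{H}^{s}_{A_{\epsilon}}(\X)\otimes L^{2}(\mathbb{T})$ with those $v\in\mathscr{S}^{\prime}_{0}(\X^{2})$ such that $(Q_{s,\epsilon}\otimes\id)v\in L^{2}(\X\times E)$. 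Surjectivity of $\boldsymbol{\psi}^{*}$ is then immediate from $(\boldsymbol{\psi}^{*})^{2}=\id$: any $v$ in the target is $\boldsymbol{\psi}^{*}(\boldsymbol{\psi}^{*}v)$, and the same calculation shows $\boldsymbol{\psi}^{*}v\in\mathcal{K}^{s}_{\epsilon}(\X^{2})$.

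The two remaining conclusions follow by transport of structure under the resulting unitary. Completeness and the Hilbert-space property of $\mathcal{K}^{s}_{\epsilon}(\X^{2})$ for the norm \eqref{2.20} are inherited from $\mathcal{H}^{s}_{A_{\epsilon}}(\X)\otimes L^{2}(\mathbb{T})$ via $\boldsymbol{\psi}^{*}$. For density of $\mathscr{S}(\X\times\mathbb{T})$, I would observe that $\boldsymbol{\psi}$ is a smooth affine diffeomorphism preserving $\Gamma$-periodicity in the second coordinate, and that $\mathbb{T}$ is compact, so $\boldsymbol{\psi}^{*}$ maps $\mathscr{S}(\X\times\mathbb{T})$ into itself; since this subspace is dense in $\mathcal{H}^{s}_{A_{\epsilon}}(\X)\otimes L^{2}(\mathbb{T})$ (using density of $\mathscr{S}(\X)$ in $\mathcal{H}^{s}_{A_{\epsilon}}(\X)$ from Definition \ref{D.1.4}(1) and of $C^{\infty}(\mathbb{T})$ in $L^{2}(\mathbb{T})$), the unitary $\boldsymbol{\psi}^{*}$ transfers density to $\mathcal{K}^{s}_{\epsilon}(\X^{2})$. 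The only nontrivial step of the whole argument is the measure-preservation computation above; the rest is bookkeeping enabled by $(\boldsymbol{\psi}^{*})^{2}=\id$.
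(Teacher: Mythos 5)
Your proof is correct and follows essentially the same route as the paper's: the $s=0$ case is the same change-of-variables / fundamental-domain computation, the general $s$ case is reduced to it via the defining identity $\widetilde{Q}_{s,\epsilon}=\boldsymbol{\psi}^{*}(Q_{s,\epsilon}\otimes\id)\boldsymbol{\psi}^{*}$ together with $(\boldsymbol{\psi}^{*})^{2}=\id$, and density of $\mathscr{S}(\X\times\mathbb{T})$ is transported through the resulting unitary. You make the role of the involution a bit more explicit, but the underlying argument is identical.
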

\begin{proof}
 The case $s=0$ is straightforward since the map $\boldsymbol{\psi}^*$ leaves invariant the space $\mathscr{S}^\prime_\Gamma\big(\mathcal{X}^2\big)$ and for any $u\in\mathscr{S}\big(\mathcal{X}\times\mathbb{T}\big)$ we have that
$$
\left\|\boldsymbol{\psi}^*u\right\|_{L^2(\mathcal{X}\times\mathbb{T})}^2=\int_{\mathcal{X}}\left(\int_{E}|u(x,x-y)|^2dy\right)dx=\int_{\mathcal{X}}\left(\int_{-E}|u(x,x+y)|^2dy\right)dx=
$$
$$
=\int_{\mathcal{X}}\left(\int_{x-E}|u(x,y)|^2dy\right)dx=\int_{\mathcal{X}}\left(\int_{E}|u(x,y)|^2dy\right)dx=\|u\|^2_{L^2(\mathcal{X}\times\mathbb{T})}.
$$

For any $s\in\mathbb{R}\setminus\{0\}$ we fix some $u\in\mathscr{S}^\prime_\Gamma\big(\mathcal{X}^2\big)$ and notice that:
$$
u\in\mathcal{K}^s_\epsilon\big(\mathcal{X}^2\big)\Leftrightarrow\widetilde{Q}_{s,\epsilon}u\in L^2(\mathcal{X}\times\mathbb{T})\Leftrightarrow\boldsymbol{\psi}^*Q^\prime_{s,\epsilon}\boldsymbol{\psi}^*u\in L^2(\mathcal{X}\times\mathbb{T})\Leftrightarrow\big(Q_{s,\epsilon}\otimes\id\big)\boldsymbol{\psi}^*u\in L^2(\mathcal{X}\times\mathbb{T})\Leftrightarrow
$$
$$
\Leftrightarrow\boldsymbol{\psi}^*u\in\mathcal{H}^s_{A_\epsilon}\big(\mathcal{X}\big)\otimes L^2\big(\mathbb{T}\big)
$$
and we also have that $\left\|\boldsymbol{\psi}^*u\right\|_{ \mathcal{H}^s_{A_\epsilon}(\mathcal{X})\otimes L^2(\mathbb{T})}=\|u\|_{\mathcal{K}^s_\epsilon}$.

The last statement becomes obvious noticing that $\mathcal{H}^s_{A_\epsilon}\big(\mathcal{X}\big)\otimes L^2\big(\mathbb{T}\big)$ is a Hilbert space with $\mathscr{S}\big(\mathcal{X}\times\mathbb{T}\big)$ a dense subspace in it that is invariant under the map $\boldsymbol{\psi}$.
\end{proof}

\begin{lemma}\label{L.2.11}
 For any $\theta\in\mathcal{X}^*$ and $s\in\mathbb{R}$ we have that the operator $T_\theta:\mathscr{S}\big(\mathcal{X}^2\big)\rightarrow\mathscr{S}\big(\mathcal{X}^2\big)$ defined by 
$$
\big(T_\theta u\big)(x,y):=e^{i<\theta,x-y>}u(x,y),
$$
induces a unitary operator $\mathcal{H}^s_{\theta,\epsilon}\big(\mathcal{X}^2\big)\rightarrow\mathcal{K}^s_\epsilon\big(\mathcal{X}^2\big)$. In particular we have that $\mathcal{H}^s_{\theta,\epsilon}\big(\mathcal{X}^2\big)$ is a Hilbert space containing $\mathscr{S}\big(\mathcal{X}^2\big):=T_\theta^{-1}\big[\mathscr{S}\big(\mathcal{X}\times\mathbb{T}\big)\big]$ as a dense subspace.
\end{lemma}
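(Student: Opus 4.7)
The plan is to reduce Lemma 2.11 to Lemma 2.10 by exhibiting $T_\theta$ as a multiplication operator that intertwines the two Hilbert space structures. First I would verify the quasi-periodicity transfer: if $u \in \mathscr{S}^\prime_\theta(\mathcal{X}^2)$, then a direct computation gives
\begin{equation*}
(T_\theta u)(x, y+\gamma) = e^{i\langle\theta, x-y-\gamma\rangle} u(x, y+\gamma) = e^{i\langle\theta, x-y\rangle} e^{-i\langle\theta,\gamma\rangle} \cdot e^{i\langle\theta,\gamma\rangle} u(x,y) = (T_\theta u)(x,y),
\end{equation*}
so $T_\theta u \in \mathscr{S}^\prime_0(\mathcal{X}^2) = \mathscr{S}^\prime_\Gamma(\mathcal{X}^2)$. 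The inverse $T_{-\theta}$ transfers the other way, so $T_\theta$ is a continuous bijection $\mathscr{S}^\prime_\theta(\mathcal{X}^2) \to \mathscr{S}^\prime_\Gamma(\mathcal{X}^2)$.

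The central step is to show that $T_\theta$ commutes with $\widetilde{Q}_{s,\epsilon}$. Using that $\boldsymbol{\psi}$ is an involution (so $\widetilde{Q}_{s,\epsilon} = \boldsymbol{\psi}^* (Q_{s,\epsilon} \otimes \id) (\boldsymbol{\psi}^*)^{-1}$) I would compute
\begin{equation*}
(\boldsymbol{\psi}^* T_\theta u)(x,y) = (T_\theta u)(x, x-y) = e^{i\langle\theta, y\rangle} (\boldsymbol{\psi}^* u)(x,y),
\end{equation*}
which means $\boldsymbol{\psi}^* T_\theta (\boldsymbol{\psi}^*)^{-1} = \id \otimes M_\theta$, where $M_\theta$ denotes multiplication by the character $e^{i\langle\theta,\cdot\rangle}$ in the second variable. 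Since $M_\theta$ obviously commutes with $Q_{s,\epsilon} \otimes \id$ (they act on different variables), conjugating back by $\boldsymbol{\psi}^*$ gives $\widetilde{Q}_{s,\epsilon} T_\theta = T_\theta \widetilde{Q}_{s,\epsilon}$ on $\mathscr{S}^\prime(\mathcal{X}^2)$.

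The isometry property then follows at once: for $u \in \mathcal{H}^s_{\theta,\epsilon}(\mathcal{X}^2)$,
\begin{equation*}
\|T_\theta u\|_{\mathcal{K}^s_\epsilon} = \|\widetilde{Q}_{s,\epsilon} T_\theta u\|_{L^2(\mathcal{X}\times E)} = \|T_\theta \widetilde{Q}_{s,\epsilon} u\|_{L^2(\mathcal{X}\times E)} = \|\widetilde{Q}_{s,\epsilon} u\|_{L^2(\mathcal{X}\times E)} = \|u\|_{\mathcal{H}^s_{\theta,\epsilon}},
\end{equation*}
because $T_\theta$ is pointwise multiplication by a unimodular function and preserves $L^2(\mathcal{X}\times E)$. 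Combined with the bijectivity supplied by $T_{-\theta}$, this makes $T_\theta \colon \mathcal{H}^s_{\theta,\epsilon}(\mathcal{X}^2) \to \mathcal{K}^s_\epsilon(\mathcal{X}^2)$ unitary.

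The Hilbert space structure and the density of the smooth subspace are then automatic: the norm \eqref{2.20} on $\mathcal{H}^s_{\theta,\epsilon}(\mathcal{X}^2)$ is transported from the norm on $\mathcal{K}^s_\epsilon(\mathcal{X}^2)$, which is Hilbert by Lemma \ref{L.2.10}, and $T_\theta^{-1}[\mathscr{S}(\mathcal{X}\times\mathbb{T})]$ is dense in $\mathcal{H}^s_{\theta,\epsilon}(\mathcal{X}^2)$ because $\mathscr{S}(\mathcal{X}\times\mathbb{T})$ is dense in $\mathcal{K}^s_\epsilon(\mathcal{X}^2)$ by Lemma \ref{L.2.10}. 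The only genuinely delicate point is the commutation $T_\theta \widetilde{Q}_{s,\epsilon} = \widetilde{Q}_{s,\epsilon} T_\theta$, whose verification rests entirely on the fact that after conjugation by $\boldsymbol{\psi}^*$ the phase factor $e^{i\langle\theta, x-y\rangle}$ becomes a function of the second variable alone; all remaining steps are bookkeeping.
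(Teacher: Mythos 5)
Your proof is correct, and the structure mirrors the paper's argument (quasi-periodicity transfer, commutation with $\widetilde{Q}_{s,\epsilon}$, then the chain of equivalences). The one point where you organize things differently is the commutation $\widetilde{Q}_{s,\epsilon}T_\theta = T_\theta\widetilde{Q}_{s,\epsilon}$: you conjugate by $\boldsymbol{\psi}^*$ to exhibit $T_\theta$ as $\boldsymbol{\psi}^*(\id\otimes M_\theta)\boldsymbol{\psi}^*$, at which point the commutation with $\widetilde{Q}_{s,\epsilon} = \boldsymbol{\psi}^*(Q_{s,\epsilon}\otimes\id)\boldsymbol{\psi}^*$ is a tautological tensor-product fact. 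The paper instead reads the commutation off the explicit kernel formula \eqref{1.3} applied to $\widetilde{P}_\epsilon$, observing that $(x+\tilde{y}-y)-\tilde{y}=x-y$ turns the phase factor into something pullable out of the integral, and then asserts the same for $\widetilde{Q}_{s,\epsilon}$. Your route is cleaner for $\widetilde{Q}_{s,\epsilon}$ in particular because it uses nothing beyond its definition as a conjugated tensor product; on the other hand, the paper's route gives as an explicit by-product the relation $\widetilde{P}_\epsilon T_\theta = T_\theta \widetilde{P}_\epsilon$ (recorded as \eqref{2.21}), which it reuses later (e.g.\ \eqref{4.22}), whereas with your argument one would need to make a separate appeal to \eqref{1.4} together with the fact that the operator family $\mathfrak{Op}^{A_\epsilon}\left(\big((\id\otimes\tau_y\otimes\id)p_\epsilon\big)^\circ\right)$ acts only in the first variable. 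Both arguments are really the same observation — the phase $e^{i\langle\theta,x-y\rangle}$ becomes a function of the second variable alone after conjugation by $\boldsymbol{\psi}^*$ — but your packaging makes the mechanism more transparent for $\widetilde{Q}_{s,\epsilon}$.
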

\begin{proof}
 Let us prove first that for any $\theta\in\mathcal{X}^*$ we have the equality:
\begin{equation}\label{2.21}
 \widetilde{P}_\epsilon T_\theta\ =\ T_\theta\widetilde{P}_\epsilon,
\qquad\text{on }\mathscr{S}^\prime\big(\mathcal{X}^2\big).
\end{equation}
It is clearly enough to prove it on $\mathscr{S}\big(\mathcal{X}^2\big)$; but in this case it results directly from \eqref{1.3} because $(x+\tilde{y}-y)-\tilde{y}=x-y$.

Then the following equality also follows
\begin{equation}\label{2.22}
 \widetilde{Q}_{s,\epsilon}T_\theta\ =\ T_\theta\widetilde{Q}_{s,\epsilon},
\qquad\text{on }\mathscr{S}^\prime\big(\mathcal{X}^2\big).
\end{equation}
We notice further that $T_\theta$ takes the space $\mathscr{S}^\prime_\theta\big(\mathcal{X}^2\big)$ into the space $\mathscr{S}^\prime_\Gamma\big(\mathcal{X}^2\big)$, while the operator $\widetilde{Q}_{s,\epsilon}$ leaves invariant both spaces $\mathscr{S}^\prime_\theta\big(\mathcal{X}^2\big)$ and $\mathscr{S}^\prime_\Gamma\big(\mathcal{X}^2\big)$.

For $u\in\mathscr{S}^\prime_\theta\big(\mathcal{X}^2\big)$ we have the equivalence relations:
$$
u\in\mathcal{H}^s_{\theta,\epsilon}\big(\mathcal{X}^2\big)\Leftrightarrow\widetilde{Q}_{s,\epsilon}u\in L^2(\mathcal{X}\times E)\Leftrightarrow T_\theta\widetilde{Q}_{s,\epsilon}u\in L^2(\mathcal{X}\times\mathbb{T})\Leftrightarrow\widetilde{Q}_{s,\epsilon}T_\theta u\in L^2(\mathcal{X}\times\mathbb{T})\Leftrightarrow
$$
$$
\Leftrightarrow T_\theta u\in\mathcal{K}^s_\epsilon\big(\mathcal{X}^2\big)
$$
and the equality $\left\|T_\theta u\right\|_{\mathcal{K}^s_\epsilon}=\|u\|_{\mathcal{H}^s_{\theta,\epsilon}}$.

The last statement is obvious since Lemma \ref{L.2.10} implies that $\mathcal{K}^s_\epsilon\big(\mathcal{X}^2\big)$ is a Hilbert space having $\mathscr{S}\big(\mathcal{X}\times\mathbb{T}\big)$ as a dense subspace.
\end{proof}

\begin{lemma}\label{L.2.12}
 For any $s\in\mathbb{R}$ we have that $\widetilde{P}_\epsilon\in\mathbb{B}\Big(\mathcal{K}^{s+m}_\epsilon\big(\mathcal{X}^2\big);\mathcal{K}^{s}_\epsilon\big(\mathcal{X}^2\big)\Big)$ uniformly for $\epsilon\in[-\epsilon_0,\epsilon_0]$.
\end{lemma}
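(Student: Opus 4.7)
The plan is to mimic the proof of Lemma \ref{L.1.16}, but integrating over $E$ instead of $\mathcal{X}$, and keeping track of the $\Gamma$-periodicity in the second variable which is what distinguishes the $\mathcal{K}^s_\epsilon$ spaces from $\widetilde{\mathcal{H}}^s_{A_\epsilon}$.

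First I would transfer the problem through the unitary map $\boldsymbol{\psi}^*: \mathcal{K}^s_\epsilon(\mathcal{X}^2) \to \mathcal{H}^s_{A_\epsilon}(\mathcal{X}) \otimes L^2(\mathbb{T})$ from Lemma \ref{L.2.10}. Using the conjugation formula \eqref{1.4}, the statement reduces to showing that the operator
\[
\mathscr{S}(\mathcal{X} \times \mathbb{T}) \ni u \longmapsto \Bigl(\, (x,y) \mapsto \bigl[\mathfrak{Op}^{A_\epsilon}\bigl(((\id\otimes\tau_y\otimes\id)p_\epsilon)^\circ\bigr)\,u(\cdot,y)\bigr](x)\, \Bigr)
\]
extends continuously from $\mathcal{H}^{s+m}_{A_\epsilon}(\mathcal{X}) \otimes L^2(\mathbb{T})$ into $\mathcal{H}^{s}_{A_\epsilon}(\mathcal{X}) \otimes L^2(\mathbb{T})$, uniformly in $\epsilon \in [-\epsilon_0,\epsilon_0]$.

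The key observation is that $\bigl((\id\otimes\tau_y\otimes\id)p_\epsilon\bigr)^\circ(x,\xi) = p_\epsilon(x, x-y, \xi)$, and by Hypothesis H.2 together with H.6 (the $\Gamma$-periodicity in the second slot), the family $\{p_\epsilon(\cdot, \cdot-y, \cdot)\}_{(y,\epsilon) \in \mathcal{X}\times[-\epsilon_0,\epsilon_0]}$ is a bounded subset of $S^m_1(\Xi)$ and is moreover $\Gamma$-periodic in the parameter $y$. The continuity of magnetic pseudodifferential operators between magnetic Sobolev spaces established in \cite{IMP1} then provides a constant $C > 0$, independent of $y$ and $\epsilon$, such that
\[
\left\| \mathfrak{Op}^{A_\epsilon}\bigl(((\id\otimes\tau_y\otimes\id)p_\epsilon)^\circ\bigr)\,v \right\|_{\mathcal{H}^s_{A_\epsilon}(\mathcal{X})}^2 \,\leq\, C^2\,\|v\|_{\mathcal{H}^{s+m}_{A_\epsilon}(\mathcal{X})}^2,\quad \forall v \in \mathcal{H}^{s+m}_{A_\epsilon}(\mathcal{X}).
\]

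Applying this pointwise in $y$ to $v = (\boldsymbol{\psi}^* u)(\cdot, y)$ and integrating over $y \in E$ (in contrast to Lemma \ref{L.1.16} where the integration is over all of $\mathcal{X}$) yields the required bound in the $\mathcal{H}^s_{A_\epsilon}(\mathcal{X}) \otimes L^2(\mathbb{T})$ norm. Finally, I would invoke the fact observed in the proof of Lemma \ref{L.2.5} (from the $\Gamma$-periodicity of $p_\epsilon$ and formula \eqref{1.3}) that $\widetilde{P}_\epsilon$ commutes with $\id\otimes\tau_{-\gamma}$ for every $\gamma\in\Gamma$, so it leaves $\mathscr{S}^\prime_\Gamma(\mathcal{X}^2)$ invariant and therefore maps $\mathcal{K}^{s+m}_\epsilon$ into $\mathcal{K}^s_\epsilon$.

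I expect no serious obstacle: the argument is a direct adaptation of Lemma \ref{L.1.16} via the unitary conjugation of Lemma \ref{L.2.10}, the only delicate point being a careful bookkeeping of the $\Gamma$-periodicity of the symbol family and of the image distribution, which is precisely what the hypothesis H.6 ensures.
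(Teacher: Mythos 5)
Your proposal is correct and follows essentially the same route as the paper: conjugate by the unitary $\boldsymbol{\psi}^*$ from Lemma \ref{L.2.10}, use formula \eqref{1.4} to exhibit the operator as a magnetic $\Psi$DO in $x$ with $y$ as parameter, invoke the uniform (in $y$ and $\epsilon$) magnetic Sobolev continuity from \cite{IMP1} for the bounded family $\{p_\epsilon(\cdot,\cdot-y,\cdot)\}$, and integrate over the torus. The paper handles the periodicity of the image by showing directly that the oscillatory integral \eqref{2.24} maps $\mathscr{S}(\mathcal{X}\times\mathbb{T})$ into itself, whereas you appeal to the commutation with $\id\otimes\tau_{-\gamma}$; both are valid and amount to the same observation.
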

\begin{proof}
 We have seen that:
\begin{itemize}
 \item $\mathscr{S}\big(\mathcal{X}\times\mathbb{T}\big)$ is a dense subspace of $\mathcal{K}^{s+m}_\epsilon\big(\mathcal{X}^2\big)$,
\item $\boldsymbol{\psi}^*:\mathcal{K}^{s}_\epsilon\big(\mathcal{X}^2\big)\rightarrow\mathcal{H}^s_{A_\epsilon}\big(\mathcal{X}\big)\otimes L^2(\mathbb{T})$ is a unitary operator that leaves $\mathscr{S}\big(\mathcal{X}\times\mathbb{T}\big)$ invariant.
\end{itemize}
It is thus enough to prove that $\forall s\in\mathbb{R}$, $\exists C_s>0$ such that:
\begin{equation}\label{2.23}
 \left\|\boldsymbol{\psi}^*\widetilde{P}_\epsilon\boldsymbol{\psi}^*u\right\|_{\mathcal{H}^s_{A_\epsilon}\big(\mathcal{X}\big)\otimes L^2(\mathbb{T})}\ \leq\ C_s\|u\|_{\mathcal{H}^{s+m}_{A_\epsilon}\big(\mathcal{X}\big)\otimes L^2(\mathbb{T})},\quad\forall u\in\mathscr{S}\big(\mathcal{X}\times\mathbb{T}\big),\ \forall\epsilon\in[-\epsilon_0,\epsilon_0].
\end{equation}
Formula \eqref{1.4} in Lemma \ref{L.1.2} implies the equality:
\begin{equation}\label{2.24}
 \Big(\boldsymbol{\psi}^*\widetilde{P}_\epsilon\boldsymbol{\psi}^*u\Big)(x,y)=(2\pi)^{-d}\int_{\mathcal{X}}\int_{\mathcal{X}^*}e^{i<\eta,x-\tilde{y}>}\omega_{A_\epsilon}(x,\tilde{y})\,p_\epsilon\Big(\frac{x+\tilde{y}}{2},\frac{x+\tilde{y}}{2}-y,\eta\Big)\,u(\tilde{y},y)\,d\tilde{y}\,d\eta,\quad\forall u\in\mathscr{S}\big(\mathcal{X}^2\big).
\end{equation}
But let us notice that the integral in \eqref{2.24} is well defined for any $u\in\mathscr{S}\big(\mathcal{X}\times\mathbb{T}\big)$ so that we can extend it to such functions (considered as periodic smooth functions in the second variable) either by duality and a computation in $\mathscr{S}^\prime\big(\mathcal{X}^2\big)$ or by approximating with functions from $\mathscr{S}\big(\mathcal{X}^2\big)$ with respect to the topology induced from $\mathscr{S}^\prime\big(\mathcal{X}^2\big)$. By the same time, the properties of the oscillating integral defining the right side of \eqref{2.24} allow to conclude that $\boldsymbol{\psi}^*\widetilde{P}_\epsilon\boldsymbol{\psi}^*\in\mathbb{B}\Big(\mathscr{S}\big(\mathcal{X}\times\mathbb{T}\big);\mathscr{S}\big(\mathcal{X}\times\mathbb{T}\big)\Big)$. Considering now $y\in\mathcal{X}$ in \eqref{2.24} as a parameter, the usual properties of magnetic pseudodifferential operators (see \cite{IMP1}) imply that $\forall\epsilon\in[-\epsilon_0,\epsilon_0]$, $\exists C_s>0$ such that:
\begin{equation}\label{2.25}
 \left\|\boldsymbol{\psi}^*\widetilde{P}_\epsilon\boldsymbol{\psi}^*u(.,y)\right\|_{\mathcal{H}^s_{A_\epsilon}\big(\mathcal{X}\big)}^2\ \leq\ C_s^2\|u(.,y)\|_{\mathcal{H}^{s+m}_{A_\epsilon}\big(\mathcal{X}\big)}^2,\qquad\forall(y,\epsilon)\in\mathcal{X}\times[-\epsilon_0,\epsilon_0],\ \forall u\in\mathscr{S}\big(\mathcal{X}\times\mathbb{T}\big).
\end{equation}
Integrating the above inequality for $y\in\mathbb{T}$ we obtain \eqref{2.23}.
\end{proof}

\begin{proposition}\label{P.2.13}
 $\widetilde{P}_\epsilon$ is a self-adjoint operator in $\mathcal{K}\big(\mathcal{X}^2\big)\equiv L^2(\mathcal{X}\times\mathbb{T})$ with domain $\mathcal{K}^{m}_\epsilon\big(\mathcal{X}^2\big)$; it is essentially self-adjoint on $\mathscr{S}\big(\mathcal{X}\times\mathbb{T}\big)$.
\end{proposition}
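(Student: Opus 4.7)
The approach parallels that of Proposition \ref{P.1.18}, with the torus $\mathbb{T}$ replacing one copy of $\mathcal{X}$. The plan is to transfer the problem, via the unitary $\boldsymbol{\psi}^*$ of Lemma \ref{L.2.10}, to the operator $\mathfrak{R}_\epsilon:=\boldsymbol{\psi}^*\widetilde{P}_\epsilon\boldsymbol{\psi}^*$ acting in $L^2(\mathcal{X})\otimes L^2(\mathbb{T})$ with domain $\mathcal{H}^m_{A_\epsilon}(\mathcal{X})\otimes L^2(\mathbb{T})$, recognize $\mathfrak{R}_\epsilon$ via formula \eqref{1.4} as a magnetic pseudodifferential operator on $\mathcal{X}$ with symbol $r^y_\epsilon(x,\xi):=p_\epsilon(x,x-y,\xi)$ and parameter $y\in\mathbb{T}$, and then repeat the parametrix argument of Proposition \ref{P.1.18} in this parameter-dependent setting. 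The $\Gamma$-periodicity of $p_\epsilon$ in its second variable (hypothesis H.6) is precisely what makes $r^y_\epsilon$ well-defined for $y\in\mathbb{T}$, and hypotheses H.2 and H.5 ensure that $\{r^y_\epsilon\}_{y\in\mathbb{T}}$ is a bounded and uniformly elliptic family in $S^m_1(\Xi)$.

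First I would establish well-definedness and symmetry. Lemma \ref{L.2.12} with $s=0$ gives $\widetilde{P}_\epsilon\in\mathbb{B}\bigl(\mathcal{K}^m_\epsilon(\mathcal{X}^2);L^2(\mathcal{X}\times\mathbb{T})\bigr)$, and $\mathscr{S}(\mathcal{X}\times\mathbb{T})$ is dense in $\mathcal{K}^m_\epsilon(\mathcal{X}^2)$ by Lemma \ref{L.2.10}. Since each $r^y_\epsilon$ is real-valued, each operator $\mathfrak{Op}^{A_\epsilon}(r^y_\epsilon)$ is symmetric on $\mathscr{S}(\mathcal{X})$, so applying formula \eqref{1.4} and Fubini over $\mathbb{T}$ yields the symmetry of $\widetilde{P}_\epsilon$ on $\mathscr{S}(\mathcal{X}\times\mathbb{T})$; this extends to $\mathcal{K}^m_\epsilon(\mathcal{X}^2)$ by density together with the continuity from Lemma \ref{L.2.12}.

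The main step is self-adjointness. For $v\in\mathcal{D}(\mathfrak{R}_\epsilon^*)\subset L^2(\mathcal{X}\times\mathbb{T})$ one produces, in the usual way, some $f\in L^2(\mathcal{X}\times\mathbb{T})$ such that $\mathfrak{R}_\epsilon v=f$ in $\mathscr{S}^\prime(\mathcal{X}\times\mathbb{T})$; it then remains to invoke the analog of the elliptic regularity equivalence \eqref{1.15} with parameter $y$ in the compact quotient $\mathbb{T}$ rather than in $\mathcal{X}$: a parametrix construction as in Remark \ref{R.1.17} for the uniformly elliptic parameter-dependent family $\{r^y_\epsilon\}_{y\in\mathbb{T}}$ gives
\[
\mathfrak{R}_\epsilon v\in L^2(\mathcal{X}\times\mathbb{T})\quad\Longrightarrow\quad v\in\mathcal{H}^m_{A_\epsilon}(\mathcal{X})\otimes L^2(\mathbb{T})=\mathcal{D}(\mathfrak{R}_\epsilon),
\]
so $\mathfrak{R}_\epsilon^*=\mathfrak{R}_\epsilon$; transferring back through $\boldsymbol{\psi}^*$ delivers the self-adjointness of $\widetilde{P}_\epsilon$ on $\mathcal{K}^m_\epsilon(\mathcal{X}^2)$. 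Essential self-adjointness on $\mathscr{S}(\mathcal{X}\times\mathbb{T})$ then follows in the standard way from its density in $\mathcal{K}^m_\epsilon(\mathcal{X}^2)$ together with Lemma \ref{L.2.12}. The one technical point that requires care is precisely the verification that Remark \ref{R.1.17} transcribes with the parameter in the compact torus instead of in $\mathcal{X}$: because $r^y_\epsilon$ is $\Gamma$-periodic in $y$ and $\mathbb{T}$ has finite measure, the magnetic Sobolev space estimates of \cite{IMP1} apply verbatim after integration in $y\in E$, but this is the only place where the compact nature of the parameter space plays a role and where one must not merely cite the earlier proof.
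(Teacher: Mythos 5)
Your proposal follows essentially the same route as the paper's proof: conjugate by $\boldsymbol{\psi}^*$ via Lemma \ref{L.2.10}, recognize $\boldsymbol{\psi}^*\widetilde{P}_\epsilon\boldsymbol{\psi}^*$ through formula \eqref{1.4}/\eqref{2.24} as a parameter-dependent magnetic pseudodifferential operator, establish symmetry on $\mathscr{S}(\mathcal{X}\times\mathbb{T})$, and close via the elliptic-regularity equivalence of Remark \ref{R.1.17} transcribed with parameter in $\mathbb{T}$. Your observation that the transcription of Remark \ref{R.1.17} to the compact parameter space is the one step requiring genuine verification (using $\Gamma$-periodicity of $p_\epsilon$ in $y$ and finite measure of $\mathbb{T}$) is precisely the point the paper asserts without elaboration, so your write-up is in fact slightly more explicit than the original.
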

\begin{proof}
 Considering Lemma \ref{L.2.10} that implies that for any $s\in\mathbb{R}$ the operator $\boldsymbol{\psi}^*:\mathcal{K}^{s}_\epsilon\big(\mathcal{X}^2\big)\rightarrow\mathcal{H}^s_{A_\epsilon}\big(\mathcal{X}\big)\otimes L^2(\mathbb{T})$ is unitary and leaves invariant the subspace $\mathscr{S}\big(\mathcal{X}\times\mathbb{T}\big)$, it will be enough to prove that $\boldsymbol{\psi}^*\widetilde{P}_\epsilon\boldsymbol{\psi}^*$ is self-adjoint in $L^2(\mathcal{X}\times\mathbb{T})$ with domain $\mathcal{H}^m_{A_\epsilon}\big(\mathcal{X}\big)\otimes L^2(\mathbb{T})$ and essentially self-adjoint on $\mathscr{S}\big(\mathcal{X}\times\mathbb{T}\big)$.

Due to the arguments in the proof of Lemma \ref{L.2.12} we know that $\boldsymbol{\psi}^*\widetilde{P}_\epsilon\boldsymbol{\psi}^*$ is well defined in $L^2(\mathcal{X}\times\mathbb{T})$ with domain $\mathcal{H}^m_{A_\epsilon}\big(\mathcal{X}\big)\otimes L^2(\mathbb{T})$ and on $\mathscr{S}\big(\mathcal{X}\times\mathbb{T}\big)$ is defined by the equality \eqref{2.24}. A straightforward check using \eqref{2.24} shows that the operator $\boldsymbol{\psi}^*\widetilde{P}_\epsilon\boldsymbol{\psi}^*$ is symmetric on $\mathscr{S}\big(\mathcal{X}\times\mathbb{T}\big)$, that is a dense subspace of $\mathcal{H}^m_{A_\epsilon}\big(\mathcal{X}\big)\otimes L^2(\mathbb{T})$. As we know that $\boldsymbol{\psi}^*\widetilde{P}_\epsilon
\boldsymbol{\psi}^*\in\mathbb{B}\big(\mathcal{H}^m_{A_\epsilon}\big(\mathcal{X}\big)\otimes L^2(\mathbb{T});L^2(\mathcal{X}\times\mathbb{T})\big)$, it follows that $\boldsymbol{\psi}^*\widetilde{P}_\epsilon\boldsymbol{\psi}^*$ is symmetric on its domain too. In order to prove its self-adjointness let us fix some $v\in\mathcal{D}\big([\boldsymbol{\psi}^*\widetilde{P}_\epsilon\boldsymbol{\psi}^*]^*\big)$; it follows that $v\in L^2(\mathcal{X}\times\mathbb{T})$ and it exists some $f\in L^2(\mathcal{X}\times\mathbb{T})$ such that we have the equality
$$
\Big(\boldsymbol{\psi}^*\widetilde{P}_\epsilon\boldsymbol{\psi}^*u,v\Big)_{L^2(\mathcal{X}\times\mathbb{T})}\ =\ (u,f)_{L^2(\mathcal{X}\times\mathbb{T})},\qquad\forall u\in\mathscr{S}\big(\mathcal{X}\times\mathbb{T}\big).
$$
Thus $\boldsymbol{\psi}^*\widetilde{P}_\epsilon\boldsymbol{\psi}^*v=f$ as elements of $\mathscr{S}^\prime\big(\mathcal{X}\times\mathbb{T}\big)\equiv\mathscr{S}^\prime_\Gamma\big(\mathcal{X}^2\big)$. We notice that the Remark \ref{R.1.17} remains true if we replace $\mathcal{X}^2$ by $\mathcal{X}\times\mathbb{T}$ and thus we have that $v\in\mathcal{H}^m_{A_\epsilon}\big(\mathcal{X}\big)\otimes L^2(\mathbb{T})$ and thus $v$ belongs to the domain of $\boldsymbol{\psi}^*\widetilde{P}_\epsilon\boldsymbol{\psi}^*$.

The last statement clearly follows from the above results.
\end{proof}

We shall present now a connection between the operators defined in the Propositions \ref{P.1.18} and \ref{P.2.13}.
\begin{proposition}\label{P.2.14}
 Considering $\widetilde{P}_\epsilon$ as operator acting in $\mathscr{S}^\prime\big(\mathcal{X}^2\big)$ we shall denote by $\widetilde{P}_\epsilon^\prime$ the self-adjoint operator that it induces in $L^2(\mathcal{X}^2)$ with domain $\widetilde{H}^m_{A_\epsilon}\big(\mathcal{X}^2\big)$ (as in Proposition \ref{P.1.18}) and by $\widetilde{P}_\epsilon^{\prime\prime}$ the self-adjoint operator that it induces in $L^2(\mathcal{X}\times\mathbb{T})$ with domain $\mathcal{K}^{m}_\epsilon\big(\mathcal{X}^2\big)$ (as in Proposition \ref{P.2.13}). Then we have the equality:
\begin{equation}\label{2.26}
 \sigma\big(\widetilde{P}_\epsilon^\prime\big)\ =\ \sigma\big(\widetilde{P}_\epsilon^{\prime\prime}\big).
\end{equation}
\end{proposition}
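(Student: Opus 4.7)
The strategy is to use the Floquet--Bloch-type direct integral decomposition encoded in the identification $\mathcal{U}_\Gamma:L^2(\mathcal{X}^2)\to\mathscr{F}_0(\mathcal{X}^2\times\mathcal{X}^*)$ of Lemma \ref{L.2.3}, in conjunction with the pointwise twists $T_\theta$ of Lemma \ref{L.2.11}, in order to realize $\widetilde{P}_\epsilon^\prime$ as a direct integral all of whose fibers are unitarily equivalent to the same operator $\widetilde{P}_\epsilon^{\prime\prime}$.

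First I would invoke Lemmas \ref{L.2.5} and \ref{L.2.7}: together they show that $\mathcal{U}_\Gamma$ provides a unitary equivalence between $\widetilde{P}_\epsilon^\prime$ (acting in $L^2(\mathcal{X}^2)$ with domain $\widetilde{\mathcal{H}}^m_{A_\epsilon}(\mathcal{X}^2)$) and the operator $\widetilde{P}_{\epsilon,\Gamma}$ acting in $\mathscr{F}_0(\mathcal{X}^2\times\mathcal{X}^*)$ with domain $\mathscr{F}_{m,\epsilon}(\mathcal{X}^2\times\mathcal{X}^*)$, so in particular $\sigma(\widetilde{P}_\epsilon^\prime)=\sigma(\widetilde{P}_{\epsilon,\Gamma})$. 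Next, the quasi-periodicity in $y$ and the $\Gamma^*$-periodicity in $\theta$ that define $\mathscr{F}_0$, combined with the norm \eqref{2.1}, exhibit this Hilbert space as the direct integral
\begin{equation*}
\mathscr{F}_0(\mathcal{X}^2\times\mathcal{X}^*)\ \cong\ \int^{\oplus}_{E^*}\mathcal{H}_\theta(\mathcal{X}^2)\,\frac{d\theta}{|E^*|},
\end{equation*}
the decomposition map being the restriction $v\mapsto(v(\cdot,\cdot,\theta))_{\theta\in E^*}$; the analogous statement at the $\mathcal{H}^m$-scale identifies $\mathscr{F}_{m,\epsilon}(\mathcal{X}^2\times\mathcal{X}^*)$ with $\int^\oplus_{E^*}\mathcal{H}^m_{\theta,\epsilon}(\mathcal{X}^2)\,d\theta/|E^*|$ (using that $\widetilde{Q}_{m,\epsilon,\Gamma}$ acts trivially in $\theta$).

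Since $\widetilde{P}_{\epsilon,\Gamma}=\widetilde{P}_\epsilon\otimes\id$ acts trivially on the $\theta$-variable, it is diagonal in this direct integral, with fiber at $\theta$ equal to the restriction $\widetilde{P}_\epsilon^{(\theta)}$ of $\widetilde{P}_\epsilon$ to $\mathcal{H}_\theta(\mathcal{X}^2)$ with domain $\mathcal{H}^m_{\theta,\epsilon}(\mathcal{X}^2)$; the invariance of these fiber spaces follows from the commutation $(\id\otimes\tau_{-\gamma})\widetilde{P}_\epsilon=\widetilde{P}_\epsilon(\id\otimes\tau_{-\gamma})$ already used in the proof of Lemma \ref{L.2.5} and from the analogue \eqref{2.22} for $\widetilde{Q}_{m,\epsilon}$. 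The key step is then to apply Lemma \ref{L.2.11}: for each $\theta\in E^*$, the twist $T_\theta$ is a unitary $\mathcal{H}^m_{\theta,\epsilon}(\mathcal{X}^2)\to\mathcal{K}^m_\epsilon(\mathcal{X}^2)$, and by \eqref{2.21} it conjugates $\widetilde{P}_\epsilon^{(\theta)}$ into $\widetilde{P}_\epsilon^{\prime\prime}$. Hence every fiber of the direct integral is unitarily equivalent to the same self-adjoint operator $\widetilde{P}_\epsilon^{\prime\prime}$ on $L^2(\mathcal{X}\times\mathbb{T})$.

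Invoking the standard identity $\sigma\bigl(\int^\oplus_{E^*}A_\theta\,d\theta\bigr)=\sigma(A)$ valid when the fibers $A_\theta$ are all unitarily equivalent to a fixed self-adjoint $A$ via a measurable field of unitaries over a base of positive measure, one concludes
\begin{equation*}
\sigma(\widetilde{P}_\epsilon^\prime)\ =\ \sigma(\widetilde{P}_{\epsilon,\Gamma})\ =\ \sigma(\widetilde{P}_\epsilon^{\prime\prime}),
\end{equation*}
which is \eqref{2.26}. The only genuine technical point is the measurability of the field $\theta\mapsto T_\theta$, but this is immediate since $T_\theta$ is multiplication by $e^{i\langle\theta,x-y\rangle}$, depending smoothly on $\theta$; no deeper obstacle is anticipated.
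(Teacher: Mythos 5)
Your proposal is correct and follows essentially the same path as the paper's proof: passing to $\widetilde{P}_{\epsilon,\Gamma}$ via $\mathcal{U}_\Gamma$, writing it as a direct integral $\int^\oplus_{\mathbb{T}^{*,d}}\widetilde{P}_{\epsilon,\theta}\,d\theta$ whose fibers are conjugated to $\widetilde{P}_\epsilon^{\prime\prime}$ by the unitaries $T_\theta$, and then invoking the spectral identity for constant-spectrum direct integrals (the paper cites Reed--Simon Theorem XIII.85(d) where you state the special case directly). Your explicit note on measurability of $\theta\mapsto T_\theta$ corresponds to the paper's check that $\theta\mapsto(\widetilde{P}_{\epsilon,\theta}+i)^{-1}$ is measurable, so there is no gap.
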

\begin{proof}
 From the arguments in the proof of Lemma \ref{L.2.8} we deduce that $\mathcal{U}_\Gamma\widetilde{P}_\epsilon^\prime\mathcal{U}_\Gamma^{-1}=\widetilde{P}_{\epsilon,\Gamma}:=\widetilde{P}_\epsilon\otimes\id$, that is a self-adjoint operator in $\mathscr{F}_0\big(\mathcal{X}^2\times\mathcal{X}^*\big)$ with domain $\mathscr{F}_{m,\epsilon}\big(\mathcal{X}^2\times\mathcal{X}^*\big)$.

On the other side from Lemma \ref{L.2.11} (and the arguments in its proof) we deduce that for any $\theta\in\mathcal{X}^*$ the operator $\widetilde{P}_{\epsilon,\theta}:=T_\theta^{-1}\widetilde{P}_\epsilon^{\prime\prime}T_\theta$ is the self-adjoint operator associated to $ \widetilde{P}_\epsilon$ in $\mathcal{H}_\theta\big(\mathcal{X}^2\big)$, having the domain $\mathcal{H}^m_{\theta,\epsilon}\big(\mathcal{X}^2\big)$.

We shall consider the spaces $\mathscr{F}_0\big(\mathcal{X}^2\times\mathcal{X}^*\big)$ and $\mathscr{F}_{m,\epsilon}\big(\mathcal{X}^2\times\mathcal{X}^*\big)$ as direct integrals of Hilbert spaces over the dual torus; more precisely:
$$
\mathscr{F}_0\big(\mathcal{X}^2\times\mathcal{X}^*\big)\cong\int_{\mathbb{T}^{*,d}}^\oplus\mathcal{H}_\theta\big(\mathcal{X}^2\big)d\theta,\qquad\mathscr{F}_{m,\epsilon}\big(\mathcal{X}^2\times\mathcal{X}^*\big)\cong\int_{\mathbb{T}^{*,d}}^\oplus\mathcal{H}^m_{\theta,\epsilon}\big(\mathcal{X}^2\big)d\theta.
$$

Taking into account that:
$$
\big(\widetilde{P}_{\epsilon,\Gamma}u\big)(x,y,\theta)\ =\ \big((\widetilde{P}_{\epsilon,\theta}u)(.,.,\theta)\big)(x,y),\quad\forall u\in\mathscr{F}_{m,\epsilon}\big(\mathcal{X}^2\times\mathcal{X}^*\big),
$$
and the function: $\mathbb{T}^{*,d}\ni\theta\mapsto\big(\widetilde{P}_{\epsilon,\theta}+i\big)^{-1}\in\mathbb{B}\big(\mathcal{H}_\theta;\mathcal{H}_\theta\big)$ is measurable, we can write:
\begin{equation}\label{2.27}
 \widetilde{P}_{\epsilon,\Gamma}\ =\ \int_{\mathbb{T}^{*,d}}^\oplus\widetilde{P}_{\epsilon,\theta}\,d\theta.
\end{equation}
We can now apply Theorem XIII.85 (d) from \cite{RS-4} in order to conclude that we have the equivalence:
\begin{equation}\label{2.28}
 \lambda\in\sigma\big(\widetilde{P}_{\epsilon,\Gamma}\big)\ \Longleftrightarrow\ \forall\delta>0,\ \left|\left\{\theta\in\mathbb{T}^{*,d}\,\mid\,\sigma\big(\widetilde{P}_{\epsilon,\theta}\big)\cap(\lambda-\delta,\lambda+\delta)\ne\emptyset\right\}\right|\,>0.
\end{equation}
Let us notice that $\sigma\big(\widetilde{P}_{\epsilon,\theta}\big)$ is independent of $\theta\in\mathbb{T}^{*,d}$ and deduce that $\sigma\big(\widetilde{P}_{\epsilon,\Gamma}\big)=\sigma\big(\widetilde{P}_\epsilon^{\prime\prime}\big)$. But the conclusion of the first paragraph in this proof implies that $\sigma\big(\widetilde{P}_{\epsilon,\Gamma}\big)=\sigma\big(\widetilde{P}_\epsilon^{\prime}\big)$ and we finish the proof.
\end{proof}

We shall end up this section with a result giving a connection between the spaces: $\mathcal{K}^{s}_\epsilon\big(\mathcal{X}^2\big)$, $\mathscr{S}\big(\mathcal{X};\mathcal{H}^s(\mathbb{T})\big)$ and $\mathscr{S}^\prime\big(\mathcal{X};\mathcal{H}^s(\mathbb{T})\big)$. We start with a technical Lemma.
\begin{lemma}\label{L.2.15}
Let $B$ be a magnetic field with components of class $BC^\infty(\mathcal{X})$ and $A$ an associated vector potential with components of class $C^\infty_{\text{\sf pol}}$. Let us consider a symbol $q\in S^s_1(\Xi)$ for some $s\in\mathbb{R}$. We denote by $Q:=\mathfrak{Op}^A(q)$, $Q^\prime:=Q\otimes\id$ and $\widetilde{Q}:=\boldsymbol{\psi}^*Q^\prime
\boldsymbol{\psi}^*$, where $\boldsymbol{\psi}$ is defined by \eqref{1.1}. Then we have that $\widetilde{Q}\in\mathbb{B}\big(\mathscr{S}\big(\mathcal{X};\mathcal{H}^s(\mathbb{T})\big);\mathscr{S}\big(\mathcal{X};L^2(\mathbb{T})\big)\big)$ uniformly for $q$ varying in bounded subsets of $S^s_1(\Xi)$ and for $B$ varying in bounded subsets of $BC^\infty(\mathcal{X})$.
\end{lemma}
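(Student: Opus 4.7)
The plan is to Fourier-expand in the periodic variable $y$, which turns $\widetilde{Q}$ into a diagonal family of scalar magnetic pseudodifferential operators on $\mathcal{X}$ indexed by $\gamma^*\in\Gamma^*$. Uniform symbol estimates coming from Peetre's inequality, combined with Parseval's identity on $\mathbb{T}$, then deliver the required continuity.

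First, a direct substitution into $\widetilde{Q}=\boldsymbol{\psi}^*(Q\otimes\id)\boldsymbol{\psi}^*$ using the formula for $Q=\mathfrak{Op}^A(q)$ and the definition of $\boldsymbol{\psi}$ from \eqref{1.1} gives
\begin{equation*}
(\widetilde{Q}u)(x,y)=(2\pi)^{-d}\iint_{\mathcal{X}\times\mathcal{X}^*}e^{i\langle\eta,x-\tilde{x}\rangle}\omega_A(x,\tilde{x})\,q\bigl(\tfrac{x+\tilde{x}}{2},\eta\bigr)\,u(\tilde{x},\tilde{x}-x+y)\,d\tilde{x}\,d\eta.
\end{equation*}
For $u\in\mathscr{S}(\mathcal{X}\times\mathbb{T})$, a dense subspace of $\mathscr{S}(\mathcal{X};\mathcal{H}^s(\mathbb{T}))$, with Fourier expansion $u(x,y)=\sum_{\gamma^*\in\Gamma^*}c_{\gamma^*}(x)e^{i\langle\gamma^*,y\rangle}$, the substitution $\eta\mapsto\eta+\gamma^*$ in the oscillatory integral yields the diagonal identity
\begin{equation*}
(\widetilde{Q}u)(x,y)=\sum_{\gamma^*\in\Gamma^*}\bigl(T_{\gamma^*}c_{\gamma^*}\bigr)(x)\,e^{i\langle\gamma^*,y\rangle},\qquad T_{\gamma^*}:=\mathfrak{Op}^A\bigl(q(\cdot,\cdot+\gamma^*)\bigr).
\end{equation*}
Equivalently, $T_{\gamma^*}=M_{-\gamma^*}\,Q\,M_{\gamma^*}$ where $M_{\gamma^*}$ is multiplication by $e^{i\langle\gamma^*,x\rangle}$; this is a gauge change from $A$ to $A-\gamma^*$ which leaves the magnetic field $B$ invariant.

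I would then use the $L^2$-based Schwartz seminorms $u\mapsto\|\langle x\rangle^N\langle D_x\rangle^k u\|_{L^2(\mathcal{X};E)}$, equivalent to the supremum seminorms defining $\mathscr{S}(\mathcal{X};E)$. Parseval in $y$ gives
\begin{equation*}
\|\langle x\rangle^N\langle D_x\rangle^k\widetilde{Q}u\|^2_{L^2(\mathcal{X};L^2(\mathbb{T}))}=|E|\sum_{\gamma^*\in\Gamma^*}\|\langle x\rangle^N\langle D_x\rangle^k T_{\gamma^*}c_{\gamma^*}\|^2_{L^2(\mathcal{X})},
\end{equation*}
and the proof reduces to the uniform bound
\begin{equation*}
\|\langle x\rangle^N\langle D_x\rangle^k T_{\gamma^*}f\|_{L^2(\mathcal{X})}\leq C\langle\gamma^*\rangle^s\,\|\langle x\rangle^{N'}\langle D_x\rangle^{k'}f\|_{L^2(\mathcal{X})},
\end{equation*}
uniform in $\gamma^*\in\Gamma^*$, in $q$ (within a bounded subset of $S^s_1(\Xi)$) and in $B$ (within a bounded subset of $BC^\infty(\mathcal{X})$). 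Peetre's inequality furnishes the symbol bound $|\partial_x^\alpha\partial_\eta^\beta q_{\gamma^*}(x,\eta)|\leq C_{\alpha,\beta}\langle\gamma^*\rangle^s\langle\eta\rangle^{|s|+|\beta|}$, so the (finitely many) symbol seminorms of $q_{\gamma^*}$ entering the magnetic Sobolev continuity estimate for $T_{\gamma^*}$ are each controlled by $C\langle\gamma^*\rangle^s$. The standard magnetic pseudodifferential calculus on Sobolev spaces (cf.\ \cite{IMP1}) then delivers the displayed estimate. Reinserting the bound into the Parseval identity produces exactly $C^2\|\langle x\rangle^{N'}\langle D_x\rangle^{k'}u\|^2_{L^2(\mathcal{X};\mathcal{H}^s(\mathbb{T}))}$, and density of $\mathscr{S}(\mathcal{X}\times\mathbb{T})$ in $\mathscr{S}(\mathcal{X};\mathcal{H}^s(\mathbb{T}))$ extends the bound to the entire Fréchet space.

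The principal technical hurdle is that $q_{\gamma^*}$ is \emph{not} uniformly of class $S^s_1(\Xi)$ as $\gamma^*$ varies: Peetre's inequality only yields $\langle\gamma^*\rangle^{s-|\beta|}\langle\eta\rangle^{||\beta|-s|}$ on $\partial_\eta^\beta q_{\gamma^*}$, and the $\langle\eta\rangle^{||\beta|-s|}$-growth cannot be absorbed into the classical symbol class $S^s_1$. The key is that the magnetic Sobolev continuity of $\mathfrak{Op}^A(q_{\gamma^*})$ depends on only finitely many symbol seminorms (with orders governed by $k,k',N,N'$), so truncating Peetre at $|\beta|\leq k+k'+|s|$ suffices to produce the uniform $\langle\gamma^*\rangle^s$ behaviour.
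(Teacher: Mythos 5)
Your approach — Fourier-expanding in the periodic variable $y$ so that $\widetilde{Q}$ becomes a $\Gamma^*$-indexed family of scalar magnetic operators $T_{\gamma^*}=\mathfrak{Op}^A(q(\cdot,\cdot+\gamma^*))$, and then using Parseval to collapse the $\mathcal{H}^s(\mathbb{T})$-norm to a weighted $\ell^2(\Gamma^*)$ sum — is genuinely different from the paper's. The paper instead Taylor-expands the factor $\omega_A\bigl(x,x-y+t\bigr)\,q\bigl(x+\tfrac{t-y}{2},\eta\bigr)\,u(x-y+t,\tilde{y})$ around $t=y$ (formulas (2.32)--(2.37)), which isolates the magnetic phase at a point where $\omega_A(x,x)=1$, reduces the leading terms to \emph{non-magnetic} torus pseudodifferential operators in $y$ with $x$ as a mere parameter, and then treats the remainder as a smoothing kernel. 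Your decomposition is diagonal in the dual lattice while theirs is a perturbation about the diagonal of $\mathcal{X}\times\mathcal{X}$; both are reasonable, but the Parseval step and the uniform $T_{\gamma^*}$-estimate are exactly where the two proofs part ways.

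The gap is in the uniform estimate for $T_{\gamma^*}$, and the citation to \cite{IMP1} does not close it. The continuity results of \cite{IMP1} are stated for symbols in H\"{o}rmander classes $S^m_\rho$ with $\rho\geq0$, with bounds expressed through the usual seminorms $\nu_{\alpha\beta}(p)=\sup\langle\eta\rangle^{-(m-\rho|\beta|)}\bigl|\partial^\alpha_x\partial^\beta_\eta p\bigr|$. The correct computation of these seminorms for $q_{\gamma^*}$ gives $\nu_{\alpha\beta}(q_{\gamma^*})\asymp\langle\gamma^*\rangle^{|s-|\beta||}$: one has $\sup_\eta\langle\eta\rangle^{-(s-|\beta|)}\langle\eta+\gamma^*\rangle^{s-|\beta|}\asymp\langle\gamma^*\rangle^{|\beta|-s}$ when $|\beta|>s$ (take $\eta=-\gamma^*$). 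This exponent exceeds $s$ as soon as $|\beta|>2s$, and the number of $\eta$-derivatives feeding into a weighted Schwartz estimate grows with the target weight order $l$, so the $\langle\gamma^*\rangle^s$ budget that the $\mathcal{H}^s(\mathbb{T})$-source allows is exceeded. Your replacement bound $|\partial^\alpha_x\partial^\beta_\eta q_{\gamma^*}|\leq C\langle\gamma^*\rangle^s\langle\eta\rangle^{|s|+|\beta|}$ is correct but places (the normalised) $q_{\gamma^*}$ in a $\rho=-1$ class, which is outside the scope of the results you quote; invoking the ``finitely many seminorms'' principle there is not something the quoted calculus supplies. To repair the step, the identity $T_{\gamma^*}=M_{-\gamma^*}QM_{\gamma^*}$ that you note is the right starting point, but you must keep $M_{\pm\gamma^*}$ and $\partial^\alpha_x$ together (not commute them brutally, which produces raw $|\gamma^*|^{|\alpha|}$ factors), exploit that $\partial^\alpha_x$ commutes with a Fourier multiplier while the commutator with $\langle x\rangle^N$ lowers the $\eta$-order, and track the magnetic phase $\omega_A$ through this — which is a bespoke argument of roughly the same length as the paper's Taylor expansion.
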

\begin{proof}
On $\mathscr{S}\big(\mathcal{X};\mathcal{H}^s(\mathbb{T})\big)$ we shall use the following family of seminorms:
\begin{equation}\label{2.29}
|u|_{s,l}\ :=\ \underset{|\alpha|\leq l}{\sup}\left[\int_{\mathcal{X}}<x>^{2l}\left\|\big(\partial^\alpha_x u\big)(x,.)\right\|_{\mathcal{H}^s(\mathbb{T})}^2\ dx\right]^{1/2},\qquad l\in\mathbb{N}, u\in\mathscr{S}\big(\mathcal{X};\mathcal{H}^s(\mathbb{T})\big).
\end{equation}

We have to prove that for any $k\in\mathbb{N}$ there exist $l\in\mathbb{N}$ and $C>0$ such that:
\begin{equation}\label{2.30}
\left|\widetilde{Q} u\right|_{0,k}\ \leq\ C|u|_{s,l},\qquad\forall u\in\mathscr{S}\big(\mathcal{X};\mathcal{H}^s(\mathbb{T})\big),
\end{equation}
 uniformly for $q$ varying in bounded subsets of $S^s_1(\Xi)$ and for $B$ varying in bounded subsets of $BC^\infty(\mathcal{X})$. 
 
 Using \eqref{1.3} and \eqref{1.7}, or a straightforward computation, we obtain that for any $u\in\mathscr{S}\big(\mathcal{X}\times\mathbb{T}\big)$:
 \begin{equation}\label{2.31}
 \big(\widetilde{Q} u\big)(x,y)\ =\ (2\pi)^{-d}\int_{\mathcal{X}}\int_{\mathcal{X}^*}e^{i<\eta,y-\tilde{y}>}\,\omega_A(x,x-y+\tilde{y})\,q\Big(x+\frac{\tilde{y}-y}{2},\eta\Big)\,u(x-y+\tilde{y},\tilde{y})\,d\tilde{y}\,d\eta.
 \end{equation}
 In particular we obtain that $\widetilde{Q}u\in\mathscr{S}\big(\mathcal{X}\times\mathbb{T}\big)$.
 
 For $x,y,\tilde{y}$ and $\eta$ fixed in $\mathcal{X}^*$, we consider the following function of the argument $t\in\mathcal{X}$:
 \begin{equation}
 \Phi(t)\ :=\ \omega_A(x,x-y+t)\,q\Big(x+\frac{t-y}{2},\eta\Big)\,u(x-y+t,\tilde{y}),
 \end{equation}
 and notice that its value for $t=\tilde{y}$ is exactly the factor that multiplies the exponential $e^{i<\eta,y-\tilde{y}>}$ under the integral in \eqref{2.31}; let us consider its Taylor expansion in $t\in\mathcal{X}$ around $t=y$ with integral rest of order $n>d+s$:
 \begin{equation}\label{2.32}
 \Phi(\tilde{y}) \ =\ \underset{|\alpha|<n}{\sum}f_\alpha(x,y,\tilde{y},\eta)\big(\tilde{y}-y\big)^\alpha\ +\ \underset{|\alpha|=n}{\sum}g_\alpha(x,y,\tilde{y},\eta)\big(\tilde{y}-y\big)^\alpha,
 \end{equation}
 where
 \begin{equation}\label{2.33}
 f_\alpha(x,y,\tilde{y},\eta)\ :=\ \underset{\beta\leq\alpha}{\sum}f_{\alpha\beta}(x)\,q_{\alpha\beta}(x,\eta)\big(\partial^\beta_xu\big)(x,\tilde{y}),\quad f_{\alpha\beta}\in C^\infty_{\text{\sf pol}}(\X),\ q_{\alpha\beta}\in S^s_1(\Xi)
 \end{equation}
 and
 \begin{equation}\label{2.34}
 g_\alpha(x,y,\tilde{y},\eta)\ :=\ \underset{\beta\leq\alpha}{\sum}\int_0^1h_{\tau,\alpha,\beta}(x,y-\tilde{y})\,q_{\alpha\beta}\big(x+(1-\tau)\frac{\tilde{y}-y}{2},\eta\big)\big(\partial^\beta_xu\big)(x-(1-\tau)(y-\tilde{y}),\tilde{y})\,d\tau
 \end{equation}
 where $h_{\tau,\alpha,\beta}\in C^\infty_{\text{\sf pol}}(\X\times\X)$ uniformly for $\tau\in[0,1]$ and $q_{\alpha\beta}\in S^s_1(\Xi)$.
 
 We use the relations \eqref{2.32}-\eqref{2.34} in \eqref{2.31} and eliminate the monomials $(\tilde{y}-y)^\alpha$ through partial integrations using the identity 
 $$
 (\tilde{y}-y)^\alpha e^{i<\eta,\tilde{y}-y>}\ =\ \big(-D_\eta\big)^\alpha e^{i<\eta,\tilde{y}-y>}.
 $$
 Finally we obtain:
 \begin{equation}\label{2.35}
 \big(\widetilde{Q} u\big)(x,y)\ =\ \underset{|\alpha|<n}{\sum}\ \underset{\beta\leq\alpha}{\sum}f_{\alpha\beta}(x)\big(T_{\alpha\beta}u\big)(x,y)\ +\ \underset{|\alpha|=n}{\sum}\ \underset{\beta\leq\alpha}{\sum}\int_0^1\big(R_{\alpha\beta}(\tau)u\big)(x,y)\,d\tau,
 \end{equation}
 where
 \begin{equation}\label{2.36}
 \big(T_{\alpha\beta}u\big)(x,y):=(2\pi)^{-d}\int_{\mathcal{X}}\int_{\mathcal{X}^*}e^{i<\eta,y-\tilde{y}>}t_{\alpha\beta}(x,\eta)\big(\partial^\beta_xu\big)(x,\tilde{y})\,d\tilde{y}\,d\eta,\quad t_{\alpha\beta}\in S^{s-|\alpha|}_1(\Xi),
 \end{equation}
 \begin{equation}\label{2.37}
 \big(R_{\alpha\beta}(\tau)u\big)(x,y):=
 \end{equation}
 $$
 (2\pi)^{-d}\int_{\mathcal{X}}\int_{\mathcal{X}^*}e^{i<\eta,y-\tilde{y}>}h_{\tau,\alpha,\beta}(x,y-\tilde{y})\,r_{\alpha\beta}\big(x+(1-\tau)\frac{\tilde{y}-y}{2},\eta\big)\,\big(\partial^\beta_xu\big)\big(x-(1-\tau)(y-\tilde{y}),\tilde{y}\big)\,d\tilde{y}\,d\eta,\quad r_{\alpha\beta}\in S^{s-n}_1(\Xi).
 $$
 
We begin by estimating the term $T_{\alpha\beta}u$, by using Lemma \ref{L.A.19}; Starting from \eqref{2.36} and considering $x\in\X$ as a parameter we conclude that there exists a semi-norm $c_{\alpha\beta}(q)$ of $q\in S^s_1(\Xi)$ such that
\begin{equation}\label{2.38}
\left\| \big(T_{\alpha\beta}u\big)(x,.)\right\|^2_{L^2(\mathbb{T})}\ \leq\ c_{\alpha\beta}(q)^2\left\|\big(\partial^\beta_xu\big)(x,.)\right\|^2_{\mathcal{H}^s(\mathbb{T})},\quad\forall x\in\X,\ \forall u\in\mathscr{S}(\X\times\mathbb{T}).
\end{equation}

Let us consider now the term $R_{\alpha\beta}(\tau)u$. We begin by noticing that due to our hypothesis there exists a constant $C(B)$ (bounded when the components of the magnetic field $B$ take values in bounded subsets of $BC^\infty(\X)$) and there exists an entire number $a\in\mathbb{Z}$ such that
\begin{equation}\label{2.39}
\left|h_{\tau,\alpha,\beta}(x,y-\tilde{y})\right|\ \leq\ C(B)<x>^a<y-\tilde{y}>^a,\quad\forall(x,y,\tilde{y})\in\X^3,\ \forall\tau\in[0,1].
\end{equation}
We integrate by parts in \eqref{2.37}, using the identity
$$
e^{i<\eta,y-\tilde{y}>}\ =\ <y-\tilde{y}>^{-2N}\big(1-\Delta_\eta\big)^Ne^{i<\eta,y-\tilde{y}>}.
$$ 
This allows us to conclude that there exists a seminorm $c^\prime_{\alpha,\beta,N}(p)$ of the symbol $p\in S^s_1(\Xi)$ for which we have the inequality:
\begin{equation}\label{2.40}
\left|\big(R_{\alpha\beta}(\tau)u\big)(x,y)\right|\ \leq\ C(B)c^\prime_{\alpha,\beta,N}(p)<x>^a\int_{\X^*}<\eta>^{s-n}d\eta\int_{\X}<z>^{a-2N}\left|\big(\partial^\beta_xu\big)\big(x-(1-\tau)z,y-z\big)\right|dz
\end{equation}
for any $(x,y)\in\X^2$ and any $\tau\in[0,1]$. We recall our choice $s-n<-d$, we choose further $2N\geq a+2d$ and we estimate the last integral by using the Cauchy-Schwartz inequality. We take the square of the inequality \eqref{2.40} and integrate with respect to $y\in E$ concluding that there exists a constant $C_0>0$ such that
$$
\int_E\left|\big(R_{\alpha\beta}(\tau)u\big)(x,y)\right|^2dy\ \leq
$$
$$
\leq\ C_0C(B)^2c^\prime_{\alpha,\beta,N}(p)^2<x>^{2a}\int_{\X}<z>^{-2d}\left(\int_E\left|\big(\partial^\alpha_xu\big)\big(x-(1-\tau)z,y-z\big)\right|^2dy\right)dz,\quad\forall x\in\X,\ \forall\tau\in[0,1].
$$
 
 For any $\Gamma$-periodic function $v\in L^2_{\text{\sf loc}}(\X)$ and for any $z\in\X$ we have that
 $$
 \int_E|v(y-z)|^2dy\ =\ \int_{\tau_zE}|v(y)|^2dy\ =\ \int_E|v(y)|^2dy
 $$
 so that for any $k\in\mathbb{N}$ there exists $C_k>0$ such that for any $\tau\in[0,1]$ we have that
 \begin{equation}\label{2.41}
 \int_{\X}<x>^{2k}\left\|\big(R_{\alpha\beta}(\tau)u\big)(x,.)\right\|^2_{L^2(\mathbb{T})}\ \leq\ C_kC(B)^2c^\prime_{\alpha,\beta,N}(p)^2\int_{\X}<x>^{2a+2k}\left\|\big(\partial^\alpha_xu\big)(x,.)\right\|^2_{L^2(\mathbb{T})}dx.
 \end{equation}
 
 For the derivatives $\partial^\mu_x\big(T_{\alpha\beta}u\big)(x,.)$ and $\partial^\mu_x\big(R_{\alpha\beta}(\tau)u\big)(x,.)$ (for any $\mu\in\mathbb{N}^d$) we obtain in a similar way estimations of the same form \eqref{2.38} and \eqref{2.41} and using \eqref{2.35} we obtain \eqref{2.30}.
\end{proof}

\begin{lemma}\label{L.2.16}
The following topological embeddings are true (uniformly in $\epsilon\in[-\epsilon_0,\epsilon_0]$):
\begin{equation}\label{2.42}
\mathscr{S}\big(\X;\mathcal{H}^m(\mathbb{T})\big)\ \hookrightarrow\ \mathcal{K}^m_\epsilon(\X\times\X)\ \hookrightarrow\ \mathscr{S}^\prime\big(\X;\mathcal{H}^m(\mathbb{T})\big).
\end{equation}
\end{lemma}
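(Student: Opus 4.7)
The plan is to derive both embeddings from Lemma~\ref{L.2.15}, the first one directly and the second by a duality argument.

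For the first embedding, fix $u\in\mathscr{S}(\X;\mathcal{H}^m(\mathbb{T}))$. Since $u(x,\cdot)$ is $\Gamma$-periodic in its second argument, $u$ defines an element of $\mathscr{S}'_\Gamma(\X^2)$. The operator-valued symbol $\mathfrak{q}_{m,\epsilon}$ lies in a bounded subset of $S^m_1(\Xi)$ uniformly in $\epsilon\in[-\epsilon_0,\epsilon_0]$, and by Hypothesis H.1 the family $\{B_\epsilon\}$ is bounded in $BC^\infty(\X)$. Applying Lemma~\ref{L.2.15} with $q=\mathfrak{q}_{m,\epsilon}$, $s=m$ and $\widetilde{Q}=\widetilde{Q}_{m,\epsilon}$ yields
$\widetilde{Q}_{m,\epsilon}u\in\mathscr{S}(\X;L^2(\mathbb{T}))\hookrightarrow L^2(\X\times\mathbb{T})$
with seminorms controlled uniformly in $\epsilon$. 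Hence $u\in\mathcal{K}^m_\epsilon(\X^2)$ and the first embedding is continuous, uniformly in $\epsilon$.

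For the second embedding I would argue by duality. The same reasoning applied with $-m$ in place of $m$ gives a continuous uniform embedding $\mathscr{S}(\X;\mathcal{H}^{-m}(\mathbb{T}))\hookrightarrow\mathcal{K}^{-m}_\epsilon(\X^2)$, and this inclusion has dense image: indeed $\mathscr{S}(\X\times\mathbb{T})\subset\mathscr{S}(\X;\mathcal{H}^{-m}(\mathbb{T}))$, and via Lemma~\ref{L.2.10} together with the density of $\mathscr{S}(\X)$ in $\mathcal{H}^{-m}_{A_\epsilon}(\X)$ the subspace $\mathscr{S}(\X\times\mathbb{T})$ is already dense in $\mathcal{K}^{-m}_\epsilon(\X^2)$. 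Taking adjoints with respect to the natural $L^2(\X\times\mathbb{T})$-pairing then produces a continuous injection
$$\bigl(\mathcal{K}^{-m}_\epsilon(\X^2)\bigr)^*\ \hookrightarrow\ \bigl(\mathscr{S}(\X;\mathcal{H}^{-m}(\mathbb{T}))\bigr)^*.$$

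To conclude it suffices to identify both sides. The right-hand side is $\mathscr{S}'(\X;\mathcal{H}^m(\mathbb{T}))$ by the nuclear theorem. For the left-hand side, Lemma~\ref{L.2.10} identifies $\mathcal{K}^{\pm m}_\epsilon$ isometrically with $\mathcal{H}^{\pm m}_{A_\epsilon}(\X)\otimes L^2(\mathbb{T})$ via the unitary $\boldsymbol{\psi}^*$; combined with the magnetic Sobolev duality $(\mathcal{H}^{-m}_{A_\epsilon}(\X))^*\cong\mathcal{H}^m_{A_\epsilon}(\X)$ from \cite{IMP1}, tensored with the self-dual $L^2(\mathbb{T})$, this yields $(\mathcal{K}^{-m}_\epsilon)^*\cong\mathcal{K}^m_\epsilon$ under the same $L^2(\X\times\mathbb{T})$-pairing. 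Composing the identifications gives the required continuous injection $\mathcal{K}^m_\epsilon(\X^2)\hookrightarrow\mathscr{S}'(\X;\mathcal{H}^m(\mathbb{T}))$.

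The one point requiring care is that all of the duality pairings above are normalised to coincide with the $L^2(\X\times\mathbb{T})$-pairing on the common dense subspace $\mathscr{S}(\X\times\mathbb{T})$, so that the chain of isomorphisms really realises the set-theoretic inclusion $\mathcal{K}^m_\epsilon\subset\mathscr{S}'(\X;\mathcal{H}^m(\mathbb{T}))$ rather than some twisted version of it. The uniformity in $\epsilon$ passes through the duality without difficulty because Lemma~\ref{L.2.15} is already uniform in bounded families of symbols and of magnetic fields.
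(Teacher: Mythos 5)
Your proof of the first embedding coincides with the paper's: both consist of applying Lemma~\ref{L.2.15} with $s=m$ and using the density of $\mathscr{S}(\X\times\mathbb{T})$ in $\mathscr{S}(\X;\mathcal{H}^m(\mathbb{T}))$, together with the definition of $\mathcal{K}^m_\epsilon$.

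For the second embedding you take a genuinely different route. The paper works directly with the sesquilinear form $(u,v)_m$ of \eqref{2.44}: it writes
$(u,v)_m=\bigl(\widetilde{Q}_{m,\epsilon}u,\,\widetilde{Q}_{-m,\epsilon}(1\otimes<D_\Gamma>^{2m})v\bigr)_{L^2(\X\times\mathbb{T})}$,
applies Cauchy--Schwarz, and then invokes Lemma~\ref{L.2.15} with $s=-m$ to bound $\|\widetilde{Q}_{-m,\epsilon}v_\Gamma\|_{L^2}$ by a seminorm of $v$ in $\mathscr{S}(\X;\mathcal{H}^m(\mathbb{T}))$ (estimates \eqref{2.45}--\eqref{2.48}), which yields the required continuous extension of the pairing. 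You instead package the same ingredients into an abstract duality argument: you prove the first embedding at level $-m$, note its image is dense using Lemma~\ref{L.2.10}, take transposes, and then identify $\bigl(\mathcal{K}^{-m}_\epsilon\bigr)^*$ with $\mathcal{K}^m_\epsilon$ via the unitary $\boldsymbol{\psi}^*$ and the magnetic Sobolev duality of \cite{IMP1}, all with respect to the $L^2(\X\times\mathbb{T})$-pairing. Both arguments are correct and rest on Lemma~\ref{L.2.15}, but the paper's version is more elementary and has the virtue of producing the uniform-in-$\epsilon$ bound explicitly as a single inequality, whereas your version must justify that the chain of duality identifications respects the set-theoretic inclusion and is uniform in $\epsilon$ (a point you correctly flag but do not fully write out; in particular one should observe that the pairing $(u,w)\mapsto(\widetilde{Q}_{m,\epsilon}u,\widetilde{Q}_{-m,\epsilon}w)_{L^2}$ has norm exactly~$1$ between $\mathcal{K}^m_\epsilon$ and $\mathcal{K}^{-m}_\epsilon$ by construction of these spaces, which is what makes the duality uniform). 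The abstract route buys conceptual clarity and reuses the first embedding verbatim at the negative order; the paper's route buys a short self-contained estimate with no need to discuss dual spaces.
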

\begin{proof}
In order to prove the first embedding we take into account the density of $\mathscr{S}\big(\X\times\mathbb{T}\big)$ into $\mathscr{S}\big(\X;\mathcal{H}^m(\mathbb{T})\big)$ and the Definition \ref{D.2.9} (c) of the space $\mathcal{K}^m_\epsilon(\X\times\X)$. It is thus enough to prove that there exists a seminorm $|.|_{m,l}$ on $\mathscr{S}\big(\X;\mathcal{H}^m(\mathbb{T})\big)$ such that
\begin{equation}\label{2.43}
\left\|\widetilde{Q}_{m,\epsilon}u\right\|_{L^2(\X\times\mathbb{T})}\ \leq\ C|u|_{m,l},\qquad\forall u\in\mathscr{S}\big(\X\times\mathbb{T}\big).
\end{equation}
But this fact has been proved in Lemma \ref{L.2.15} (inequality \eqref{2.30}).

For the second embedding let us notice that the canonical sesquilinear map on $\mathscr{S}^\prime\big(\X;\mathcal{H}^m(\mathbb{T})\big)\times\mathscr{S}\big(\X;\mathcal{H}^m(\mathbb{T})\big)$ (associated to the duality map) is just a continuous extension of the scalar product
\begin{equation}\label{2.44}
(u,v)_m\ :=\ \int_{\X}\big(u(x,.),v(x,.)\big)_{\mathcal{H}^m(\mathbb{T})}dx,\qquad\forall(u,v)\in\mathscr{S}\big(\X;\mathcal{H}^m(\mathbb{T})\big)\times\mathscr{S}\big(\X;\mathcal{H}^m(\mathbb{T})\big).
\end{equation}
Due to the density of $\mathscr{S}\big(\X\times\mathbb{T}\big)$ into $\mathcal{K}^m_\epsilon(\X\times\X)$, this amounts to prove that it exists a continuous seminorm $|.|_{m,l}$ on $\mathscr{S}\big(\X;\mathcal{H}^m(\mathbb{T})\big)$ such that we have that
\begin{equation}\label{2.45}
|(u,v)_m|\ \leq\ \|u\|_{\mathcal{K}^m_\epsilon}\cdot|v|_{m,l}\qquad\forall (u,v)\in\mathscr{S}\big(\X;\mathcal{H}^m(\mathbb{T})\big)\times\mathscr{S}\big(\X;\mathcal{H}^m(\mathbb{T})\big),
\end{equation}
where $\|u\|_{\mathcal{K}^m_\epsilon}=\left\|\widetilde{Q}_{m,\epsilon}u\right\|_{L^2(\X\times\mathbb{T})}$. Let us notice that
$$
(u,v)_m\ =\ \left(u,\big(1\otimes<D_\Gamma>^{2m}\big)v\right)_{L^2(\X\times\mathbb{T})}
=\left(\widetilde{Q}_{m,\epsilon}u\,,\,\widetilde{Q}_{-m,\epsilon}\big(1\otimes<D_\Gamma>^{2m}\big)v\right)_
{L^2(\X\times\mathbb{T})}.
$$
We denote by $v_\Gamma:=\big(1\otimes<D_\Gamma>^{2m}\big)v\in\mathscr{S}
\big(\X\times\mathbb{T}\big)$ and notice that we have the inequality
\begin{equation}\label{2.46}
|(u,v)_m|\ \leq\ \left\|\widetilde{Q}_{m,\epsilon}u\right\|_{L^2(\X\times\mathbb{T})}\,\left\|\widetilde{Q}_{-m,\epsilon}v_\Gamma\right\|_{L^2(\X\times\mathbb{T})}.
\end{equation}
We conclude thus that the inequality \eqref{2.45} follows if we can prove that there exists a seminorm $|.|_{m,l}$ on $\mathscr{S}\big(\X;\mathcal{H}^m(\mathbb{T})\big)$ such that we have
\begin{equation}\label{2.47}
\left\|\widetilde{Q}_{-m,\epsilon}v_\Gamma\right\|_{L^2(\X\times\mathbb{T})}\ \leq\ C|v|_{m,l},\qquad\forall v\in\mathscr{S}\big(\X\times\mathbb{T}\big).
\end{equation}
From Lemma\ref{L.2.15} (inequality \eqref{2.30}) we know that there exists a seminorm $|.|_{-m,l}$ on $\mathscr{S}\big(\X\times\mathbb{T}\big)$ such that we have
\begin{equation}\label{2.48}
\left\|\widetilde{Q}_{-m,\epsilon}v_\Gamma\right\|_{L^2(\X\times\mathbb{T})}\ \leq\ C|v_\Gamma|_{-m,l},\qquad\forall v\in\mathscr{S}\big(\X\times\mathbb{T}\big).
\end{equation}
Now \eqref{2.47} follows from \eqref{2.48} once we notice that
$$
|v_\Gamma|_{-m,l}\ =\ |v|_{m,l}.
$$
\end{proof}

\section{The Grushin Problem}\label{S.3}
\setcounter{equation}{0}
\setcounter{theorem}{0}

Suppose given a symbol $p$ satisfying the assumptions of Lemma \ref{A.21}, i.e. $p\in S^m_1(\mathbb{T})$ real and elliptic, with $m>0$. The operator $P:=\mathfrak{Op}(p)$ has a self-adjoint realisation in $L^2(\X)$ having domain $\mathcal{H}^m(\X)$ and being lower semibounded and a self-adjoint realisation $P_\Gamma$ in $L^2(\mathbb{T})$ with domain $\mathcal{K}_{m,0}$ being also lower semibounded. 

\begin{lemma}\label{L.3.1}
There exists $N\in\mathbb{N}^*$, $C>0$ and the linear independent family $\{\phi_1,\ldots,\phi_N\}\subset\mathscr{S}(\mathbb{T})$, such that the following inequality is true:
\begin{equation}\label{3.1}
\left(P_\Gamma u,u\right)_{L^2(\mathbb{T})}\ \geq\ C^{-1}\|u\|^2_{\mathcal{K}_{m/2,0}}\ -\ C\sum\limits_{j=1}^N\left|\Big(u,\phi_j\Big)_{L^2(\mathbb{T})}\right|^2,\quad\forall u\in\mathcal{K}_{m,0}.
\end{equation}
\end{lemma}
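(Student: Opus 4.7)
The strategy is to combine the standard G\aa rding inequality for elliptic operators with the compactness of the resolvent of $P_\Gamma$ on the torus, and then use a finite spectral projection to turn the unwanted $L^2(\mathbb{T})$ lower-order term into the required finite-rank correction.

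\textbf{First step:} Since $p\in S^m_1(\mathbb{T})$ is real and elliptic with principal part bounded below by $c|\eta|^m$, I would invoke a G\aa rding-type inequality for $P_\Gamma$: there exist constants $c_0,c_1>0$ such that
\begin{equation*}
\big(P_\Gamma v,v\big)_{L^2(\mathbb{T})}\,+\,c_1\|v\|^2_{L^2(\mathbb{T})}\ \geq\ c_0\|v\|^2_{\mathcal{K}_{m/2,0}},\qquad\forall v\in\mathcal{K}_{m,0}.
\end{equation*}
In the toroidal setting this is standard; it follows either by transferring the analogous statement for $P=\mathfrak{Op}(p)$ on $L^2(\X)$ through the Floquet decomposition described in Section 2, or directly from Lemma A.21 combined with the symbolic calculus (writing $\big(p+c_1\big)^{1/2}$ as a $\Gamma$-periodic elliptic symbol of order $m/2$).

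\textbf{Second step:} Since $\mathcal{K}_{m,0}\hookrightarrow L^2(\mathbb{T})$ is a compact embedding (Rellich on the compact manifold $\mathbb{T}$), the self-adjoint lower semibounded operator $P_\Gamma$ has compact resolvent. Hence it admits an orthonormal basis of eigenfunctions $\{\phi_k\}_{k\geq 1}$ associated to eigenvalues $\lambda_1\leq\lambda_2\leq\ldots\to+\infty$, and elliptic regularity gives $\phi_k\in\mathscr{S}(\mathbb{T})$. I fix $N\in\mathbb{N}^*$ large enough so that $\lambda_{N+1}\geq 2c_1$ and propose $\{\phi_1,\ldots,\phi_N\}$ as the claimed family.

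\textbf{Third step:} Let $\Pi_N$ denote the orthogonal projection onto $\mathrm{span}\{\phi_1,\ldots,\phi_N\}$ and $\Pi_N^\perp:=\id-\Pi_N$. By the spectral theorem,
\begin{equation*}
\big(P_\Gamma u,u\big)_{L^2(\mathbb{T})}\ =\ \sum_{k=1}^N\lambda_k|(u,\phi_k)|^2\,+\,\big(P_\Gamma\Pi_N^\perp u,\Pi_N^\perp u\big)_{L^2(\mathbb{T})},
\end{equation*}
and on $\mathrm{Ran}\,\Pi_N^\perp$ we have $(P_\Gamma\Pi_N^\perp u,\Pi_N^\perp u)\geq\lambda_{N+1}\|\Pi_N^\perp u\|^2_{L^2}\geq 2c_1\|\Pi_N^\perp u\|^2_{L^2}$. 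Averaging this with the G\aa rding inequality applied to $\Pi_N^\perp u\in\mathcal{K}_{m,0}$ yields
\begin{equation*}
\big(P_\Gamma\Pi_N^\perp u,\Pi_N^\perp u\big)_{L^2(\mathbb{T})}\ \geq\ \frac{c_0}{2}\,\|\Pi_N^\perp u\|^2_{\mathcal{K}_{m/2,0}}.
\end{equation*}
Combining with the triangle inequality $\|u\|^2_{\mathcal{K}_{m/2,0}}\leq 2\|\Pi_N^\perp u\|^2_{\mathcal{K}_{m/2,0}}+2\|\Pi_N u\|^2_{\mathcal{K}_{m/2,0}}$ and the crude bound $\|\Pi_N u\|^2_{\mathcal{K}_{m/2,0}}\leq N\sum_{j=1}^N\|\phi_j\|^2_{\mathcal{K}_{m/2,0}}|(u,\phi_j)|^2$ (valid because each $\phi_j$ is smooth), and absorbing the finitely many terms $\lambda_k|(u,\phi_k)|^2$ into the right-hand side, one obtains \eqref{3.1} for a sufficiently large $C$ depending on $c_0$, $c_1$, $N$, $\max_j|\lambda_j|$ and $\max_j\|\phi_j\|_{\mathcal{K}_{m/2,0}}$.

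\textbf{Main obstacle:} The only non-routine ingredient is the sharp G\aa rding inequality in the toroidal calculus; the rest is a clean spectral-decomposition argument exploiting the compactness of the resolvent on the compact torus.
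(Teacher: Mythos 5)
Your proof is correct and rests on exactly the same two ingredients as the paper's: the G{\aa}rding inequality \eqref{A.36} on the torus and the compactness of the resolvent of $P_\Gamma$, with the finite spectral projection $\Pi_N$ playing the role of the paper's $E_\lambda$.

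The bookkeeping differs slightly. The paper applies G{\aa}rding to the full vector $u$ and then uses the spectral gap above $\lambda$ to replace the bad $L^2$ term: starting from $(P_\Gamma u,u)\geq\lambda\|u\|^2-\sum(\lambda-\lambda_j)|(u,\phi_j)|^2$, it divides by $\lambda$ and substitutes $-\|u\|^2_{L^2}$ back into \eqref{A.36}, so no $\mathcal{K}_{m/2,0}$ norm of the $\phi_j$ ever appears. You instead apply G{\aa}rding only to $\Pi_N^\perp u$ and pay for it with the extra, harmless step of estimating $\|\Pi_N u\|_{\mathcal{K}_{m/2,0}}$ by the finite sum $\sum_j|(u,\phi_j)|^2$ using the smoothness of the eigenfunctions. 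Either route closes the argument; the paper's version is marginally leaner because it never needs that last finite-dimensional absorption step. (Also worth noting: the paper writes ``fix some $\lambda\in\mathbb{R}$'' but then divides by $\lambda$ and implicitly needs $1+C_0/\lambda>0$, so $\lambda>0$ must be understood; your choice of the threshold $\lambda_{N+1}\geq 2c_1$ makes the positivity explicit, which is a small gain in clarity.)
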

\begin{proof}
The manifold $\mathbb{T}$ being a compact manifold without border, $\mathcal{K}_{m,0}$ is compactly embedded in $L^2(\mathbb{T})$ and the operator $P_\Gamma$ has compact resolvent. Let us fix some $\lambda\in\mathbb{R}$ and let us denote by $E_\lambda$ the spectral projection of $P_\Gamma$ for the semiaxis $(-\infty,\lambda]$. We choose an orthonormal basis $\{\phi_1,\ldots,\phi_N\}$ for the subspace $\Ran(E_\lambda)$ of $L^2(\mathbb{T})$. Then $\Ran(\bb1-E_\lambda)$ is the orthogonal complement of the space $\Sp\{\phi_1,\ldots,\phi_N\}$ generated by $\{\phi_1,\ldots,\phi_N\}$ in $L^2(\mathbb{T})$; moreover, for any $v\in\mathcal{D}(P_\Gamma)\cap\Ran(\bb1-E_\lambda)$, one has that 
$$
\left(P_\Gamma v,v\right)_{L^2(\mathbb{T})}\ \geq\ \lambda\|v\|^2_{L^2(\mathbb{T})}.
$$ 
In conclusion:
\begin{equation}\label{3.2}
\left(P_\Gamma v,v\right)_{L^2(\mathbb{T})}\ \geq\ \lambda\|v\|^2_{L^2(\mathbb{T})},\quad\forall v\in\mathcal{K}_{m,0}\cap\left[\Sp\{\phi_1,\ldots,\phi_N\}\right]^\bot.
\end{equation}

If $u\in\mathcal{K}_{m,0}$ we have that $v:=u-\sum\limits_{j=1}^N(u,\phi_j)_{L^2(\mathbb{T})}\phi_j$ belongs to the subset of vectors verifying \eqref{3.2} and thus we have that
\begin{equation}\label{3.3}
\left(P_\Gamma v,v\right)_{L^2(\mathbb{T})}\ \geq\ \lambda\left\|u-\sum\limits_{j=1}^N(u,\phi_j)_{L^2(\mathbb{T})}\phi_j\right\|^2_{L^2(\mathbb{T})}\ =\ \lambda\left(\|u\|^2_{L^2(\mathbb{T})}-\sum\limits_{j=1}^N\left|(u,\phi_j)_{L^2(\mathbb{T})}\right|^2\right).
\end{equation}

On the other side, if we know that $P_\Gamma\phi_j=\lambda_j\phi_j$ for any $1\leq j\leq N$, then we have that
$$
\left(P_\Gamma v,v\right)_{L^2(\mathbb{T})}=\left(P_\Gamma u-\sum\limits_{j=1}^N(u,\phi_j)_{L^2(\mathbb{T})}P_\Gamma\phi_j\ ,\ u-\sum\limits_{k=1}^N(u,\phi_k)_{L^2(\mathbb{T})}\phi_k\right)_{L^2(\mathbb{T})}=
$$
$$
=\left(P_\Gamma u,u\right)_{L^2(\mathbb{T})}\ -\sum\limits_{k=1}^N\overline{(u,\phi_k)}_{L^2(\mathbb{T})}(u,P_\Gamma\phi_k)_{L^2(\mathbb{T})}-\sum\limits_{j=1}^N(u,\phi_j)_{L^2(\mathbb{T})}\left(P_\Gamma\phi_j,u\right)_{L^2(\mathbb{T})}+
$$
$$
+\sum\limits_{j,k=1}^N(u,\phi_j)_{L^2(\mathbb{T})}\overline{(u,\phi_k)}_{L^2(\mathbb{T})}\left(P_\Gamma\phi_j,\phi_k\right)_{L^2(\mathbb{T})}=
$$
$$
=\left(P_\Gamma u,u\right)_{L^2(\mathbb{T})}\ -\ \sum\limits_{j=1}^N\lambda_j\left|\left(u,\phi_j\right)_{L^2(\mathbb{T})}\right|^2.
$$
If we compare this inequality with \eqref{3.3} we conclude that
$$
\left(P_\Gamma u,u\right)_{L^2(\mathbb{T})}\ \geq\ \lambda\|u\|^2_{L^2(\mathbb{T})}\ -\ \sum\limits_{j=1}^N(\lambda-\lambda_j)\left|\left(u,\phi_j\right)_{L^2(\mathbb{T})}\right|^2.
$$
Finaly we obtain that
\begin{equation}\label{3.4}
-\|u\|^2_{L^2(\mathbb{T})}\ \geq\ -\frac{1}{\lambda}\left(P_\Gamma u,u\right)_{L^2(\mathbb{T})}\ -\ \sum\limits_{j=1}^N\left(1-\frac{\lambda_j}{\lambda}\right)\left|\left(u,\phi_j\right)_{L^2(\mathbb{T})}\right|^2,\quad\forall u\in\mathcal{K}_{m,0}.
\end{equation}

In order to prove \eqref{3.1} we put together \eqref{3.4} with the G{\aa}rding inequality \eqref{A.36} and conclude that it exists $C_0>0$ such that
$$
\left(P_\Gamma u,u\right)_{L^2(\mathbb{T})}\ \geq\ C_0^{-1}\|u\|^2_{\mathcal{K}_{m/2,0}}\ -\ C_0\|u\|^2_{L^2(\mathbb{T})},\quad\forall u\in\mathcal{K}_{m,0}.
$$
\end{proof}
\begin{remark}\label{R.3.2}
From Remark \ref{R.A.22} we know that for any $\xi\in\X^*$ the operator $P_{\Gamma,\xi}$ is self-adjoint and lower semibounded in $L^2(\mathbb{T})$ on the domain $\mathcal{K}_{m,\xi}$. If we identify $\mathcal{K}_{m,\xi}$ with $\mathcal{H}^m_{\text{\sf loc}}(\X)\cap\mathscr{S}^\prime_\Gamma(\X)$ endowed with the norm $\|<D+\xi>^mu\|_{L^2(E)}$, we deduce that the operator $P_\xi$ is self-adjoint in $L^2_{\text{\sf loc}}(\X)\cap\mathscr{S}^\prime_\Gamma(\X)$ with the domain $\mathcal{K}_{m,\xi}$. Noticing that $P=\sigma_\xi P_\xi\sigma_{-\xi}$ and $\sigma_\xi:\mathcal{K}_{s,\xi}\rightarrow\mathscr{F}_{s,\xi}$ is a unitary operator for any $s\in\mathbb{R}$ and any $\xi\in\X^*$, it follows that $P$ generates in $\mathscr{F}_{0,\xi}$ a self-adjoint lower semibounded operator on the domain $\mathscr{F}_{m,\xi}$.
\end{remark}

\begin{lemma}\label{L.3.3}
Suppose given a compact interval $I\subset\mathbb{R}$; it exists a constant $C>0$, a natural integer $N\in\mathbb{N}$ and the family of functions $\{\psi_1,\ldots,\psi_N\}$ having the following properties:
\begin{enumerate}
\item[a)] $\psi_j\in C^\infty(\Xi)$.
\item[b)] $\psi_j(y,\eta+\gamma^*)\ =\ \psi_j(y,\eta),\quad\forall(y,\eta)\in\Xi,\ \forall\gamma^*\in\Gamma^*,\ 1\leq j\leq N$.
\item[c)] $\{\psi_j(.,\xi)\}_{1\leq j\leq N}$ is an orthonormal system in $\mathscr{F}_{0,\xi}$ for any $\xi\in\X^*$. We denote by $\mathcal{T}_\xi$ the complex linear space generated by the family $\{\psi_j(.,\xi)\}_{1\leq j\leq N}$ in $\mathscr{F}_{0,\xi}$ and by $\mathcal{T}^\bot_\xi$ its orthogonal complement in the same Hilbert space.
\item[d)] The following inequality is true:
\begin{equation}\label{3.5}
\left(\big(P-\lambda\big)u,u\right)_{\mathscr{F}_{0,\xi}}\ \geq\ C\|u\|^2_{\mathscr{F}_{0,\xi}},\quad\forall u\in{\mathscr{F}_{m,\xi}}\cap\mathcal{T}^\bot_\xi,\ \forall\xi\in\X^*,\ \forall\lambda\in I.
\end{equation}
\end{enumerate}
\end{lemma}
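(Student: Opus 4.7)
The strategy is to reduce the statement to a fiberwise analysis of $P$ on the Floquet spaces $\mathscr{F}_{0,\xi}$ via Remark \ref{R.3.2}, to use Riesz spectral projections for uniform finite-rank control, and to glue local smooth orthonormal frames into a global, smooth, $\Gamma^*$-periodic orthonormal family via partition of unity and a polar-decomposition orthonormalization. By Remark \ref{R.3.2}, for each $\xi\in\mathcal{X}^*$ the operator $P$ induces on $\mathscr{F}_{0,\xi}$ a self-adjoint lower-bounded operator with compact resolvent; its Bloch eigenvalues $\lambda_1(\xi)\leq\lambda_2(\xi)\leq\cdots$ are continuous and $\Gamma^*$-periodic in $\xi$ (the latter because $\mathscr{F}_{0,\xi}=\mathscr{F}_{0,\xi+\gamma^*}$ as subspaces of $\mathscr{S}^\prime(\mathcal{X})$). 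Fix $\mu:=\sup I+1$; since $\lambda_k(\xi)\to+\infty$ and $\mathbb{T}^{*,d}$ is compact, continuity yields a uniform $N_0\in\mathbb{N}^*$ bounding the number of eigenvalues of $P_\xi$ not exceeding $\mu+1$.

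For each $\xi_0\in\mathbb{T}^{*,d}$ I pick $\mu(\xi_0)\in[\mu,\mu+1]\setminus\sigma(P_{\xi_0})$ and a simple contour $\gamma(\xi_0)\subset\mathbb{C}$ encircling $\sigma(P_{\xi_0})\cap(-\infty,\mu(\xi_0)]$ only; the Riesz projection $E_{\xi_0}(\xi):=(2\pi i)^{-1}\oint_{\gamma(\xi_0)}(z-P)^{-1}dz$ on $\mathscr{F}_{0,\xi}$ is smooth in $\xi$ on a neighborhood $U_{\xi_0}$, of constant rank $N(\xi_0)\leq N_0$, and contains the spectral subspace of $P$ on $(-\infty,\mu]$. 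By compactness I extract a finite $\Gamma^*$-periodic subcover $\{U_k\}_{k=1}^K$ and, shrinking each $U_k$ to a contractible set, I obtain smooth orthonormal frames $\{e_{k,j}(\cdot,\xi)\}_{j=1}^{N_k}$ of $\Ran E_k(\xi)$ on $U_k$ by trivializing the smooth finite-rank subbundle locally.

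The crucial gluing step uses a smooth $\Gamma^*$-periodic partition of unity $\{\chi_k^2\}$ subordinate to $\{U_k\}$: form the smooth $\Gamma^*$-periodic map $\Phi(\xi)\colon\mathbb{C}^{\sum_k N_k}\to\mathscr{F}_{0,\xi}$ whose columns are $\chi_k(\xi)\,e_{k,j}(\cdot,\xi)$ extended by zero outside $U_k$; its range contains $\Ran F_\mu(\xi)$ at every $\xi$, although $\Phi(\xi)$ may fail to be pointwise injective. Augment $\Phi$ by further smooth $\Gamma^*$-periodic sections (drawn from partition-of-unity-weighted local frames of a complementary smooth subbundle) to obtain a fiberwise injective $\Phi'(\xi)\colon\mathbb{C}^N\to\mathscr{F}_{0,\xi}$ preserving the range containment; then $\Phi'(\xi)^*\Phi'(\xi)$ is everywhere invertible and smooth, and the polar decomposition
\[
\widetilde\Phi(\xi)\ :=\ \Phi'(\xi)\bigl(\Phi'(\xi)^*\Phi'(\xi)\bigr)^{-1/2}
\]
is a smooth $\Gamma^*$-periodic family of isometries $\mathbb{C}^N\to\mathscr{F}_{0,\xi}$; its columns give the desired sections $\psi_j(\cdot,\xi)$ verifying (a)--(c).

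For (d), $\mathcal{T}_\xi=\Ran\widetilde\Phi(\xi)\supseteq\Ran F_\mu(\xi)$ implies that any $u\in\mathscr{F}_{m,\xi}\cap\mathcal{T}_\xi^\perp$ lies in the spectral subspace of $P$ on $(\mu,+\infty)$ in $\mathscr{F}_{0,\xi}$, so the spectral theorem gives $(Pu,u)_{\mathscr{F}_{0,\xi}}\geq\mu\,\|u\|^2_{\mathscr{F}_{0,\xi}}$, whence
\[
\bigl((P-\lambda)u,u\bigr)_{\mathscr{F}_{0,\xi}}\ \geq\ (\mu-\sup I)\,\|u\|^2_{\mathscr{F}_{0,\xi}}\ =\ \|u\|^2_{\mathscr{F}_{0,\xi}}
\]
uniformly in $\xi\in\mathcal{X}^*$ and $\lambda\in I$. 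The main obstacle is the global orthonormalization step: because the spectral subbundle over the torus $\mathbb{T}^{*,d}$ is in general topologically non-trivial, no smooth orthonormal frame of rank exactly $N_0$ exists globally; the stabilization $N\geq N_0$ together with the polar decomposition is what circumvents this topological obstruction.
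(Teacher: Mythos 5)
Your approach is genuinely different from the paper's and is conceptually sound: you replace the paper's explicit construction of the $\psi_j$ by an abstract differential-geometric one, using Riesz spectral projections of $P$ on the Floquet fibers to produce a smooth finite-rank subbundle over $\mathbb{T}^{*,d}$, and then gluing local frames via a partition of unity plus polar decomposition to circumvent the topological obstruction to a global orthonormal frame of minimal rank. This is a clean way to get (c) and (d), and the observation that one needs range containment $\mathcal{T}_\xi\supseteq\operatorname{Ran}F_\mu(\xi)$ rather than equality is exactly the right relaxation. In comparison, the paper instead applies Lemma \ref{L.3.1} pointwise, replaces each local spectral frame by nearby functions $\overset{\circ}{\psi}_j\in C^\infty_0(\overset{\circ}{E})$ with compact support in a cell, defines $\psi_j(x,\xi)=\sum_{\gamma\in\Gamma}\overset{\circ}{\psi}_j(x-\gamma)e^{i\langle\xi,\gamma\rangle}$, takes a maximal linearly independent subfamily (characterized via the $\overset{\circ}{\psi}_j$), and finishes with Gram--Schmidt. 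The paper's route is more elementary and buys joint regularity and $\Gamma^*$-periodicity for free, since the $\xi$-dependence lives only in the exponentials; your route is more structural and generalizes readily, but carries technical debt that must be paid.

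The main gap in your argument is property (a): your $\psi_j(\cdot,\xi)$ are by construction columns of a smooth $\mathscr{F}_{0,\xi}$-valued isometry family, but the lemma requires $\psi_j\in C^\infty(\Xi)$, i.e.\ joint smoothness in $(y,\eta)$. To close this, you would need (i) to conjugate by $\sigma_\xi$ so as to work with the fixed space $L^2(\mathbb{T})$ and the smooth family $\check P_0(\xi)$ (otherwise ``smoothness in $\xi$'' of the Riesz projection is not even well-posed, since $\mathscr{F}_{0,\xi}$ itself varies), (ii) to upgrade, by elliptic bootstrap, smoothness of $\xi\mapsto\psi_j(\cdot,\xi)$ as a $\mathcal{K}_{0,\xi}$-valued map to smoothness as a $\mathcal{K}_{lm,\xi}$-valued map for every $l$, and (iii) to invoke the Sobolev embedding together with the fact that Fr\'echet-smooth maps $\X^*\to C^\infty(\mathbb{T})$ are jointly smooth as functions on $\X\times\X^*$. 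None of this is said, and without it your $\psi_j$ are only known to be smooth in $\xi$ as Hilbert-space-valued, not elements of $C^\infty(\Xi)$. A secondary vagueness is the augmentation/``stabilization'' step: you invoke ``partition-of-unity-weighted local frames of a complementary smooth subbundle'' without specifying them, and one must check that the added sections can be chosen $\Gamma^*$-periodic and compatible with the regularity required in (a). Both gaps are fillable, but they are precisely the technicalities that the paper's compactly-supported construction is designed to avoid.
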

\begin{proof}
It is evidently enough to prove \eqref{3.5} for $\lambda=\lambda_0:=\sup I$. We apply Lemma \ref{L.3.1} to the operator $P_{\Gamma,\xi_0}-\lambda_0$ with $\xi_0\in\X^*$ to be considered fixed. We deduce that there exists $C_0>0$, $N_0\in\mathbb{N}^*$ and a family of functions $\{\widetilde{\psi}_1(\cdot,\xi_0),\ldots,\widetilde{\psi}_{N_0}(\cdot,\xi_0)\}$ from $\mathscr{S}(\mathbb{T})$ such that the following inequality is true:
\begin{equation}\label{3.6}
\left(\big(P_{\Gamma,\xi_0}-\lambda_0\big)v,v\right)_{L^2(\mathbb{T})}\ \geq\ C_0^{-1}\|v\|^2_{\mathcal{K}_{m/2,\xi_0}}\ -\ C_0\sum\limits_{j=1}^{N_0}\left|\big(v,\widetilde{\psi}_j(.,\xi_0)\big)_{L^2(\mathbb{T})}\right|^2,\quad\forall v\in\mathcal{K}_{m,\xi_0}.
\end{equation}
Taking into account the result in Example \ref{E.A.20}, we notice that the map $\X^*\ni\xi\mapsto P_{\Gamma,\xi}\in\mathbb{B}\big(\mathcal{K}_{s+m,\xi_0};\mathcal{K}_{s,\xi_0}\big)$ is continuous for any $s\in\mathbb{R}$ (it is even smooth); it follows that
$$
\left|\left(\big(P_{\Gamma,\xi_0}-P_{\Gamma,\xi}\big)v,v\right)_{L^2(\mathbb{T})}\right|\ \leq\ \left\|P_{\Gamma,\xi_0}-P_{\Gamma,\xi}\right\|_{\mathbb{B}(\mathcal{K}_{m/2,\xi_0};\mathcal{K}_{-m/2,\xi_0})}\,\|v\|^2_{\mathcal{K}_{m/2,\xi_0}},\quad\forall v\in\mathcal{K}_{m,\xi_0}.
$$
We conclude that for some smaller constant $C_0$, the inequality \eqref{3.6} is true with $P_{\Gamma,\xi}$ in place of $P_{\Gamma,\xi_0}$ on the left side, for $\xi\in V_0$ some small neighborhood of $\xi_0$ in $\X^*$.

Let us define now the family of functions $\{\psi_1,\ldots,\psi_{N}\}$. Let us first notice that
$$
\psi^\prime_j(.,\xi_0):=e^{i<\xi_0,.>}\widetilde{\psi}_j(.,\xi_0)\in C^\infty(\X)\cap\mathscr{F}_{0,\xi_0}.
$$
Then let us also notice that for any $\delta>0$ we can find functions $\overset{\circ}{\psi}_j\in C^\infty_0(\overset{\circ}{E})$, with $\overset{\circ}{E}$ the interior set of $E$, such that
$$
\left\|\psi^\prime_j(.,\xi_0)-\overset{\circ}{\psi}_j\right\|_{L^2(E)}\ \leq\ \delta,\quad 1\leq j\leq N_0.
$$
Then we define 
$$
\psi_j(x,\xi_0)\ :=\ \sum\limits_{\gamma\in\Gamma}\overset{\circ}{\psi}_j(x-\gamma)e^{i<\xi_0,\gamma>},\quad 1\leq j\leq N_0
$$
and we notice that $\psi_j(.,\xi_0)\in C^\infty(\X)\cap\mathscr{F}_{0,\xi_0}$ and $\psi_j(.,\xi_0)=\overset{\circ}{\psi}_j$ on $\overset{\circ}{E}$. Thus we can finally define
\begin{equation}\label{3.7}
\psi_j(x,\xi)\ :=\ \sum\limits_{\gamma\in\Gamma}\overset{\circ}{\psi}_j(x-\gamma)e^{i<\xi,\gamma>},\quad 1\leq j\leq N_0,\ \forall(x,\xi)\in\Xi.
\end{equation}
These functions evidently verify the properties (a) and (b) in the statement of the Lemma. It is also clear that for any $\xi\in\X^*$ we have that $\psi_j(.,\xi)\in\mathscr{F}_{0,\xi}$. Moreover we have that $\psi_j(.,\xi)=\overset{\circ}{\psi}_j=\psi_j(.,\xi_0)$ on $\overset{\circ}{E}$, so that
$$
\left\|\psi^\prime_j(.,\xi_0)-\psi_j(.,\xi)\right\|_{L^2(E)}\ \leq\ \delta,\quad\forall\xi\in\X^*,\ 1\leq j\leq N_0.
$$
From this estimation we may conclude that $\forall\kappa>0$ we can reduce if necessary the neighborhood $V_0$ fixed above, such that $\forall\xi\in V_0$ and $\forall v\in\mathcal{K}_{m,\xi_0}$ we have that for $1\leq j\leq N_0$:
$$
\left|\big(v,\widetilde{\psi}_j(.,\xi_0)\big)_{L^2(\mathbb{T})}\right|\ =\ \left|\int_Ev(y)\overline{\widetilde{\psi}_j(y,\xi_0)}dy\right|\ =\ \left|\int_Ee^{i<\xi_0,y>}v(y)\overline{\psi^\prime_j(y,\xi_0)}dy\right|\ \leq
$$
$$
\leq\ \left|\int_Ee^{i<\xi_0,y>}v(y)\overline{\psi_j(y,\xi)}dy\right|\,+\,\left|\int_Ee^{i<\xi_0,y>}v(y)\overline{\big[\psi^\prime_j(y,\xi_0)-\psi_j(y,\xi)\big]}dy\right|\ \leq
$$
$$
\leq\ \left|\int_Ee^{i<\xi,y>}v(y)\overline{\psi_j(y,\xi)}dy\right|\,+\,\left|\int_E\big[e^{i<\xi_0,y>}-e^{i<\xi,y>}\big]v(y)\overline{\psi_j(y,\xi)}dy\right|\,+
$$
$$
+\,\left|\int_Ee^{i<\xi_0,y>}v(y)\overline{\big[\psi^\prime_j(y,\xi_0)-\psi_j(y,\xi)\big]}dy\right|\ \leq
$$
$$
\leq\ \left|\int_Ee^{i<\xi,y>}v(y)\overline{\psi_j(y,\xi)}dy\right|\,+\,\kappa\|v\|_{L^2(E)}.
$$
From this estimation we deduce that, by reducing if necessary the constant $C_0>0$ we can replace in the right hand side of \eqref{3.6} the scalar products $\big(v,\widetilde{\psi}_j(.,\xi_0)\big)_{L^2(\mathbb{T})}$ with the scalar products $\big(e^{i<\xi,.>}v,\psi_j(.,\xi)\big)_{\mathscr{F}_{0,\xi}}$ for $\xi\in V_0$.

Let us consider a vector $u\in\mathscr{F}_{m,\xi}$ and associate to it the vector $v:=e^{-i<\xi,.>}u$ that belongs to $\mathcal{K}_{m,\xi}$ and taking into account the equality
$$
e^{i<\xi,.>}P_\xi e^{-i<\xi,.>}u\ =\ Pu
$$
we deduce from \eqref{3.6} and the above arguments that we have
\begin{equation}\label{3.8}
\left(\big(P-\lambda_0\big)u,u\right)_{\mathscr{F}_{0,\xi}}\ \geq\ C_0^{-1}\|u\|^2_{\mathscr{F}_{0,\xi}}\,-\,C_0\sum\limits_{j=1}^{N_0}\left|\left(u,\psi_j(.,\xi)\right)_{\mathscr{F}_{0,\xi}}\right|^2,\quad\forall u\in\mathscr{F}_{m,\xi},\ \forall\xi\in V_0.
\end{equation}

Taking into account that $\mathscr{F}_{s,\xi+\gamma^*}=\mathscr{F}_{s,\xi}$ for any $s\in\mathbb{R}$, any $\xi\in\X^*$ and any $\gamma^*\in\Gamma^*$ and the fact that the functions $\psi_j(x,.)$ are $\Gamma^*$-periodic (for $1\leq j\leq N_0$), we conclude that it is enough to prove \eqref{3.5} for $\xi\in\mathbb{T}_*$. Being compact, $\mathbb{T}_*$ can be covered by a finite number of neighborhoods of type $V_0$ (as defined in the argument above). In this way, repeating the procedure explained above we can find a finite family of functions $\{\psi_1,\ldots,\psi_{\tilde{N}}\}$ (with some quite larger $\tilde{N}$ in principle) that will satisfy the properties (a), (b) and (d) from the statement of the Lemma. We select now out of this family a maximal linearly independent subfamily of $N$ functions $\{\psi_1,\ldots,\psi_N\}$ (it can be characterized by the property that the functions $\{\overset{\circ}{\psi}_1,\ldots,\overset{\circ}{\psi}_N\}$ is a linearly independent system in $C^\infty_0(\overset{\circ}{E})$). Let us notice that this last step (the choice of the maximal linearly independent subfamily) does not change the subspace $\mathcal{T}_\xi$ that they generate. Finally we may use the Gram-Schmidt procedure in order to obtain a family of $N$ orthonormal functions from $\mathscr{F}_{0,\xi}$.
\end{proof}

\begin{lemma}\label{L.3.4}
Under the assumptions of Lemma \ref{L.3.3} we denote by $\Pi_\xi$ the orthogonal projection on $\mathcal{T}_\xi$ in the Hilbert space $\mathscr{F}_{0,\xi}$ and by $S(\xi,\lambda)$ the unbounded operator in $\mathcal{T}^\bot_\xi$ defined on the domain $\mathscr{F}_{m,\xi}\cap\mathcal{T}^\bot_\xi$ by the action of $\big(\bb1-\Pi_\xi\big)\big(P-\lambda\big)$. Then the following statements are true:
\begin{enumerate}
\item[a)] The operator $S(\xi,\lambda)$ is self-adjoint and invertible and $S(\xi,\lambda)^{-1}\in\mathbb{B}\big(\mathcal{T}^\bot_\xi;\mathcal{T}^\bot_\xi\big)$ uniformly with respect to $(\xi,\lambda)\in\mathbb{T}^{d*}\times I$.
\item[b)] The operator $S(\xi,\lambda)^{-1}$ also belongs to $\mathbb{B}\big(\mathcal{T}^\bot_\xi;\mathscr{F}_{m,\xi}\big)$ uniformly with respect to $(\xi,\lambda)\in\mathbb{T}^{d*}\times I$.
\end{enumerate}
\end{lemma}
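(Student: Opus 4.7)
The first step is to observe that each $\psi_j(\cdot,\xi)$ is smooth and, by construction in Lemma~\ref{L.3.3}, uniformly bounded in every $\mathscr{F}_{s,\xi}$, so the rank-$N$ orthogonal projector $\Pi_\xi$ preserves $\mathscr{F}_{m,\xi}$; consequently $\bb1-\Pi_\xi$ does as well, and $S(\xi,\lambda)$ is a well-defined operator from its domain $\mathscr{F}_{m,\xi}\cap\mathcal{T}_\xi^\bot$ into $\mathcal{T}_\xi^\bot$. For $u,v$ in this domain the orthogonality of $v$ to $\mathcal{T}_\xi$ gives
$$\big(S(\xi,\lambda)u,v\big)_{\mathscr{F}_{0,\xi}}=\big((P-\lambda)u,v\big)_{\mathscr{F}_{0,\xi}}=\big(u,(P-\lambda)v\big)_{\mathscr{F}_{0,\xi}}=\big(u,S(\xi,\lambda)v\big)_{\mathscr{F}_{0,\xi}},$$
where the middle equality uses the self-adjointness of $P$ in $\mathscr{F}_{0,\xi}$ recalled in Remark~\ref{R.3.2}. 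Closedness of $S(\xi,\lambda)$ is immediate from the closedness of $\mathcal{T}_\xi^\bot$ in $\mathscr{F}_{0,\xi}$ and the continuous embedding $\mathscr{F}_{m,\xi}\hookrightarrow\mathscr{F}_{0,\xi}$.

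Next, Cauchy--Schwarz applied to the coercivity estimate \eqref{3.5} yields $\|S(\xi,\lambda)u\|_{\mathscr{F}_{0,\xi}}\geq C\|u\|_{\mathscr{F}_{0,\xi}}$, so $S(\xi,\lambda)$ is injective with closed range and $\|S(\xi,\lambda)^{-1}\|_{\mathbb{B}(\mathcal{T}_\xi^\bot;\mathcal{T}_\xi^\bot)}\leq C^{-1}$ uniformly in $(\xi,\lambda)\in\mathbb{T}^{d*}\times I$. To upgrade symmetry to self-adjointness it then suffices to prove surjectivity onto $\mathcal{T}_\xi^\bot$ (a symmetric, bijective operator with bounded inverse is automatically self-adjoint). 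Let $f\in\mathcal{T}_\xi^\bot$ annihilate the range; since $\bb1-\Pi_\xi$ maps $\mathscr{F}_{m,\xi}$ onto $\mathscr{F}_{m,\xi}\cap\mathcal{T}_\xi^\bot$, this condition reads $\big((P-\lambda)w,f\big)_{\mathscr{F}_{0,\xi}}=\big((P-\lambda)\Pi_\xi w,f\big)_{\mathscr{F}_{0,\xi}}$ for every $w\in\mathscr{F}_{m,\xi}$. The right-hand side is continuous in $w$ for the $\mathscr{F}_{0,\xi}$-topology, because $\Pi_\xi$ maps $\mathscr{F}_{0,\xi}$ boundedly into the finite-dimensional smooth subspace $\mathcal{T}_\xi$ on which $P-\lambda$ acts continuously. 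Hence $f$ lies in the domain of $(P-\lambda)^*=P-\lambda$, i.e.\ $f\in\mathscr{F}_{m,\xi}$, and transposing gives $\big(u,(P-\lambda)f\big)_{\mathscr{F}_{0,\xi}}=0$ for every $u\in\mathscr{F}_{m,\xi}\cap\mathcal{T}_\xi^\bot$. Since that space is dense in $\mathcal{T}_\xi^\bot$ (apply $\bb1-\Pi_\xi$ to any $\mathscr{F}_{m,\xi}$-approximation of an element of $\mathcal{T}_\xi^\bot$), we conclude $(P-\lambda)f\in\mathcal{T}_\xi$, hence $S(\xi,\lambda)f=0$, and injectivity forces $f=0$. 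This proves (a), the uniformity being inherited directly from the uniform constant in Lemma~\ref{L.3.3}.

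For (b), set $u:=S(\xi,\lambda)^{-1}g$ so that $(P-\lambda)u=g+\Pi_\xi(P-\lambda)u$. Transposing $P-\lambda$ onto the smooth functions $\psi_j$ yields $\Pi_\xi(P-\lambda)u=\sum_{j=1}^N\big(u,(P-\lambda)\psi_j(\cdot,\xi)\big)_{\mathscr{F}_{0,\xi}}\,\psi_j(\cdot,\xi)$, whose $\mathscr{F}_{0,\xi}$-norm is bounded by $C'\|u\|_{\mathscr{F}_{0,\xi}}\leq C'C^{-1}\|g\|_{\mathscr{F}_{0,\xi}}$, with $C'$ uniform in $(\xi,\lambda)\in\mathbb{T}^{d*}\times I$ by the smoothness and $\Gamma^*$-periodicity of the $\psi_j$ and the compactness of $\mathbb{T}^{d*}\times I$. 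Combined with the uniform elliptic a priori estimate
$$\|u\|_{\mathscr{F}_{m,\xi}}\leq C''\big(\|(P-\lambda)u\|_{\mathscr{F}_{0,\xi}}+\|u\|_{\mathscr{F}_{0,\xi}}\big),$$
which follows from the ellipticity of $p-\lambda$ by a standard parametrix construction uniform over the compact torus, this gives $\|u\|_{\mathscr{F}_{m,\xi}}\leq C'''\|g\|_{\mathscr{F}_{0,\xi}}$ as required. The main obstacle in the whole argument is the surjectivity step of paragraph~2: the domain-of-adjoint trick that promotes $f$ from $\mathscr{F}_{0,\xi}$ to $\mathscr{F}_{m,\xi}$ relies essentially on the smoothness of the $\psi_j$ and the finite rank of $\Pi_\xi$, whereas the coercivity itself and its uniformity in $\xi$ have already been handed to us by Lemma~\ref{L.3.3}.
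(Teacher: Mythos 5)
Your overall strategy is sound and arrives at the same conclusions, but it differs in packaging from the paper's: you prove self-adjointness by establishing that $S(\xi,\lambda)$ is symmetric, closed, and bijective with bounded inverse, whereas the paper proves directly that $\mathcal{D}\big(S(\xi,\lambda)^*\big)\subseteq\mathcal{D}\big(S(\xi,\lambda)\big)$ by decomposing a test vector $w\in\mathscr{F}_{m,\xi}$ into its $\Pi_\xi$- and $(\bb1-\Pi_\xi)$-parts; the latter route is slightly cleaner because it never needs closedness or surjectivity as separate steps (self-adjointness then gives both for free). Your surjectivity argument is genuinely the mirror image of the paper's step: showing that an element $f$ orthogonal to the range must lie in $\mathscr{F}_{m,\xi}$ because $w\mapsto\big((P-\lambda)\Pi_\xi w,f\big)_{\mathscr{F}_{0,\xi}}$ is $\mathscr{F}_{0,\xi}$-continuous is exactly the computation the paper does with $v\in\mathcal{D}\big(S^*\big)$. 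For part (b) you appeal to a uniform parametrix construction, whereas the paper gets the elliptic a priori estimate \eqref{3.12} from the graph-norm equivalence of Lemma \ref{L.A.21} at a fixed $\xi_0$, then propagates it by continuity of $\xi\mapsto P_{\Gamma,\xi}$ and compactness of $\mathbb{T}_*$; both work, but the paper's route avoids constructing a parametrix at all.

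There is, however, a genuine gap in your argument: the claim that closedness of $S(\xi,\lambda)$ is ``immediate from the closedness of $\mathcal{T}_\xi^\bot$ in $\mathscr{F}_{0,\xi}$ and the continuous embedding $\mathscr{F}_{m,\xi}\hookrightarrow\mathscr{F}_{0,\xi}$'' is not correct as stated, and you do need closedness because you use it to upgrade ``dense range'' to ``closed range = everything'' (a densely defined, coercive, symmetric operator with dense range can easily fail to be closed: restrict a self-adjoint $T$ with bounded inverse to a dense, non-closed subspace of $\mathcal{D}(T)$ in the graph norm). The actual proof of closedness requires more: if $u_n\to u$ in $\mathcal{T}_\xi^\bot$ and $S(\xi,\lambda)u_n\to g$, one must recover convergence of $(P-\lambda)u_n$ before invoking closedness of $P-\lambda$, and for that one needs $\Pi_\xi(P-\lambda)u_n=\sum_j\big(u_n,(P-\lambda)\psi_j(\cdot,\xi)\big)_{\mathscr{F}_{0,\xi}}\psi_j(\cdot,\xi)$ to converge, which uses the finite rank of $\Pi_\xi$ together with the fact that the $\psi_j(\cdot,\xi)$ lie in $\mathscr{F}_{m,\xi}$ so that the transposition onto them is legitimate. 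That argument is easy to supply, but it is precisely the ingredient your one-line justification omits; neither closedness of the subspace $\mathcal{T}_\xi^\bot$ nor the continuous embedding $\mathscr{F}_{m,\xi}\hookrightarrow\mathscr{F}_{0,\xi}$ by itself delivers it.
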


\begin{proof}
The operator $S(\xi,\lambda)$ is densly defined by definition and is symmetric on its domain because for any couple $(u,v)\in\big[\mathscr{F}_{m,\xi}\cap\mathcal{T}^\bot_\xi\big]^2$ we can write that:
$$
\left(S(\xi,\lambda)u,v\right)_{\mathscr{F}_{0,\xi}}\ =\ \left(\big(\bb1-\Pi_\xi\big)\big(P-\lambda\big)u,v\right)_{\mathscr{F}_{0,\xi}}\ =\ \left(u,\big(P-\lambda\big)\big(\bb1-\Pi_\xi\big)v\right)_{\mathscr{F}_{0,\xi}}\ =\ 
$$
$$
=\ \left(\big(\bb1-\Pi_\xi\big)u,\big(P-\lambda\big)v\right)_{\mathscr{F}_{0,\xi}}\ =\ \left(u,S(\xi,\lambda)v\right)_{\mathscr{F}_{0,\xi}}.
$$

In order to prove now the self-adjointness of the operator $S(\xi,\lambda)$ let us fix some vector $v\in\mathcal{D}\big(S(\xi,\lambda)^*\big)$; thus we deduce first that $v\in\mathcal{T}^\bot_\xi$ and secondly that there exists a vector $f\in\mathcal{T}^\bot_\xi$ such that $\big(S(\xi,\lambda)u,v\big)_{\mathscr{F}_{0,\xi}}=\big(u,f\big)_{\mathscr{F}_{0,\xi}}$ for any $u\in\mathscr{F}_{m,\xi}\cap\mathcal{T}^\bot_\xi$. For any vector $w\in\mathscr{F}_{m,\xi}$ we can write that
$w=\Pi_\xi w\,+\,\big({\bb1}-\Pi_\xi\big)w$ and 
$$
\Pi_\xi w\,=\ \sum\limits_{j=1}^{N}\big(w,\psi_j(.,\xi)\big)_{\mathscr{F}_{0,\xi}}\psi_j(.,\xi)\in\mathscr{F}_{m,\xi}\cap\mathcal{T}_\xi,\qquad\big(\bb1-\Pi_\xi\big)w\in\mathscr{F}_{m,\xi}\cap\mathcal{T}^\bot_\xi.
$$
Thus we have that
$$
\left(\big(P-\lambda\big)\big(\bb1-\Pi_\xi\big)w,v\right)_{\mathscr{F}_{0,\xi}}\,=\,\left(\big(\bb1-\Pi_\xi\big)\big(P-\lambda\big)\big(\bb1-\Pi_\xi\big)w,v\right)_{\mathscr{F}_{0,\xi}}\,=\,\left(S(\xi,\lambda)\big(\bb1-\Pi_\xi\big)w,v\right)_{\mathscr{F}_{0,\xi}}\,=
$$
$$
=\,\left(\big(\bb1-\Pi_\xi\big)w,f\right)_{\mathscr{F}_{0,\xi}}\,=\,\left(w,f\right)_{\mathscr{F}_{0,\xi}},
$$
and
$$
\left(\big(P-\lambda\big)\Pi_\xi w,v\right)_{\mathscr{F}_{0,\xi}}\,=\,\sum\limits_{j=1}^{N}\big(w,\psi_j(.,\xi)\big)_{\mathscr{F}_{0,\xi}}\left(\big(P-\lambda\big)\psi_j(.,\xi),v\right)_{\mathscr{F}_{0,\xi}}\,=\,\left(w,f_0(.,\xi)\right)_{\mathscr{F}_{0,\xi}}
$$
where $f_0(.,\xi):=\sum\limits_{j=1}^{N}\left(\big(P-\lambda\big)\psi_j(.,\xi),v\right)_{\mathscr{F}_{0,\xi}}\psi_j(.,\xi)\in\mathcal{T}_\xi$. In conclusion we have that for anny $w\in\mathscr{F}_{m,\xi}$:
$$
\left(\big(P-\lambda\big)w,v\right)_{\mathscr{F}_{0,\xi}}\ =\ \left(w,f+f_0(.,\xi)\right)_{\mathscr{F}_{0,\xi}},\qquad f+f_0(.,\xi)\in\mathscr{F}_{0,\xi}.
$$
Recalling that $P-\lambda$ is self-adjoint in $\mathscr{F}_{0,\xi}$ with domain $\mathscr{F}_{m,\xi}$ we conclude that $v\in\mathscr{F}_{m,\xi}\cap\mathcal{T}^\bot_\xi=\mathcal{D}\big(S(\xi,\lambda)\big)$.

The invertibility of $S(\xi,\lambda)$ follows from \eqref{3.5} noticing that
\begin{equation}\label{3.9}
{\left(S(\xi,\lambda)u,u\right)_{\mathscr{F}_{0,\xi}}\ \geq\ C\|u\|^2_{\mathscr{F}_{0,\xi}}\qquad\forall u \in\mathscr{F}_{m,\xi}\cal\mathcal{T}^\bot_\xi,\ \forall\xi\in\mathbb{T}}^{*d},\ \forall\lambda\in I.
\end{equation}
From this last inequality follows also that $S(\xi,\lambda)^{-1}\in\mathbb{B}\big(\mathcal{T}^\bot_\xi;\mathscr{F}_{m,\xi}\big)$ uniformly with respect to $(\xi,\lambda)\in\mathbb{T}_*\times I$.

b) For any fixed $\xi_0\in\mathbb{T}_*$ and $\lambda_0\in I$ we know that $P_{\Gamma,\xi_0}-\lambda_0$ is self-adjoint in $\mathcal{K}_{0,\xi_0}$ on the domain $\mathcal{K}_{m,\xi_0}$ and that the Hilbert norm on $\mathcal{K}_{m,\xi_0}$ is equivalent with the graph-norm of $P_{\Gamma,\xi_0}-\lambda_0$. It exists thus a constant $C_0>0$ such that
\begin{equation}\label{3.10}
\|v\|_{\mathcal{K}_{m,\xi_0}}\ \leq\ C_0\left(\|v\|_{\mathcal{K}_{0,\xi_0}}\,+\,\left\|\big(P_{\Gamma,\xi_0}-\lambda_0\big)v\right\|_{\mathcal{K}_{0,\xi_0}}\right),\qquad\forall v\in\mathcal{K}_{m,\xi_0}.
\end{equation}
Taking into account the Example \ref{E.A.20} we know that the application $\X^*\ni\xi\mapsto P_{\Gamma,\xi}\in\mathbb{B}\big(\mathcal{K}_{m,0};\mathcal{K}_{0,0}\big)$ is of class $C^\infty$. Noticing that 
$$
\left\|\big(P_{\Gamma,\xi_0}-\lambda_0\big)v\,-\,\big(P_{\Gamma,\xi}-\lambda\big)v\right\|_{\mathcal{K}_{0,\xi_0}}\ \leq\ C\left\|P_{\Gamma,\xi_0}\,-\,P_{\Gamma,\xi}\right\|_{\mathbb{B}(\mathcal{K}_{m,0};\mathcal{K}_{0,0})}\|v\|_{\mathcal{K}_{m,\xi_0}}\,+\,|\lambda-\lambda_0|\,\|v\|_{\mathcal{K}_{0,\xi_0}},
$$
for any $(\xi,\xi_0)\in\mathbb{T}_*\times\mathbb{T}_*$, any $(\lambda,\lambda_0)\in I\times I$ and any $v\in\mathcal{K}_{m,\xi_0}$, we deduce that there exist a constant $C^\prime_0>0$ and a neighborhood $V_0$ of $\xi_0\in\mathbb{T}_*$ such that
\begin{equation}\label{3.11}
\|v\|_{\mathcal{K}_{m,\xi}}\ \leq\ C^\prime_0\left(\|v\|_{\mathcal{K}_{0,\xi}}\,+\,\left\|\big(P_{\Gamma,\xi}-\lambda\big)v\right\|_{\mathcal{K}_{0,\xi}}\right),\qquad\forall(\xi,\lambda)\in V_0\times I,\ \forall v\in\mathcal{K}_{m,\xi}.
\end{equation}
The manifold $\mathbb{T}_*$ being compact we can find a finite cover with neighborhoods of type $V_0$ as above and we conclude that \eqref{3.11} is true for any $\xi\in\mathbb{T}_*$ with a suitable constant $C^\prime_0$.

Considering now some vector $u\in\mathscr{F}_{m,\xi}$ and denoting by $v:=\sigma_{-\xi}u\in\mathcal{K}_{m,\xi}$ we deduce from \eqref{3.11} that
\begin{equation}\label{3.12}
\|u\|_{\mathscr{F}_{m,\xi}}\ \leq\ C^\prime_0\left(\|u\|_{\mathscr{F}_{0,\xi}}\,+\,\left\|\big(P-\lambda\big)u\right\|_{\mathscr{F}_{0,\xi}}\right),\qquad\forall(\xi,\lambda)\in\mathbb{T}_*\times I,\ \forall u\in\mathscr{F}_{m,\xi}.
\end{equation}

Considering now $u\in\mathscr{F}_{m,\xi}\cap\mathcal{T}^\bot_\xi=\mathcal{D}\big(S(\xi,\lambda)\big)$ we can write that
$$
\big(P-\lambda\big)u\,=\,S(\xi,\lambda)u\,+\,\Pi_\xi\big(P-\lambda\big)u\,=\,S(\xi,\lambda)u\,+\,\sum\limits_{j=1}^{N}\left(u,\big(P-\lambda\big)\psi_j(.,\xi)\right)_{\mathscr{F}_{0,\xi}}\psi_j(.,\xi),
$$
and we know that the norm in $\mathscr{F}_{0,\xi}$ of the second term on the right (the finite sum over $j$) can be bounded by a constant that does not depend on $\xi\in\mathbb{T}^{d*}$ multiplied by $\|u\|_{\mathscr{F}_{0,\xi}}$. Using \eqref{3.12} we deduce that there exists a constant $C>0$ such that
$$
\|u\|_{\mathscr{F}_{m,\xi}}\ \leq\ C\left(\|u\|_{\mathscr{F}_{0,\xi}}\,+\,\left\|S(\xi,\lambda)u\right\|_{\mathscr{F}_{0,\xi}}\right)\qquad\forall(\xi,\lambda)\in\mathbb{T}_*\times I,\ \forall u\in\mathscr{F}_{m,\xi}\cap\mathcal{T}^\bot_\xi.
$$
This inequality clearly implies point (b) of the Lemma.
\end{proof}

Let us define now the following operators associated to the family of functions $\{\psi_j\}_{1\leq j\leq N}$ introduced above. For any $\xi\in\mathbb{T}_*$ we define:
\begin{equation}\label{3.13}
\forall u\in\mathscr{F}_{0,\xi},\qquad\widetilde{R}_+(\xi)u\ :=\ \left\{\big(u,\psi_j(.,\xi)\big)_{\mathscr{F}_{0,\xi}}\right\}_{1\leq j\leq N}\in\mathbb{C}^N;
\end{equation}
\begin{equation}\label{3.14}
\forall\underline{u}:=\{\underline{u}_1,\ldots,\underline{u}_N\}\in\mathbb{C}^N,\qquad\widetilde{R}_-(\xi)\underline{u}:=\sum\limits_{j=1}^{N}\underline{u}_j\psi_j(.,\xi)\in\mathscr{F}_{0,\xi}.
\end{equation}
Evidently we have that $\forall\xi\in\mathbb{T}_*$, $\widetilde{R}_+(\xi)\in\mathbb{B}\big(\mathscr{F}_{0,\xi};\mathbb{C}^N\big)$ and $\widetilde{R}_-(\xi)\in\mathbb{B}\big(\mathbb{C}^N;\mathscr{F}_{0,\xi}\big)$.

We can define now the Grushin type operator associated to our operator $P$:
\begin{equation}\label{3.15}
Q(\xi,\lambda)\ :=\ \left(
\begin{array}{cc}
P-\lambda&\widetilde{R}_-(\xi)\\
\widetilde{R}_+(\xi)&0
\end{array}
\right),
\end{equation}
that due to our previous results belongs to $\mathbb{B}\big(\mathscr{F}_{m,\xi}\times\mathbb{C}^N;\mathscr{F}_{0,\xi}\times\mathbb{C}^N\big)$ uniformly with respect to $(\xi,\lambda)\in\mathbb{T}_*\times I$.

\begin{lemma}\label{L.3.5}
For any values of $(\xi,\lambda)\in\mathbb{T}_*\times I$ the operator $Q(\xi,\lambda)$ acting as an unbounded linear operator in the Hilbert space $\mathscr{F}_{0,\xi}\times\mathbb{C}^N$ is self-adjoint on the domain $\mathscr{F}_{m,\xi}\times\mathbb{C}^N$.
\end{lemma}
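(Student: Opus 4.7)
The plan is to write $Q(\xi,\lambda) = Q_0(\xi,\lambda) + V(\xi)$, where $Q_0(\xi,\lambda)$ is the diagonal block $\mathrm{diag}(P-\lambda,\,0)$ acting on $\mathscr{F}_{m,\xi}\times\mathbb{C}^N$, and $V(\xi)$ is the off-diagonal block $\begin{pmatrix}0 & \widetilde{R}_-(\xi)\\ \widetilde{R}_+(\xi) & 0\end{pmatrix}$, and then apply the Kato--Rellich theorem (a bounded symmetric operator added to a self-adjoint one is self-adjoint on the same domain).

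First, I would show that $Q_0(\xi,\lambda)$ is self-adjoint in $\mathscr{F}_{0,\xi}\times\mathbb{C}^N$ with domain $\mathscr{F}_{m,\xi}\times\mathbb{C}^N$. This is a direct consequence of Remark \ref{R.3.2}, which gives that $P$ (hence $P-\lambda$) is self-adjoint in $\mathscr{F}_{0,\xi}$ on the domain $\mathscr{F}_{m,\xi}$, together with the trivial fact that the zero operator is self-adjoint on $\mathbb{C}^N$; a diagonal direct sum of self-adjoint operators is self-adjoint on the product domain.

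Next, I would verify that $V(\xi)$ is bounded and symmetric on $\mathscr{F}_{0,\xi}\times\mathbb{C}^N$. Boundedness is immediate since $\widetilde{R}_+(\xi)\in\mathbb{B}(\mathscr{F}_{0,\xi};\mathbb{C}^N)$ and $\widetilde{R}_-(\xi)\in\mathbb{B}(\mathbb{C}^N;\mathscr{F}_{0,\xi})$. Symmetry reduces to the identity $\widetilde{R}_-(\xi)^*=\widetilde{R}_+(\xi)$, which follows from the definitions \eqref{3.13}--\eqref{3.14}: for any $\underline{c}\in\mathbb{C}^N$ and $u\in\mathscr{F}_{0,\xi}$,
\[
\bigl(\widetilde{R}_-(\xi)\underline{c},u\bigr)_{\mathscr{F}_{0,\xi}} \;=\; \sum_{j=1}^{N}\underline{c}_j\,\overline{\bigl(u,\psi_j(\cdot,\xi)\bigr)_{\mathscr{F}_{0,\xi}}} \;=\; \bigl(\underline{c},\widetilde{R}_+(\xi)u\bigr)_{\mathbb{C}^N}.
\]

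Finally, applying Kato--Rellich to $Q(\xi,\lambda)=Q_0(\xi,\lambda)+V(\xi)$ yields that $Q(\xi,\lambda)$ is self-adjoint on $\mathscr{F}_{m,\xi}\times\mathbb{C}^N$. I do not anticipate a genuine obstacle here; the only point that requires a little care is being pedantic about domains (the off-diagonal perturbation maps $\mathscr{F}_{0,\xi}\times\mathbb{C}^N$, and in particular the subspace $\mathscr{F}_{m,\xi}\times\mathbb{C}^N$, into $\mathscr{F}_{0,\xi}\times\mathbb{C}^N$, as required), and checking that the orthonormality of $\{\psi_j(\cdot,\xi)\}$ from Lemma \ref{L.3.3}(c) gives control of the norm of $V(\xi)$ uniformly in $(\xi,\lambda)\in\mathbb{T}_*\times I$, which is useful for the sequel even though it is not strictly needed for self-adjointness at a single $(\xi,\lambda)$.
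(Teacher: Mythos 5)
Your proof is correct. The paper does not invoke Kato--Rellich explicitly; instead it proves symmetry (via the same adjoint identity $\widetilde{R}_+(\xi)^*=\widetilde{R}_-(\xi)$ that you use) and then directly verifies that $\mathcal{D}\big(Q(\xi,\lambda)^*\big)\subset\mathscr{F}_{m,\xi}\times\mathbb{C}^N$: taking $(v,\underline{v})$ in the adjoint domain, testing against vectors of the form $(u,0)$ with $u\in\mathscr{F}_{m,\xi}$, and concluding $v\in\mathscr{F}_{m,\xi}$ from the self-adjointness of $P-\lambda$. That computation is, in effect, the proof of the bounded-perturbation (Kato--Rellich) theorem unrolled in this specific case, so the two routes rest on identical ingredients --- self-adjointness of $P$ on $\mathscr{F}_{m,\xi}$, boundedness of $\widetilde{R}_\pm(\xi)$, and the mutual-adjointness relation. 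Your version buys brevity and reusability by citing the abstract theorem (once you have named the decomposition $Q=Q_0+V$, the result is immediate), while the paper's version is more self-contained and is perhaps preferred here because the same direct domain-chasing pattern recurs in Propositions \ref{P.1.18} and \ref{P.2.13}, so the authors keep a uniform style. One small remark: your final paragraph suggests that orthonormality of $\{\psi_j(\cdot,\xi)\}$ is what controls $\|V(\xi)\|$ uniformly; in fact finiteness of $N$ already gives boundedness of $V(\xi)$ at each fixed $\xi$, and the orthonormality matters mainly for the injectivity of $\widetilde{R}_-(\xi)$ and the uniform bounds used later in Lemma \ref{L.3.6}, not for the self-adjointness statement itself.
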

\begin{proof}
We know that $P$ is self-adjoint in $\mathscr{F}_{0,\xi}$ with domain $\mathscr{F}_{m,\xi}$ and it is easy to see that $\widetilde{R}_+(\xi)^*=\widetilde{R}_-(\xi)$ so that we conclude that $Q(\xi,\lambda)$ is symmetric on $\mathscr{F}_{m,\xi}\times\mathbb{C}^N$.

Let us consider now a pair $(v,\underline{v})\in\mathcal{D}\big(Q(\xi,\lambda)^*\big)$; this means that $v\in\mathscr{F}_{0,\xi}$, $\underline{v}\in\mathbb{C}^N$ and there exists a pair $(f,\underline{f})\in\mathscr{F}_{0,\xi}\times\mathbb{C}^N$ such that we have
$$
\left(Q(\xi,\lambda)\left(
\begin{array}{c}
u\\
\underline{u}
\end{array}
\right)
\begin{array}{c}
\\
,
\end{array}
\left(
\begin{array}{c}
v\\
\underline{v}
\end{array}
\right)\right)_{\mathscr{F}_{0,\xi}\times\mathbb{C}^N}\ =\ 
\left(\left(
\begin{array}{c}
u\\
\underline{u}
\end{array}
\right)
\begin{array}{c}
\\
,
\end{array}
\left(
\begin{array}{c}
f\\
\underline{f}
\end{array}
\right)\right)_{\mathscr{F}_{0,\xi}\times\mathbb{C}^N}.
$$
Considering the case $\underline{u}=0$ we get that
$$
\left(\big(P-\lambda\big)u,v\right)_{\mathscr{F}_{0,\xi}}\ =\ \big(u,f\big)_{\mathscr{F}_{0,\xi}}\,-\,\big(\widetilde{R}_+(\xi)u,\underline{v}\big)_{\mathbb{C}^N}\ =\ \big(u,f-\widetilde{R}_-(\xi)\underline{v}\big)_{\mathscr{F}_{0,\xi}},\qquad\forall u\in\mathscr{F}_{m,\xi}.
$$
Taking into account the self-adjointness of the operator $P-\lambda$ in $\mathscr{F}_{0,\xi}$ on the domain $\mathscr{F}_{m,\xi}$ we may deduce that in fact $v$ belongs to $\mathscr{F}_{m,\xi}$ and thus $(v,\underline{v})\in\mathcal{D}\big(Q(\xi,\lambda)\big)$.
\end{proof}

\begin{lemma}\label{L.3.6}
The operator $Q(\xi,\lambda)$ defined in \eqref{3.15} is bijective and has an inverse $Q(\xi,\lambda)^{-1}\in\mathbb{B}\big(\mathscr{F}_{0,\xi}\times\mathbb{C}^N;\mathscr{F}_{m,\xi}\times\mathbb{C}^N\big)$ uniformly with respect to $(\xi,\lambda)\in\mathbb{T}_*\times I$.
\end{lemma}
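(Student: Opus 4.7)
The plan is to solve the Grushin system $Q(\xi,\lambda)\binom{u}{\underline u}=\binom{v}{\underline v}$ explicitly by splitting $\mathscr{F}_{0,\xi}=\mathcal{T}_\xi\oplus\mathcal{T}_\xi^\bot$ and using that all ingredients that appear are uniformly controlled on $\mathbb{T}_*\times I$. The bijectivity will follow from writing down a two-sided inverse in closed form, and the uniform bound then reduces to combining the bounds already established.

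The first observation is the pair of algebraic identities, immediate from \eqref{3.13}, \eqref{3.14} and the orthonormality of the family $\{\psi_j(\cdot,\xi)\}_{1\le j\le N}$: namely $\widetilde{R}_+(\xi)\widetilde{R}_-(\xi)=\bb1_{\mathbb{C}^N}$ and $\widetilde{R}_-(\xi)\widetilde{R}_+(\xi)=\Pi_\xi$. Writing the system as
\begin{equation*}
(P-\lambda)u+\widetilde{R}_-(\xi)\underline u=v,\qquad \widetilde{R}_+(\xi)u=\underline v,
\end{equation*}
the second equation forces $\Pi_\xi u=\widetilde{R}_-(\xi)\widetilde{R}_+(\xi)u=\widetilde{R}_-(\xi)\underline v$. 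Applying $\bb1-\Pi_\xi$ to the first equation kills $\widetilde{R}_-(\xi)\underline u\in\mathcal{T}_\xi$ and leaves, for $u_\bot:=(\bb1-\Pi_\xi)u$, the equation
\begin{equation*}
S(\xi,\lambda)u_\bot\ =\ (\bb1-\Pi_\xi)\bigl[v-(P-\lambda)\widetilde{R}_-(\xi)\underline v\bigr].
\end{equation*}
By Lemma \ref{L.3.4}(a) this determines $u_\bot$ uniquely in $\mathcal{T}_\xi^\bot$, and by Lemma \ref{L.3.4}(b) in fact $u_\bot\in\mathscr{F}_{m,\xi}\cap\mathcal{T}_\xi^\bot$; combining with $\Pi_\xi u=\widetilde{R}_-(\xi)\underline v\in\mathcal{T}_\xi\subset\mathscr{F}_{m,\xi}$ (the $\psi_j$ are smooth) one obtains a unique $u\in\mathscr{F}_{m,\xi}$. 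Applying $\Pi_\xi$ to the first equation and using $\widetilde{R}_+(\xi)\widetilde{R}_-(\xi)=\bb1_{\mathbb{C}^N}$ finally yields $\underline u=\widetilde{R}_+(\xi)\bigl[v-(P-\lambda)u\bigr]$.

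The resulting map $(v,\underline v)\mapsto(u,\underline u)$ is the candidate for $Q(\xi,\lambda)^{-1}$; that it really is a two-sided inverse is a direct algebraic check, using once more the four identities $\widetilde{R}_+\widetilde{R}_-=\bb1_{\mathbb{C}^N}$, $\widetilde{R}_-\widetilde{R}_+=\Pi_\xi$, $(\bb1-\Pi_\xi)\widetilde{R}_-(\xi)=0$, and the definition $S(\xi,\lambda)=(\bb1-\Pi_\xi)(P-\lambda)$ on $\mathscr{F}_{m,\xi}\cap\mathcal{T}_\xi^\bot$. Uniformity in $(\xi,\lambda)\in\mathbb{T}_*\times I$ is then read off ingredient by ingredient: $\widetilde{R}_\pm(\xi)$ are partial isometries with norm $1$, $\Pi_\xi$ is a projection of norm $1$, $P-\lambda\in\mathbb{B}(\mathscr{F}_{m,\xi};\mathscr{F}_{0,\xi})$ uniformly (Remark \ref{R.3.2}, together with $\lambda$ ranging in a compact set), and $S(\xi,\lambda)^{-1}\in\mathbb{B}(\mathcal{T}_\xi^\bot;\mathscr{F}_{m,\xi})$ uniformly by Lemma \ref{L.3.4}(b). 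Concatenating these bounds yields the stated estimate $Q(\xi,\lambda)^{-1}\in\mathbb{B}\bigl(\mathscr{F}_{0,\xi}\times\mathbb{C}^N;\mathscr{F}_{m,\xi}\times\mathbb{C}^N\bigr)$.

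There is no real obstacle here, only a delicacy: one must verify that the component $u_\bot$ produced by $S(\xi,\lambda)^{-1}$ lands in $\mathscr{F}_{m,\xi}$ and not merely in $\mathscr{F}_{0,\xi}$, so that $u=\Pi_\xi u+u_\bot$ belongs to the correct domain and the expression $(P-\lambda)u$ in the formula for $\underline u$ makes sense in $\mathscr{F}_{0,\xi}$. This is precisely what Lemma \ref{L.3.4}(b) provides, and it is the reason the uniform estimate in the $\mathscr{F}_{m,\xi}$ norm (and not merely in the $\mathscr{F}_{0,\xi}$ norm) is available.
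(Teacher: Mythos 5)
Your proposal is correct and follows essentially the same route as the paper: decompose $u=\Pi_\xi u+u_\bot$ (the paper's $u_1+u_2$), solve for $u_\bot$ via $S(\xi,\lambda)^{-1}$ using Lemma \ref{L.3.4}, recover $\underline{u}$ by projecting the first equation, and concatenate uniform bounds. The only cosmetic difference is that you derive the inverse formula directly (which encodes injectivity and surjectivity at once), whereas the paper checks injectivity and then constructs a preimage separately.
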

\begin{proof}
Let us first prove the injectivity. Let us choose $u\in\mathscr{F}_{m,\xi}$ and $\underline{u}\in\mathbb{C}^N$ verifying the following equations:
\begin{equation}\label{3.16}
\left\{
\begin{array}{l}
\big(P-\lambda\big)u\,+\,\widetilde{R}_-(\xi)\underline{u}\ =\ 0,\\
\\
\widetilde{R}_+(\xi)u\ =\ 0.
\end{array}\right.
\end{equation}
The second equality in \eqref{3.16} implies that $u\in\mathcal{T}^\bot_\xi$. As by definition we have that $\widetilde{R}_-(\xi)\underline{u}\in\mathcal{T}_\xi$, the first equality in \eqref{3.16} implies that $\big(\bb1-\Pi_\xi\big)\big(P-\lambda\big)u=0$, or equivalently that $S(\xi,\lambda)u=0$. Taking now into account Lemma \ref{L.3.4} it follows that $u=0$. Now, the first equality in \eqref{3.16} implies that we also have $\widetilde{R}_-(\xi)\underline{u}=0$; but the linear independence of the system of functions $\{\psi_j(.,\xi)\}_{1\leq j\leq N}$ implies that the operator $\widetilde{R}_-(\xi)$ is injective and thus we deduce that we also have $\underline{u}=0$.

Let us consider now the surjectivity of the operator $Q(\xi,\lambda)$. Thus let us choose an arbitrary pair $(v,\underline{v})\in\mathscr{F}_{0,\xi}\times\mathbb{C}^N$ and let us search for a pair $(u,\underline{u})\in\mathscr{F}_{m,\xi}\times\mathbb{C}^N$ such that the following equalities are true:
\begin{equation}\label{3.17}
\left\{
\begin{array}{l}
\big(P-\lambda\big)u\,+\,\widetilde{R}_-(\xi)\underline{u}\ =\ v,\\
\\
\widetilde{R}_+(\xi)u\ =\ \underline{v}.
\end{array}
\right.
\end{equation}
Let us denote by $u_1:=\sum\limits_{j=1}^{N}\underline{v}_j\psi_j(.,\xi)$ so that by definition we have that
\begin{equation}\label{3.18}
\widetilde{R}_+(\xi)u_1\ =\ \underline{v},\quad u_1\in\mathscr{F}_{m,\xi}\cap\mathcal{T}_\xi,
\end{equation}
\begin{equation}\label{3.19}
\|u_1\|_{\mathscr{F}_{m,\xi}}\ \leq\ C\|\underline{v}\|_{\mathbb{C}^N},\quad\forall\xi\in\mathbb{T}_*.
\end{equation}
Thus we have to search now for a pair $(u_2,\underline{u})\in\big[\mathscr{F}_{m,\xi}\cap\mathcal{T}^\bot_\xi\big]\times\mathbb{C}^N$ that should verify the equality:
\begin{equation}\label{3.20}
\big(P-\lambda\big)u_2\,+\,\widetilde{R}_-(\xi)\underline{u}\ =\ v\,-\,\big(P-\lambda\big)u_1\ \in\ \mathscr{F}_{0,\xi}.
\end{equation}
In fact, as we have by definition that $\widetilde{R}_+(\xi)u_2=0$, the relations \eqref{3.18} and \eqref{3.20} imply directly that $u=u_1+u_2$ and $\underline{u}$ is the solution we looked for. 

Let us project the equation \eqref{3.20} both on $\mathcal{T}_\xi$ and on its complement $\mathcal{T}^\bot_\xi$ to obtain
\begin{equation}\label{3.21}
S(\xi,\lambda)u_2\ =\ \big(\bb1-\Pi_\xi\big)\big(v\,-\,\big(P-\lambda\big)u_1\big),
\end{equation}
\begin{equation}\label{3.22}
\widetilde{R}_-(\xi)\underline{u}\ =\ \Pi_\xi\big(v-\big(P-\lambda\big)(u_1+u_2)\big).
\end{equation}
Taking into account Lemma \ref{L.3.4}, it follows that equation \eqref{3.21} has a unique solution $u_2\in\mathscr{F}_{m,\xi}\cap\mathcal{T}^\bot_\xi$ and we have the estimation:
\begin{equation}\label{3.23}
\|u_2\|_{\mathscr{F}_{m,\xi}}\ \leq\ C\left\|\big(\bb1-\Pi_\xi\big)\big(v-\big(P-\lambda\big)u_1\big)\right\|_{\mathscr{F}_{0,\xi}}\ \leq\ C\left(\|v\|_{\mathscr{F}_{0,\xi}}\,+\,\|u_1\|_{\mathscr{F}_{m,\xi}}\right)\ \leq C\left(\|v\|_{\mathscr{F}_{0,\xi}}\,+\,\|\underline{v}\|_{\mathbb{C}^N}\right)
\end{equation}
for any $(\xi,\lambda)\in\mathbb{T}_*\times I$.

It is very easy to see that equation \eqref{3.22} always has a unique solution $\underline{u}\in\mathbb{C}^N$ given explicitely by:
$$
\underline{u}_j\ :=\ \big(v-\big(P-\lambda\big)(u_1+u_2)\,,\,\psi_j(.,\xi)\big)_{\mathscr{F}_{0,\xi}}.
$$
From this explicit expression we easily get the following estimation:
\begin{equation}\label{3.24}
\|\underline{u}\|_{\mathbb{C}^N}\ \leq\ C\left(\|v\|_{\mathscr{F}_{0,\xi}}\,+\,\left\|\big(P-\lambda\big)(u_1+u_2)\right\|_{\mathscr{F}_{0,\xi}}\right)\ \leq\ C\left(\|v\|_{\mathscr{F}_{0,\xi}}\,+\,\|u_1\|_{\mathscr{F}_{m,\xi}}\,+\,\|u_2\|_{\mathscr{F}_{m,\xi}}\right)\ \leq
\end{equation}
$$
\leq\ C\left(\|v\|_{\mathscr{F}_{0,\xi}}\,+\,\|\underline{v}\|_{\mathbb{C}^N}\right),\qquad\forall(\xi,\lambda)\in\mathbb{T}^{d*}\times I.
$$

In conclusion we have proved the surjectivity of the operator $Q(\xi,\lambda)$ for any values of $(\xi,\lambda)\in\mathbb{T}_*\times I$ and we finish by noticing that the inequalities \eqref{3.19}, \eqref{3.23} and \eqref{3.24} imply the boundedness of the operator $Q(\xi,\lambda)^{-1}$ uniformly with respect to $(\xi,\lambda)\in\mathbb{T}_*\times I$.
\end{proof}

We define now the following family of $N$ functions:
\begin{equation}\label{3.25}
\phi_j(x,\xi)\ :=\ e^{-i<\xi,x>}\psi_j(x,\xi),\qquad\forall(x,\xi)\in\Xi,\ 1\leq j\leq N,
\end{equation}
with the family $\{\psi_j\}_{1\leq j\leq N}$ defined in Lemma \ref{L.3.3}.

\begin{lemma}\label{L.3.7}
The functions $\{\phi_j\}_{1\leq j\leq N}$ defined in \eqref{3.25} have the following properties:
\begin{enumerate}
\item[a)] $\phi_j\in C^\infty(\Xi)$;
\item[b)] $\phi_j(x+\gamma,\xi)=\phi_j(x,\xi),\qquad\forall(x,\xi)\in\Xi,\ \forall\gamma\in\Gamma$;
\item[c)] $\phi_j(x,\xi+\gamma^*)=e^{-i<\gamma^*,x>}\phi_j(x,\xi),\qquad\forall(x,\xi)\in\Xi,\ \forall\gamma^*\in\Gamma^*$;
\item[d)] For any $\alpha\in\mathbb{N}^d$ and any $s\in\mathbb{R}$ there exists a strictly positive constant $C_{\alpha,s}$ such that:
\begin{equation}\label{3.26}
\left\|\big(\partial^\alpha_\xi\phi_j\big)(.,\xi)\right\|_{\mathcal{K}_{s,\xi}}\ \leq\ C_{\alpha,s},\qquad\forall\xi\in\X^*.
\end{equation}
\end{enumerate}
\end{lemma}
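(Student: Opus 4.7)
Parts (a), (b), (c) are direct computations from the definition $\phi_j(x,\xi):=e^{-i<\xi,x>}\psi_j(x,\xi)$ together with the properties of $\{\psi_j\}_{1\leq j\leq N}$ established in Lemma \ref{L.3.3}. For (a) smoothness follows because $\psi_j\in C^\infty(\Xi)$ (Lemma \ref{L.3.3} (a)) and $(x,\xi)\mapsto e^{-i<\xi,x>}$ is smooth. For (b), I will use that $\psi_j(\cdot,\xi)\in\mathscr{F}_{0,\xi}$ for every $\xi$, which means $\psi_j(x+\gamma,\xi)=e^{i<\xi,\gamma>}\psi_j(x,\xi)$; then
$$\phi_j(x+\gamma,\xi)=e^{-i<\xi,x+\gamma>}\psi_j(x+\gamma,\xi)=e^{-i<\xi,\gamma>}e^{i<\xi,\gamma>}\,e^{-i<\xi,x>}\psi_j(x,\xi)=\phi_j(x,\xi).$$
For (c), by Lemma \ref{L.3.3} (b) we have $\psi_j(x,\xi+\gamma^*)=\psi_j(x,\xi)$, so
$$\phi_j(x,\xi+\gamma^*)=e^{-i<\xi+\gamma^*,x>}\psi_j(x,\xi)=e^{-i<\gamma^*,x>}\phi_j(x,\xi).$$

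The substantive statement is (d). The plan has three ingredients. First, by part (b) each function $\phi_j(\cdot,\xi)$ is $\Gamma$-periodic, hence it belongs to all spaces $\mathcal{K}_{s,\eta}$ for every $\eta\in\X^*$ (being smooth on the torus). Second, I claim the map
$$\X^*\ni\xi\longmapsto N_{\alpha,s}(\xi):=\bigl\|\partial^\alpha_\xi\phi_j(\cdot,\xi)\bigr\|_{\mathcal{K}_{s,\xi}}$$
is $\Gamma^*$-periodic. Indeed, differentiating the identity in (c) with respect to $\xi$ yields
$$\partial^\alpha_\xi\phi_j(x,\xi+\gamma^*)=e^{-i<\gamma^*,x>}\partial^\alpha_\xi\phi_j(x,\xi),$$
and using the intertwining $e^{-i<\gamma^*,\cdot>}(D+\xi+\gamma^*)=(D+\xi)e^{-i<\gamma^*,\cdot>}$ together with the identification of the norm in $\mathcal{K}_{s,\eta}$ with $\|<D+\eta>^s\cdot\|_{L^2(E)}$ (Remark \ref{R.3.2}), I obtain $N_{\alpha,s}(\xi+\gamma^*)=N_{\alpha,s}(\xi)$ since multiplication by the unimodular factor $e^{-i<\gamma^*,\cdot>}$ preserves the $L^2(E)$-norm.

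Third, I will show that $N_{\alpha,s}$ is continuous on $\X^*$; combined with $\Gamma^*$-periodicity this gives a uniform bound on the compact torus $\mathbb{T}_*$, hence on all of $\X^*$, as required. Continuity follows from the Fourier-series representation of the $\Gamma$-periodic smooth function $\phi_j(\cdot,\xi)$: writing
$$\partial^\alpha_\xi\phi_j(x,\xi)=\sum_{\gamma^*\in\Gamma_*}\hat{c}^{(\alpha)}_j(\gamma^*,\xi)\,e^{i<\gamma^*,x>},$$
the smoothness of $\phi_j$ on $\mathbb{T}\times\X^*$ guarantees that each coefficient $\hat{c}^{(\alpha)}_j(\gamma^*,\xi)$ is smooth in $\xi$ and that, jointly with its $\xi$-derivatives, it decays faster than any polynomial in $\gamma^*$, locally uniformly in $\xi$. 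Then
$$N_{\alpha,s}(\xi)^2=|E|\sum_{\gamma^*\in\Gamma_*}<\gamma^*+\xi>^{2s}\bigl|\hat{c}^{(\alpha)}_j(\gamma^*,\xi)\bigr|^2$$
is a normally convergent series of continuous functions of $\xi$, hence continuous. The main (and really only) obstacle is ensuring this uniform control of the Fourier tails as $\xi$ ranges over a compact fundamental domain, but this is immediate from the smoothness of $\phi_j$ on $\mathbb{T}\times\X^*$ combined with integration by parts in the $x$ variable.
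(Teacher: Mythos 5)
Your proof is correct and is essentially the paper's argument: parts (a)–(c) are the same direct computations, and for (d) both you and the paper expand $\partial^\alpha_\xi\phi_j(\cdot,\xi)$ in a Fourier series on $\mathbb{T}$, use integration by parts in $x$ to get rapid decay of the coefficients, and exploit the intertwining $e^{-i\langle\gamma^*,\cdot\rangle}\langle D+\xi+\gamma^*\rangle^s=\langle D+\xi\rangle^s e^{-i\langle\gamma^*,\cdot\rangle}$ to reduce to a fundamental domain. The only cosmetic difference is that the paper obtains the uniform bound on $E^*$ directly from the decaying Fourier coefficients via Plancherel, whereas you deduce it from continuity of $N_{\alpha,s}$ plus compactness of $\mathbb{T}_*$ — the same estimates in a slightly different package.
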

\begin{proof}
The properties (a), (b) and (c) follow easily from Lemma \ref{L.3.3} and the definition \eqref{3.25}. In order to prove property (d) we consider \eqref{A.22} and write:
\begin{equation}\label{3.27}
\left\|\big(\partial^\alpha_\xi\phi_j\big)(.,\xi)\right\|_{\mathcal{K}_{s,\xi}}^2\ \leq\ \frac{1}{|E|}\sum\limits_{\gamma^*\in\Gamma^*}<\xi+\gamma^*>^{2s}\left|\big(\widehat{\partial^\alpha_\xi\phi}_j\big)(\gamma^*,\xi)\right|^2,
\end{equation}
where we have used the notation:
\begin{equation}\label{3.28}
\big(\widehat{\partial^\alpha_\xi\phi}_j\big)(\gamma^*,\xi)\ :=\ \frac{1}{|E|}\int_Ee^{-i<\gamma^*,y>}\big(\partial^\alpha_\xi\phi_j\big)(y,\xi)\,dy.
\end{equation}
We recall once again the identity 
$$
e^{-i<\gamma^*,y>}\ =\ <\gamma^*>^{-2l}\big(\bb1-\Delta_y\big)^le^{-i<\gamma^*,y>},\qquad\forall l\in\mathbb{N},
$$
and taking into account the $\Gamma$-periodicity of the function $\big(\partial^\alpha_\xi\phi_j\big)(y,\xi)$ with respect to the variable $y\in\X$ we integrate by parts in \eqref{3.28} and deduce that $\forall\alpha\in\mathbb{N}^d$ and $\forall l\in\mathbb{N}$ there exists a constant $C_{\alpha,l}>0$ such that we have the estimation:
\begin{equation}\label{3.29}
\left|\big(\widehat{\partial^\alpha_\xi\phi}_j\big)(\gamma^*,\xi)\right|\ \leq\ C_{\alpha,l}<\gamma^*>^{-2l}\left|\frac{1}{|E|}\int_Ee^{-i<\gamma^*,y>}\Big(\big(\bb1-\Delta_y\big)^l\partial^\alpha_\xi\phi_j\Big)(y,\xi)\,dy\right|,\quad\forall\xi\in\X^*,\ \forall\gamma^*\in\Gamma^*.
\end{equation}
Coming back to \eqref{3.27}, taking $l\geq s/2$ and using the estimation \eqref{3.29} and Plancherel identity we obtain the following estimation:
\begin{equation}\label{3.30}
\left\|\big(\partial^\alpha_\xi\phi_j\big)(.,\xi)\right\|_{\mathcal{K}_{s,\xi}}^2\ \leq\ C^2_{\alpha,l}\frac{1}{|E|}\int_E\left|\Big(\big(\bb1-\Delta_y\big)^l\partial^\alpha_\xi\phi_j\Big)(y,\xi)\right|^2dy\ \leq\ C_{\alpha,s}^2,\quad\forall\xi\in E^*.
\end{equation}
In order to extend now to an arbitrary $\xi\in\X^*$ we notice that for any $\xi\in\X^*$ there exist $\eta\in E^*$ and $\gamma^*\in\Gamma^*$ such that $\xi=\eta+\gamma^*$ and using property (c) of the functions $\{\phi_1,\ldots,\phi_N\}$ we see that
$$
\left\|\big(\partial^\alpha_\xi\phi_j\big)(.,\xi)\right\|_{\mathcal{K}_{s,\xi}}\ =\ \left\|\big(\partial^\alpha_\eta\phi_j\big)(.,\eta+\gamma^*)\right\|_{\mathcal{K}_{s,\eta+\gamma^*}}\ =\ \left\|e^{-i<\gamma^*,.>}\big(\partial^\alpha_\eta\phi_j\big)(.,\eta)\right\|_{\mathcal{K}_{s,\eta+\gamma^*}}\ =
$$
$$
=\ \left\|<D+\eta+\gamma^*>^se^{-i<\gamma^*,.>}\big(\partial^\alpha_\eta\phi_j\big)(.,\eta)\right\|_{L^2(E)}\ =\ \left\|<D+\eta>^s\big(\partial^\alpha_\eta\phi_j\big)(.,\eta)\right\|_{L^2(E)}\ =\ \left\|\big(\partial^\alpha_\eta\phi_j\big)(.,\eta)\right\|_{\mathcal{K}_{s,\eta}}\ \leq\ C_{\alpha,s}.
$$
\end{proof}

We denote by $\mathcal{K}_0:=\mathcal{K}_{0,0}\equiv L^2(\mathbb{T})\equiv L^2(E)$ and for any $\xi\in\X^*$ we define the linear operators:
\begin{equation}\label{3.31}
\forall u\in\mathcal{K}_0,\qquad R_+(\xi)u:=\left\{\big(u,\phi_j\big)_{\mathcal{K}_0}\right\}_{1\leq j\leq N},
\end{equation}
\begin{equation}\label{3.32}
\forall\underline{u}\in\mathbb{C}^N,\qquad R_+(\xi)\underline{u}:=\sum\limits_{1\leq j\leq N}\underline{u}_j\phi_j(.,\xi).
\end{equation}
We evidently have that $R_+(\xi)\in\mathbb{B}\big(\mathcal{K}_0;\mathbb{C}^N\big)$, $R_-(\xi)\in\mathbb{B}\big(\mathbb{C}^N;\mathcal{K}_0\big)$ and (using \eqref{3.26}) both are $BC^\infty$ functions of $\xi\in\X^*$. With these operators we can now define the following Grushin type operator:
\begin{equation}\label{3.34}
\mathcal{P}(\xi,\lambda)\ :=\ \left(
\begin{array}{cc}
P_\xi-\lambda&R_-(\xi)\\
R_+(\xi)&0
\end{array}
\right)\ \in\ \mathbb{B}\big(\mathcal{K}_{m,\xi}\times\mathbb{C}^N;\mathcal{K}_0\times\mathbb{C}^N\big),\quad\forall(\xi,\lambda)\in\X^*\times I.
\end{equation}

\begin{proposition}\label{P.3.8}
With the above notations, the following statements are true:
\begin{enumerate}
\item[a)] As a function of $(\xi,\lambda)\in\X^*\times I$, we have that $\mathcal{P}\in C^\infty\big(\X^*\times I;\mathbb{B}\big(\mathcal{K}_{m,0}\times\mathbb{C}^N;\mathcal{K}_0\times\mathbb{C}^N\big)\big)$ and for any $\alpha\in\mathbb{N}^d$ and any $k\in\mathbb{N}$ we have that $\big(\partial^\alpha_\xi\partial^k_\lambda\mathcal{P}\big)(\xi,\lambda)\in\mathbb{B}\big(\mathcal{K}_{m,\xi}\times\mathbb{C}^N;\mathcal{K}_0\times\mathbb{C}^N\big)$ uniformly in $(\xi,\lambda)\in\X^*\times I$.
\item[b)] If we considere $\mathcal{P}(\xi,\lambda)$ as an unbounded operator in $\mathcal{K}_0\times\mathbb{C}^N$ with domain $\mathcal{K}_{m,\xi}\times\mathbb{C}^N$ then for any $(\xi,\lambda)\in\X^*\times I$, it is self-adjoint and unitarily equivalent with the operator $Q(\xi,\lambda)$.
\item[c)] The operator $\mathcal{P}(\xi,\lambda)$ has an inverse:
\begin{equation}\label{3.35}
\mathcal{E}_0(\xi,\lambda)\ :=\ \left(
\begin{array}{cc}
E^0(\xi,\lambda)&E^0_+(\xi,\lambda)\\
E^0_-(\xi,\lambda)&E^0_{-,+}(\xi,\lambda)
\end{array}
\right)\in\mathbb{B}\big(\mathcal{K}_0\times\mathbb{C}^N;\mathcal{K}_{m,\xi}\times\mathbb{C}^N\big),
\end{equation}
uniformly bounded with respect to $(\xi,\lambda)\in\X^*\times I$.
\item[d)] As a function of $(\xi,\lambda)\in\X^*\times I$, we have that $\mathcal{E}_0\in C^\infty\big(\X^*\times I;\mathbb{B}\big(\mathcal{K}_0\times\mathbb{C}^N;\mathcal{K}_{m,0}\times\mathbb{C}^N\big)\big)$ and for any $\alpha\in\mathbb{N}^d$ and any $k\in\mathbb{N}$ we have that $\big(\partial^\alpha_\xi\partial^k_\lambda\mathcal{P}\big)(\xi,\lambda)\in\mathbb{B}\big(\mathcal{K}_0\times\mathbb{C}^N;\mathcal{K}_{m,\xi}\times\mathbb{C}^N\big)$ uniformly in $(\xi,\lambda)\in\X^*\times I$.
\end{enumerate}
\end{proposition}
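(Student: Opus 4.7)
The plan is to prove the four statements in order, using the unitary dilation $\sigma_\xi$ to transfer the self-adjointness and invertibility results already obtained for $Q(\xi,\lambda)$ in Lemmas \ref{L.3.5}--\ref{L.3.6}, then to handle smoothness and the derivation of $\mathcal{E}_0$ by standard resolvent-type arguments.

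For part (a), I would first verify that each block of $\mathcal{P}(\xi,\lambda)$ depends smoothly on $(\xi,\lambda)$ with values in the claimed space. The $\lambda$-dependence is affine (it enters only through $-\lambda$ on the diagonal), so derivatives in $\lambda$ beyond the first are zero and the first derivative is a bounded operator from $\mathcal{K}_0$ to $\mathcal{K}_0$ uniformly. For the $\xi$-dependence, $P_\xi = \mathfrak{Op}(p_0(\cdot,\xi+\cdot))$, so $\partial^\alpha_\xi P_\xi = \mathfrak{Op}\big((\partial^\alpha_\eta p_0)(\cdot,\xi+\cdot)\big)$, and since $p_0 \in S^m_1$, each such symbol is of class $S^{m-|\alpha|}_1$ uniformly in $\xi$; by the standard magnetic (here trivial magnetic) calculus continuity on periodic Sobolev spaces (Example \ref{E.A.20} and Remark \ref{R.A.22}) this gives uniform bounds in $\mathbb{B}(\mathcal{K}_{m,\xi};\mathcal{K}_0)$. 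The smoothness and uniform bounds for $R_\pm(\xi)$ follow directly from the estimates \eqref{3.26} of Lemma \ref{L.3.7}(d).

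For part (b), the key identity is $\psi_j(\cdot,\xi) = e^{i\langle\xi,\cdot\rangle}\phi_j(\cdot,\xi) = \sigma_\xi\phi_j(\cdot,\xi)$, from which one reads off the intertwining relations $\widetilde{R}_+(\xi) = R_+(\xi)\sigma_{-\xi}$ and $\widetilde{R}_-(\xi) = \sigma_\xi R_-(\xi)$. Combined with $P = \sigma_\xi P_\xi \sigma_{-\xi}$ (recalled in Remark \ref{R.3.2}), this yields
\begin{equation*}
Q(\xi,\lambda)\ =\ \begin{pmatrix}\sigma_\xi & 0 \\ 0 & \mathrm{id}_{\mathbb{C}^N}\end{pmatrix}\,\mathcal{P}(\xi,\lambda)\,\begin{pmatrix}\sigma_{-\xi} & 0 \\ 0 & \mathrm{id}_{\mathbb{C}^N}\end{pmatrix},
\end{equation*}
which exhibits $\mathcal{P}(\xi,\lambda)$ and $Q(\xi,\lambda)$ as unitarily equivalent unbounded operators (the domain $\mathcal{K}_{m,\xi}\times\mathbb{C}^N$ is mapped onto $\mathscr{F}_{m,\xi}\times\mathbb{C}^N$ by the diagonal unitary). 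Self-adjointness of $\mathcal{P}(\xi,\lambda)$ then follows from Lemma \ref{L.3.5}.

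For part (c), the same unitary equivalence shows that $\mathcal{P}(\xi,\lambda)$ is bijective and its inverse is obtained by conjugating $Q(\xi,\lambda)^{-1}$; the uniform bound $\mathcal{E}_0(\xi,\lambda)\in\mathbb{B}(\mathcal{K}_0\times\mathbb{C}^N;\mathcal{K}_{m,\xi}\times\mathbb{C}^N)$ then follows from the corresponding uniform bound in Lemma \ref{L.3.6} together with the fact that $\sigma_{\pm\xi}$ are unitary from $\mathcal{K}_{s,\xi}$ onto $\mathscr{F}_{s,\xi}$ for every $s$. For part (d) I would argue inductively: starting from the identity $\mathcal{E}_0\mathcal{P} = \mathrm{id}$ one formally obtains
\begin{equation*}
\partial^\alpha_\xi\partial^k_\lambda \mathcal{E}_0\ =\ -\mathcal{E}_0\sum_{\substack{(\beta,j)\ne(0,0)\\ \beta\le\alpha,\,j\le k}}\binom{\alpha}{\beta}\binom{k}{j}\big(\partial^\beta_\xi\partial^j_\lambda\mathcal{P}\big)\big(\partial^{\alpha-\beta}_\xi\partial^{k-j}_\lambda\mathcal{E}_0\big),
\end{equation*}
and since the right-hand side involves only lower-order derivatives of $\mathcal{E}_0$ together with derivatives of $\mathcal{P}$ already controlled in part (a), induction on $|\alpha|+k$ gives simultaneously the existence, continuity and uniform boundedness of every derivative with values in $\mathbb{B}(\mathcal{K}_0\times\mathbb{C}^N;\mathcal{K}_{m,\xi}\times\mathbb{C}^N)$. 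The main technical point to watch is that the symbols $\partial^\beta_\xi p_0$ lose regularity in $\eta$ but gain decay, so one must make sure the composition of an operator in $\mathbb{B}(\mathcal{K}_{m,\xi};\mathcal{K}_0)$ with $\mathcal{E}_0$ lands back in $\mathcal{K}_{m,\xi}$; this is automatic since $\mathcal{E}_0$ maps into that space by part (c), and the output side of each factor of $\partial^\beta_\xi\partial^j_\lambda\mathcal{P}$ is always $\mathcal{K}_0\times\mathbb{C}^N$, so no unexpected loss occurs.
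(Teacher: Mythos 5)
Your proposal is correct and follows essentially the same route as the paper: you establish the unitary conjugation $Q(\xi,\lambda)=U(\xi)\,\mathcal{P}(\xi,\lambda)\,U(\xi)^{-1}$ with $U(\xi)=\mathrm{diag}(\sigma_\xi,\mathrm{id}_{\mathbb{C}^N})$ by the same intertwining relations for $R_\pm$ and $P_\xi$, deduce self-adjointness and uniform invertibility from Lemmas \ref{L.3.5}--\ref{L.3.6}, and prove (d) by differentiating $\mathcal{P}\mathcal{E}_0=\mathrm{id}$ and inducting on $|\alpha|+k$, just as the paper does (it only writes out the $|\alpha|+k=1$ case explicitly). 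One small inconsistency of presentation: you say you start from $\mathcal{E}_0\mathcal{P}=\mathrm{id}$ but the displayed recursion, with $\mathcal{E}_0$ on the left of the sum and the derivatives of $\mathcal{P}$ preceding those of $\mathcal{E}_0$, is the one obtained from $\mathcal{P}\mathcal{E}_0=\mathrm{id}$; either identity works, but the formula should match the one you cite.
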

\begin{proof}
Point (a) follows clearly from the smoothness of the maps $R_-(\xi)$ and $R_+(\xi)$ (proved above) and the arguments in Exemple \ref{E.A.20}.

b) Let us define the operator:
\begin{equation}\label{3.35.b}
U(\xi)\ :=\ \left(
\begin{array}{cc}
\sigma_\xi&0\\
0&\bb1
\end{array}
\right):\mathcal{K}_{s,\xi}\times\mathbb{C}^N\rightarrow\mathscr{F}_{s,\xi}\times\mathbb{C}^N
\end{equation}
and let us notice that it is evidently unitary $\forall(\xi,s)\in\X^*\times\mathbb{R}$. Using Remark \ref{R.A.22} we know that $P_\xi-\lambda=\sigma_{-\xi}\big(P-\lambda\big)\sigma_\xi$ on $\mathcal{K}_{m,\xi}$.  If we use the relations \eqref{3.14}, \eqref{3.25} and \eqref{3.32} we deduce that for any $\underline{u}\in\mathbb{C}^N$:
$$
\sigma_{-\xi}\widetilde{R}_-(\xi)\underline{u}\ =\ \sum\limits_{j=1}^{N}\underline{u}_j\big(\sigma_{-\xi}\psi_j\big)(.,\xi)\ =\ \sum\limits_{j=1}^{N}\underline{u}_j\phi_j(.,\xi)\ =\ R_-(\xi)\underline{u}.
$$
In a similar way, using now \eqref{3.13}, \eqref{3.25} and \eqref{3.31} we obtain that $\forall u\in\mathcal{K}_0$ we have that:
$$
\widetilde{R}_+(\xi)\big(\sigma_\xi u\big)\ =\ \left\{\big(\sigma_\xi u,\psi_j(.,\xi)\big)_{\mathscr{F}_{0,\xi}}\right\}_{1\leq j\leq N}\ =\ \left\{\big(u,\phi_j(.,\xi)\big)_{\mathcal{K}_0}\right\}_{1\leq j\leq N}\ =\ R_+(\xi)u.
$$
We conclude that we have the following equality on $\mathcal{K}_{m,\xi}\times\mathbb{C}^N$:
\begin{equation}\label{3.36}
\mathcal{P}(\xi,\lambda)\ =\ U(\xi)^{-1}Q(\xi,\lambda)U(\xi),\qquad\forall(\xi,\lambda)\in\X^*\times I.
\end{equation}

c) follows easily from point (b).

d) Let us notice that (a) and (c) imply easily that $\mathcal{E}_0\in C^\infty\big(\X^*\times I;\mathbb{B}\big(\mathcal{K}_0\times\mathbb{C}^N;\mathcal{K}_{m,0}\times\mathbb{C}^N\big)\big)$. The last property of $\mathcal{E}_0$ can be proved by recurence, differentiating the equality $\mathcal{P}\mathcal{E}_0=\bb1$ valid on $\mathcal{K}_0\times\mathbb{C}^N$. For example, if $|\alpha|+k=1$ we can write that
$$
\partial^\alpha_\xi\partial^k_\lambda\mathcal{E}_0\ =\ -\mathcal{E}_0\Big(\partial^\alpha_\xi\partial^k_\lambda\mathcal{P}\Big)\mathcal{E}_0
$$
and thus we have the estimation
$$
\left\|\partial^\alpha_\xi\partial^k_\lambda\mathcal{E}_0\right\|_{\mathbb{B}(\mathcal{K}_0\times\mathbb{C}^N;\mathcal{K}_{m,\xi}\times\mathbb{C}^N)}\leq\left\|\mathcal{E}_0\right\|_{\mathbb{B}(\mathcal{K}_0\times\mathbb{C}^N;\mathcal{K}_{m,\xi}\times\mathbb{C}^N)}\left\|\partial^\alpha_\xi\partial^k_\lambda\mathcal{P}\right\|_{\mathbb{B}(\mathcal{K}_{m,\xi}\times\mathbb{C}^N;\mathcal{K}_0\times\mathbb{C}^N)}  \left\|\mathcal{E}_0\right\|_{\mathbb{B}(\mathcal{K}_0\times\mathbb{C}^N;\mathcal{K}_{m,\xi}\times\mathbb{C}^N)},
$$
where the three factors of the right hand side are clearly uniformly bounded with respect to $(\xi,\lambda)\in\X^*\times I$.
\end{proof}

\section{ Construction of the effective Hamiltonian}\label{S.4}
\setcounter{equation}{0}
\setcounter{theorem}{0}

Let us recall our hypothesis (see Section \ref{S.1}): $\{B_\epsilon\}_{|\epsilon|\leq\epsilon_0}$ is a family of magnetic fields satisfying Hypothesis H.1, $\{A_\epsilon\}_{|\epsilon|\leq\epsilon_0}$ is an associated family of vector potentials and $\{p_\epsilon\}_{|\epsilon|\leq\epsilon_0}$ is a family of symbols satisfying Hypothesis H.2 - H.6.

As we have already noticed in Remark \ref{R.A.6}, the symbol at $\epsilon=0$, $p_0(x,y,\eta)$ does not depend on the first variable $x\in\X$; thus if we denote by ${\rm p}_0(y,\eta):=p_0(0,y,\eta)$ and by $r_\epsilon(x,y,\eta):=p_\epsilon(x,y,\eta)-{\rm p}_0(y,\eta)$, we notice that the symbol ${\rm p}_0$ verifies the hypothesis of Section \ref{S.3} and thus ${\rm p}_0\in S^m_1(\mathbb{T})$ being real and elliptic and can write
\begin{equation}\label{4.1}
p_\epsilon\ =\ {\rm p}_0\,+\,r_\epsilon,\qquad\underset{\epsilon\rightarrow0}{\lim}r_\epsilon\,=\,0,\ \text{in}\ S^m_1(\X\times\mathbb{T}).
\end{equation}
We apply the constructions from Section \ref{S.3} to the operator $P_0:=\mathfrak{Op}({\rm p}_0)$. We obtain that for any compact interval $I\subset\mathbb{R}$, for any $\lambda\in I$ and any $\xi\in\X^*$ one can construct the operators $R_{\pm}(\xi)$ as in \eqref{3.31} and \eqref{3.32} in order to use \eqref{3.34} and define the operator
\begin{equation}\label{4.2}
\mathcal{P}_0(\xi,\lambda)\ :=\ \left(
\begin{array}{cc}
P_{0,\xi}-\lambda&R_-(\xi)\\
R_+(\xi)&0
\end{array}
\right)\in\mathbb{B}\big(\mathcal{K}_{m,\xi}\times\mathbb{C}^N;\mathcal{K}_0\times\mathbb{C}^N\big),
\end{equation}
that will verify all the properties listed in Proposition \ref{P.3.8}. In particular
\begin{equation}\label{4.3}
\mathcal{P}_0(.,\lambda)\,\in\,S^0_0\big(\X;\mathbb{B}\big(\mathcal{K}_{m,\xi}\times\mathbb{C}^N;\mathcal{K}_0\times\mathbb{C}^N\big)\big),
\end{equation}
uniformly for $\lambda\in I$. Moreover, it follows that $\mathcal{P}_0(\xi,\lambda)$ is invertible and its inverse denoted by $\mathcal{E}_0(\xi,\lambda)$ is given (as in \eqref{3.35}) by
\begin{equation}\label{4.4}
\mathcal{E}_0(\xi,\lambda)\ :=\ \left(
\begin{array}{cc}
E^0(\xi,\lambda)&E^0_+(\xi,\lambda)\\
E^0_-(\xi,\lambda)&E^0_{-,+}(\xi,\lambda)
\end{array}
\right)\,\in\,\mathbb{B}\big(\mathcal{K}_0\times\mathbb{C}^N;\mathcal{K}_{m,\xi}\times\mathbb{C}^N\big)
\end{equation}
and has the property that
\begin{equation}\label{4.5}
\mathcal{E}_0(.,\lambda)\,\in\,S^0_0\big(\X;\mathbb{B}\big(\mathcal{K}_0\times\mathbb{C}^N;\mathcal{K}_{m,\xi}\times\mathbb{C}^N\big),
\end{equation}
uniformly for $\lambda\in I$.

Let us consider now the following operator:
\begin{equation}\label{4.6}
\mathcal{P}_\epsilon(x,\xi,\lambda)\ :=\ \left(
\begin{array}{cc}
\mathfrak{q}_\epsilon(x,\xi)-\lambda&R_-(\xi)\\
R_+(\xi)&0
\end{array}
\right),\qquad\lambda\in I,\ \epsilon\in[-\epsilon_0,\epsilon_0],\ (x,\xi)\in\Xi
\end{equation}
where we recall that $\mathfrak{q}_\epsilon(x,\xi):=\mathfrak{Op}(\widetilde{p}_\epsilon(x,.,\xi,.))$, $\widetilde{p}_\epsilon(x,y,\xi,\eta):=p_\epsilon(x,y,\xi+\eta)$. Taking into account the example \ref{E.A.20} from the Appendix we notice that $\mathfrak{q}_\epsilon\in S^0_0\big(\X;\mathbb{B}\big(\mathcal{K}_{m,\xi};\mathcal{K}_0\big)\big)$ uniformly in $\epsilon\in[-\epsilon_0,\epsilon_0]$; thus
\begin{equation}\label{4.7}
\mathcal{P}_\epsilon(x,\xi,\lambda)\,\in\,S^0_0\big(\X;\mathbb{B}\big(\mathcal{K}_{m,\xi}\times\mathbb{C}^N;\mathcal{K}_0\times\mathbb{C}^N\big)\big),
\end{equation}uniformly with respect to $(\lambda,\epsilon)\in I\times[-\epsilon_0,\epsilon_0]$.

\begin{lemma}\label{L.4.1}
The operator $\mathcal{P}_{\epsilon,\lambda}:=\mathfrak{Op}^{A_\epsilon}(\mathcal{P}_\epsilon(.,.,\lambda))$ belongs to $\mathbb{B}\big(\mathcal{K}^m_\epsilon(\X^2)\times L^2(\X;\mathbb{C}^N);\mathcal{K}(\X^2)\times L^2(\X;\mathbb{C}^N)\big)$ uniformly with respect to $(\lambda,\epsilon)\in I\times[-\epsilon_0,\epsilon_0]$. Moreover, considering $\mathcal{P}_{\epsilon,\lambda}$ as an unbounded linear operator in the Hilbert space $\mathcal{K}(\X^2)\times L^2(\X;\mathbb{C}^N)$ it defines a self-adjoint operator on the domain $\mathcal{K}^m_\epsilon(\X^2)\times L^2(\X;\mathbb{C}^N)$.
\end{lemma}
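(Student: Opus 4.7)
The plan is to write $\mathcal{P}_{\epsilon,\lambda}$ as a self-adjoint diagonal part plus a globally bounded symmetric off-diagonal perturbation, and then assemble the statement from results already established. Concretely, decompose
$$
\mathcal{P}_{\epsilon,\lambda} \;=\; \mathcal{P}^{(0)}_{\epsilon,\lambda} \;+\; V_\epsilon, \qquad
\mathcal{P}^{(0)}_{\epsilon,\lambda} := \begin{pmatrix} \widetilde{P}_\epsilon - \lambda & 0 \\ 0 & 0 \end{pmatrix}, \qquad
V_\epsilon := \begin{pmatrix} 0 & \mathbf{R}_-(\epsilon) \\ \mathbf{R}_+(\epsilon) & 0 \end{pmatrix},
$$
with $\mathbf{R}_\pm(\epsilon) := \mathfrak{Op}^{A_\epsilon}(R_\pm)$.

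For the boundedness statement, the (1,1) entry is uniformly bounded from $\mathcal{K}^m_\epsilon(\X^2)$ to $\mathcal{K}(\X^2)$ by Lemma \ref{L.2.12}, and $\lambda \in I$ contributes only a uniformly bounded multiple of the identity. For the off-diagonal entries, Lemma \ref{L.3.7}(d) furnishes the symbol estimates $R_- \in S^0_0(\X^*; \mathbb{B}(\mathbb{C}^N; \mathcal{K}_0))$ and $R_+ \in S^0_0(\X^*; \mathbb{B}(\mathcal{K}_0; \mathbb{C}^N))$; applying the Calder\'on--Vaillancourt bound for operator-valued magnetic Weyl pseudodifferential operators recalled in the Appendix (Proposition \ref{A.7}), one concludes that $\mathbf{R}_\pm(\epsilon)$ are uniformly bounded in $\epsilon$ between $L^2(\X; \mathbb{C}^N)$ and $L^2(\X; \mathcal{K}_0) \cong \mathcal{K}(\X^2)$, the identification coming from Lemma \ref{L.2.10} at $s = 0$. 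This yields the first claim.

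For the self-adjointness, Proposition \ref{P.2.13} asserts that $\widetilde{P}_\epsilon$ is self-adjoint in $\mathcal{K}(\X^2)$ with domain $\mathcal{K}^m_\epsilon(\X^2)$, so $\mathcal{P}^{(0)}_{\epsilon,\lambda}$ is self-adjoint on the product domain $\mathcal{K}^m_\epsilon(\X^2) \times L^2(\X;\mathbb{C}^N)$. The perturbation $V_\epsilon$ is everywhere defined and bounded on the ambient Hilbert space by the first part, so it suffices to verify that $V_\epsilon$ is symmetric, i.e.\ that $\mathbf{R}_+(\epsilon)^* = \mathbf{R}_-(\epsilon)$. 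Definitions \eqref{3.31}--\eqref{3.32} yield the pointwise identity $R_+(\xi)^* = R_-(\xi)$ in $\mathbb{B}(\mathcal{K}_0; \mathbb{C}^N)$, and the formal-adjoint rule for the magnetic Weyl calculus with operator-valued symbols, $\mathfrak{Op}^{A_\epsilon}(a)^* = \mathfrak{Op}^{A_\epsilon}(a^*)$, then transports the identity to the operator level. Since a bounded symmetric perturbation of a self-adjoint operator is self-adjoint on the same domain (Kato--Rellich with relative bound $0$), this concludes the proof.

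The main obstacle is not in the soft assembly itself but in confirming that the Calder\'on--Vaillancourt bound and the formal-adjoint identity are genuinely available, \emph{uniformly in} $\epsilon \in [-\epsilon_0, \epsilon_0]$, for operator-valued symbols with values in $\mathbb{B}(\mathbb{C}^N; \mathcal{K}_0)$ and $\mathbb{B}(\mathcal{K}_0; \mathbb{C}^N)$; both facts belong to the general operator-valued magnetic pseudodifferential calculus set up in the Appendix, and once they are in hand the lemma follows by the two-line argument above.
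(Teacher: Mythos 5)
Your decomposition into diagonal self-adjoint part plus bounded symmetric off-diagonal perturbation, the use of Lemma \ref{L.2.12} for the $(1,1)$ entry, and the symbol estimates $R_\pm\in S^0_0$ together with $R_+(\xi)^*=R_-(\xi)$ are exactly the ingredients the paper uses, and your Kato--Rellich elaboration is the right way to flesh out what the paper states only tersely (``the self-adjointness follows from Proposition \ref{P.2.13}''). The one slip is the citation: the $L^2$-boundedness of $\mathfrak{Op}^{A_\epsilon}(R_\pm)$ on the magnetic-Sobolev scale with values in $\mathcal{K}_0$ is Proposition \ref{P.A.26}, not Proposition \ref{P.A.7}; the latter only gives $\mathscr{S}\to\mathscr{S}$ and $\mathscr{S}'\to\mathscr{S}'$ continuity and the formal-adjoint rule $\mathfrak{Op}^{A_\epsilon}(a)^*=\mathfrak{Op}^{A_\epsilon}(a^*)$ (which you correctly invoke for the symmetry of $V_\epsilon$), but it is Proposition \ref{P.A.26} that supplies the Calder\'on--Vaillancourt-type uniform bound you need here.
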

\begin{proof}
If we denote by $\mathfrak{R}_{\mp,\epsilon}:=\mathfrak{Op}^{A_\epsilon}(R_\mp(\xi))$ we can write
\begin{equation}\label{4.8}
\mathcal{P}_{\epsilon,\lambda}\ =\ \left(
\begin{array}{cc}
\widetilde{P}_\epsilon-\lambda&\mathfrak{R}_{-,\epsilon}\\
\mathfrak{R}_{+,\epsilon}&0
\end{array}
\right).
\end{equation}
Taking into account Lemma \ref{L.2.12} we may conclude that $\widetilde{P}_\epsilon\in\mathbb{B}\big(\mathcal{K}^m_\epsilon(\X^2);\mathcal{K}(\X^2)\big)$ uniformly with respect to $\epsilon\in[-\epsilon_0,\epsilon_0]$. Noticing that $R_-(\xi)=R_+(\xi)^*$ and belongs to $S^0_0(\X;\mathbb{B}\big(\mathbb{C}^N;\mathcal{K}_0)\big)$, Proposition \ref{P.A.26} implies that 
$$
\mathfrak{R}_{-,\epsilon}\ =\ \mathfrak{R}_{+,\epsilon}^*\,\in\,\mathbb{B}\big(L^2(\X;\mathbb{C}^N);\mathcal{K}(\X^2)\big) 
$$
uniformly with respect to $\epsilon\in[-\epsilon_0,\epsilon_0]$. This gives us the first part of the statement of the Lemma. The self-adjointness follows from the self-adjointness of $\widetilde{P}_\epsilon$ in $\mathcal{K}(\X^2)$ on the domain $\mathcal{K}^m_\epsilon(\X^2)$ and this follows from Proposition \ref{P.2.13}.
\end{proof}

\begin{lemma}\label{L.4.2}
The operator $\mathcal{E}_{0,\epsilon,\lambda}:=\mathfrak{Op}^{A_\epsilon}\big(\mathcal{E}_0(.,\lambda)\big)$ belons to $\mathbb{B}\big(\mathcal{K}(\X^2)\times L^2(\X;\mathbb{C}^N);\mathcal{K}^m_\epsilon(\X^2)\times L^2(\X;\mathbb{C}^N)\big)$ uniformly with respect to $(\lambda,\epsilon)\in I\times[-\epsilon_0,\epsilon_0]$.
\end{lemma}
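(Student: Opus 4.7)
The plan is to decompose $\mathcal{E}_{0,\epsilon,\lambda}$ blockwise and show that the magnetic Weyl quantization of each of the four operator-valued symbols $E^0(\cdot,\lambda)$, $E^0_+(\cdot,\lambda)$, $E^0_-(\cdot,\lambda)$, $E^0_{-,+}(\cdot,\lambda)$ provides a bounded operator between the appropriate factors of $\mathcal{K}(\X^2)\times L^2(\X;\mathbb{C}^N)$ and $\mathcal{K}^m_\epsilon(\X^2)\times L^2(\X;\mathbb{C}^N)$, with bounds uniform in $(\lambda,\epsilon)\in I\times[-\epsilon_0,\epsilon_0]$. To pass from $\mathcal{K}(\X^2)$ and $\mathcal{K}^m_\epsilon(\X^2)$ to the tensor products $L^2(\X)\otimes L^2(\mathbb{T})$ and $\mathcal{H}^m_{A_\epsilon}(\X)\otimes L^2(\mathbb{T})$ I will systematically use the unitary $\boldsymbol{\psi}^*$ from Lemma \ref{L.2.10}, which intertwines magnetic pseudodifferential operators acting in the first variable with their copies on $\X\times\mathbb{T}$ (as used in the proofs of Lemmas \ref{L.2.12} and \ref{L.1.16}).

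The two diagonal blocks with range $L^2(\X;\mathbb{C}^N)$ are the easiest. Since $E^0_{-,+}(\cdot,\lambda)\in S^0_0\bigl(\X;\mathbb{B}(\mathbb{C}^N)\bigr)$ uniformly in $\lambda$ (by Proposition \ref{P.3.8}), the standard magnetic Calder\'on--Vaillancourt statement for matrix-valued symbols (Proposition \ref{P.A.26}) yields $\mathfrak{Op}^{A_\epsilon}\bigl(E^0_{-,+}(\cdot,\lambda)\bigr)\in\mathbb{B}\bigl(L^2(\X;\mathbb{C}^N);L^2(\X;\mathbb{C}^N)\bigr)$, uniformly in $(\lambda,\epsilon)$. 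For the off-diagonal block $E^0_-(\cdot,\lambda)\in S^0_0\bigl(\X;\mathbb{B}(\mathcal{K}_0;\mathbb{C}^N)\bigr)$, the same Proposition (with range space $\mathbb{C}^N$) gives a bounded operator from $L^2(\X;\mathcal{K}_0)\cong\mathcal{K}(\X^2)$ into $L^2(\X;\mathbb{C}^N)$, and symmetrically for $E^0_+$, which is exactly the argument already used for $\mathfrak{R}_{\mp,\epsilon}$ in the proof of Lemma \ref{L.4.1}.

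The substantive block is the top-left one: one must show that $\mathfrak{Op}^{A_\epsilon}\bigl(E^0(\cdot,\lambda)\bigr)$ is bounded from $\mathcal{K}(\X^2)$ into $\mathcal{K}^m_\epsilon(\X^2)$. By Definition \ref{D.2.9}, this is equivalent to showing that $\widetilde{Q}_{m,\epsilon}\,\mathfrak{Op}^{A_\epsilon}\bigl(E^0(\cdot,\lambda)\bigr)$ extends to a bounded operator on $\mathcal{K}(\X^2)$, uniformly in $(\lambda,\epsilon)$. My strategy is to express this composition as a single magnetic pseudodifferential operator via the Moyal product $\sharp^{B_\epsilon}$ of the operator-valued calculus:
\begin{equation*}
\widetilde{Q}_{m,\epsilon}\,\mathfrak{Op}^{A_\epsilon}\bigl(E^0(\cdot,\lambda)\bigr)\ =\ \mathfrak{Op}^{A_\epsilon}\bigl(\mathfrak{q}_{m,\epsilon}\sharp^{B_\epsilon}E^0(\cdot,\lambda)\bigr),
\end{equation*}
and then to verify that the composed symbol lies in $S^0_0\bigl(\X;\mathbb{B}(\mathcal{K}_0;\mathcal{K}_0)\bigr)$ uniformly in $(\lambda,\epsilon)$, after which another application of Proposition \ref{P.A.26} closes the argument. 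The composition formula of the magnetic calculus on operator-valued symbols is the content of the Appendix (Proposition \ref{A.7} and its extensions); the key point is that pointwise in $\xi$ the factor $\mathfrak{q}_{m,\epsilon}(x,\xi)$ maps $\mathcal{K}_{m,\xi}$ into $\mathcal{K}_0$ while $E^0(x,\xi,\lambda)$ maps $\mathcal{K}_0$ into $\mathcal{K}_{m,\xi}$, so the pointwise composition lies in $\mathbb{B}(\mathcal{K}_0;\mathcal{K}_0)$ with uniform bounds by Proposition \ref{P.3.8}. The derivatives in the Moyal expansion produce again factors of the form $\partial^\alpha_\xi\mathfrak{q}_{m,\epsilon}\cdot\partial^\beta_\xi E^0$ with matching domains, so the $S^0_0$ bounds persist.

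The main obstacle I anticipate is precisely this bookkeeping with the $\xi$-dependent spaces $\mathcal{K}_{m,\xi}$: the Moyal expansion mixes values of the two symbols at different points of $\Xi$, so one has to control the equivalence of the norms on $\mathcal{K}_{m,\xi}$ and $\mathcal{K}_{m,\xi'}$ uniformly, exactly as in the argument after \eqref{3.10}--\eqref{3.11}. Once this uniformity is established, the remaining derivative estimates are routine and yield the claimed uniform boundedness of $\mathcal{E}_{0,\epsilon,\lambda}$ on the target product space.
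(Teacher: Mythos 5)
Your overall strategy coincides with the paper's: premultiply the first row of $\mathcal{E}_{0,\epsilon,\lambda}$ by $\widetilde{Q}_{m,\epsilon}$, use the Composition Theorem \ref{T.A.23} to show the composed symbol still lies in $S^0_0$, and then close with Proposition \ref{P.A.26}. That is exactly what the paper does, reducing to the boundedness of $\left(\begin{smallmatrix}\widetilde{Q}_{m,\epsilon}&0\\0&\id\end{smallmatrix}\right)\mathcal{E}_{0,\epsilon,\lambda}$ on $\mathcal{K}(\X^2)\times L^2(\X;\mathbb{C}^N)$.

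There is, however, a genuine gap in your treatment of the block $E^0_+(\cdot,\lambda)$. You group it with $E^0_-(\cdot,\lambda)$ as ``symmetric'' and dismiss it via the argument used for $\mathfrak{R}_\mp$ in Lemma \ref{L.4.1}. But $E^0_-(\xi,\lambda)\in\mathbb{B}(\mathcal{K}_0;\mathbb{C}^N)$ produces a bounded operator with range $L^2(\X;\mathbb{C}^N)$, while $E^0_+(\xi,\lambda)\in\mathbb{B}(\mathbb{C}^N;\mathcal{K}_{m,\xi})$ sits in the \emph{first} row of the matrix, whose target is $\mathcal{K}^m_\epsilon(\X^2)$, not $\mathcal{K}(\X^2)$. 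Thus the ``symmetric'' version of the $E^0_-$ argument only gives you $\mathfrak{Op}^{A_\epsilon}(E^0_+)\in\mathbb{B}\big(L^2(\X;\mathbb{C}^N);\mathcal{K}(\X^2)\big)$, which is strictly weaker than what Lemma \ref{L.4.2} requires. (In Lemma \ref{L.4.1} the codomain of $\mathfrak{R}_{-,\epsilon}$ really is $\mathcal{K}(\X^2)$, so no $m$-gain was needed there; that is why the analogy fails here.) The block $E^0_+$ needs exactly the same $\widetilde{Q}_{m,\epsilon}\sharp^{B_\epsilon}$-composition argument that you correctly develop for $E^0$ --- the paper explicitly treats $E^0$ and $E^0_+$ together for this reason, using $E^0_+(\xi,\lambda)\in S^0_0\big(\X;\mathbb{B}(\mathbb{C}^N;\mathcal{K}_{m,\xi})\big)$ and the Composition Theorem. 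Finally, your ``main anticipated obstacle'' about the $\xi$-dependence of $\mathcal{K}_{m,\xi}$ is already absorbed into the definitions: the family $\{\mathcal{K}_{m,\xi}\}_{\xi}$ has temperate variation (cf.\ \eqref{A.1}, \eqref{A.2}), and Theorem \ref{T.A.23} is stated precisely for operator-valued symbols acting between such families, so no separate argument of the type of \eqref{3.10}--\eqref{3.11} is required.
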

\begin{proof}
We can write
\begin{equation}\label{4.9}
\mathcal{E}_{0,\epsilon,\lambda}\ =\ \left(
\begin{array}{cc}
\mathfrak{E}^0_{\epsilon,\lambda}&\mathfrak{E}^0_{+,\epsilon,\lambda}\\
\mathfrak{E}^0_{-,\epsilon,\lambda}&\mathfrak{E}^0_{-+,\epsilon,\lambda}
\end{array}
\right),
\end{equation}
with
$$
\mathfrak{E}^0_{\epsilon,\lambda}\ :=\ \mathfrak{Op}^{A_\epsilon}\big(E^0(.,\lambda)\big),\qquad\mathfrak{E}^0_{\pm,\epsilon,\lambda}\ :=\ \mathfrak{Op}^{A_\epsilon}\big(E^0_\pm(.,\lambda)\big),\qquad\mathfrak{E}^0_{-+,\epsilon,\lambda}\ :=\ \mathfrak{Op}^{A_\epsilon}\big(E^0_{-+}(.,\lambda)\big).
$$
From \eqref{4.5} it follows that $\mathcal{E}_0(.,\lambda)\in S^0_0\big(\X;\mathbb{B}\big(\mathcal{K}_0\times\mathbb{C}^N;\mathcal{K}_0\times\mathbb{C}^N\big)\big)$ uniformly with respect to $(\lambda,\epsilon)\in I\times[-\epsilon_0,\epsilon_0]$. In order to prove the boundedness result in the Lemma it is enough to show that
\begin{equation}\label{4.10}
\left(\begin{array}{cc}
\widetilde{Q}_{m,\epsilon}&0\\
0&\bb1
\end{array}
\right)\,\mathcal{E}_{0,\epsilon,\lambda}\,\in\,\mathbb{B}\big(\mathcal{K}(\X^2)\times L^2(\X;\mathbb{C}^N);\mathcal{K}(\X^2)\times L^2(\X;\mathbb{C}^N)\big),
\end{equation}
uniformly with respect to $(\lambda,\epsilon)\in I\times[-\epsilon_0,\epsilon_0]$; here $\widetilde{Q}_{m,\epsilon}$ is defined before Definition \ref{D.1.4}, with some suitable identifications. In that Definition we also argued that the operator $\widetilde{Q}_{m,\epsilon}$ corresponds to the operator $Q_{m,\epsilon}$ from Remark \ref{R.A.25} transformed by {\it doubling the variables} starting from the operator valued symbol $\mathfrak{q}_{m,\epsilon}$. We may thus conclude that $\widetilde{Q}_{m,\epsilon}$ is obtained by the $\mathfrak{Op}^{A_\epsilon}$ quantization of a symbol from $S^0_0\big(\X;\mathbb{B}(\mathcal{K}_0;\mathcal{K}_{m,\xi})\big)$. Taking into account that $E^0(\xi,\lambda)\in S^0_0\big(\X;\mathbb{B}(\mathcal{K}_0;\mathcal{K}_{m,\xi})\big)$ and $E^0_+(\xi,\lambda)\in S^0_0\big(\X;\mathbb{B}(\mathbb{C}^N;\mathcal{K}_{m,\xi})\big)$, the property \eqref{4.10} follows from the {\it Composition Theorem} \ref{T.A.23} a) and from the Proposition \ref{P.A.26}.
\end{proof}

\begin{theorem}\label{T.4.3}
For a sufficiently small $\epsilon_0>0$ we have that for any $(\lambda,\epsilon)\in I\times[-\epsilon_0,\epsilon_0]$ the operator $\mathcal{P}_{\epsilon,\lambda}$ from Lemma \ref{L.4.1} has an inverse denoted by
\begin{equation}\label{4.11}
\mathcal{E}_{\epsilon,\lambda}\ :=\ \left(
\begin{array}{cc}
\mathfrak{E}(\epsilon,\lambda)&\mathfrak{E}_+(\epsilon,\lambda)\\
\mathfrak{E}_-(\epsilon,\lambda)&\mathfrak{E}_{-+}(\epsilon,\lambda)
\end{array}
\right)\,\in\,\mathbb{B}\big(\mathcal{K}(\X^2)\times L^2(\X;\mathbb{C}^N);\mathcal{K}^m_\epsilon(\X^2)\times L^2(\X;\mathbb{C}^N)\big),
\end{equation}
uniformly with respect to $(\epsilon,\lambda)\in[-\epsilon_0,\epsilon_0]\times I$. Moreover we have that 
$$
\mathcal{E}_{\epsilon,\lambda}\ =\ \mathcal{E}_{0,\epsilon,\lambda}\,+\,\mathcal{R}_{\epsilon,\lambda},\qquad\mathcal{R}_{\epsilon,\lambda}=\mathfrak{Op}^{A_\epsilon}\big(\rho_{\epsilon,\lambda}\big),\qquad\underset{\epsilon\rightarrow0}{\lim}\rho_{\epsilon,\lambda}=0\ \text{in}\ S^0_0\big(\X;\mathbb{B}(\mathcal{K}_0\times\mathbb{C}^N;\mathcal{K}_{m,\xi}\times\mathbb{C}^N)\big).
$$
In particular we have that
\begin{equation}\label{4.12}
\mathfrak{E}_{-+}(\epsilon,\lambda)\ =\ \mathfrak{Op}^{A_\epsilon}\big(E^{-+}_{\epsilon,\lambda}\big),\qquad\underset{\epsilon\rightarrow0}{\lim}E^{-+}_{\epsilon,\lambda}=E^0_{-+}(.,\lambda)\ \text{in}\ S^0_0\big(\X;\mathbb{B}(\mathbb{C}^N;\mathbb{C}^N)\big),
\end{equation}
uniformly with respect to $\lambda\in I$.
\end{theorem}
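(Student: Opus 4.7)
\textbf{Proof plan for Theorem \ref{T.4.3}.}

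The strategy is to use $\mathcal{E}_{0,\epsilon,\lambda}$ as a parametrix for $\mathcal{P}_{\epsilon,\lambda}$ and then invert the small error via a Neumann-type argument at the level of symbols. I would begin by splitting $\mathcal{P}_\epsilon(x,\xi,\lambda)=\mathcal{P}_0(\xi,\lambda)+\mathcal{V}_\epsilon(x,\xi)$ where $\mathcal{V}_\epsilon$ is the diagonal operator-valued symbol with upper-left block $\mathfrak{q}_\epsilon(x,\xi)-P_{0,\xi}$ and zero elsewhere. Writing $\widetilde{p}_\epsilon-\widetilde{p}_0=\widetilde{r}_\epsilon$ with $\widetilde{r}_\epsilon(x,y,\xi,\eta)=r_\epsilon(x,y,\xi+\eta)$, Example \ref{A.8} (or rather the analogue used to define $\mathfrak{q}_\epsilon$) together with Hypothesis H.3 shows that $\mathcal{V}_\epsilon\to 0$ in $S^0_{0,\epsilon}\bigl(\X;\mathbb{B}(\mathcal{K}_{m,\xi}\times\mathbb{C}^N;\mathcal{K}_0\times\mathbb{C}^N)\bigr)$ as $\epsilon\to 0$.

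Next I would compute the product
\[
\mathcal{P}_{\epsilon,\lambda}\,\mathcal{E}_{0,\epsilon,\lambda}
=\mathfrak{Op}^{A_\epsilon}\!\bigl(\mathcal{P}_\epsilon(\cdot,\cdot,\lambda)\,\sharp^{B_\epsilon}\,\mathcal{E}_0(\cdot,\cdot,\lambda)\bigr)
\]
using the magnetic composition theorem from the Appendix (Theorem \ref{T.A.23}). The key point is that the magnetic Moyal product can be written as
\[
a\,\sharp^{B_\epsilon}\,b \;=\; a\cdot b \;+\; \mathfrak{M}_\epsilon(a,b),
\]
where $\mathfrak{M}_\epsilon(a,b)$ is a bilinear expression whose coefficients involve derivatives in $(x,\xi)$ paired with fluxes of $B_\epsilon$ along triangles, so that H.1 together with H.4 gives $\mathfrak{M}_\epsilon(\mathcal{P}_0,\mathcal{E}_0)\to 0$ and $\mathfrak{M}_\epsilon(\mathcal{V}_\epsilon,\mathcal{E}_0)\to 0$ in the symbol class uniformly in $\lambda\in I$. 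Since $\mathcal{P}_0(\xi,\lambda)\mathcal{E}_0(\xi,\lambda)=\bb1$ pointwise by Proposition \ref{P.3.8}(c), we obtain
\[
\mathcal{P}_\epsilon\,\sharp^{B_\epsilon}\,\mathcal{E}_0 \;=\; \bb1 \;+\; \sigma_{\epsilon,\lambda},\qquad \lim_{\epsilon\to 0}\sigma_{\epsilon,\lambda}=0\ \text{in}\ S^0_0\bigl(\X;\mathbb{B}(\mathcal{K}_0\times\mathbb{C}^N)\bigr),
\]
uniformly in $\lambda\in I$, with $\sigma_{\epsilon,\lambda}=\mathcal{V}_\epsilon\cdot\mathcal{E}_0+\mathfrak{M}_\epsilon(\mathcal{P}_\epsilon,\mathcal{E}_0)$.

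Once this is established I would invert $\bb1+\mathfrak{Op}^{A_\epsilon}(\sigma_{\epsilon,\lambda})$ by Neumann series. The continuity properties of $\mathfrak{Op}^{A_\epsilon}$ on $L^2$-type spaces (Proposition \ref{P.A.26}) guarantee that for $\epsilon_0$ small enough the norm of $\mathfrak{Op}^{A_\epsilon}(\sigma_{\epsilon,\lambda})$ is less than $1/2$ uniformly in $(\epsilon,\lambda)\in[-\epsilon_0,\epsilon_0]\times I$, so that
\[
\bigl(\bb1+\mathfrak{Op}^{A_\epsilon}(\sigma_{\epsilon,\lambda})\bigr)^{-1}\;=\;\bb1+\mathfrak{Op}^{A_\epsilon}(\tau_{\epsilon,\lambda})
\]
with $\tau_{\epsilon,\lambda}\to 0$ in $S^0_0$; here one uses closedness of the magnetic Moyal algebra and the usual iterative construction of the inverse symbol inside $S^0_0$ (reconstructing $\tau_{\epsilon,\lambda}$ as the sum of the series $\sum_{k\geq 1}(-\sigma_{\epsilon,\lambda})^{\sharp^{B_\epsilon} k}$, which converges in every $S^0_0$ seminorm because each successive term carries an additional factor going to $0$). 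Setting $\mathcal{E}_{\epsilon,\lambda}:=\mathcal{E}_{0,\epsilon,\lambda}\bigl(\bb1+\mathfrak{Op}^{A_\epsilon}(\tau_{\epsilon,\lambda})\bigr)$ produces a right inverse of the form $\mathcal{E}_{0,\epsilon,\lambda}+\mathfrak{Op}^{A_\epsilon}(\rho_{\epsilon,\lambda})$ with $\rho_{\epsilon,\lambda}\to 0$ in $S^0_0$. The same construction on the left (or equivalently an adjoint argument, using that $\mathcal{P}_{\epsilon,\lambda}$ is self-adjoint by Lemma \ref{L.4.1}) gives a left inverse which must coincide with the right inverse. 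Reading off the lower-right block yields \eqref{4.12}.

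\textbf{Main obstacle.} The delicate point is not the Neumann scheme itself but controlling the magnetic composition expansion \emph{within} the operator-valued symbol class $S^0_0\bigl(\X;\mathbb{B}(\mathcal{K}_{m,\xi}\times\mathbb{C}^N;\mathcal{K}_0\times\mathbb{C}^N)\bigr)$: one must verify that every seminorm of $\mathfrak{M}_\epsilon(\mathcal{P}_0,\mathcal{E}_0)$ and of the higher iterated products in the Neumann series actually tends to $0$ with $\epsilon$. This requires reading the integral representation of $\sharp^{B_\epsilon}$ in conjunction with H.1, H.4 and the uniform estimates on $\mathcal{E}_0$ from Proposition \ref{P.3.8}(d), keeping careful track of the $\xi$-dependence of the target space $\mathcal{K}_{m,\xi}$ (this is where the uniform families of unitaries $\sigma_\xi$ used in Remark \ref{R.3.2} become essential, so that derivatives in $\xi$ can be converted into derivatives of $\mathfrak{q}_{s,\epsilon}$-type symbols acting between fixed Hilbert spaces).
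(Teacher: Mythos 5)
Your proposal follows essentially the same route as the paper's proof of Theorem \ref{T.4.3}: use $\mathcal{E}_{0,\epsilon,\lambda}$ as a parametrix, apply the magnetic composition Theorem \ref{T.A.23} (together with the fact that $\mathcal{P}_0\,\mathcal{E}_0=\bb1$ pointwise and that $\mathcal{P}_\epsilon-\mathcal{P}_0\to 0$ in the operator-valued symbol class) to obtain $\mathcal{P}_{\epsilon,\lambda}\mathcal{E}_{0,\epsilon,\lambda}=\bb1+\mathfrak{Op}^{A_\epsilon}(s_{\epsilon,\lambda})$ with $s_{\epsilon,\lambda}\to 0$, invert this correction using Proposition \ref{P.A.27}, define $\mathcal{E}_{\epsilon,\lambda}:=\mathcal{E}_{0,\epsilon,\lambda}\bigl(\bb1+\mathfrak{Op}^{A_\epsilon}(t_{\epsilon,\lambda})\bigr)$, and use self-adjointness of $\mathcal{P}_{\epsilon,\lambda}$ to upgrade the right inverse to a two-sided inverse. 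One small caveat: your parenthetical that the series $\sum_{k\geq 1}(-\sigma_{\epsilon,\lambda})^{\sharp^{B_\epsilon}k}$ converges in every $S^0_0$ seminorm ``because each successive term carries an additional factor going to $0$'' is not a valid argument for a fixed $\epsilon$ (smallness in low-order seminorms does not control the full Fr\'echet topology); the correct mechanism is the spectral-invariance / Moyal-algebra-closedness argument of Proposition \ref{P.A.27} (citing \S 6.1 of \cite{IMP2}), which you do name, so the logical structure is intact even though that aside is imprecise.
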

\begin{proof}
We begin the proof with the following remarks.

1. The symbol $\mathcal{E}^0(\xi,\lambda)$ appearing in \eqref{4.4} does not depend on $x\in\X$ and on $\epsilon\in[-\epsilon_0,\epsilon_0]$. We can thus consider that $\mathcal{E}^0(\xi,\lambda)\in S^0_{0,\epsilon}\big(\X;\mathbb{B}(\mathcal{K}_0\times\mathbb{C}^N;\mathcal{K}_{m,\xi}\times\mathbb{C}^N)\big)$ uniformly for $\lambda\in I$.

2. The symbol $\mathcal{P}_0(\xi,\lambda)$ appearing in \eqref{4.2} does not depend on $x\in\X$ and on $\epsilon\in[-\epsilon_0,\epsilon_0]$. We can thus consider that $\mathcal{P}_0(\xi,\lambda)\in S^0_{0,\epsilon}\big(\X;\mathbb{B}(\mathcal{K}_{m,\xi}\times\mathbb{C}^N;\mathcal{K}_0\times\mathbb{C}^N)\big)$ uniformly for $\lambda\in I$.

3. From the relations \eqref{4.1}, \eqref{4.2} and \eqref{4.6} it follows that
$$
\mathcal{P}_\epsilon(x,\xi,\lambda)\,-\,\mathcal{P}_0(\xi,\lambda)\ =\ \left(
\begin{array}{cc}
\mathfrak{q}^\prime_\epsilon(x,\xi)&0\\
0&0
\end{array}
\right),
$$
where $\mathfrak{q}^\prime_\epsilon(x,\xi):=\mathfrak{Op}\big(\widetilde{r}_\epsilon(x,.\xi,.)\big)$ and $\widetilde{r}_\epsilon(x,y,\xi,\eta):=r_\epsilon(x,y,\xi+\eta)$. The fact that $\underset{\epsilon\rightarrow0}{\lim}r_\epsilon=0$ in $S^m_1\big(\X\times\mathbb{T}\big)$ and that the map $S^m_1\big(\X\times\mathbb{T}\big)\ni r_\epsilon\mapsto\mathfrak{q}^\prime_\epsilon\in S^0_0\big(\X;\mathbb{B}(\mathcal{K}_{m,\xi};\mathcal{K}_0)\big)$ is continuous (by an evident generalization of property \eqref{A.32}), we conclude that
\begin{equation}\label{4.13}
\underset{\epsilon\rightarrow0} {\lim}\big[\mathcal{P}_\epsilon(x,\xi,\lambda)-\mathcal{P}_0(\xi,\lambda)\big]=0
\end{equation}
in $S^0_0\big(\X;\mathbb{B}(\mathcal{K}_{m,\xi}\times\mathbb{C}^N;\mathcal{K}_0\times\mathbb{C}^N)\big)$ uniformly with respect to $\lambda\in I$.

Let us come back to the proof of the Theorem and denote by $\mathcal{P}^0_{\epsilon,\lambda}:=\mathfrak{Op}^{A_\epsilon}\big(\mathcal{P}_0(\xi,\lambda)\big)$. We can write that
\begin{equation}\label{4.14}
\mathcal{P}_{\epsilon,\lambda}\,\mathcal{E}_{0,\epsilon,\lambda}\ =\ \mathcal{P}^0_{\epsilon,\lambda}\,\mathcal{E}_{0,\epsilon,\lambda}\,+\,\big(\mathcal{P}_{\epsilon,\lambda}-\mathcal{P}^0_{\epsilon,\lambda})\,\mathcal{E}_{0,\epsilon,\lambda}
\end{equation}
in $\mathbb{B}\big(\mathcal{K}(\X^2)\times L^2(\X:\mathbb{C}^N);\mathcal{K}(\X^2)\times L^2(\X;\mathbb{C}^N)\big)$.

Using the {\it Composition Theorem} \ref{T.A.23} and the above remarks, we conclude that
\begin{equation}\label{4.15}
\mathcal{P}_{\epsilon,\lambda}\,\mathcal{E}_{0,\epsilon,\lambda}\ =\ \bb1\,+\,\mathfrak{Op}^{A_\epsilon}\big(s_{\epsilon,\lambda}\big)
\end{equation}
in $\mathbb{B}\big(\mathcal{K}(\X^2)\times L^2(\X:\mathbb{C}^N);\mathcal{K}(\X^2)\times L^2(\X;\mathbb{C}^N)\big)$, where 
\begin{equation}\label{4.16}
\underset{\epsilon\rightarrow0}{\lim}s_{\epsilon,\lambda}=0,
\end{equation}
in $S^0_0\big(\X;\mathbb{B}(\mathcal{K}_0\times\mathbb{C}^N;\mathcal{K}_0\times\mathbb{C}^N)\big)$ uniformly with respect to $\lambda\in I$.

It follows from Proposition \ref{P.A.27} that for $\epsilon_0>0$ small enough, the operator $\bb1+\mathfrak{Op}^{A_\epsilon}(s_{\epsilon,\lambda})$ is invertible in $\mathbb{B}\big(\mathcal{K}(\X^2)\times L^2(\X;\mathbb{C}^N);\mathcal{K}(\X^2)\times L^2(\X;\mathbb{C}^N)\big)$ for any $(\epsilon,\lambda)\in[-\epsilon_0,\epsilon_0]\times I$ and it exists a symbol $t_{\epsilon,\lambda}$ such that
\begin{equation}\label{4.17}
\underset{\epsilon\rightarrow0}{\lim}t_{\epsilon,\lambda}=0,
\end{equation}
in $S^0_0\big(\X;\mathbb{B}(\mathcal{K}_0\times\mathbb{C}^N;\mathcal{K}_0\times\mathbb{C}^N)\big)$ uniformly with respect to $\lambda\in I$ and
\begin{equation}\label{4.18}
\big[\bb1\,+\,\mathfrak{Op}^{A_\epsilon}(s_{\epsilon,\lambda})\big]^{-1}\ =\ \bb1\,+\,\mathfrak{Op}^{A_\epsilon}(t_{\epsilon,\lambda}).
\end{equation}

Let us define
$$
\mathcal{E}_{\epsilon,\lambda}\ :=\ \mathcal{E}_{0,\epsilon,\lambda}\,\big[\bb1\,+\,\mathfrak{Op}^{A_\epsilon}(t_{\epsilon,\lambda})\big]
$$
and let us notice that it is a right inverse for $\mathcal{P}_{\epsilon,\lambda}$. As the operator $\mathcal{P}_{\epsilon,\lambda}$ is self-adjoint, it follows that $\mathcal{E}_{\epsilon,\lambda}$ defined above is also a left inverse for it.

The other properties in the statement of the Theorem are evident now.
\end{proof}

\begin{remark}\label{R.4.4}
The operator $\mathfrak{E}_{-+}(\epsilon,\lambda)$ defined in \eqref{4.12} will be the effective Hamiltonian associated to the Hamiltonian $P_\epsilon$ and the interval $I$. Its importance will partially be explained in the following Corollary.
\end{remark}

\begin{corollary}\label{C.4.5}
Under the assumptions of Theorem \ref{T.4.3}, for any $\lambda\in I$ and any $\epsilon\in[-\epsilon_0,\epsilon_0]$ the following equivalence is true:
\begin{equation}\label{4.19}
\lambda\,\in\,\sigma\big(\widetilde{P}_\epsilon\big)\quad\Longleftrightarrow\quad0\,\in\,\sigma\big(\mathfrak{E}_{-+}(\epsilon,\lambda)\big).
\end{equation}
\end{corollary}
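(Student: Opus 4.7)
The plan is to derive \eqref{4.19} as a Schur-complement identity applied to the block decompositions \eqref{4.8} and \eqref{4.11}, exploiting the self-adjointness established in Lemma \ref{L.4.1} and Proposition \ref{P.2.13}. First I would observe that $\widetilde{P}_\epsilon$ is self-adjoint in $\mathcal{K}(\mathcal{X}^2)=L^2(\mathcal{X}\times\mathbb{T})$ with domain $\mathcal{K}^m_\epsilon(\mathcal{X}^2)$, so $\lambda\in\sigma(\widetilde{P}_\epsilon)$ is equivalent to saying that $\widetilde{P}_\epsilon-\lambda$ fails to have a bounded two-sided inverse as a map $\mathcal{K}^m_\epsilon(\mathcal{X}^2)\to\mathcal{K}(\mathcal{X}^2)$. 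Likewise, the self-adjointness of $\mathcal{P}_{\epsilon,\lambda}$ forces its inverse $\mathcal{E}_{\epsilon,\lambda}$ to be self-adjoint in $\mathcal{K}(\mathcal{X}^2)\oplus L^2(\mathcal{X};\mathbb{C}^N)$, whence the $(2,2)$-block $\mathfrak{E}_{-+}(\epsilon,\lambda)$ is bounded and self-adjoint in $L^2(\mathcal{X};\mathbb{C}^N)$, so $0\in\sigma\bigl(\mathfrak{E}_{-+}(\epsilon,\lambda)\bigr)$ means exactly that $\mathfrak{E}_{-+}(\epsilon,\lambda)$ has no bounded inverse.

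Next I would extract from the identities $\mathcal{P}_{\epsilon,\lambda}\mathcal{E}_{\epsilon,\lambda}=\mathbb{1}$ and $\mathcal{E}_{\epsilon,\lambda}\mathcal{P}_{\epsilon,\lambda}=\mathbb{1}$ on $\mathcal{K}(\mathcal{X}^2)\oplus L^2(\mathcal{X};\mathbb{C}^N)$ and $\mathcal{K}^m_\epsilon(\mathcal{X}^2)\oplus L^2(\mathcal{X};\mathbb{C}^N)$, respectively, the eight intertwining relations
\begin{align*}
(\widetilde{P}_\epsilon-\lambda)\mathfrak{E}(\epsilon,\lambda)+\mathfrak{R}_{-,\epsilon}\mathfrak{E}_-(\epsilon,\lambda)&=\mathbb{1}, & (\widetilde{P}_\epsilon-\lambda)\mathfrak{E}_+(\epsilon,\lambda)+\mathfrak{R}_{-,\epsilon}\mathfrak{E}_{-+}(\epsilon,\lambda)&=0,\\
\mathfrak{R}_{+,\epsilon}\mathfrak{E}(\epsilon,\lambda)&=0, & \mathfrak{R}_{+,\epsilon}\mathfrak{E}_+(\epsilon,\lambda)&=\mathbb{1},\\
\mathfrak{E}(\epsilon,\lambda)(\widetilde{P}_\epsilon-\lambda)+\mathfrak{E}_+(\epsilon,\lambda)\mathfrak{R}_{+,\epsilon}&=\mathbb{1}, & \mathfrak{E}(\epsilon,\lambda)\mathfrak{R}_{-,\epsilon}+\mathfrak{E}_+(\epsilon,\lambda)\mathfrak{E}_{-+}(\epsilon,\lambda)&=0,\\
\mathfrak{E}_-(\epsilon,\lambda)(\widetilde{P}_\epsilon-\lambda)+\mathfrak{E}_{-+}(\epsilon,\lambda)\mathfrak{R}_{+,\epsilon}&=0, & \mathfrak{E}_-(\epsilon,\lambda)\mathfrak{R}_{-,\epsilon}&=\mathbb{1}.
\end{align*}

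For the direction $0\notin\sigma(\mathfrak{E}_{-+}(\epsilon,\lambda))\Rightarrow\lambda\notin\sigma(\widetilde{P}_\epsilon)$, I would verify by direct substitution into the identities above that the operator
\[
G:=\mathfrak{E}(\epsilon,\lambda)-\mathfrak{E}_+(\epsilon,\lambda)\mathfrak{E}_{-+}(\epsilon,\lambda)^{-1}\mathfrak{E}_-(\epsilon,\lambda),
\]
which is a bounded map $\mathcal{K}(\mathcal{X}^2)\to\mathcal{K}^m_\epsilon(\mathcal{X}^2)$ thanks to Theorem \ref{T.4.3}, satisfies $(\widetilde{P}_\epsilon-\lambda)G=\mathbb{1}$ and $G(\widetilde{P}_\epsilon-\lambda)=\mathbb{1}$; the cancellations rely only on the eight relations listed. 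For the reverse direction, assuming $\widetilde{P}_\epsilon-\lambda$ has a bounded inverse, I would show that the operator
\[
H:=-\mathfrak{R}_{+,\epsilon}(\widetilde{P}_\epsilon-\lambda)^{-1}\mathfrak{R}_{-,\epsilon}
\]
is a bounded two-sided inverse of $\mathfrak{E}_{-+}(\epsilon,\lambda)$. Indeed, multiplying the relation $(\widetilde{P}_\epsilon-\lambda)\mathfrak{E}_+(\epsilon,\lambda)=-\mathfrak{R}_{-,\epsilon}\mathfrak{E}_{-+}(\epsilon,\lambda)$ on the left by $\mathfrak{R}_{+,\epsilon}(\widetilde{P}_\epsilon-\lambda)^{-1}$ and using $\mathfrak{R}_{+,\epsilon}\mathfrak{E}_+(\epsilon,\lambda)=\mathbb{1}$ yields $H\mathfrak{E}_{-+}(\epsilon,\lambda)=\mathbb{1}$; the analogous computation starting from $\mathfrak{E}_-(\epsilon,\lambda)(\widetilde{P}_\epsilon-\lambda)=-\mathfrak{E}_{-+}(\epsilon,\lambda)\mathfrak{R}_{+,\epsilon}$ gives $\mathfrak{E}_{-+}(\epsilon,\lambda)H=\mathbb{1}$.

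I do not expect any serious obstacle: the argument is purely algebraic once the Hilbert-space framework from Lemma \ref{L.4.1} and Theorem \ref{T.4.3} is in place. The only point requiring minor care is bookkeeping on the domains — one must consistently read the displayed identities on $\mathcal{K}^m_\epsilon(\mathcal{X}^2)$ or on $\mathcal{K}(\mathcal{X}^2)$ as dictated by the mapping properties of $\widetilde{P}_\epsilon-\lambda$, $\mathfrak{R}_{\pm,\epsilon}$, and of the four blocks of $\mathcal{E}_{\epsilon,\lambda}$ supplied by \eqref{4.11}—so that both $G$ and $H$ really are bounded as inverses in the appropriate sense and thus certify that $\lambda$ lies in the resolvent set of $\widetilde{P}_\epsilon$ or $0$ in the resolvent set of $\mathfrak{E}_{-+}(\epsilon,\lambda)$, respectively.
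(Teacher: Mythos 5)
Your proposal is correct and follows essentially the same route as the paper: a Schur-complement argument on the block decompositions \eqref{4.8} and \eqref{4.11}, extracting the cross relations from $\mathcal{P}_{\epsilon,\lambda}\mathcal{E}_{\epsilon,\lambda}=\mathbb{1}$ and checking that $\mathfrak{E}-\mathfrak{E}_+\mathfrak{E}_{-+}^{-1}\mathfrak{E}_-$ (resp.\ $-\mathfrak{R}_{+,\epsilon}(\widetilde{P}_\epsilon-\lambda)^{-1}\mathfrak{R}_{-,\epsilon}$) inverts $\widetilde{P}_\epsilon-\lambda$ (resp.\ $\mathfrak{E}_{-+}(\epsilon,\lambda)$). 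The only cosmetic difference is that you invoke both $\mathcal{P}\mathcal{E}=\mathbb{1}$ and $\mathcal{E}\mathcal{P}=\mathbb{1}$ to verify two-sided inverses, whereas the paper derives only a right inverse for $\widetilde{P}_\epsilon-\lambda$ and a left inverse for $\mathfrak{E}_{-+}$, then uses self-adjointness to upgrade them to two-sided inverses.
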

\begin{proof}
The equality
$$
\mathcal{P}_{\epsilon,\lambda}\,\mathcal{E}_{\epsilon,\lambda}\ =\ \left(
\begin{array}{cc}
\bb1_{\mathcal{K}(\X^2)}&0\\
0&\bb1_{L^2(\X;\mathbb{C}^N)}
\end{array}
\right)
$$
is equivalent with the following system of equations:
\begin{equation}\label{4.20}
\left\{
\begin{array}{rcl}
\big(\widetilde{P}_\epsilon-\lambda\big)\,\mathfrak{E}(\epsilon,\lambda)\,+\,\mathfrak{R}_{-,\epsilon}\,\mathfrak{E}_-(\epsilon,\lambda)&=&\bb1_{\mathcal{K}(\X^2)},\\
\big(\widetilde{P}_\epsilon-\lambda\big)\,\mathfrak{E}_+(\epsilon,\lambda)\,+\,\mathfrak{R}_{-,\epsilon}\,\mathfrak{E}_{-+}(\epsilon,\lambda)&=&0,\\
\mathfrak{R}_{+,\epsilon}\,\mathfrak{E}(\epsilon,\lambda)&=&0,\\
\mathfrak{R}_{+,\epsilon}\,\mathfrak{E}_+(\epsilon,\lambda)&=&\bb1_{L^2(\X;\mathbb{C}^N)}.
\end{array}
\right.
\end{equation}

If $0\notin\sigma\big(\mathfrak{E}_{-+}(\epsilon,\lambda)\big)$ the second equality in \eqref{4.20} implies that 
$$
\mathfrak{R}_{-,\epsilon}\ =\ -\big(\widetilde{P}_\epsilon-\lambda\big)\mathfrak{E}_+(\epsilon,\lambda)\mathfrak{E}_{-+}(\epsilon,\lambda)^{-1}
$$
and by substituting this value in the first equality in \eqref{4.20} we obtain
$$
\big(\widetilde{P}_\epsilon-\lambda\big)\left[\mathfrak{E}(\epsilon,\lambda)-\mathfrak{E}_+(\epsilon,\lambda)\mathfrak{E}_{-+}(\epsilon,\lambda)^{-1}\mathfrak{E}_-(\epsilon,\lambda)\right]\ =\ \bb1_{\mathcal{K}(\X^2)}.
$$
It follows that $\lambda\notin\sigma(\widetilde{P}_\epsilon)$.

Suppose now that $\lambda\notin\sigma(\widetilde{P}_\epsilon)$; then the second equality in \eqref{4.20} implies that
$$
\mathfrak{E}_+(\epsilon,\lambda)\ =\ -\big(\widetilde{P}_\epsilon-\lambda\big)^{-1}\mathfrak{R}_{-,\epsilon}\,\mathfrak{E}_{-+}(\epsilon,\lambda).
$$
After substituting this last expression in the last equality in \eqref{4.20} we get
$$
-\mathfrak{R}_{+,\epsilon}\big(\widetilde{P}_\epsilon-\lambda\big)^{-1}\mathfrak{R}_{-,\epsilon}\,\mathfrak{E}_{-+}(\epsilon,\lambda)\ =\ \bb1_{L^2(\X;\mathbb{C}^N)}.
$$
It follows that $0\notin\sigma\big(\mathfrak{E}_{-+}(\epsilon,\lambda)\big)$ and we obtain the following identity (valid in this case):
$$
\mathfrak{E}_{-+}(\epsilon,\lambda)^{-1}\ =\ -\mathfrak{R}_{+,\epsilon}\big(\widetilde{P}_\epsilon-\lambda\big)^{-1}\mathfrak{R}_{-,\epsilon}.
$$
\end{proof}

We recall that for any $\gamma^*\in\Gamma^*$ we denote by $\sigma_{\gamma^*}$ the operator of multiplication with the character $e^{i<\gamma^*,.>}$ on $\mathscr{S}^\prime(\X;\mathbb{C}^N)$ and we define now $\Upsilon_{\gamma^*}$ as the operator of multiplication with the function $\sigma_{\gamma^*}\otimes\sigma_{-\gamma^*}$ on the space $\mathscr{S}^\prime(\X^2)$. We shall need further the following commutation property.

\begin{lemma}\label{L.4.6}
 For any $\gamma^*\in\Gamma^*$ the following equality is true $\forall(\epsilon,\lambda)\in[-\epsilon_0,\epsilon_0]\times I$:
 \begin{equation}\label{4.21}
 \left(
 \begin{array}{cc}
 \Upsilon_{\gamma^*}&0\\
 0&\sigma_{\gamma^*}
 \end{array}
 \right)\,\mathcal{P}_{\epsilon,\lambda}\ =\ \mathcal{P}_{\epsilon,\lambda}\,
 \left(
 \begin{array}{cc}
 \Upsilon_{\gamma^*}&0\\
 0&\sigma_{\gamma^*}
 \end{array}
 \right)
 \end{equation}
 as operators on $\mathscr{S}(\X\times\mathbb{T})\times\mathscr{S}(\X;\mathbb{C}^N)$ (identifying the test functions on the torus with the associated periodic distributions).
\end{lemma}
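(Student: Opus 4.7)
The plan is to reduce the matrix identity \eqref{4.21} to three scalar intertwining relations. Using the block decomposition \eqref{4.8} of $\mathcal{P}_{\epsilon,\lambda}$ and multiplying the two sides out, \eqref{4.21} is equivalent to
\begin{equation*}
\Upsilon_{\gamma^*}\widetilde{P}_\epsilon=\widetilde{P}_\epsilon\Upsilon_{\gamma^*},\qquad \Upsilon_{\gamma^*}\mathfrak{R}_{-,\epsilon}=\mathfrak{R}_{-,\epsilon}\sigma_{\gamma^*},\qquad \sigma_{\gamma^*}\mathfrak{R}_{+,\epsilon}=\mathfrak{R}_{+,\epsilon}\Upsilon_{\gamma^*}
\end{equation*}
(the scalar term $\lambda$ trivially commutes with $\Upsilon_{\gamma^*}$). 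I would verify the first two by direct manipulation of the oscillatory integrals defining these operators, and obtain the third one from the second by passing to adjoints.

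For the first identity, I would apply formula \eqref{1.3} to $\Upsilon_{\gamma^*}u$ with $u\in\mathscr{S}(\X\times\mathbb{T})$; this amounts to inserting inside the integral the additional factor $e^{i<\gamma^*,(x+\tilde y-y)-\tilde y>}=e^{i<\gamma^*,x-y>}$. The essential point is that this phase depends neither on $\tilde y$ nor on $\eta$, so it pulls out of the oscillatory integral as a pure multiplier, yielding $\widetilde{P}_\epsilon\Upsilon_{\gamma^*}u=\Upsilon_{\gamma^*}\widetilde{P}_\epsilon u$. (Notice that neither the $\Gamma$-periodicity of $p_\epsilon$ nor any other structural property is actually needed beyond the representation \eqref{1.3}; the extension of that formula to periodic $u$ is the same one exploited in the proof of Lemma \ref{L.2.12}.)

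For the second identity, I would write $\mathfrak{R}_{-,\epsilon}$ as the magnetic Weyl quantization of the $x$-independent operator-valued symbol $R_-(\xi)$, namely
\begin{equation*}
(\mathfrak{R}_{-,\epsilon}f)(x,z)=(2\pi)^{-d}\iint e^{i<\xi,x-y>}\omega_{A_\epsilon}(x,y)\sum_{j=1}^N f_j(y)\,\phi_j(z,\xi)\,dy\,d\xi,
\end{equation*}
and combine the quasi-periodicity relation $\phi_j(z,\xi+\gamma^*)=e^{-i<\gamma^*,z>}\phi_j(z,\xi)$ of Lemma \ref{L.3.7}(c) with the change of variable $\xi\mapsto\xi+\gamma^*$ in the $\xi$-integration when computing $\mathfrak{R}_{-,\epsilon}\sigma_{\gamma^*}f$. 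The three resulting phase factors $e^{i<\gamma^*,x-y>}$, $e^{i<\gamma^*,y>}$ and $e^{-i<\gamma^*,z>}$ collapse to $e^{i<\gamma^*,x-z>}$, exactly the multiplier defining $\Upsilon_{\gamma^*}$. Finally, the third identity follows by taking the adjoint of the second and using $\mathfrak{R}_{+,\epsilon}=\mathfrak{R}_{-,\epsilon}^*$ (cf.\ the proof of Lemma \ref{L.4.1}), $\Upsilon_{\gamma^*}^*=\Upsilon_{-\gamma^*}$, $\sigma_{\gamma^*}^*=\sigma_{-\gamma^*}$, and then replacing $\gamma^*$ by $-\gamma^*$. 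No serious obstacle is anticipated: the algebra is essentially bookkeeping of phase factors, and the justification of the oscillatory integral manipulations on the dense subspace $\mathscr{S}(\X\times\mathbb{T})\times\mathscr{S}(\X;\mathbb{C}^N)$ is routine given the continuity estimates already established in Lemma \ref{L.2.12} and Proposition \ref{P.A.26}.
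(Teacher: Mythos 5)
Your proof is correct and follows essentially the same route as the paper: reduce \eqref{4.21} to the three intertwining relations, derive the first from the oscillatory integral \eqref{1.3} (the paper packages this as equality \eqref{2.21}, but the content is identical), and handle $\mathfrak{R}_{-,\epsilon}$ by a phase shift in the $\xi$-integration combined with the quasi-periodicity of the $\phi_j$. The only small deviation is that you get the third relation $\sigma_{\gamma^*}\mathfrak{R}_{+,\epsilon}=\mathfrak{R}_{+,\epsilon}\Upsilon_{\gamma^*}$ by passing to adjoints, whereas the paper simply says it follows ``in a similar way''; your adjoint argument is a clean and valid shortcut.
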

\begin{proof}
From equality \eqref{2.21} we deduce that for any $\gamma^*\in\Gamma^*$ we have the following equality on $\mathscr{S}(\X\times\mathbb{T})$:
\begin{equation}\label{4.22}
\Upsilon_{\gamma^*}\,\widetilde{P}_\epsilon\ =\ \widetilde{P}_\epsilon\,\Upsilon_{\gamma^*}.
\end{equation}
Taking now into account Lemma \ref{L.3.7} and the definitions \eqref{3.31} and \eqref{3.32} we obtain that
\begin{equation}\label{4.23}
R_+(\xi+\gamma^*)\ =\ R_+(\xi)\,\sigma_{\gamma^*},\quad R_-(\xi+\gamma^*)\ =\ \sigma_{-\gamma^*}\,R_-(\xi),\qquad\forall\xi\in\X^*,\ \forall\gamma^*\in\Gamma^*.
\end{equation}
Repeating the computations done in Exemple \ref{E.A.4} we obtain for any $\underline{u}\in\mathscr{S}(\X;\mathbb{C}^N)$:
$$
\big(\mathfrak{R}_{-,\epsilon}\sigma_{\gamma^*}\underline{u}\big)(x,y)=\int_\Xi e^{i<\zeta,x-z>}\omega^{A_\epsilon}(x,z)\left[R_-(\zeta)e^{i<\gamma^*,z>}\underline{u}(z)\right](y)dz\,\dbar\zeta=
$$
$$
=e^{i<\gamma^*,x>}\int_\Xi e^{i<\zeta-\gamma^*,x-z>}\omega^{A_\epsilon}(x,z)\left[R_-(\zeta)\underline{u}(z)\right](y)dz\,\dbar\zeta=
$$
$$
=e^{i<\gamma^*,x>}\int_\Xi e^{i<\zeta,x-z>}\omega^{A_\epsilon}(x,z)\left[R_-(\zeta+\gamma^*)\underline{u}(z)\right](y)dz\,\dbar\zeta=
$$
$$
=e^{i<\gamma^*,x>}\int_\Xi e^{i<\zeta,x-z>}\omega^{A_\epsilon}(x,z)\left[\sigma_{-\gamma^*}R_-(\zeta)\underline{u}(z)\right](y)dz\,\dbar\zeta=\big(\Upsilon_{\gamma^*}\mathfrak{R}_{-,\epsilon}\underline{u}\big)(x,y),
$$
concluding that on $\mathscr{S}(\X;\mathbb{C}^N)$ we have the equality
\begin{equation}\label{4.24}
\mathfrak{R}_{-,\epsilon}\sigma_{\gamma^*}\ =\ \Upsilon_{\gamma^*}\mathfrak{R}_{-,\epsilon}.
\end{equation}
In a similar way we obtain that on $\mathscr{S}(\X\times\mathbb{T})$ we have the equality:
\begin{equation}\label{4.25}
\mathfrak{R}_{+,\epsilon}\,\Upsilon_{\gamma^*}\ =\ \sigma_{\gamma^*}\,\mathfrak{R}_{+,\epsilon}.
\end{equation}
\end{proof}
\begin{remark}\label{R.4.7}
Of course the inverse of the operator $\mathcal{P}_{\epsilon,\lambda}$ verifies a commutation equation similar to \eqref{4.21}:
\begin{equation}\label{4.26}
 \left(
 \begin{array}{cc}
 \Upsilon_{\gamma^*}&0\\
 0&\sigma_{\gamma^*}
 \end{array}
 \right)\,\mathcal{E}_{\epsilon,\lambda}\ =\ \mathcal{E}_{\epsilon,\lambda}\,
 \left(
 \begin{array}{cc}
 \Upsilon_{\gamma^*}&0\\
 0&\sigma_{\gamma^*}
 \end{array}
 \right),\qquad\forall\gamma^*\in\Gamma^*,
\end{equation}
on $\mathscr{S}(\X\times\mathbb{T})\times\mathscr{S}(\X;\mathbb{C}^N)$ and for any $(\epsilon,\lambda)\in[-\epsilon_0,\epsilon_0]\times I$.
\end{remark}

\section{The auxiliary Hilbert spaces $\mathfrak{V}_0$ and $\mathfrak{L}_0$}\label{S.5}
\setcounter{equation}{0}
\setcounter{theorem}{0}

The procedure we use after \cite{GMS} for coming back from the operator $\widetilde{P}_\epsilon$ to the basic Hamiltonian $P_\epsilon$ supposes to consider the extension of the pseudodifferential operator $\widetilde{P}_\epsilon$ to the tempered distributions and a restriction of this one to some Hilbert spaces of distributions that we introduce and study in this section.

\begin{definition}\label{D.5.1}
Let us consider the following complex space (denoting by $\delta_\gamma:=\tau_\gamma\delta$ with $\delta$ the Dirac distribution of mass 1 supported in $\{0\}$ and $\gamma\in\Gamma$):
$$
\mathfrak{V}_0\ :=\ \left\{\,w\in\mathscr{S}^\prime(\X)\,\mid\,\exists f\in l^2(\Gamma)\ \text{such that}\ w=\underset{\gamma\in\Gamma}{\sum}f_\gamma\delta_{-\gamma}\,\right\},
$$
endowed with the quadratic norm:
$$
\|w\|_{\mathfrak{V}_0}\ :=\ \sqrt{\underset{\gamma\in\Gamma}{\sum}\left|f_\gamma\right|^2},\qquad\forall w\in\mathfrak{V}_0.
$$
\end{definition}

It is evident that $\mathfrak{V}_0$ is a Hilbert space and is canonically unitarily equivalent with $l^2(\Gamma)$.

The Hilbert space $\mathfrak{V}_0$ has a 'good comparaison property' with respect to the scale of magnetic Sobolev spaces introduced in \cite{IMP1}. In order to study this relation let us choose a family of vector potentials $\{A_\epsilon\}_{|\epsilon|\leq\epsilon_0}$ having components of class $C^\infty_{\text{\sf pol}}(\X)$ and defining the magnetic fields $\{B_\epsilon\}_{|\epsilon|\leq\epsilon_0}$ satisfying Hypothesis H.1.

\begin{lemma}\label{L.5.2}
For any $s>d$ and for any $\epsilon\in[-\epsilon_0,\epsilon_0]$ we have the algebraic and topologic inclusion $\mathfrak{V}_0\hookrightarrow\mathcal{H}^{-s}_{A_\epsilon}(\X)$, uniformly with respect to $\epsilon\in[-\epsilon_0,\epsilon_0]$.
\end{lemma}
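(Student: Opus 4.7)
The plan is to estimate $\|Q_{-s,\epsilon} w\|_{L^2(\X)}$ directly from the integral kernel of the magnetic pseudodifferential operator $Q_{-s,\epsilon} = \mathfrak{Op}^{A_\epsilon}(\mathfrak{q}_{-s,\epsilon})$. Since $w = \sum_{\gamma \in \Gamma} f_\gamma \delta_{-\gamma}$ is a linear combination of Dirac masses and $\mathfrak{q}_{-s,\epsilon}$ is a symbol of order $-s < -d$, the action of $Q_{-s,\epsilon}$ on each $\delta_{-\gamma}$ is given by a pointwise value of the kernel, and the whole problem reduces to a Schur-type estimate controlled by kernel decay.

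Concretely, I would first write down the Schwartz kernel
\[
K_\epsilon(x,y) \;:=\; (2\pi)^{-d}\!\int_{\X^*}\! e^{i\langle \xi, x-y\rangle}\,\omega^{A_\epsilon}(x,y)\,\mathfrak{q}_{-s,\epsilon}\big(\tfrac{x+y}{2},\xi\big)\,d\xi,
\]
so that $(Q_{-s,\epsilon}\delta_{-\gamma})(x) = K_\epsilon(x,-\gamma)$ pointwise, and hence $Q_{-s,\epsilon} w$ is the locally finite sum $\sum_\gamma f_\gamma K_\epsilon(\cdot,-\gamma)$. The next step is to establish that for every $N \in \mathbb{N}$ there exists a constant $C_N>0$, independent of $\epsilon \in [-\epsilon_0,\epsilon_0]$, such that
\[
|K_\epsilon(x,y)| \;\le\; C_N\,\langle x-y\rangle^{-N},\qquad \forall (x,y)\in\X^2.
\]
This is a standard oscillatory integral estimate: since $|\omega^{A_\epsilon}(x,y)|=1$ and $\mathfrak{q}_{-s,\epsilon}$ satisfies symbol estimates of order $-s$ uniformly in $\epsilon$ (this is the point of the construction in Remark \ref{R.A.25}), for $x\ne y$ one integrates by parts using $\langle x-y\rangle^{-2}(1-\Delta_\xi)e^{i\langle\xi,x-y\rangle}=e^{i\langle\xi,x-y\rangle}$ as many times as needed, while near $x=y$ the integrability of $\langle\xi\rangle^{-s}$ (because $s>d$) gives a bounded contribution; the magnetic phase does not affect the modulus.

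With this kernel estimate in hand, the last step is purely combinatorial. For $N>d$, the Schur test yields
\[
\sup_x \sum_{\gamma\in\Gamma} |K_\epsilon(x,-\gamma)| \;+\; \sup_\gamma \int_{\X} |K_\epsilon(x,-\gamma)|\,dx \;\le\; M
\]
for some $M>0$ independent of $\epsilon$, since both sums/integrals are dominated by $\sum_\gamma \langle x+\gamma\rangle^{-N}$ and $\int \langle x+\gamma\rangle^{-N} dx$ respectively. Then Cauchy--Schwarz applied to $\big(\sum_\gamma f_\gamma K_\epsilon(x,-\gamma)\big)^2 \le \big(\sum_\gamma |K_\epsilon(x,-\gamma)|\big)\big(\sum_\gamma |f_\gamma|^2 |K_\epsilon(x,-\gamma)|\big)$ followed by integration in $x$ gives
\[
\|Q_{-s,\epsilon} w\|_{L^2(\X)}^2 \;\le\; M^2\,\sum_{\gamma\in\Gamma}|f_\gamma|^2 \;=\; M^2\,\|w\|_{\mathfrak{V}_0}^2,
\]
which is the claimed continuous inclusion, with a constant independent of $\epsilon$.

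The only subtle point I expect is getting the symbol bounds on $\mathfrak{q}_{-s,\epsilon}$ uniform in $\epsilon$ (so that $C_N$ is uniform); but this is exactly how $\mathfrak{q}_{s,\epsilon}$ is defined in Remark \ref{R.A.25}, the family being bounded in $S^{-s}_1$ because the hypothesis H.1 forces the corresponding families of symbols constructed from $B_\epsilon$ and $A_\epsilon$ to depend smoothly (hence boundedly on the compact interval $[-\epsilon_0,\epsilon_0]$) on $\epsilon$. Apart from this bookkeeping, everything else is a standard oscillatory integral plus Schur test argument.
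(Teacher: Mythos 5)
Your proposal is correct and follows essentially the same route as the paper: compute the pointwise value $(Q_{-s,\epsilon}\delta_{-\gamma})(x)$ as the kernel $K_\epsilon(x,-\gamma)$, establish rapid off-diagonal decay $|K_\epsilon(x,y)|\le C_N\langle x-y\rangle^{-N}$ by integration by parts (uniform in $\epsilon$ because $\{\mathfrak{q}_{-s,\epsilon}\}$ is a bounded family in $S^{-s}_1$), and then apply the Cauchy--Schwarz/Schur argument to pass from $\ell^2(\Gamma)$ to $L^2(\X)$. The only difference is presentational: the paper writes out the oscillating integral for $Q_{-s,\epsilon}\delta_{-\gamma}$ directly rather than naming the Schwartz kernel first, but the estimates and the summation step are the same.
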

\begin{proof}
We use the operator $Q_{s,\epsilon}$ from Remark \ref{R.A.25}. Let $u=\underset{\gamma\in\Gamma}{\sum}f_\gamma\delta_{-\gamma}\in\mathfrak{V}_0$. Then we have (in $\mathscr{S}^\prime(\X)$):
$$
g:=\ Q_{-s,\epsilon}u\ =\ \underset{\gamma\in\Gamma}{\sum}f_\gamma Q_{-s,\epsilon}\delta_{-\gamma}.
$$
A computation made in $\mathscr{S}^\prime(\X)$ shows that (for $s>d$) we have that $Q_{-s,\epsilon}\delta_{-\gamma}$ belongs in fact to $C(\X)$ (as Fourier transform of an integrable function) and moreover:
$$
\big(Q_{-s,\epsilon}\delta_{-\gamma}\big)(x)\ =\ (2\pi)^{-n}\int_\Xi e^{i<\eta,x-y>}\omega_{A_\epsilon}(x,y)\,q_{-s,\epsilon}\big(\frac{x+y}{2},\eta\big)\,\delta_{-\gamma}(y)dy\,d\eta\ =
$$
$$
=(2\pi)^{-n}\int_{\X^*}e^{i<\eta,x+\gamma>}\omega_{A_\epsilon}(x,-\gamma)\,q_{-s,\epsilon}\big(\frac{x-\gamma}{2},\eta\big)\,d\eta.
$$
From this last formula we may deduce that for any $N\in\mathbb{N}$ there exists a strictly positive constant $C_N>0$ such that for any $\epsilon\in[-\epsilon_0,\epsilon_0]$ and for any $x\in\X$ we have the estimation:
$$
\left|\big(Q_{-s,\epsilon}\delta_{-\gamma}\big)(x)\right|\ \leq\ C_N<x+\gamma>^{-N}.
$$
Choosing $N>d$ we notice that for any $x\in\X$:
$$
|g(x)|\ \leq\ C_N\underset{\gamma\in\Gamma}{\sum}\left|f_{\gamma}\right|<x+\gamma>^{-N}\ \leq\ C_N\left(\underset{\gamma\in\Gamma}{\sum}\left|f_{\gamma}\right|^2<x+\gamma>^{-N}\right)^{1/2}\left(\underset{\gamma\in\Gamma}{\sum}<x+\gamma>^{-N}\right)^{1/2}.
$$
We may conclude that $g\in L^2(X)$ and $\|g\|_{L^2(\X)}\leq C_N\|u\|^2_{\mathfrak{V}_0}$. Finally this is equivalent with the fact that $Q_{s,\epsilon}g\in\mathcal{H}^{-s}_{A_\epsilon}(\X)$ and there exists a strictly positive constant $C$ such that
$$
\|u\|_{\mathcal{H}^{-s}_{A_\epsilon}(\X)}\ \leq\ C\|u\|_{\mathfrak{V}_0},\qquad\forall u\in\mathfrak{V}_0,\ \forall\epsilon\in[-\epsilon_0,\epsilon_0].
$$
\end{proof}

We shall need a property characterizing the elements from $\mathfrak{V}_0$ (replacing the property proposed in \cite{GMS} that is not easy to generalise to our situation and moreover its proof given in \cite{GMS} and \cite{DS} has some gaps that make necessary some modifications in the arguments!). 
\begin{lemma}\label{L.5.3}
For any $s>d$ there exists a strictly positive constant $C_s>0$ such that the following inequality is true:
\begin{equation}\label{5.1}
\underset{\gamma\in\Gamma}{\sum}|u(\gamma)|^2\ \leq\ C_s\|u\|^2_{\mathcal{H}^s_{A_\epsilon}(\X)},\qquad\forall u\in\mathscr{S}(\X),\ \forall\epsilon\in[-\epsilon_0,\epsilon_0].
\end{equation}
\end{lemma}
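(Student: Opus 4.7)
The proof strategy is a standard duality argument that reduces the claim directly to Lemma \ref{L.5.2}. A basic structural feature of the magnetic Sobolev scale developed in \cite{IMP1} is that $\mathcal{H}^{-s}_{A_\epsilon}(\X)$ is canonically identified with the topological dual of $\mathcal{H}^{s}_{A_\epsilon}(\X)$ through the continuous extension of the $L^2$-pairing, the corresponding duality constants being uniform in $\epsilon\in[-\epsilon_0,\epsilon_0]$. Moreover $\mathscr{S}(\X)$ is dense in $\mathcal{H}^{s}_{A_\epsilon}(\X)$, so that the duality pairing of any $w\in\mathcal{H}^{-s}_{A_\epsilon}(\X)$ with a Schwartz function $\varphi$ reduces to the usual tempered-distribution evaluation $\langle w,\varphi\rangle_{\mathscr{S}^\prime,\mathscr{S}}$.

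Given $u\in\mathscr{S}(\X)$, I would pick an arbitrary sequence $c=(c_\gamma)_{\gamma\in\Gamma}\in l^2(\Gamma)$ and associate to it the tempered distribution
\begin{equation*}
w_c\ :=\ \sum_{\gamma\in\Gamma}\overline{c_\gamma}\,\delta_{-\gamma}\ \in\ \mathfrak{V}_0,\qquad\|w_c\|_{\mathfrak{V}_0}\,=\,\|c\|_{l^2(\Gamma)}.
\end{equation*}
Applying Lemma \ref{L.5.2} with the same $s>d$ gives $w_c\in\mathcal{H}^{-s}_{A_\epsilon}(\X)$ with $\|w_c\|_{\mathcal{H}^{-s}_{A_\epsilon}(\X)}\leq C\|c\|_{l^2(\Gamma)}$, uniformly in $\epsilon$. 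Computing the pairing of $w_c$ with the Schwartz function $\overline{u}$ directly gives
\begin{equation*}
\big\langle w_c,\overline{u}\big\rangle_{\mathscr{S}^\prime,\mathscr{S}}\ =\ \sum_{\gamma\in\Gamma}\overline{c_\gamma}\,\overline{u(-\gamma)}\ =\ \overline{\sum_{\gamma\in\Gamma}c_\gamma\,u(-\gamma)},
\end{equation*}
while the duality inequality yields
\begin{equation*}
\Big|\sum_{\gamma\in\Gamma}c_\gamma\,u(-\gamma)\Big|\ \leq\ \|w_c\|_{\mathcal{H}^{-s}_{A_\epsilon}(\X)}\,\|u\|_{\mathcal{H}^{s}_{A_\epsilon}(\X)}\ \leq\ C\,\|c\|_{l^2(\Gamma)}\,\|u\|_{\mathcal{H}^{s}_{A_\epsilon}(\X)}.
\end{equation*}
Taking the supremum over $c$ with $\|c\|_{l^2(\Gamma)}\leq 1$ and then reindexing by $\gamma\leftrightarrow-\gamma$ (a bijection of $\Gamma$) delivers precisely the estimate \eqref{5.1}, with a constant $C_s$ uniform in $\epsilon\in[-\epsilon_0,\epsilon_0]$ because the constant of Lemma \ref{L.5.2} is.

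The only point that requires care is the identification of $\mathcal{H}^{-s}_{A_\epsilon}(\X)$ with the dual of $\mathcal{H}^{s}_{A_\epsilon}(\X)$ under the $L^2$-pairing, together with the compatibility of this pairing with the $\mathscr{S}^\prime$--$\mathscr{S}$ evaluation; both are standard consequences of the magnetic Weyl calculus of \cite{IMP1}, so no new pseudodifferential computation is needed. Thus the present lemma is essentially a soft corollary of Lemma \ref{L.5.2}, where the genuine technical work has already been carried out.
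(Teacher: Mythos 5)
Your duality argument is correct and yields exactly \eqref{5.1}, but it is genuinely different from the paper's own proof. The paper proves Lemma \ref{L.5.3} directly: it writes $u = Q_{-s,\epsilon}v$ with $v := Q_{s,\epsilon}u$, uses the oscillating-integral kernel of $Q_{-s,\epsilon}$ together with integration by parts to obtain the pointwise bound $|u(x)|\leq C\int_\X\langle x-y\rangle^{-2N}|v(y)|\,dy$, then applies Cauchy--Schwarz, sums over $\gamma\in\Gamma$, and uses $\sum_\gamma\langle\gamma-y\rangle^{-2N}\leq C$ to land on $\sum_\gamma|u(\gamma)|^2\leq C\|v\|^2_{L^2}$. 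Your approach instead treats the inequality as the dual statement of Lemma \ref{L.5.2}: you test $u$ against $w_c=\sum_\gamma\overline{c_\gamma}\delta_{-\gamma}\in\mathfrak{V}_0\hookrightarrow\mathcal{H}^{-s}_{A_\epsilon}(\X)$ and invoke the identification of $\mathcal{H}^{-s}_{A_\epsilon}(\X)$ with the dual of $\mathcal{H}^s_{A_\epsilon}(\X)$ under the extended $L^2$-pairing. This is appealing because it makes explicit that Lemmas \ref{L.5.2} and \ref{L.5.3} are two sides of the same coin and spares a second kernel computation; the trade-off is that it leans on the duality $\big(\mathcal{H}^s_{A_\epsilon}(\X)\big)^* \cong \mathcal{H}^{-s}_{A_\epsilon}(\X)$, with constants uniform in $\epsilon$, which the paper never explicitly states (though it does follow from the facts that $q_{\pm s,\epsilon}$ are real, hence $Q_{\pm s,\epsilon}$ are formally self-adjoint, $Q_{s,\epsilon}Q_{-s,\epsilon}=\id$, and the norm bounds are $\epsilon$-uniform). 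The paper's direct proof is more self-contained and does not make Lemma \ref{L.5.3} logically dependent on Lemma \ref{L.5.2}. Both arguments give the claim with the stated uniformity.
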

\begin{proof}
For any fixed $u\in\mathscr{S}(\X)$ let us denote by $v:=Q_{s,\epsilon}u\in\mathscr{S}(\X)$. Then $u=Q_{-s,\epsilon}v$ and thus for any $N\in\mathbb{N}$ and for any $x\in\X$ we can write that:
$$
u(x)\ =\ \int_\Xi e^{i<\eta,x-y>}<x-y>^{-2N}\omega_{A_\epsilon}(x,y)\left[\big(\bb1-\Delta_\eta\big)^Nq_{-s,\epsilon}\big(\frac{x+y}{2},\eta\big)\right]v(y)dy\,\dbar\eta.
$$
This equality implies that we can find two strictly positive constants $C$ and $C^\prime$ such that for any $\epsilon\in[-\epsilon_0,\epsilon_0]$ and for any $x\in\X$ one has the estimation:
$$
|u(x)|\ \leq\ C\int_\X<x-y>^{-2N}|v(y)|dy,
$$
$$
|u(x)|^2\ \leq\ C^2\left(\int_\X<x-y>^{-2N}dy\right)\left(\int_\X<x-y>^{-2N}|v(y)|^2dy\right)\ \leq\ C^\prime\int_\X<x-y>^{-2N}|v(y)|^2dy.
$$
We choose now $N\in\mathbb{N}$ large enough and notice that
$$
\underset{\gamma\in\Gamma}{\sum}|u(\gamma)|^2\ \leq\ C^\prime\int_\X\left(\underset{\gamma\in\Gamma}{\sum}<\gamma-y>^{-2N}\right)|v(v)|^2dy\ \leq\ C^{\prime\prime}\|v\|^2_{L^2(\X)}\ \leq\ C_s\|u\|^2_{\mathcal{H}^s_{A_\epsilon}(\X)}.
$$
\end{proof}

For any $\gamma^*\in\Gamma^*$ we use the notation $\sigma_{\gamma^*}$ also for the operator of multiplication with the character $e^{i<\gamma^*,.>}$ on the space of tempered distributions (that it evidently leaves invariant).
\begin{proposition}\label{P.5.4}
We have the following charaterization of the vectors from $\mathfrak{V}_0$:
\begin{enumerate}
\item[a)] Given any vector $u\in\mathfrak{V}_0$ there exists a vector $u_0\in\mathcal{H}^\infty_{A_\epsilon}(\X)$ such that 
\begin{equation}\label{5.2}
u\ =\ \underset{\gamma^*\in\Gamma^*}{\sum}\sigma_{\gamma^*}u_0.
\end{equation}
Moreover the map $\mathfrak{V}_0\ni u\mapsto u_0\in\mathcal{H}^\infty_{A_\epsilon}(\X)$ is continuous uniformly with respect to $\epsilon\in[-\epsilon_0,\epsilon_0]$.
\item[b)] Given any $u_0\in\mathcal{H}^\infty_{A_\epsilon}(\X)$ the series $\underset{\gamma^*\in\Gamma^*}{\sum}\sigma_{\gamma^*}u_0$ converges in $\mathscr{S}^\prime(\X)$ and its sum denoted by $u$ belongs in fact to $\mathfrak{V}_0$. Moreover the map $\mathcal{H}^\infty_{A_\epsilon}(\X)\ni u_0\mapsto u\in\mathfrak{V}_0$ is continuous uniformly with respect to $\epsilon\in[-\epsilon_0,\epsilon_0]$.
\end{enumerate}
\end{proposition}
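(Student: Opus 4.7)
The plan is to reduce both directions to an explicit Fourier-side computation. Observe first that, for $u=\sum_{\gamma\in\Gamma}f_{\gamma}\delta_{-\gamma}\in\mathfrak{V}_0$, the Fourier transform $\widehat u$ is the $\Gamma^*$-periodic tempered distribution $(2\pi)^{-d/2}\sum_{\gamma}f_\gamma e^{i\langle\cdot,\gamma\rangle}$, canonically identified with an element of $L^2(\mathbb{T}_{\Gamma_*})$ whose norm is a multiple of $\|f\|_{\ell^2(\Gamma)}$.

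For part (a), I would fix once and for all a cut-off $\chi\in C^\infty_0(\mathcal{X}^*)$ satisfying the partition-of-unity condition $\sum_{\gamma^*\in\Gamma_*}\tau_{\gamma^*}\chi\equiv 1$ and set $u_0:=\mathcal{F}^{-1}(\chi\widehat u)$. The $\Gamma^*$-periodicity of $\widehat u$ gives
\begin{equation*}
\sum_{\gamma^*}\widehat{\sigma_{\gamma^*}u_0}\ =\ \sum_{\gamma^*}\tau_{\gamma^*}\!\left(\chi\widehat u\right)\ =\ \widehat u\sum_{\gamma^*}\tau_{\gamma^*}\chi\ =\ \widehat u\quad\text{in }\mathscr{S}^\prime,
\end{equation*}
so the identity $u=\sum_{\gamma^*}\sigma_{\gamma^*}u_0$ holds. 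Writing explicitly $u_0(x)=(2\pi)^{-d/2}\sum_{\gamma}f_\gamma\,\check\chi(x+\gamma)$ with $\check\chi\in\mathscr{S}(\mathcal{X})$ (by Paley-Wiener), the verification that $u_0\in\mathcal{H}^\infty_{A_\epsilon}(\mathcal{X})$ uniformly in $\epsilon$ would proceed by applying $Q_{s,\epsilon}$ termwise and estimating the $L^2$-norm of the resulting series; the gauge-covariance relation $\tau_{-\gamma}Q_{s,\epsilon}\tau_{\gamma}$ differs from $Q_{s,\epsilon}^{\tau_{-\gamma}A_\epsilon}$ only by a phase of modulus one, while Hypothesis H.1 and \eqref{0.28} let one control the associated family of symbols uniformly in $\epsilon$ via the magnetic calculus of \cite{IMP1}; the Schwartz decay of $\check\chi$ and its derivatives then yields $\ell^2$-summability of the cross-terms through Schur-type estimates.

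For part (b), suppose $u_0\in\mathcal{H}^\infty_{A_\epsilon}(\mathcal{X})$. First, Lemma \ref{L.5.3} applied to a Schwartz approximation (using density of $\mathscr{S}(\mathcal{X})$ in $\mathcal{H}^s_{A_\epsilon}(\mathcal{X})$ and a magnetic Sobolev embedding for $s>d$, which makes pointwise evaluation at $\Gamma$ a bounded linear functional) yields $\{u_0(\gamma)\}_{\gamma\in\Gamma}\in\ell^2(\Gamma)$ with
\begin{equation*}
\sum_{\gamma\in\Gamma}|u_0(\gamma)|^2\ \leq\ C_s\|u_0\|^2_{\mathcal{H}^s_{A_\epsilon}(\mathcal{X})},
\end{equation*}
uniformly in $\epsilon$. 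Second, for any $\phi\in\mathscr{S}(\mathcal{X})$ the product $u_0\phi$ is Schwartz (since $u_0$ is smooth of at most polynomial growth by the same embedding), so the classical Poisson summation formula gives
\begin{equation*}
\sum_{\gamma^*\in\Gamma_*}\langle\sigma_{\gamma^*}u_0,\phi\rangle\ =\ \sum_{\gamma^*}\int u_0\phi\,e^{i\langle\gamma^*,\cdot\rangle}\,dx\ =\ \frac{(2\pi)^d}{|E|}\sum_{\gamma\in\Gamma}u_0(\gamma)\phi(\gamma).
\end{equation*}
Consequently the series $\sum_{\gamma^*}\sigma_{\gamma^*}u_0$ converges in $\mathscr{S}^\prime(\mathcal{X})$ to $\tfrac{(2\pi)^d}{|E|}\sum_{\gamma}u_0(\gamma)\delta_{\gamma}\in\mathfrak{V}_0$, with norm controlled by $\|u_0\|_{\mathcal{H}^s_{A_\epsilon}}$ uniformly in $\epsilon$.

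The main obstacle is the uniform-in-$\epsilon$ regularity statement in part (a): $u_0$ has compact Fourier support and thus trivially belongs to every standard Sobolev space, but the magnetic norms involve the vector potential $A_\epsilon$ of polynomial growth, so the estimate on $\|Q_{s,\epsilon}u_0\|_{L^2}$ cannot be read off from the Fourier support alone. The cleanest route is to recognize $u_0$ as an $\ell^2(\Gamma)$-superposition of $\Gamma$-translates of the fixed Schwartz function $\check\chi$, reduce the magnetic norm of each translate to that of $\check\chi$ itself (modulo an $\epsilon$-uniform change of gauge), and then sum the pairwise $L^2$ inner products using the rapid decay of $\check\chi$; the uniformity in $\epsilon$ comes from Hypothesis H.1, which makes the dependence of the symbol-level estimates on $\epsilon$ continuous down to $\epsilon=0$.
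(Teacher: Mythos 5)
Your construction of $u_0$ in part (a) is genuinely different from the paper's, and the difference matters. You take $u_0:=\mathcal{F}^{-1}(\chi\widehat u)$, the ordinary Fourier multiplier, which makes the identity $u=\sum_{\gamma^*}\sigma_{\gamma^*}u_0$ a one-line computation on the Fourier side but gives you no control over the \emph{magnetic} Sobolev norms. The paper instead takes $u_0:=\mathfrak{Op}^{A_\epsilon}(\chi)u$, the magnetic quantization of the same cut-off; since $\chi\in S^{-\infty}_1(\X)$ and $u\in\mathcal{H}^{-s}_{A_\epsilon}(\X)$ by Lemma \ref{L.5.2}, the inclusion $u_0\in\mathcal{H}^\infty_{A_\epsilon}(\X)$ and the uniform-in-$\epsilon$ continuity are then automatic from the magnetic pseudodifferential mapping theorems of \cite{IMP1} — no Schur estimate at all. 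The price the paper pays is that the identity $u=\sum_{\gamma^*}\sigma_{\gamma^*}u_0$ is no longer a Fourier triviality, but it is still short: one uses the intertwining $\sigma_{\gamma^*}\mathfrak{Op}^{A_\epsilon}(\chi)=\mathfrak{Op}^{A_\epsilon}(\tau_{-\gamma^*}\chi)\sigma_{\gamma^*}$ of \eqref{A.4}, the invariance $\sigma_{\gamma^*}u=u$ for $u\in\mathfrak{V}_0$, and the partition of unity $\sum_{\gamma^*}\tau_{-\gamma^*}\chi=1$. So the two approaches move the work between the two halves of the proof; the paper's choice is the one that makes the hard half vanish. Your part (b) is essentially the paper's (Poisson summation plus Lemma \ref{L.5.3} plus density), modulo a sign and a normalization constant.

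There is also a concrete flaw in the step you flag as the main obstacle. You assert that $\tau_{-\gamma}Q_{s,\epsilon}\tau_{\gamma}$ differs from the operator with translated vector potential ``only by a phase of modulus one.'' This would be true if $B_\epsilon$ were $\Gamma$-periodic, since then $\tau_{-\gamma}A_\epsilon-A_\epsilon$ would be closed and hence a gradient, and the two quantizations would be gauge-equivalent. But nothing in H.1 makes $B_\epsilon$ periodic, so $\tau_{-\gamma}B_\epsilon\neq B_\epsilon$ in general and no phase relates the two operators. What \emph{is} true, and what actually closes the argument, is weaker but sufficient: by Lemma \ref{L.A.18}, $\tau_{-\gamma}Q_{s,\epsilon}\tau_{\gamma}=\mathfrak{Op}^{\tau_{-\gamma}A_\epsilon}(q_{s,\epsilon})$ exactly (no phase, since $q_{s,\epsilon}$ is $x$-independent), and the Schwartz-kernel estimates for this operator applied to $\check\chi$ depend only on the $BC^\infty$-seminorms of $\tau_{-\gamma}B_\epsilon$, which equal those of $B_\epsilon$ and are therefore uniform in $\gamma$ (and in $\epsilon$ by H.1). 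If you replace the phase argument by this translation-invariance of the seminorms, the Schur estimate you sketch goes through; as written, the claim is incorrect.
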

\begin{proof}
We shall use the notation $u_{\gamma^*}:=\sigma_{\gamma^*}u_0$, for any $\gamma^*\in\Gamma^*$ and for any tempered distribution $u_0\in\mathscr{S}^\prime(\X)$.

a) Lemma \ref{L.5.2} implies that for any $s>d$ and any $\epsilon\in[-\epsilon_0,\epsilon_0]$ we have that $\mathfrak{V}_0\subset\mathcal{H}^{-s}_{A_\epsilon}(\X)$ and there exists a strictly positive constant $C_s>0$, independent of $\epsilon$, such that
\begin{equation}\label{5.3}
\|u\|_{\mathcal{H}^{-s}_{A_\epsilon}(\X)}\ \leq\ C_s\|u\|_{\mathfrak{V}_0},\qquad\forall\epsilon\in[-\epsilon_0,\epsilon_0],\ \forall u\in\mathfrak{V}_0.
\end{equation}

Let us choose a real function $\chi\in C^\infty_0(\X^*)$ such that $\underset{\gamma^*\in\Gamma^*}{\sum}\tau_{\gamma^*}\chi=1$ on $\X^*$. For any distribution $u\in\mathfrak{V}_0$ we define
$$
u_0\ :=\ \mathfrak{Op}^{A_\epsilon}(\chi)u.
$$
Due to the fact that $\chi\in S^{-\infty}_1(\X)$ it follows by the properties of magnetic Sobolev spaces (see \cite{IMP1}) that $u_0\in\mathcal{H}^\infty_{A_\epsilon}(\X)$ and the map $\mathfrak{V}_0\ni u\mapsto u_0\in\mathcal{H}^\infty_{A_\epsilon}(\X)$ is continuous uniformly with respect to $\epsilon\in[-\epsilon_0,\epsilon_0]$.

We shall prove now that for any $s>d$ the series $\underset{\gamma^*\in\Gamma^*}{\sum}u_{\gamma^*}$ is convergent in $\mathcal{H}^{-s}_{A_\epsilon}(\X)$ to an element $v\in\mathcal{H}^{-s}_{A_\epsilon}(\X)$ and that there exists a constant $C>0$ such that
\begin{equation}\label{5.4}
\|v\|_{\mathcal{H}^{-s}_{A_\epsilon}(\X)}\ \leq\ C\|u_0\|_{\mathcal{H}^{s}_{A_\epsilon}(\X)}
\end{equation}
uniformly with respect to $\epsilon\in[-\epsilon_0,\epsilon_0]$.

We denote by 
$$
g_{\gamma^*}:=\ Q_{-s,\epsilon}\sigma_{\gamma^*}u_0\ =\ \sigma_{\gamma^*}\mathfrak{Op}^{A_\epsilon}\big((\id\otimes\tau_{-\gamma^*})q_{-s,\epsilon}\big)u_0
$$
where we have used \eqref{A.4} for the last equality. We notice that the family $\left\{<\gamma^*>^s(\id\otimes\tau_{-\gamma^*})q_{-s,\epsilon}\right\}_{|\epsilon|\leq\epsilon_0}$ is bounded  as subset of $S^s_1(\Xi)$ and thus there exists a constant $C>0$ such that
$$
\|g_{\gamma^*}\|_{L^2(\X)}\ \leq\ C<\gamma^*>^{-s}\|u_0\|_{\mathcal{H}^{s}_{A_\epsilon}(\X)},\qquad\forall\epsilon\in[-\epsilon_0,\epsilon_0],\ \forall\gamma^*\in\Gamma^*.
$$
We conclude that it exists an element $g\in L^2(\X)$ such that $\underset{\gamma^*\in\Gamma^*}{\sum}g_{\gamma^*}=g$ in $L^2(\X)$ and we have the estimation $\|g\|_{L^2(\X)}\leq C^\prime\|u_0\|_{\mathcal{H}^{s}_{A_\epsilon}(\X)}$ for any $\epsilon\in[-\epsilon_0,\epsilon_0]$. Due to the properties of the magnetic pseudodifferential calculus (see \cite{IMP1}) it follows that the series $\underset{\gamma^*\in\Gamma^*}{\sum}u_{\gamma^*}$ converges in $\mathcal{H}^{-s}_{A_\epsilon}(\X)$ to an element $v\in\mathcal{H}^{-s}_{A_\epsilon}(\X)$ and \eqref{5.4} is true.

We still have to show that $v=u$ as tempered distributions. Let us fix a test function $\varphi\in\mathscr{S}(\X)$ and compute:
$$
<v,\varphi>\ =\ \underset{\gamma^*\in\Gamma^*}{\sum}<u_{\gamma^*},\varphi>\ =\ \underset{\gamma^*\in\Gamma^*}{\sum}<\sigma_{\gamma^*}u_0,\varphi>\ =\ \underset{\gamma^*\in\Gamma^*}{\sum}\left\langle\sigma_{\gamma^*}\mathfrak{Op}^{A_\epsilon}(\chi)u,\varphi\right\rangle\ =
$$
$$
=\ \underset{\gamma^*\in\Gamma^*}{\sum}\left\langle\mathfrak{Op}^{A_\epsilon}(\tau_{-\gamma^*}\chi)\sigma_{\gamma^*}u,\varphi\right\rangle\ =\ \underset{\gamma^*\in\Gamma^*}{\sum}\left\langle u,\mathfrak{Op}^{A_\epsilon}(\tau_{-\gamma^*}\chi)\varphi\right\rangle
$$
where we have used the relation $\sigma_{\gamma^*}u=u$ verified by all the elements from $\mathfrak{V}_0$. Let us also notice that for any $s>d$ we have that
$$
\underset{\gamma^*\in\Gamma^*}{\sum}\tau_{-\gamma^*}\chi=1\ \text{in}\ S^s_1(\Xi)
$$
so that we can write that
$$
\varphi\ =\ \underset{\gamma^*\in\Gamma^*}{\sum}\mathfrak{Op}^{A_\epsilon}(\tau_{-\gamma^*}\chi)\varphi,\quad\text{in}\ \mathscr{S}(\X).
$$
We conclude that $<v,\varphi>=<u,\varphi>$ for any $\varphi\in\mathscr{S}(\X)$ and thus $v=u$.

b) During the proof of point (a) we have shown that for any $s>d$ there exists a constant $C_s>0$ such that for any $u_0\in\mathcal{H}^{\infty}_{A_\epsilon}(\X)$ and for any $\epsilon\in[-\epsilon_0,\epsilon_0]$ the series $\underset{\gamma^* \in\Gamma^*}{\sum}u_{\gamma^*}$ converges in $\mathcal{H}^{-s}_{A_\epsilon}(\X)$ to an element $u\in\mathcal{H}^{-s}_{A_\epsilon}(\X)$ and the following estimation is true:
\begin{equation}\label{5.5}
\|u\|_{\mathcal{H}^{-s}_{A_\epsilon}(\X)}\ \leq\ C_s\|u_0\|_{\mathcal{H}^{s}_{A_\epsilon}(\X)},\qquad\forall u_0\in\mathcal{H}^{\infty}_{A_\epsilon}(\X),\ \forall\epsilon\in[-\epsilon_0,\epsilon_0].
\end{equation}

Let us recall the Poisson formula:
\begin{equation}\label{5.6}
\underset{\gamma^*\in\Gamma^*}{\sum}\sigma_{\gamma^*}\ =\ \frac{(2\pi)^d}{|E^*|}\ \underset{\gamma\in\Gamma}{\sum}\delta_{-\gamma},\qquad\text{in }\mathscr{S}^\prime(\X).
\end{equation}

Let us first suppose that $u_0\in\mathscr{S}(\X)$. Multiplying in the equality \eqref{5.6} with $u_0$ we obtain the following equality:
\begin{equation}\label{5.7}
u\ =\ \frac{(2\pi)^d}{|E^*|}\ \underset{\gamma\in\Gamma}{\sum}u_0(-\gamma)\delta_{-\gamma},\qquad\text{in }\mathscr{S}^\prime(\X).
\end{equation}
Lemma \ref{L.5.3} implies that $u\in\mathfrak{V}_0$ and for any $s>d$ we have the estimation:
\begin{equation}\label{5.8}
\|u\|_{\mathfrak{V}_0}\ \leq\ C_s\|u_0\|_{\mathcal{H}^{s}_{A_\epsilon}(\X)}, \qquad\forall u_0\in\mathscr{S}(\X),\ \forall\epsilon\in[-\epsilon_0,\epsilon_0].
\end{equation}

We come now to the general case $u_0\in\mathcal{H}^{\infty}_{A_\epsilon}(\X)$. Let us fix some $\epsilon\in[-\epsilon_0,\epsilon_0]$ and some $s>d$. Using the fact that $\mathscr{S}(\X)$ is dense in $\mathcal{H}^{s}_{A_\epsilon}(\X)$ we can choose a sequence $\{u_0^k\}_{k\in\mathbb{N}^*}\subset\mathscr{S}(\X)$ such that $u_0=\underset{k\nearrow\infty}{\lim}u_0^k$ in $\mathcal{H}^{s}_{A_\epsilon}(\X)$. For each element $u_0^k$ we can associate, as we proved above, an element $u^k\in\mathfrak{V}_0$ such that the following inequalities hold:
$$
\|u^k\|_{\mathfrak{V}_0}\ \leq\ C_s\|u_0^k\|_{\mathcal{H}^{s}_{A_\epsilon}(\X)},\quad\forall k\in\mathbb{N}^*,
$$
$$
\|u^k-u^l\|_{\mathfrak{V}_0}\ \leq\ C_s\|u_0^k-u_0^l\|_{\mathcal{H}^{s}_{A_\epsilon}(\X)},\quad\forall(k,l)\in[\mathbb{N}^*]^2.
$$
It follows that there exists $v\in\mathfrak{V}_0$ such that $v=\underset{k\nearrow\infty}{\lim}u^k$ in $\mathfrak{V}_0$ and the following estimation is valid:
\begin{equation}\label{5.9}
\|v\|_{\mathfrak{V}_0}\ \leq\ C_s\|u_0\|_{\mathcal{H}^{s}_{A_\epsilon}(\X)}.
\end{equation}

We still have to prove that the element $v\in\mathfrak{V}_0$ obtained above is exactly the limit $u=\underset{\gamma^*\in\Gamma^*}{\sum}u_{\gamma^*}$. But we know that $u^k=\underset{\gamma^*\in\Gamma^*}{\sum}\sigma_{\gamma^*}u_0^k$ so that from \eqref{5.5} we deduce that we have the estimations:
$$
\|u^k-u\|_{\mathcal{H}^{-s}_{A_\epsilon}(\X)}\ \leq\ C_s\|u_0^k-u_0\|_{\mathcal{H}^{s}_{A_\epsilon}(\X)}\ \underset{k\nearrow\infty}{\rightarrow}\ 0.
$$
In conclusion $u^k\underset{k\nearrow\infty}{\rightarrow}u$ in $\mathcal{H}^{-s}_{A_\epsilon}(\X)$ and $u^k\underset{k\nearrow\infty}{\rightarrow}v$ in $\mathfrak{V}_0$. But Lemma \ref{L.5.2} implies that $\mathfrak{V}_0$ is continuously embedded in $\mathcal{H}^{-s}_{A_\epsilon}(\X)$ and we conclude that $v=u$.
\end{proof}

\begin{lemma}\label{L.5.6.a}
Let us consider the map $\boldsymbol{\psi}$ defined in \eqref{1.1}. For any vector $v\in L^2(\X)$ the series
$$
w_v\ :=\ \underset{\gamma\in\Gamma}{\sum}\boldsymbol{\psi}^*\big(v\otimes\delta_{-\gamma}\big)
$$
converges in $\mathscr{S}^\prime(\X^2)$ and satisfies the identity $\big(\id\otimes\tau_\alpha\big)w_v=w_v$, $\forall\alpha\in\Gamma$.
\end{lemma}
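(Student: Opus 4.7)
The plan is to first make explicit the distribution $\boldsymbol{\psi}^*(v\otimes\delta_{-\gamma})$ on test functions, then prove convergence of the series by Schwartz decay estimates, and finally deduce the $\Gamma$-periodicity in the second variable from a simple commutation relation and a reindexing.

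First, note that $\boldsymbol{\psi}$ is an involution with Jacobian $\pm 1$, so $\boldsymbol{\psi}^*$ extends from $\mathscr{S}(\X^2)$ to $\mathscr{S}^\prime(\X^2)$ by duality in the standard way. For any $\varphi\in\mathscr{S}(\X^2)$ a direct computation gives
\begin{equation*}
\left\langle \boldsymbol{\psi}^*(v\otimes\delta_{-\gamma}),\varphi\right\rangle \;=\; \left\langle v\otimes\delta_{-\gamma},\,\varphi\circ\boldsymbol{\psi}^{-1}\right\rangle \;=\; \int_\X v(x)\,\varphi(x,x+\gamma)\,dx,
\end{equation*}
since $\boldsymbol{\psi}^{-1}(x,y)=(x,x-y)$ and integration against $\delta_{-\gamma}$ in the second slot evaluates at $y=-\gamma$, i.e. at $x-y=x+\gamma$ after composing with $\boldsymbol{\psi}^{-1}$.

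Next I would establish convergence in $\mathscr{S}^\prime(\X^2)$. Pick $M,N>d$. For any Schwartz seminorm control $|\varphi(x,y)|\le C_{M,N}(\varphi)\langle x\rangle^{-M}\langle y\rangle^{-N}$, Cauchy--Schwarz gives
\begin{equation*}
\left|\int_\X v(x)\,\varphi(x,x+\gamma)\,dx\right|^2 \;\le\; \|v\|_{L^2(\X)}^{\,2}\,C_{M,N}(\varphi)^2\int_\X \langle x\rangle^{-2M}\langle x+\gamma\rangle^{-2N}\,dx.
\end{equation*}
Splitting the integration domain into $\{|x|<|\gamma|/2\}$ and its complement shows that the last integral is bounded by $C\bigl(\langle\gamma\rangle^{-2N}+\langle\gamma\rangle^{-2M}\bigr)$, which is summable over $\Gamma$ once $M,N$ are taken sufficiently large. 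The partial sums $\sum_{|\gamma|\le R}\boldsymbol{\psi}^*(v\otimes\delta_{-\gamma})$ thus form a Cauchy net in $\mathscr{S}^\prime(\X^2)$ and converge to the distribution $w_v$ defined by
\begin{equation*}
\langle w_v,\varphi\rangle \;=\; \sum_{\gamma\in\Gamma}\int_\X v(x)\,\varphi(x,x+\gamma)\,dx,
\end{equation*}
with $\|w_v\|$ (in any fixed seminorm of $\mathscr{S}^\prime(\X^2)$) controlled by $\|v\|_{L^2(\X)}$.

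Finally, for the $\Gamma$-periodicity, I would check on $\mathscr{S}(\X^2)$ the commutation
\begin{equation*}
(\id\otimes\tau_\alpha)\,\boldsymbol{\psi}^* \;=\; \boldsymbol{\psi}^*\,(\id\otimes\tau_{-\alpha}),\qquad\forall\alpha\in\X,
\end{equation*}
which is immediate since both sides sent $f$ to $(x,y)\mapsto f(x,x-y+\alpha)$; this extends to $\mathscr{S}^\prime(\X^2)$ by duality. Applied termwise (justified by $\mathscr{S}^\prime$-continuity of $\id\otimes\tau_\alpha$) and using $\tau_{-\alpha}\delta_{-\gamma}=\delta_{-(\gamma+\alpha)}$,
\begin{equation*}
(\id\otimes\tau_\alpha)w_v \;=\; \sum_{\gamma\in\Gamma}\boldsymbol{\psi}^*\bigl(v\otimes\delta_{-(\gamma+\alpha)}\bigr) \;=\; w_v,
\end{equation*}
where the last equality follows from the reindexing $\gamma\mapsto\gamma-\alpha$, which is a bijection of $\Gamma$ precisely because $\alpha\in\Gamma$.

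The only non-trivial step is the convergence estimate; it is routine but requires the Schwartz bi-decay of $\varphi$ together with the Peetre-type splitting to get decay in $\gamma$ uniform over $x$. Once that is in hand, the explicit formula for the terms and the identification of $(\id\otimes\tau_\alpha)\boldsymbol{\psi}^*$ make the periodicity a bookkeeping verification.
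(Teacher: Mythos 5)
Your proposal is correct and follows essentially the same route as the paper: compute the pairing $\langle\boldsymbol{\psi}^*(v\otimes\delta_{-\gamma}),\varphi\rangle=\int_\X v(x)\varphi(x,x+\gamma)\,dx$, control it by Cauchy--Schwarz and the Schwartz decay of $\varphi$ to get summable decay in $\gamma$, and obtain the $\Gamma$-invariance from the commutation $(\id\otimes\tau_\alpha)\boldsymbol{\psi}^*=\boldsymbol{\psi}^*(\id\otimes\tau_{-\alpha})$ together with reindexing over $\Gamma$. The only cosmetic difference is that the paper extracts the $\langle\gamma\rangle^{-N}$ decay pointwise via a Peetre-type bound $|\varphi(x,x+\gamma)|\lesssim\nu_N(\varphi)\langle x\rangle^{-N}\langle\gamma\rangle^{-N}$ before the $x$-integration, whereas you split the $x$-domain at $|x|=|\gamma|/2$ after applying Cauchy--Schwarz; both are elementary and yield the same conclusion.
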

\begin{proof}
Let us denote by $w_\gamma$ the general term of the series defining $w_v$; then for any $\varphi\in\mathscr{S}(\X^2)$ we have that 
$$
\langle w_\gamma,\varphi\rangle\ =\ \left\langle\boldsymbol{\psi}^*\big(v\otimes\delta_{-\gamma}\big),\varphi\right\rangle\ =\ \left\langle v\otimes\delta_{-\gamma},\boldsymbol{\psi}^*(\varphi)\right\rangle\ =\ \int_\X v(x)\varphi(x,x+\gamma)\,dx.
$$
 But using the definition of test functions it follows that for any $N\in\mathbb{N}$ there exists a defining semi-norm $\nu_N(\varphi)$ on $\mathscr{S}(\X^2)$ such that 
$$
|\varphi(x,x+\gamma)|\ \leq\ \nu_N(\varphi)<x>^{-N}<\gamma>^{-N},\qquad\forall x\in\X,\ \forall\gamma\in\Gamma.
$$
We deduce that for any $N\in\mathbb{N}$ there exists a defining semi-norm $\mu_N(\varphi)$ on $\mathscr{S}(\X^2)$ such that:
$$
|\langle w_\gamma,\varphi\rangle|\ \leq\ \mu_N(\varphi)\|v\|_{L^2(\X)}<\gamma>^{-N},\qquad\forall\varphi\in\mathscr{S}(\X^2).
$$
It follows that the series $\underset{\gamma\in\Gamma}{\sum}w_\gamma$ converges in $\mathscr{S}^\prime(\X^2)$ and its sum $w\in\mathscr{S}^\prime(\X^2)$ satisfies the inequality
\begin{equation}\label{5.10}
|\langle w,\varphi\rangle|\ \leq\ \rho(\varphi)\|v\|_{L^2(\X)},\qquad\forall\varphi\in\mathscr{S}(\X^2),
\end{equation}
for some defining semi-norm $\rho(\varphi)$ on $\mathscr{S}(\X^2)$.

Finally let us notice that for any $\alpha\in\Gamma$ we can write that
$$
\big(\id\otimes\tau_\alpha\big)w_\gamma\ =\ \boldsymbol{\psi}^*\big(v\otimes\delta_{-(\gamma+\alpha)}\big)\ =\ w_{\gamma+\alpha}
$$
and thus $\big(\id\otimes\tau_\alpha\big)w_v=w_v$.
\end{proof}
\begin{definition}\label{D.5.5}
Let us consider the map $\boldsymbol{\psi}$ defined in \eqref{1.1}. We define the following complex space:
$$
\mathfrak{L}_0\ :=\ \left\{\,w\in\mathscr{S}^\prime(\X^2)\,\mid\,\exists v\in L^2(\X)\ \text{such that}\ w=\underset{\gamma\in\Gamma}{\sum}\boldsymbol{\psi}^*\big(v\otimes\delta_{-\gamma}\big)\,\right\}
$$
endowed with the quadratic norm $\|w\|_{\mathfrak{L}_0}:=\|v\|_{L^2(\X)}$.
\end{definition}

\begin{lemma}\label{L.5.6.b}
The complex space $\mathfrak{L}_0$ is a Hilbert space and is embedded continuously into $\mathscr{S}^\prime(\X^2)$.
\end{lemma}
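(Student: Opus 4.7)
The plan is to exhibit a linear isometric bijection between $L^2(\X)$ and $\mathfrak{L}_0$, which will make $\mathfrak{L}_0$ a Hilbert space; the continuous embedding into $\mathscr{S}^\prime(\X^2)$ will then follow from the estimate already contained in Lemma \ref{L.5.6.a}. Concretely, Lemma \ref{L.5.6.a} shows that the map $\Phi: L^2(\X) \to \mathscr{S}^\prime(\X^2)$, $v \mapsto w_v := \sum_{\gamma \in \Gamma} \boldsymbol{\psi}^*(v \otimes \delta_{-\gamma})$, is well-defined and linear, and that $|\langle w_v, \varphi\rangle| \leq \rho(\varphi) \|v\|_{L^2(\X)}$ for some continuous seminorm $\rho$ on $\mathscr{S}(\X^2)$. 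By definition $\mathfrak{L}_0$ is the image of $\Phi$, so surjectivity onto $\mathfrak{L}_0$ is automatic; once I show $\Phi$ is injective, the rule $\|w\|_{\mathfrak{L}_0} := \|v\|_{L^2(\X)}$ unambiguously defines a norm on $\mathfrak{L}_0$ which makes $\Phi$ a linear isometric bijection $L^2(\X) \to \mathfrak{L}_0$, and $\mathfrak{L}_0$ inherits completeness from $L^2(\X)$.

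The one step that requires argument is injectivity of $\Phi$. I would isolate the $\gamma = 0$ term of the defining series by choosing the test function carefully: pick $r > 0$ so small that $B_r(0) \cap \Gamma = \{0\}$, fix $g \in C^\infty_0(B_r(0))$ with $g(0) = 1$, and for arbitrary $f \in C^\infty_0(\X)$ set $\varphi(x,y) := f(x)\, g(y-x) \in \mathscr{S}(\X^2)$. Using $(\boldsymbol{\psi}^* \varphi)(x,y) = \varphi(x, x-y)$ together with the fact that the Jacobian of $\boldsymbol{\psi}$ has absolute value one, a direct computation gives $\langle \boldsymbol{\psi}^*(v \otimes \delta_{-\gamma}), \varphi\rangle = g(\gamma) \int_\X v(x)\, f(x)\, dx$, which vanishes for every $\gamma \neq 0$ and equals $\int v\, f$ for $\gamma = 0$. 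Hence $\langle w_v, \varphi\rangle = \int v\, f$ for every $f \in C^\infty_0(\X)$, and if $w_v = 0$ as a tempered distribution on $\X^2$ this forces $v = 0$ in $L^2(\X)$.

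The main obstacle is precisely this injectivity step: it is mild, but one must choose the test function carefully enough to single out one lattice translate, which is why the condition $B_r(0) \cap \Gamma = \{0\}$ is needed. Everything else --- convergence of the defining series, the pointwise estimate on $|\langle w_v, \varphi\rangle|$, the $L^2$-isometry, and the continuity of $\mathfrak{L}_0 \hookrightarrow \mathscr{S}^\prime(\X^2)$ via $|\langle w, \varphi\rangle| \leq \rho(\varphi) \|w\|_{\mathfrak{L}_0}$ --- follows at once from Lemma \ref{L.5.6.a} together with the isometric identification $\mathfrak{L}_0 \cong L^2(\X)$ provided by $\Phi$.
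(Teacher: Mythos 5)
Your proof takes the same route as the paper—identifying $\mathfrak{L}_0$ with $L^2(\X)$ via the map $v \mapsto w_v$ and deducing the continuous embedding from the estimate (5.10) established in Lemma \ref{L.5.6.a}—and it is correct. The only difference is that the paper dismisses the unitary equivalence with $L^2(\X)$ as ``evident,'' whereas you supply the one genuinely non-trivial ingredient, namely injectivity of $v\mapsto w_v$ (needed so that $\|w\|_{\mathfrak{L}_0}:=\|v\|_{L^2(\X)}$ is well-defined), and your test-function construction $\varphi(x,y)=f(x)g(y-x)$ with $\operatorname{supp} g\subset B_r(0)$, $B_r(0)\cap\Gamma=\{0\}$, does this cleanly.
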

\begin{proof}
The space $\mathfrak{L}_0$ is evidently a Hilbert space canonically unitarily equivalent with $L^2(\X
)$  and the continuity of the embedding into $\mathscr{S}^\prime(\X^2)$ follows easily from \eqref{5.10}.
\end{proof}
\begin{lemma}\label{L.5.7}
For any $s>d$ and any $\epsilon\in[-\epsilon_0,\epsilon_0]$ we have a continuous embedding $\mathfrak{L}_0\hookrightarrow\mathcal{H}^{-s}_{A_\epsilon}(\X)\otimes L^2(\mathbb{T})$ uniformly with respect to $\epsilon\in[-\epsilon_0,\epsilon_0]$.
\end{lemma}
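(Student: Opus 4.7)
The plan is to transport the problem across the unitary $\boldsymbol{\psi}^*:\mathcal{K}^{-s}_\epsilon(\X^2)\to\mathcal{H}^{-s}_{A_\epsilon}(\X)\otimes L^2(\mathbb{T})$ provided by Lemma \ref{L.2.10}. Since $\boldsymbol{\psi}^*$ is simultaneously unitary on $L^2(\X\times E)$, the definition $\widetilde{Q}_{-s,\epsilon}=\boldsymbol{\psi}^*(Q_{-s,\epsilon}\otimes\id)\boldsymbol{\psi}^*$ shows that a $\Gamma$-periodic distribution $f\in\mathscr{S}^\prime_\Gamma(\X^2)$ belongs to $\mathcal{H}^{-s}_{A_\epsilon}(\X)\otimes L^2(\mathbb{T})$ if and only if $(Q_{-s,\epsilon}\otimes\id)f\in L^2(\X\times E)$, with equality of the two norms. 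By Lemma \ref{L.5.6.a} any $w\in\mathfrak{L}_0$ is $\Gamma$-periodic in its second variable, so the task reduces to proving the estimate
\begin{equation*}
\left\|(Q_{-s,\epsilon}\otimes\id)w\right\|_{L^2(\X\times E)}\,\leq\,C\|v\|_{L^2(\X)}
\end{equation*}
with $C$ independent of $\epsilon$, for $w=\sum_{\gamma\in\Gamma}\boldsymbol{\psi}^*(v\otimes\delta_{-\gamma})$.

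Next I would use the Poisson summation formula \eqref{5.6}, which gives $\sum_{\gamma\in\Gamma}\delta_{-\gamma}=\frac{1}{|E|}\sum_{\gamma^*\in\Gamma^*}\sigma_{\gamma^*}$, to expand
\begin{equation*}
w(x,y)\,=\,v(x)\sum_{\gamma\in\Gamma}\delta(x-y+\gamma)\,=\,\frac{1}{|E|}\sum_{\gamma^*\in\Gamma^*}(\sigma_{\gamma^*}v)(x)\,e^{-i\langle\gamma^*,y\rangle},
\end{equation*}
which exhibits $w$ as the Fourier series in $y$ of a family of $L^2(\X)$-valued coefficients. Applying $(Q_{-s,\epsilon}\otimes\id)$ term by term and invoking Parseval for the orthogonal basis $\{e^{-i\langle\gamma^*,\cdot\rangle}\}_{\gamma^*\in\Gamma^*}$ of $L^2(E)$ (each of squared norm $|E|$),
\begin{equation*}
\left\|(Q_{-s,\epsilon}\otimes\id)w\right\|^2_{L^2(\X\times E)}\,=\,\frac{1}{|E|}\sum_{\gamma^*\in\Gamma^*}\left\|Q_{-s,\epsilon}\sigma_{\gamma^*}v\right\|_{L^2(\X)}^2,
\end{equation*}
so the heart of the matter is to bound this series by $\|v\|^2_{L^2(\X)}$ uniformly in $\epsilon$.

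For that I would apply the Parseval identity in the opposite direction. Denote by $K_\epsilon(x,z)$ the Schwartz kernel of $Q_{-s,\epsilon}$: then $Q_{-s,\epsilon}\sigma_{\gamma^*}v(x)=\int K_\epsilon(x,z)v(z)e^{i\langle\gamma^*,z\rangle}dz$ is the value at $-\gamma^*$ of the Fourier transform of $h_x(z):=K_\epsilon(x,z)v(z)$, so summing squares over $\gamma^*\in\Gamma^*$ yields $|E|$ times the squared $L^2(E)$-norm of the $\Gamma$-periodization of $h_x$. The crucial input is the pointwise kernel bound $|K_\epsilon(x,z)|\leq C_N\langle x-z\rangle^{-N}$ for every $N\in\mathbb{N}$, uniformly in $\epsilon\in[-\epsilon_0,\epsilon_0]$, which follows exactly as in the proof of Lemma \ref{L.5.2} from $q_{-s,\epsilon}\in S^{-s}_1(\Xi)$ with $s>d$. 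Combining this decay with Cauchy--Schwarz on the inner sum (using $\sum_\gamma\langle x-z+\gamma\rangle^{-N}\leq C$ for $N>d$), then integrating in $x$ using the Schur-type bound $\int_\X|K_\epsilon(x,z)|dx\leq C$, and finally invoking the tiling $\X=\bigsqcup_{\gamma\in\Gamma}(E-\gamma)$, the double integral collapses to $C\|v\|^2_{L^2(\X)}$.

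The main technical subtlety I expect is the rigorous justification of these formal Fourier manipulations: convergence of the Poisson expansion of $w$ in $\mathscr{S}^\prime(\X^2)$, termwise action of the pseudodifferential operator $(Q_{-s,\epsilon}\otimes\id)$ on a distributional Fourier series, and simultaneous use of the two Parseval identities (in $y\in E$ and in $\gamma^*\in\Gamma^*$). The cleanest way around these issues is to first prove the target inequality for $v\in\mathscr{S}(\X)$, where all operations reduce to absolutely convergent integrals and sums, and then to extend by density using exactly the $L^2$-bound established above.
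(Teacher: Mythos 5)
Your proof is correct, but it takes a route genuinely different from the paper's, and the opening step is unnecessary.

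On the opening step: since the target space $\mathcal{H}^{-s}_{A_\epsilon}(\X)\otimes L^2(\mathbb{T})$ is defined directly as a tensor product, the membership criterion for a distribution $w$ that is $\Gamma$-periodic in the second argument is simply $(Q_{-s,\epsilon}\otimes\id)w\in L^2(\X\times E)$, with no need to conjugate by $\boldsymbol{\psi}^*$ or invoke $\widetilde{Q}_{-s,\epsilon}$. Your phrasing "transport across $\boldsymbol{\psi}^*$" followed by an appeal to $\widetilde{Q}_{-s,\epsilon}$ is therefore a detour that lands you back at the correct starting point; it does no harm but suggests some confusion between $\mathcal{K}^{-s}_\epsilon(\X^2)$ (whose definition uses $\widetilde{Q}_{-s,\epsilon}$) and the tensor space you actually need.

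The substantive difference from the paper lies in the decomposition you sum over. The paper stays with the defining $\Gamma$-sum $w=\sum_{\gamma\in\Gamma}\boldsymbol{\psi}^*(v\otimes\delta_{-\gamma})$, applies $Q_{-s,\epsilon}\otimes\id$ to each summand to get an explicit $g_\gamma(x,y)$, proves the pointwise bound $|g_\gamma(x,y)|\leq C_N|v(y-\gamma)|\langle x-y+\gamma\rangle^{-N}$ from the symbol estimates, and then closes with a Cauchy--Schwarz in $\gamma$ followed by integration over $\X\times E$. You instead apply Poisson summation to convert the $\Gamma$-sum of Diracs into the $\Gamma^*$-Fourier series $w(x,y)=\frac{1}{|E|}\sum_{\gamma^*}(\sigma_{\gamma^*}v)(x)e^{-i\langle\gamma^*,y\rangle}$, use Parseval in $y$ to reduce the problem to $\sum_{\gamma^*}\|Q_{-s,\epsilon}\sigma_{\gamma^*}v\|_{L^2(\X)}^2\lesssim\|v\|_{L^2(\X)}^2$, and then use Parseval a second time (now in $\gamma^*$, identifying $Q_{-s,\epsilon}\sigma_{\gamma^*}v(x)$ with the Fourier coefficients of the $\Gamma$-periodization of $K_\epsilon(x,\cdot)v(\cdot)$) together with the same kernel decay $|K_\epsilon(x,z)|\leq C_N\langle x-z\rangle^{-N}$. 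Both arguments ultimately rest on exactly that off-diagonal decay, which for $s>d$ follows from $q_{-s,\epsilon}\in S^{-s}_1(\Xi)$. What your route buys is the appearance of the orthogonality of $\{e^{-i\langle\gamma^*,\cdot\rangle}\}$ on $E$, which turns the sum of squares into a Hilbert-space identity and thereby anticipates the $\Gamma^*$-decomposition machinery that the paper itself develops and exploits in Propositions \ref{P.5.9} and \ref{P.5.11}; what the paper's route buys is a shorter, entirely elementary estimate with no distributional Poisson or Parseval bookkeeping to justify. Your closing remark that one should first establish the inequality for $v\in\mathscr{S}(\X)$ and then extend by density is exactly the right way to patch up the formal Fourier manipulations, so I consider the proposal complete.
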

\begin{proof}
Let us fix some $s>d$ and some vector $v\in L^2(\X)$ and define 
$$
u\ :=\ \underset{\gamma\in\Gamma}{\sum}\boldsymbol{\psi}^*\big(v\otimes\delta_{-\gamma}\big)\in\mathfrak{L}_0,
$$
$$
Q^\prime_{-s,\epsilon}\ :=\ Q_{-s,\epsilon}\otimes\id,\qquad g_\gamma\ :=\ \big(Q^\prime_{-s,\epsilon}\circ\boldsymbol{\psi}^*\big)(v\otimes\delta_{-\gamma}).
$$
A straightforward computation in $\mathscr{S}^\prime(\X^2)$ shows that we have the equality:
$$
g_\gamma(x,y)\ =\ v(y-\gamma)\int_\X e^{i<\eta,x-y+\gamma>}\omega_{A_\epsilon}(x,y-\gamma)\,q_{-s,\epsilon}\big(\frac{x+y-\gamma}{2},\eta\big)\,\dbar\eta.
$$
By estimating the integral in the right hand side above we obtain that for any sufficiently large $N\in\mathbb{N}$ there exists $C_N>0$ such that
$$
\left|g_\gamma(x,y)\right|\ \leq\ C_N|v(y-\gamma)|<x-y+\gamma>^{-N},\qquad\forall(x,y)\in\X^2,\ \forall\gamma\in\Gamma.
$$

We conclude that for some suitable constants $C^\prime_N,C^{\prime\prime}_N,\ldots$ we get
$$
\left[\underset{\gamma\in\Gamma}{\sum}|g_\gamma(x,y)|\right]^2\ \leq\ C^2_N\left(\underset{\gamma\in\Gamma}{\sum}|v(y-\gamma)|^2\,<x-y+\gamma>^{-N}\right)\left(\underset{\gamma\in\Gamma}{\sum}<x-y+\gamma>^{-N}\right)\ \leq
$$
$$
\leq\ C^\prime_N\underset{\gamma\in\Gamma}{\sum}|v(y-\gamma)|^2\,<x-y+\gamma>^{-N},
$$
and the integral of the last expression over $\X\times E$ is bounded by $C^{\prime\prime}_N\|v\|^2_{L^2(\X)}$.

If we denote by $g:=\underset{\gamma\in\Gamma}{\sum}g_\gamma=Q^\prime_{-s,\epsilon}u$, we notice that $(\id\otimes\tau_{\alpha})g=g$ for any $\alpha\in\Gamma$ (because by Lemma \ref{L.5.6.a} the vector $u$ has this property). Thus
$$
g\in L^2(\X)\otimes L^2(\mathbb{T}),\qquad\|g\|_{L^2(\X)\otimes L^2(\mathbb{T})}\ \leq\ \sqrt{C^{\prime\prime}}\|v\|_{L^2(\X)}.
$$

It follows that 
$$
u\ =\ Q^\prime_{s,\epsilon}g\,\in\,\mathcal{H}^{-s}_{A_\epsilon}(\X)\otimes L^2(\mathbb{T}),\quad\|u\|_{\mathcal{H}^{-s}_{A_\epsilon}(\X)\otimes L^2(\mathbb{T})}\ \leq\ C^{\prime\prime\prime}_N\|v\|_{L^2(\X)},\quad\forall\epsilon\in[-\epsilon_0,\epsilon_0],\ \forall v\in L^2(\X).
$$
\end{proof}

We shall obtain a characterization of the space $\mathfrak{L}_0$ that is similar to our Proposition \ref{P.5.4}. In order to do that we need some technical results contained in the next Lemma. We shall use the notation:
$$
\mathcal{H}^\infty_{A_\epsilon}(\X)\otimes L^2(\mathbb{T})\ :=\ \underset{s\in\mathbb{R}}{\cap}\left(\mathcal{H}^s_{A_\epsilon}(\X)\otimes L^2(\mathbb{T})\right)
$$
with the natural projective limit topology.

\begin{lemma}\label{L.5.8}
Suppose given some $u_0\in\mathcal{H}^\infty_{A_\epsilon}(\X)\otimes L^2(\mathbb{T})$ and for any $\gamma^*\in\Gamma^*$ let us denote by $u_{\gamma^*}:=\Upsilon_{\gamma^*}u_0$. For any $s>d$ there exists $C_s>0$ such that the series $\underset{\gamma^*\in\Gamma^*}{\sum}u_{\gamma^*}$ converges in $\mathcal{H}^{-s}_{A_\epsilon}(\X)\otimes L^2(\mathbb{T})$ and the sum denoted by $v\in\mathcal{H}^{-s}_{A_\epsilon}(\X)\otimes L^2(\mathbb{T})$ satisfies the estimation:
\begin{equation}\label{5.11}
\|v\|_{\mathcal{H}^{-s}_{A_\epsilon}(\X)\otimes L^2(\mathbb{T})}\ \leq\ C_s\|u_0\|_{\mathcal{H}^{s}_{A_\epsilon}(\X)\otimes L^2(\mathbb{T})},\qquad\forall u_0\in\mathcal{H}^{\infty}_{A_\epsilon}(\X)\otimes L^2(\mathbb{T}),\ \forall\epsilon\in[-\epsilon_0,\epsilon_0].
\end{equation}
\end{lemma}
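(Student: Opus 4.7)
The plan is to lift the argument from the proof of Proposition \ref{P.5.4} (b) to the tensor product setting, treating the $y$-variable as a parameter and integrating at the end. Throughout, identify elements $u_0 \in \mathcal{H}^\infty_{A_\epsilon}(\X)\otimes L^2(\mathbb{T})$ with sections $y \mapsto u_0(\cdot,y) \in \mathcal{H}^\infty_{A_\epsilon}(\X)$.

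First I would note that, denoting $Q'_{-s,\epsilon} := Q_{-s,\epsilon}\otimes\id$, the norm on $\mathcal{H}^{-s}_{A_\epsilon}(\X)\otimes L^2(\mathbb{T})$ is given by $\|w\| = \|Q'_{-s,\epsilon} w\|_{L^2(\X\times\mathbb{T})}$. Applying $Q'_{-s,\epsilon}$ to $u_{\gamma^*} = (\sigma_{\gamma^*}\otimes\sigma_{-\gamma^*})u_0$ and using the intertwining identity \eqref{A.4} in the first variable, one obtains
\begin{equation*}
Q'_{-s,\epsilon}\, u_{\gamma^*} \;=\; (\sigma_{\gamma^*}\otimes\sigma_{-\gamma^*})\,\bigl(\mathfrak{Op}^{A_\epsilon}\bigl((\id\otimes\tau_{-\gamma^*})q_{-s,\epsilon}\bigr)\otimes\id\bigr)\,u_0,
\end{equation*}
exactly as in the proof of Proposition \ref{P.5.4} (b). Since the characters have modulus one, taking pointwise $L^2(\X)$-norm in the first slot gives, for almost every $y$,
\begin{equation*}
\bigl\|Q_{-s,\epsilon}(\sigma_{\gamma^*}u_0(\cdot,y))\bigr\|_{L^2(\X)} \;=\; \bigl\|\mathfrak{Op}^{A_\epsilon}\bigl((\id\otimes\tau_{-\gamma^*})q_{-s,\epsilon}\bigr)\,u_0(\cdot,y)\bigr\|_{L^2(\X)}.
\end{equation*}

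Next, the key seminorm estimate already used in the proof of Proposition \ref{P.5.4} (b) states that the family $\bigl\{\langle\gamma^*\rangle^{s}(\id\otimes\tau_{-\gamma^*})q_{-s,\epsilon}\bigr\}_{(\gamma^*,\epsilon)}$ is bounded in $S^{s}_1(\Xi)$. By the $\mathcal{H}^s_{A_\epsilon}$-continuity of magnetic pseudodifferential operators of order $s$ from \cite{IMP1}, there exists $C>0$ independent of $(\gamma^*,\epsilon)$ such that
\begin{equation*}
\bigl\|Q_{-s,\epsilon}(\sigma_{\gamma^*}u_0(\cdot,y))\bigr\|_{L^2(\X)} \;\leq\; C\,\langle\gamma^*\rangle^{-s}\,\|u_0(\cdot,y)\|_{\mathcal{H}^s_{A_\epsilon}(\X)}.
\end{equation*}
Squaring and integrating over $y\in E$ (or $\mathbb{T}$) yields
\begin{equation*}
\bigl\|Q'_{-s,\epsilon}\, u_{\gamma^*}\bigr\|_{L^2(\X\times\mathbb{T})} \;\leq\; C\,\langle\gamma^*\rangle^{-s}\,\|u_0\|_{\mathcal{H}^s_{A_\epsilon}(\X)\otimes L^2(\mathbb{T})}.
\end{equation*}

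Finally, since $s>d$, the series $\sum_{\gamma^*\in\Gamma^*}\langle\gamma^*\rangle^{-s}$ converges, so $\sum_{\gamma^*}Q'_{-s,\epsilon}u_{\gamma^*}$ converges absolutely in $L^2(\X\times\mathbb{T})$. Because $Q'_{s,\epsilon}$ realizes an isometric isomorphism $L^2(\X\times\mathbb{T}) \to \mathcal{H}^{-s}_{A_\epsilon}(\X)\otimes L^2(\mathbb{T})$, the series $\sum_{\gamma^*}u_{\gamma^*}$ converges in $\mathcal{H}^{-s}_{A_\epsilon}(\X)\otimes L^2(\mathbb{T})$ to some $v$ with
\begin{equation*}
\|v\|_{\mathcal{H}^{-s}_{A_\epsilon}(\X)\otimes L^2(\mathbb{T})} \;\leq\; C\!\left(\sum_{\gamma^*\in\Gamma^*}\langle\gamma^*\rangle^{-s}\right)\|u_0\|_{\mathcal{H}^s_{A_\epsilon}(\X)\otimes L^2(\mathbb{T})},
\end{equation*}
which is \eqref{5.11}, uniformly in $\epsilon$.

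There is no substantial obstacle: everything is a parametric version of Proposition \ref{P.5.4} (b). The only point to watch is to verify that the estimate from the magnetic pseudodifferential calculus is uniform in $y$ (which it is, since the symbol $(\id\otimes\tau_{-\gamma^*})q_{-s,\epsilon}$ does not depend on $y$) and uniform in $\epsilon\in[-\epsilon_0,\epsilon_0]$; the latter follows from Hypothesis H.1 on the magnetic field family and the corresponding uniform boundedness of the symbol family $\{q_{-s,\epsilon}\}$.
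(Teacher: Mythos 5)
Your proof is correct and follows essentially the same route as the paper: apply $Q'_{-s,\epsilon}$, use the intertwining identity from \eqref{A.4} to commute $Q_{-s,\epsilon}$ past $\sigma_{\gamma^*}$ at the cost of translating the symbol by $\gamma^*$, invoke the uniform boundedness of $\{\langle\gamma^*\rangle^{s}(\id\otimes\tau_{-\gamma^*})q_{-s,\epsilon}\}$ in $S^s_1(\Xi)$ together with the magnetic Sobolev continuity from \cite{IMP1} to get the $\langle\gamma^*\rangle^{-s}$ decay, and then sum using $s>d$. The only cosmetic slip is in the last step, where you refer to $Q'_{s,\epsilon}$ as an isometric isomorphism $L^2(\X\times\mathbb{T}) \to \mathcal{H}^{-s}_{A_\epsilon}(\X)\otimes L^2(\mathbb{T})$; it is more precise to say that $Q'_{-s,\epsilon}$ is the isometry $\mathcal{H}^{-s}_{A_\epsilon}(\X)\otimes L^2(\mathbb{T})\to L^2(\X\times\mathbb{T})$ with inverse $Q'_{s,\epsilon}$ (using Remark \ref{R.A.25}), but the conclusion is the same and the argument is sound.
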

\begin{proof}
From \eqref{A.4} it follows that on $\mathscr{S}(\X)$ we have the equality
$$
Q_{-s,\epsilon}\sigma_{\gamma^*}\ =\ \sigma_{\gamma^*}\mathfrak{Op}^{A_\epsilon}\big((\id\otimes\tau_{-\gamma^*})\mathfrak{q}_{-s,\epsilon}\big)
$$
so that finally
\begin{equation}\label{5.12}
\big(Q_{-s,\epsilon}\otimes\id\big)u_{\gamma^*}\ =\ \Upsilon_{\gamma^*}\big[\mathfrak{Op}^{A_\epsilon}\big((\id\otimes\tau_{-\gamma^*})\mathfrak{q}_{-s,\epsilon}\big)\otimes\id\big]u_0.
\end{equation}
Taking into account that the family $\left\{<\gamma^*>^s(\id\otimes\tau_{-\gamma^*})\mathfrak{q}_{-s,\epsilon}\right\}_{(\epsilon,\gamma^*)\in[-\epsilon_0,\epsilon_0]\times\Gamma^*}$ is a bounded subset of $S^s(\Xi)$, it follows the existence of a constant $C>0$ such that for any $\epsilon\in[-\epsilon_0,\epsilon_0]$ one has the estimation:
\begin{equation}\label{5.13}
\left\|\big(Q_{-s,\epsilon}\otimes\id\big)u_{\gamma^*}\right\|_{L^2(\X)\otimes L^2(\mathbb{T})}\ \leq\ C<\gamma^*>^{-s}\|u_0\|_{\mathcal{H}^{s}_{A_\epsilon}(\X)\otimes L^2(\mathbb{T})},\qquad\forall u_0\in\mathcal{H}^{\infty}_{A_\epsilon}(\X)\otimes L^2(\mathbb{T}).
\end{equation}
it follows that the series $\underset{\gamma^*\in\Gamma^*}{\sum}\big(Q_{-s,\epsilon}\otimes\id\big)u_{\gamma^*}$ converges in $L^2(\X)\otimes L^2(\mathbb{T})$ uniformly for $\epsilon\in[-\epsilon_0,\epsilon_0]$. The stated inequality follows now by summing up the estimation \eqref{5.13} over all $\Gamma^*$.
\end{proof}

\begin{proposition}\label{P.5.9}
For any $u\in\mathfrak{L}_0$ there exists a vector $u_0\in\mathcal{H}^{\infty}_{A_\epsilon}(\X)\otimes L^2(\mathbb{T})$ such that
\begin{equation}\label{5.14}
u\ =\ \underset{\gamma^*\in\Gamma^*}{\sum}\Upsilon_{\gamma^*}u_0,\ \text{in}\ \mathscr{S}^\prime(\X^2).
\end{equation}
Moreover, the application $\mathfrak{L}_0\ni u\mapsto u_0\in\mathcal{H}^{\infty}_{A_\epsilon}(\X)\otimes L^2(\mathbb{T})$ is continuous uniformly with respect to $\epsilon\in[-\epsilon_0,\epsilon_0]$.
\end{proposition}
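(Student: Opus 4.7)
The plan is to mirror the strategy of Proposition \ref{P.5.4}(a), replacing the one-variable regularising operator $\mathfrak{Op}^{A_\epsilon}(\chi)$ by the ``partial'' magnetic operator $\mathfrak{Op}^{A_\epsilon}(\chi)\otimes\id$ acting only on the first argument of $\mathscr{S}^\prime(\X^2)$. The key preliminary remark is that every $u\in\mathfrak{L}_0$ is invariant under all $\Upsilon_{\gamma^*}$: if $u=\sum_\gamma\boldsymbol{\psi}^*(v\otimes\delta_{-\gamma})$, then on the support of each summand one has $x-y=-\gamma\in-\Gamma$, so that the multiplier $e^{i\langle\gamma^*,x-y\rangle}=1$ for all $\gamma^*\in\Gamma^*$. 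Hence $\Upsilon_{\gamma^*}u=u$ in $\mathscr{S}^\prime(\X^2)$.

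Next, fix a real $\chi\in C^\infty_0(\X^*)$ with $\sum_{\gamma^*\in\Gamma^*}\tau_{\gamma^*}\chi=1$ on $\X^*$ and define
\begin{equation*}
u_0\ :=\ \bigl(\mathfrak{Op}^{A_\epsilon}(\chi)\otimes\id\bigr)u,\qquad u\in\mathfrak{L}_0 .
\end{equation*}
Since $\chi\in S^{-\infty}_1(\Xi)$, Lemma \ref{L.5.7} combined with the continuity properties of magnetic pseudodifferential operators from \cite{IMP1} (in their tensorised form $\mathfrak{Op}^{A_\epsilon}(a)\otimes\id$) shows that for every $s>d$ and every $t\in\mathbb{R}$ the operator $\mathfrak{Op}^{A_\epsilon}(\chi)\otimes\id$ maps $\mathcal{H}^{-s}_{A_\epsilon}(\X)\otimes L^2(\mathbb{T})$ continuously into $\mathcal{H}^{t}_{A_\epsilon}(\X)\otimes L^2(\mathbb{T})$, uniformly in $\epsilon\in[-\epsilon_0,\epsilon_0]$. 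Hence $u_0\in\mathcal{H}^\infty_{A_\epsilon}(\X)\otimes L^2(\mathbb{T})$ and the map $\mathfrak{L}_0\ni u\mapsto u_0$ is continuous, uniformly in $\epsilon$. This settles the regularity and continuity statement.

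It remains to verify the identity $u=\sum_{\gamma^*\in\Gamma^*}\Upsilon_{\gamma^*}u_0$ in $\mathscr{S}^\prime(\X^2)$. Convergence of the series in $\mathcal{H}^{-s}_{A_\epsilon}(\X)\otimes L^2(\mathbb{T})$ (for $s>d$) is granted by Lemma \ref{L.5.8} applied to $u_0$; since this space embeds continuously into $\mathscr{S}^\prime(\X^2)$, the series also converges in $\mathscr{S}^\prime(\X^2)$ and the estimate \eqref{5.11} provides the bound needed for the continuity. To identify the sum with $u$, I test against $\varphi\in\mathscr{S}(\X^2)$. Using the commutation identity $\sigma_{\gamma^*}\mathfrak{Op}^{A_\epsilon}(\chi)=\mathfrak{Op}^{A_\epsilon}(\tau_{-\gamma^*}\chi)\sigma_{\gamma^*}$ (an immediate consequence of \eqref{A.4}) and the fact that $\Upsilon_{\gamma^*}=\sigma_{\gamma^*}\otimes\sigma_{-\gamma^*}$ acts separately on the two variables, one obtains
\begin{equation*}
\Upsilon_{\gamma^*}u_0\ =\ \bigl(\mathfrak{Op}^{A_\epsilon}(\tau_{-\gamma^*}\chi)\otimes\id\bigr)\Upsilon_{\gamma^*}u\ =\ \bigl(\mathfrak{Op}^{A_\epsilon}(\tau_{-\gamma^*}\chi)\otimes\id\bigr)u,
\end{equation*}
where in the last step the $\Upsilon_{\gamma^*}$-invariance of $u\in\mathfrak{L}_0$ is used. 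Summing over $\gamma^*\in\Gamma^*$, the identity $\sum_{\gamma^*}\tau_{-\gamma^*}\chi=1$ in $S^t_1(\Xi)$ for every $t\in\mathbb{R}$ yields, by pairing with $\varphi$ and passing the series inside the pairing (justified by the convergence already established), $\sum_{\gamma^*}\Upsilon_{\gamma^*}u_0=u$.

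The main technical point, and the one to handle carefully, is the interchange of the series with the duality pairing in the last step: it requires a uniform $\gamma^*$-decay estimate of $\langle(\mathfrak{Op}^{A_\epsilon}(\tau_{-\gamma^*}\chi)\otimes\id)u,\varphi\rangle$. I would obtain it exactly as in the proof of Lemma \ref{L.5.8}, exploiting the fact that $\{\langle\gamma^*\rangle^{N}\tau_{-\gamma^*}\chi\}_{\gamma^*\in\Gamma^*}$ is a bounded subset of $S^{-N}_1(\Xi)$ for every $N\in\mathbb{N}$, and then using the embedding $u\in\mathcal{H}^{-s}_{A_\epsilon}(\X)\otimes L^2(\mathbb{T})$ from Lemma \ref{L.5.7}, paired with $\varphi$ which lies in the dual scale.
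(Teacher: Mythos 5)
Your proof follows the same strategy as the paper's: observe that every $u\in\mathfrak{L}_0$ satisfies $\Upsilon_{\gamma^*}u=u$, define $u_0:=(\mathfrak{Op}^{A_\epsilon}(\chi)\otimes\id)u$, use Lemma~\ref{L.5.7} and the $S^{-\infty}_1$ regularity of $\chi$ for the membership and uniform-continuity statements, and then combine the commutation relation from \eqref{A.4} with the partition $\sum_{\gamma^*}\tau_{\gamma^*}\chi=1$ and the convergence provided by Lemma~\ref{L.5.8} to recover $u$. One small slip: from \eqref{A.4} the commutation is $\sigma_{\gamma^*}\mathfrak{Op}^{A_\epsilon}(\chi)=\mathfrak{Op}^{A_\epsilon}(\tau_{\gamma^*}\chi)\sigma_{\gamma^*}$, so you should get $\Upsilon_{\gamma^*}u_0=(\mathfrak{Op}^{A_\epsilon}(\tau_{\gamma^*}\chi)\otimes\id)u$ rather than with $\tau_{-\gamma^*}$; summing over all of $\Gamma^*$ makes this sign immaterial, so the conclusion is unaffected.
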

\begin{proof}
We recall the notation $u_{\gamma^*}:=\Upsilon{_\gamma^*}u_0$ and, as in the proof of point (a) of Proposition \ref{P.5.4} we fix some real function $\chi\in C^\infty_0(\X)$ satisfying the following identity on $\X$:
$$
\underset{\gamma^*\in\Gamma^*}{\sum}\tau_{\gamma^*}\chi\ =\ 1.
$$
For any $u\in\mathfrak{L}_0$ let us denote by $u_0:=\big(\mathfrak{Op}^{A_\epsilon}(\chi)\otimes\id\big)u$. We notice that $\chi\in S^{-\infty}_1(\Xi)$ and using Lemma \ref{L.5.7} we deduce that $u_0\in\mathcal{H}^{\infty}_{A_\epsilon}(\X)\otimes L^2(\mathbb{T})$ and that the continuity property in the end of the Proposition is clearly true. We still have to verify the equality \eqref{5.14}. 

1. First let us notice that following Lemma \ref{L.5.8}, the series $\underset{\gamma^*\in\Gamma^*}{\sum}u_{\gamma^*}$ converges in $\mathscr{S}^\prime(\X^2)$.

2. An argument similar to that in the proof of Proposition \ref{P.5.4} a) proves that on $\mathscr{S}(\X^2)$ we have the equality
\begin{equation}\label{5.15}
\underset{\gamma^*\in\Gamma^*}{\sum}\mathfrak{Op}^{A_\epsilon}\big(\tau_{-\gamma^*}\chi\big)\otimes\id\ =\ \id.
\end{equation}

3. From \eqref{A.4} we have that $\mathfrak{Op}^{A_\epsilon}\big(\tau_{-\gamma^*}\chi\big)=\sigma_{-\gamma^*}\mathfrak{Op}^{A_\epsilon}(\chi)\sigma_{\gamma^*}$.

4. For any $u\in\mathfrak{L}_0$ there exists $v\in L^2(\X)$ such that $u=\underset{\gamma\in\Gamma}{\sum}\boldsymbol{\psi}^*\big(v\otimes\delta_{-\gamma}\big)$; thus for any $\gamma^*\in\Gamma^*$ we have that:
$$
\Upsilon_{\gamma^*}u\ =\ \underset{\gamma\in\Gamma}{\sum}\Upsilon_{\gamma^*}\boldsymbol{\psi}^*\big(v\otimes\delta_{-\gamma}\big)\ =\ \underset{\gamma\in\Gamma}{\sum}\boldsymbol{\psi}^*\big((\id\otimes\sigma_{\gamma^*})(v\otimes\delta_{-\gamma})\big)\ =\ \underset{\gamma\in\Gamma}{\sum}\boldsymbol{\psi}^*\big(v\otimes\delta_{-\gamma}\big)\ =\ u.
$$

Using the last two remarks above we deduce that for any $u\in\mathfrak{L}_0$ we have the equalities:
$$
\big[\mathfrak{Op}^{A_\epsilon}\big(\tau_{-\gamma^*}\chi\big)\otimes\id\big]u\ =\ \big[\big(\sigma_{-\gamma^*}\mathfrak{Op}^{A_\epsilon}(\chi)\sigma_{\gamma^*}\big)\otimes\id\big]\big(\sigma_{-\gamma^*}\otimes\sigma_{\gamma^*}\big)u\ =
$$
$$
=\ \Upsilon_{-\gamma^*}\big(\mathfrak{Op}^{A_\epsilon}(\chi)\otimes\id\big)u\ =\ \Upsilon_{-\gamma^*}u_0\ =\ u_{-\gamma^*}.
$$
We apply now equality \eqref{5.15} to the vector $u\in\mathfrak{L}_0$ in order to obtain that:
$$
u\ =\ \underset{\gamma^*\in\Gamma^*}{\sum}u_{-\gamma^*}\ =\ \underset{\gamma^*\in\Gamma^*}{\sum}u_{\gamma^*},
$$
as tempered distributions.
\end{proof}

In order to prove the reciprocal statement of Proposition \ref{P.5.9} we need a technical Lemma similar to Lemma\ref{L.5.3}.
\begin{lemma}\label{L.5.10}
For any $s>d$ there exists $C_s>0$ such that the following estimation holds:
\begin{equation}\label{5.16}
\sqrt{\int_\X|u(x,x)|^2\,dx}\ \leq\ C_s\|u\|_{\mathcal{H}^s_{A_\epsilon}(\X)\otimes L^2(\mathbb{T})},\qquad\forall u\in\mathscr{S}(\X\times\mathbb{T}),\ \forall\epsilon\in[-\epsilon_0,\epsilon_0].
\end{equation}
\end{lemma}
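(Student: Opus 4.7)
The plan is to adapt the pointwise estimate technique from Lemma \ref{L.5.3} to the tensor product setting, treating the second variable $y\in\mathbb{T}$ as a parameter, and then to exploit the $\Gamma$-periodicity in the second argument to decompose $\X$ into translates of $E$.

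For a fixed $u\in\mathscr{S}(\X\times\mathbb{T})$ I would set $v:=(Q_{s,\epsilon}\otimes\id)u$, so that $u=(Q_{-s,\epsilon}\otimes\id)v$, which gives the integral representation
$$
u(x,y)\ =\ \int_\Xi e^{i\langle\eta,x-z\rangle}\,\omega_{A_\epsilon}(x,z)\,q_{-s,\epsilon}\!\left(\tfrac{x+z}{2},\eta\right)v(z,y)\,dz\,\dbar\eta.
$$
With $y$ viewed as a parameter, the same integration-by-parts argument used in Lemma \ref{L.5.3}, based on the identity $e^{i\langle\eta,x-z\rangle}=\langle x-z\rangle^{-2N}(1-\Delta_\eta)^Ne^{i\langle\eta,x-z\rangle}$ with $N$ large enough that $s+2N>d$ makes the $\eta$-integral absolutely convergent, produces a constant $C>0$ independent of $\epsilon\in[-\epsilon_0,\epsilon_0]$ such that
$$
|u(x,y)|\ \leq\ C\int_\X\langle x-z\rangle^{-2N}|v(z,y)|\,dz,\qquad\forall(x,y)\in\X^2.
$$
Applying the Cauchy--Schwarz inequality in $z$ (with $2N>d$ so that $\int_\X\langle x-z\rangle^{-2N}dz$ is a finite constant independent of $x$) yields a constant $C'>0$ with
$$
|u(x,y)|^2\ \leq\ C'\int_\X\langle x-z\rangle^{-2N}|v(z,y)|^2\,dz,\qquad\forall(x,y)\in\X^2.
$$

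Specializing to the diagonal $y=x$ and using the $\Gamma$-periodicity of $u$ in the second argument I would then decompose
$$
\int_\X|u(x,x)|^2\,dx\ =\ \sum_{\gamma\in\Gamma}\int_E|u(x+\gamma,x+\gamma)|^2\,dx\ =\ \sum_{\gamma\in\Gamma}\int_E|u(x+\gamma,x)|^2\,dx,
$$
where in the last step I used $u(x+\gamma,x+\gamma)=u(x+\gamma,x)$ (periodicity in the second variable). Plugging the pointwise bound above with $(x+\gamma,x)$ in place of $(x,y)$ into each summand and interchanging sum and integral produces
$$
\int_\X|u(x,x)|^2\,dx\ \leq\ C'\int_E\int_\X\Big(\sum_{\gamma\in\Gamma}\langle x+\gamma-z\rangle^{-2N}\Big)|v(z,x)|^2\,dz\,dx.
$$
The key observation is that, for $2N>d$, the series $\sum_{\gamma\in\Gamma}\langle x+\gamma-z\rangle^{-2N}$ is bounded by a constant $C''$ uniformly in $(x,z)\in\X^2$ (by comparison with a translation-invariant lattice sum). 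This collapses the right-hand side to $C''C'\|v\|^2_{L^2(\X)\otimes L^2(\mathbb{T})}=C''C'\|u\|^2_{\mathcal{H}^s_{A_\epsilon}(\X)\otimes L^2(\mathbb{T})}$, which is the desired estimate.

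The only mildly delicate point is the combinatorial step of reorganising the integral on the diagonal as a sum over the lattice while keeping the second argument in the fundamental cell; once this periodicity trick is in place the remainder is a direct transcription of the one-variable argument of Lemma \ref{L.5.3}, with the tensor factor $L^2(\mathbb{T})$ carried along passively throughout.
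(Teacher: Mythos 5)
Your proof is correct and follows essentially the same route as the paper: write $u=(Q_{-s,\epsilon}\otimes\id)v$, integrate by parts to obtain the kernel bound $|u(x,y)|\leq C\int_\X\langle x-z\rangle^{-2N}|v(z,y)|\,dz$, then exploit the $\Gamma$-periodicity in the second slot to replace $\int_\X|u(x,x)|^2\,dx$ by a lattice sum of integrals over $E$ whose kernels sum to a bounded quantity. One small clarification: the absolute convergence of the $\eta$-integral comes from the hypothesis $s>d$ alone (since $(1-\Delta_\eta)^N q_{-s,\epsilon}$ is still of order $-s$ in $\eta$), whereas the role of choosing $2N>d$ is to make $z\mapsto\langle x-z\rangle^{-2N}$ integrable and the lattice sum $\sum_\gamma\langle x+\gamma-z\rangle^{-2N}$ uniformly bounded; the proof goes through unchanged once this is read as intended.
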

\begin{proof}
Let us fix some $u\in\mathscr{S}(\X\times\mathbb{T})$, $\epsilon\in[-\epsilon_0,\epsilon_0]$ and let us define $v:=\big(Q_{s,\epsilon}\otimes\id\big)u\in\mathscr{S}(\X\times\mathbb{T})$. It follows that $u=\big(Q_{-s,\epsilon}\otimes\id\big)v$ and we deduce that for any $N\in\mathbb{N}$ (that we shall choose sufficiently large) we have the identity:
$$
u(x,y)\ =\ \int_\Xi<x-z>^{-2N}e^{i<\zeta,x-z>}\omega_{A_\epsilon}(x,z)\Big[\big((\id-\Delta_{\zeta})^N\mathfrak{q}_{-s,\epsilon}\big)\big(\frac{x+z}{2},\zeta\big)\Big]v(z,y)\,dz\,\dbar\zeta,\qquad\forall(x,y)\in\X^2.
$$
We deduce that there exist the strictly positive constants $C_N,C^\prime_N,\ldots$ such that the following estimations hold:
$$
|u(x,y)|\ \leq\ C_N\int_\X<x-z>^{-2N}|v(z,y)|\,dz,\qquad|u(x,y)|^2\ \leq\ C_N^\prime\int_\X<x-z>^{-2N}|v(z,y)|^2dz.
$$
In conclusion we have that:
$$
\int_\X|u(x,x)|^2dx\ =\ \underset{\gamma\in\Gamma}{\sum}\int_{\tau_{-\gamma}E}|u(x,x)|^2dx\ =\ \underset{\gamma\in\Gamma}{\sum}\int_{E}|u(x+\gamma,x+\gamma)|^2dx\ =\ \underset{\gamma\in\Gamma}{\sum}\int_{E}|u(x+\gamma,x)|^2dx\ \leq
$$
$$
\leq\ C_N^\prime\underset{\gamma\in\Gamma}{\sum}\int_E\int_\X<x-z+\gamma>^{-2N}|v(z,x)|^2dz\,dx\ \leq\ C_N^{\prime\prime}\underset{\gamma\in\Gamma}{\sum}\int_E\int_\X<z-\gamma>^{-2N}|v(z,x)|^2dz\,dx\ \leq
$$
$$
\leq\ C_N^{\prime\prime\prime}\int_E\int_\X|v(z,x)|^2dz\,dx\ =\ C_N^{\prime\prime\prime}\|v\|^2_{L^2(\X\times\mathbb{T})}\ \leq\ C_s^2\|u\|_{\mathcal{H}^s_{A_\epsilon}(\X)\otimes L^2(\mathbb{T})}^2.
$$
\end{proof}

We come now to the reciprocal statement of Proposition \ref{P.5.9}.
\begin{proposition}\label{P.5.11}
Suppose given $u_0\in\mathcal{H}^\infty_{A_\epsilon}(\X)\otimes L^2(\mathbb{T})$ and for any $\gamma^*\in\Gamma^*$ let us consider $u_{\gamma^*}:=\Upsilon_{\gamma^*}u_0$. Then the series $\underset{\gamma^*\in\Gamma^*}{\sum}u_{\gamma^*}$ converges in $\mathscr{S}(\X^2)$ to an element $u\in\mathfrak{L}_0$. Moreover, the application $\mathcal{H}^\infty_{A_\epsilon}(\X)\otimes L^2(\mathbb{T})\ni u_0\mapsto u\in\mathfrak{L}_0$ is continuous uniformly with respect to $\epsilon\in[-\epsilon_0,\epsilon_0]$.
\end{proposition}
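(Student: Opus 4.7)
\noindent\textbf{Proof proposal for Proposition \ref{P.5.11}.} The plan is to mirror the argument used for point (b) of Proposition \ref{P.5.4}, reading Lemma \ref{L.5.8} in place of its counterpart and Lemma \ref{L.5.10} in place of Lemma \ref{L.5.3}. First, for any $s>d$, Lemma \ref{L.5.8} guarantees that the series $\sum_{\gamma^*\in\Gamma^*}u_{\gamma^*}$ converges in $\mathcal{H}^{-s}_{A_\epsilon}(\X)\otimes L^2(\mathbb{T})$ to an element $u$ satisfying
$$
\|u\|_{\mathcal{H}^{-s}_{A_\epsilon}(\X)\otimes L^2(\mathbb{T})}\ \leq\ C_s\|u_0\|_{\mathcal{H}^{s}_{A_\epsilon}(\X)\otimes L^2(\mathbb{T})},\qquad\forall\epsilon\in[-\epsilon_0,\epsilon_0].
$$
In particular the series converges in $\mathscr{S}^\prime(\X^2)$. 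What remains to establish is that the limit belongs to $\mathfrak{L}_0$, together with a continuity estimate against $\|u_0\|_{\mathcal{H}^{s}_{A_\epsilon}\otimes L^2(\mathbb{T})}$.

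The heart of the argument is the special case $u_0\in\mathscr{S}(\X\times\mathbb{T})$. Here I would apply the Poisson formula \eqref{5.6} (in the variable $z=x-y$) to write, in $\mathscr{S}^\prime(\X^2)$,
$$
\sum_{\gamma^*\in\Gamma^*}\Upsilon_{\gamma^*}u_0(x,y)\ =\ u_0(x,y)\sum_{\gamma^*\in\Gamma^*}e^{i<\gamma^*,x-y>}\ =\ \frac{(2\pi)^d}{|E^*|}\sum_{\gamma\in\Gamma}u_0(x,y)\,\delta(x-y+\gamma).
$$
Because $u_0$ is $\Gamma$-periodic in its second argument, on the support of $\delta(x-y+\gamma)$ one has $u_0(x,y)=u_0(x,x+\gamma)=u_0(x,x)$; setting $v(x):=u_0(x,x)$ and using $\boldsymbol{\psi}^*(v\otimes\delta_{-\gamma})(x,y)=v(x)\delta(x-y+\gamma)$, the identity collapses to
$$
u\ =\ \frac{(2\pi)^d}{|E^*|}\sum_{\gamma\in\Gamma}\boldsymbol{\psi}^*\big(v\otimes\delta_{-\gamma}\big),
$$
which by Definition \ref{D.5.5} is an element of $\mathfrak{L}_0$ with $\|u\|_{\mathfrak{L}_0}=\frac{(2\pi)^d}{|E^*|}\|v\|_{L^2(\X)}$. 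Lemma \ref{L.5.10} then gives the key estimate $\|v\|_{L^2(\X)}\leq C_s\|u_0\|_{\mathcal{H}^{s}_{A_\epsilon}\otimes L^2(\mathbb{T})}$, producing a constant $C_s^\prime>0$ independent of $\epsilon$ with
$$
\|u\|_{\mathfrak{L}_0}\ \leq\ C_s^\prime\|u_0\|_{\mathcal{H}^{s}_{A_\epsilon}(\X)\otimes L^2(\mathbb{T})},\qquad\forall u_0\in\mathscr{S}(\X\times\mathbb{T}),\ \forall\epsilon\in[-\epsilon_0,\epsilon_0].
$$

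The passage to a general $u_0\in\mathcal{H}^\infty_{A_\epsilon}(\X)\otimes L^2(\mathbb{T})$ is by density. Fix $\epsilon$ and $s>d$; choose $\{u_0^k\}_{k\in\mathbb{N}^*}\subset\mathscr{S}(\X\times\mathbb{T})$ converging to $u_0$ in $\mathcal{H}^{s}_{A_\epsilon}(\X)\otimes L^2(\mathbb{T})$. The associated elements $u^k\in\mathfrak{L}_0$ satisfy $\|u^k-u^l\|_{\mathfrak{L}_0}\leq C_s^\prime\|u_0^k-u_0^l\|_{\mathcal{H}^{s}_{A_\epsilon}\otimes L^2(\mathbb{T})}$, hence converge in $\mathfrak{L}_0$ to some $\widetilde{u}$ with $\|\widetilde{u}\|_{\mathfrak{L}_0}\leq C_s^\prime\|u_0\|_{\mathcal{H}^{s}_{A_\epsilon}\otimes L^2(\mathbb{T})}$. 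On the other hand the already-established Lemma \ref{L.5.8} estimate shows $u^k\to u$ in $\mathcal{H}^{-s}_{A_\epsilon}(\X)\otimes L^2(\mathbb{T})$, and Lemma \ref{L.5.7} guarantees a continuous embedding $\mathfrak{L}_0\hookrightarrow\mathcal{H}^{-s}_{A_\epsilon}(\X)\otimes L^2(\mathbb{T})$. Both convergences take place in $\mathscr{S}^\prime(\X^2)$, so $u=\widetilde{u}$ as tempered distributions, which identifies $u$ as an element of $\mathfrak{L}_0$ and delivers the uniform continuity of the map $u_0\mapsto u$.

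The main obstacle I expect is the rigorous justification of the Poisson step in the Schwartz case: one must verify carefully that the formal manipulation $u_0(x,y)\delta(x-y+\gamma)=u_0(x,x)\delta(x-y+\gamma)$ is legitimate in $\mathscr{S}^\prime(\X^2)$ (it rests on smoothness of $u_0$ in the first variable and $\Gamma$-periodicity in the second) and that the reordering of the double sum $\sum_{\gamma^*}\Upsilon_{\gamma^*}u_0=\frac{(2\pi)^d}{|E^*|}\sum_\gamma\boldsymbol{\psi}^*(v\otimes\delta_{-\gamma})$ is consistent with the $\mathscr{S}^\prime(\X^2)$ convergence of both sides. Once this identification is in place, the remaining steps are routine continuity and density arguments.
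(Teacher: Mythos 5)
Your proposal is correct and follows essentially the same route as the paper: Poisson summation reduces the Schwartz case to the diagonal trace $v(x)=u_0(x,x)$ and Lemma \ref{L.5.10}, after which density and the $\mathcal{H}^{-s}_{A_\epsilon}\otimes L^2(\mathbb{T})$ estimate from Lemma \ref{L.5.8} plus the embedding of Lemma \ref{L.5.7} identify the two limits. The one place you flag uncertainty -- whether $u_0(x,y)\,\delta(x-y+\gamma)=u_0(x,x)\,\delta(x-y+\gamma)$ is legitimate -- is handled in the paper by testing against $\varphi\in\mathscr{S}(\X^2)$: one writes $\langle\boldsymbol{\psi}^*(\id\otimes\delta_{-\gamma})\cdot u_0,\varphi\rangle=\langle\id\otimes\delta_{-\gamma},\boldsymbol{\psi}^*(u_0\varphi)\rangle=\int_\X u_0(x,x+\gamma)\varphi(x,x+\gamma)\,dx$ and then invokes $\Gamma$-periodicity of $u_0$ in the second variable; the paper also first establishes identity \eqref{5.18} for $u_0\in\mathscr{S}(\X^2)$ before extending it to $\mathscr{S}(\X\times\mathbb{T})$ by noting that the Dirac comb $\sum_\gamma\boldsymbol{\psi}^*(\id\otimes\delta_{-\gamma})$ lives in $\mathscr{S}^\prime(\X\times\mathbb{T})$. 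Your direct computation with $u_0\in\mathscr{S}(\X\times\mathbb{T})$ is sound, but inserting that distributional pairing would close the gap you identified.
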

\begin{proof}
For any $s>d$ and any $u_0\in\mathcal{H}^\infty_{A_\epsilon}(\X)\otimes L^2(\mathbb{T})$, Lemma \ref{L.5.8} implies that the series $\underset{\gamma^*\in\Gamma^*}{\sum}u_{\gamma^*}$ converges in $\mathcal{H}^{-s}_{A_\epsilon}(\X)\otimes L^2(\mathbb{T})$ to an element $u\in\mathcal{H}^{-s}_{A_\epsilon}(\X)\otimes L^2(\mathbb{T})$ and there exists $C_s>0$ such that the following estimation holds:
\begin{equation}\label{5.17}
\|u\|_{\mathcal{H}^{-s}_{A_\epsilon}(\X)\otimes L^2(\mathbb{T})}\ \leq\ C_s\|u_0\|_{\mathcal{H}^{s}_{A_\epsilon}(\X)\otimes L^2(\mathbb{T})},\qquad\forall u_0\in\mathcal{H}^{s}_{A_\epsilon}(\X)\otimes L^2(\mathbb{T}),\ \forall\epsilon\in[-\epsilon_0,\epsilon_0].
\end{equation}

We still have to prove that $u\in\mathfrak{L}_0$ and that the continuity property stated above is true. As in the proof of Proposition \ref{P.5.4} b) we make use of the Poisson formula \eqref{5.6}. Once we notice that $\boldsymbol{\psi}^*\big(\id\otimes\sigma_{\gamma^*}\big)=\Upsilon_{\gamma^*}$, we conclude that for any $u_0\in\mathscr{S}(\X^2)$ one has the identity:
\begin{equation}\label{5.18}
\underset{\gamma^*}{\sum}u_{\gamma^*}\ =\ \frac{(2\pi)^d}{|E^*|}\left[\underset{\gamma\in\Gamma}{\sum}\boldsymbol{\psi}^*\big(\id\otimes\delta_{-\gamma}\big)\right]u_0.
\end{equation}
But we notice that $\underset{\gamma\in\Gamma}{\sum}\boldsymbol{\psi}^*\big(\id\otimes\delta_{-\gamma}\big)$ belongs to $\mathscr{S}^\prime(\X\times\mathbb{T})$ and we deduce that the identity \eqref{5.18} also holds for $u_0\in\mathscr{S}(\X\times\mathbb{T})$. In this case, $\boldsymbol{\psi}^*\big(\id\otimes\delta_{-\gamma}\big)\cdot u_0$ also belongs to $\mathscr{S}^\prime(\X^2)$ and for any $\varphi\in\mathscr{S}(\X^2)$ we can write that:
$$
\left\langle\boldsymbol{\psi}^*\big(\id\otimes\delta_{-\gamma}\big)\cdot u_0\,,\,\varphi\right\rangle\ =\ \left\langle\boldsymbol{\psi}^*\big(\id\otimes\delta_{-\gamma}\big)\,,\,u_0\varphi\right\rangle\ =\ \left\langle\id\otimes\delta_{-\gamma}\,,\,\boldsymbol{\psi}^*\big(u_0\varphi\big)\right\rangle\ =\ \int_\X\varphi(x,x+\gamma)u_0(x,x)\,dx.
$$
Let us denote by $v_0(x):=u_0(x,x)$ so that we obtain a test function $v_0\in\mathscr{S}(\X)$ and an equality:
\begin{equation}\label{5.19}
\boldsymbol{\psi}^*\big(\id\otimes\delta_{-\gamma}\big)\cdot u_0\ =\ \boldsymbol{\psi}^*\big(v_0\otimes\delta_{-\gamma}\big).
\end{equation}
Let us further denote by $v:=\left((2\pi)^d/|E^*|\right)v_0\in\mathscr{S}(\X)\subset L^2(\X)$. If we insert \eqref{5.19} into \eqref{5.18} we obtain:
\begin{equation}\label{5.20}
u\ :=\ \underset{\gamma^*\in\Gamma^*}{\sum}u_{\gamma^*}\ =\ \underset{\gamma\in\Gamma}{\sum}\boldsymbol{\psi}^*\big(v\otimes\delta_{-\gamma}\big)\,\in\,\mathfrak{L}_0.
\end{equation}

Let us verify now the continuity property. Suppose given $u_0\in\mathcal{H}^\infty_{A_\epsilon}(\X)\otimes L^2(\mathbb{T})$ and suppose given a sequence $\{u_{0,k}\}_{k\in\mathbb{N}^*}\subset\mathscr{S}(\X\times\mathbb{T})$ and some $s>d$ such that $u_0=\underset{k\nearrow\infty}{\lim}u_{0,k}$ in $\mathcal{H}^{s}_{A_\epsilon}(\X)\otimes L^2(\mathbb{T})$. Let us also introduce the notations:
$$
v_k(x):=\frac{(2\pi)^d}{|E^*|}u_{0,k}(x,x),\ \forall x\in\X;\qquad u_k:=\underset{\gamma\in\Gamma}{\sum}\boldsymbol{\psi}^*\big(v_k\otimes\delta_{-\gamma}\big)\,\in\,\mathfrak{L}_0.
$$

From Lemma \ref{L.5.10} we deduce that there exists a strictly positive constant $C_s$ such that for any $\epsilon\in[-\epsilon_0,\epsilon_0]$ and for any pair of indices $(k,l)\in[\mathbb{N}^*]^2$ the following estimations hold:
\begin{equation}\label{5.21}
\|u_k\,-\,u_l\|_{\mathfrak{L}_0}\ :=\ \|v_k\,-\,v_l\|_{L^2(\X)}\ \leq\ C_s\|u_{0,k}\,-\,u_{0,l}\|_{\mathcal{H}^{s}_{A_\epsilon}(\X)\otimes L^2(\mathbb{T})},
\end{equation}
\begin{equation}\label{5.22}
\|u_k\|_{\mathfrak{L}_0}\ \leq\ C_s\|u_{0,k}\|_{\mathcal{H}^{s}_{A_\epsilon}(\X)\otimes L^2(\mathbb{T})}.
\end{equation}
From \eqref{5.21} we deduce that there exists $v\in L^2(\X)$ limit of the sequence $\{v_k\}_{k\in\mathbb{N}^*}$ in $L^2(\X)$ and moreover, using also \eqref{5.22}, that it satisfies the estimation:
\begin{equation}\label{5.23}
\|v\|_{L^2(\X)}\ \leq\ C_s\|u_0\|_{\mathcal{H}^{s}_{A_\epsilon}(\X)\otimes L^2(\mathbb{T})},\qquad\forall\epsilon\in[-\epsilon_0,\epsilon_0].
\end{equation}
Let us denote by $\widetilde{u}:=\underset{\gamma\in\Gamma}{\sum}\boldsymbol{\psi}^*\big(v\otimes\delta_{-\gamma}\big)\,\in\,\mathfrak{L}_0$. From \eqref{5.23} we deduce that
\begin{equation}\label{5.24}
\|\widetilde{u}\|_{\mathfrak{L}_0}\ \leq\ C_s\|u_0\|_{\mathcal{H}^{s}_{A_\epsilon}(\X)\otimes L^2(\mathbb{T})},\qquad\forall\epsilon\in[-\epsilon_0,\epsilon_0].
\end{equation}

In order to end the proof we have to show that $\widetilde{u}=u:=\underset{\gamma^*\in\Gamma^*}{\sum}u_{\gamma^*}$ in $\mathcal{H}^{-s}_{A_\epsilon}(\X)\otimes L^2(\mathbb{T})$. From \eqref{5.20} we know that $u_k:=\underset{\gamma^*\in\Gamma^*}{\sum}\Upsilon_{\gamma^*}u_{0,k}$. If we use now the inequality \eqref{5.17} with $u_0$ replaced by $u_{0,k}\,-\,u_0$, we obtain the estimation:
$$
\|u_k\,-\,u\|_{\mathcal{H}^{-s}_{A_\epsilon}(\X)\otimes L^2(\mathbb{T})}\ \leq\ C_s\|u_{0,k}\,-\,u_0\|_{\mathcal{H}^{s}_{A_\epsilon}(\X)\otimes L^2(\mathbb{T})}.
$$
From this we deduce that $u=\underset{k\nearrow\infty}{\lim}u_k$ in $\mathcal{H}^{-s}_{A_\epsilon}(\X)\otimes L^2(\mathbb{T})$. But from \eqref{5.21} we deduce that $\widetilde{u}=\underset{k\nearrow\infty}{\lim}u_k$ in $\mathfrak{L}_0$ and thus, due to Lemma \ref{L.5.7} also in $\mathcal{H}^{-s}_{A_\epsilon}(\X)\otimes L^2(\mathbb{T})$. In conclusion $\widetilde{u}=u$ and the proof is finished.
\end{proof}

Proceeding as in Lemma \ref{L.5.6.a} we can show that the following definition is meaningful, the series appearing in the definition of the space $\mathfrak{L}_s(\epsilon)$ being convergent as tempered distribution, and the space with the associated norm being a Hilbert space canonically isomorphic with $\mathcal{H}^s_{A_\epsilon}(\X)$ and continuously embedded into $\mathscr{S}^\prime(\X^2)$.
\begin{definition}\label{D.5.12}
For any $s\in\mathbb{R}$ and any $\epsilon\in[-\epsilon_0,\epsilon_0]$, we define the following subspace of tempered distributions:
$$
\mathfrak{L}_s(\epsilon)\ :=\ \left\{\,w\in\mathscr{S}^\prime(\X^2)\,\mid\,\exists v\in\mathcal{H}^s_{A_\epsilon}(\X),\ w\equiv w_v=\underset{\gamma\in\Gamma}{\sum}\boldsymbol{\psi}^*\big(v\otimes\delta_{-\gamma}\big)\,\right\},
$$
endowed with the quadratic norm: 
\begin{equation}\label{5.25}
\|w_v\|_{\mathfrak{L}_s(\epsilon)}\ :=\ \|v\|_{\mathcal{H}^s_{A_\epsilon}(\X)}.
\end{equation}
\end{definition}
\begin{remark}\label{R.5.12}
For any $w\in\mathfrak{L}_s(\epsilon)$ we have the identity: $(\id\otimes\tau_\alpha)w=w, \forall\alpha\in\Gamma$.
\end{remark}

\begin{lemma}\label{L.5.13}
We recall the notation $\widetilde{Q}_{s,\epsilon}$ introduced in Definition \ref{D.1.4} b).
\begin{enumerate}
\item We have the equality:
\begin{equation}\label{5.26}
\mathfrak{L}_s(\epsilon) = \left\{\,w\in\mathscr{S}^\prime(\X^2)\,\mid\,\widetilde{Q}_{s,\epsilon}w\,\in\,\mathfrak{L}_0\,\right\}.
\end{equation}
\item On $\mathfrak{L}_s(\epsilon)$ the definition norm is equivalent with the following norm:
\begin{equation}\label{5.27}
\|w\|^\prime_{\mathfrak{L}_s(\epsilon)}\ :=\ \left\|\widetilde{Q}_{s,\epsilon}w\right\|_{\mathfrak{L}_0}.
\end{equation}
\item If $s\geq0$, then $\mathfrak{L}_s(\epsilon)$ is continuously embedded into $\mathfrak{L}_0$, uniformly with respect to $\epsilon\in[-\epsilon_0,\epsilon_0]$.
\end{enumerate}
\end{lemma}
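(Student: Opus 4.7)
The entire lemma reduces to the single computational identity
\[
\widetilde{Q}_{s,\epsilon}\, w_v \;=\; w_{Q_{s,\epsilon} v},\qquad v\in\mathcal{H}^s_{A_\epsilon}(\X),
\]
which I would establish first. Using the factorization $\widetilde{Q}_{s,\epsilon}=\boldsymbol{\psi}^*(Q_{s,\epsilon}\otimes\id)\boldsymbol{\psi}^*$ from Definition \ref{D.1.4} together with $\boldsymbol{\psi}^{-1}=\boldsymbol{\psi}$, I would compute
\[
\widetilde{Q}_{s,\epsilon} w_v \;=\; \boldsymbol{\psi}^*(Q_{s,\epsilon}\otimes\id)\sum_{\gamma\in\Gamma}(v\otimes\delta_{-\gamma})
\;=\;\sum_{\gamma\in\Gamma}\boldsymbol{\psi}^*\bigl((Q_{s,\epsilon}v)\otimes\delta_{-\gamma}\bigr)\;=\;w_{Q_{s,\epsilon}v},
\]
the interchange of $Q_{s,\epsilon}\otimes\id$ with the infinite sum being justified by the continuity of $Q_{s,\epsilon}\otimes\id$ on $\mathscr{S}^\prime(\X^2)$, and the convergence of the resulting series in $\mathscr{S}^\prime(\X^2)$ being guaranteed by Lemma \ref{L.5.6.a} applied to $Q_{s,\epsilon}v\in L^2(\X)$.

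With this identity in hand, the $\subseteq$ inclusion of (1) and all of (2) are immediate. Indeed $Q_{s,\epsilon}v\in L^2(\X)$ gives $\widetilde{Q}_{s,\epsilon}w_v\in\mathfrak{L}_0$, and by the very definitions of the norms in Definitions \ref{D.1.4} and \ref{D.5.5},
\[
\|w_v\|^\prime_{\mathfrak{L}_s(\epsilon)}=\bigl\|\widetilde{Q}_{s,\epsilon}w_v\bigr\|_{\mathfrak{L}_0}=\|Q_{s,\epsilon}v\|_{L^2(\X)}=\|v\|_{\mathcal{H}^s_{A_\epsilon}(\X)}=\|w_v\|_{\mathfrak{L}_s(\epsilon)},
\]
so the two norms in fact coincide, which is more than the asserted equivalence.

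For the reverse inclusion in (1), suppose $w\in\mathscr{S}^\prime(\X^2)$ with $\widetilde{Q}_{s,\epsilon}w=w_f$ for some $f\in L^2(\X)$. I would recover the candidate $v$ by setting $v:=Q_{s,\epsilon}^{-1}f\in\mathcal{H}^s_{A_\epsilon}(\X)$, which is well defined because $Q_{s,\epsilon}:\mathcal{H}^s_{A_\epsilon}(\X)\to L^2(\X)$ is a bijective isometry by Definition \ref{D.1.4}(1). Setting $w^\prime:=w_v\in\mathfrak{L}_s(\epsilon)$, the forward identity gives $\widetilde{Q}_{s,\epsilon}w^\prime=w_{Q_{s,\epsilon}v}=w_f=\widetilde{Q}_{s,\epsilon}w$, so $w-w^\prime$ lies in the kernel of $\widetilde{Q}_{s,\epsilon}$ on $\mathscr{S}^\prime(\X^2)$. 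The conclusion $w=w^\prime$ then follows from the injectivity of $\widetilde{Q}_{s,\epsilon}$ on tempered distributions, which via the bijectivity of $\boldsymbol{\psi}^*$ reduces to injectivity of $Q_{s,\epsilon}\otimes\id$, and hence to injectivity of the elliptic operator $Q_{s,\epsilon}$ on $\mathscr{S}^\prime(\X)$; the latter is guaranteed by the existence of a magnetic pseudodifferential parametrix in the calculus of \cite{IMP1}. This injectivity step is the only point where genuine magnetic-calculus input is needed and is thus the main (albeit standard) obstacle in the argument.

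Finally for (3), when $s\geq 0$, standard magnetic Sobolev theory provides a continuous embedding $\mathcal{H}^s_{A_\epsilon}(\X)\hookrightarrow L^2(\X)$ with an $\epsilon$-independent constant for $\epsilon\in[-\epsilon_0,\epsilon_0]$ (the family $\{q_{s,\epsilon}\}$ being bounded in the symbol class and uniformly elliptic by Hypothesis H.1). Transcribing this embedding through the isometric identifications $\mathfrak{L}_s(\epsilon)\simeq\mathcal{H}^s_{A_\epsilon}(\X)$ and $\mathfrak{L}_0\simeq L^2(\X)$ yields
\[
\|w_v\|_{\mathfrak{L}_0}=\|v\|_{L^2(\X)}\;\leq\; C_s\|v\|_{\mathcal{H}^s_{A_\epsilon}(\X)}=C_s\|w_v\|_{\mathfrak{L}_s(\epsilon)},
\]
uniformly in $\epsilon$, which is the claim.
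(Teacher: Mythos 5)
Your overall strategy matches the paper's: establish the intertwining identity $\widetilde{Q}_{s,\epsilon}w_v = w_{Q_{s,\epsilon}v}$ and read off both inclusions of (1) and the norm statement (2) from it. Indeed, for (2) you go slightly further than the paper --- your direct computation shows the two norms actually \emph{coincide}, whereas the paper only claims equivalence and invokes the Closed Graph Theorem, so this part of your write-up is cleaner.

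The one place where your justification is off is the injectivity step for the reverse inclusion of (1). You claim that injectivity of $Q_{s,\epsilon}$ on $\mathscr{S}^\prime(\X)$ ``is guaranteed by the existence of a magnetic pseudodifferential parametrix.'' That is not sufficient: a parametrix $\Pi$ satisfies $\Pi Q_{s,\epsilon}=\id + R$ with $R$ smoothing, so from $Q_{s,\epsilon}u=0$ you only deduce $u=-Ru$ is smooth, not that $u=0$. What actually rescues the argument is the stronger fact, recorded in Remark \ref{R.A.25} and used implicitly throughout the paper, that $Q_{-s,\epsilon}$ is an \emph{exact} $\sharp^{B_\epsilon}$-inverse of $Q_{s,\epsilon}$, so $Q_{-s,\epsilon}Q_{s,\epsilon}=\id$ on $\mathscr{S}^\prime(\X)$. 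With that, injectivity is immediate; and indeed the paper's own argument sidesteps injectivity altogether: from $\widetilde{Q}_{s,\epsilon}w=w_f$ it simply applies $\widetilde{Q}_{-s,\epsilon}$ to both sides to obtain $w=w_{Q_{-s,\epsilon}f}$ directly. If you replace the parametrix appeal by this exact-inverse identity, your proof is correct and complete; part (3) is handled exactly as in the paper.
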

\begin{proof}
1. For any $w\in\mathfrak{L}_s(\epsilon)$, there exists a vector $v\in\mathcal{H}^s_{A_\epsilon}(\X)$ such that $w\equiv w_v=\underset{\gamma\in\Gamma}{\sum}\boldsymbol{\psi}^*\big(v\otimes\delta_{-\gamma}\big)$. But, we know that by definition we have that $Q_{s,\epsilon}v\in L^2(\X)$, so that we deduce that
$$
\widetilde{Q}_{s,\epsilon}w_v\ =\ \boldsymbol{\psi}^*\big(Q_{s,\epsilon}\otimes\id\big)\boldsymbol{\psi}^*w_v\ =\ \underset{\gamma\in\Gamma}{\sum}\boldsymbol{\psi}^*\big((Q_{s,\epsilon}v)\otimes\delta_{-\gamma}\big)\,\in\,\mathfrak{L}_0.
$$

Reciprocally let $w\in\mathscr{S}^\prime(\X^2)$ be such that $\widetilde{Q}_{s,\epsilon}w$ belongs to $\mathfrak{L}_0$. By the definition of this last space it follows that there exists $f\in L^2(\X)$ such that 
$$
\widetilde{Q}_{s,\epsilon}w\ =\ \underset{\gamma\in\Gamma}{\sum}\boldsymbol{\psi}^*\big(f\otimes\delta_{-\gamma}\big).
$$
It follows that 
$$
w\ =\ \underset{\gamma\in\Gamma}{\sum}\boldsymbol{\psi}^*\big((Q_{-s,\epsilon}f)\otimes\delta_{-\gamma}\big).
$$
But then we have that $v=Q_{-s,\epsilon}f\in\mathcal{H}^s_{A_\epsilon}(\X)$ and in conclusion $w$ belongs to $\mathfrak{L}_s(\epsilon)$.

2. This result follows from the Closed Graph Theorem.

3. This result follows from the continuous embedding of $\mathcal{H}^s_{A_\epsilon}(\X)$ into $L^2(\X)$ for any $s\geq0$ uniformly for $\epsilon\in[-\epsilon_0,\epsilon_0]$.
\end{proof}

\begin{lemma}\label{L.5.14}
For any $m\in\mathbb{R}_+$ and for any $\epsilon\in[-\epsilon_0,\epsilon_0]$ we have the following topological embedding:
\begin{equation}\label{5.29}
\mathfrak{L}_m(\epsilon)\ \hookrightarrow\ \mathscr{S}^\prime\big(\X;\mathcal{H}^m(\mathbb{T})\big),
\end{equation}
uniformly with respect to $\epsilon\in[-\epsilon_0,\epsilon_0]$.
\end{lemma}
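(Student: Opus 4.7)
The plan is a duality argument parallel to the proof of the second embedding in Lemma \ref{L.2.16}. Given $w\in\mathfrak{L}_m(\epsilon)$, I would produce a continuous antilinear functional on $\mathscr{S}\big(\X;\mathcal{H}^m(\mathbb{T})\big)$ whose restriction to the dense subspace $\mathscr{S}(\X\times\mathbb{T})$ extends the canonical scalar product
$$
(w,\varphi)_m\ =\ \int_\X\big(w(x,\cdot),\varphi(x,\cdot)\big)_{\mathcal{H}^m(\mathbb{T})}\,dx
$$
of \eqref{2.44}, with an operator bound controlled by $\|w\|_{\mathfrak{L}_m(\epsilon)}$. Concretely, for $\varphi\in\mathscr{S}(\X\times\mathbb{T})$ I would set $\varphi_\Gamma:=\big(1\otimes<D_\Gamma>^{2m}\big)\varphi\in\mathscr{S}(\X\times\mathbb{T})$ and define
$$
(w,\varphi)_m\ :=\ \big\langle\widetilde{Q}_{m,\epsilon}w\,,\,\widetilde{Q}_{-m,\epsilon}\varphi_\Gamma\big\rangle,
$$
where the right-hand side is the duality pairing between $\mathcal{H}^{-s}_{A_\epsilon}(\X)\otimes L^2(\mathbb{T})$ and $\mathcal{H}^{s}_{A_\epsilon}(\X)\otimes L^2(\mathbb{T})$ for any fixed $s>d$.

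All the ingredients needed to bound this pairing are already at hand. Lemma \ref{L.5.13} gives $\widetilde{Q}_{m,\epsilon}w\in\mathfrak{L}_0$ with $\|\widetilde{Q}_{m,\epsilon}w\|_{\mathfrak{L}_0}$ equivalent to $\|w\|_{\mathfrak{L}_m(\epsilon)}$, and Lemma \ref{L.5.7} places $\widetilde{Q}_{m,\epsilon}w$ into $\mathcal{H}^{-s}_{A_\epsilon}(\X)\otimes L^2(\mathbb{T})$ uniformly in $\epsilon\in[-\epsilon_0,\epsilon_0]$. On the test-function side, Lemma \ref{L.2.15} applied to the symbol $\mathfrak{q}_{-m,\epsilon}\in S^{-m}_1(\Xi)$ shows that $\widetilde{Q}_{-m,\epsilon}$ maps $\mathscr{S}\big(\X;\mathcal{H}^{-m}(\mathbb{T})\big)$ continuously into $\mathscr{S}\big(\X;L^2(\mathbb{T})\big)$; since $|\varphi_\Gamma|_{-m,l}=|\varphi|_{m,l}$, this produces $\widetilde{Q}_{-m,\epsilon}\varphi_\Gamma\in\mathscr{S}\big(\X;L^2(\mathbb{T})\big)$ with a seminorm bound $C|\varphi|_{m,l}$, and in particular a controlled norm in $\mathcal{H}^{s}_{A_\epsilon}(\X)\otimes L^2(\mathbb{T})$, all uniform in $\epsilon$. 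Chaining these two estimates gives
$$
\big|(w,\varphi)_m\big|\ \leq\ C\,\|w\|_{\mathfrak{L}_m(\epsilon)}\,|\varphi|_{m,l},\qquad\forall\varphi\in\mathscr{S}(\X\times\mathbb{T}),\ \forall\epsilon\in[-\epsilon_0,\epsilon_0],
$$
and extension by density from $\mathscr{S}(\X\times\mathbb{T})$ to $\mathscr{S}\big(\X;\mathcal{H}^m(\mathbb{T})\big)$ yields the required topological embedding with uniform constant.

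The delicate point is justifying that the prescription above genuinely extends the canonical pairing \eqref{2.44}, i.e.\ that on regular representatives it produces the integral $\int_\X(w(x,\cdot),\varphi(x,\cdot))_{\mathcal{H}^m(\mathbb{T})}\,dx$. This rests on two facts: the identity $\widetilde{Q}_{-m,\epsilon}\widetilde{Q}_{m,\epsilon}=\id$, and the formal self-adjointness of each $\widetilde{Q}_{s,\epsilon}$ in the $L^2(\X\times\mathbb{T})$ pairing, both of which follow from the choice of real operator-valued symbols $\mathfrak{q}_{s,\epsilon}$ with $\mathfrak{q}_{s,\epsilon}\,\sharp^{B_\epsilon}\,\mathfrak{q}_{-s,\epsilon}=1$ (the convention recalled in Remark \ref{R.A.25}). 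In practice I would first check consistency on the elementary building blocks $w_v=\sum_{\gamma\in\Gamma}\boldsymbol{\psi}^*(v\otimes\delta_{-\gamma})$ with $v\in\mathscr{S}(\X)$, where both sides can be computed directly by unfolding the periodic Dirac comb, and then pass to $v\in\mathcal{H}^m_{A_\epsilon}(\X)$ using the density of $\mathscr{S}(\X)$ in $\mathcal{H}^m_{A_\epsilon}(\X)$ together with the continuity of both sides in $w$. Once this consistency is in place, the embedding statement is a mechanical consequence of the estimates listed above.
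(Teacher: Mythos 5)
Your argument is correct and is essentially the paper's own proof: both reduce to the embedding $\mathfrak{L}_m(\epsilon)\hookrightarrow\mathscr{S}^\prime(\X;L^2(\mathbb{T}))$ via Lemmas \ref{L.5.13}~(3) and \ref{L.5.7}, then mimic the second embedding of Lemma \ref{L.2.16} by rewriting the pairing through $\widetilde{Q}_{m,\epsilon}w\in\mathfrak{L}_0$ and controlling $\widetilde{Q}_{-m,\epsilon}(\id\otimes<D_\Gamma>^{2m})v$ with Lemma \ref{L.2.15} and the identity $|(\id\otimes<D_\Gamma>^{2m})v|_{-m,k}=|v|_{m,k}$. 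Your intermediate detour through a fixed $\mathcal{H}^{\pm s}_{A_\epsilon}(\X)\otimes L^2(\mathbb{T})$ pairing with $s>d$, rather than pairing $\mathscr{S}^\prime(\X;L^2(\mathbb{T}))$ directly against $\mathscr{S}(\X;L^2(\mathbb{T}))$ seminorms as the paper does, is a harmless cosmetic variant, and your extra consistency check on the elementary building blocks $w_v$ is more careful than the paper's (implicitly trusted) computation but not needed beyond what you describe.
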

\begin{proof}
From the Lemmas \ref{L.5.13} c) and \ref{L.5.7} it follows that we have the following topological embedding: $\mathfrak{L}_m(\epsilon)\hookrightarrow\mathscr{S}^\prime
\big(\X;L^2(\mathbb{T})\big)$ uniformly with respect to $\epsilon\in[-\epsilon_0,\epsilon_0]$. From here on we proceed as in the proof of the secong inclusion in Lemma \ref{L.2.16}. Extending by continuity the sesquilinear form \eqref{2.44} to the canonical sesquilinear form $(.,.)_m$ on $\mathscr{S}^\prime\big(\X;\mathcal{H}^m(\mathbb{T})\big)\times\mathscr{S}\big(\X;\mathcal{H}^m(\mathbb{T})\big)$ we can write it in the following way
\begin{equation}\label{5.30}
(u,v)_m\ =\ \big(u,(\id\otimes<D_\Gamma>^{2m})v\big)_0,\qquad\forall(u,v)\in\mathscr{S}^\prime\big(\X;\mathcal{H}^m(\mathbb{T})\big)\times\mathscr{S}(\X\times\mathbb{T})
\end{equation}
and may extend it to $\mathscr{S}^\prime\big(\X;L^2(\mathbb{T})\big)\times\mathscr{S}(\X\times\mathbb{T})$.

Let us choose now some $u\in\mathfrak{L}_m(\epsilon)$ and let us denote by $f:=\widetilde{Q}_{m,\epsilon}u\in\mathfrak{L}_0$. Because $u=\widetilde{Q}_{-m,\epsilon}\widetilde{Q}_{m,\epsilon}u=\widetilde{Q}_{-m,\epsilon}f$, for any $v\in\mathscr{S}(\X\times\mathbb{T})$ we shall have the equality
\begin{equation}\label{5.31}
(u,v)_m\ =\ \big(f,\widetilde{Q}_{-m,\epsilon}(\id\otimes<D_\Gamma>^{2m})v\big)_0.
\end{equation}

From the fact that $\mathfrak{L}_0$ is continuously embedded into $\mathscr{S}^\prime\big(\X;L^2(\mathbb{T})\big)$, it follows the existence of a defining semi-norm $|.|_l$ on $\mathscr{S}\big(\X;L^2(\mathbb{T})\big)$ such that
\begin{equation}\label{5.32}
\left|(u,v)_m\right|\ \leq\ \|f\|_{\mathfrak{L}_0}\left|\widetilde{Q}_{-m,\epsilon}(\id\otimes<D_\Gamma>^{2m})v\right|_l,\qquad\forall v\in\mathscr{S}(\X\times\mathbb{T}).
\end{equation}
Noticing that $\|f\|_{\mathfrak{L}_0}=\|u\|_{\mathfrak{L}_m(\epsilon)}$ and applying Lemma \ref{L.2.15}, we deduce the existence of a defining semi-norm $|.|_{-m,k}$ on $\mathscr{S}\big(\X;\mathcal{H}^{-m}(\mathbb{T})\big)$ such that
\begin{equation}\label{5.33}
\left|(u,v)_m\right|\ \leq\ C\|u\|_{\mathfrak{L}_m(\epsilon)}\left|(\id\otimes<D_\Gamma>^{2m})v\right|_{-m,k},\qquad\forall v\in\mathscr{S}(\X\times\mathbb{T}).
\end{equation}
But $\left|(\id\otimes<D_\Gamma>^{2m})v\right|_{-m,k}=\|v\|_{m,k}$ so that the statement of the Lemma follows from \eqref{5.33}.
\end{proof}

\section{The proof of Theorem \ref{T.0.1}}
\setcounter{equation}{0}
\setcounter{theorem}{0}

The main technical results discussed in this section concern some continuity properties of the operators $\mathcal{P}_{\epsilon,\lambda}$ and $\mathcal{E}_{\epsilon,\lambda}$ defined in Section \ref{S.4} extended to some spaces of tempered distributions of the type considered in Section \ref{S.5}. We shall suppose that the Hypothesis H.1 - H.6 are satisfied and we shall use the notations introduced in Sections \ref{S.0} and \ref{S.1}. Let us just recall that:
\begin{itemize}
\item The operator $P_{\epsilon}:=\mathfrak{Op}^{A_\epsilon}(\overset{\circ}{p}_\epsilon)$ with $\overset{\circ}{p}_\epsilon(y,\eta):=p(y,y,\eta)$, defines a self-adjoint operator in $L^2(\X)$ on the domain $\mathcal{H}^m_{A_\epsilon}(\X)$.
\item The operator $\widetilde{P}_\epsilon$ defines a self-adjoint operator $\widetilde{P}_\epsilon^\prime$ in the Hilbert space $L^2(\X^2)$ with domain $\widetilde{\mathcal{H}}^m_{A_\epsilon}(\X^2)$ and another self-adjoint operator $\widetilde{P}_\epsilon^{\prime\prime}$ in the Hilbert space $L^2(\X\times\mathbb{T})$ with the domain $\mathcal{K}^m_\epsilon(\X^2)$.
\end{itemize}

\begin{lemma}\label{L.6.1}
For any $\epsilon\in[-\epsilon_0,\epsilon_0]$ we have that:
\begin{enumerate}
\item $\widetilde{P}_\epsilon\in\mathbb{B}\big(\mathfrak{L}_m(\epsilon);\mathfrak{L}_0\big)$ uniformly in $\epsilon\in[-\epsilon_0,\epsilon_0]$.
\item The operator $\widetilde{P}_\epsilon$ considered as an unbounded operator in the Hilbert space $\mathfrak{L}_0$ defines a self-adjoint operator $\widetilde{P}_\epsilon^{\prime\prime\prime}$ having domain $\mathfrak{L}_m(\epsilon)$ and this self-adjoint operator is unitarily equivalent with $P_\epsilon$.
\end{enumerate}
\end{lemma}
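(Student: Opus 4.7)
The plan is to identify $\widetilde P_\epsilon$ acting on $\mathfrak L_0$ with the self-adjoint operator $P_\epsilon$ acting on $L^2(\X)$ via the canonical unitary isomorphism $U\colon\mathfrak L_0\to L^2(\X)$, $w_v\mapsto v$ (which is unitary by the very definition of the $\mathfrak L_0$-norm in Definition~5.5). Everything will then reduce to the already-known continuity and self-adjointness of $P_\epsilon$ on its magnetic Sobolev domain.

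To prove point~(1), I would start from an arbitrary $w_v\in\mathfrak L_m(\epsilon)$, which by Definition~5.12 has the form $w_v=\sum_{\gamma\in\Gamma}\boldsymbol\psi^*(v\otimes\delta_{-\gamma})$ with $v\in\mathcal H^m_{A_\epsilon}(\X)$, the series converging in $\mathscr S^\prime(\X^2)$ by the argument of Lemma~5.6.a. Since $\widetilde P_\epsilon$ is continuous on $\mathscr S^\prime(\X^2)$ by Lemma~1.1(2), I can apply it term by term. For a single term I use that $\boldsymbol\psi^*(v\otimes\delta_{-\gamma})=(\id\otimes\tau_{-\gamma})\boldsymbol\psi^*(v\otimes\delta_0)$ (a direct computation using $\boldsymbol\psi(x,y)=(x,x-y)$ and evenness of $\delta$), together with the commutation $\widetilde P_\epsilon(\id\otimes\tau_{-\gamma})=(\id\otimes\tau_{-\gamma})\widetilde P_\epsilon$ (equation~(2.18) from Lemma~2.5, which rests on the $\Gamma$-periodicity of $p_\epsilon$ in its second argument) and the key identity~(1.8') of Corollary~1.3, $\boldsymbol\psi^*\widetilde P_\epsilon\boldsymbol\psi^*(v\otimes\delta_0)=(P_\epsilon v)\otimes\delta_0$. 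Putting these together gives
\begin{equation*}
\widetilde P_\epsilon\,\boldsymbol\psi^*(v\otimes\delta_{-\gamma})\ =\ \boldsymbol\psi^*\bigl((P_\epsilon v)\otimes\delta_{-\gamma}\bigr),
\end{equation*}
and summing over $\gamma\in\Gamma$ yields the clean formula $\widetilde P_\epsilon w_v=w_{P_\epsilon v}$. Since $P_\epsilon\in\mathbb B\bigl(\mathcal H^m_{A_\epsilon}(\X);L^2(\X)\bigr)$ uniformly in $\epsilon\in[-\epsilon_0,\epsilon_0]$ (Section~1, magnetic pseudodifferential continuity from~\cite{IMP1}), this gives $\widetilde P_\epsilon w_v\in\mathfrak L_0$ together with the estimate
\begin{equation*}
\|\widetilde P_\epsilon w_v\|_{\mathfrak L_0}\ =\ \|P_\epsilon v\|_{L^2(\X)}\ \leq\ C\|v\|_{\mathcal H^m_{A_\epsilon}(\X)}\ =\ C\|w_v\|_{\mathfrak L_m(\epsilon)},
\end{equation*}
uniformly in $\epsilon$, which is exactly the statement of~(1).

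For point~(2), I would package the computation above by noting that the map $U\colon\mathfrak L_0\to L^2(\X)$, $U w_v:=v$, is a unitary isomorphism that restricts to a unitary isomorphism $\mathfrak L_m(\epsilon)\to\mathcal H^m_{A_\epsilon}(\X)$ (both by definition of the norms). The formula $\widetilde P_\epsilon w_v=w_{P_\epsilon v}$ obtained above is precisely the intertwining relation $U\widetilde P_\epsilon^{\prime\prime\prime}=P_\epsilon U$ on the domain $\mathfrak L_m(\epsilon)$. Since $P_\epsilon$ is self-adjoint in $L^2(\X)$ with domain $\mathcal H^m_{A_\epsilon}(\X)$ (recalled at the beginning of Section~1), it follows immediately that $\widetilde P_\epsilon^{\prime\prime\prime}$ is self-adjoint in $\mathfrak L_0$ on the domain $\mathfrak L_m(\epsilon)$ and is unitarily equivalent to $P_\epsilon$.

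I do not anticipate a genuine obstacle: the only point that requires care is justifying that $\widetilde P_\epsilon$ can be pulled inside the distributional series $\sum_{\gamma}\boldsymbol\psi^*(v\otimes\delta_{-\gamma})$, which is handled by Lemma~1.1(2); once this is done, the whole argument is just a careful book-keeping via the identification $w_v\leftrightarrow v$. One might be tempted to work directly with the norm $\|\cdot\|_{\mathfrak L_m(\epsilon)}^\prime$ from Lemma~5.13 and the operator $\widetilde Q_{m,\epsilon}$, but this is unnecessary: the simpler route through~(1.8') and the unitary $U$ gives both the continuity and the self-adjointness in one stroke.
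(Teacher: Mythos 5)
Your proof is correct and follows essentially the same scheme as the paper: establish the per-term identity $\widetilde P_\epsilon\,\boldsymbol\psi^*(v\otimes\delta_{-\gamma})=\boldsymbol\psi^*\bigl((P_\epsilon v)\otimes\delta_{-\gamma}\bigr)$, sum over $\gamma$, and transfer self-adjointness of $P_\epsilon$ via the unitary identification $w_v\leftrightarrow v$ between $\mathfrak L_s(\epsilon)$ and $\mathcal H^s_{A_\epsilon}(\X)$.

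The one genuine (small) difference is in how you obtain the per-term identity. The paper re-derives it for general $\gamma$ directly: it inserts a $\delta$-sequence $\varphi_\lambda\to\delta_{-\gamma}$ into formula~(1.4), giving $\mathfrak{Op}^{A_\epsilon}\bigl([(\id\otimes\tau_{-\gamma}\otimes\id)p_\epsilon]^\circ\bigr)v\otimes\delta_{-\gamma}$, and then invokes the $\Gamma$-periodicity Hypothesis~H.6 to reduce the symbol to $\overset{\circ}{p}_\epsilon$. You instead reuse the already-proven $\gamma=0$ case~(1.8$'$) and propagate it to general $\gamma$ by the commutation $\widetilde P_\epsilon(\id\otimes\tau_\alpha)=(\id\otimes\tau_\alpha)\widetilde P_\epsilon$. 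Both routes rest on H.6 (which is what makes the commutation true), so they are mathematically equivalent; yours is slightly more economical since it avoids repeating the $\delta$-sequence limit already performed in the proof of Corollary~1.3. Two cosmetic slips: the translation in the identity $\boldsymbol\psi^*(v\otimes\delta_{-\gamma})=(\id\otimes\tau_\gamma)\boldsymbol\psi^*(v\otimes\delta_0)$ should be $\tau_\gamma$, not $\tau_{-\gamma}$ (harmless, since the commutation holds for all of $\Gamma$), and evenness of $\delta$ plays no role in that computation.
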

\begin{proof}
1. Let us choose two test functions $v$ and $\varphi$ from $\mathscr{S}(\X)$. Using formula \eqref{1.4} we obtain that
$$
\big[\big(\boldsymbol{\psi}^*\widetilde{P}_\epsilon
\boldsymbol{\psi}^*\big)(v\otimes\varphi)\big](x,y)\ =\ \left[\mathfrak{Op}^{A_\epsilon}\big([(\id\otimes\tau_y
\otimes\id)p_\epsilon]^\circ\big)v\right](x)\varphi(y),\qquad\forall(x,y)\in\X^2.
$$
In this equality we insert $\varphi(y)\equiv\varphi_\lambda(y):=\lambda^{-d}\theta\big(\frac{y+\gamma}{\lambda}\big)$ for some $(\lambda,\gamma)\in\mathbb{R}_+^*\times\Gamma$ and for any $y\in\X$, where we denoted by $\theta$ a test function of class $C^\infty_0(\X)$ that satisfies the condition $\int_\X\theta(y)dy=1$. With this choice we consider the limit for $\lambda\searrow0$  as tempered distribution on $\X^2$. Taking into account that for $\lambda\searrow0$ we have that $\varphi_\lambda$ converges in $\mathscr{S}^\prime(\X)$ to $\delta_{-\gamma}$ and using Hypothesis H.6, we conclude that
$$
\big(\boldsymbol{\psi}^*\widetilde{P}_\epsilon
\boldsymbol{\psi}^*\big)(v\otimes\delta_{-\gamma})\ =\ \big(P_\epsilon v\big)\otimes\delta_{-\gamma},\qquad\forall v\in\mathscr{S}(\X),\ \forall\gamma\in\Gamma.
$$
Extending by continuity we can write the equality
\begin{equation}\label{6.1}
\big(\boldsymbol{\psi}^*\widetilde{P}_\epsilon
\boldsymbol{\psi}^*\big)(v\otimes\delta_{-\gamma})\ =\ \big(P_\epsilon v\big)\otimes\delta_{-\gamma},\qquad\forall v\in\mathscr{S}^\prime(\X),\ \forall\gamma\in\Gamma.
\end{equation}

We conclude that for any $u\in\mathfrak{L}_m(\epsilon)$ of the form 
$$
u\ \equiv\ u_v\ :=\underset{\gamma\in\Gamma}{\sum}\boldsymbol{\psi}^*\big(v\otimes\delta_{-\gamma}\big)
$$
for some $v\in\mathcal{H}^m_{A_\epsilon}(\X)$ we can write:
$$
\widetilde{P}_\epsilon u\ =\ \boldsymbol{\psi}^*\left(\underset{\gamma\in\Gamma}{\sum}\big(\boldsymbol{\psi}^*\widetilde{P}_\epsilon
\boldsymbol{\psi}^*\big)(v\otimes\delta_{-\gamma})\right)\ =\ \underset{\gamma\in\Gamma}{\sum}\boldsymbol{\psi}^*\big((P_\epsilon v)\otimes\delta_{-\gamma}\big).
$$
The first statement of the Lemma follows now from the fact that $P_\epsilon\in\mathbb{B}\big(\mathcal{H}^m_{A_\epsilon}(\X);L^2(\X)\big)$ uniformly with respect to $\epsilon\in[-\epsilon_0,\epsilon_0]$.

2. Let us notice that the linear operator defined by
$$
U^s_\epsilon:\mathcal{H}^s_{A_\epsilon}(\X)\rightarrow\mathfrak{L}_s(\epsilon),\qquad U^s_\epsilon v:=\underset{\gamma\in\Gamma}{\sum}\boldsymbol{\psi}^*\big(v\otimes\delta_{-\gamma}\big)
$$
is in fact a unitary operator for any pair $(s,\epsilon)\in\mathbb{R}\times[-\epsilon_0,\epsilon_0]$. Following the arguments from the proof of the first point of the Lemma we have the following equality $\widetilde{P}_\epsilon U^s_\epsilon=U^s_\epsilon P_\epsilon$ valid on $\mathcal{H}^m_{A_\epsilon}(\X)$ (the domain of self-adjointness of $P_\epsilon$).
\end{proof}

We shall study now the {\it effective Hamiltonian} $\mathfrak{E}_{-+}(\epsilon,\lambda)$ defined in Theorem \ref{T.4.3}. The following two technical results will be used in proving the boundedness and self-adjointness of $\mathfrak{E}_{-+}(\epsilon,\lambda)$ in $\mathfrak{V}_0^N$.

\begin{lemma}\label{L.6.2}
Suppose given an operator-valued symbol $\mathfrak{q}\in S^0_0\big(\X;\mathbb{B}(\mathbb{C}^N)\big)$ that is hermitian (i.e. $\mathfrak{q}(x,\xi)^*=\mathfrak{q}(x,\xi),\,\forall(x,\xi)\in\Xi$) and verifies the following invariance property: $(\id\otimes\tau_{\gamma^*})\mathfrak{q}=\mathfrak{q},\,\forall\gamma^*\in\Gamma^*$. Then, for any $\epsilon\in[-\epsilon_0,\epsilon_0]$ the operator $\mathfrak{Op}^{A_\epsilon}(\mathfrak{q})$ belongs to $\mathbb{B}\big(\mathfrak{V}_0^N\big)$ uniformly with respect to $\epsilon\in[-\epsilon_0,\epsilon_0]$ and is self-adjoint. 

Moreover, the application $ S^0_0\big(\X;\mathbb{B}(\mathbb{C}^N)\big)\ni\mathfrak{q}\mapsto\mathfrak{Op}^{A_\epsilon}(\mathfrak{q})\in\mathbb{B}\big(\mathfrak{V}_0^N\big)$ is continuous uniformly with respect to $\epsilon\in[-\epsilon_0,\epsilon_0]$.
\end{lemma}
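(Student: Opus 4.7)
The proof relies on the characterization of $\mathfrak{V}_0$ supplied by Proposition \ref{P.5.4}: every element of $\mathfrak{V}_0$ can be written as $\sum_{\gamma^* \in \Gamma^*}\sigma_{\gamma^*}u_0$ for some $u_0 \in \mathcal{H}^\infty_{A_\epsilon}(\X)$, with both directions of the correspondence continuous uniformly in $\epsilon$. This allows us to transfer the action of $\mathfrak{Op}^{A_\epsilon}(\mathfrak{q})$ from $\mathfrak{V}_0^N$, on which there is no direct pseudodifferential continuity theory, onto $\mathcal{H}^\infty_{A_\epsilon}(\X)^N$, where the magnetic calculus of the Appendix provides standard boundedness. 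The hypothesis of $\Gamma^*$-periodicity of $\mathfrak{q}$ is exactly what intertwines $\mathfrak{Op}^{A_\epsilon}(\mathfrak{q})$ with the discrete symmetry $\{\sigma_{\gamma^*}\}_{\gamma^* \in \Gamma^*}$ that defines the sum.

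\textbf{Step 1 (commutation).} First I would establish that
\begin{equation*}
\mathfrak{Op}^{A_\epsilon}(\mathfrak{q})\,\sigma_{\gamma^*} \;=\; \sigma_{\gamma^*}\,\mathfrak{Op}^{A_\epsilon}(\mathfrak{q}), \qquad \forall\gamma^* \in \Gamma^*,
\end{equation*}
as operators on $\mathscr{S}^\prime(\X;\mathbb{C}^N)$. This follows from formula \eqref{A.4} (which moves $\sigma_{\gamma^*}$ through a magnetic quantization at the cost of translating the symbol in the momentum variable) combined with the assumption $(\id \otimes \tau_{\gamma^*})\mathfrak{q} = \mathfrak{q}$, exactly in the spirit of the computation carried out in Lemma \ref{L.4.6}.

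\textbf{Step 2 (boundedness and continuity in $\mathfrak{q}$).} Given $u \in \mathfrak{V}_0^N$, write $u = \sum_{\gamma^*}\sigma_{\gamma^*}u_0$ with $u_0 \in \mathcal{H}^\infty_{A_\epsilon}(\X)^N$, the map $u \mapsto u_0$ being continuous uniformly in $\epsilon$ by Proposition \ref{P.5.4}(a). Since $\mathfrak{q} \in S^0_0(\X;\mathbb{B}(\mathbb{C}^N))$, the magnetic pseudodifferential calculus (see the Appendix) yields $\mathfrak{Op}^{A_\epsilon}(\mathfrak{q}) \in \mathbb{B}(\mathcal{H}^\infty_{A_\epsilon}(\X)^N)$ uniformly in $\epsilon$ and continuously in $\mathfrak{q}$. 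The commutation of Step 1 then gives
\begin{equation*}
\mathfrak{Op}^{A_\epsilon}(\mathfrak{q})\,u \;=\; \sum_{\gamma^* \in \Gamma^*}\sigma_{\gamma^*}\bigl(\mathfrak{Op}^{A_\epsilon}(\mathfrak{q})u_0\bigr),
\end{equation*}
and Proposition \ref{P.5.4}(b), applied componentwise to $\mathfrak{Op}^{A_\epsilon}(\mathfrak{q})u_0 \in \mathcal{H}^\infty_{A_\epsilon}(\X)^N$, places the right-hand side back into $\mathfrak{V}_0^N$ with norm controlled by $\|u\|_{\mathfrak{V}_0^N}$, uniformly in $\epsilon$ and continuously in $\mathfrak{q}$.

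\textbf{Step 3 (self-adjointness).} Having obtained boundedness, it suffices to prove symmetry on the dense subspace of finite linear combinations of Dirac masses $u = \sum_{\gamma \in F}f_\gamma \delta_{-\gamma}$, $v = \sum_{\gamma \in F'}g_\gamma \delta_{-\gamma}$. Although this subspace is not stable under $\mathfrak{Op}^{A_\epsilon}(\mathfrak{q})$, the coefficients of $\mathfrak{Op}^{A_\epsilon}(\mathfrak{q})\delta_{-\gamma}$ in its $\mathfrak{V}_0^N$-decomposition can be extracted explicitly: a straightforward distributional calculation using the $\Gamma^*$-periodicity of $\mathfrak{q}$ in $\xi$ and Fourier expanding in that variable (Poisson formula \eqref{5.6} in the momentum variable) yields
\begin{equation*}
\mathfrak{Op}^{A_\epsilon}(\mathfrak{q})\,\delta_{-\gamma} \;=\; \sum_{\gamma' \in \Gamma}T^\epsilon_{\gamma',\gamma}\,\delta_{-\gamma'},
\end{equation*}
where $T^\epsilon_{\gamma',\gamma} \in \mathbb{B}(\mathbb{C}^N)$ depends on $\mathfrak{q}$ evaluated at $-(\gamma+\gamma')/2$ and on the magnetic factor $\omega_{A_\epsilon}(-\gamma,-\gamma')$. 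The symmetry $(\mathfrak{Op}^{A_\epsilon}(\mathfrak{q})u,v)_{\mathfrak{V}_0^N} = (u,\mathfrak{Op}^{A_\epsilon}(\mathfrak{q})v)_{\mathfrak{V}_0^N}$ reduces to the identity $T^\epsilon_{\gamma',\gamma} = (T^\epsilon_{\gamma,\gamma'})^*$, which follows from the hermiticity assumption $\mathfrak{q}(x,\xi)^* = \mathfrak{q}(x,\xi)$ together with the elementary relation $\overline{\omega_{A_\epsilon}(x,y)} = \omega_{A_\epsilon}(y,x)$ and the symmetric mid-point structure of the Weyl-type quantization \eqref{0.13}.

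\textbf{Main obstacle.} The step that requires genuine care is Step 3: the scalar product on $\mathfrak{V}_0^N$ is an $\ell^2(\Gamma)$-pairing of Dirac-comb coefficients, while the natural adjoint formula of magnetic pseudodifferential operators lives in the distributional duality and its $L^2$ restriction. Bridging the two passes through the explicit distributional identity for $\mathfrak{Op}^{A_\epsilon}(\mathfrak{q})\delta_{-\gamma}$. In the non-magnetic case the Poisson formula handles this immediately; with a magnetic field present, one must track the behaviour of the gauge phase $\omega_{A_\epsilon}$ under interchange of its arguments, which is precisely what supplies the hermitian structure of the matrix $(T^\epsilon_{\gamma',\gamma})$.
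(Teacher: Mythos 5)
Your proposal is correct, and it is interesting to note that the paper itself anticipates exactly the route you take for boundedness: Remark \ref{R.6.3} states that ``a shorter proof of the boundedness of $\mathfrak{Op}^{A_\epsilon}(\mathfrak{q})$ on $\mathfrak{V}_0^N$ may be obtained by using the Proposition \ref{P.5.4} characterizing the distributions from $\mathfrak{V}_0$.'' Your Steps~1 and~2 are precisely that shorter argument: commutation with $\{\sigma_{\gamma^*}\}_{\gamma^*\in\Gamma^*}$ via \eqref{A.4} and the $\Gamma^*$-periodicity of $\mathfrak{q}$, followed by decomposing $u=\sum_{\gamma^*}\sigma_{\gamma^*}u_0$ with $u_0\in\mathcal{H}^\infty_{A_\epsilon}(\X)^N$, pushing $\mathfrak{Op}^{A_\epsilon}(\mathfrak{q})$ through, and invoking the $S^0_0$-calculus on the magnetic Sobolev scale together with Proposition \ref{P.5.4}(b). (One must quietly justify taking the operator inside the infinite $\Gamma^*$-sum, but this follows from the $\mathscr{S}^\prime$-continuity of the magnetic quantization, as the paper uses repeatedly in Lemmas \ref{L.6.6}--\ref{L.6.10}.)

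The paper, by contrast, does not use Proposition \ref{P.5.4} at all for boundedness. It instead expands the $\Gamma^*$-periodic symbol in a Fourier series $\mathfrak{q}(x,\xi)=\sum_{\alpha\in\Gamma}\hat{\mathfrak{q}}_\alpha(x)e^{i\langle\xi,\alpha\rangle}$ with rapidly decaying coefficients, observes that each Fourier mode quantizes to an explicit shift-plus-multiplication operator $Q_\alpha$, and obtains the explicit matrix $\widetilde{f}_\gamma=\sum_\alpha\omega_{A_\epsilon}(-\gamma,-\alpha)\hat{\mathfrak{q}}_{\gamma-\alpha}(-(\gamma+\alpha)/2)\underline{f}_\alpha$ in the $\ell^2(\Gamma)^N$-picture, from which boundedness, uniform continuity in $\mathfrak{q}$, and self-adjointness all fall out by Schur-type estimates and a direct check. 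Your Step~3 for self-adjointness is essentially a rediscovery of that explicit kernel (your $T^\epsilon_{\gamma',\gamma}$ corresponds to $\omega_{A_\epsilon}(-\gamma',-\gamma)\hat{\mathfrak{q}}_{\gamma'-\gamma}(-(\gamma+\gamma')/2)$, so the dependence on ``$\mathfrak{q}$ evaluated at $-(\gamma+\gamma')/2$'' should be read as the Fourier coefficient $\hat{\mathfrak{q}}$ evaluated there), and the identity $T^\epsilon_{\gamma',\gamma}=(T^\epsilon_{\gamma,\gamma'})^*$ does reduce, exactly as you say, to $\hat{\mathfrak{q}}_{-\alpha}=\hat{\mathfrak{q}}_\alpha^*$ and $\overline{\omega_{A_\epsilon}(x,y)}=\omega_{A_\epsilon}(y,x)$. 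In summary: your boundedness argument is a genuine, more structural alternative that the authors acknowledge but do not develop (at the cost of forgoing the explicit $\ell^2(\Gamma)$ matrix representation they want for later use), while your self-adjointness argument converges onto the paper's own computation.
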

\begin{proof}
The invariance with respect to translations from $\Gamma^*$ assumed in the statement implies that the operator-valued symbol $\mathfrak{q}$ is in fact a $\Gamma^*$-periodic function with respect to the second variable $\xi\in\X^*$ and thus can be decomposed in a Fourier series (as tempered distributions in $\mathscr{S}^\prime\big(\Xi;\mathbb{B}(\mathbb{C}^N)\big)$):
\begin{equation}\label{6.2}
\mathfrak{q}(x,\xi)\ =\ \underset{\alpha\in\Gamma}{\sum}\hat{\mathfrak{q}}_\alpha(x)e^{i<\xi,\alpha>},\qquad\hat{\mathfrak{q}}_\alpha(x):=|E^*|^{-1}\int_{E^*}e^{-i<\xi,\alpha>}\mathfrak{q}(x,\xi)\,d\xi.
\end{equation}
Due to the regularity of the symbol functions we deduce that for any $\beta\in\mathbb{N}^d$ and for any $k\in\mathbb{N}$ there exists a strictly positive constant $C_{\beta,k}$ such that
\begin{equation}\label{6.3}
\left|\big(\partial^\beta_x\hat{\mathfrak{q}}_\alpha\big)(x)\right|\ \leq\ C_{\beta,k}<\alpha>^{-k},\qquad\forall x\in\X,\ \forall\alpha\in\Gamma
\end{equation}
and we conclude that the series in \eqref{6.2} converges in fact in $BC^\infty\big(\Xi;\mathbb{B}(\mathbb{C}^N)\big)\equiv S^0_0\big(\X;\mathbb{B}(\mathbb{C}^N)\big)$. From \eqref{6.2} we deduce that 
\begin{equation}\label{6.4}
\big(\mathfrak{Op}^{A_\epsilon}(\mathfrak{q})u\big)(x)\ =\ \underset{\alpha\in\Gamma}{\sum}\big(Q_\alpha u\big)(x),\qquad\forall x\in\X,\ \forall u\in\mathscr{S}(\X:\mathbb{C}^N),
\end{equation}
where $Q_\alpha$ is the linear operator defined on $\mathscr{S}(\X;\mathbb{C}^N)$ by the following oscillating integral:
\begin{equation}\label{6.5}
\big(Q_\alpha u\big)(x)\ :=\ \int_\Xi e^{i<\eta,x-y+\alpha>}\omega_{A_\epsilon}(x,y)\,\hat{\mathfrak{q}}_\alpha\big(\frac{x+y}{2}\big)\,u(y)\,dy\,\dbar\eta\ =\ \omega_{A_\epsilon}(x,x+\alpha)\,\hat{\mathfrak{q}}_\alpha(x+\alpha/2)(\tau_{-\alpha}u)(x).
\end{equation}

Both equalities \eqref{6.4} and \eqref{6.5} may be extended by continuity to any $u\in\mathscr{S}^\prime(\X;\mathbb{C}^N)$.
Let us consider then $u\equiv u_{\underline{f}}=\underset{\gamma\in\Gamma}{\sum}\,\underline{f}_\gamma\delta_{-\gamma}\in\mathfrak{V}_0^N$ for some $\underline{f}\in\big[l^2(\Gamma)\big]^N$. Then we can write:
\begin{equation}\label{6.6}
Q_\alpha u\ =\ \underset{\gamma\in\Gamma}{\sum}\omega_{A_\epsilon}(-\gamma-\alpha,-\gamma)\,\hat{\mathfrak{q}}_\alpha(-\gamma-\alpha/2)\,\underline{f}_\gamma\,\delta_{-\alpha-\gamma}\ =\ \underset{\gamma\in\Gamma}{\sum}\omega_{A_\epsilon}(-\gamma,\alpha-\gamma)\,\hat{\mathfrak{q}}_\alpha(-\gamma+\alpha/2)\,\underline{f}_{\gamma-\alpha}\,\delta_{-\gamma}.
\end{equation}
If we use now the formula \eqref{6.6} in \eqref{6.4} we get
\begin{equation}\label{6.7}
\mathfrak{Op}^{A_\epsilon}(\mathfrak{q})u\ =\ \underset{\gamma\in\Gamma}{\sum}\widetilde{f}_\gamma\,\delta_{-\gamma},
\end{equation}
\begin{equation}\label{6.8}
\widetilde{f}_\gamma\ :=\ \underset{\alpha\in\Gamma}{\sum}\omega_{A_\epsilon}(-\gamma,\alpha-\gamma)\,\hat{\mathfrak{q}}_{\alpha}(-\gamma+\alpha/2)\,\underline{f}_{\gamma-\alpha}\ =\ \underset{\alpha\in\Gamma}{\sum}\omega_{A_\epsilon}(-\gamma,-\alpha)\,\hat{\mathfrak{q}}_{\gamma-\alpha}\big(-\frac{\gamma+\alpha}{2}\big)\,\underline{f}_{\alpha}.
\end{equation}
Let us verify that $\widetilde{f}\in\big[l^2(\Gamma)\big]^N$. In fact from \eqref{6.3} and \eqref{6.8} it follows that for any $k\in\mathbb{N}$ (sufficiently large) there exists $C_k>0$ such that 
$$
|\widetilde{f}_{\gamma}|\ \leq\ C_k\underset{\alpha\in\Gamma}{\sum}<\gamma-\alpha>^{-k}|\underline{f}_\alpha|\ \leq\ C_k\sqrt{\underset{\alpha\in\Gamma}{\sum}<\gamma-\alpha>^{-k}}\sqrt{\underset{\alpha\in\Gamma}{\sum}<\gamma-\alpha>^{-k}|\underline{f}_\alpha|^2},
$$
so that we have the estimation:
\begin{equation}\label{6.9}
\|\widetilde{f}\|_{[l^2(\Gamma)]^N}^2\ =\ \underset{\gamma\in\Gamma}{\sum}|\widetilde{f}_\gamma|^2\ \leq\ C^\prime\underset{\alpha\in\Gamma}{\sum}|\underline{f}_\alpha|^2\ =\ C^\prime\|\underline{f}\|_{[l^2(\Gamma)]^N}^2.
\end{equation}
From \eqref{6.7} and \eqref{6.9} we clearly deduce the fact that $\mathfrak{Op}^{A_\epsilon}(\mathfrak{q})\in\mathbb{B}\big(\mathfrak{V}_0^N\big)$ uniformly with respect to $\epsilon\in[-\epsilon_0,\epsilon_0]$ and the continuity of the application $ S^0_0\big(\X;\mathbb{B}(\mathbb{C}^N)\big)\ni\mathfrak{q}\mapsto\mathfrak{Op}^{A_\epsilon}(\mathfrak{q})\in\mathbb{B}\big(\mathfrak{V}_0^N\big)$ uniformly with respect to $\epsilon\in[-\epsilon_0,\epsilon_0]$ clearly follows from \eqref{6.8} and \eqref{6.2}.

In order to prove the self-adjointness of $\mathfrak{Op}^{A_\epsilon}(\mathfrak{q})$ we fix a second element $v\in\mathfrak{V}_0^N$ of the form $v\equiv v_{\underline{g}}=\underset{\gamma\in\Gamma}{\sum}\underline{g}_\gamma\,\delta_{-\gamma}$ for some $\underline{g}\in \big[l^2(\Gamma)\big]^N$. Then we notice that
\begin{equation}\label{6.10}
\left\{
\begin{array}{rcl}
\mathfrak{Op}^{A_\epsilon}(\mathfrak{q})v&=&\underset{\gamma\in\Gamma}{\sum}\widetilde{g}_\gamma\,\delta_{-\gamma}\\
\widetilde{g}_\gamma&=&\underset{\alpha\in\Gamma}{\sum}\omega_{A_\epsilon}(-\gamma,-\alpha)\,\hat{\mathfrak{q}}_{\gamma-\alpha}\big(-\frac{\gamma+\alpha}{2}\big)\,\underline{g}_{\alpha}.
\end{array}
\right.
\end{equation}
Let us point out the following evident equalities:
\begin{equation}\label{6.11}
\big[\hat{\mathfrak{q}}(x)]^*\ =\ \hat{\mathfrak{q}}_{-\alpha}(x);\qquad\overline{\omega_{A_\epsilon}(-\gamma,-\alpha)}\ =\ \omega_{A_\epsilon}(-\alpha,-\gamma)
\end{equation}
in order to deduce that
$$
\left(\mathfrak{Op}^{A_\epsilon}(\mathfrak{q})u\,,\,v\right)_{\mathfrak{V}_0^N}\ =\ \underset{\gamma\in\Gamma}{\sum}\left(\widetilde{f}_\gamma\,,\,\underline{g}_\gamma\right)_{\mathbb{C}^N}\ =\ \underset{(\alpha,\gamma)\in\Gamma^2}{\sum}\left(\omega_{A_\epsilon}(-\gamma,-\alpha)\,\hat{\mathfrak{q}}_{\gamma-\alpha}\big(-\frac{\gamma+\alpha}{2}\big)\,\underline{f}_{\alpha}\,,\,\underline{g}_\gamma\right)_{\mathbb{C}^N}\ =
$$
$$
=\ \underset{(\alpha,\gamma)\in\Gamma^2}{\sum}\left(\underline{f}_{\alpha}\,,\,\omega_{A_\epsilon}(-\alpha,-\gamma)\,\hat{\mathfrak{q}}_{\alpha-\gamma}\big(-\frac{\gamma+\alpha}{2}\big)\,\underline{g}_\gamma\right)_{\mathbb{C}^N}\ =\ \underset{\alpha\in\Gamma}{\sum}\left(\underline{f}_\alpha\,,\,\widetilde{g}_\alpha\right)_{\mathbb{C}^N}\ =\ \left(u\,,\,\mathfrak{Op}^{A_\epsilon}(\mathfrak{q})v\right)_{\mathfrak{V}_0^N}.
$$
\end{proof}

\begin{remark}\label{R.6.3}
Let us point out that a shorter proof of the boundedness of $\mathfrak{Op}^{A_\epsilon}(\mathfrak{q})$ on $\mathfrak{V}_0^N$ may be obtained by using the Proposition \ref{P.5.4} characterizing the distributions from $\mathfrak{V}_0$. The proof we have given has the advantage of giving the explicit form of the operator $\mathfrak{Op}^{A_\epsilon}(\mathfrak{q})$ when we identify $\mathfrak{V}_0^N$ with $\big[l^2(\Gamma)\big]^N$ (see \eqref{6.7} and \eqref{6.8}). Moreover, the self-adjointness is a very easy consequence of these formulae.
\end{remark}

In order to prove that the effective Hamiltonian $\mathfrak{E}_{-+}(\epsilon,\lambda)$ satisfies the hypothesis of the Lemma \ref{L.6.2} we shall need the commutation properties we have proved at the end of Section \ref{S.4}, that we now recall in the following Lemma.
\begin{lemma}\label{L.6.4}
With the notations introduced in Lemma \ref{L.4.6} and Remark \ref{R.4.7}, for any $\gamma^*\in\Gamma^*$ and for any $(\epsilon,\lambda)\in[-\epsilon_0,\epsilon_0]\times I$ the following equalities are true:
\begin{equation}\label{6.12}
\left\{
\begin{array}{rcl}
\mathfrak{R}_{-,\epsilon}\,\sigma_{\gamma^*}&=&\Upsilon_{\gamma^*}\,\mathfrak{R}_{-,\epsilon}\\
\mathfrak{R}_{+,\epsilon}\,\Upsilon_{\gamma^*}&=&\sigma_{\gamma^*}\,\mathfrak{R}_{+,\epsilon},
\end{array}
\right.
\end{equation}
\begin{equation}\label{6.13}
\left\{
\begin{array}{rcl}
\mathfrak{E}(\epsilon,\lambda)\,\Upsilon_{\gamma^*}&=&\Upsilon_{\gamma^*}\,\mathfrak{E}(\epsilon,\lambda)\\
\mathfrak{E}_+(\epsilon,\lambda)\,\sigma_{\gamma^*}&=&\Upsilon_{\gamma^*}\,\mathfrak{E}_+(\epsilon,\lambda)\\
\mathfrak{E}_-(\epsilon,\lambda)\,\Upsilon_{\gamma^*}&=&\sigma_{\gamma^*}\,\mathfrak{E}_-(\epsilon,\lambda)\\
\mathfrak{E}_{-+}(\epsilon,\lambda)\,\sigma_{\gamma^*}&=&\sigma_{\gamma^*}\,\mathfrak{E}_{-+}(\epsilon,\lambda).
\end{array}
\right.
\end{equation}
\end{lemma}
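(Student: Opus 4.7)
The first two relations in \eqref{6.12} are not really new: they were established, up to notation, inside the proof of Lemma \ref{L.4.6} as equations \eqref{4.24} and \eqref{4.25}, via a direct change of variable in the oscillatory integrals defining $\mathfrak{R}_{\pm,\epsilon}$ combined with the $\Gamma^*$-covariance \eqref{4.23} of $R_\pm(\xi)$. For those two identities I would simply invoke the corresponding formulas.

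For \eqref{6.13} the strategy is to read off the four scalar commutation relations from the single $2\times 2$ matrix identity \eqref{4.26} of Remark \ref{R.4.7}. Setting $\mathcal{D}_{\gamma^*}:=\mathrm{diag}(\Upsilon_{\gamma^*},\sigma_{\gamma^*})$ and expanding
\begin{equation*}
\mathcal{D}_{\gamma^*}\,\mathcal{E}_{\epsilon,\lambda}\ =\ \mathcal{E}_{\epsilon,\lambda}\,\mathcal{D}_{\gamma^*}
\end{equation*}
entry by entry yields at once the four claimed equalities (the two diagonal entries give the first and the fourth, and the two off-diagonal entries give the second and the third). The only substantive task is therefore to justify \eqref{4.26}.

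This last step is standard once the analogous commutation for $\mathcal{P}_{\epsilon,\lambda}$ is available. By Lemma \ref{L.4.6} one has $\mathcal{D}_{\gamma^*}\mathcal{P}_{\epsilon,\lambda}=\mathcal{P}_{\epsilon,\lambda}\mathcal{D}_{\gamma^*}$ on the dense subspace $\mathscr{S}(\X\times\mathbb{T})\times\mathscr{S}(\X;\mathbb{C}^N)$. Both $\Upsilon_{\gamma^*}$ and $\sigma_{\gamma^*}$ act as unitaries that preserve the domain $\mathcal{K}^m_\epsilon(\X^2)\times L^2(\X;\mathbb{C}^N)$ of $\mathcal{P}_{\epsilon,\lambda}$ (which follows from Definition \ref{D.2.9}, Lemma \ref{L.2.10} and the boundedness of multiplication by the character $e^{i<\gamma^*,\cdot>}$ on every magnetic Sobolev space), so the identity extends by continuity to the whole domain. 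Invoking then Theorem \ref{T.4.3}, which gives $\mathcal{P}_{\epsilon,\lambda}\mathcal{E}_{\epsilon,\lambda}=\mathcal{E}_{\epsilon,\lambda}\mathcal{P}_{\epsilon,\lambda}=\bb1$, the standard algebraic manipulation
\begin{equation*}
\mathcal{D}_{\gamma^*}\mathcal{E}_{\epsilon,\lambda}\ =\ \mathcal{E}_{\epsilon,\lambda}\mathcal{P}_{\epsilon,\lambda}\mathcal{D}_{\gamma^*}\mathcal{E}_{\epsilon,\lambda}\ =\ \mathcal{E}_{\epsilon,\lambda}\mathcal{D}_{\gamma^*}\mathcal{P}_{\epsilon,\lambda}\mathcal{E}_{\epsilon,\lambda}\ =\ \mathcal{E}_{\epsilon,\lambda}\mathcal{D}_{\gamma^*}
\end{equation*}
produces \eqref{4.26}.

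There is essentially no real obstacle: the entire argument is bookkeeping. The only point requiring care is verifying that the commutation $\mathcal{D}_{\gamma^*}\mathcal{P}_{\epsilon,\lambda}=\mathcal{P}_{\epsilon,\lambda}\mathcal{D}_{\gamma^*}$, obtained in Lemma \ref{L.4.6} only on test functions, truly extends to the full operator domain, but this is a routine density argument since every operator in sight is continuous between the ambient spaces guaranteed by Lemma \ref{L.4.1}.
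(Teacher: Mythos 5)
Your proposal is correct and follows exactly the paper's own route: identify \eqref{6.12} with the already-established equations \eqref{4.24}--\eqref{4.25}, and then expand the block-diagonal commutation \eqref{4.26} entry by entry using the block form \eqref{4.11} of $\mathcal{E}_{\epsilon,\lambda}$ to obtain the four identities \eqref{6.13}. The only difference is that you additionally supply the standard algebraic derivation of \eqref{4.26} from \eqref{4.21} and $\mathcal{P}_{\epsilon,\lambda}\mathcal{E}_{\epsilon,\lambda}=\mathcal{E}_{\epsilon,\lambda}\mathcal{P}_{\epsilon,\lambda}=\bb1$, a step the paper relegates to Remark \ref{R.4.7} with an ``of course''; that extra detail is welcome but does not change the proof.
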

\begin{proof}
The equalities \eqref{6.12} are exactly the equalities \eqref{4.24} and \eqref{4.25} that we have proved in Section \ref{S.4}. The equalities \eqref{6.13} follow from \eqref{4.26} and \eqref{4.11}.
\end{proof}

\begin{lemma}\label{L.6.5}
Under the Hypothesis of Theorem \ref{T.4.3}, we have that $\mathfrak{E}_{-+}(\epsilon,\lambda)\in\mathbb{B}\big(\mathfrak{V}_0^N\big)$ uniformly with respect to $(\epsilon,\lambda)\in[-\epsilon_0,\epsilon_0]\times I$ and is self-adjoint on the Hilbert space $\mathfrak{V}_0^N$.
\end{lemma}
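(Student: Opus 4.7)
The strategy is to verify that the symbol $E^{-+}_{\epsilon,\lambda}$ satisfies the two hypotheses of Lemma \ref{L.6.2}, namely pointwise Hermiticity and $\Gamma^*$-periodicity in $\xi$, and then invoke that lemma directly. By Theorem \ref{T.4.3} we already know $E^{-+}_{\epsilon,\lambda} \in S^0_0\big(\X;\mathbb{B}(\mathbb{C}^N)\big)$ uniformly in $(\epsilon,\lambda)\in[-\epsilon_0,\epsilon_0]\times I$, so only the two algebraic properties remain to be established.

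For the Hermiticity of the symbol, I would argue from the self-adjointness of $\mathcal{P}_{\epsilon,\lambda}$ proved in Lemma \ref{L.4.1}: its inverse $\mathcal{E}_{\epsilon,\lambda}$ is therefore a self-adjoint block operator on $\mathcal{K}(\X^2)\oplus L^2(\X;\mathbb{C}^N)$. Comparing the block entries gives in particular that $\mathfrak{E}_{-+}(\epsilon,\lambda)^*=\mathfrak{E}_{-+}(\epsilon,\lambda)$ as a bounded operator on $L^2(\X;\mathbb{C}^N)$. Since the formal adjoint in the magnetic Weyl calculus corresponds at the symbol level to pointwise adjoint in $\mathbb{B}(\mathbb{C}^N)$, and the quantization $\mathfrak{Op}^{A_\epsilon}$ is injective on $S^0_0\big(\X;\mathbb{B}(\mathbb{C}^N)\big)$, this yields $E^{-+}_{\epsilon,\lambda}(x,\xi)^* = E^{-+}_{\epsilon,\lambda}(x,\xi)$ for every $(x,\xi)\in\Xi$.

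For the $\Gamma^*$-periodicity I would use the last commutation relation in \eqref{6.13}, namely $\mathfrak{E}_{-+}(\epsilon,\lambda)\,\sigma_{\gamma^*} = \sigma_{\gamma^*}\,\mathfrak{E}_{-+}(\epsilon,\lambda)$ for every $\gamma^*\in\Gamma^*$. Combining this with the intertwining identity
\[
\sigma_{\gamma^*}\mathfrak{Op}^{A_\epsilon}(a)\sigma_{-\gamma^*}\ =\ \mathfrak{Op}^{A_\epsilon}\big((\id\otimes\tau_{\gamma^*})a\big),
\]
already exploited in Example \ref{E.A.4} and in the proof of Proposition \ref{P.5.4}, one rewrites the commutation as $\mathfrak{Op}^{A_\epsilon}\big((\id\otimes\tau_{\gamma^*})E^{-+}_{\epsilon,\lambda}\big) = \mathfrak{Op}^{A_\epsilon}\big(E^{-+}_{\epsilon,\lambda}\big)$. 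The injectivity of $\mathfrak{Op}^{A_\epsilon}$ then yields $(\id\otimes\tau_{\gamma^*})E^{-+}_{\epsilon,\lambda} = E^{-+}_{\epsilon,\lambda}$ for every $\gamma^*\in\Gamma^*$, which is the second hypothesis of Lemma \ref{L.6.2}.

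Once both properties are secured, Lemma \ref{L.6.2} applies and delivers at once the boundedness of $\mathfrak{E}_{-+}(\epsilon,\lambda)$ on $\mathfrak{V}_0^N$, uniformly in $(\epsilon,\lambda)\in[-\epsilon_0,\epsilon_0]\times I$, together with its self-adjointness on that Hilbert space. The only real point requiring care is checking that the transformation law for $\sigma_{\gamma^*}$ under $\mathfrak{Op}^{A_\epsilon}$ is applied correctly for operator-valued symbols (since here the symbol takes values in $\mathbb{B}(\mathbb{C}^N)$ rather than in $\mathbb{C}$), but this is a direct consequence of the fact that the magnetic phase $\omega_{A_\epsilon}(x,y)$ is scalar and therefore commutes with the $\mathbb{B}(\mathbb{C}^N)$-valued factor, so the argument of Example \ref{E.A.4} carries over verbatim. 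No new estimate is needed, and the remainder is bookkeeping.
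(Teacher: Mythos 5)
Your proof is correct and follows essentially the same route as the paper's own argument: both establish the pointwise Hermiticity of the symbol by combining the self-adjointness of $\mathcal{E}_{\epsilon,\lambda}$ with the injectivity of $\mathfrak{Op}^{A_\epsilon}$, both obtain the $\Gamma^*$-periodicity from the last commutation relation in \eqref{6.13} together with the intertwining identity \eqref{A.4}, and both then conclude via Lemma \ref{L.6.2}. Your closing remark about the scalar nature of $\omega_{A_\epsilon}$ in the operator-valued setting is a sound (and tacit in the paper) justification for carrying over the intertwining identity.
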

\begin{proof}
We recall that $\mathfrak{E}_{-+}(\epsilon,\lambda):=\mathfrak{Op}^{A_\epsilon}\big(E^{-,+}_{\epsilon,\lambda}\big)$ where $E^{-,+}_{\epsilon,\lambda}\in S^0_0\big(\X;\mathbb{B}(\mathbb{C}^N)\big)$. This Lemma will thus follow directly from Lemma \ref{L.6.2} once we have shown that $E^{-,+}_{\epsilon,\lambda}$ is hermitian and $\Gamma^*$-periodic in the second variable $\xi\in\X^*$.

In order to prove the symmetry we use the fact that the operator $\mathcal{E}_{\epsilon,\lambda}$ is self-adjoint on $\mathcal{K}(\X^2)\times L^2(\X;\mathbb{C}^N)$ and deduce that $\mathfrak{E}_{-+}(\epsilon,\lambda)$ is self-adjoint on the Hilbert space $L^2(\X;\mathbb{C}^N)$. Thus we have the equality $\big[\mathfrak{E}_{-+}(\epsilon,\lambda)\big]^*=\mathfrak{E}_{-+}(\epsilon,\lambda)$ from which we deduce that 
$$
\mathfrak{Op}^{A_\epsilon}\big(\big[E^{-,+}_{\epsilon,\lambda}\big]^*-E^{-,+}_{\epsilon,\lambda}\big)\ =\ 0.
$$
As the application $\mathfrak{Op}^{A_\epsilon}:\mathscr{S}^\prime(\Xi)\rightarrow\mathbb{B}\big(\mathscr{S}(\X);\mathscr{S}^\prime(\X)\big)$ is an isomorphism (see \cite{MP1}) it follows the symmetry relation $\big[E^{-,+}_{\epsilon,\lambda}\big]^*=E^{-,+}_{\epsilon,\lambda}$.

For the $\Gamma^*$-periodicity we use the last equality in \eqref{6.13} that can also be written as 
$$
\sigma_{-\gamma^*}\mathfrak{E}_{-+}(\epsilon,\lambda)\sigma_{\gamma^*}\ =\ \mathfrak{E}_{-+}(\epsilon,\lambda).
$$
Considering now the equality \eqref{A.4}, that evidently remains true also for the $\mathfrak{Op}^{A_\epsilon}$ quantization, we can write
$$
\sigma_{-\gamma^*}\mathfrak{E}_{-+}(\epsilon,\lambda)\sigma_{\gamma^*}\ =\ \mathfrak{Op}^{A_\epsilon}\big((\id\otimes\tau_{-\gamma^*})E^{-+}_{\epsilon,\lambda}\big).
$$
Repeating the above argument based on the injectivity of the quantization map (\cite{MP1}) we conclude that $(\id\otimes\tau_{-\gamma^*})E^{-+}_{\epsilon,\lambda}=E^{-+}_{\epsilon,\lambda}$ for any $\gamma^*\in\Gamma^*$.
\end{proof}

We shall now study the continuity properties of the operators $\mathfrak{R}_{\pm,\epsilon}$, $\mathfrak{E}_{\pm}(\epsilon,\lambda)$ and $\mathfrak{E}(\epsilon,\lambda)$ acting on the distribution spaces $\mathfrak{V}_0$ or $\mathfrak{L}_0$ by using the characterizations of these spaces obtained in Section \ref{S.5}  (Propositions \ref{P.5.4}, \ref{P.5.9} and \ref{P.5.11}) as well as the commutation properties recalled in Lemma \ref{L.6.4}. We shall suppose the hypothesis of Theorem \ref{T.4.3} are satisfied and $(\epsilon,\lambda)\in[-\epsilon_0,\epsilon_0]\times I$.

\begin{lemma}\label{L.6.6}
$\mathfrak{R}_{+,\epsilon}\in\mathbb{B}\big(\mathfrak{L}_0;\mathfrak{V}_0^N\big)$ uniformly with respect to $\epsilon\in[-\epsilon_0,\epsilon_0]$.
\end{lemma}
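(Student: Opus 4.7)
The plan is to combine the characterizations of $\mathfrak{L}_0$ and $\mathfrak{V}_0$ obtained in Section \ref{S.5} (Propositions \ref{P.5.9} and \ref{P.5.4}) with the commutation relation \eqref{6.12} from Lemma \ref{L.6.4}, namely $\mathfrak{R}_{+,\epsilon}\Upsilon_{\gamma^*}=\sigma_{\gamma^*}\mathfrak{R}_{+,\epsilon}$. Concretely, given $u\in\mathfrak{L}_0$, Proposition \ref{P.5.9} produces a vector $u_0\in\mathcal{H}^\infty_{A_\epsilon}(\X)\otimes L^2(\mathbb{T})$ with
\[
u\,=\,\underset{\gamma^*\in\Gamma^*}{\sum}\Upsilon_{\gamma^*}u_0\quad\text{in }\mathscr{S}^\prime(\X^2),
\]
and the bound $\|u_0\|_{\mathcal{H}^s_{A_\epsilon}(\X)\otimes L^2(\mathbb{T})}\leq C_s\|u\|_{\mathfrak{L}_0}$ for every $s\in\mathbb{R}$, uniformly in $\epsilon\in[-\epsilon_0,\epsilon_0]$. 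Setting $v_0:=\mathfrak{R}_{+,\epsilon}u_0$, one applies $\mathfrak{R}_{+,\epsilon}$ term-by-term (after verifying that it is continuous from $\mathscr{S}^\prime(\X^2)$ into $\mathscr{S}^\prime(\X;\mathbb{C}^N)$, which is automatic since $R_+\in S^0_0\big(\X;\mathbb{B}(\mathcal{K}_0;\mathbb{C}^N)\big)$) and uses the commutation to rewrite
\[
\mathfrak{R}_{+,\epsilon}u\,=\,\underset{\gamma^*\in\Gamma^*}{\sum}\sigma_{\gamma^*}v_0.
\]
Then Proposition \ref{P.5.4} b), applied componentwise, shows that this series converges in $\mathscr{S}^\prime(\X;\mathbb{C}^N)$ to an element of $\mathfrak{V}_0^N$, with a norm bound $\|\mathfrak{R}_{+,\epsilon}u\|_{\mathfrak{V}_0^N}\leq C\|v_0\|_{\mathcal{H}^s_{A_\epsilon}(\X;\mathbb{C}^N)}$ for any $s>d$.

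The key technical step, and the only real obstacle, is to show that $\mathfrak{R}_{+,\epsilon}$ defines a bounded map
\[
\mathfrak{R}_{+,\epsilon}\,\in\,\mathbb{B}\big(\mathcal{H}^s_{A_\epsilon}(\X)\otimes L^2(\mathbb{T});\,\mathcal{H}^s_{A_\epsilon}(\X;\mathbb{C}^N)\big)
\]
uniformly in $\epsilon\in[-\epsilon_0,\epsilon_0]$, for every $s\in\mathbb{R}$ (at least for some $s>d$). This should be a direct consequence of the Calder{\'o}n--Vaillancourt type continuity results for magnetic pseudodifferential operators with operator-valued symbols recalled in the Appendix (the same machinery used in the proof of Lemma \ref{L.4.2} and Proposition \ref{P.A.26}), once one notes that the scalar Sobolev scale of order $s$ in the first variable corresponds, after conjugation with $Q_{s,\epsilon}\otimes\id$, to the $L^2$ level, where the boundedness of $\mathfrak{R}_{+,\epsilon}$ is already known, and that $Q_{s,\epsilon}\otimes\id$ commutes (modulo a commutator in $S^{-1}_1$ which is handled by the composition theorem of Subsection 9.1) with $\mathfrak{R}_{+,\epsilon}$ because $R_+$ is independent of $x$ in the sense that its mapping properties are those of a symbol in $S^0_0(\X;\mathbb{B}(\mathcal{K}_0;\mathbb{C}^N))$.

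Putting these two steps together yields a chain of estimates: choosing any $s>d$,
\[
\|\mathfrak{R}_{+,\epsilon}u\|_{\mathfrak{V}_0^N}\,\leq\,C\|v_0\|_{\mathcal{H}^s_{A_\epsilon}(\X;\mathbb{C}^N)}\,\leq\,C^\prime\|u_0\|_{\mathcal{H}^s_{A_\epsilon}(\X)\otimes L^2(\mathbb{T})}\,\leq\,C^{\prime\prime}\|u\|_{\mathfrak{L}_0},
\]
with the constants independent of $\epsilon\in[-\epsilon_0,\epsilon_0]$, which is exactly the claim. The hermitian counterpart $\mathfrak{R}_{-,\epsilon}\in\mathbb{B}\big(\mathfrak{V}_0^N;\mathfrak{L}_0\big)$ will presumably be proved later by a dual argument using the other commutation relation in \eqref{6.12} and the symmetric characterizations of $\mathfrak{V}_0$ and $\mathfrak{L}_0$.
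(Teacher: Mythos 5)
Your proposal is correct and follows essentially the same route as the paper: decompose $u\in\mathfrak{L}_0$ via Proposition \ref{P.5.9}, commute $\mathfrak{R}_{+,\epsilon}$ past $\Upsilon_{\gamma^*}$ using the second relation in \eqref{6.12}, and reassemble via Proposition \ref{P.5.4}~b), with the Sobolev boundedness $\mathfrak{R}_{+,\epsilon}\in\mathbb{B}\big(\mathcal{H}^s_{A_\epsilon}(\X)\otimes\mathcal{K}_0;[\mathcal{H}^s_{A_\epsilon}(\X)]^N\big)$ supplied directly by Proposition \ref{P.A.26} (which you essentially re-derive by hand via the $Q_{s,\epsilon}$ conjugation; the paper simply cites it). The only small slip is the closing remark: the companion Lemma \ref{L.6.9} establishes $\mathfrak{R}_{-,\epsilon}\in\mathbb{B}\big(\mathfrak{V}_0^N;\mathfrak{L}_m(\epsilon)\big)$, not $\mathbb{B}\big(\mathfrak{V}_0^N;\mathfrak{L}_0\big)$, and it is proved by repeating this argument with the other commutation relation rather than by a duality argument.
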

\begin{proof}
Let us recall that $\mathfrak{R}_{+,\epsilon}=\mathfrak{Op}^{A_\epsilon}\big(R_+\big)$ with $R_+\in S^0_0\big(\X;\mathbb{B}(\mathcal{K}_0;\mathbb{C}^N)\big)$ so that finally we deduce that $\mathfrak{R}_{+,\epsilon}\in\mathbb{B}\big(\mathscr{S}^\prime(\X;\mathcal{K}_0);\mathscr{S}^\prime(\X;\mathbb{C}^N)\big)$. Moreover, from Proposition \ref{P.A.26} we deduce that for any $s\in\mathbb{R}$ we have that $\mathfrak{R}_{+,\epsilon}\in\mathbb{B}\big(\mathcal{H}^s_{A_\epsilon}(\X)\otimes\mathcal{K}_0\,;\,\big[\mathcal{H}^s_{A_\epsilon}(\X)\big]^N\big)$ uniformly with respect to $\epsilon\in[-\epsilon_0,\epsilon_0]$.

Suppose fixed some $u\in\mathfrak{L}_0$; from Proposition \ref{P.5.9} we deduce the existence of a unique $u_0\in\mathcal{H}^\infty_{A_\epsilon}(\X)\otimes\mathcal{K}_0\equiv\mathcal{H}^\infty_{A_\epsilon}(\X)\otimes L^2(\mathbb{T})$ such that $u=\underset{\gamma^*}{\sum}\Upsilon_{\gamma^*}u_0$ with convergence in $\mathscr{S}^\prime(\X^2)$. In fact Lemma \ref{L.5.8} implies the convergence of the above series in $\mathscr{S}^\prime\big(\X;\mathcal{K}_0\big)$. Using now also the second equation in \eqref{6.12} we can write that in $\mathscr{S}^\prime\big(\X;\mathbb{C}^N\big)$ we have the equalities
$$
\mathfrak{R}_{+,\epsilon}u\ =\ \underset{\gamma^*}{\sum}\mathfrak{R}_{+,\epsilon}\Upsilon_{\gamma^*}u_0\ =\ \underset{\gamma^*}{\sum}\sigma_{\gamma^*}\mathfrak{R}_{+,\epsilon}u_0.
$$
But we have seen that $\mathfrak{R}_{+,\epsilon}u_0\in\big[\mathcal{H}^\infty_{A_\epsilon}(\X)\big]^N$ and thus Proposition \ref{P.5.4} b) implies that $\mathfrak{R}_{+,\epsilon}u\in\mathfrak{V}_0^N$. The fact that $\mathfrak{R}_{+,\epsilon}\in\mathbb{B}\big(\mathfrak{L}_0;\mathfrak{V}_0^N\big)$ uniformly with respect to $\epsilon\in[-\epsilon_0,\epsilon_0]$ follows now from this result and the following three remarks:
\begin{enumerate}
\item The above mentioned continuity property of $\mathfrak{R}_{+,\epsilon}$ that follows from Proposition \ref{P.A.26}.
\item The uniform continuity of the application $\mathfrak{L}_0\ni u\mapsto u_0\in\mathcal{H}^\infty_{A_\epsilon}(\X)\otimes\mathcal{K}_0$ with respect to $\epsilon\in[-\epsilon_0,\epsilon_0]$, that follows from Proposition \ref{P.5.9}.
\item The uniform continuity of the application $\mathcal{H}^\infty_{A_\epsilon}(\X)\ni\mathfrak{R}_{+,\epsilon}u_0\mapsto\mathfrak{R}_{+,\epsilon}u\in\mathfrak{V}_0^N$ with respect to $\epsilon\in[-\epsilon_0,\epsilon_0]$, that follows from Proposition \ref{P.5.4} b).
\end{enumerate}
\end{proof}

\begin{lemma}\label{L.6.7}
$\mathfrak{E}_{-}(\epsilon,\lambda)\in\mathbb{B}\big(\mathfrak{L}_0;\mathfrak{V}_0^N\big)$ uniformly with respect to $(\epsilon,\lambda)\in[-\epsilon_0,\epsilon_0]\times I$.
\end{lemma}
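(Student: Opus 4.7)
The plan is to proceed in essentially the same manner as the proof of Lemma \ref{L.6.6}, replacing $\mathfrak{R}_{+,\epsilon}$ by $\mathfrak{E}_-(\epsilon,\lambda)$, since the two operators share the structural features that make the previous argument go through. Concretely, from Theorem \ref{T.4.3} we know that $\mathfrak{E}_-(\epsilon,\lambda)=\mathfrak{Op}^{A_\epsilon}(E^-_{\epsilon,\lambda})$ for a symbol belonging to $S^0_0\big(\X;\mathbb{B}(\mathcal{K}_0;\mathbb{C}^N)\big)$ uniformly in $(\epsilon,\lambda)\in[-\epsilon_0,\epsilon_0]\times I$ (this is the $E^0_-$ component of $\mathcal{E}_0$, corrected by a term going to zero in the same symbol class). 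Applying Proposition \ref{P.A.26} we deduce that for every $s\in\mathbb{R}$,
$$
\mathfrak{E}_-(\epsilon,\lambda)\ \in\ \mathbb{B}\big(\mathcal{H}^s_{A_\epsilon}(\X)\otimes\mathcal{K}_0\,;\,[\mathcal{H}^s_{A_\epsilon}(\X)]^N\big)
$$
uniformly with respect to $(\epsilon,\lambda)\in[-\epsilon_0,\epsilon_0]\times I$.

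Next, I would fix $u\in\mathfrak{L}_0$. By Proposition \ref{P.5.9} there exists a unique $u_0\in\mathcal{H}^\infty_{A_\epsilon}(\X)\otimes\mathcal{K}_0$ such that $u=\sum_{\gamma^*\in\Gamma^*}\Upsilon_{\gamma^*}u_0$, with the correspondence $u\mapsto u_0$ continuous uniformly in $\epsilon\in[-\epsilon_0,\epsilon_0]$; moreover by Lemma \ref{L.5.8} the convergence of this series actually takes place in $\mathscr{S}^\prime\big(\X;\mathcal{K}_0\big)$. The third commutation identity in \eqref{6.13}, namely $\mathfrak{E}_-(\epsilon,\lambda)\Upsilon_{\gamma^*}=\sigma_{\gamma^*}\mathfrak{E}_-(\epsilon,\lambda)$, together with the continuity of $\mathfrak{E}_-(\epsilon,\lambda)$ as an operator from $\mathscr{S}^\prime(\X;\mathcal{K}_0)$ to $\mathscr{S}^\prime(\X;\mathbb{C}^N)$, then gives
$$
\mathfrak{E}_-(\epsilon,\lambda)u\ =\ \sum_{\gamma^*\in\Gamma^*}\sigma_{\gamma^*}\big(\mathfrak{E}_-(\epsilon,\lambda)u_0\big)\qquad\text{in }\mathscr{S}^\prime(\X;\mathbb{C}^N).
$$

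Since the first step guarantees $\mathfrak{E}_-(\epsilon,\lambda)u_0\in[\mathcal{H}^\infty_{A_\epsilon}(\X)]^N$, Proposition \ref{P.5.4}(b) applied componentwise implies that the above series converges in $\mathscr{S}^\prime(\X;\mathbb{C}^N)$ to an element of $\mathfrak{V}_0^N$, and that the map $\mathcal{H}^\infty_{A_\epsilon}(\X)\ni\mathfrak{E}_-(\epsilon,\lambda)u_0\mapsto\mathfrak{E}_-(\epsilon,\lambda)u\in\mathfrak{V}_0^N$ is continuous uniformly in $\epsilon\in[-\epsilon_0,\epsilon_0]$. Composing the three uniform continuities (Proposition \ref{P.5.9}, the Sobolev-scale boundedness of $\mathfrak{E}_-(\epsilon,\lambda)$, and Proposition \ref{P.5.4}(b)) we conclude that $\mathfrak{E}_-(\epsilon,\lambda)\in\mathbb{B}\big(\mathfrak{L}_0;\mathfrak{V}_0^N\big)$ uniformly in $(\epsilon,\lambda)\in[-\epsilon_0,\epsilon_0]\times I$.

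There is no real obstacle here: the argument is a verbatim transposition of the proof of Lemma \ref{L.6.6}, with the two necessary ingredients (the appropriate symbol class and the correct commutation with $\Upsilon_{\gamma^*}/\sigma_{\gamma^*}$) already supplied by Theorem \ref{T.4.3} and Lemma \ref{L.6.4} respectively. The only place where one must be a bit careful is to justify that the commutation identity, initially established on $\mathscr{S}(\X\times\mathbb{T})\times\mathscr{S}(\X;\mathbb{C}^N)$, may be applied to the individual terms $\Upsilon_{\gamma^*}u_0$ of our series; this is immediate from the density of $\mathscr{S}(\X\times\mathbb{T})$ in $\mathcal{H}^\infty_{A_\epsilon}(\X)\otimes\mathcal{K}_0$ and the continuity of $\mathfrak{E}_-(\epsilon,\lambda)$ and $\Upsilon_{\gamma^*}$, $\sigma_{\gamma^*}$ on the relevant spaces of tempered distributions.
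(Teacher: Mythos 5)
Your proof is correct and follows essentially the same route as the paper: the paper explicitly presents this lemma as a repetition of the argument of Lemma \ref{L.6.6}, recalling that $E^-_{\epsilon,\lambda}\in S^0_0\big(\X;\mathbb{B}(\mathcal{K}_0;\mathbb{C}^N)\big)$, using Proposition \ref{P.5.9} for the decomposition $u=\sum_{\gamma^*}\Upsilon_{\gamma^*}u_0$, invoking the third identity in \eqref{6.13}, and concluding via Proposition \ref{P.5.4}(b). Your extra remark about passing the commutation identity from test functions to the series terms is a reasonable precision that the paper leaves implicit.
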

\begin{proof}
Let us recall that $\mathfrak{E}_-(\epsilon,\lambda)=\mathfrak{Op}^{A_\epsilon}\big(E^-_{\epsilon,\lambda}\big)$ with $E^-_{\epsilon,\lambda}\in S^0_0\big(\X;\mathbb{B}(\mathcal{K}_0;\mathbb{C}^N)\big)$ uniformly with respect to $(\epsilon,\lambda)\in[-\epsilon_0,\epsilon_0]\times I$. We continue as in the above proof of Lemma \ref{L.6.6}. Considering $\mathfrak{E}_-(\epsilon,\lambda)$ as a magnetic pseudodifferential operator it can be extended to an operator $\mathfrak{E}_-(\epsilon,\lambda)\in\mathfrak{B}\big(\mathscr{S}^\prime(\X;\mathcal{K}_0);\mathscr{S}^\prime(\X;\mathbb{C}^N)\big)$; moreover we notice that $\mathfrak{L}_0$ is continuously embedded into $\mathscr{S}^\prime(\X;\mathcal{K}_0)$ and we conclude that we have indeed that $\mathfrak{E}_-(\epsilon,\lambda)\in\mathbb{B}\big(\mathfrak{L}_0;\mathscr{S}^\prime(\X;\mathbb{C}^N)\big)$. From Proposition \ref{P.5.9} we conclude that for any $u\in\mathfrak{L}_0$ there exists some $u_0\in\mathcal{H}^\infty_{A_\epsilon}(\X)\otimes\mathcal{K}_0$ such that $u=\underset{\gamma^*\in\Gamma^*}{\sum}\Upsilon_{\gamma^*}u_0$, the series converging in $\mathscr{S}^\prime(\X;\mathcal{K}_0)$. Using this identity and the third equality in \eqref{6.13} we obtain that
$$
\mathfrak{E}_-(\epsilon,\lambda)u\ =\ \underset{\gamma^*}{\sum}\mathfrak{E}_-(\epsilon,\lambda)\Upsilon_{\gamma^*}u_0\ =\ \underset{\gamma^*}{\sum}\sigma_{\gamma^*}\mathfrak{E}_-(\epsilon,\lambda)u_0.
$$
Taking into account that for any $s\in\mathbb{R}$ we have that $\mathfrak{E}_-(\epsilon,\lambda)\in\mathbb{B}\big(\mathcal{H}^s_{A_\epsilon}(\X)\otimes\mathcal{K}_0;\big[\mathcal{H}^s_{A_\epsilon}(\X)\big]^N\big)$ uniformly with respect to $(\epsilon,\lambda)\in[-\epsilon_0,\epsilon_0]\times I$, the proof of the Lemma ends similarly to the proof of Lemma \ref{L.6.6} above.
\end{proof}
\begin{lemma}\label{L.6.8}
$\mathfrak{E}_{+}(\epsilon,\lambda)\in\mathbb{B}\big(\mathfrak{V}_0^N;\mathfrak{L}_m(\epsilon)\big)$ uniformly with respect to $(\epsilon,\lambda)\in[-\epsilon_0,\epsilon_0]\times I$.
\end{lemma}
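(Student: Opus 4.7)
My plan is to imitate the strategy of Lemmas \ref{L.6.6} and \ref{L.6.7}, with the additional twist that the target space $\mathfrak{L}_m(\epsilon)$ requires $m$-th order magnetic Sobolev regularity. Rather than tracking this regularity directly, I would exploit the characterization $w\in\mathfrak{L}_m(\epsilon)\Leftrightarrow\widetilde{Q}_{m,\epsilon}w\in\mathfrak{L}_0$ furnished by Lemma \ref{L.5.13}(1), thus reducing the statement to proving that the composite operator $\mathfrak{T}_{\epsilon,\lambda}:=\widetilde{Q}_{m,\epsilon}\circ\mathfrak{E}_+(\epsilon,\lambda)$ belongs to $\mathbb{B}\big(\mathfrak{V}_0^N;\mathfrak{L}_0\big)$ uniformly in $(\epsilon,\lambda)\in[-\epsilon_0,\epsilon_0]\times I$.

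First I would identify $\mathfrak{T}_{\epsilon,\lambda}$ as a magnetic pseudodifferential operator $\mathfrak{Op}^{A_\epsilon}(\sigma_{\epsilon,\lambda})$ with symbol $\sigma_{\epsilon,\lambda}:=\mathfrak{q}_{m,\epsilon}\,\sharp^{A_\epsilon}\,E^+_{\epsilon,\lambda}$, using the Composition Theorem \ref{T.A.23}. Since $\mathfrak{q}_{m,\epsilon}\in S^0_{0,\epsilon}\big(\X;\mathbb{B}(\mathcal{K}_{m,\xi};\mathcal{K}_0)\big)$ and $E^+_{\epsilon,\lambda}\in S^0_0\big(\X;\mathbb{B}(\mathbb{C}^N;\mathcal{K}_{m,\xi})\big)$ uniformly in $(\epsilon,\lambda)$, the composition yields $\sigma_{\epsilon,\lambda}\in S^0_0\big(\X;\mathbb{B}(\mathbb{C}^N;\mathcal{K}_0)\big)$ uniformly. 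Next I would verify the commutation $\mathfrak{T}_{\epsilon,\lambda}\,\sigma_{\gamma^*}=\Upsilon_{\gamma^*}\,\mathfrak{T}_{\epsilon,\lambda}$ for any $\gamma^*\in\Gamma^*$. The relation $\mathfrak{E}_+(\epsilon,\lambda)\,\sigma_{\gamma^*}=\Upsilon_{\gamma^*}\,\mathfrak{E}_+(\epsilon,\lambda)$ is given by \eqref{6.13}; the identity $\widetilde{Q}_{m,\epsilon}\,\Upsilon_{\gamma^*}=\Upsilon_{\gamma^*}\,\widetilde{Q}_{m,\epsilon}$ follows by direct computation using $\boldsymbol{\psi}\circ\boldsymbol{\psi}=\id$: one checks that $\boldsymbol{\psi}^*\,\Upsilon_{\gamma^*}\,\boldsymbol{\psi}^*=\id\otimes\sigma_{\gamma^*}$, so that in the representation $\widetilde{Q}_{m,\epsilon}=\boldsymbol{\psi}^*(Q_{m,\epsilon}\otimes\id)\boldsymbol{\psi}^*$ the two conjugated operators act on disjoint tensor factors and manifestly commute.

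With these two ingredients in place, the remainder is a direct transcription of the argument of Lemma \ref{L.6.7}. For $u\in\mathfrak{V}_0^N$, Proposition \ref{P.5.4}(a) furnishes $u_0\in[\mathcal{H}^\infty_{A_\epsilon}(\X)]^N$ such that $u=\sum_{\gamma^*\in\Gamma^*}\sigma_{\gamma^*}u_0$, with continuous dependence uniform in $\epsilon$. The commutation step then gives $\mathfrak{T}_{\epsilon,\lambda}u=\sum_{\gamma^*}\Upsilon_{\gamma^*}\big(\mathfrak{T}_{\epsilon,\lambda}u_0\big)$, and Proposition \ref{P.A.26} applied to the symbol class of $\sigma_{\epsilon,\lambda}$ yields $\mathfrak{T}_{\epsilon,\lambda}u_0\in\mathcal{H}^\infty_{A_\epsilon}(\X)\otimes L^2(\mathbb{T})$ with uniform continuity in $(\epsilon,\lambda)$. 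Proposition \ref{P.5.11} then delivers convergence of the sum in $\mathfrak{L}_0$ with a uniform bound, and Lemma \ref{L.5.13}(1) converts this into the desired bound in $\mathfrak{L}_m(\epsilon)$.

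The main obstacle I anticipate is the commutation $\widetilde{Q}_{m,\epsilon}\,\Upsilon_{\gamma^*}=\Upsilon_{\gamma^*}\,\widetilde{Q}_{m,\epsilon}$ together with the precise symbol-class membership of the Moyal product $\mathfrak{q}_{m,\epsilon}\,\sharp^{A_\epsilon}\,E^+_{\epsilon,\lambda}$; these are the only genuinely new steps compared to Lemma \ref{L.6.7}, after which the rest of the proof is essentially mechanical, since each of the three intermediate continuity estimates (Proposition \ref{P.5.4}(a), Proposition \ref{P.A.26}, Proposition \ref{P.5.11}) is already known to be uniform in $\epsilon\in[-\epsilon_0,\epsilon_0]$.
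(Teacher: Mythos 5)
Your proposal is correct and follows essentially the same route as the paper's proof: reduce via Lemma \ref{L.5.13} to showing $\widetilde{Q}_{m,\epsilon}\circ\mathfrak{E}_+(\epsilon,\lambda)\in\mathbb{B}(\mathfrak{V}_0^N;\mathfrak{L}_0)$, invoke the Composition Theorem \ref{T.A.23} for the symbol class, the commutation relations from \eqref{6.13}, and Propositions \ref{P.5.4}, \ref{P.A.26}, \ref{P.5.11} exactly as you describe. The only cosmetic divergence is in the commutation $\widetilde{Q}_{m,\epsilon}\Upsilon_{\gamma^*}=\Upsilon_{\gamma^*}\widetilde{Q}_{m,\epsilon}$: the paper simply cites \eqref{2.22} (noting that $\Upsilon_{\gamma^*}=T_{\gamma^*}$ since $\sigma_{\gamma^*}\otimes\sigma_{-\gamma^*}=e^{i\langle\gamma^*,x-y\rangle}$), whereas you give a direct conjugation argument through $\boldsymbol{\psi}^*$ — both are valid and amount to the same observation.
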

\begin{proof}
Let us recall that $\mathfrak{E}_{+}(\epsilon,\lambda)=\mathfrak{Op}^{A_\epsilon}\big(E^+_{\epsilon,\lambda}\big)$ with $E^+_{\epsilon,\lambda}\in S^0_0\big(\X;\mathbb{B}(\mathbb{C}^N;\mathcal{K}_{m,\xi})\big)$ uniformly with respect to $(\epsilon,\lambda)\in[-\epsilon_0,\epsilon_0]\times I$. We conclude that $\mathfrak{E}_+(\epsilon,\lambda)\in\mathbb{B}\big(\mathscr{S}^\prime(\X;\mathbb{C}^N);\mathscr{S}^\prime(\X;\mathcal{K}_{m,0})\big)$. Noticing that by Lemma \ref{L.5.2} the space $\mathfrak{V}_0^N$ embeds continuously into $\mathscr{S}^\prime(\X;\mathbb{C}^N)$ we conclude that $\mathfrak{E}_+(\epsilon,\lambda)\in\mathbb{B}\big(\mathfrak{V}_0^N;\mathscr{S}^\prime(\X;\mathcal{K}_{m,0})\big)$ uniformly with respect to $(\epsilon,\lambda)\in[-\epsilon_0,\epsilon_0]\times I$.

Suppose now fixed some $u\in\mathfrak{V}_0^N$; we know from Proposition \ref{P.5.4} that there exists an element $u_0\in\big[\mathcal{H}^\infty_{A_\epsilon}(\X)\big]^N$ such that $u=\underset{\gamma^*\in\Gamma^*}{\sum}\sigma_{\gamma^*}u_0$ converging as tempered distribution and such that the application $\mathfrak{V}_0^N\ni u\mapsto u_0\in\big[\mathcal{H}^\infty_{A_\epsilon}(\X)\big]^N$ is continuous uniformly with respect to $\epsilon\in[-\epsilon_0,\epsilon_0]$. Using this result and the second equation in \eqref{6.13} we obtain that
$$
\mathfrak{E}_+(\epsilon,\lambda)u\ =\ \underset{\gamma^*\in\Gamma^*}{\sum}\mathfrak{E}_+(\epsilon,\lambda)\sigma_{\gamma^*}u_0\ =\ \underset{\gamma^*\in\Gamma^*}{\sum}\Upsilon_{\gamma^*}\big(\mathfrak{E}_+(\epsilon,\lambda)u_0\big).
$$
Using now Lemma \ref{L.5.13}, in order to prove that $\mathfrak{E}_+(\epsilon,\lambda)u\in\mathfrak{L}_m(\epsilon)$ all we have to prove is that $\widetilde{Q}_{m,\epsilon}\mathfrak{E}_+(\epsilon,\lambda)u\in\mathfrak{L}_0$. In order to do that we shall need two of the properties of the operator $\widetilde{Q}_{m,\epsilon}$ that we have proved in the previous sections. 

First we know from \eqref{2.22} that
$$
\widetilde{Q}_{m,\epsilon}\Upsilon_{\gamma^*}\ =\ \Upsilon_{\gamma^*}\widetilde{Q}_{m,\epsilon},\qquad\forall\gamma^*\in\Gamma^*.
$$

Secondly, at the end of the proof of Lemma \ref{L.4.2} we have shown that $\widetilde{Q}_{m,\epsilon}=\mathfrak{Op}^{A_\epsilon}\big(\widetilde{\mathfrak{q}}_{m,\epsilon}\big)$ with $\widetilde{\mathfrak{q}}_{m,\epsilon}\in S^0_0(\X;\mathbb{B}(\mathcal{K}_{m,\xi};\mathcal{K}_0)\big)$ uniformly with respect to $\epsilon\in[-\epsilon_0,\epsilon_0]$. If we use The Composition Theorem \ref{T.A.23} we notice that $\widetilde{\mathfrak{q}}_{m,\epsilon}\sharp^{B_\epsilon}E^+_{\epsilon,\lambda}\in S^0_0\big(\X;\mathbb{B}(\mathbb{C}^N;\mathcal{K}_0)\big)$ uniformly with respect to $(\epsilon,\lambda)\in[-\epsilon_0,\epsilon_0]\times I$. Applying then Proposition \ref{P.A.26} gives that $\widetilde{Q}_{m,\epsilon}\mathfrak{E}_+(\epsilon,\lambda)\in\mathbb{B}\big(\big[\mathcal{H}^\infty_{A_\epsilon}(\X)\big]^N;\mathcal{H}^\infty_{A_\epsilon}(\X)\otimes\mathcal{K}_0\big)$ uniformly with respect to $(\epsilon,\lambda)\in[-\epsilon_0,\epsilon_0]\times I$. We conclude that
$$
\widetilde{Q}_{m,\epsilon}\mathfrak{E}_+(\epsilon,\lambda)u\ =\ \underset{\gamma^*\in\Gamma^*}{\sum}\Upsilon_{\gamma^*}\widetilde{Q}_{m,\epsilon}\big(\mathfrak{E}_+(\epsilon,\lambda)u_0\big),
$$
and this last element belongs to $\mathfrak{L}_0$ as implied by Proposition \ref{P.5.11}. The conclusion of the Lemma follows now from the following remarks:
\begin{enumerate}
\item The application $\mathfrak{V}_0^N\ni u\mapsto u_0\in\big[\mathcal{H}^\infty_{A_\epsilon}(\X)\big]^N$ is continuous uniformly with respect to $(\epsilon,\lambda)\in[-\epsilon_0,\epsilon_0]\times I$, as proved in Proposition \ref{P.5.4} a).
\item The application $\mathcal{H}^\infty_{A_\epsilon}(\X)\otimes\mathcal{K}_0\ni \widetilde{Q}_{m,\epsilon}\mathfrak{E}_+(\epsilon,\lambda)u_0\mapsto\widetilde{Q}_{m,\epsilon}\mathfrak{E}_+(\epsilon,\lambda)u\in\mathfrak{L}_0$ is continuous uniformly with respect to $(\epsilon,\lambda)\in[-\epsilon_0,\epsilon_0]\times I$, as proved in Proposition \ref{P.5.11}.
\end{enumerate}
\end{proof}

\begin{lemma}\label{L.6.9}
$\mathfrak{R}_{-,\epsilon}\in\mathbb{B}\big(\mathfrak{V}_0^N;\mathfrak{L}_m(\epsilon)\big)$uniformly with respect to $\epsilon\in[-\epsilon_0,\epsilon_0]$.
\end{lemma}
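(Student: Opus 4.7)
The strategy is to mirror the argument used for Lemma \ref{L.6.8}, because $\mathfrak{R}_{-,\epsilon}$ satisfies a commutation relation with $\sigma_{\gamma^*}$ and $\Upsilon_{\gamma^*}$ of exactly the same form as $\mathfrak{E}_+(\epsilon,\lambda)$ (the first equality in \eqref{6.12}), and its operator-valued symbol $R_-$ enjoys a symbol-regularity that matches what is needed to land in the target space $\mathfrak{L}_m(\epsilon)$.

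The first observation I would make is that Lemma \ref{L.3.7}(d) yields not merely $R_-\in S^0_0(\X;\mathbb{B}(\mathbb{C}^N;\mathcal{K}_0))$, but in fact $R_-\in S^0_0(\X;\mathbb{B}(\mathbb{C}^N;\mathcal{K}_{m,\xi}))$, since \eqref{3.26} is uniform in $\xi$ for every Sobolev exponent, including $s=m$. Consequently $\mathfrak{R}_{-,\epsilon}$ extends as a continuous map $\mathscr{S}^\prime(\X;\mathbb{C}^N)\to\mathscr{S}^\prime(\X;\mathcal{K}_{m,0})$, and by Lemma \ref{L.5.2} the embedding $\mathfrak{V}_0^N\hookrightarrow\mathscr{S}^\prime(\X;\mathbb{C}^N)$ is continuous uniformly in $\epsilon$, so $\mathfrak{R}_{-,\epsilon}u$ is well defined for $u\in\mathfrak{V}_0^N$.

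Next, for $u\in\mathfrak{V}_0^N$ I would apply Proposition \ref{P.5.4}(a) to write $u=\sum_{\gamma^*\in\Gamma^*}\sigma_{\gamma^*}u_0$ with $u_0\in[\mathcal{H}^\infty_{A_\epsilon}(\X)]^N$, the map $u\mapsto u_0$ being continuous uniformly in $\epsilon$. Using the first relation in \eqref{6.12} term by term gives
\begin{equation*}
\mathfrak{R}_{-,\epsilon}u\;=\;\sum_{\gamma^*\in\Gamma^*}\Upsilon_{\gamma^*}\,\mathfrak{R}_{-,\epsilon}u_0.
\end{equation*}
In view of the characterization in Lemma \ref{L.5.13}, showing that $\mathfrak{R}_{-,\epsilon}u$ belongs to $\mathfrak{L}_m(\epsilon)$ reduces to proving that $\widetilde{Q}_{m,\epsilon}\mathfrak{R}_{-,\epsilon}u$ belongs to $\mathfrak{L}_0$. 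Commuting $\widetilde{Q}_{m,\epsilon}$ past $\Upsilon_{\gamma^*}$ by means of \eqref{2.22}, one gets
\begin{equation*}
\widetilde{Q}_{m,\epsilon}\mathfrak{R}_{-,\epsilon}u\;=\;\sum_{\gamma^*\in\Gamma^*}\Upsilon_{\gamma^*}\bigl(\widetilde{Q}_{m,\epsilon}\mathfrak{R}_{-,\epsilon}u_0\bigr).
\end{equation*}

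To conclude, I would invoke the Composition Theorem \ref{T.A.23}: since $\widetilde{\mathfrak{q}}_{m,\epsilon}\in S^0_0(\X;\mathbb{B}(\mathcal{K}_{m,\xi};\mathcal{K}_0))$ (recalled at the end of the proof of Lemma \ref{L.4.2}) and $R_-\in S^0_0(\X;\mathbb{B}(\mathbb{C}^N;\mathcal{K}_{m,\xi}))$, both uniformly in $\epsilon$, their magnetic composition lies in $S^0_0(\X;\mathbb{B}(\mathbb{C}^N;\mathcal{K}_0))$ uniformly. Proposition \ref{P.A.26} then yields $\widetilde{Q}_{m,\epsilon}\mathfrak{R}_{-,\epsilon}\in\mathbb{B}\bigl([\mathcal{H}^\infty_{A_\epsilon}(\X)]^N;\,\mathcal{H}^\infty_{A_\epsilon}(\X)\otimes L^2(\mathbb{T})\bigr)$, uniformly in $\epsilon$. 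Applying Proposition \ref{P.5.11} to the element $\widetilde{Q}_{m,\epsilon}\mathfrak{R}_{-,\epsilon}u_0$ identifies the displayed sum as an element of $\mathfrak{L}_0$, and chaining the three uniform continuity statements (Proposition \ref{P.5.4}(a), the composition estimate, Proposition \ref{P.5.11}) delivers the claimed uniform boundedness. The only genuinely non-routine verification is the improved symbol regularity of $R_-$, but that is immediate from \eqref{3.26}; once it is observed, the remainder of the argument is a direct transcription of the scheme used in Lemma \ref{L.6.8}.
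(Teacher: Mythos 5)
Your proposal is correct and follows essentially the same route as the paper. The paper's own proof of this lemma is terse — it simply records the improved symbol regularity $R_-\in S^0_0\big(\X;\mathbb{B}(\mathbb{C}^N;\mathcal{K}_{m,\xi})\big)$ (implied by the definition of $R_-$ and \eqref{3.26}), invokes the first equality in \eqref{6.12}, and states that the arguments of Lemma \ref{L.6.8} may be repeated; you have merely unfolded that repetition explicitly, which is exactly what the paper intends.
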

\begin{proof}
Let us recall that $\mathfrak{R}_{-,\epsilon}=\mathfrak{Op}^{A_\epsilon}\big(R_-\big)$ with $R_-\in S^0_0\big(\X;\mathbb{B}(\mathbb{C}^N;\mathcal{K}_{m,\xi})\big)$ as implied by \eqref{3.22} and \eqref{3.26}. Using now the first equality in \eqref{6.12} we notice that $\mathfrak{R}_{-\epsilon}\sigma_{\gamma^*}=\Upsilon_{\gamma^*}\mathfrak{R}_{-,\epsilon},\ \forall\gamma^*\in\Gamma^*$ and the arguments from the proof of Lemma \ref{L.6.8} may be repeated and one obtains the desired conclusion of the Lemma.
\end{proof}

\begin{lemma}\label{L.6.10}
$\mathfrak{E}(\epsilon,\lambda)\in\mathbb{B}\big(\mathfrak{L}_0;\mathfrak{L}_m(\epsilon)\big)$ uniformly with respect to $(\epsilon,\lambda)\in[-\epsilon_0,\epsilon_0]\times I$.
\end{lemma}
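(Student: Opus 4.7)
The plan is to follow the same pattern used in the proofs of Lemmas \ref{L.6.7}, \ref{L.6.8} and \ref{L.6.9}, combining the Fourier-series characterization of $\mathfrak{L}_0$ from Propositions \ref{P.5.9} and \ref{P.5.11} with the commutation relation $\mathfrak{E}(\epsilon,\lambda)\,\Upsilon_{\gamma^*}=\Upsilon_{\gamma^*}\,\mathfrak{E}(\epsilon,\lambda)$ given by the first identity in \eqref{6.13}. Recall that $\mathfrak{E}(\epsilon,\lambda)=\mathfrak{Op}^{A_\epsilon}(E_{\epsilon,\lambda})$ with $E_{\epsilon,\lambda}\in S^0_0\big(\X;\mathbb{B}(\mathcal{K}_0;\mathcal{K}_{m,\xi})\big)$ uniformly in $(\epsilon,\lambda)\in[-\epsilon_0,\epsilon_0]\times I$ (Theorem \ref{T.4.3}), so that it extends to a continuous map $\mathscr{S}^\prime(\X;\mathcal{K}_0)\to\mathscr{S}^\prime(\X;\mathcal{K}_{m,0})$. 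Since by Lemma \ref{L.5.7} the space $\mathfrak{L}_0$ embeds continuously into $\mathcal{H}^{-s}_{A_\epsilon}(\X)\otimes L^2(\mathbb{T})$ for $s>d$, and hence into $\mathscr{S}^\prime(\X;\mathcal{K}_0)$, the operator $\mathfrak{E}(\epsilon,\lambda)u$ is a well-defined tempered distribution for every $u\in\mathfrak{L}_0$.

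The first main step is to write any $u\in\mathfrak{L}_0$ in the form $u=\sum_{\gamma^*\in\Gamma^*}\Upsilon_{\gamma^*}u_0$ with $u_0\in\mathcal{H}^\infty_{A_\epsilon}(\X)\otimes\mathcal{K}_0$, with convergence in $\mathscr{S}^\prime(\X;\mathcal{K}_0)$ (Proposition \ref{P.5.9}, together with the remark following Lemma \ref{L.5.8}) and with the map $u\mapsto u_0$ continuous uniformly in $\epsilon$. Applying the commutation identity from \eqref{6.13} termwise we obtain
\begin{equation*}
\mathfrak{E}(\epsilon,\lambda)u\ =\ \sum_{\gamma^*\in\Gamma^*}\Upsilon_{\gamma^*}\bigl(\mathfrak{E}(\epsilon,\lambda)u_0\bigr),
\end{equation*}
with convergence in $\mathscr{S}^\prime(\X;\mathcal{K}_{m,0})$. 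In view of the characterization \eqref{5.26} of $\mathfrak{L}_m(\epsilon)$, it is enough to check that $\widetilde{Q}_{m,\epsilon}\mathfrak{E}(\epsilon,\lambda)u\in\mathfrak{L}_0$ and to control its norm.

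For this second step, we invoke \eqref{2.22} (i.e.\ $\widetilde{Q}_{m,\epsilon}\Upsilon_{\gamma^*}=\Upsilon_{\gamma^*}\widetilde{Q}_{m,\epsilon}$) to get
\begin{equation*}
\widetilde{Q}_{m,\epsilon}\,\mathfrak{E}(\epsilon,\lambda)u\ =\ \sum_{\gamma^*\in\Gamma^*}\Upsilon_{\gamma^*}\bigl(\widetilde{Q}_{m,\epsilon}\mathfrak{E}(\epsilon,\lambda)u_0\bigr).
\end{equation*}
As in the proof of Lemma \ref{L.6.8} we recall that $\widetilde{Q}_{m,\epsilon}=\mathfrak{Op}^{A_\epsilon}(\widetilde{\mathfrak{q}}_{m,\epsilon})$ with $\widetilde{\mathfrak{q}}_{m,\epsilon}\in S^0_0\big(\X;\mathbb{B}(\mathcal{K}_{m,\xi};\mathcal{K}_0)\big)$ uniformly in $\epsilon$. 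The Composition Theorem \ref{T.A.23} then yields that $\widetilde{\mathfrak{q}}_{m,\epsilon}\sharp^{B_\epsilon}E_{\epsilon,\lambda}\in S^0_0\big(\X;\mathbb{B}(\mathcal{K}_0;\mathcal{K}_0)\big)$ uniformly in $(\epsilon,\lambda)$, and Proposition \ref{P.A.26} gives that $\widetilde{Q}_{m,\epsilon}\mathfrak{E}(\epsilon,\lambda)\in\mathbb{B}\bigl(\mathcal{H}^s_{A_\epsilon}(\X)\otimes\mathcal{K}_0;\mathcal{H}^s_{A_\epsilon}(\X)\otimes\mathcal{K}_0\bigr)$ for every $s\in\mathbb{R}$, uniformly in $(\epsilon,\lambda)$. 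Consequently $\widetilde{Q}_{m,\epsilon}\mathfrak{E}(\epsilon,\lambda)u_0\in\mathcal{H}^\infty_{A_\epsilon}(\X)\otimes\mathcal{K}_0$ with uniform continuous dependence on $u_0$, and Proposition \ref{P.5.11} produces an element of $\mathfrak{L}_0$ out of the series $\sum_{\gamma^*}\Upsilon_{\gamma^*}(\widetilde{Q}_{m,\epsilon}\mathfrak{E}(\epsilon,\lambda)u_0)$, again with uniform continuity.

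Chaining the three continuous maps $u\mapsto u_0\mapsto\widetilde{Q}_{m,\epsilon}\mathfrak{E}(\epsilon,\lambda)u_0\mapsto\widetilde{Q}_{m,\epsilon}\mathfrak{E}(\epsilon,\lambda)u$ and using the norm on $\mathfrak{L}_m(\epsilon)$ provided by \eqref{5.27} then gives the uniform bound $\|\mathfrak{E}(\epsilon,\lambda)u\|_{\mathfrak{L}_m(\epsilon)}\leq C\|u\|_{\mathfrak{L}_0}$ with $C$ independent of $(\epsilon,\lambda)\in[-\epsilon_0,\epsilon_0]\times I$. There is no genuine obstacle: the argument is purely structural, and the only delicate point (which was already handled once in Lemma \ref{L.6.8}) is verifying that the composed operator $\widetilde{Q}_{m,\epsilon}\mathfrak{E}(\epsilon,\lambda)$ indeed produces a symbol in the right class, so that Proposition \ref{P.A.26} is applicable and the passage to $\mathcal{H}^\infty_{A_\epsilon}(\X)\otimes\mathcal{K}_0$ preserves the uniformity in $\epsilon$.
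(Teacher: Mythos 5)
Your proposal is correct and follows essentially the same route as the paper's proof: the paper likewise extends $\mathfrak{E}(\epsilon,\lambda)$ to tempered distributions via its symbol class $S^0_0\big(\X;\mathbb{B}(\mathcal{K}_0;\mathcal{K}_{m,\xi})\big)$, decomposes $u\in\mathfrak{L}_0$ through Proposition 5.9 and the first commutation identity in (6.13), then invokes the Composition Theorem to get $\widetilde{\mathfrak{q}}_{m,\epsilon}\sharp^{B_\epsilon}E_{\epsilon,\lambda}\in S^0_0\big(\X;\mathbb{B}(\mathcal{K}_0)\big)$, and finishes by appealing verbatim to the end of the proof of Lemma 6.8. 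You have simply unfolded that last cross-reference explicitly (commuting $\widetilde{Q}_{m,\epsilon}$ with $\Upsilon_{\gamma^*}$ via (2.22), applying Proposition A.26, and closing with Proposition 5.11 and the norm characterization (5.27)), which is exactly what the paper intends.
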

\begin{proof}
Let us recall that $\mathfrak{E}(\epsilon,\lambda)=\mathfrak{Op}^{A_\epsilon}\big(E_{\epsilon,\lambda}\big)$ with $E_{\epsilon,\lambda}\in S^0_0\big(\X;\mathbb{B}(\mathcal{K}_0;\mathcal{K}_{m,\xi})\big)$ uniformly with respect to $(\epsilon,\lambda)\in[-\epsilon_0,\epsilon_0]\times I$. As magnetic pseudodifferential operator we can then extend it to $\mathfrak{E}(\epsilon,\lambda)\in\mathfrak{B}\big(\mathscr{S}^\prime(\X;\mathcal{K}_0);\mathscr{S}^\prime(\X;\mathcal{K}_{m,0})\big)$. Recalling that we have a continuous embedding $\mathfrak{L}_0\hookrightarrow\mathscr{S}^\prime(\X;\mathcal{K}_0)$ we deduce that $\mathfrak{E}(\epsilon,\lambda)\in\mathbb{B}\big(\mathfrak{L}_0;\mathscr{S}^\prime(\X;\mathcal{K}_{m,0})\big)$. We use now Proposition \ref{P.5.9} and the first equality in \eqref{6.13} and write that for any $u\in\mathfrak{L}_0$ there exists $u_0\in\mathcal{H}^\infty_{A_\epsilon}(\X)\otimes\mathcal{K}_0$ such that:
$$
\mathfrak{E}(\epsilon,\lambda)u\ =\ \underset{\gamma^*\in\Gamma^*}{\sum}\mathfrak{E}(\epsilon,\lambda)\Upsilon_{\gamma^*}u_0\ =\ \underset{\gamma^*\in\Gamma^*}{\sum}\Upsilon_{\gamma^*}
\big(\mathfrak{E}(\epsilon,\lambda)u_0\big),
$$
with convergence in the sense of tempered distributions on $\X^2$. From Proposition \ref{P.5.9} we deduce that the application $\mathfrak{L}_0\ni u\mapsto u_0\in\mathcal{H}^\infty_{A_\epsilon}(\X)\otimes\mathcal{K}_0$ is continuous uniformly with respect to $\epsilon\in[-\epsilon_0,\epsilon_0]$ and from The Composition Theorem \ref{T.A.23} we deduce that $\widetilde{\mathfrak{q}}_{m,\epsilon}\sharp^{B_\epsilon}E_{\epsilon,\lambda}\in S^0_0\big(\X;\mathbb{B}(\mathcal{K}_0)\big)$ and the proof of the Lemma ends exactly as the proof of Lemma \ref{L.6.8}.
\end{proof}

Now we shall prove a variant of Theorem \ref{T.4.3} in the frame of the Hilbert spaces $\mathfrak{V}_0$ and $\mathfrak{L}_0$.
\begin{theorem}\label{T.6.11}
We suppose verified the hypothesis of Theorem \ref{T.4.3} and use the same notations; then we have that
\begin{equation}\label{6.14}
\mathcal{P}_{\epsilon,\lambda}\,\in\,\mathbb{B}\big(\mathfrak{L}_m(\epsilon)\times\mathfrak{V}_0^N;\mathfrak{L}_0\times\mathfrak{V}_0^N\big),\quad\mathcal{E}_{\epsilon,\lambda}\,\in\,\mathbb{B}\big(\mathfrak{L}_0\times\mathfrak{V}_0^N;\mathfrak{L}_m(\epsilon)\times\mathfrak{V}_0^N\big),
\end{equation}
uniformly with respect to $(\epsilon,\lambda)\in[-\epsilon_0,\epsilon_0]\times I$. Moreover, for any $(\epsilon,\lambda)\in[-\epsilon_0,\epsilon_0]\times I$ the operator $\mathcal{P}_{\epsilon,\lambda}$ is invertible and its inverse is $\mathcal{E}_{\epsilon,\lambda}$.
\end{theorem}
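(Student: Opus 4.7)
The plan splits into two independent tasks: first, establish the two continuity assertions in \eqref{6.14} entry by entry from the block-matrix structure of $\mathcal{P}_{\epsilon,\lambda}$ and $\mathcal{E}_{\epsilon,\lambda}$, and second, promote the inverse relation already known on $\mathcal{K}(\X^2)\times L^2(\X;\mathbb{C}^N)$ by Theorem \ref{T.4.3} to the new Hilbert spaces $\mathfrak{L}_m(\epsilon)\times\mathfrak{V}_0^N$ and $\mathfrak{L}_0\times\mathfrak{V}_0^N$.

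For the continuity of $\mathcal{P}_{\epsilon,\lambda}$, I would use the block form \eqref{4.8} and handle each entry separately: the diagonal entry $\widetilde{P}_\epsilon-\lambda$ maps $\mathfrak{L}_m(\epsilon)\to\mathfrak{L}_0$ uniformly by Lemma \ref{L.6.1}, combined with the embedding $\mathfrak{L}_m(\epsilon)\hookrightarrow\mathfrak{L}_0$ from Lemma \ref{L.5.13}(3) for the $\lambda\cdot\id$ term; the off-diagonal entry $\mathfrak{R}_{-,\epsilon}$ maps $\mathfrak{V}_0^N$ into $\mathfrak{L}_m(\epsilon)$ (and therefore into $\mathfrak{L}_0$) by Lemma \ref{L.6.9}; the other off-diagonal entry $\mathfrak{R}_{+,\epsilon}$ maps $\mathfrak{L}_0\to\mathfrak{V}_0^N$ by Lemma \ref{L.6.6}; and the corner zero entry is trivial. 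Analogously, writing $\mathcal{E}_{\epsilon,\lambda}$ in the block form \eqref{4.11}, the four entries $\mathfrak{E}(\epsilon,\lambda)$, $\mathfrak{E}_+(\epsilon,\lambda)$, $\mathfrak{E}_-(\epsilon,\lambda)$, $\mathfrak{E}_{-+}(\epsilon,\lambda)$ are controlled respectively by Lemmas \ref{L.6.10}, \ref{L.6.8}, \ref{L.6.7}, \ref{L.6.5}. All bounds are uniform in $(\epsilon,\lambda)\in[-\epsilon_0,\epsilon_0]\times I$.

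For the invertibility, the guiding idea is that $\mathcal{P}_{\epsilon,\lambda}$ and $\mathcal{E}_{\epsilon,\lambda}$ are matrices of magnetic pseudodifferential operators and therefore extend canonically to continuous endomorphisms of $\mathscr{S}^\prime(\X^2)\times\mathscr{S}^\prime(\X;\mathbb{C}^N)$. Theorem \ref{T.4.3} yields $\mathcal{P}_{\epsilon,\lambda}\,\mathcal{E}_{\epsilon,\lambda}=\id$ and $\mathcal{E}_{\epsilon,\lambda}\,\mathcal{P}_{\epsilon,\lambda}=\id$ on $\mathcal{K}(\X^2)\times L^2(\X;\mathbb{C}^N)$, and a fortiori on the common dense subspace $\mathscr{S}(\X\times\mathbb{T})\times\mathscr{S}(\X;\mathbb{C}^N)$. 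By continuity of the two compositions as endomorphisms of $\mathscr{S}^\prime(\X^2)\times\mathscr{S}^\prime(\X;\mathbb{C}^N)$ and the weak-$*$ density of test functions in the dual, both identities extend to operator identities on the full space of tempered distributions. Since by Lemmas \ref{L.5.6.b} and \ref{L.5.2} the Hilbert space $\mathfrak{L}_0\times\mathfrak{V}_0^N$ embeds continuously into $\mathscr{S}^\prime(\X^2)\times\mathscr{S}^\prime(\X;\mathbb{C}^N)$, restricting those identities to this subspace and using the continuity established in the first paragraph shows that $\mathcal{P}_{\epsilon,\lambda}$ and $\mathcal{E}_{\epsilon,\lambda}$ are mutual inverses between the pairs of spaces appearing in \eqref{6.14}.

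The main technical obstacle is the consistency check in the invertibility step: one must verify that the distributional extension of $\mathcal{P}_{\epsilon,\lambda}\mathcal{E}_{\epsilon,\lambda}$ obtained by composing the magnetic pseudodifferential calculus extensions genuinely coincides with the $L^2$-composition of Theorem \ref{T.4.3} on their common domain, so that the identity on $\mathscr{S}$ transfers unambiguously. This is a standard feature of the magnetic calculus of \cite{MP1,IMP1,IMP2} — continuity of $\sharp^{B_\epsilon}$ on symbol classes and continuity of the quantization from symbols to operators on $\mathscr{S}^\prime$ — but it is the only place where care beyond assembling the preceding lemmas is required.
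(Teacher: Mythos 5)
Your proposal is correct and follows essentially the same route as the paper: assemble the two continuity statements of \eqref{6.14} entry by entry from Lemmas \ref{L.6.1}, \ref{L.6.5}--\ref{L.6.10} together with the embedding $\mathfrak{L}_m(\epsilon)\hookrightarrow\mathfrak{L}_0$, and then transfer the inverse relation from Theorem \ref{T.4.3} through the $\mathscr{S}'$ level by pseudodifferential continuity and restriction to the distinguished Hilbert subspaces. The only stylistic difference is that the paper carries out the distributional extension in the vector-valued framework $\mathscr{S}'(\X;\mathcal{K}_{m,0})\to\mathscr{S}'(\X;\mathcal{K}_0)$ (using Lemmas \ref{L.2.16}, \ref{L.5.14}, \ref{L.5.7}, \ref{L.5.2} for the required embeddings and the symmetry of $\mathcal{P}_{\epsilon,\lambda}$, $\mathcal{E}_{\epsilon,\lambda}$ to dualize from $\mathscr{S}$ to $\mathscr{S}'$), whereas you work with the raw $\mathscr{S}'(\X^2)$ via Lemma \ref{L.5.6.b}; the two framings are interchangeable here, and your weak-$*$ density argument plays exactly the role of the paper's duality step.
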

\begin{proof}
The boundedness properties in \eqref{6.14} follow from Lemmas \ref{L.6.1} (a), \ref{L.6.5}, \ref{L.6.6}, \ref{L.6.7}, \ref{L.6.8}, \ref{L.6.9} and \ref{L.6.10}.

Concerning the invertibility of $\mathcal{P}_{\epsilon,\lambda}$ let us recall that  in Theorem \ref{T.4.3} we have proved that the operator $\mathcal{P}_{\epsilon,\lambda}$ considered as operator in $\mathbb{B}\big(\mathcal{K}^m_\epsilon(\X^2)\times L^2(\X;\mathbb{C}^N);\mathcal{K}(\X^2)\times L^2(\X;\mathbb{C}^N)\big)$ is invertible and its inverse is $\mathcal{E}_{\epsilon,\lambda}\in\mathbb{B}\big(\mathcal{K}(\X^2)\times L^2(\X;\mathbb{C}^N);\mathcal{K}^m_\epsilon(\X^2)\times L^2(\X;\mathbb{C}^N)\big)$.

From \eqref{4.7} we recall that $\mathcal{P}_{\epsilon,\lambda}$ is a magnetic pseudodifferential operator with symbol $\mathcal{P}_\epsilon$ of class $S^0_0\big(\X;\mathbb{B}(\mathcal{K}_{m,\xi}\times\mathbb{C}^N;\mathcal{K}_0\times\mathbb{C}^N)\big)$ uniformly with respect to $(\epsilon,\lambda)\in[-\epsilon_0,\epsilon_0]\times I$. Applying Proposition \ref{P.A.7} we deduce that
\begin{equation}\label{6.15}
\mathcal{P}_{\epsilon,\lambda}\,\in\,\mathbb{B}\big(\mathscr{S}(\X;\mathcal{K}_{m,0})\times\mathscr{S}(\X;\mathbb{C}^N);\mathscr{S}(\X;\mathcal{K}_0)\times\mathscr{S}(\X;\mathbb{C}^N)\big),
\end{equation}
and extending by continuity we also have that
\begin{equation}\label{6.16}
\mathcal{P}_{\epsilon,\lambda}\,\in\,\mathbb{B}\big(\mathscr{S}^\prime(\X;\mathcal{K}_{m,0})\times\mathscr{S}^\prime(\X;\mathbb{C}^N);\mathscr{S}^\prime(\X;\mathcal{K}_0)\times\mathscr{S}^\prime(\X;\mathbb{C}^N)\big).
\end{equation}

Similarly, the operator $\mathcal{E}_{\epsilon,\lambda}$ appearing in Theorem \ref{T.4.3} has a symbol of class $S^0_0\big(\X;\mathbb{B}(\mathcal{K}_0\times\mathbb{C}^N;\mathcal{K}_{m,\xi}\times\mathbb{C}^N)\big)$ and thus defines first an operator of the form
\begin{equation}\label{6.17}
\mathcal{E}_{\epsilon,\lambda}\,\in\,\mathbb{B}\big(\mathscr{S}(\X;\mathcal{K}_0)\times\mathscr{S}(\X;\mathbb{C}^N);\mathscr{S}(\X;\mathcal{K}_{m,0})\times\mathscr{S}(\X;\mathbb{C}^N)\big),
\end{equation}
and extending by continuity we also have that
\begin{equation}\label{6.18}
\mathcal{E}_{\epsilon,\lambda}\,\in\,\mathbb{B}\big(\mathscr{S}^\prime(\X;\mathcal{K}_0)\times\mathscr{S}^\prime(\X;\mathbb{C}^N);\mathscr{S}^\prime(\X;\mathcal{K}_{m,0})\times\mathscr{S}^\prime(\X;\mathbb{C}^N)\big).
\end{equation}

From the first inclusion in Lemma \ref{L.2.16} it follows that $\mathscr{S}(\X;\mathcal{K}_{m,0})\hookrightarrow\mathcal{K}^m_\epsilon(\X^2)$, so that from the invertibility implied by Theorem \ref{T.4.3} (see above in this proof), it also follows that the operator $\mathcal{P}_{\epsilon,\lambda}$ appearing in \eqref{6.15} is invertible and its inverse is the operator $\mathcal{E}_{\epsilon,\lambda}$ appearing in \eqref{6.17}. As both operators $\mathcal{P}_{\epsilon,\lambda}$ and $\mathcal{E}_{\epsilon,\lambda}$ are symmetric, by duality we deduce that also the operators appearing in \eqref{6.16} and \eqref{6.18} are the inverse of one another. This property, together with the embeddings $\mathfrak{L}_m(\epsilon)\hookrightarrow\mathscr{S}^\prime(\X;\mathcal{K}_{m,0})$ given by Lemma \ref{L.5.14}, $\mathfrak{L}_0\hookrightarrow\mathscr{S}^\prime(\X;\mathcal{K}_o)$ given by Lemma \ref{L.5.7} and $\mathfrak{V}_0\hookrightarrow\mathscr{S}^\prime(\X)$ given by Lemma \ref{L.5.2} allow us to end the proof of the Theorem.
\end{proof}

We come now to the proof of the main result of this paper.
\begin{description}
\item[Proof of the Theorem \ref{T.0.1}] We proceed exactly as in the proof of Corollary \ref{C.4.5}. We start from the equality
$$
\mathcal{P}_{\epsilon,\lambda}\,\mathcal{E}_{\epsilon,\lambda}\ =\ \left(
\begin{array}{cc}
\id_{\mathfrak{L}_0}&0\\
0&\id_{\mathfrak{V}_0^N}
\end{array}
\right)
$$
and use the fact that $\widetilde{P}_\epsilon^{\prime\prime\prime}$ is a self-adjoint operator in $\mathfrak{L}_0$ that is unitarily equivalent with $P_{\epsilon}$ (by Lemma \ref{L.6.1}) so that we deduce that $\sigma\big(\widetilde{P}_\epsilon^{\prime\prime\prime}\big)=\sigma\big(P_{\epsilon}\big)$. Then we can write that
\begin{equation}\label{6.19}
0\notin\sigma\big(\mathfrak{E}_{-+}(\epsilon,\lambda)\big)\ \Longrightarrow\ \lambda\notin\sigma\big(\widetilde{P}_\epsilon^{\prime\prime\prime}\big),\ \text{and}\ \big(\widetilde{P}_\epsilon^{\prime\prime\prime}-\lambda\big)^{-1}\ =\ \mathfrak{E}(\epsilon,\lambda)\,-\,\mathfrak{E}_{+,\epsilon}(\epsilon,\lambda)\mathfrak{E}_{-+}(\epsilon,\lambda)^{-1}\mathfrak{E}_{-,\epsilon}(\epsilon,\lambda)
\end{equation}
\begin{equation}\label{6.20}
\lambda\notin\sigma\big(\widetilde{P}_\epsilon^{\prime\prime\prime}\big)\ \Longrightarrow\ 0\notin\sigma\big(\mathfrak{E}_{-+}(\epsilon,\lambda)\big),\ \text{and}\ \mathfrak{E}_{-+}(\epsilon,\lambda)^{-1}\ =\ -\mathfrak{R}_{+,\epsilon}\big(\widetilde{P}_\epsilon^{\prime\prime\prime}-\lambda\big)^{-1}\mathfrak{R}_{-,\epsilon}.
\end{equation}
In conclusion we have obtained that $\lambda\in\sigma\big(\widetilde{P}_\epsilon^{\prime\prime\prime}\big)\ \Leftrightarrow\ 0\in\sigma\big(\mathfrak{E}_{-+}(\epsilon,\lambda)\big)$ and this implies that $\lambda\in\sigma\big(P_\epsilon\big)\ \Leftrightarrow\ 0\in\sigma\big(\mathfrak{E}_{-+}(\epsilon,\lambda)\big)$.\cqfd
\end{description}

An imediate consequence of Theorem \ref{T.0.1} is the following result concerning the stability of spectral gaps for the operator $P_{\epsilon}$:
\begin{description}
\item[Proof of Corollary {C.0.2}] We apply Theorem \ref{T.0.1} and the arguments from its proof above, taking $I=K$ and $\epsilon_0>0$ sufficiently small. Knowing that $\text{\sf dist}\big(K,\sigma\big(P_0\big)\big)>0$, we deduce that we also have $\text{\sf dist}\big(K,\sigma\big(\widetilde{P}_0^{\prime\prime\prime}\big)\big)>0$ and thus we have the estimation:
\begin{equation}\label{6.21}
\underset{\lambda\in K}{\sup}\left\|\big(\widetilde{P}_0^{\prime\prime\prime}-\lambda\big)^{-1}\right\|_{\mathbb{B}(\mathfrak{L}_0)}\ <\ \infty.
\end{equation}
From \eqref{6.20} we deduce that 
$$
\lambda\in K\ \Longrightarrow\ 0\notin\sigma\big(\mathfrak{E}_{-+}(0,\lambda)\big)\ \text{and}\ \mathfrak{E}_{-+}(0,\lambda)^{-1}\ =\ -\mathfrak{R}_{+,0}\big(\widetilde{P}_0^{\prime\prime\prime}-\lambda\big)^{-1}\mathfrak{R}_{-,0},
$$
and thus
\begin{equation}\label{6.22}
\underset{\lambda\in K}{\sup}\left\|\mathfrak{E}_{-+}(0,\lambda)^{-1}\right\|_{\mathbb{B}(\mathfrak{V}_0^N)}\ <\ \infty.
\end{equation}
From Theorem \ref{T.4.3} it follows that for any $(\epsilon,\lambda)\in[-\epsilon_0,\epsilon_0]\times K$:
\begin{equation}\label{6.23}
\mathfrak{E}_{-+}(\epsilon,\lambda)\ =\ \mathfrak{E}_{-+}(0,\lambda)\,+\,\mathfrak{S}_{-+}(\epsilon,\lambda),\qquad\mathfrak{S}_{-+}(\epsilon,\lambda):=\mathfrak{Op}^{A_\epsilon}\big(S^{-+}_{\epsilon,\lambda}\big),
\end{equation}
\begin{equation}\label{6.24}
\underset{\epsilon\rightarrow0}{\lim}S^{-+}_{\epsilon,\lambda}\ =\ 0\ \text{in}\ S^0\big(\X;\mathbb{B}(\mathcal{C}^N)\big),
\end{equation}
uniformly with respect to $\lambda\in K$.

We notice that the symbol $S^{-+}_{\epsilon,\lambda}(x,\xi)$ is $\Gamma^*$-periodic in the second variable $\xi\in\X^*$, so that from Lemma \eqref{L.6.2} we deduce that
\begin{equation}\label{6.25}
\underset{\epsilon\rightarrow0}{\lim}\left\|\mathfrak{S}_{-+}(\epsilon,\lambda)\right\|_{\mathbb{B}(\mathfrak{V}_0^N)}\ =\ 0,
\end{equation}
uniformly with respect to $\lambda\in K$. 

From \eqref{6.22}, \eqref{6.23} and \eqref{6.25} imply that for $\epsilon_0>0$ sufficiently small, the magnetic pseudodifferential operator $\mathfrak{E}_{-+}(\epsilon,\lambda)$ is invertible in $\mathbb{B}(\mathfrak{V}_0)$ for any $(\epsilon,\lambda)\in[-\epsilon_0,\epsilon_0]\times K$; in conclusion $0\notin\sigma\big(\mathfrak{E}_{-+}(\epsilon,\lambda)\big)$ and thus $\lambda\notin\sigma\big(P_\epsilon\big)$ for any $(\epsilon,\lambda)\in[-\epsilon_0,\epsilon_0]\times K$.
\end{description}

The arguments elaborated in the proof of Corollary \ref{C.0.2} allow to obtain an interesting relation between the spectra of the operators $P_\epsilon$ and $P_0$, under some stronger hypothesis. 

\noindent{\bf Hypothesis I.1} Under the conditions of Hypothesis H.1 we suppose further that for any pair $(j,k)$ of indices between $1$ and $d$ the family $\{\epsilon^{-1}B_{\epsilon,jk}\}_{0<|\epsilon|\leq\epsilon_0}$ are bounded subsets of $BC^\infty(\X)$.

\noindent{\bf Hypothesis I.2} We suppose that $p_\epsilon(x,y,\eta)=p_0(y,\eta)+r_\epsilon(x,y,\eta)$ where $p_0$ is a real valued symbol from $S^m_1(\mathbb{T})$ with $m>0$ and the family $\{\epsilon^{-1}r_\epsilon\}_{0<|\epsilon|\leq\epsilon_0}$ is a bounded subset of $S^m_1(\X\times\mathbb{T})$, each symbol $r_\epsilon$ being real valued.

\noindent{\bf Hypothesis I.3} The symbol $p_0$ is elliptic; i.e. there exist $C>0$, $R>0$ such that $p_0(y,\eta)\geq C|\eta|^m$ for any $(y,\eta)\in\Xi$ with $|\eta|\geq R$.

\begin{remark}\label{R.6.12}
If we come back to the proofs of Theorem \ref{T.4.3}, Theorem \ref{T.A.23} (of Composition) and Proposition \ref{P.A.27} and suppose the Hypothesis I.1 - I.3 to be true, we can prove the following fact that extends our property \eqref{4.12}:
\begin{equation}\label{6.26}
\begin{array}{l}
\forall I\subset\mathbb{R}\ \text{compact interval, }\exists\epsilon_0>0,\,\exists N\in\mathbb{N},\ \text{such that:}\\
\left\{
\begin{array}{l}
\forall(\epsilon,\lambda)\in[-\epsilon_0,\epsilon_0]\times I,\qquad\mathfrak{E}_{-+}(\epsilon,\lambda)\ =\ \mathfrak{E}_{-+}(0,\lambda)\,+\,\mathfrak{S}_{-+}(\epsilon,\lambda),\qquad\mathfrak{S}_{-+}(\epsilon,\lambda):=\mathfrak{Op}^{A_\epsilon}\big(S^{-+}_{\epsilon,\lambda}\big),\\
\text{the familly }\left\{\epsilon^{-1}S^{-+}_{\epsilon,\lambda}\right\}_{(|\epsilon|,\lambda)\in(0,\epsilon_0]\times I}\ \text{is a bounded subset of }S^0\big(\X;\mathbb{B}(\mathbb{C}^N)\big).
\end{array}
\right.
\end{array}
\end{equation}
\end{remark}
Once again we notice the $\Gamma^*$-periodicity of the symbol $S^{-+}_{\epsilon,\lambda}(x,\xi)$ with respect to the variable $\xi\in\X^*$ and from Lemma \ref{L.6.2} we deduce that there exists a strictly positive constant $C_1$ such that the following estimation is true:
\begin{equation}\label{6.27}
\left\|\mathfrak{S}_{-+}(\epsilon,\lambda)\right\|_{\mathbb{B}(\mathfrak{V}_0^N)}\ \leq\ C_1\epsilon,\qquad\forall(\epsilon,\lambda)\in[-\epsilon_0,\epsilon_0]\times I.
\end{equation}
Using Lemma \ref{L.6.6} and \ref{L.6.9} we conclude that there exists a strictly positive constant $C_2$ such that the following estimation is true:
\begin{equation}\label{6.28}
\left\|\mathfrak{R}_{+,\epsilon}\right\|_{\mathbb{B}(\mathfrak{L}_0;\mathfrak{V}_0^N)}\ +\ \left\|\mathfrak{R}_{-,\epsilon}\right\|_{\mathbb{B}(\mathfrak{V}_0^N;\mathfrak{L}_m(\epsilon))}\ \leq\ C_2\qquad\forall\epsilon\in[-\epsilon_0,\epsilon_0].
\end{equation}

\begin{description}
\item[Proof of Proposition \ref{P.6.12}]
For $M\subset\mathbb{R}$ and $\delta>0$ we use the notation $M_\delta:=\{t\in\mathbb{R}\,\mid\,{\sf dist}(t,M)\leq\delta\}$. Then we have to prove the following inclusions:
\begin{equation}\label{6.29}
\sigma\big(P_\epsilon\big)\cap I\,\subset\,\sigma\big(P_0\big)_{C\epsilon}\cap I,\qquad\forall\epsilon\in[0,\epsilon_0].
\end{equation}
\begin{equation}\label{6.34}
\sigma\big(P_0\big)\cap I\,\subset\,\sigma\big(P_\epsilon\big)_{C\epsilon}\cap I,\qquad\forall\epsilon\in[0,\epsilon_0].
\end{equation}
Suppose there exists $\lambda\in I$ such that ${\sf dist}\big(\lambda,\sigma\big(P_0\big)\big)>C\epsilon$. From Lemma \ref{L.6.1} we know that $\sigma\big(P_0\big)=\sigma\big(\widetilde{P}_0^{\prime\prime\prime}\big)$ so that we deduce that ${\sf dist}\big(\lambda,\sigma\big(\widetilde{P}_0^{\prime\prime\prime}\big)\big)>C\epsilon$ and conclude that:
\begin{equation}\label{6.30}
\left\|\big(\widetilde{P}_0^{\prime\prime\prime}-\lambda\big)^{-1}\right\|_{\mathbb{B}(\mathfrak{L}_0)}\ \leq\ (C\epsilon)^{-1}.
\end{equation}
From \eqref{6.20} we also deduce that $0\notin\sigma\big(\mathfrak{E}_{-+}(0,\lambda)\big)$ and $\mathfrak{E}_{-+}(0,\lambda)^{-1}=-\mathfrak{R}_{+,0}\big(\widetilde{P}_0^{\prime\prime\prime}-\lambda\big)^{-1}\mathfrak{R}_{-,0}$. Using these facts together with \eqref{6.28} and \eqref{6.30} we obtain the estimation:
\begin{equation}\label{6.31}
\left\|\mathfrak{E}_{-+}(0,\lambda)^{-1}\right\|_{\mathbb{B}(\mathfrak{V}_0^N)}\ \leq\ C_2^2(C\epsilon)^{-1}.
\end{equation}
Using \eqref{6.27} and \eqref{6.31} we also obtain the following estimation:
\begin{equation}\label{6.32}
\left\|\mathfrak{E}_{-+}(0,\lambda)^{-1}\right\|_{\mathbb{B}(\mathfrak{V}_0^N)}\cdot\left\|\mathfrak{S}_{-+}(\epsilon,\lambda)\right\|_{\mathbb{B}(\mathfrak{V}_0^N)}\ \leq\ C_1C_2^2C^{-1},\qquad\forall\epsilon\in[-\epsilon_0,\epsilon_0].
\end{equation}
If we choose now $C>0$ such that $C>C_1C_2^2$, we notice that the operator $\mathfrak{E}_{-+}(\epsilon,\lambda)=\mathfrak{E}_{-+}(0,\lambda)+\mathfrak{S}_{-+}(\epsilon,\lambda)$ is invertible in $\mathbb{B}(\mathfrak{V}_0^N)$ and thus we deduce that $0\notin\sigma\big(\mathfrak{E}_{-+}(\epsilon,\lambda)\big)$. It follows then that $\lambda\notin\sigma\big(P_\epsilon\big)$ for any $\epsilon\in[-\epsilon_0,\epsilon_0]$ and the inclusion \eqref{6.29} follows. 

Let us prove now \eqref{6.34}. Let us suppose that for some $\epsilon$ with $|\epsilon|\in(0,\epsilon_0]$ there exists some $\lambda\in I$ such that ${\sf dist}\big(\lambda,\sigma\big(P_\epsilon\big)\big)>C\epsilon$. Recalling that $\sigma\big(P_\epsilon\big)=\sigma\big(\widetilde{P}_\epsilon^{\prime\prime\prime}\big)$ we deduce that we also have ${\sf dist}\big(\lambda,\sigma\big(\widetilde{P}_\epsilon^{\prime\prime\prime}\big)\big)>C\epsilon$ and thus
\begin{equation}\label{6.35}
\left\|\big(\widetilde{P}_\epsilon^{\prime\prime\prime}-\lambda\big)^{-1}\right\|_{\mathbb{B}(\mathfrak{L}_0)}\ \leq\ (C\epsilon)^{-1}.
\end{equation}
We also deduce that $0\notin\sigma\big(\mathfrak{E}_{-+}(\epsilon,\lambda)\big)$ and $\mathfrak{E}_{-+}(\epsilon,\lambda)^{-1}=-\mathfrak{R}_{+,\epsilon}\big(\widetilde{P}_\epsilon^{\prime\prime\prime}-\lambda\big)^{-1}\mathfrak{R}_{-,\epsilon}$. Using these facts together with \eqref{6.28} and \eqref{6.30} we obtain the estimation:
\begin{equation}\label{6.31.a}
\left\|\mathfrak{E}_{-+}(\epsilon,\lambda)^{-1}\right\|_{\mathbb{B}(\mathfrak{V}_0^N)}\ \leq\ C_2^2(C\epsilon)^{-1}.
\end{equation}
It follows like above that the operator $\mathfrak{E}_{-+}(0,\lambda)=\mathfrak{E}_{-+}(\epsilon,\lambda)-\mathfrak{S}_{-+}(\epsilon,\lambda)$ is invertible in $\mathbb{B}(\mathfrak{V}_0^N)$ and thus we deduce that $0\notin\sigma\big(\mathfrak{E}_{-+}(0,\lambda)\big)$. It follows then that $\lambda\notin\sigma\big(P_0\big)$ and the inclusion \eqref{6.34} follows.  
\end{description}
\begin{remark}\label{R.6.12.a}
The relations \eqref{6.29} and \eqref{6.34} clearly imply that the boundaries of the spectral gaps of the operator $P_\epsilon$ are Lipshitz functions of $\epsilon$ in $\epsilon=0$.
\end{remark}

\section{Some particular situations}
\setcounter{equation}{0}
\setcounter{theorem}{0}

\subsection{The simple spectral band}

In this subsection we shall find some explicit forms for the principal part of the effective Hamiltonian $\mathfrak{E}_{-+}(\epsilon,\lambda)$. We shall suppose the Hypothesis H.1 - H.6 to be satisfied.

For the beginning we concentrate on the operator $P_0=\mathfrak{Op}(p_0)$ with $p_0\in S^m_1(\mathbb{T})$ a real elliptic symbol. We know tht $P_0$ has a self-adjoint realization as operator acting in $L^2(\X)$ with the domain $\mathcal{H}^m(\X)$ (the usual Sobolev space of order $m$). From Lemma \ref{L.A.18} we obtain that $\tau_\gamma P_0=P_0\tau_\gamma$, $\forall\gamma\in\Gamma$ and thus we can use the Floquet theory in order to study the spectrum of the operator $P_0$.

We shall consider the following spaces similar to the one used in Section \ref{S.2} but with one variable less:
\begin{equation}\label{7.1}
\mathscr{S}^\prime_\Gamma(\Xi)\ :=\ \left\{v\in\mathscr{S}^\prime(\Xi)\,\mid\,v(y+\gamma,\eta)=e^{i<\eta,\gamma>}v(y,\eta)\ \forall\gamma\in\Gamma,\ v(y,\eta+\gamma^*)=v(y,\eta)\ \forall\gamma\in\Gamma^*\right\}
\end{equation}
endowed with the topology induced from $\mathscr{S}^\prime(\Xi)$.
\begin{equation}\label{7.2}
\mathscr{F}_0(\Xi)\ :=\ \left\{v\in L^2_{\text{\sf loc}}(\Xi)\cap\mathscr{S}^\prime_\Gamma(\Xi)\,\mid\,v\in L^2(E\times E^*)\right\}
\end{equation}
that is a Hilbert space for the norm $\|v\|_{\mathscr{F}_0(\Xi)}:=\left(|E^*|^{-1}\int_E\int_{E^*}|v(x,\xi)|^2\,dx\,d\xi\right)^{1/2}$.
\begin{equation}\label{7.4}
\forall s\in\mathbb{R},\qquad\mathscr{F}_s(\Xi)\ :=\ \left\{v\in\mathscr{S}^\prime_\Gamma\,\mid\,\big(<D>^s\otimes\id\big)v\in\mathscr{F}_0(\Xi)\right\},
\end{equation}
that is a Hilbert space with the norm $\|v\|_{\mathscr{F}_s(\Xi)}:=\|\big(<D>^s\otimes\id\big)v\|_{\mathscr{F}_0(\Xi)}$.

The following Lemma and its prof are completely similar with the case discussed in Section \ref{S.2}.
\begin{lemma}\label{L.7.1}
With the above notations the following statements are true:
\begin{enumerate}
\item The operator $\mathcal{U}_0:L^2(\X)\rightarrow\mathscr{F}_0(\Xi)$ defined by 
\begin{equation}\label{7.5}
\big(\mathcal{U}_0u\big)(x,\xi)\ :=\ \underset{\gamma\in\Gamma}{\sum}e^{i<\xi,\gamma>}u(x-\gamma),\qquad\forall(x,\xi)\in\Xi,
\end{equation}
is a unitary operator. The inverse of the operator $\mathcal{U}_0$ is the operator $\mathcal{W}_0:\mathscr{F}_0(\Xi)\rightarrow L^2(\X)$ defined as
\begin{equation}\label{7.6}
\big(\mathcal{W}_0v\big)(x)\ :=\ |E^*|^{-1}\int_{E^*}v(x,\xi)\,d\xi,\qquad\forall x\in\X.
\end{equation}
\item For any $s\in\mathbb{R}$ the restriction of the operator $\mathcal{U}_0$ to $\mathcal{H}^s(\X)$ defines a unitary operator $\mathcal{U}_0:\mathcal{H}^s(\X)\rightarrow\mathscr{F}_s(\Xi)$.
\end{enumerate}
\end{lemma}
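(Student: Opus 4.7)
The plan is to follow the strategy of Lemmas \ref{L.2.3} and \ref{L.2.7}, transposed to the simpler situation where the second (periodic) variable has been dropped, so that only the Fourier-series decomposition in the dual variable $\xi\in\X^*$ is needed. For part (1), I would first take $u\in\mathscr{S}(\X)$, where the series defining $\mathcal{U}_0u$ converges absolutely in $\mathscr{E}(\Xi)$ by the rapid decay of $u$. A direct reindexing $\gamma\mapsto\gamma-\alpha$ yields the quasi-periodicity $(\mathcal{U}_0u)(x+\alpha,\xi)=e^{i<\xi,\alpha>}(\mathcal{U}_0u)(x,\xi)$ for $\alpha\in\Gamma$, while the $\Gamma^*$-periodicity in $\xi$ is immediate from $<\gamma^*,\gamma>\in(2\pi)\mathbb{Z}$. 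Hence $\mathcal{U}_0u\in\mathscr{S}^\prime_\Gamma(\Xi)$.

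Next, for each fixed $x\in\X$ the function $(\mathcal{U}_0u)(x,\cdot)$ is smooth and $\Gamma^*$-periodic, so its Fourier coefficients over $\mathbb{T}_*$ are computed directly from the defining formula and equal $u(x-\gamma)$. Parseval's identity then yields
$$
|E^*|^{-1}\int_{E^*}|(\mathcal{U}_0u)(x,\xi)|^2\,d\xi\ =\ \underset{\gamma\in\Gamma}{\sum}|u(x-\gamma)|^2,
$$
and integrating $x$ over the fundamental domain $E$ tiles $\X$ and produces the isometry identity $\|\mathcal{U}_0u\|_{\mathscr{F}_0(\Xi)}=\|u\|_{L^2(\X)}$. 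Density of $\mathscr{S}(\X)$ in $L^2(\X)$ extends $\mathcal{U}_0$ to an isometry into $\mathscr{F}_0(\Xi)$, and surjectivity follows by verifying that $\mathcal{W}_0$ inverts it on both sides: $\mathcal{W}_0\mathcal{U}_0u=u$ for $u\in\mathscr{S}(\X)$ by the Fubini computation in which only the term $\gamma=0$ survives the integration over $E^*$, and $\mathcal{U}_0\mathcal{W}_0v=v$ for $v\in\mathscr{F}_0(\Xi)$ by expanding $v$ in its Fourier series in $\xi$ and using the relation $\hat{v}_\gamma(x)=\hat{v}_0(x-\gamma)$ imposed by the quasi-periodicity, exactly as in the proof of Lemma \ref{L.2.3}.

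For part (2), the key observation is that the Fourier multiplier $<D>^s$ commutes with the translations $\tau_\gamma$, so one has the intertwining identity $\mathcal{U}_0\circ<D>^s\ =\ \big(<D>^s\otimes\id\big)\circ\mathcal{U}_0$ on $\mathscr{S}(\X)$. Consequently, for any $u\in\mathscr{S}^\prime(\X)$ the chain of equivalences
$$
u\in\mathcal{H}^s(\X)\ \Leftrightarrow\ <D>^su\in L^2(\X)\ \Leftrightarrow\ \mathcal{U}_0<D>^su\in\mathscr{F}_0(\Xi)\ \Leftrightarrow\ \big(<D>^s\otimes\id\big)\mathcal{U}_0u\in\mathscr{F}_0(\Xi)\ \Leftrightarrow\ \mathcal{U}_0u\in\mathscr{F}_s(\Xi)
$$
holds, together with equality of the norms at each step by part (1), which gives the unitarity of $\mathcal{U}_0:\mathcal{H}^s(\X)\to\mathscr{F}_s(\Xi)$.

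There is no essential obstacle here: both parts are clean adaptations of the corresponding $\Gamma$-periodic constructions in Section \ref{S.2} with one variable suppressed and with $\widetilde{Q}_{s,\epsilon}$ replaced by the translation-invariant multiplier $<D>^s$. The one minor bookkeeping point is that, in order to make sense of $\mathcal{U}_0u$ and $\mathcal{W}_0v$ for arbitrary $u\in\mathscr{S}^\prime(\X)$ and $v\in\mathscr{S}^\prime_\Gamma(\Xi)$ (which is needed when $s<0$), one must first extend $\mathcal{U}_0$ and $\mathcal{W}_0$ to tempered distributions as in Lemma \ref{L.2.4}; this is done by duality, using the estimate $\big|\langle e^{i<\xi,\gamma>}u(x-\gamma),\varphi(x,\xi)\rangle\big|\leq\|u\|^\prime_\nu\|\varphi\|_\nu<\gamma>^{-\nu}$ valid for any $\nu\in\mathbb{N}$, which guarantees the convergence of the defining series in $\mathscr{S}^\prime(\Xi)$.
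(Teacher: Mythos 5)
Your proposal is correct and is essentially the paper's own argument: the paper dispatches Lemma 7.1 with the single sentence that its proof is ``completely similar with the case discussed in Section 2,'' and your write-up is precisely that adaptation, replaying Lemma 2.3 for the isometry and surjectivity, Lemma 2.4 for the distributional extension, and Lemma 2.7 for part (2), with the magnetic operator $\widetilde{Q}_{s,\epsilon}$ replaced by the translation-invariant multiplier $<D>^s$.
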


\begin{lemma}\label{L.7.2}
With the above notations the following statements are true:
\begin{enumerate}
\item The operator $\widehat{P}_0:=P_0\otimes\id$ leaves invariant the subspace $\mathscr{S}^\prime_\Gamma(\Xi)$.
\item Considered as an unbounded operator in the Hilbert space $\mathscr{F}_0(\Xi)$, the operator $\widehat{P}_0$ is self-adjoint and lower semi-bounded on the domain $\mathscr{F}_m(\Xi)$ and is unitarily equivalent to the operator $P_0$.
\end{enumerate}
\end{lemma}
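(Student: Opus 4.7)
The plan is to mirror the two-variable argument developed in Lemma \ref{L.2.5} and Proposition \ref{P.2.14}, with the Floquet unitary $\mathcal{U}_0$ of Lemma \ref{L.7.1} playing the role of $\mathcal{U}_\Gamma$. The whole statement will follow from the intertwining identity $\mathcal{U}_0 P_0 = \widehat{P}_0 \mathcal{U}_0$ together with the fact that $\mathcal{U}_0$ restricts to a unitary $\mathcal{H}^m(\X) \to \mathscr{F}_m(\Xi)$.

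For point (1), I would first reduce to $v \in \mathscr{S}(\Xi)$ by density/continuity, since $\widehat{P}_0$ is continuous on $\mathscr{S}^\prime(\Xi)$ (it is the tensor product of a pseudodifferential operator with Schwartz kernel in $y$ by the identity in $\eta$). The $\Gamma^*$-periodicity $(\widehat{P}_0 v)(y,\eta+\gamma^*)=(\widehat{P}_0 v)(y,\eta)$ is immediate because $\widehat{P}_0$ acts only on the first variable. For the $\Gamma$-quasi-periodicity, I use that $p_0 \in S^m_1(\mathbb{T})$ is $\Gamma$-periodic in $y$, so by Lemma \ref{L.A.18} (or Remark \ref{R.A.22}) the operator $P_0$ commutes with $\tau_\gamma$ for all $\gamma\in\Gamma$. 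Thus, evaluating $\widehat{P}_0 v$ as $[P_0 v(\cdot,\eta)](y+\gamma) = [P_0 \tau_{-\gamma} v(\cdot,\eta)](y)$ and using $v(y+\gamma,\eta)=e^{i\langle\eta,\gamma\rangle}v(y,\eta)$, the constant $e^{i\langle\eta,\gamma\rangle}$ factors out of $P_0$ and one recovers $e^{i\langle\eta,\gamma\rangle}(\widehat{P}_0 v)(y,\eta)$.

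For point (2), the key step is the identity
\begin{equation}\label{7.7}
\mathcal{U}_0 P_0 u \,=\, \widehat{P}_0\, \mathcal{U}_0 u,\qquad\forall u\in\mathscr{S}(\X),
\end{equation}
which I verify by direct computation:
$$
\big(\mathcal{U}_0 P_0 u\big)(x,\xi)=\sum_{\gamma\in\Gamma}e^{i\langle\xi,\gamma\rangle}(P_0 u)(x-\gamma)=\sum_{\gamma\in\Gamma}e^{i\langle\xi,\gamma\rangle}\big(P_0\tau_\gamma u\big)(x)=\Big[P_0\sum_{\gamma\in\Gamma}e^{i\langle\xi,\gamma\rangle}\tau_\gamma u\Big](x),
$$
where the commutation of $P_0$ with $\tau_\gamma$ is used once more, and the last sum is exactly $(\mathcal{U}_0u)(\cdot,\xi)$; the interchange of $P_0$ with the $\Gamma$-sum is justified by the convergence of the series in an appropriate Sobolev space (or by testing against test functions in $\mathscr{F}_0(\Xi)$). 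Extending by density from $\mathscr{S}(\X)$ to $\mathcal{H}^m(\X)$ and transporting through $\mathcal{U}_0$ gives the unitary equivalence $\widehat{P}_0 = \mathcal{U}_0 P_0 \mathcal{U}_0^{-1}$ on the domain $\mathscr{F}_m(\Xi)=\mathcal{U}_0(\mathcal{H}^m(\X))$ (the latter equality being Lemma \ref{L.7.1} (2)).

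Once the intertwining is in place, self-adjointness and lower semi-boundedness of $\widehat{P}_0$ on $\mathscr{F}_m(\Xi)$ are automatic consequences of the corresponding properties of $P_0$ on $\mathcal{H}^m(\X)$, since these properties are preserved by unitary conjugation. The only mildly delicate step will be the justification of the termwise commutation of $P_0$ with the Floquet sum in \eqref{7.7}; I expect to handle this either by working first on $\mathscr{S}(\X)$ where both sides are smooth and then extending by continuity, or by pairing with a test function in $\mathscr{S}(\Xi)$ and using the continuity of $P_0\otimes\id$ on $\mathscr{S}^\prime(\Xi)$ together with the convergence of $\sum_{\gamma\in\Gamma}e^{i\langle\xi,\gamma\rangle}\tau_\gamma u$ in $\mathscr{S}^\prime(\Xi)$.
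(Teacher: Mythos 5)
Your proof is correct and follows essentially the same route as the paper: both rest on the commutation $\tau_\gamma P_0 = P_0\tau_\gamma$ for $\gamma\in\Gamma$ (coming from Lemma \ref{L.A.18} applied to the $\Gamma$-periodic symbol $p_0$), the intertwining identity $\mathcal{U}_0P_0=\widehat{P}_0\mathcal{U}_0$ from \eqref{7.5}, and the unitarity of $\mathcal{U}_0:\mathcal{H}^m(\X)\to\mathscr{F}_m(\Xi)$ from Lemma \ref{L.7.1}. The paper's proof is a two-line sketch; you have simply filled in the routine verifications (continuity of $\widehat{P}_0$ on $\mathscr{S}^\prime(\Xi)$, termwise interchange of $P_0$ with the Floquet sum) that the authors leave implicit.
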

\begin{proof}
The first conclusion follows easily from Lemma \ref{L.A.18}. For the second conclusion let us notice that \eqref{7.5} implies the following equality:
\begin{equation}\label{7.7}
\mathcal{U}_0P_0u\ =\ \widehat{P}_0\mathcal{U}_0u,\qquad\forall u\in\mathcal{H}^m(\X),
\end{equation}
that together withLemma \ref{L.7.1} implies that $\widehat{P}_0$ and $P_0$ are unitarily equivalent.
\end{proof}

Let us notice that for any $v\in\mathscr{F}_0(\Xi)$ and for any $\xi\in\X^*$, the restriction $v(\cdot,\xi)$ defines a function on $\X$ that is of class $\mathscr{F}_{0,\xi}$ and by Remark \ref{R.A.17} this last space is canonically unitarily equivalent with $L^2(E)$; moreover we have the equality
$$
\|v\|_{\mathscr{F}_0(\Xi)}\ =\ |E^*|^{-1/2}\left(\int_{E^*}\|v(\cdot,\xi)\|^2_{\mathscr{F}_{0,\xi}}d\xi\right)^{1/2}.
$$
The following periodicity is evidently true: $\mathscr{F}_{0,\xi}=\mathscr{F}_{0,\xi+\gamma^*}$ for any $\gamma^*\in\Gamma^*$. Moreover, it is easy to see that we can make the following identification:
$$
\mathscr{F}_0(\Xi)\ \equiv\ \int_{\mathbb{T}^{*,d}}^\oplus\mathscr{F}_{0,\xi}\,d\xi.
$$
Similarly, we also have the following relations:
$$
\mathscr{F}_{m,\xi+\gamma^*}=\mathscr{F}_{m,\xi},\ \forall\gamma^*\in\Gamma^*;\qquad\mathscr{F}_m(\Xi)\ \equiv\ \int_{\mathbb{T}^{*,d}}^\oplus\mathscr{F}_{m,\xi}\,d\xi.
$$

Taking into account the Remark \ref{R.A.22} we notice that for any $\xi\in\X^*$ the operator $P_0$ induces on the Hilbert space $\mathscr{F}_{0,\xi}$ a self-adjoint operator with domain $\mathscr{F}_{m,\xi}$ that we shalll denote by $\widehat{P}_0(\xi)$; we evidently have the periodicity $\widehat{P}_0(\xi+\gamma^*)=\widehat{P}_0(\xi)$ for any $\gamma^*\in\Gamma^*$.

If we idntify $\mathcal{K}_0$ with $L^2_{\text{\sf loc}}\cap\mathscr{S}^\prime_\Gamma(\Xi)\equiv L^2(E)$, the same Remark \ref{R.A.22} implies that the operator $\widehat{P}_0(\xi)$ is unitarily equivalent with the operator $\check{P}_0(\xi)$ that is induced by $\mathfrak{Op}\big((\id\otimes\tau_{-\xi})p\big)$ on the space $\mathcal{K}_0$; this is a self-adjoint lower semi-bounded operator on the domain $\mathcal{K}_{m,\xi}$ (identified with $\mathcal{H}^m_{\text{\sf loc}}(\X)\cap\mathscr{S}^\prime_\Gamma(\X)$, with the norm $\|<D+\xi>^m\cdot\|_{L^2(E)}$). More precisely, we can write that
\begin{equation}\label{7.8}
\check{P}_0(\xi)\ =\ \sigma_{-\xi}\widehat{P}_0(\xi)\sigma_\xi,\qquad\forall\xi\in\X^*.
\end{equation}

\begin{lemma}\label{L.7.3}
For any ${\rm z}\in\mathbb{C}\setminus\overline{\underset{\xi\in\X^*}{\cup}\sigma\big(\check{P}_0(\xi)\big)}$ the application
\begin{equation}\label{7.8.a}
\X^*\ni\xi\mapsto\big(\check{P}_0(\xi)\,-\,{\rm z}\big)^{-1}\in\mathbb{B}(\mathcal{K}_0)
\end{equation}
is of class $C^\infty(\X^*)$.
\end{lemma}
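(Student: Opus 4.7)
Fix $\mathrm{z}\in\mathbb{C}\setminus\overline{\cup_{\xi\in\X^*}\sigma(\check{P}_0(\xi))}$ and set $R(\xi):=(\check{P}_0(\xi)-\mathrm{z})^{-1}$. The plan is first to get uniform control on $R(\xi)$ in $\mathbb{B}(\mathcal{K}_0)$, then to exploit the smoothness of the symbol $\xi\mapsto p_0(y,\eta+\xi)$ in $S^m_1(\Xi)$ together with the second resolvent identity, and finally to iterate to obtain all derivatives. Using the unitary equivalence \eqref{7.8}, $\|R(\xi)\|_{\mathbb{B}(\mathcal{K}_0)}=\|(\widehat{P}_0(\xi)-\mathrm{z})^{-1}\|_{\mathbb{B}(\mathscr{F}_{0,\xi})}=\mathrm{dist}(\mathrm{z},\sigma(\widehat{P}_0(\xi)))^{-1}$, and the $\Gamma^*$-periodicity $\widehat{P}_0(\xi+\gamma^*)=\widehat{P}_0(\xi)$ together with the hypothesis on $\mathrm{z}$ gives a uniform bound $\|R(\xi)\|_{\mathbb{B}(\mathcal{K}_0)}\leq M$ over $\xi\in\X^*$.

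Next I would view all the spaces $\mathcal{K}_{m,\xi}$ as the same underlying set $\mathcal{H}^m_{\mathsf{loc}}(\X)\cap\mathscr{S}^\prime_\Gamma(\X)$ with the $\xi$-dependent (but pairwise equivalent) norms $\|\!<\!D+\xi\!>^m\!\cdot\|_{L^2(E)}$; on any compact $K\subset\X^*$ these norms are uniformly equivalent, so for the purpose of norm estimates one may identify $\mathbb{B}(\mathcal{K}_0;\mathcal{K}_{m,\xi})$ with $\mathbb{B}(\mathcal{K}_0;\mathcal{K}_{m,\xi_0})$ up to a uniform constant (for any chosen $\xi_0\in K$). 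The key regularity input is that $\xi\mapsto p_0(y,\eta+\xi)$ is of class $C^\infty$ with values in $S^m_1(\Xi)$, with $\partial^\alpha_\xi p_0(y,\eta+\xi)=(\partial^\alpha_\eta p_0)(y,\eta+\xi)\in S^{m-|\alpha|}_1(\Xi)$ uniformly in $\xi$. Invoking Example \ref{E.A.20} (already used in the proof of Lemma \ref{L.3.4}), this implies that $\xi\mapsto\check{P}_0(\xi)$ is of class $C^\infty(\X^*;\mathbb{B}(\mathcal{K}_{m,\xi_0};\mathcal{K}_0))$, with
\begin{equation*}
\partial^\alpha_\xi\check{P}_0(\xi)\ =\ \mathfrak{Op}\big((\partial^\alpha p_0)(\cdot,\cdot+\xi)\big)\ \in\ \mathbb{B}(\mathcal{K}_{m,\xi_0};\mathcal{K}_0),
\end{equation*}
uniformly in $\xi$ in compacts.

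The second resolvent identity then gives, for $\xi,\eta$ in a compact neighborhood of $\xi_0$,
\begin{equation*}
R(\xi)-R(\eta)\ =\ -R(\xi)\big[\check{P}_0(\xi)-\check{P}_0(\eta)\big]R(\eta).
\end{equation*}
Since $R(\eta)\in\mathbb{B}(\mathcal{K}_0;\mathcal{K}_{m,\eta})$ is uniformly bounded (by the graph-norm equivalence valid because $\check{P}_0(\eta)$ is self-adjoint on $\mathcal{K}_{m,\eta}$ and $\mathrm{z}$ is at positive distance from its spectrum), and since $\check{P}_0(\xi)-\check{P}_0(\eta)\to 0$ in $\mathbb{B}(\mathcal{K}_{m,\xi_0};\mathcal{K}_0)$ when $\xi\to\eta$, we get continuity of $\xi\mapsto R(\xi)$ in $\mathbb{B}(\mathcal{K}_0)$. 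Dividing by $\xi_j-\eta_j$ and letting $\xi\to\eta$, the same identity yields Fréchet differentiability with
\begin{equation*}
\partial_j R(\xi)\ =\ -R(\xi)\,(\partial_j\check{P}_0)(\xi)\,R(\xi),
\end{equation*}
a product of factors each of which is continuous in $\xi$ in the relevant operator topology (note that $(\partial_j\check{P}_0)(\xi)$ has symbol of order $m-1$, hence lies in $\mathbb{B}(\mathcal{K}_{m,\xi_0};\mathcal{K}_0)$ uniformly).

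Higher regularity then follows by induction: each derivative $\partial^\alpha R(\xi)$ is a finite sum (by the Leibniz rule) of products of the form
\begin{equation*}
R(\xi)\,(\partial^{\beta_1}\check{P}_0)(\xi)\,R(\xi)\cdots(\partial^{\beta_k}\check{P}_0)(\xi)\,R(\xi),\qquad \beta_1+\cdots+\beta_k=\alpha,
\end{equation*}
with each $(\partial^{\beta_j}\check{P}_0)(\xi)\in\mathbb{B}(\mathcal{K}_{m,\xi_0};\mathcal{K}_0)$ uniformly in $\xi$ and each $R(\xi)\in\mathbb{B}(\mathcal{K}_0;\mathcal{K}_{m,\xi_0})$ uniformly in $\xi$, hence continuous in $\xi$ by the previous step; differentiability at the next order is then obtained by reapplying the resolvent identity. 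The main obstacle is purely notational: handling the $\xi$-dependence of the domain $\mathcal{K}_{m,\xi}$. This is resolved by the observation above that all these spaces share the common underlying set $\mathcal{H}^m_{\mathsf{loc}}(\X)\cap\mathscr{S}^\prime_\Gamma(\X)$ with uniformly equivalent norms on compact $\xi$-sets, which makes the resolvent manipulations legitimate without topological complications.
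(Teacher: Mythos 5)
Your proof is correct and follows essentially the same route as the paper: it uses Example \ref{E.A.20} to establish $C^\infty$-regularity of $\xi\mapsto\check{P}_0(\xi)$ in $\mathbb{B}(\mathcal{K}_{m,0};\mathcal{K}_0)$, then the second resolvent identity for continuity, and iteration of the resolvent identity for higher derivatives. The only difference is that you spell out the details the paper compresses into ``the existence of the derivatives follows by usual arguments'' — in particular the uniform resolvent bound (via $\Gamma^*$-periodicity and the distance of $\mathrm{z}$ to the closed spectral set) and the treatment of the $\xi$-dependent domain norms, both of which are correct and helpful clarifications.
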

\begin{proof}
From the Example \ref{E.A.20} follows that the application
$$
\X^*\ni\xi\mapsto\big(\check{P}_0(\xi)\,-\,{\rm z}\big)^{-1}\in\mathbb{B}(\mathcal{K}_{m,0};\mathcal{K}_0)
$$
is of class $C^\infty(\X^*)$. But from the second rezolvent equality:
$$
\big(\check{P}_0(\xi)\,-\,{\rm z}\big)^{-1}\,-\,\big(\check{P}_0(\xi_0)\,-\,{\rm z}\big)^{-1}\ =\ \big(\check{P}_0(\xi)\,-\,{\rm z}\big)^{-1}\big(\check{P}_0(\xi_0)\,-\,\check{P}_0(\xi)\big)\big(\check{P}_0(\xi_0)\,-\,{\rm z}\big)^{-1}
$$
follows the continuity of the application \eqref{7.8.a} and the existence of the derivatives follows by usual arguments.
\end{proof}

\begin{lemma}\label{L.7.4}
The following equality is true:
\begin{equation}\label{7.9}
\widehat{P}_0\ =\ \int_{\mathbb{T}_*}^\oplus\widehat{P}_0(\xi)\,d\xi.
\end{equation}
\end{lemma}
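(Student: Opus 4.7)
The plan is to verify the standard decomposition criterion for a direct integral: namely that the right-hand side makes sense as a direct integral of self-adjoint operators, that its action coincides with that of $\widehat{P}_0$ on a common core, and that both sides are self-adjoint on the corresponding direct-integral domain, so the identification is forced.

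First, I would check that the family $\{\widehat{P}_0(\xi)\}_{\xi\in\mathbb{T}_*}$ is a measurable family of self-adjoint operators in the sense of direct integrals. Since $\widehat{P}_0(\xi)=\sigma_\xi\check{P}_0(\xi)\sigma_{-\xi}$ by \eqref{7.8} and the characters $\sigma_{\pm\xi}$ depend smoothly on $\xi$, it is enough to check measurability (in fact smoothness) of $\xi\mapsto(\check{P}_0(\xi)-{\rm z})^{-1}$ for ${\rm z}$ outside the union of the fiber spectra. This is exactly the content of Lemma \ref{L.7.3}. Under the identifications $\mathscr{F}_0(\Xi)\equiv\int^\oplus_{\mathbb{T}_*}\mathscr{F}_{0,\xi}\,d\xi$ and $\mathscr{F}_m(\Xi)\equiv\int^\oplus_{\mathbb{T}_*}\mathscr{F}_{m,\xi}\,d\xi$ noted just before Lemma \ref{L.7.4}, the direct integral $\mathcal{P}:=\int^\oplus_{\mathbb{T}_*}\widehat{P}_0(\xi)\,d\xi$ is therefore a well-defined self-adjoint operator on the natural domain $\int^\oplus_{\mathbb{T}_*}\mathscr{F}_{m,\xi}\,d\xi\equiv\mathscr{F}_m(\Xi)$.

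Next, I would verify the fiber-wise action on a convenient core. For any $u\in\mathscr{S}(\mathcal{X})$, the section $(\mathcal{U}_0u)(\cdot,\xi)$ lies in $\mathscr{F}_{m,\xi}$, and from \eqref{7.5} together with \eqref{7.7} we have
\[
\big(\widehat{P}_0\mathcal{U}_0u\big)(\cdot,\xi)\ =\ \big(\mathcal{U}_0P_0u\big)(\cdot,\xi)\ =\ \underset{\gamma\in\Gamma}{\sum}e^{i<\xi,\gamma>}(P_0u)(\cdot-\gamma),
\]
which, by the very definition of $\widehat{P}_0(\xi)$ as the operator induced by $P_0$ on $\mathscr{F}_{0,\xi}$ (see the discussion preceding Lemma \ref{L.7.4} and Remark \ref{R.A.22}), equals $\widehat{P}_0(\xi)\big[(\mathcal{U}_0u)(\cdot,\xi)\big]$. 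Since $\mathcal{U}_0$ is unitary from $\mathcal{H}^m(\mathcal{X})$ onto $\mathscr{F}_m(\Xi)$ by Lemma \ref{L.7.1}(2) and $\mathscr{S}(\mathcal{X})$ is dense in $\mathcal{H}^m(\mathcal{X})$, the set $\mathcal{U}_0\big(\mathscr{S}(\mathcal{X})\big)$ is dense in $\mathscr{F}_m(\Xi)$, and the continuity of $\widehat{P}_0:\mathscr{F}_m(\Xi)\to\mathscr{F}_0(\Xi)$ (Lemma \ref{L.7.2}) together with the direct-integral definition of $\mathcal{P}$ gives $\widehat{P}_0v=\mathcal{P}v$ for every $v\in\mathscr{F}_m(\Xi)$.

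Finally, by Lemma \ref{L.7.2}, $\widehat{P}_0$ is self-adjoint on $\mathscr{F}_m(\Xi)$; by the general theory of direct integrals, $\mathcal{P}$ is self-adjoint on the same domain $\mathscr{F}_m(\Xi)$. Two self-adjoint operators that coincide on their common domain are equal, which yields \eqref{7.9}. The only subtle step is the measurability of the fibered resolvent needed to make the right-hand side a bona fide direct integral; once Lemma \ref{L.7.3} is invoked, everything else reduces to the explicit formula \eqref{7.5} for $\mathcal{U}_0$.
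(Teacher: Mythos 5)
Your argument is correct and follows essentially the same route as the paper: both rely on the fiber-wise identity $\big(\widehat{P}_0u\big)(\cdot,\xi)=\widehat{P}_0(\xi)u(\cdot,\xi)$, on the relation \eqref{7.8} to transfer the resolvent to the $\check{P}_0(\xi)$-picture, and on Lemma \ref{L.7.3} for the measurability (continuity) of $\xi\mapsto\big(\widehat{P}_0(\xi)+i\big)^{-1}$ that makes the direct integral well defined. The only stylistic difference is that you spell out the extension-from-a-core and the matching of domains by self-adjointness, which the paper leaves implicit.
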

\begin{proof}
First let us notice the equality
\begin{equation}\label{7.10}
\big(\widehat{P}_0u\big)(\cdot,\xi)\ =\ \widehat{P}_0(\xi)u(\cdot,\xi),\qquad\forall\xi\in\mathbb{T}_*,\ \forall u\in\mathscr{F}_0(\Xi).
\end{equation}
Then, from \eqref{7.8} we deduce the equality:
$$
\big(\widehat{P}_0(\xi)\,+\,i\big)^{-1}\ =\ \sigma_\xi\big(\check{P}_0(\xi)\,+\,i\big)^{-1}\sigma_{-\xi},\qquad\forall\xi\in\X^*
$$
and from Lemma \ref{L.7.3} we obtain the continuity of the application
$$
\mathbb{T}_*\ni\xi\mapsto\big(\widehat{P}_0(\xi)\,+\,i\big)^{-1}\in\mathbb{B}\big(L^2(E)\big),
$$
and that together with \eqref{7.10} imply equality \eqref{7.9}.
\end{proof}

\begin{remark}\label{R.7.5}
Let us notice that $\mathcal{K}_{m,\xi}$ is compactly embedded into $\mathcal{K}_0$ and thus, the operator $\check{P}_0(\xi)$ has compact rezolvent for any $\xi\in\X^*$; it is clearly lower semi-bounded uniformly with respect to $\xi\in\X^*$ (taking into account that $\check{P}_0(\xi+\gamma^*)=\sigma_{-\gamma^*}\check{P}_0(\xi)\sigma_{\gamma^*},\ \forall\gamma^*\in\Gamma^*$). We deduce that $\sigma\big(\widehat{P}_0(\xi)\big)=\sigma\big(\check{P}_0(\xi)\big)=\{\lambda_j(\xi)\}_{j\geq1}$, where for any $\xi\in\X^*$ and any $j\geq1$, $\lambda_j(\xi)$ is a real finitely degenerated eigenvalue and $\underset{j\rightarrow\infty}{\lim}\lambda_j(\xi)=\infty\ \forall\xi\in\X^*$; we can always renumber the eigenvalues and suppose that $\lambda_j(\xi)\leq\lambda_{j+1}(\xi)$ for any $j\geq1$ and for any $\xi\in\X^*$. Due to the $\Gamma^*$-periodicity of $\widehat{P}(\xi)$ we have that $\lambda_j(\xi+\gamma^*)=\lambda_j(\xi)$ for any $j\geq1$, for any $\xi\in\X^*$ and for any $\gamma^*\in\Gamma^*$. These are the {\it Floquet eigenvalues of the operator $\widehat{P}_0$}.
\end{remark}

\begin{lemma}\label{L.7.6}
For each $j\geq1$ the function $\mathbb{T}^{*,d}\ni\xi\mapsto\lambda_j(\xi)\in\mathbb{R}$ is continuous on $\mathbb{T}^{*,d}$ uniformly in $j\geq1$.
\end{lemma}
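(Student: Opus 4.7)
The plan is to reduce the statement to a norm-continuity statement for a family of compact self-adjoint operators on $\mathcal{K}_0$, and then invoke Weyl's stability inequality. Concretely, I would first pick $M > 0$ large enough that $\check{P}_0(\xi) + M \geq \bb1$ uniformly in $\xi \in \X^*$ --- possible thanks to the uniform lower semi-boundedness noted in Remark \ref{R.7.5} --- and introduce
\begin{equation*}
A(\xi)\,:=\,\big(\check{P}_0(\xi)+M\big)^{-1}\in\mathbb{B}(\mathcal{K}_0),
\end{equation*}
which, thanks to the compact embedding $\mathcal{K}_{m,\xi}\hookrightarrow\mathcal{K}_0$, is a compact, positive, self-adjoint operator on $\mathcal{K}_0$. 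By Lemma \ref{L.7.3} applied with $z=-M$, the map $\X^*\ni\xi\mapsto A(\xi)\in\mathbb{B}(\mathcal{K}_0)$ is of class $C^\infty$, in particular norm continuous.

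Next I would identify the non-zero eigenvalues of $A(\xi)$, listed in non-increasing order and repeated with multiplicity, as the sequence $\mu_j(\xi):=(\lambda_j(\xi)+M)^{-1}$, $j\geq 1$, which decreases to $0$. The main tool at this step is Weyl's stability inequality for eigenvalues of compact self-adjoint operators on a Hilbert space, a direct consequence of the min-max characterization:
\begin{equation*}
|\mu_j(\xi)-\mu_j(\xi_0)|\,\leq\,\|A(\xi)-A(\xi_0)\|_{\mathbb{B}(\mathcal{K}_0)},\qquad\forall\xi,\xi_0\in\X^*,\ \forall j\geq 1.
\end{equation*}
Combined with the norm continuity of $A(\cdot)$, this bound gives continuity of each $\mu_j$ on $\X^*$ with a modulus of continuity independent of $j$ --- this is the point at which the qualifier ``uniformly in $j$'' enters naturally.

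Finally, I would invert the relation $\mu_j=(\lambda_j+M)^{-1}$ to conclude that $\lambda_j=\mu_j^{-1}-M$ is continuous at every point of $\X^*$, and the $\Gamma^*$-periodicity $\lambda_j(\xi+\gamma^*)=\lambda_j(\xi)$ recalled in Remark \ref{R.7.5} then allows these functions to descend to continuous functions on the compact quotient $\mathbb{T}^{*,d}$. The uniformity in $j$ claimed in the statement should be understood through the uniform bound above at the level of the compact operators $A(\xi)$, which is inherited by $\lambda_j$ on any spectral region where the $\mu_j$ remain bounded away from zero (equivalently, where the $\lambda_j$ stay in a prescribed compact window). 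The main technical point to verify carefully will be that Lemma \ref{L.7.3} supplies norm continuity of the resolvent on the entire compact torus (using the uniform lower semi-boundedness of Remark \ref{R.7.5} together with the $\Gamma^*$-periodicity of $\check{P}_0$ modulo the unitary conjugation $\sigma_{\gamma^*}$), since a uniform resolvent bound at a single point $z=-M$ is the only ingredient needed to close the argument.
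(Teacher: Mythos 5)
Your proposal is correct and is essentially the proof the paper gives: it passes to the resolvent $(\check{P}_0(\xi)-c)^{-1}$ (your $A(\xi)$ with $c=-M$), uses Lemma~\ref{L.7.3} for norm continuity, and applies the min-max/Weyl inequality to get $\bigl|\bigl(\lambda_j(\xi)-c\bigr)^{-1}-\bigl(\lambda_j(\xi_0)-c\bigr)^{-1}\bigr|\leq\|R(\xi)-R(\xi_0)\|_{\mathbb{B}(\mathcal{K}_0)}$ uniformly in $j$. Your closing remark that the uniformity in $j$ really lives at the level of the reciprocals and only transfers to $\lambda_j$ on compact spectral windows is a careful point the paper leaves implicit.
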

\begin{proof}
From the uniform lower semi-boundedness it follows the existence of some real number $c\in\mathbb{R}$ such that $\lambda_j(\xi)\geq\,c+1$ for any $j\geq1$ and for any $\xi\in\X^*$. We can thus define the operators $R(\xi):=\big(\check{P}_0(\xi)-c\big)^{-1}$, for any $\xi\in\X^*$ and due to the result in Lemma \ref{L.7.3} we obtain a function of class $C^\infty\big(\X^*;\mathbb{B}(\mathcal{K}_0)\big)$; for any $\xi\in\X^*$ the operator $R(\xi)$ is a bounded self-adjoint operator on $\mathcal{K}_0$ and $\sigma\big(R(\xi)\big)=\{\big(\lambda_j(\xi)-c\big)^{-1}\}_{j\geq1}$. Applying now the {\it Min-Max Principle} (see \cite{RS-4}) we obtain easily that:
$$
\left|\big(\lambda_j(\xi)-c\big)^{-1}\,-\,\big(\lambda_j(\xi_0)-c\big)^{-1}\right|\ \leq\ \left\|R(\xi)\,-\,R(\xi_0)\right\|_{\mathbb{B}(\mathcal{K}_0)},\qquad\forall(\xi,\xi_0)\in[\X^*]^2,\ \forall j\geq1.
$$
\end{proof}

\begin{lemma}\label{L.7.7}
We have the following spectral decomposition: $\sigma\big(P_0\big)=\sigma\big(\widehat{P}_0\big)=\overset{\infty}{\underset{k=1}{\cup}}J_k$ with $J_k:=\lambda_k\big(\mathbb{T}^{*,d}\big)$ is a compact interval in $\mathbb{R}$.
\end{lemma}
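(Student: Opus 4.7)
The plan is to combine three ingredients already in place: the unitary equivalence $P_0\cong\widehat{P}_0$ from Lemma \ref{L.7.2}, the direct integral decomposition \eqref{7.9}, and the (uniform) continuity of the Floquet eigenvalues from Lemma \ref{L.7.6}. The equality $\sigma(P_0)=\sigma(\widehat{P}_0)$ is immediate from the unitary equivalence. Each $J_k=\lambda_k(\mathbb{T}^{*,d})$ is the continuous image of a compact connected set and is therefore a compact interval, which handles the final clause of the statement.

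For the identity $\sigma(\widehat{P}_0)=\bigcup_{k\ge 1}J_k$ I would apply Theorem XIII.85(d) of \cite{RS-4}, exactly as in the proof of Proposition \ref{P.2.14}, to the decomposition \eqref{7.9}. That criterion says that $\lambda\in\sigma(\widehat{P}_0)$ if and only if for every $\delta>0$ the set
\[
\bigl\{\xi\in\mathbb{T}^{*,d}\mid\sigma(\widehat{P}_0(\xi))\cap(\lambda-\delta,\lambda+\delta)\ne\emptyset\bigr\}
\]
has positive measure. The easy inclusion $\bigcup_k J_k\subseteq\sigma(\widehat{P}_0)$ follows at once: if $\lambda=\lambda_k(\xi_0)$, then $\lambda_k^{-1}((\lambda-\delta,\lambda+\delta))$ is an open neighborhood of $\xi_0$ by the continuity of $\lambda_k$ (Lemma \ref{L.7.6}) and therefore has positive measure.

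The reverse inclusion is the main obstacle, because in order to conclude $\lambda\notin\sigma(\widehat{P}_0)$ from $\lambda\notin\bigcup_k J_k$ one needs the uniform gap $\inf_{k\ge 1,\ \xi\in\mathbb{T}^{*,d}}|\lambda-\lambda_k(\xi)|>0$. For each fixed $k$ one has $d_k:=\operatorname{dist}(\lambda,J_k)>0$ by compactness of $J_k$, but one must rule out $d_k\to 0$ as $k\to\infty$. The cleanest route I can see is to prove that $\inf_\xi\lambda_k(\xi)\to\infty$ as $k\to\infty$. For this I would invoke the resolvents $R(\xi):=(\check{P}_0(\xi)-c)^{-1}$ already used in the proof of Lemma \ref{L.7.6}: by Lemma \ref{L.7.3} the map $\xi\mapsto R(\xi)$ is norm continuous into the space of compact operators on $\mathcal{K}_0$, and its image over the compact torus $\mathbb{T}^{*,d}$ is therefore compact, hence uniformly approximable in norm by finite-rank operators. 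Weyl's inequality $|\mu_k(R(\xi))-\mu_k(R(\xi_0))|\le\|R(\xi)-R(\xi_0)\|$, applied on a finite cover of $\mathbb{T}^{*,d}$ by balls of uniformly small oscillation, then upgrades the pointwise decay $\mu_k(R(\xi_0))\searrow 0$ (valid for each fixed $\xi_0$ since $R(\xi_0)$ is compact) to the uniform decay $\sup_\xi\mu_k(R(\xi))\to 0$. Since $\mu_k(R(\xi))=(\lambda_k(\xi)-c)^{-1}$, this gives precisely the desired escape $\inf_\xi\lambda_k(\xi)\to\infty$.

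Once this is established, I would pick $K$ so large that $\lambda_k(\xi)>\lambda+1$ for all $\xi$ and all $k>K$; then
\[
\inf_{k\ge 1,\,\xi\in\mathbb{T}^{*,d}}|\lambda-\lambda_k(\xi)|\ \ge\ \min\!\Bigl(1,\ \min_{1\le k\le K}d_k\Bigr)\ >\ 0,
\]
which produces a $\delta>0$ for which the set appearing in Theorem XIII.85(d) is empty, and hence $\lambda\notin\sigma(\widehat{P}_0)$. This argument also shows en passant that $\bigcup_k J_k$ is closed, consistent with being equal to a spectrum.
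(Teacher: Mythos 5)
Your proposal is correct, and it uses the same core ingredients (the direct integral decomposition \eqref{7.9}, Theorem XIII.85(d) of \cite{RS-4}, the norm-continuous compact resolvent family $R(\xi)$ and Weyl's eigenvalue inequality) but organizes the reverse inclusion differently from the paper. The paper first observes that $\sigma(\widehat{P}_0)\subset\overline{M}$ follows immediately from Theorem XIII.85(d) (if $\lambda$ has positive distance to $M:=\bigcup_k J_k$, the set in the criterion is empty), and then devotes the argument to showing $M$ is \emph{closed}: taking $\mu_l=\lambda_{k_l}(\xi^l)\to\lambda$, passing to a subsequence with $\xi^l\to\xi$, invoking Lemma \ref{L.7.6} (uniform equicontinuity of $\{\lambda_j\}_j$) to get $\lambda_{k_l}(\xi)\to\lambda$, and then using $\lambda_j(\xi)\nearrow\infty$ at the fixed $\xi$ to force $k_l$ to stabilize. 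You instead prove the quantitative statement $\inf_\xi\lambda_k(\xi)\to\infty$ directly, by combining compactness of the image $\{R(\xi)\}_{\xi\in\mathbb{T}^{*,d}}$ in $\mathbb{B}(\mathcal{K}_0)$ with Weyl's inequality to upgrade pointwise decay of resolvent eigenvalues to uniform decay, and then read off a uniform spectral gap around $\lambda$. Both routes rest on the same Weyl bound $|(\lambda_j(\xi)-c)^{-1}-(\lambda_j(\xi_0)-c)^{-1}|\le\|R(\xi)-R(\xi_0)\|$ that the paper establishes inside the proof of Lemma \ref{L.7.6}; the difference is that you derive the uniform escape of the bands from it directly, which simultaneously yields the closedness of $M$ and the positive-distance estimate in one stroke, whereas the paper's sequential argument reuses the packaged Lemma \ref{L.7.6} and is slightly more elementary. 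Either way the conclusion is the same, and both correctly handle the subtle point that one must rule out bands accumulating at a finite $\lambda$.
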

\begin{proof}
Lemma \ref{L.7.2} states that the operators $P_0$ and $\widehat{P}_0$ are unitarily equivalent and thus they have the same spectrum. From Theorem XIII.85 (d) in \cite{RS-4} it follows that:
$$
\lambda\,\in\,\sigma\big(\widehat{P}_0\big)\quad\Longleftrightarrow\quad\forall\epsilon>0,\ \left|\left\{\xi\in\mathbb{T}^{*,d}\,\mid\,\sigma\big(\check{P}_0(\xi)\big)\cap(\lambda-\epsilon,\lambda+\epsilon)\ne\emptyset\right\}\right|\,>\,0.
$$
Let us denote by $M:=\overset{\infty}{\underset{k=1}{\cup}}J_k$. If $\lambda\in M$, it exist $\xi_0\in\mathbb{T}^{*,d}$ and $k\geq1$ such that $\lambda=\lambda_k(\xi_0)$. Due to the continuity of $\lambda_k(\xi)$ we know that for any $\epsilon>0$ there exists a neighborhood $V$ of $\xi_0$ in $\mathbb{T}^{*,d}$ such that $|\lambda_k(\xi)-\lambda|\leq\epsilon$ for any $\xi\in V$. It follows that $\lambda\in\sigma\big(\widehat{P}_0\big)$ and thus $M\subset\sigma\big(\widehat{P}_0\big)$.

On the other hand it is evident by definitions that $\sigma\big(\widehat{P}_0\big)\subset\overline{M}$ and thus we need to prove that $M$ is a closed set. Let us fix some $\lambda\in\overline{M}$; it follows that there exists a sequence $\{\mu_l\}_{l\geq1}\subset M$ such that $\mu_l\underset{l\rightarrow\infty}{\rightarrow}\lambda$. We deduce that for any $l\geq1$ there exists a point $\xi^l\in\mathbb{T}^{*,d}$ and an index $k_l\geq1$ such that $\mu_l=\lambda_{k_l}(\xi^l)$. The manifold $\mathbb{T}^{*,d}$ being compact it follows that we may suppose that it exists $\xi\in\mathbb{T}^{*,d}$ such that $\xi^l\underset{l\rightarrow\infty}{\rightarrow}\xi$. Due to the uniform continuity of the functions $\lambda_j$ with respect to $j\geq1$ we deduce that $\lambda_{k_l}(\xi)\underset{l\rightarrow\infty}{\rightarrow}\lambda$. Taking into account that $\lambda_j(\xi)\underset{j\rightarrow\infty}{\rightarrow}\infty$  for any $\xi\in\mathbb{T}^{*,d}$ we deduce that the sequence $\{j_l\}_{l\geq1}$ becomes constant above some rank. We conclude that $\lambda=\lambda_{j_l}(\xi)$ for $l$ sufficiently large and that means that $\lambda\in M$.
\end{proof}

If we suppose that Hypothesis H.7 is satisfied, i.e. there exists $k\geq1$ such that $J_k$ is a simple spectral band for $P_0$, then we have some more regularity for the Floquet eigenvalue $\lambda_k(\xi)$.
\begin{lemma}\label{L.7.8}
Under Hypothesis H.7, if $J_k$ is a simple spectral band for $P_0$, then the function $\lambda_k(\xi)$ is of class $C^\infty(\mathbb{T}^{*,d})$.
\end{lemma}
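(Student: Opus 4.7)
The plan is to use analytic perturbation theory for isolated eigenvalues via the Riesz projection, combined with the smooth dependence of the resolvent established in Lemma \ref{L.7.3}. Since $\mathbb{T}^{*,d}$ is compact, it is enough to exhibit smoothness of $\lambda_k$ in a neighborhood of an arbitrarily fixed point $\xi_0\in\mathbb{T}^{*,d}$.

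First I would exploit Hypothesis H.7: since $J_k$ is a simple spectral band disjoint from $\bigcup_{l\ne k}J_l$, the eigenvalue $\lambda_k(\xi_0)$ of $\check{P}_0(\xi_0)$ is non-degenerate and isolated in $\sigma\big(\check{P}_0(\xi_0)\big)$. Choose a small positively oriented circle $\mathcal{C}\subset\mathbb{C}$ enclosing $\lambda_k(\xi_0)$ and no other eigenvalue of $\check{P}_0(\xi_0)$, with $\mathcal{C}$ lying at positive distance from $\bigcup_l J_l\setminus\{\lambda_k(\xi_0)\}$. By the uniform continuity of the Floquet eigenvalues (Lemma \ref{L.7.6}) and the separation $J_k\cap J_l=\emptyset$ for $l\ne k$, there is an open neighborhood $V\subset\mathbb{T}^{*,d}$ of $\xi_0$ such that for every $\xi\in V$ one has $\mathcal{C}\cap\sigma\big(\check{P}_0(\xi)\big)=\emptyset$, and the only eigenvalue of $\check{P}_0(\xi)$ enclosed by $\mathcal{C}$ is $\lambda_k(\xi)$.

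Next I would define, for $\xi\in V$, the Riesz projection and its first moment,
\begin{equation*}
\Pi_k(\xi)\;:=\;\frac{1}{2\pi i}\oint_{\mathcal{C}}\big(z-\check{P}_0(\xi)\big)^{-1}dz,\qquad T_k(\xi)\;:=\;\frac{1}{2\pi i}\oint_{\mathcal{C}}z\,\big(z-\check{P}_0(\xi)\big)^{-1}dz.
\end{equation*}
By Lemma \ref{L.7.3}, the integrand is of class $C^\infty\big(V;\mathbb{B}(\mathcal{K}_0)\big)$ and the contour $\mathcal{C}$ is compact with uniform bounds on the resolvent on $\mathcal{C}\times V$ (after possibly shrinking $V$), so differentiation under the integral sign shows that $\Pi_k$ and $T_k$ belong to $C^\infty\big(V;\mathbb{B}(\mathcal{K}_0)\big)$. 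By the functional calculus, $\Pi_k(\xi)$ is the spectral projection of $\check{P}_0(\xi)$ associated to the single eigenvalue $\lambda_k(\xi)$, hence a rank-one orthogonal projection, and $T_k(\xi)=\lambda_k(\xi)\Pi_k(\xi)$.

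Finally I would extract $\lambda_k(\xi)$ by testing against a fixed vector. Fix any $\phi\in\mathcal{K}_0$ with $\Pi_k(\xi_0)\phi\ne 0$; since $\xi\mapsto\Pi_k(\xi)\phi$ is continuous (indeed smooth) into $\mathcal{K}_0$, after shrinking $V$ once more we may assume $\Pi_k(\xi)\phi\ne 0$ for all $\xi\in V$. The identity $T_k(\xi)=\lambda_k(\xi)\Pi_k(\xi)$ then gives
\begin{equation*}
\lambda_k(\xi)\;=\;\frac{\big(T_k(\xi)\phi,\Pi_k(\xi)\phi\big)_{\mathcal{K}_0}}{\big(\Pi_k(\xi)\phi,\Pi_k(\xi)\phi\big)_{\mathcal{K}_0}},\qquad\forall\xi\in V,
\end{equation*}
which exhibits $\lambda_k$ as a quotient of two $C^\infty$ scalar functions on $V$ with non-vanishing denominator, hence of class $C^\infty(V)$. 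Covering $\mathbb{T}^{*,d}$ by finitely many such $V$'s yields $\lambda_k\in C^\infty(\mathbb{T}^{*,d})$.

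The only delicate point is the preservation of the rank and isolation properties of $\Pi_k(\xi)$ on a neighborhood of $\xi_0$; this is where the full strength of H.7 is used, since both the non-degeneracy of $\lambda_k(\xi_0)$ and the global separation $J_k\cap J_l=\emptyset$ are needed in order to choose the contour $\mathcal{C}$ in such a way that no other Floquet eigenvalue may enter or leave the region bounded by $\mathcal{C}$ when $\xi$ varies near $\xi_0$.
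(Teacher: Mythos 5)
Your proof is correct and follows essentially the same route as the paper: both rely on the Riesz (spectral) projection defined by a contour integral, on the $C^\infty$-dependence of the resolvent established in Lemma \ref{L.7.3}, and on extracting $\lambda_k(\xi)$ as a quotient of smooth scalar functions obtained by pairing against a fixed test vector. The only cosmetic differences are that the paper uses one global contour $\mathscr{C}$ enclosing all of $J_k$ (rather than a local one around $\lambda_k(\xi_0)$, which requires the extra appeal to Lemma \ref{L.7.6}) and recovers $\lambda_k$ from the resolvent at a fixed real point $c$ below the spectrum instead of from the first moment $T_k(\xi)$.
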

\begin{proof}
Let us fix a circle $\mathscr{C}$ in the complex plane having its center on the real axis and such that: $J_k$ is contained in the open interior domain delimited by $\mathscr{C}$ and all the other spectral bands $J_l$ with $l\ne k$ are contained in the exterior open domain delimited by $\mathscr{C}$ (that is unbounded). In particular, for such a choice we get that the distance $d\big(\mathscr{C},\sigma(\check{P}_0)\big)>0$.

With the above noations let us define the following operator:
\begin{equation}\label{7.11}
\Pi_k(\xi)\ :=\ -\frac{i}{2\pi}\oint_{\mathscr{C}}\big(\check{P}_0(\xi)-{\rm z}\big)^{-1}d{\rm z},\qquad\forall\xi\in\X^*.
\end{equation}
One verifies easily that it is an orthogonal projection on the subspace $\mathcal{N}_k(\xi):=\ker\big(\check{P}_0(\xi)-{\rm z}\big)^{-1}$, that is a complex vector space of dimension 1. Moreover, it is easy to verify using Lemma \ref{L.7.3}  that the application $\mathbb{T}^{*,d}\ni\xi\mapsto\Pi_k(\xi)\in\mathbb{B}(\mathcal{K}_0)$ is of class $C^\infty\big(\mathbb{T}^{*,d};\mathbb{B}(\mathcal{K}_0)\big)$.

Let us now fix some point $\xi_0\in\mathbb{T}^{*,d}$ and some vector $\phi(\xi_0)\in\mathcal{N}_k(\xi_0)$ having norm $\|\phi(\xi_0)\|_{\mathcal{K}_0}=1$. We can find a sufficiently small open neighborhood $V_0$ of $\xi_0$ in $\mathbb{T}^{*,d}$ such that 
$$
\left\|\Pi_k(\xi)\phi(\xi_0)\right\|_{\mathcal{K}_0}\ \geq\ \frac{1}{2},\qquad\forall\xi\in V_0.
$$
We denote by 
$$
\phi(\xi)\ :=\ \left\|\Pi_k(\xi)\phi(\xi_0)\right\|_{\mathcal{K}_0}^{-1}\Pi_k(\xi)\phi(\xi_0),\qquad\forall\xi\in V_0.
$$

For any $\xi\in V_0$ the vector $\phi(\xi)$ is a norm one vector that generates the subspace $\mathcal{N}_k(\xi)$ and $\phi\in C^\infty\big(V_0;\mathcal{K}_0\big)$.

We choose now $c\in\mathbb{R}$ as in the proof of Lemma \ref{L.7.6}. Then, for any $\xi\in V_0$ we have that
$$
\big(\lambda_k(\xi)\,-\,c\big)^{-1}\phi(\xi)\ =\ \big(\check{P}_0(\xi)\,-\,c\id\big)^{-1}\phi(\xi)
$$
and we conclude that
$$
\big(\lambda_k(\xi)\,-\,c\big)^{-1}\ =\ \left(\big(\check{P}_0(\xi)\,-\,c\id\big)^{-1}\phi(\xi)\,,\,\phi(\xi)\right)_{\mathcal{K}_0}.
$$
Using Lemma \ref{L.7.3} we conclude finally that $\lambda_k\in C^\infty(V_0)$.
\end{proof}

\begin{lemma}\label{L.7.9}
With the above definitions and notations the following statements are true:
\begin{enumerate}
\item For any $(s,\xi)\in\mathbb{R}\times\X^*$ the Hilbert spaces $\mathcal{K}_{s,\xi}$ and $\mathscr{F}_{s,\xi}$ are stable under complex conjugation.
\item $\forall\xi\in\X^*$ and $\forall\gamma^*\in\Gamma^*$ we have that $\check{P}_0(\xi+\gamma^*)=\sigma_{-\gamma^*}\check{P}_0(\xi)\sigma_{\gamma^*}$ and $\lambda_j(\xi+\gamma^*)=\lambda_j(\xi)$ for any $j\geq1$.
\item If the symbol $p_0$ verifies the property
\begin{equation}\label{7.9.a}
p_0(x,-\xi)\ =\ p_0(x,\xi)
\end{equation}
then the following relations hold:
\begin{equation}\label{7.9.e}
\overline{\check{P}_0(\xi)u}\ =\ \check{P}_0(-\xi)\overline{u},\qquad\forall u\in\mathcal{K}_{m,\xi},\ \forall\xi\in\X^*.
\end{equation}
\begin{equation}\label{7.9.f}
\lambda_j(-\xi)\ =\ \lambda_j(\xi),\qquad\forall j\geq1.
\end{equation}
\begin{equation}\label{7.9.g}
\overline{\Pi_k(\xi)u}\ =\ \Pi_k(-\xi)\overline{u},\qquad\forall u\in\mathcal{K}_0,\ \forall\xi\in\X^*,
\end{equation}
for any simple spectral band $J_k$ of $P_0$.
\end{enumerate}
\end{lemma}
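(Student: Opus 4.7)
The plan is to treat the three points in order, each one reducing to an elementary manipulation once the appropriate representation is chosen.

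\emph{Part (1).} For $\mathcal{K}_{s,\xi}$, I would view any element $u\in\mathscr{S}'_\Gamma(\X)$ through its Fourier series $u=\sum_{\gamma^*\in\Gamma^*}c_{\gamma^*}e^{i\langle\gamma^*,\cdot\rangle}$, so that $\|u\|_{\mathcal{K}_{s,\xi}}^2\simeq\sum_{\gamma^*}\langle\gamma^*+\xi\rangle^{2s}|c_{\gamma^*}|^2$ (as in Remark R.A.22). Since for a fixed $\xi\in\X^*$ one has the uniform equivalence $\langle\gamma^*+\xi\rangle\asymp\langle\gamma^*-\xi\rangle$ on $\Gamma^*$, the underlying set of distributions $\mathcal{K}_{s,\xi}=\mathcal{K}_{s,-\xi}$ is independent of the sign of $\xi$, with equivalent norms. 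The Fourier coefficients of $\bar u$ are $\bar c_{-\gamma^*}$, so $\|\bar u\|_{\mathcal{K}_{s,\xi}}^2\simeq\sum_{\gamma^*}\langle\gamma^*+\xi\rangle^{2s}|c_{-\gamma^*}|^2=\sum_{\gamma^*}\langle\gamma^*-\xi\rangle^{2s}|c_{\gamma^*}|^2\simeq\|u\|_{\mathcal{K}_{s,-\xi}}^2$, giving set-theoretic stability. For $\mathscr{F}_{s,\xi}$, I transfer via the unitary $\sigma_\xi:\mathcal{K}_{s,\xi}\to\mathscr{F}_{s,\xi}$ from Remark R.A.22: writing $v=\sigma_\xi u$ one has $\bar v=\sigma_{-\xi}\bar u$, and the identification $\mathscr{F}_{s,\xi}=\mathscr{F}_{s,-\xi}$ (same distributions, equivalent norms) follows from the same Fourier argument.

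\emph{Part (2).} Using the intertwining $\check P_0(\xi)=\sigma_{-\xi}\widehat P_0(\xi)\sigma_\xi$ from \eqref{7.8} together with the evident invariance $\widehat P_0(\xi+\gamma^*)=\widehat P_0(\xi)$ on $\mathscr{F}_{m,\xi+\gamma^*}=\mathscr{F}_{m,\xi}$, a direct computation gives
\begin{equation*}
\check P_0(\xi+\gamma^*)=\sigma_{-(\xi+\gamma^*)}\widehat P_0(\xi)\sigma_{\xi+\gamma^*}=\sigma_{-\gamma^*}\check P_0(\xi)\sigma_{\gamma^*}.
\end{equation*}
Because $e^{i\langle\gamma^*,\gamma\rangle}=1$ for all $\gamma\in\Gamma$, multiplication by the character $e^{i\langle\gamma^*,\cdot\rangle}$ preserves $\Gamma$-periodicity and defines a unitary of $\mathcal{K}_0=L^2(\mathbb{T})$. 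Hence $\check P_0(\xi+\gamma^*)$ and $\check P_0(\xi)$ are unitarily equivalent and share their spectrum; after the ordering convention of Remark 7.5 this yields $\lambda_j(\xi+\gamma^*)=\lambda_j(\xi)$ for all $j\ge1$.

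\emph{Part (3).} Working first on $\mathscr{S}(\mathcal{X})\cap\mathscr{S}'_\Gamma(\X)$, write
\begin{equation*}
\bigl(\check P_0(\xi)u\bigr)(x)=(2\pi)^{-d}\!\!\int_{\mathcal{X}\times\mathcal{X}^*}\!\!e^{i\langle\eta,x-y\rangle}\,p_0\bigl(\tfrac{x+y}{2},\eta+\xi\bigr)u(y)\,dy\,d\eta,
\end{equation*}
take the complex conjugate (using that $p_0$ is real-valued) and change the integration variable $\eta\mapsto-\eta$. The assumption $p_0(y,-\zeta)=p_0(y,\zeta)$ then converts $p_0(\frac{x+y}{2},-\eta+\xi)$ into $p_0(\frac{x+y}{2},\eta-\xi)$, which is precisely the symbol defining $\check P_0(-\xi)$, proving \eqref{7.9.e}. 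Density extends the identity to all $u\in\mathcal{K}_{m,\xi}$, which is legitimate by Part (1). Equality \eqref{7.9.f} follows by applying \eqref{7.9.e} to an eigenvector: $\check P_0(\xi)\phi=\lambda_j(\xi)\phi$ forces $\check P_0(-\xi)\bar\phi=\lambda_j(\xi)\bar\phi$ since $\lambda_j(\xi)\in\mathbb{R}$, so by multiplicity and the ordering convention $\lambda_j(-\xi)=\lambda_j(\xi)$. For \eqref{7.9.g} I apply \eqref{7.9.e} to the rank-one resolvent identity and deduce $\overline{(\check P_0(\xi)-\bar z)^{-1}u}=(\check P_0(-\xi)-z)^{-1}\bar u$; substituting in \eqref{7.11} and using that the contour $\mathscr{C}$ is symmetric under $z\mapsto\bar z$ (its orientation reversal compensating exactly for the conjugation of $-1/(2\pi i)$) yields $\overline{\Pi_k(\xi)u}=\Pi_k(-\xi)\bar u$.

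The only delicate points are bookkeeping: in Part (1) one must note that ``stability'' means set-theoretic, the two norms $\|\cdot\|_{\mathcal{K}_{s,\pm\xi}}$ being equivalent but not equal; and in the proof of \eqref{7.9.g} one must track carefully the interplay between conjugating the prefactor $-1/(2\pi i)$ and reversing the orientation of the self-conjugate contour $\mathscr{C}$, both of which contribute a sign that cancel.
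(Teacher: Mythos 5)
Your approach to parts (2) and (3) matches the paper's (very terse) treatment exactly: part (2) is obtained from the intertwining $\check P_0(\xi)=\sigma_{-\xi}\widehat P_0(\xi)\sigma_\xi$ together with the $\Gamma^*$-periodicity of $\widehat P_0$, and part (3) rests on the same change of variable $\eta\mapsto-\eta$ in the Weyl integral, the monotone ordering of the $\lambda_j$, and the contour formula \eqref{7.11}. Your contour bookkeeping for \eqref{7.9.g} is correct: conjugating the prefactor $-i/(2\pi)$ and reversing the orientation of the self-conjugate circle $\mathscr{C}$ each produce a sign, and they cancel.

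There is, however, a genuine error in your treatment of part (1), specifically the sentence asserting ``$\mathscr{F}_{s,\xi}=\mathscr{F}_{s,-\xi}$ (same distributions, equivalent norms).'' For $\mathcal{K}_{s,\xi}$ your argument is fine: by \eqref{A.21}--\eqref{A.22} all the $\mathcal{K}_{s,\xi}$ coincide with $\mathcal{H}^s(\mathbb{T})$ as sets, the norms are pairwise equivalent, and the Fourier coefficient identity $\widehat{\bar u}(\gamma^*)=\overline{\hat u(-\gamma^*)}$ shows $\|\bar u\|_{\mathcal{K}_{s,\xi}}=\|u\|_{\mathcal{K}_{s,-\xi}}$, so $\bar u\in\mathcal{K}_{s,\xi}$. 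But $\mathscr{F}_{s,\xi}$ is not the same set for different $\xi$ modulo $\Gamma^*$: by Remark \ref{R.A.17}, $\mathscr{F}_{s,\xi}=\mathscr{S}'_\xi(\X)\cap\mathcal{H}^s_{\text{loc}}(\X)$, where $\mathscr{S}'_\xi(\X)$ is defined by the quasi-periodicity condition $\tau_{-\gamma}u=e^{i\langle\xi,\gamma\rangle}u$. Complex conjugation reverses the sign of the character and so sends $\mathscr{F}_{s,\xi}$ \emph{onto} $\mathscr{F}_{s,-\xi}$, and when $2\xi\notin\Gamma^*$ one has $\mathscr{F}_{s,\xi}\cap\mathscr{F}_{s,-\xi}=\{0\}$, so these spaces are certainly not equal. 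In other words, $\mathscr{F}_{s,\xi}$ is \emph{not} stable under complex conjugation in general; conjugation is an antilinear isometry $\mathscr{F}_{s,\xi}\to\mathscr{F}_{s,-\xi}$. To be fair, the paper's own statement of part (1) asserts stability of $\mathscr{F}_{s,\xi}$ and dismisses the proof as ``follows directly from the definitions,'' but never actually invokes the $\mathscr{F}$ case: the applications in the proof of \eqref{7.9.f} and in Lemma \ref{L.7.10} use only the $\mathcal{K}_{s,\xi}$ stability (at $\xi=0$, or as a set-theoretic statement). So your argument, once the false identity on $\mathscr{F}_{s,\xi}$ is replaced by the corrected statement that conjugation maps $\mathscr{F}_{s,\xi}$ antilinearly onto $\mathscr{F}_{s,-\xi}$, supplies everything the subsequent arguments need.
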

\begin{proof} The first statement follows directly from the definitions \eqref{A.21} and \eqref{A.23}, while the second statement follows from Remark \ref{R.7.5}. 

As we know that $\check{P}_0(\xi)$ is induced by $P_{0,\xi}:=\mathfrak{Op}\big((\id\otimes\tau_{-\xi})p\big)$ on the Hilbert space $\mathcal{K}_0$, it is enough to prove that $\overline{P_{0,\xi}u}=P_{0,-\xi}\overline{u}$ for any $u\in\mathscr{S}(\X)$. In fact, for any $x\in\X$ we have that:
$$
\overline{\big(P_{0,\xi}u\big)(x)}\ =\ \int_\Xi e^{i<\eta,y-x>}p_0\big(\frac{x+y}{2},\xi+\eta\big)\overline{u(y)}\,dy\,\dbar\eta\ =
$$
$$
=\ \int_\Xi e^{i<\eta,x-y>}p_0\big(\frac{x+y}{2},\xi-\eta\big)\overline{u(y)}\,dy\,\dbar\eta\ =\ \int_\Xi e^{i<\eta,x-y>}p_0\big(\frac{x+y}{2},-\xi+\eta\big)\overline{u(y)}\,dy\,\dbar\eta\ =\ \big(P_{0,-\xi}\overline{u}\big)(x).
$$

Let us fix now some point $\xi\in\X^*$ and some vector $u\in\mathcal{K}_{m,\xi}$; following statement (1) of this Lemma and \eqref{7.9.e}, the vector $u$ is an eigenvector of $\check{P}_0(\xi)$ for the eigenvalue $\lambda_j(\xi)$ if and only if $\overline{u}$ is eigenvector of $\check{P}_0(-\xi)$ for the eigenvalue $\lambda_j(\xi)$. We deduce that $\{\lambda_j(-\xi)\}_{j\geq1}\,=\,\{\lambda_j(\xi)\}_{j\geq1}$; as both sequences are monotonuous we conclude that $\lambda_j(-\xi)=\lambda_j(\xi)$ for any $j\geq1$ so that we obtain \eqref{7.9.f}. 

Finally \eqref{7.9.g} follows from \eqref{7.9.e} and \eqref{7.11}.
\end{proof}

The next Lemma is very important for the construction in the Grushin problem under Hypothesis H.7; we shall prove it following the ideas in \cite{HS1}.
\begin{lemma}\label{L.7.10}
Supposing that Hypothesis H.7 is also satisfied and supposing that $p(y,-\eta)=p(y,\eta)$ for any $(y,\eta)\in\Xi$, we can construct a function $\phi$ having the following properties:
\begin{enumerate}
\item $\phi\in C^\infty(\Xi;\mathcal{K}_{lm,0})$ for any $l\in\mathbb{N}$.
\item $\phi(y+\gamma,\eta)\ =\ \phi(y,\eta),\quad\forall(y,\eta)\in\Xi,\ \forall\gamma\in\Gamma$.
\item $\phi(y,\eta+\gamma^*)\ =\ e^{-i<\gamma^*,y>}\phi(y,\eta),\quad\forall(y,\eta)\in\Xi,\ \forall\gamma^*\in\Gamma^*$.
\item $\left\|\phi(\cdot,\eta)\right\|_{\mathcal{K}_0}\ =\ 1,\qquad\forall\eta\in\X^*$.
\item $\overline{\phi(y,\eta)}\ =\ \phi(y,-\eta)\quad\forall(y,\eta)\in\Xi$.
\item $\phi(\cdot,\eta)\,\in\,\mathcal{N}_k(\eta)\,=\,\ker\big(\check{P}_0(\eta)\,-\,\lambda_k(\eta)\big),\quad\forall\eta\in\X^*$.
\end{enumerate}
\end{lemma}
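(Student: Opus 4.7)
The plan is to construct $\phi(\cdot,\eta)$ as a globally smooth unit section of the one-dimensional spectral line bundle $\eta\mapsto\mathcal{N}_k(\eta)$ over $\mathcal{X}^*$, then adjust its phase using the time-reversal symmetry to simultaneously enforce the reality condition (5) and the quasi-periodicity (3). Properties (2) and (6) will come for free, since elements of $\mathcal{K}_0$ are $\Gamma$-periodic in $y$ by definition and the formulas will land in $\mathcal{N}_k(\eta)$ by construction. Property (4) is an elementary normalization.

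The first step is a local construction that can be globalized. By Lemma \ref{L.7.3} the spectral projector $\Pi_k(\eta)$ defined in \eqref{7.11} is smooth in $\eta$ as a $\mathbb{B}(\mathcal{K}_0)$-valued function, and as in the proof of Lemma \ref{L.7.8}, for any reference unit vector $\psi\in\mathcal{N}_k(\eta_0)$ the formula $\Pi_k(\eta)\psi/\|\Pi_k(\eta)\psi\|_{\mathcal{K}_0}$ defines a smooth normalized eigensection on a neighborhood of $\eta_0$. Iterating the eigenvalue equation $\check P_0(\eta)\phi(\cdot,\eta)=\lambda_k(\eta)\phi(\cdot,\eta)$ and exploiting uniform ellipticity of $\check P_0(\eta)$ together with the smoothness of $\lambda_k$ (Lemma \ref{L.7.8}) upgrades the regularity to $C^\infty(V_{\eta_0};\mathcal{K}_{lm,0})$ for every $l$. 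Since $\mathcal{X}^*$ is contractible, the spectral line bundle over $\mathcal{X}^*$ is trivializable, so a partition of unity (or radial transport of $\Pi_k$) glues these patches into a smooth normalized section $\tilde\phi$ on all of $\mathcal{X}^*$ satisfying (1), (2), (4), (6). To incorporate (5), I would start the transport from a real unit vector $\psi_0\in\mathcal{N}_k(0)$, whose existence follows from the fact that \eqref{7.9.g} at $\eta=0$ makes the one-dimensional space $\mathcal{N}_k(0)$ stable under complex conjugation, so any of its generators can be multiplied by an appropriate phase to become real. Combined with \eqref{7.9.g}, radial transport from $\psi_0$ yields $\overline{\tilde\phi(y,-\eta)}=\tilde\phi(y,\eta)$.

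The main obstacle is enforcing the quasi-periodicity (3). The section $\tilde\phi$ above is only $\Gamma^*$-equivariant up to a smooth unitary cocycle
$$
\omega_{\gamma^*}(\eta):=\big(\sigma_{-\gamma^*}\tilde\phi(\cdot,\eta)\,,\,\tilde\phi(\cdot,\eta+\gamma^*)\big)_{\mathcal{K}_0}\in U(1),\qquad\gamma^*\in\Gamma^*,
$$
and replacing $\tilde\phi$ by $u(\eta)\tilde\phi(\cdot,\eta)$ for some smooth $u:\mathcal{X}^*\to U(1)$ will produce a section satisfying (3) exactly when $u(\eta+\gamma^*)=\omega_{\gamma^*}(\eta)u(\eta)$ for every $\gamma^*\in\Gamma^*$. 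Such a trivializing $u$ exists iff the cocycle is a coboundary, equivalently iff the first Chern class of the associated line bundle on $\mathbb{T}^{*d}$ vanishes.

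The hard step is thus the vanishing of this Chern class, and this is precisely where the symmetry hypothesis $p_0(y,-\eta)=p_0(y,\eta)$ is indispensable: by \eqref{7.9.g}, complex conjugation implements an anti-unitary bundle isomorphism between the spectral line bundle and its complex conjugate, covering the diffeomorphism $\eta\mapsto-\eta$ of $\mathbb{T}^{*d}$; consequently the first Chern class equals its own opposite and must vanish. The reality of $\tilde\phi$ already enforced in the previous step moreover translates into the compatibility $\omega_{\gamma^*}(-\eta-\gamma^*)=\overline{\omega_{\gamma^*}(\eta)}$, which allows the trivializing $u$ to be chosen with $u(-\eta)=\overline{u(\eta)}$, so that $\phi:=u\tilde\phi$ satisfies (3) and (5) at once while preserving (1), (2), (4) and (6).
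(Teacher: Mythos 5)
Your proposal takes a genuinely different route from the paper's proof. The paper proceeds by a completely explicit, elementary construction: it builds a continuous normalized eigensection by stepwise transport along one coordinate direction at a time (an induction on $d$), shows from the time-reversal symmetry that the correction phase $\kappa'(t')$ is $\Gamma^*$-periodic \emph{and} even (eqs.\ \eqref{7.24}--\eqref{7.25}), concludes from those two facts that all the winding integers $n(l')$ of \eqref{7.23} vanish, then patches the resulting equivariant section, mollifies in $\eta$ (with an even real mollifier so that (3) and (5) survive), reprojects, and finally bootstraps via ellipticity of $\check P_0(\eta)^l$ to obtain $C^\infty(\Xi;\mathcal{K}_{lm,0})$. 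Your argument replaces the induction and the phase bookkeeping with the classification of smooth line bundles by the first Chern class, and derives $c_1(L)=0$ on $\mathbb{T}^{*d}$ from the anti-unitary bundle isomorphism $L\cong(-1)^*\bar L$ together with the fact that $(-1)^*$ acts trivially on $H^2(\mathbb{T}^{*d};\mathbb{Z})=\Lambda^2\mathbb{Z}^d$. Both approaches exploit exactly the same input (the evenness of $p_0$, hence \eqref{7.9.g}); yours is shorter and more conceptual at the cost of invoking the topological classification, whereas the paper's is self-contained and produces the real-compatible trivialization as a byproduct rather than as a separate step.

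Two technical points deserve correction. First, the cocycle symmetry you wrote, $\omega_{\gamma^*}(-\eta-\gamma^*)=\overline{\omega_{\gamma^*}(\eta)}$, is not the one that follows once $\tilde\phi$ has been made real-symmetric: conjugating $\sigma_{-\gamma^*}\tilde\phi(\cdot,\eta)=\omega_{\gamma^*}(\eta)\tilde\phi(\cdot,\eta+\gamma^*)$ and using $\overline{\tilde\phi(\cdot,\eta)}=\tilde\phi(\cdot,-\eta)$ gives $\omega_{\gamma^*}(-\eta-\gamma^*)=\omega_{\gamma^*}(\eta)$ \emph{without} the conjugate. This does not invalidate the conclusion, but the passage from ``the bundle is trivial'' to ``a trivializing $u$ can be chosen with $u(-\eta)=\overline{u(\eta)}$'' is a separate, nontrivial step that your write-up only asserts: given any smooth trivialization $u_0$, set $v(\eta):=\overline{u_0(-\eta)}/u_0(\eta)$, check that $v$ is an even $\Gamma^*$-periodic unimodular function, deduce that its winding numbers on $\mathbb{T}^{*d}$ vanish (here the evenness is used a second time, exactly the paper's step from \eqref{7.23} to \eqref{7.25}), lift $v=e^{i\theta}$ with $\theta$ even and $\Gamma^*$-periodic, and replace $u_0$ by $e^{i\theta/2}u_0$. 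Second, a partition of unity by itself does not glue local nonvanishing eigensections into a global nonvanishing one (partial sums can cancel); one should instead simply invoke that a line bundle over the contractible space $\mathcal{X}^*$ is trivial, choose a global nonvanishing smooth section, and normalize. With these repairs the argument is sound.
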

\begin{proof}
First we shall construct a function $\phi\in C\big(\X^*;\mathcal{K}_0\big)$ that satisfies properties (2)-(6).

We begin by recalling that $\Pi_k\in C^\infty\big(\X;\mathbb{B}(\mathcal{K}_0)\big)$ and deducing that there exists some $\delta>0$ such that for any pair of points $(\xi^\prime,\xi^{\prime\prime})\in[\overline{E^*}]^2$ with $|\xi^\prime-\xi^{\prime\prime}|<\delta$ the following estimation is true:
\begin{equation}\label{7.12}
\left\|\Pi_k(\xi^\prime)\,-\,\Pi_k(\xi^{\prime\prime})\right\|_{\mathbb{B}(\mathcal{K}_0)}\ <\ \frac{1}{2}.
\end{equation}

We decompose the vectors of $\X^*$ with respect to the dual basis $\{e^*_j\}_{1\leq j\leq d}$ associated to $\Gamma$, and write $\xi=t_1e^*_1+\dots+t_de^*_d$ or $\xi=(t_1,\ldots,t_d)$ for any $\xi\in\X^*$. In particular we have that $\xi\in E^*$ if and only if $-(1/2)\leq t_j<(1/2)$ for any $j=1,\ldots,d$. The construction of the function $\phi$ will be done by induction on the number $d$ of variables $(t_1,\ldots,t_d)$.

We start with the case $d=1$, or equivalently, we consider only momenta of the type $(t,0)$ with $t\in\mathbb{R}$ and $0\in\mathbb{R}^{d-1}$. We choose some $n\in\mathbb{N}^*$ such that $n^{-1}<\delta$ and consider a division of the interval $[0,1/2]\subset\mathbb{R}$:
$$
0=\tau_0<\tau_1<\ldots<\tau_n=1/2,\quad\tau_j=\frac{j}{2n},\ \forall0\leq j\leq n.
$$

Let us fix some vector $\psi(0)\in\mathcal{N}_k((0,0)$ such that $\|\psi(0)\|_{\mathcal{K}_0}=1$ and $\overline{\psi(0)}=\psi(0)$; this last property may be realized noticing that Lemma \ref{L.7.9} implies that if $v\in\mathcal{N}_k((0,0))$ then also $\overline{v}\in\mathcal{N}_k((0,0))$ and if $\|\psi(0)\|_{\mathcal{K}_0}=1$ then it exists $f\in\mathbb{R}$ such that $\overline{\psi(0)}=e^{if}\psi(0)$ and we can replace $\psi(0)$ by $e^{-if/2}\psi(0)$. We use now \eqref{7.12} and deduce that
$$
\left\|\Pi_k((t,0))\psi(0)\right\|_{\mathcal{K}_0}\ \geq\ \frac{1}{2},\quad\forall t\in[0,\tau_1].
$$
Thus we can define:
\begin{equation}\label{7.14}
\psi(t)\ :=\ \left\|\Pi_k((t,0))\psi(0)\right\|_{\mathcal{K}_0}^{-1}\Pi_k((t,0))\psi(0),\quad\forall t\in[0,\tau_1].
\end{equation}
We notice that this function verifies the following properties:
\begin{equation}\label{7.13}
\psi\in C\big([0,\tau_1];\mathcal{K}_0\big);\quad\overline{\psi(0)}=\psi(0);\quad\|\psi(t)\|_{\mathcal{K}_0}=1,\ \forall t\in[0,\tau_1];\quad\psi(t)\in\mathcal{N}_k((t,0)),\ \forall  t\in[0,\tau_1].
\end{equation}

But let us notice that we can use \eqref{7.12} once more starting with $\psi(\tau_1)$, noticing that we also have
$$
\left\|\Pi_k((t,0))\psi(\tau_1)\right\|_{\mathcal{K}_0}\ \geq\ \frac{1}{2},\quad\forall t\in[\tau_1,\tau_2].
$$
and defining
\begin{equation}\label{7.14.b}
\psi(t)\ :=\ \left\|\Pi_k((t,0))\psi(\tau_1)\right\|_{\mathcal{K}_0}^{-1}\Pi_k((t,0))\psi(\tau_1),\quad\forall t\in[\tau_1,\tau_2].
\end{equation}
We notice that $\underset{t\searrow\tau_1}{\lim}\psi(t)=\psi(\tau_1)$ and conclude that the function $\psi:[0,\tau_2]\rightarrow\mathcal{K}_0$, defined by \eqref{7.14} and \eqref{7.14.b} is continuous in $t=\tau_1$ and also verifies
\begin{equation}\label{7.13.2}
\psi\in C\big([0,\tau_2];\mathcal{K}_0\big);\quad\overline{\psi(0)}=\psi(0);\quad\|\psi(t)\|_{\mathcal{K}_0}=1,\ \forall t\in[0,\tau_2];\quad\psi(t)\in\mathcal{N}_k((t,0)),\ \forall  t\in[0,\tau_2].
\end{equation}

Continuing in this way, after a finite number of steps ($n$ steps) we obtain a function $\psi:[0,(1/2)]\rightarrow\mathcal{K}_0$ verifying the properties:
\begin{equation}\label{7.13.n}
\psi\in C\big([0,1/2];\mathcal{K}_0\big);\quad\overline{\psi(0)}=\psi(0);\quad\|\psi(t)\|_{\mathcal{K}_0}=1,\ \forall t\in[0,1/2];\quad\psi(t)\in\mathcal{N}_k((t,0)),\ \forall  t\in[0,1/2].
\end{equation}

We extend now this function to the interval $[-(1/2),(1/2)]$ by defining $\psi(-t):=\overline{\psi(t)}$ for any $t\in[0,1/2]$. It verifies the properties:
\begin{equation}\label{7.13.-}
\psi\in C\big([-1/2,1/2];\mathcal{K}_0\big);\qquad\psi(t)\in\mathcal{N}_k((t,0)),\ \forall  t\in[-1/2,1/2].
\end{equation}
The second property above follows easily from \eqref{7.9.e}.

We take now into account the second statement of the Lemma \ref{L.7.9} and notice that it implies the equality
$$
\check{P}_0((1/2,0))\ =\ \sigma_{-e^*_1}\check{P}_0((-1/2,0))\sigma_{e^*_1}
$$
and from that we deduce that $\sigma_{-e^*_1}\psi(-1/2)$ is an eigenvector of $\check{P}_0((1/2,0))$ for the eigenvalue $\lambda_k((-1/2,0))=\lambda_k((1/2,0))$ (following \eqref{7.9.f}). We deduce that it exists $\kappa\in\mathbb{R}$ such that
$$
\sigma_{-e^*_1}\psi(-1/2)\ =\ e^{i\kappa}\psi(1/2).
$$

Let us define now $\phi(t):=e^{it\kappa}\psi(t)$ for any $t\in[-1/2,1/2]$. It will evidently have the following properties:
\begin{equation}\label{7.phi}
\phi\in C\big([-1/2,1/2];\mathcal{K}_0\big);\quad\phi(t)\in\mathcal{N}_k((t,0)),\ \forall  t\in[-1/2,1/2];\quad\phi(1/2)=\sigma_{-e^*_1}\phi(-1/2).
\end{equation}
We extend this function to $\mathbb{R}$ by the following reccurence relation:
\begin{equation}\label{7.16}
\phi(t)\ :=\ \sigma_{-e^*_1}\phi(t-1),\qquad\forall t\in[j-1/2,j+1/2],\ j\in\mathbb{Z}.
\end{equation}
We obtain a function $\phi:\mathbb{R}\rightarrow\mathcal{K}_0$ verifying the following properties:
\begin{equation}\label{7.17}
\left\{
\begin{array}{l}
\phi\,\in\,C\big(\mathbb{R};\mathcal{K}_0\big).\\
\phi(t+l)\ =\ \sigma_{-le^*_1}\,\phi(t),\qquad\forall t\in\mathbb{R},\ \forall l\in\mathbb{Z}.\\
\|\phi(t)\|_{\mathcal{K}_0}\ =\ 1,\qquad\forall t\in\mathbb{R}.\\
\overline{\phi(t)}\ =\ \phi(-t),\qquad\forall t\in\mathbb{R}.\\
\phi(t)\,\in\,\mathcal{N}_k((t,0)),\qquad\forall t\in\mathbb{R}.
\end{array}
\right.
\end{equation}

Let us denote now by $t^\prime:=(t_1,\ldots,t_{d-1})\in\mathbb{R}^{d-1}$ and suppose by hypothesis that we have constructed a function $\psi:\mathbb{R}^{d-1}\rightarrow\mathcal{K}_0$ satisfying the following properties:
\begin{equation}\label{7.18}
\left\{
\begin{array}{l}
\psi\,\in\,C\big(\mathbb{R}^{d-1};\mathcal{K}_0\big).\\
\psi(t^\prime+l^\prime)\ =\ \sigma_{-<e^{*\prime,l^\prime>}}\,\psi(t^\prime),\qquad\forall t^\prime\in\mathbb{R}^{d-1},\ \forall l^\prime\in\mathbb{Z}^{d-1}.\\
\|\psi(t^\prime)\|_{\mathcal{K}_0}\ =\ 1,\qquad\forall t^\prime\in\mathbb{R}^{d-1}.\\
\overline{\psi(t^\prime)}\ =\ \psi(-t^\prime),\qquad\forall t^\prime\in\mathbb{R}^{d-1}.\\
\psi(t^\prime)\,\in\,\mathcal{N}_k((t^\prime,0)),\qquad\forall t^\prime\in\mathbb{R}^{d-1}.
\end{array}
\right.
\end{equation}
We want to construct now a function $\phi:\mathbb{R}^d\rightarrow\mathcal{K}_0$ satisfying the same properties as $\psi$ above (with $d-1$ replaced by $d$). For doing that let us introduce the following notations $t=(t^\prime,t_d)$ with $t^\prime=(t_1,\ldots,t_{d-1})\in\mathbb{R}^{d-1}$ for any $t=(t_1,\ldots,t_d)\in\mathbb{R}^d$.

We start by defining first $\widetilde{\psi}(t^\prime,0):=\psi(t^\prime)$ for any $t^\prime\in\mathbb{R}^{d-1}$. Repeating the argumet at the beginning of this proof, we extend step by step our function $\widetilde{\psi}$ on subsets of the form $\mathbb{R}^{d-1}\times[\tau_j,\tau_{j+1}]$ with $0\leq j\leq n-1$. Finally we extend it to the subset $\mathbb{R}^{d-1}\times[-1/2,1/2]$ by the definition $\widetilde{\psi}(t^\prime,-t_d):=\overline{\widetilde{\psi}(-t^\prime,t_d)}$ for any $t^\prime\in\mathbb{R}^{d-1}$ and any $t_d\in[0,1/2]$.
We obtain in this way a function $\widetilde{\psi}:\mathbb{R}^{d-1}\times[-1/2,1/2]\rightarrow\mathcal{K}_0$ verifying the properties:
\begin{equation}\label{7.19}
\left\{
\begin{array}{l}
\widetilde{\psi}\,\in\,C\big(\mathbb{R}^{d-1}\times[-1/2,1/2];\mathcal{K}_0\big).\\
\widetilde{\psi}(t^\prime+l^\prime,t_d)\ =\ \sigma_{-<e^{*\prime},l^\prime>}\,\widetilde{\psi}(t^\prime,t_d),\qquad\forall t^\prime\in\mathbb{R}^{d-1},\ \forall l^\prime\in\mathbb{Z}^{d-1},\ \forall t_d\in[-1/2,1/2].\\
\|\widetilde{\psi}(t^\prime,t_d)\|_{\mathcal{K}_0}\ =\ 1,\qquad\forall t^\prime\in\mathbb{R}^{d-1},\ \forall t_d\in[-1/2,1/2].\\
\overline{\widetilde{\psi}(t^\prime,t_d)}\ =\ \widetilde{\psi}(-t^\prime,-t_d),\qquad\forall t^\prime\in\mathbb{R}^{d-1},\ \forall t_d\in[-1/2,1/2].\\
\widetilde{\psi}(t^\prime,t_d)\,\in\,\mathcal{N}_k((t^\prime,0)),\qquad\forall t^\prime\in\mathbb{R}^{d-1},\ \forall t_d\in[-1/2,1/2].
\end{array}
\right.
\end{equation}

Now we come once more to the second statement of the Lemma \ref{L.7.9} and notice that:
$$
\check{P}_0((t^\prime,1/2))\ =\ \sigma_{-e^*_d}\check{P}_0((t^\prime,-1/2))\sigma_{e^*_d},\qquad\forall t^\prime\in\mathbb{R}^{d-1}.
$$
We deduce that $\sigma_{-e^*_d}\widetilde{\psi}(t^\prime,-1/2)$ is a normed eigenvector of $\check{P}_0((t^\prime,1/2))$ for the eigenvalue $\lambda_k((t^\prime,-1/2))=\lambda_k((t^\prime,1/2))$ (we made use of point (3) in Lemma \ref{L.7.9}). We conclude that it exists a function $\kappa^\prime:\mathbb{R}^{d-1}\rightarrow\mathbb{R}$ such that
\begin{equation}\label{7.21}
\sigma_{-e^*_d}\widetilde{\psi}(t^\prime,-1/2)\ =\ e^{i\kappa^\prime(t^\prime)}\widetilde{\psi}(t^\prime,1/2),\qquad\forall t^\prime\in\mathbb{R}^{d-1}.
\end{equation}
From the continuity of the function $\widetilde{\psi}\in C\big(\mathbb{R}^{d-1}\times[-1/2,1/2];\mathcal{K}_0\big)$ we deduce the continuity of the function $e^{i\kappa^\prime}:\mathbb{R}^{d-1}\rightarrow\mathbb{U}(1)$ and also of the function $\kappa^\prime:\mathbb{R}^{d-1}\rightarrow\mathbb{R}$. We take into account now \eqref{7.21} and the second equality in \eqref{7.19} with $t^\prime\in\mathbb{R}^{d-1}$ and $l^\prime\in\mathbb{Z}^{d-1}$ and get
$$
e^{i\kappa^\prime(t^\prime+l^\prime)}\widetilde{\psi}(t^\prime+l^\prime,1/2)\ =\ \sigma_{-e^*_d}\widetilde{\psi}(t^\prime+l^\prime,-1/2)\ =\ \sigma_{-e^*_d}\sigma_{-<e^{*\prime},l^\prime>}\widetilde{\psi}(t^\prime,-1/2)\ =
$$
$$
=\ \sigma_{-<e^{*\prime},l^\prime>}e^{i\kappa^\prime(t^\prime)}\widetilde{\psi}(t^\prime,1/2)\ =\ e^{i\kappa^\prime(t^\prime)}\widetilde{\psi}(t^\prime+l^\prime,1/2).
$$
It follows that $e^{i[\kappa^\prime(t^\prime+l^\prime)-\kappa^\prime(t^\prime)]}=1$ and the function $\kappa^\prime:\mathbb{R}^{d-1}\rightarrow\mathbb{R}$ being continuous we deduce that for any $l^\prime\in\mathbb{Z}^{d-1}$ there exists $n(l^\prime)\in\mathbb{Z}$ such that 
\begin{equation}\label{7.23}
\kappa^\prime(t^\prime+l^\prime)\,-\,\kappa^\prime(t^\prime)\ =\ 2\pi n(l^\prime),\qquad\forall t^\prime\in\mathbb{R}^{d-1}.
\end{equation}
But from \eqref{7.21} we deduce that 
$$
\sigma_{-e^*_d}\widetilde{\psi}(-t^\prime,-1/2)\ =\ e^{i\kappa^\prime(-t^\prime)}\widetilde{\psi}(-t^\prime,1/2),\qquad\forall t^\prime\in\mathbb{R}^{d-1}.
$$
After complex conjugation and making use of \eqref{7.19} we get
$$
\sigma_{e^*_d}\widetilde{\psi}(t^\prime,1/2)\ =\ e^{-i\kappa^\prime(-t^\prime)}\widetilde{\psi}(t^\prime,-1/2),\qquad\forall t^\prime\in\mathbb{R}^{d-1},
$$
or equivalently
$$
\sigma_{-e^*_d}\widetilde{\psi}(t^\prime,-1/2)\ =\ e^{i\kappa^\prime(-t^\prime)}\widetilde{\psi}(t^\prime,1/2),\qquad\forall t^\prime\in\mathbb{R}^{d-1}.
$$
We use once again \eqref{7.21} in order to obtain the equality $e^{i\kappa^\prime(-t^\prime)}=e^{i\kappa^\prime(t^\prime)}$, or equivalently the relation $\kappa^\prime(-t^\prime)-\kappa^\prime(t^\prime)\in2\pi\mathbb{Z}$. As the function $\kappa^\prime$ is continuous, we conclude that the difference $\kappa^\prime(-t^\prime)-\kappa^\prime(t^\prime)$ must be constant; as in $t^\prime=0$ this difference is by definition $0$ we conclude that 
\begin{equation}\label{7.24}
\kappa^\prime(-t^\prime)=\kappa^\prime(t^\prime),\qquad\forall t^\prime\in\mathbb{R}^{d-1}.
\end{equation}

We consider now \eqref{7.23} and notice that for any $l^\prime\in\mathbb{Z}^{d-1}$ we can choose the point $t^\prime:=-(1/2)l^\prime\in\mathbb{R}^{d-1}$ verifying the relation $t^\prime+l^\prime=-t^\prime$ and thus we conclude that $n(l^\prime)=0$ for any $l^\prime\in\mathbb{Z}^{d-1}$ obtaining the equalities
\begin{equation}\label{7.25}
\kappa^\prime(t^\prime+l^\prime)\ =\ \kappa^\prime(t^\prime),\qquad\forall t^\prime\in\mathbb{R}^{d-1},\ \forall l^\prime\in\mathbb{Z}^{d-1}.
\end{equation}

We can now define $\widetilde{\phi}:\mathbb{R}^{d-1}\times[-1/2.1/2]\rightarrow\mathcal{K}_0$ by the following equality:
\begin{equation}\label{7.26}
\widetilde{\phi}(t^\prime,t_d)\ :=\ e^{i\kappa^\prime(t^\prime)t_d}\widetilde{\psi}(t^\prime,t_d),\qquad\forall(t^\prime,t_d)\in\mathbb{R}^{d-1}\times[-1/2,1/2].
\end{equation}
From \eqref{7.24} and \eqref{7.25} the function $\widetilde{\phi}$ has all the properties \eqref{7.19} and it also has the following property:
\begin{equation}\label{7.20}
\widetilde{\phi}(t^\prime+l^\prime,1/2)\ =\ \sigma_{<e^{*\prime},l^\prime>}\sigma_{-e^*_d}\widetilde{\phi}(t^\prime,-1/2),\qquad\forall t^\prime\in\mathbb{R}^{d-1},\ \forall l^\prime\in\mathbb{Z}^{d-1}.
\end{equation}
In fact we see that
$$
\widetilde{\phi}(t^\prime+l^\prime,1/2)\ =\ e^{i\kappa^\prime(t^\prime)/2}\widetilde{\psi}(t^\prime+l^\prime,1/2)\ =\ e^{i\kappa^\prime(t^\prime)/2}\sigma_{-<e^{*\prime},l^\prime>}\widetilde{\psi}(t^\prime,1/2)\ =\ e^{-i\kappa^\prime(t^\prime)/2}\sigma_{-<e^{*\prime},l^\prime>}\sigma_{-e^*_d}\widetilde{\psi}(t^\prime,-1/2)\ =
$$
$$
=\ \sigma_{-<e^{*\prime},l^\prime>}\sigma_{-e^*_d}\widetilde{\phi}(t^\prime,-1/2).
$$

As in the case $d=1$ we extend the function $\widetilde{\phi}$ to the entire $\mathbb{R}^d$ by the following relation:
\begin{equation}\label{7.27}
\widetilde{\phi}(t^\prime,t_d)\ :=\ \sigma_{-e^*_d}\widetilde{\phi}(t^\prime,t_d-1),\qquad\forall t^\prime\in\mathbb{R}^{d-1}\ \forall t_d\in[j-1/2,j+1/2],\ j\in\mathbb{Z}.
\end{equation}
We evidently obtain a function of class $C\big(\X^*;\mathcal{K}_0\big)$ satisfying the properties (2)-(6) in our Lemma. 

We end up our construction by a regularization procedure. Let us choose a real, non-negative, even function $\chi\in C^\infty_0(\X^*)$ and such that $\int_{\X^*}\chi(t)dt=1$. For any $\delta>0$ we denote by $\chi_\delta(\xi):=\delta^{-d}\chi(\xi/\delta)$ and define:
\begin{equation}\label{7.28}
\widetilde{\phi}_\delta(\xi)\ :=\ \int_{\X^*}\chi_\delta(\xi-\eta)\widetilde{\phi}(\eta)\,d\eta,\qquad\forall\xi\in\X^*.
\end{equation}
Evidently $\widetilde{\phi}_\delta\in C^\infty\big(\X^*;\mathcal{K}_0\big)$ for any $\delta>0$. Moreover we also have that:
\begin{equation}\label{7.29}
\widetilde{\phi}_\delta(\xi+\gamma^*)\ =\ \int_{\X^*}\chi_\delta(\xi+\gamma^*-\eta)\widetilde{\phi}(\eta)\,d\eta\ =\ \int_{\X^*}\chi_\delta(\xi-\eta)\widetilde{\phi}(\eta+\gamma^*)\,d\eta\ =\ \int_{\X^*}\chi_\delta(\xi-\eta)\sigma_{-\gamma^*}\widetilde{\phi}(\eta)\,d\eta\ =
\end{equation}
$$
=\ \sigma_{-\gamma^*}\widetilde{\phi}_\delta(\xi),\qquad\forall\xi\in\X^*,\ \forall\gamma^*\in\Gamma^*;
$$
\begin{equation}\label{7.30}
\overline{\widetilde{\phi}_\delta(\xi)}\ =\ \int_{\X^*}\chi_\delta(\xi-\eta)\overline{\widetilde{\phi}(\eta)}\,d\eta\ =\ \int_{\X^*}\chi_\delta(\xi-\eta)\widetilde{\phi}(-\eta)\,d\eta\ =
\end{equation}
$$
=\ \int_{\X^*}\chi_\delta(\xi+\eta)\widetilde{\phi}(\eta)\,d\eta\ =\ \int_{\X^*}\chi_\delta(-\xi-\eta)\widetilde{\phi}(\eta)\,d\eta\ =\ \widetilde{\phi}_\delta(-\xi),\qquad\forall\xi\in\X^*.
$$
We shall prove now the following continuity relation:
\begin{equation}\label{7.31}
\forall\theta>0,\ \exists\delta_0>0:\ \|\widetilde{\phi}_\delta(\xi)\,-\,\widetilde{\phi}(\xi)\|_{\mathcal{K}_0}\leq\,\theta,\ \forall\delta\in[0,\delta_0],\quad\forall\xi\in\X^*.
\end{equation}
In fact, let us notice that:
$$
\|\widetilde{\phi}_\delta(\xi)\,-\,\widetilde{\phi}(\xi)\|_{\mathcal{K}_0}\ \leq\ \int_{\X^*}\|\widetilde{\phi}(\xi-\delta\eta)\,-\,\widetilde{\phi}(\xi)\|_{\mathcal{K}_0}|\chi(\eta)|\,d\eta\ \leq\ C\int_{\text{\sf supp}\chi}\|\widetilde{\phi}(\xi-\delta\eta)\,-\,\widetilde{\phi}(\xi)\|_{\mathcal{K}_0}\,d\eta.
$$
The function $\widetilde{\phi}$ being continuous and satisfying the relation $\widetilde{\phi}(\xi+\gamma^*)=\sigma_{-\gamma^*}\widetilde{\phi}(\xi)$ for all $\gamma^*\in\Gamma^*$ and for all $\xi\in\X^*$ we conclude that it is uniformly continuous on $\X^*$ and the support of $\chi$ being compact, the above estimation implies the stated continuity in $\delta>0$. Thus we conclude that for $\delta\in[0,\delta_0]$ with $\delta_0>0$ small enough we have 
$$
\left\|\Pi_k(\xi)\widetilde{\phi}_\delta(\xi)\right\|_{\mathcal{K}_0}\ \geq\ \frac{1}{2},\qquad\forall\xi\in\X^*
$$
and we can define:
$$
\phi(\xi)\ :=\ \|\Pi_k(\xi)\widetilde{\phi}_{\delta_0}(\xi)\|_{\mathcal{K}_0}^{-1}\Pi_k(\xi)\widetilde{\phi}_{\delta_0}(\xi),\qquad\forall\xi\in\X^*.
$$

It remains to prove that $\phi\in C^\infty\big(\X^*;\mathcal{K}_{lm,0}\big)$ for any $l\in\mathbb{N}$ because all the other properties from the statement of our Lemma are clearly satisfied considering the definition of $\phi$. The case $l=0$ is also clear from the definition.

Let us notice that $\check{P}_0\in C^\infty\big(\X^*;\mathbb{B}(\mathcal{K}_{m,0};\mathcal{K}_0)\big)$ and we know that for any $\xi_0\in\X^*$ there exists a constant $C_0>0$ such that 
$$
\|u\|_{\mathcal{K}_{m,0}}\ \leq\ C_0\left(\left\|\check{P}_0(\xi_0)u\right\|_{\mathcal{K}_0}\,+\,\|u\|_{\mathcal{K}_0}\right),\qquad\forall u\in\mathcal{K}_{m,0}
$$
and evidently
$$
\|u\|_{\mathcal{K}_{m,0}}\ \leq\ C_0\left(\left\|\check{P}_0(\xi)u\right\|_{\mathcal{K}_0}\,+\,\left\|\big(\check{P}_0(\xi_0)\,-\,\check{P}_0(\xi)\big)u\right\|_{\mathcal{K}_0}\,+\,\|u\|_{\mathcal{K}_0}\right),\ \forall\xi\in\X^*,\qquad\forall u\in\mathcal{K}_{m,0}.
$$
Choosing a sufficiently small neighborhood $V_0$ of $\xi_0\in\X^*$ we deduce that it exists $C^\prime_0>0$ such that
$$
\|u\|_{\mathcal{K}_{m,0}}\ \leq\ C^\prime_0\left(\left\|\check{P}_0(\xi)u\right\|_{\mathcal{K}_0}\,+\,\|u\|_{\mathcal{K}_0}\right),\ \forall\xi\in V_0,\qquad\forall u\in\mathcal{K}_{m,0}.
$$
Thus, with $c\in\mathbb{R}$ the constant introduced in the proof of Lemma \ref{L.7.6}, if we take $u:=\big(\check{P}_0(\xi)\,-\,c\big)^{-1}v$ for some $v\in\mathcal{K}_0$ we obtain that
$$
\left\|\big(\check{P}_0(\xi)\,-\,c\big)^{-1}v\right\|_{\mathcal{K}_{m,0}}\ \leq\ C^{\prime\prime}_0\|v\|_{\mathcal{K}_0},\ \forall\xi\in V_0,\qquad\forall v\in\mathcal{K}_0.
$$
We conclude that $\big(\check{P}_0(\xi)\,-\,c\big)^{-1}\in\mathbb{B}(\mathcal{K}_0;\mathcal{K}_{m,0})$ uniformly for $\xi$ in any compact subset of $\X^*$.

Considering now the proof of Lemma \ref{L.7.3} once again we obtain that in fact $\phi\in C^\infty\big(\X^*;\mathcal{K}_{m,0}\big)$ and we get the case $l=1$.

In order to obtain the general situation $l\in\mathbb{N}$ let us notice that for any $\xi\in\X^*$ we know that $\phi(\xi)\in\mathcal{D}\big(\check{P}_0(\xi)^l\big)$ and $\check{P}_0(\xi)^l\phi(\xi)=\lambda_k(\xi)^l\phi(\xi)$ for any $l\in\mathbb{N}$. We also have that $\check{P}_0(\cdot)^l\in C^\infty\big(\X^*;\mathbb{B}(\mathcal{K}_0;\mathcal{K}_{lm,0})\big)$. The fact that $\check{P}_0(\xi)^l$ is induced by the operator $P_{0,\xi}^l$ on the Hilbert space of tempered distributions $\mathcal{K}_0$ and we know that $P_0^l$ is elliptic of order $lm$, allows us to conclude by an argument similar to the one above, that $\phi\in C^\infty\big(\X^*;\mathcal{K}_{lm,0}\big)$ for any $l\in\mathbb{N}$.
\end{proof}

\begin{remark}\label{R.7.11}
The arguments used in the proof of Lemma \ref{L.3.7} allow to deduce from properties (1) - (3) from Lemma \ref{L.7.10} that for any $\alpha\in\mathbb{N}^d$ and for any $s\in\mathbb{R}$ there exists a constant $C_{\alpha,s}>0$ such that:
\begin{equation}\label{7.32}
\left\|\big(\partial^\alpha_\xi\phi\big)(\cdot,\xi)\right\|_{\mathcal{K}_{s,\xi}}\ \leq\ C_{\alpha,s},\quad\forall\xi\in\X^*.
\end{equation}
\end{remark}

\begin{description}
\item[Proof of Proposition \ref{P.0.3}] 
We shall repeat the construction of the Grushin operator \eqref{3.34} from Section \ref{S.3} under the Hypothesis of Proposition \ref{P.0.3}. We shall prove that in this case we can take $N=1$ and $\phi_1(x,\xi)=\phi(x,\xi)$ the function obtained in Lemma \ref{L.7.10}. Due to Lemma \ref{L.7.10} and Remark \ref{R.7.11} this function has all the properties needed in Lemma \ref{L.3.7}. It is thus possible to obtain the operator $\mathcal{P}_0(\xi,\lambda)$ and the essential problem is to prove its invertibility in order to obtain a result similar to Proposition \ref{P.3.8}. Frm that point the proof of Proposition \ref{P.0.3} just repeats the arguments of Section \ref{S.4}.

As in Section \ref{S.3}, for any $\xi\in\X^*$ we construct the operators:
\begin{equation}\label{7.33}
R_+(\xi)\,\in\,\mathbb{B}(\mathcal{K}_0,\mathbb{C}),\quad R_+(\xi)u\ :=\ \left(u,\phi(\cdot,\xi)\right)_{\mathcal{K}_0},\qquad\forall u\in\mathcal{K}_0,
\end{equation}
\begin{equation}\label{7.34}
R_-(\xi)\,\in\,\mathbb{B}(\mathbb{C},\mathcal{K}_0),\quad R_-(\xi)c\ :=\ c\phi(\cdot,\xi),\qquad\forall c\in\mathbb{C}.
\end{equation}
It is evident that $R_+\in S^0_0\big(\X;\mathbb{B}(\mathcal{K}_{m,\xi};\mathbb{C})\big)$ and $R_-\in S^0_0\big(\X;\mathbb{B}(\mathbb{C};\mathcal{K}_{0})\big)$ and from Example \ref{E.A.20} we deduce that $\check{P}_0(\cdot)\,-\,\lambda\id\in S^0_0\big(\X;\mathbb{B}(\mathcal{K}_{m,\xi};\mathcal{K}_0)\big)$ uniformly for $\lambda\in I$.

For any $(\xi,\lambda)\in\X^*\times I$ we define
\begin{equation}\label{7.35}
\mathcal{P}_0(\xi,\lambda)\ :=\ \left(
\begin{array}{cc}
\check{P}_0(\xi)&R_-(\xi)\\
R_+(\xi)&0
\end{array}
\right)\,\in\,\mathbb{B}\big(\mathcal{K}_{m,\xi}\times\mathbb{C};\mathcal{K}_0\times\mathbb{C}\big).
\end{equation}
We denote by $\mathcal{A}_\xi:=\mathcal{K}_{m,\xi}\times\mathbb{C}$ and by $\mathcal{B}_\xi:=\mathcal{K}_0\times\mathbb{C}$ and we notice that
\begin{equation}\label{7.36}
\mathcal{P}_0(\cdot,\lambda)\,\in\,S^0_0\big(\X;\mathbb{B}(\mathcal{A}_\bullet;\mathcal{B}_\bullet\big)\ \text{uniformly in }\lambda\in I.
\end{equation}
In order to construct an inverse for $\mathcal{P}_0(\xi,\lambda)$ we make the following choices:
\begin{equation}\label{7.38}
E^0_+(\xi,\lambda)\,\in\,\mathbb{B}(\mathbb{C};\mathcal{K}_{m,\xi}),\quad E^0_+(\xi,\lambda)c\ :=\ c\phi(\cdot,\xi),\qquad\forall c\in\mathbb{C},
\end{equation}
\begin{equation}\label{7.39}
E^0_-(\xi,\lambda)\,\in\,\mathbb{B}(\mathcal{K}_0;\mathbb{C}),\quad E^0_-(\xi,\lambda)u\ :=\ \left(u,\phi(\cdot,\xi)\right)_{\mathcal{K}_0},\qquad\forall u\in\mathcal{K}_0,
\end{equation}
\begin{equation}\label{7.40}
E^0_{-+}(\xi,\lambda)\,\in\,\mathbb{B}(\mathbb{C}),\quad E^0_{-+}(\xi,\lambda)c\ :=\ \big(\lambda\,-\,\lambda_k(\xi)\big)c,\qquad\forall c\in\mathbb{C},
\end{equation}
\begin{equation}\label{7.41}
E^0(\xi,\lambda)\,\in\,\mathbb{B}(\mathcal{K}_0;\mathcal{K}_{m,\xi}),\quad E^0(\xi,\lambda)\ :=\ \left[\id\,-\,\Pi_k(\xi)\right]\left[\check{P}_0(\xi)\,-\,\lambda\right]\left[\id\,-\,\Pi_k(\xi)\right],
\end{equation}
where $\lambda_k(\xi)$ is the Floquet eigenvalue generating the simple spectral band $J_k$ and $\Pi_k(\xi)$ is the orthogonal projection on $\mathcal{N}_k(\xi)$ as defined in \eqref{7.11}, with the circle $\mathscr{C}$ chosen to contain the interval $I\subset\mathbb{R}$ into its interior domain.

Let us notice that the operator $E^0(\xi,\lambda)$ is well defined; in fact $\Re\text{ange}\left[\id\,-\,\Pi_k(\xi)\right]$ is the orthogonal complement in $\mathcal{K}_0$ of $\mathcal{N}_k(\xi):=\ker\left[\check{P}_0(\xi)\,-\,\lambda\right]$ (linearly generated by $\phi(\cdot,\xi)$) and is a reducing subspace for $\check{P}_0(\xi)$. Thus $\check{P}_0(\xi)$ induces on the space $\Re\text{ange}\left[\id\,-\,\Pi_k(\xi)\right]$ a self-adjoint operator having the spectrum $\{\lambda_j(\xi)\}_{j\ne k}$. If $\lambda\in I$ and $I$ is as in the hypothesis (i.e. $I\cap J_l=\emptyset$ for any $l\ne k$), the distance from $\lambda$ to the spectrum of this induced operator is bounded below by a strictly positive constant $C>0$ that does not depend on $(\xi,\lambda)\in\X^*\times I$; thus the norm of the rezolvent of this induced operator, in point $\lambda\in I\subset\mathbb{R}$ is bounded above by $C^{-1}$. This rezolvent is exactly our operator $E^0(\xi,\lambda)$ defined in \eqref{7.41}. Recalling that $\check{P}_0(\xi)\,-\,\lambda\in S^0_0\big(\X;\mathbb{B}(\mathcal{K}_{m,\xi};\mathcal{K}_0)\big)$ uniformly with respect to $\lambda\in I$ and the formula \eqref{7.11} for the operator $\Pi_k(\xi)$ one proves that $E^0(\cdot,\xi)\in\,S^0_0\big(\X;\mathbb{B}(\mathcal{K}_0;\mathcal{K}_{m,\xi})\big)$ uniformly with respect to $\lambda\in I$.

By definition we have that $E^0_+(\xi,\lambda)\in S^0_0\big(\X;\mathbb{B}(\mathbb{C};\mathcal{K}_{m,\xi})\big)$, $E^0_-(\xi,\lambda)\in S^0_0\big(\X;\mathbb{B}(\mathcal{K}_0;\mathbb{C})\big)$, $E^0_{-+}(\xi,\lambda)\in S^0_0\big(\X;\mathbb{B}(\mathbb{C})\big)$ uniformly for $\lambda\in I$ and thus if we define for any $(\xi,\lambda)\in\X^*\times I$:
\begin{equation}\label{7.37}
\mathcal{E}_0(\xi,\lambda)\ :=\ \left(
\begin{array}{cc}
E^0(\xi,\lambda)&E^0_+(\xi,\lambda)\\
E^0_-(\xi,\lambda)&E^0_{-+}(\xi,\lambda)
\end{array}
\right)\,\in\,\mathbb{B}(\mathcal{B}_\xi;\mathcal{A}_\xi)
\end{equation}
we obtain an operator-valued symbol:
\begin{equation}\label{7.42}
\mathcal{E}_0(\cdots,\lambda)\,\in\,S^0_0\big(\X;\mathbb{B}(\mathcal{B}_\bullet;\mathcal{A}_\bullet)\big),\ \text{uniformly in }\lambda\in I.
\end{equation}

We shall verify now that for each $(\xi,\lambda)\in\X^*\times I$ this defines an inverse for the operator $\mathcal{P}_0(\xi,\lambda)$:
\begin{equation}\label{7.43}
\mathcal{P}_0(\xi,\lambda)\,\mathcal{E}_0(\xi,\lambda)\ =\ \id,\quad\text{on }\mathcal{K}_0\times\mathbb{C}.
\end{equation}
We can write:
$$
\mathcal{P}_0(\xi,\lambda)\,\mathcal{E}_0(\xi,\lambda)\ =\ \left(
\begin{array}{cc}
\big(\check{P}_0(\xi)\,-\,\lambda\big)E^0(\xi,\lambda)\,+\,R_-(\xi)E^0_-(\xi,\lambda)&\big(\check{P}_0(\xi)\,-\,\lambda\big)E^0_+(\xi,\lambda)\,+\,R_-(\xi)E^0_{-+}(\xi,\lambda)\\
R_+(\xi)E^0(\xi,\lambda)&R_+(\xi)E^0_+(\xi,\lambda)
\end{array}
\right).
$$
For any $u\in\mathcal{K}_0$ we can write:
$$
\big(\check{P}_0(\xi)\,-\,\lambda\big)E^0(\xi,\lambda)u\ =\ \left[\check{P}_0(\xi)\,-\,\lambda\right]\left[\id\,-\,\Pi_k(\xi)\right]\left[\check{P}_0(\xi)\,-\,\lambda\right]^{-1}\left[\id\,-\,\Pi_k(\xi)\right]u\ =\ \left[\id\,-\,\Pi_k(\xi)\right]u,
$$
$$
R_-(\xi)E^0_-(\xi,\lambda)u\ =\ \left(u,\phi(\cdot,\xi)\right)_{\mathcal{K}_0}\phi(\cdot,\xi)\ =\ \Pi_k(\xi)u
$$
$$
R_+(\xi)E^0(\xi,\lambda)u\ =\ \left(\left[\id\,-\,\Pi_k(\xi)\right]\left[\check{P}_0(\xi)\,-\,\lambda\right]^{-1}\left[\id\,-\,\Pi_k(\xi)\right]u\,,\,\phi(\cdot,\xi)\right)_{\mathcal{K}_0}\ =\ 0.
$$
For $c\in\mathbb{C}$ we can write that:
$$
\big(\check{P}_0(\xi)\,-\,\lambda\big)E^0_+(\xi,\lambda)c\ =\ c\big(\check{P}_0(\xi)\,-\,\lambda\big)\phi(\cdot,\xi)\ =\ c\big(\lambda_k(\xi)\,-\,\lambda\big)\phi(\cdot,\xi),
$$
$$
R_-(\xi)E^0_{-+}(\xi,\lambda)c\ =\ \big(\lambda\,-\,\lambda_k(\xi)\big)c\phi(\cdot,\xi),
$$
$$
R_+(\xi)E^0_+(\xi,\lambda)c\ =\ c\left(\phi(\cdot,\xi),\phi(\cdot,\xi)\right)_{\mathcal{K}_0}\ =\ c.
$$
These identities imply \eqref{7.43}. The property of being a left inverse is verified by very similar computations or by taking into account the self-adjointness of both $\mathcal{P}_0(\xi,\lambda)$ and $\mathcal{E}_0(\xi,\lambda)$.

From this point one can repeat identically the arguments in the proof of Theorem \ref{T.0.1} noticing that \eqref{4.12} and \eqref{7.40} imply that we can take $E^{-+}_{\lambda,\epsilon}(x,\xi)$ as in \eqref{0.37}.
\end{description}

\subsection{The constant magnetic field}

In this subsection we prove Proposition \ref{P.0.4}. Thus we suppose that the symbols $p_\epsilon$ do not depend on the first argument and the magnetic field has constant components:
\begin{equation}\label{8.1}
\left\{
\begin{array}{l}
B_\epsilon\ =\ \frac{1}{2}\underset{1\leq j,k\leq d}{\sum}B_{jk}(\epsilon)\,dx_j\wedge dx_k,\quad B_{jk}(\epsilon)\ =\ -B_{kj}(\epsilon)\,\in\,\mathbb{R},\\
\underset{\epsilon\rightarrow0}{\lim}B_{jk}(\epsilon)\ =\ 0.
\end{array}
\right.
\end{equation}

Using the transversal gauge \eqref{0.28} we associate to these magnetic fields some vector potentials $A_\epsilon:=(A_{\epsilon,1},\ldots,A_{\epsilon,d})$ satisfying:
\begin{equation}\label{8.2}
A_{\epsilon,j}(x)\ =\ \frac{1}{2}\underset{1\leq k\leq d}{\sum}B_{jk}(\epsilon)x_k.
\end{equation}

\begin{description}
\item[Proof of Proposition \ref{P.0.4} (1)]
We use formula \eqref{1.5} from Lemma \ref{L.1.2} noticing that the linearity of the functions $A_{\epsilon,j}$ and the definition of $\omega_{A}$ imply that
\begin{equation}\label{8.3}
\omega_{\tau_{-x}A_\epsilon}(y,\tilde{y})\ =\ \omega_{A_\epsilon}(y,\tilde{y})\,e^{i<A_\epsilon(x),y-\tilde{y}>}\ =\ \omega_{A_\epsilon+A_\epsilon(x)}(y,\tilde{y}).
\end{equation}
We deduce that for any $u\in\mathscr{S}(\X^2)$ and for any $(x,y)\in\X^2$ we have that:
\begin{equation}\label{8.4}
\left(\boldsymbol{\chi}^*\widetilde{P}_\epsilon(\boldsymbol{\chi}^*)^{-1}u\right)(x,y)\ =\ \left[\big(\id\otimes\sigma_{A(x)}\big)(\id\otimes P_\epsilon)\big(\id\otimes\sigma_{-A(x)}\big)u\right](x,y).
\end{equation}
It follows that the operator $\widetilde{P}_\epsilon$, that is an unbounded self-adjoint operator in $L^2(\X^2)$ denoted in Proposition \ref{P.2.14} by $\widetilde{P}_\epsilon^\prime$, is unitarily equivalent with the oprtator $\id\otimes P_\epsilon$ with $P_\epsilon$ self-adjoint unbounded operator in $L^2(\X)$. It follows that $\sigma\big(\widetilde{P}_\epsilon^\prime\big)=\sigma\big(P_\epsilon\big)$. From Proposition \ref{P.2.14} we deduce that $\sigma\big(\widetilde{P}_\epsilon^\prime\big)=\sigma
\big(\widetilde{P}_\epsilon^{\prime\prime}\big)$ where $\widetilde{P}_\epsilon^{\prime\prime}$ is the self-adjoint realization of $\widetilde{P}_\epsilon$ in the space $L^2\big(\X\times\mathbb{T}\big)$. Finally, from Corollary \ref{C.4.5} we deduce that for any $(\lambda,\epsilon)\in I\times[-\epsilon_0,\epsilon_0]$ we have the equivalence relation:
$$
\lambda\,\in\,\sigma\big(\widetilde{P}_\epsilon^{\prime\prime}\big)\quad\Longleftrightarrow\quad0\,\in\,\sigma\big(\mathfrak{E}_{-+}(\epsilon,\lambda)\big),
$$
where $\mathfrak{E}_{-+}(\epsilon,\lambda)$ is considered as a bounded self-adjoint operator on $\big[L^2(\X)\big]^N$.
\end{description}

In order to prove the second point of Proposition \ref{P.0.4} we shall use the {\it magnetic translations} $T_{\epsilon,a}:=\sigma_{A_\epsilon(a)}\tau_a$ for any $a\in\X$, that define a family of unitary operators in $L^2(\X)$.

\begin{lemma}\label{L.8.1}
For any two families of Hilbert spaces with temperate variation $\{\mathcal{A}_\xi\}_{\xi\in\X^*}$ and $\{\mathcal{B}_\xi\}_{\xi\in\X^*}$ and for any operator-valued symbol $q\in S^0_0\big(\X;\mathbb{B}(\mathcal{A}_\bullet;\mathcal{B}_\bullet)\big)$ the following equality holds:
\begin{equation}\label{8.5}
T_{\epsilon,a}\mathfrak{Op}^{A_\epsilon}(q)\ =\ \mathfrak{Op}^{A_\epsilon}\big((\tau_a\otimes\id)q\big)T_{\epsilon,a},\qquad\forall a\in\X.
\end{equation}
\end{lemma}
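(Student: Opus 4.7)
\textbf{Proof proposal for Lemma \ref{L.8.1}.}

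The plan is to compute both sides of \eqref{8.5} directly as oscillatory integrals acting on a test function $u\in\mathscr{S}(\X;\mathcal{A}_\bullet)$ (a dense subspace of whatever distribution space we work in) and match them via two changes of variable plus one gauge identity. The decisive ingredient is the translation-covariance of the parallel-transport factor $\omega_{A_\epsilon}$ that is forced by the linearity of the transversal-gauge vector potential in the constant-field case:
\begin{equation*}
\omega_{A_\epsilon}(x-a,\,y-a)\ =\ \omega_{A_\epsilon}(x,y)\,e^{-i\langle A_\epsilon(a),\,x-y\rangle},\qquad\forall(x,y,a)\in\X^3.
\end{equation*}
This is the analogue of \eqref{8.3} (in fact an immediate consequence of it, using $\omega_{A_\epsilon(a)}(y,\tilde{y})=e^{-i\langle A_\epsilon(a),y-\tilde{y}\rangle}$) and it holds precisely because $A_\epsilon$ is the linear function \eqref{8.2}; for a general vector potential no such identity is available, which is why the statement is restricted to the constant-field setting.

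\noindent\textbf{Step 1: LHS.} Using $T_{\epsilon,a}u(x)=e^{i\langle A_\epsilon(a),x\rangle}u(x-a)$ (the convention fixed in the paper is $\tau_a f(x)=f(x-a)$, see the comment after \eqref{2.18}), I would write
\begin{equation*}
\big(T_{\epsilon,a}\mathfrak{Op}^{A_\epsilon}(q)u\big)(x)\ =\ e^{i\langle A_\epsilon(a),x\rangle}\iint_{\Xi}e^{i\langle\eta,x-a-y\rangle}\omega_{A_\epsilon}(x-a,y)\,q\Big(\tfrac{x-a+y}{2},\eta\Big)u(y)\,dy\,\dbar\eta.
\end{equation*}
I then perform the change of variable $y\mapsto y-a$, apply the gauge identity above, and absorb the phase $e^{-i\langle A_\epsilon(a),x-y\rangle}$ into the Fourier kernel by the second change of variable $\eta\mapsto\eta+A_\epsilon(a)$. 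The resulting expression is
\begin{equation*}
e^{i\langle A_\epsilon(a),x\rangle}\iint_{\Xi}e^{i\langle\eta,x-y\rangle}\omega_{A_\epsilon}(x,y)\,q\Big(\tfrac{x+y}{2}-a,\,\eta+A_\epsilon(a)\Big)u(y-a)\,dy\,\dbar\eta.
\end{equation*}

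\noindent\textbf{Step 2: RHS.} Starting from the right-hand side and expanding $T_{\epsilon,a}u$ in the same way, a single shift $\eta\mapsto\eta+A_\epsilon(a)$ together with the identity $e^{i\langle A_\epsilon(a),x-y\rangle}\cdot e^{i\langle A_\epsilon(a),y\rangle}=e^{i\langle A_\epsilon(a),x\rangle}$ produces exactly the expression displayed at the end of Step~1, using $\big((\tau_a\otimes\id)q\big)(z,\eta)=q(z-a,\eta)$.

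\noindent\textbf{Step 3: Extension.} The manipulations above are rigorous on $\mathscr{S}(\X;\mathcal{A}_\bullet)$, where the oscillatory integrals converge absolutely after the standard regularization from the magnetic calculus of \cite{IMP1,MP1}. Both sides of \eqref{8.5} are continuous operators between the appropriate scales (by Proposition \ref{P.A.7} for $\mathfrak{Op}^{A_\epsilon}\big((\tau_a\otimes\id)q\big)$, and trivially for $T_{\epsilon,a}$), so the identity extends by density to the domains in which it is used later in the paper.

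\noindent\textbf{Main obstacle.} The routine part is bookkeeping of the two variable shifts; the only place where something genuinely needs to be checked is the gauge identity for $\omega_{A_\epsilon}$ under simultaneous translation of both endpoints. This is where the hypothesis of a \emph{constant} magnetic field enters in an essential way: for a linear $A_\epsilon$ the integral $\int_{[x,y]}A_\epsilon$ depends on $(x+y)/2$ and $y-x$ in an affine manner, so a translation of both endpoints contributes an explicit linear phase in $y-x$, and nothing more. Any attempt to reproduce \eqref{8.5} for non-constant $B_\epsilon$ would fail precisely at this step, because the remainder $\omega_{A_\epsilon}(x-a,y-a)\omega_{A_\epsilon}(x,y)^{-1}$ would no longer reduce to a character of $x-y$.
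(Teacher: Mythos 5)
Your proposal is correct and matches the paper's proof in substance: both hinge on the translation covariance of $\omega_{A_\epsilon}$ that the linearity of the transversal-gauge vector potential forces in the constant-field case, i.e.\ $\omega_{A_\epsilon}(x-a,y-a)=\omega_{A_\epsilon}(x,y)\,e^{-i\langle A_\epsilon(a),x-y\rangle}$, which is exactly the content of \eqref{8.3}. The paper factors the argument through Lemma \ref{L.A.18} (general translation covariance of $\mathfrak{Op}^A$) together with the constant-gauge identity \eqref{8.8}, whereas you carry out the equivalent change of variables and $\eta$-shift directly in the oscillatory integral; the underlying identities are the same.
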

\begin{proof}
From Lemma \ref{L.A.18} it follows that 
\begin{equation}\label{8.6}
\tau_a\mathfrak{Op}^{A_\epsilon}(q)\ =\ \mathfrak{Op}^{\tau_aA_\epsilon}\big((\tau_a\otimes\id)q\big)\tau_a.
\end{equation}
We notice that $\tau_aA_\epsilon=A_\epsilon\,-\,A_\epsilon(a)$, so that the equality \eqref{8.6} becomes
\begin{equation}\label{8.7}
\tau_a\mathfrak{Op}^{A_\epsilon}(q)\ =\ \mathfrak{Op}^{(A_\epsilon\,-\,A_\epsilon(a))}\big((\tau_a\otimes\id)q\big)
\tau_a.
\end{equation}
Then, for any $u\in\mathscr{S}(\X;\mathcal{A}_0)$ and for any $x\in\X$ we can write that
$$
\left(\sigma_{-A_\epsilon(a)}\mathfrak{Op}^{A_\epsilon}(q)\sigma_{A_\epsilon(a)}u\right)(x)\ =\ \int_{\Xi}e^{i<\eta,x-y>}e^{-i<A_\epsilon(a),x-y>}\omega_{A_\epsilon}(x,y)\,q\big(\frac{x+y}{2},\eta\big)\,u(y)\,dy\,\dbar\eta.
$$
Noticing that $<A_\epsilon(a),x-y>=-\int_{[x,y]}A_\epsilon(a)$, the above formula implies that 
\begin{equation}\label{8.8}
\mathfrak{Op}^{A_\epsilon}(q)\sigma_{A_\epsilon(a)}\ =\ \sigma_{A_\epsilon(a)}\mathfrak{Op}^{(A_\epsilon-A_\epsilon(a))}(q).
\end{equation}
From \eqref{8.7} and \eqref{8.8} we deduce that
$$
T_{\epsilon,a}\mathfrak{Op}^{A_\epsilon}(q)\ =\ \sigma_{A_\epsilon(a)}\tau_a\mathfrak{Op}^{A_\epsilon}(q)\ =\ \sigma_{A_\epsilon(a)}\mathfrak{Op}^{(A_\epsilon\,-\,A_\epsilon(a))}\big((\tau_a\otimes\id)q\big)
\tau_a\ =\ \mathfrak{Op}^{A_\epsilon}\big((\tau_a\otimes\id)q\big)
\sigma_{A_\epsilon(a)}\tau_a\ =
$$
$$
=\ \mathfrak{Op}^{A_\epsilon}\big((\tau_a\otimes\id)q\big)T_{\epsilon,a}.
$$
\end{proof}

\begin{description}
\item[Proof of Proposition \ref{P.0.4} (2)]
The operator $\mathcal{P}_{\epsilon,\lambda}:=\mathfrak{Op}\big(\mathcal{P}_\epsilon(\cdot,\cdot,\lambda)\big)$ from Theorem \ref{T.4.3} has its symbol defined in \eqref{4.6}:
$$
\mathcal{P}_\epsilon(x,\xi,\lambda)\ :=\ \left(
\begin{array}{cc}
\mathfrak{q}_\epsilon(x,\xi)\,-\,\lambda&R_-(\xi)\\
R_+(\xi)&0
\end{array}
\right),
\qquad\forall(x,\xi)\in\Xi,\ \forall(\lambda,\epsilon)\in I\times[-\epsilon_0,\epsilon_0]
$$
where $\mathfrak{q}_\epsilon(x,\xi):=\mathfrak{Op}\big(\widetilde{p}_\epsilon(x,\cdot,\xi,\cdot)\big)$ with $\widetilde{p}_\epsilon(x.y.\xi,\eta):=p_\epsilon(x,y,\xi+\eta)$. As we have noticed from the beginning, we suppose that our symbol $p_\epsilon$ does not depend on the first variable so that neither the operator-valued symbol $\mathcal{P}_\epsilon$ will not depend on the first variable. From Lemma \ref{L.8.1} the operator 
$$
\mathcal{P}_{\epsilon,\lambda}:\mathscr{S}\big(\X;\mathcal{K}_{m,0}\times\mathbb{C}^N\big)\rightarrow
\mathscr{S}\big(\X;\mathcal{K}_0\times\mathbb{C}^N\big)
$$
commutes with the family $\{T_{\epsilon,a}\otimes\id_{\mathcal{K}_0\times
\mathbb{C}^N}\}_{a\in\X}$. Then its inverse appearing in Theorem \ref{T.4.3}, 
$$
\mathcal{E}_{\epsilon,\lambda}:\mathscr{S}\big(\X;\mathcal{K}_0\times\mathbb{C}^N\big)\rightarrow\mathscr{S}\big(\X;\mathcal{K}_{m,0}\times\mathbb{C}^N\big)
$$
also commutes with the family $\{T_{\epsilon,a}\otimes\id_{\mathcal{K}_0\times
\mathbb{C}^N}\}_{a\in\X}$. From this property we deduce that also the operator $\mathfrak{E}_{-+}(\epsilon,\lambda):L^2(\X;\mathbb{C}^N)\rightarrow L^2(\X;\mathbb{C}^N)$ commutes with the family $\{T_{\epsilon,a}\otimes\id_{\mathbb{C}^N}\}_{a\in\X}$. Using Lemma \ref{L.8.1} once again we deduce that :
$$
\mathfrak{Op}^{A_\epsilon}\big(E^{-+}_{\epsilon,\lambda}\big)\ =\ \mathfrak{E}_{-+}(\epsilon,\lambda)\ =\ \left[T_{\epsilon,a}\otimes\id_{\mathbb{C}^N}\right]
\mathfrak{E}_{-+}(\epsilon,\lambda)\left[T_{\epsilon,a}\otimes\id_{\mathbb{C}^N}
\right]^{-1}\ =\ \mathfrak{Op}^{A_\epsilon}\big((\tau_a\otimes\id)E^{-+}_{\epsilon,\lambda}\big),\ \forall a\in\X.
$$
We conclude that $E^{-+}_{\epsilon,\lambda}(x,\xi)=E^{-+}_{\epsilon,\lambda}(x-a,\xi)$ for any $(x,\xi)\in\Xi$ and for any $a\in\X$. It follows that $E^{-+}_{\epsilon,\lambda}(x,\xi)=E^{-+}_{\epsilon,\lambda}(0,\xi)$ for any $(x,\xi)\in\Xi$. The $\Gamma^*$-periodicity follows as in the general case (see the proof of Lemma \ref{L.6.5}).
\end{description}

\section{Appendices}
\setcounter{equation}{0}
\setcounter{theorem}{0}

\subsection{Magnetic pseudodifferential operators with operator-valued symbols}

\begin{definition}\label{D.A.1}
A family of Hilbert spaces $\{\mathcal{A}_\xi\}_{\xi\in\X^*}$ (indexed by the points in the {\it momentum space}) is said {\it to have temperate variation} when it verifies the following two conditions:
\begin{enumerate}
\item $\mathcal{A}_\xi\ =\ \mathcal{A}_\eta$ as complex vector spaces $\forall(\xi,\eta)\in[\X^*]^2$.
\item There exist $C>0$ and $M\geq0$ such that $\forall u\in\mathcal{A}_0$ we have the estimation:
\begin{equation}\label{A.1}
\|u\|_{\mathcal{A}_\xi}\ \leq\ C<\xi-\eta>^M\|u\|_{\mathcal{A}_\eta},\qquad\forall(\xi,\eta)\in[\X^*]^2.
\end{equation}
\end{enumerate}
\end{definition}
\begin{ex}\label{E.A.2}
We can take $\mathcal{A}_\xi=\mathcal{H}^s(\X)$, with any $s\in\mathbb{R}$ endowed with the $\xi$-dependent norm:
$$
\|u\|_{\mathcal{A}_\xi}\ :=\ \left(\int_\X<\xi+\eta>^{2s}|\hat{u}(\eta)|^2d\eta\right)^{1/2},\qquad\forall u\in\mathcal{H}^s(\X),\ \forall\xi\in\X^*.
$$
The inequality \eqref{A.1} clearly follows from the well known inequality:
\begin{equation}\label{A.2}
<\xi+\eta>^{2s}\ \leq\ C_s<\zeta+\eta>^{2s}<\xi-\zeta>^{2|s|},\qquad\forall(\xi,\eta,\zeta)\in[\X^*]^3,
\end{equation}
where the constant $C_s$ only depends on $s\in\mathbb{R}$. For this specific family we shall use the shorter notation $\mathcal{A}_\xi\equiv\mathcal{H}^s_\xi(\X)$.
\end{ex}

\begin{definition}\label{D.A.3}
Suppose given two families of Hilbert spaces with tempered variation $\{\mathcal{A}_\xi\}_{\xi\in\X^*}$ and $\{\mathcal{B}_\xi\}_{\xi\in\X^*}$; suppose also given $m\in\mathbb{R}$, $\rho\in[0,1]$ and $\mathcal{Y}$ a finite dimensional real vector space. A function $p\in C^\infty\big(\mathcal{Y}\times\X^*;\mathbb{B}(\mathcal{A}_0;\mathcal{B}_0)\big)$ is called {\it an operator-valued symbol of class} $S^m_\rho\big(\mathcal{Y};\mathbb{B}(\mathcal{A}_\bullet;\mathcal{B}_\bullet)\big)$ when it verifies the following property:
\begin{equation}\label{A.3}
\begin{array}{l}
\forall\alpha\in\mathbb{N}^{\dim\mathcal{Y}},\,\forall\beta\in\mathbb{N}^d,\ \exists C_{\alpha,\beta}>0:\\
\left\|\big(\partial^\alpha_y\partial^\beta_\xi p\big)(y,\xi)\right\|_{\mathbb{B}(\mathcal{A}_\xi;\mathcal{B}_\xi)}\ \leq\ C_{\alpha,\beta}<\xi>^{m-\rho|\beta|},\qquad\forall(y,\xi)\in\mathcal{Y}\times\X^*.
\end{array}
\end{equation}
\end{definition}
The space $S^m_\rho\big(\mathcal{Y};\mathbb{B}(\mathcal{A}_\bullet;\mathcal{B}_\bullet)\big)$ endowed with the family of seminorms $\nu_{\alpha,\beta}$ defined as being the smallest constants $C_{\alpha,\beta}$ that satisfy the defining property \eqref{A.3} is a metrizable locally convex linear topological space. In case we have for any $\xi\in\X^*$ that $\mathcal{A}_\xi=\mathcal{A}_0$ and $\mathcal{B}_\xi=\mathcal{B}_0$ as algebraic and topological structures, then we use the notation $S^m_\rho\big(\mathcal{Y};\mathbb{B}(\mathcal{A}_0;\mathcal{B}_0)\big)$. If moreover we have that $\mathcal{A}_0=\mathcal{B}_0=\mathbb{C}$, then we use the simple notation $S^m_\rho(\mathcal{Y})$.

\begin{ex}\label{E.A.4}
If $p\in S^m_1(\X)$ and if for any $\xi\in\X^*$ we denote by $p_\xi:=(\id\otimes\tau_{-\xi})p$, by $P_\xi:=\mathfrak{Op}\big(p_\xi\big)$ and by $\mathfrak{p}$ the application $\Xi\ni(x,\xi)\mapsto P_\xi\in\mathbb{B}(\mathcal{H}^{s+m}_\xi(\X);\mathcal{H}^s_\xi(\X))$, for some $s\in\mathbb{R}$, we can prove that $\mathfrak{p}$ is an operator valued symbol of class $S^0_0\big(\X;\mathbb{B}(\mathcal{H}^{s+m}_\bullet(\X);\mathcal{H}^s_\bullet(\X))\big)$. Moreover the map $S^m_1(\X)\ni p\mapsto\mathfrak{p}\in S^0_0\big(\X;\mathbb{B}(\mathcal{H}^{s+m}_\bullet(\X);\mathcal{H}^s_\bullet(\X))\big)$ is continuous.
\end{ex}
\begin{description}
\item[] \hspace{1cm} In fact, let us recall that for any $\xi\in\X^*$ we have denoted by $\sigma_\xi$ the multiplication operator with the function $e^{i<\xi,\cdot>}$ on the space $\mathscr{S}^\prime(\X)$. Then, for any $u\in\mathscr{S}(\X)$ and for any $\xi\in\X^*$ we have that $u\in\mathcal{H}^{s+m}_\xi(\X)$ and we can write:
$$
\big(\sigma_{-\xi}P_0\sigma_\xi u\big)(x)=\int_\Xi e^{i<\eta-\xi,x-y>}p\big(\frac{x+y}{2},\eta\big)u(y)\,dy\,\dbar\eta=\int_\Xi e^{i<\eta,x-y>}p\big(\frac{x+y}{2},\eta+\xi\big)u(y)\,dy\,\dbar\eta=\big(P_\xi u\big)(x)
$$
and we conclude that
\begin{equation}\label{A.4}
P_\xi\ =\ \sigma_{-\xi}P_0\sigma_\xi,\qquad\forall\xi\in\X^*.
\end{equation}

On the other side, for any $\xi\in\X^*$ we notice that $p_\xi$ is a symbol of class $S^m_1(\X)$ and thus, the usual Weyl calculus implies that $P_\xi\in\mathbb{B}\big(\mathcal{H}^{s+m}(\X);\mathcal{H}^s(\X)\big)$ for any $s\in\mathbb{R}$. We notice easily that for any multi-index $\beta\in\mathbb{N}^d$ we can write:
\begin{equation}\label{A.4a}
\partial^\beta_\xi P_\xi\ =\ \mathfrak{Op}\big(\partial^\beta_\xi p_\xi\big)
\end{equation}
so that we conclude that $P_\xi\in C^\infty\big(\Xi;\mathbb{B}(\mathcal{H}^{s+m}(\X);\mathcal{H}^s(\X))\big)$ (constant with respect to the variable $x\in\X$) for any $s\in\mathbb{R}$.

Let us further notice that for any $u\in\mathscr{S}(\X)$ and any $\xi\in\X^*$ we have the equalities:
\begin{equation}\label{A.5}
\widehat{\sigma_\xi u}\ =\ \tau_\xi\hat{u},
\end{equation}
\begin{equation}\label{A.6}
\left\|\sigma_{-\xi}u\right\|^2_{\mathcal{H}^s_\xi(\X)}\ =\ \int_{\X^*}<\xi+\eta>^{2s}\left|\hat{u}(\xi+\eta)\right|^2d\eta\ =\ \|u\|^2_{\mathcal{H}^s(\X)}.
\end{equation}
Coming back to \eqref{A.4} we deduce that for any $u\in\mathscr{S}(\X)$ and any $\xi\in\X^*$ we have the estimation:
$$
\left\|P_\xi u\right\|^2_{\mathcal{H}^s_\xi(\X)}\ =\ \left\|P_0\sigma_\xi u\right\|^2_{\mathcal{H}^s(\X)}\ \leq\ C_s\left\|\sigma_\xi u\right\|^2_{\mathcal{H}^{s+m}(\X)}\ =\ C_s\|u\|^2_{\mathcal{H}^{s+m}_\xi(\X)}.
$$
Using also the equality \eqref{A.4a} we obtain similar estimations for the derivatives of $P_\xi$ and conclude that $\mathfrak{p}\in S^0_0\big(\X;\mathbb{B}(\mathcal{H}^{s+m}_\bullet(\X);\mathcal{H}^s_\bullet(\X))\big)$ and we have the continuity of the map $S^m_1(\X)\ni p\mapsto\mathfrak{p}\in S^0_0\big(\X;\mathbb{B}(\mathcal{H}^{s+m}_\bullet(\X);\mathcal{H}^s_\bullet(\X))\big)$.
\end{description}

\begin{definition}\label{D.A.5}
We denote by $S^m_{\rho,\epsilon}\big(\X^2;\mathbb{B}(\mathcal{A}_\bullet;\mathcal{B}_\bullet)\big)$ the linear space of families $\{p_\epsilon\}_{|\epsilon|\leq\epsilon_0}$ satisfying the following three conditions:
\begin{enumerate}
\item $\forall\epsilon\in[-\epsilon_0,\epsilon_0],\ p_\epsilon\in S^m_\rho\big(\X^2;\mathbb{B}(\mathcal{A}_\bullet;\mathcal{B}_\bullet)\big)$ uniformly with respect to $\epsilon\in[-\epsilon_0,\epsilon_0]$,
\item $\underset{\epsilon\rightarrow0}{\lim}\,p_\epsilon\ =\ p_0$ in $S^m_\rho\big(\X^2;\mathbb{B}(\mathcal{A}_\bullet;\mathcal{B}_\bullet)\big)$,
\item Denoting the variable in $\X^2$ by $(x,y)$, for any multi-index $\alpha\in\mathbb{N}^d$ with $|\alpha|\geq1$ we have that $\underset{\epsilon\rightarrow0}{\lim}\,\partial^\alpha_xp_\epsilon\ =\ 0$ in $S^m_\rho\big(\X^2;\mathbb{B}(\mathcal{A}_\bullet;\mathcal{B}_\bullet)\big)$,
\end{enumerate}
endowed with the natural locally convex topology of symbols of H\"{o}rmander type.
\end{definition}

As in the case of Definition \ref{D.A.3}, in case we have for any $\xi\in\X^*$ that $\mathcal{A}_\xi=\mathcal{A}_0$ and $\mathcal{B}_\xi=\mathcal{B}_0$ as algebraic and topological structures, then we use the notation $S^m_{\rho,\epsilon}\big(\X^2;\mathbb{B}(\mathcal{A}_0;\mathcal{B}_0)\big)$. If moreover we have that $\mathcal{A}_0=\mathcal{B}_0=\mathbb{C}$, then we use the simple notation $S^m_{\rho,\epsilon}(\X^2)$.

For the families of symbols of type $S^m_{\rho,\epsilon}\big(\X^2;\mathbb{B}(\mathcal{A}_\bullet;\mathcal{B}_\bullet)\big)$ that do not depend on the first variable $x$ in $\X^2$ we shall use the notation $S^m_{\rho,\epsilon}\big(\X;\mathbb{B}(\mathcal{A}_\bullet;\mathcal{B}_\bullet)\big)$.

Let us also notice the following canonical injection:
$$
S^m_{\rho,\epsilon}\big(\X;\mathbb{B}(\mathcal{A}_\bullet;\mathcal{B}_\bullet)\big)\ni p_\epsilon\mapsto\id\otimes p_\epsilon\in S^m_{\rho,\epsilon}\big(\X^2;\mathbb{B}(\mathcal{A}_\bullet;\mathcal{B}_\bullet)\big).
$$

\begin{remark}\label{R.A.6}
We can obtain the following description of the space $S^m_{\rho,\epsilon}\big(\X^2;\mathbb{B}(\mathcal{A}_\bullet;\mathcal{B}_\bullet)\big)$:
\end{remark}
\begin{description}
\item[•] \hspace{1cm} If $\{p_\epsilon\}_{|\epsilon|\leq\epsilon_0}\in S^m_{\rho,\epsilon}\big(\X^2;\mathbb{B}(\mathcal{A}_\bullet;\mathcal{B}_\bullet)\big)$ then we can write the first order Taylor expansion:
$$
p_\epsilon(x,y,\eta)\ =\ p_\epsilon(0,y,\eta)\,+\,\left\langle x\,,\,\int_0^1\big(\nabla_x p_\epsilon\big)(tx,y,\eta)\,dt\right\rangle.
$$
Using conditions (2) and (3) from Definition \ref{D.A.5} we deduce that
$$
p_0(x,y,\eta)\ =\ p_0(0,y,\eta).
$$
We denote by: $p_0(y,\eta)\ :=\ p_0(0,y,\eta)$ as element in $S^m_\rho\big(\X;\mathbb{B}(\mathcal{A}_\bullet;\mathcal{B}_\bullet)\big)$ and by:
$$
r_\epsilon(x,y,\eta)\ :=\ p_\epsilon(x,y,\eta)\,-\,p_0(y,\eta).
$$
This last symbol is of class $S^m_\rho\big(\X^2;\mathbb{B}(\mathcal{A}_\bullet;\mathcal{B}_\bullet)\big)$ uniformly with respect to $\epsilon\in[-\epsilon_0,\epsilon_0]$.

We have: $r_0=0$ and $\underset{\epsilon\rightarrow0}{\lim}\,r_\epsilon\,=\,0$ in $S^m_\rho\big(\X^2;\mathbb{B}(\mathcal{A}_\bullet;\mathcal{B}_\bullet)\big)$ and evidently:
\begin{equation}\label{A.7}
p_\epsilon(x,y,\eta)\ =\ p_0(y,\eta)\,+\,r_\epsilon(x,y,\eta).
\end{equation}
Reciprocally, if $p_0$ and $r_\epsilon$ from the equality \eqref{A.7} have the properties stated above, then $\{p_\epsilon\}_{|\epsilon|\leq\epsilon_0}$ belongs to $S^m_{\rho,\epsilon}\big(\X^2;\mathbb{B}(\mathcal{A}_\bullet;\mathcal{B}_\bullet)\big)$.
\end{description}

We shall consider now {\it magnetic pseudodifferential operators associated to operator-valued symbols}. Let us first consider $p\in S^m_\rho\big(\X;\mathcal{A}_\bullet;\mathcal{B}_\bullet)\big)$ and a magnetic field $B$ with components of class $BC^\infty(\X)$; this magnetic field can always be associated with a vector potential $A$ with components of class $C^\infty_{\text{\sf pol}}(\X)$ (as for example in the transversal gauge). Let us recall the notation $\omega_A(x,y):=\exp\{-i\int_{[x,y]}A\}$. For any $u\in\mathscr{S}(\X;\mathcal{A}_0)$ we can define the oscillating integral (its existence following from the next Proposition):
\begin{equation}\label{A.8}
\big[\mathfrak{Op}^A(p)u\big](x)\ :=\ \int_\Xi e^{i<\eta,x-y>}\omega_A(x,y)\,p\big(\frac{x+y}{2},\eta\big)u(y)\,dy\,\dbar\eta,\qquad\forall x\in\X.
\end{equation}
\begin{proposition}\label{P.A.7}
Under the hypothesis described in the paragraph above \eqref{A.8} the following facts are true:
\begin{enumerate}
\item The integral in \eqref{A.8} exists for any $x\in\X$ as oscillating Bochner integral and defines a function $\mathfrak{Op}^A(p)u\in\mathscr{S}(\X;\mathcal{B}_0)$.
\item The map $\mathfrak{Op}^A(p):\mathscr{S}(\X;\mathcal{A}_0)\rightarrow\mathscr{S}(\X;\mathcal{B}_0)$ defined by \eqref{A.8} and point (1) above is linear and continuous.
\item The formal adjoint $\big[\mathfrak{Op}^A(p)\big]^*:\mathscr{S}(\X;\mathcal{B}_0\big)\rightarrow\mathscr{S}(\X;\mathcal{A}_0)$ of the linear continuous operator defined at point (2) above is equal to $\mathfrak{Op}^A(p^*)$ where $p^*\in S^{m^\prime}_\rho\big(\X;\mathbb{B}(\mathcal{B}_\bullet;\mathcal{A}_\bullet)\big)$ where $m^\prime=m+2(M_\mathcal{A}+M_\mathcal{B})$ and $p^*(x,\xi):=\big[p(x,\xi)\big]^*$ (the adjoint in $\mathbb{B}(\mathcal{A}_0;\mathcal{B}_0)$.
\item The operator $\mathfrak{Op}^A(p)$ extends in a natural way to a linear continuous operator $\mathscr{S}^\prime(\X;\mathcal{A}_0)\rightarrow\mathscr{S}^\prime(\X;\mathcal{B}_0)$ that we denote in the same way.
\end{enumerate}
\end{proposition}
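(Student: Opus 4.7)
The proof of (1) and (2) follows the classical oscillating-integral technique, adapted to the operator-valued setting. I fix $u \in \mathscr{S}(\mathcal{X}; \mathcal{A}_0)$ and convert the formal expression \eqref{A.8} into an absolutely convergent Bochner integral in $\mathcal{B}_0$ by simultaneous integration by parts based on the two identities
\begin{equation*}
e^{i<\eta,x-y>} \,=\, <x-y>^{-2N}(1-\Delta_\eta)^N e^{i<\eta,x-y>} \,=\, <\eta>^{-2N}(1-\Delta_y)^N e^{i<\eta,x-y>}.
\end{equation*}
When the derivatives land on $p$ they are controlled by the symbol estimates \eqref{A.3}, while the temperate-variation inequality \eqref{A.1}, applied successively to $\mathcal{A}_\bullet$ and $\mathcal{B}_\bullet$, lets me pass between the $\eta$-indexed and the $0$-indexed operator norms at the cost of a power of $<\eta>$; when they land on $\omega_{A}(x,y)$ they produce polynomially bounded factors in $x$ and $y$ (because $A$ and its derivatives are in $C^\infty_{\textsf{pol}}(\mathcal{X})$), but these are absorbed by the rapid decay of $u$ and by iterating the $<x-y>^{-2N}$-gain. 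Differentiating in $x$ under the integral sign and shifting the $A$-dependence onto $(x-y)$-differences by the same mechanism yields that $\mathfrak{Op}^{A}(p)u \in \mathscr{S}(\mathcal{X}; \mathcal{B}_0)$, together with Schwartz seminorm estimates depending on only finitely many seminorms of $u$ and of $p$, which is exactly (2).

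For (3), I first establish the claimed symbol class for $p^*(x,\xi) := [p(x,\xi)]^*$. The operator norm in $\mathbb{B}(\mathcal{B}_0;\mathcal{A}_0)$ is unchanged by taking adjoints, so a quadruple use of \eqref{A.1} yields
\begin{equation*}
\|p^*(x,\xi)\|_{\mathbb{B}(\mathcal{B}_\xi;\mathcal{A}_\xi)} \,\leq\, C<\xi>^{M_\mathcal{A}+M_\mathcal{B}}\,\|p^*(x,\xi)\|_{\mathbb{B}(\mathcal{B}_0;\mathcal{A}_0)} \,=\, C<\xi>^{M_\mathcal{A}+M_\mathcal{B}}\,\|p(x,\xi)\|_{\mathbb{B}(\mathcal{A}_0;\mathcal{B}_0)} \,\leq\, C<\xi>^{2(M_\mathcal{A}+M_\mathcal{B})}\,\|p(x,\xi)\|_{\mathbb{B}(\mathcal{A}_\xi;\mathcal{B}_\xi)},
\end{equation*}
so that the hypothesis $\|p(x,\xi)\|_{\mathbb{B}(\mathcal{A}_\xi;\mathcal{B}_\xi)} \leq C<\xi>^m$ yields precisely the shift of order by $2(M_\mathcal{A}+M_\mathcal{B})$. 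Applying the same bookkeeping to all derivatives of $p$ gives $p^* \in S^{m^\prime}_\rho(\mathcal{X};\mathbb{B}(\mathcal{B}_\bullet;\mathcal{A}_\bullet))$. The identity $[\mathfrak{Op}^{A}(p)]^* = \mathfrak{Op}^{A}(p^*)$ itself is proved by a Fubini argument in the oscillating integral: in the sesquilinear pairing $(\mathfrak{Op}^{A}(p)u,v)_{L^2(\mathcal{X};\mathcal{B}_0)}$ I interchange $x$ and $y$, use that the argument $(x+y)/2$ of $p$ is invariant under this swap and that $\overline{\omega_{A}(x,y)} = \omega_{A}(y,x)$, and identify the resulting expression as $(u, \mathfrak{Op}^{A}(p^*)v)_{L^2(\mathcal{X};\mathcal{A}_0)}$.

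Part (4) is then immediate by duality: points (1) and (2) applied to $p^*$ furnish a continuous operator $\mathfrak{Op}^{A}(p^*) \in \mathbb{B}\bigl(\mathscr{S}(\mathcal{X};\mathcal{B}_0);\mathscr{S}(\mathcal{X};\mathcal{A}_0)\bigr)$, and its transpose, via the canonical identification of $\mathscr{S}^\prime(\mathcal{X};\mathcal{V})$ with the topological dual of $\mathscr{S}(\mathcal{X};\overline{\mathcal{V}})$, provides the desired continuous extension $\mathfrak{Op}^{A}(p) : \mathscr{S}^\prime(\mathcal{X};\mathcal{A}_0) \to \mathscr{S}^\prime(\mathcal{X};\mathcal{B}_0)$; consistency with (1) on the intersection $\mathscr{S}(\mathcal{X};\mathcal{A}_0)$ follows from (3). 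The real technical core is in (1)--(2): unlike the scalar case, one must carefully track how the $\xi$-dependence of the norms on $\mathcal{A}_\bullet$ and $\mathcal{B}_\bullet$ interacts with each integration by parts, since the polynomial cost $<\eta>^{M_\mathcal{A}+M_\mathcal{B}}$ incurred at every conversion between the indexed and the reference norms has to be balanced against $<\eta>^{-2N}$-gains, and one must choose $N$ large enough only after counting all derivatives and all applications of \eqref{A.1} simultaneously.
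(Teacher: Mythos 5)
Your proposal is correct and follows essentially the same route as the paper: the same pair of integration-by-parts identities in $\eta$ and $y$, the same tracking of the polynomial cost $<\eta>^{M_\mathcal{A}+M_\mathcal{B}}$ from the temperate-variation inequality \eqref{A.1}, and the same duality argument for (3)--(4) via the formal-adjoint identity. The one small place where the paper is more careful is in (1): it first treats symbols compactly supported in $\eta$ (for which \eqref{A.8} is an honest Bochner integral, and only then can one legitimately integrate by parts) and passes to general $p$ by a cut-off $\varphi(\eta/R)$ and Dominated Convergence, whereas you perform the integration by parts on the a priori undefined oscillating integral; conversely, you spell out the order bookkeeping for $p^*$ — showing that each conversion between $\xi$-indexed and $0$-indexed operator norms costs $<\xi>^{M_\mathcal{A}+M_\mathcal{B}}$ and hence $m'=m+2(M_\mathcal{A}+M_\mathcal{B})$ — which the paper leaves implicit.
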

\begin{proof}
Let us first prove the first two points of the Proposition.

Fix some $u\in\mathscr{S}(\X;\mathcal{A}_0)$ and for the begining let us suppose that $p(y,\eta)=0$ for $|\eta|\geq R$, with some $R>0$. Then, for any $x\in\X$, the integral in \eqref{A.8} exists as a Bochner integral of a $\mathcal{B}_0$-valued function. Let us notice that in this case we can integrate by parts in \eqref{A.8} and use the identities:
$$
e^{i<\eta,x-y>}=<x-y>^{-2N_1}\big[(\id-\Delta_\eta)^{N_1}e^{i<\eta,x-y>}\big],\ \forall N_1\in\mathbb{N};\quad
e^{-i<\eta,y>}=<\eta>^{-2N_2}\big[(\id-\Delta_y)^{N_2}e^{-i<\eta,y>}\big],\ \forall N_2\in\mathbb{N}.
$$
In this way we obtain the equality:
\begin{equation}\label{A.9}
\big[\mathfrak{Op}^A(p)u\big](x)\ =
\end{equation}
$$
=\ \int_\Xi e^{i<\eta,x-y>}<x-y>^{-2N_1}(\id-\Delta_\eta)^{N_1}\left[<\eta>^{-2N_2}(\id-\Delta_y)^{N_2}\left(\omega_A(x,y)\big[p\big(\frac{x+y}{2},\eta\big)u(y)\big]\right)\right]dy\dbar\eta.
$$
From this one easily obtains the following estimation: $\exists C(N_1,N_2)>0$, $\exists k(N_2)\in\mathbb{N}$ such that for any $l\in\mathbb{N}$ we have
$$
\left\|\big[\mathfrak{Op}^A(p)u\big](x)\right\|_{\mathcal{B}_0}\ \leq
$$
$$
\leq\ C\left[\int_\Xi<x-y>^{-2N_1}<\eta>^{-2N_2}\big(<x>+<y>\big)^{k(N_2)}
<\eta>^{m+M_{\mathcal{B}}+M_{\mathcal{A}}}<y>^{-l}dy\dbar\eta\right]\times
$$
$$
\times\underset{|\alpha|\leq2N_2}{\sup}\underset{y\in\X}{\sup}\,<y>^l\left\|\big(\partial^\alpha u\big)(y)\right\|_{\mathcal{A}_0},
$$
where $M_\mathcal{A}$ and $M_\mathcal{B}$ are the constants appearing in condition \eqref{A.1} with respect to each of the two families $\{{A}_\xi\}_{\xi\in\X^*}$ and $\{\mathcal{B}_\xi\}_{\xi\in\X^*}$. We make the following choices:
$$
2N_2\geq m+M_\mathcal{A}+M_\mathcal{B}+d,\ l=2N_1+k(N_2)+d+1,\ 2N_1\geq k(N_2)
$$
and we obtain that
\begin{equation}\label{A.10}
\left\|\big[\mathfrak{Op}^A(p)u\big](x)\right\|_{\mathcal{B}_0}\ \leq\ C(N_1)<x>^{-2N_1+k(N_2)}\underset{|\alpha|\leq2N_2}{\sup}\ \underset{y\in\X}{\sup}\,<y>^l\left\|\big(\partial^\alpha u\big)(y)\right\|_{\mathcal{A}_0},\qquad\forall x\in\X.
\end{equation}

Similar estimations may be obtained for the derivatives $\partial^\beta_x\mathfrak{Op}^A(p)u$ and this finishes the proof of the first two points of the Proposition for the "compact support" situation we have considered first.

For the general case we consider a cut-off function $\varphi\in C^\infty_0(\X^*)$ that is equal to 1 for $|\eta|\leq1$. For any $R\geq1$ we define then the 'approximating' symbols $p_R(y,\eta):=\varphi(\eta/R)p(y,\eta)$ that has compact support in the $\eta$-variable and belongs to $S^m_\rho\big(\X;\mathbb{B}(\mathcal{A}_\bullet;\mathcal{B}_\bullet)\big)$ uniformly with respect to $R\in[1,\infty)$. We write the operator $\mathfrak{Op}^A(p_R)$ under a form similar to \eqref{A.9} with the choice of the parameters $N_1$ and $N_2$ as above; then the Dominated Convergence Theorem allows us to take the limit $R\nearrow\infty$ obtaining now for $\mathfrak{Op}^A(p)$ the integral expression \eqref{A.9} that is well defined and verifies the estimation \eqref{A.10}.

The last two points of the statement of the Proposition follow in a standard way from the equality:
\begin{equation}\label{A.11}
\int_\X\left(\big[\mathfrak{Op}^A(p)u\big](x)\,,\,v(x)\right)_{\mathcal{B}_0}dx\ =\ \int_\X\left(u(y)\,,\,\big[\mathfrak{Op}^A(p^*)v\big](y)\right)_{\mathcal{A}_0}dy,\qquad\forall(u,v)\in\mathscr{S}(\X;\mathcal{A}_0)\times\mathscr{S}(\X;\mathcal{B}_0).
\end{equation}
\end{proof}

\begin{ex}\label{E.A.8}
Let us consider a family $\{p_\epsilon\}_{|\epsilon|\leq\epsilon_0}$ of class $S^m_{1,\epsilon}(\X^2)$ and let us define, as in Subsection \ref{S.1.3}:
$$
\widetilde{p}_\epsilon(x,y,\xi,\eta)\ :=\ p_\epsilon(x,y,\xi+\eta);\qquad\mathfrak{q}_\epsilon(x,\xi)\ :=\ \mathfrak{Op}\big(\widetilde{p}_\epsilon(x,\cdot,\xi,\cdot)\big). 
$$
The following two statements are true:
\begin{enumerate}
\item $\{\mathfrak{q}_\epsilon\}_{|\epsilon|\leq\epsilon_0}\,\in\,S^0_{0,\epsilon}\big(\X;\mathbb{B}(\mathcal{H}^{s+m}_\bullet(\X);\mathcal{H}^s_\bullet(\X)\big)$ for any $s\in\mathbb{R}$.
\item If the family of magnetic fields $\{B_\epsilon\}_{|\epsilon|\leq\epsilon_0}$ satisfies Hypothesis H.1 and if the associated vector potentials are choosen as in \eqref{0.28}, then we have that:
\begin{equation}\label{A.12}
\mathfrak{Op}^{A_\epsilon}(\mathfrak{q}_\epsilon)\ \in\ \mathbb{B}\big(\mathscr{S}(\X;\mathcal{H}^{s+m}(\X));\mathscr{S}(\X;\mathcal{H}^{s}(\X))\big)\ \cap\ \mathbb{B}\big(\mathscr{S}^\prime(\X;\mathcal{H}^{s+m}(\X));\mathscr{S}^\prime(\X;\mathcal{H}^{s}(\X))\big),
\end{equation}
for any $s\in\mathbb{R}$. Moreover, 
\begin{equation}\label{A.13}
\mathfrak{Op}^{A_\epsilon}(\mathfrak{q}_\epsilon)\ \in\ \mathbb{B}\big(\mathscr{S}(\X^2);\mathscr{S}(\X^2)\big)\ \cap\ \mathbb{B}\big(\mathscr{S}^\prime(\X^2);\mathscr{S}^\prime(\X^2)\big),
\end{equation}
and all the continuities are uniform with respect to $\epsilon\in[-\epsilon_0,\epsilon_0]$.
\end{enumerate}
\end{ex}
\begin{proof}
We begin by verifying point (1). By similar arguments as in Example \ref{E.A.4} we prove that for any $\epsilon\in[-\epsilon_0,\epsilon_0]$ and $s\in\mathbb{R}$ we have that $\mathfrak{q}_\epsilon\,\in\,S^0_{0}\big(\X;\mathbb{B}(\mathcal{H}^{s+m}_\bullet(\X);\mathcal{H}^s_\bullet(\X))\big)$ uniformly with respect to $\epsilon\in[-\epsilon_0,\epsilon_0]$ and the following application is continuous:
$$
S^m_1(\X^2)\ni p_\epsilon\mapsto\mathfrak{q}_\epsilon\in S^0_{0}\big(\X;\mathbb{B}(\mathcal{H}^{s+m}_\bullet(\X);\mathcal{H}^s_\bullet(\X)\big),\qquad\forall s\in\mathbb{R},
$$
uniformly with respect to $\epsilon\in[-\epsilon_0,\epsilon_0]$. Point (1) follows then clearly.

Concerning the second point of the Proposition, let us notice that \eqref{A.12} and the uniformity with respect to $\epsilon\in[-\epsilon_0,\epsilon_0]$ follow easily from Proposition \ref{P.A.7} and its proof. 

In order to prove \eqref{A.13} let us notice that $\widetilde{p}_\epsilon^\prime(x,\cdot,\xi,\cdot):=<\xi>^{-|m|}\widetilde{p}_\epsilon(x,\cdot,\xi,\cdot)$ defines a symbol of class $S^m_0(\X)$ uniformly with respect to $((x,\xi),\epsilon)\in\Xi\times[-\epsilon_0,\epsilon_0]$ and we can view the element $\widetilde{p}_\epsilon^\prime$ as a function in $BC^\infty\big(\Xi;S^m_0(\X)\big)$. Then, the operator-valued symbol $\mathfrak{q}^\prime_\epsilon(x,\xi):=<\xi>^{-|m|}\mathfrak{q}_\epsilon(x,\xi)$ has the following property:
$$
\forall(\alpha,\beta)\in[\mathbb{N}^d]^2,\ \big(\partial^\alpha_x\partial^\beta_\xi\mathfrak{q}^\prime_\epsilon\big)(x,\xi)\,\in\,\mathbb{B}\big(\mathscr{S}(\X)\big),
$$
uniformly with respect to $((x,\xi),\epsilon)\in\Xi\times[-\epsilon_0,\epsilon_0]$.

Denoting $\mathfrak{s}_{s}(x,\xi):=<\xi>^{s}$ for any $s\in\mathbb{R}$ and writing $\mathfrak{Op}^{A_\epsilon}(\mathfrak{q}_\epsilon)=\mathfrak{Op}^{A_\epsilon}(\mathfrak{s}_{|m|}\mathfrak{q}^\prime)$, the proof of Proposition \ref{P.A.7} implies \eqref{A.13} uniformly with respect to $\epsilon\in[-\epsilon_0,\epsilon_0]$.
\end{proof}

\subsection{Some spaces of periodic distributions}

We shall use the following notations:
\begin{itemize}
\item $\mathscr{S}^\prime_\Gamma(\X)\ :=\ \left\{u\in\mathscr{S}^\prime(\X)\,\mid\,\tau_\gamma u=u,\ \forall\gamma\in\Gamma\right\}$, the space of $\Gamma$-periodic distributions on $\X$.
\item $\mathscr{S}(\mathbb{T}):=C^\infty(\mathbb{T})$ with the usual Fr\'{e}chet topology, We have the evident identification:
$$
\mathscr{S}(\mathbb{T})\ \cong\ \left\{\varphi\in\mathscr{E}(\X)\,\mid\,\tau_\gamma\varphi=\varphi,\ \forall\gamma\in\Gamma\right\}\ =\ \mathscr{S}^\prime_\Gamma(\X)\cap\mathscr{E}(\X).
$$
\item $\mathscr{S}^\prime(\mathbb{T})$ is the dual of $\mathscr{S}(\mathbb{T})$. 
\item We shall denote by $<\cdot,\cdot>_{\mathbb{T}}$ the natural bilinear map defined by the duality relation on $\mathscr{S}^\prime(\mathbb{T})\times\mathscr{S}(\mathbb{T})$ and by $(\cdot,\cdot)_{\mathbb{T}}$ the natural  sesquilinear map on $\mathscr{S}^\prime(\mathbb{T})\times\mathscr{S}(\mathbb{T})$ obtaind by extending the scalar product from $L^2(\mathbb{T})$.
\end{itemize}

\begin{remark}\label{R.A.9}
It is well known that the spaces $\mathscr{S}^\prime_\Gamma(\X)$ and $\mathscr{S}^\prime(\mathbb{T})$ can be identified through the following topological isomorphism:
\begin{equation}\label{A.14}
\mathfrak{i}:\mathscr{S}^\prime(\mathbb{T})\overset{\sim}{\rightarrow}\mathscr{S}^\prime_\Gamma(\X);\qquad<\mathfrak{i}(u),\varphi>:=<u,\underset{\gamma\in\Gamma}{\sum}\tau_\gamma\varphi>_{\mathbb{T}},\qquad\forall(u,\varphi)\in\mathscr{S}^\prime(\mathbb{T})\times\mathscr{S}(\X).
\end{equation}
In order to give an explicit form to the inverse of the isomorphism $\mathfrak{i}$ let us fix some test function $\phi\in C^\infty_0(\X)$ such that $\underset{\gamma\in\Gamma}{\sum}\tau_\gamma\phi=1$ (it is easy to see that there exist enough such functions). Then we can easily verify that:
\begin{equation}\label{A.15}
<\mathfrak{i}^{-1}(v),\theta>_{\mathbb{T}}=<v,\phi\theta>,\qquad\forall(v,\theta)\in\mathscr{S}^\prime_\Gamma(\X)\times\mathscr{S}(\mathbb{T}).
\end{equation}
\end{remark}

\begin{remark}\label{R.A.10}
For any distribution $u\in\mathscr{S}^\prime_\Gamma(\X)\cong\mathscr{S}^\prime(\mathbb{T})$ we have the Fourier series decomposition:
\begin{equation}\label{A.16}
u\ =\ \underset{\gamma^*\in\Gamma^*}{\sum}\hat{u}(\gamma^*)\sigma_{\gamma^*},\qquad\hat{u}(\gamma^*):=|E|^{-1}<u,\sigma_{-\gamma^*}>_{\mathbb{T}},
\end{equation}
where $\sigma_{\gamma^*}(y):=e^{i<\gamma^*,y>},\forall y\in\mathbb{T},\forall\gamma^*\in\Gamma^*$ and the series converges as tempered distribution.

In particular, if $u\in L^2(\mathbb{T})$ we also have the equality:
\begin{equation}\label{A.17}
\|u\|^2_{L^2(\mathbb{T})}\ =\ |E|\underset{\gamma^*\in\Gamma^*}{\sum}|\hat{u}(\gamma^*)|^2.
\end{equation}
\end{remark}

\begin{remark}\label{R.A.11}
A simple computation allows to prove that for any $s\in\mathbb{R}$ and any $\gamma^*\in\Gamma^*$ the following equality is true in $\mathscr{S}^\prime(\X)$:
\begin{equation}\label{A.18}
<D>^s\sigma_{\gamma^*}\ =\ <\gamma^*>^s\sigma_{\gamma^*}.
\end{equation}
Using now \eqref{A.16} we deduce that $<D>^s$ induces on $\mathscr{S}^\prime(\mathbb{T})\cong\mathscr{S}^\prime_\Gamma(\X)$ a well-defined operator, denoted by $<D_\Gamma>^s$, explicitely given by the following formula:
\begin{equation}\label{A.19}
<D_\Gamma>^su\ :=\ \underset{\gamma^*\in\Gamma^*}{\sum}<\gamma^*>^s\hat{u}(\gamma^*)\sigma_{\gamma^*},\qquad\forall u\in\mathscr{S}^\prime(\mathbb{T}).
\end{equation}
\end{remark}

\begin{definition}\label{D.A.12}
Given any $s\in\mathbb{R}$ we define the complex linear space:
$$
\mathcal{H}^s(\mathbb{T})\ :=\ \left\{u\in\mathscr{S}^\prime(\mathbb{T})\,\mid\,<D_\Gamma>^su\in L^2(\mathbb{T})\right\}
$$
endowed with the quadratic norm $\|u\|_{\mathcal{H}^s(\mathbb{T})}:=\left\|<D_\Gamma>^su\right\|_{L^2(\mathbb{T})}$ for which it becomes a Hilbert space.
\end{definition}
Let us notice that for any $u\in\mathscr{S}^\prime(\mathbb{T})$ we have the following equivalence relation:
$$
u\,\in\,\mathcal{H}^s(\mathbb{T})\quad\Longleftrightarrow\quad|E|^{-1}\underset{\gamma^*\in\Gamma^*}{\sum}<\gamma^*>^{2s}|\hat{u}(\gamma^*)|^2\,<\,\infty,
$$
and the formula in the right hand side of the above equivalence relation is equal to $\|u\|^2_{\mathcal{H}^s(\mathbb{T})}$. From these facts it follows easily that $\mathscr{S}(\mathbb{T})$ is dense in $\mathcal{H}^s(\mathbb{T})$.

\begin{lemma}\label{L.A.13}
Let $p\in S^m_1(\X)$ and let us denote by $P:=\mathfrak{Op}(p)$. Then for any $s\in\mathbb{R}$ and for any $u\in\mathcal{H}^{s+m}_{\text{\sf loc}}(\X)\cap\mathscr{S}^\prime(\X)$ we have that $Pu\in\mathcal{H}^s_{\text{\sf loc}}(\X)\cap\mathscr{S}^\prime(\X)$.
\end{lemma}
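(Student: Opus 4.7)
The plan is to argue by localization combined with the pseudolocality of the Weyl calculus. First I would observe that $P=\mathfrak{Op}(p)$ with $p\in S^m_1(\X)$ maps $\mathscr{S}(\X)$ continuously into itself, so by duality $P$ extends to a continuous operator $\mathscr{S}^\prime(\X)\to\mathscr{S}^\prime(\X)$; hence $Pu\in\mathscr{S}^\prime(\X)$ is well defined and the only nontrivial point is the local membership to $\mathcal{H}^s$. Equivalently, I need to prove that $\chi\,Pu\in\mathcal{H}^s(\X)$ for every cut-off $\chi\in C^\infty_0(\X)$.

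Fix such a $\chi$ and choose $\widetilde\chi\in C^\infty_0(\X)$ identically equal to $1$ on a neighborhood of $\text{supp}(\chi)$. Write $u=\widetilde\chi u+(1-\widetilde\chi)u$ and split $\chi Pu=\chi P(\widetilde\chi u)+\chi P((1-\widetilde\chi)u)$. For the first term, the compactly supported distribution $\widetilde\chi u$ belongs locally (hence globally) to $\mathcal{H}^{s+m}(\X)$ by hypothesis, so the standard $\mathcal{H}^{s+m}(\X)\to\mathcal{H}^s(\X)$ continuity of the Weyl quantization of symbols in $S^m_1$ gives $P(\widetilde\chi u)\in\mathcal{H}^s(\X)$ and hence $\chi P(\widetilde\chi u)\in\mathcal{H}^s(\X)$.

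For the second term I would invoke pseudolocality. Denote by $K$ the Schwartz kernel of $P$, formally $K(x,y)=(2\pi)^{-d}\int_{\X^*}e^{i<\eta,x-y>}p((x+y)/2,\eta)\,d\eta$. Repeated integration by parts using the identity $e^{i<\eta,x-y>}=|x-y|^{-2N}(-\Delta_\eta)^Ne^{i<\eta,x-y>}$ and the symbol bounds of $S^m_1(\X)$ show that on $\{x\ne y\}$ the kernel $K$ is smooth and satisfies, for any $N\in\mathbb{N}$ and any $(\alpha,\beta)\in[\mathbb{N}^d]^2$, an estimate
\begin{equation*}
\big|\partial^\alpha_x\partial^\beta_y K(x,y)\big|\ \leq\ C_{N,\alpha,\beta}|x-y|^{-N},\qquad x\ne y.
\end{equation*}
Since $\chi(x)(1-\widetilde\chi(y))$ vanishes on a neighborhood of the diagonal, the function $K_0(x,y):=\chi(x)(1-\widetilde\chi(y))K(x,y)$ belongs to $C^\infty(\X\times\X)$, has compact support in $x$, and for every fixed $x$ the function $y\mapsto K_0(x,y)$ lies in $\mathscr{S}(\X)$ with all $\mathscr{S}$-seminorms depending smoothly on $x$ (and supported in a fixed compact in $x$). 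Consequently
\begin{equation*}
\big(\chi P(1-\widetilde\chi)u\big)(x)\ =\ \langle u,K_0(x,\cdot)\rangle
\end{equation*}
defines an element of $C^\infty_0(\X)\subset\mathcal{H}^s(\X)$, and adding the two contributions gives $\chi Pu\in\mathcal{H}^s(\X)$.

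The main technical step—and the only place requiring real care—is the off-diagonal estimate on the kernel $K$ and the verification that $K_0(x,\cdot)$ is genuinely a Schwartz function in $y$ uniformly in $x\in\text{supp}(\chi)$, so that the pairing with the tempered distribution $u$ is meaningful and produces a smooth function of $x$. Everything else reduces to standard continuity of Weyl operators on Sobolev spaces and to the duality extension to tempered distributions.
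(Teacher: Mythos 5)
Your proof is correct and follows essentially the same strategy as the paper's: split $u$ into a near part (handled by the global $\mathcal{H}^{s+m}\to\mathcal{H}^s$ boundedness of $\mathfrak{Op}(p)$, valid since the near part is compactly supported and hence globally in $\mathcal{H}^{s+m}$) and a far part, and for the far part use the rapid off-diagonal decay of the Weyl Schwartz kernel obtained by integrating by parts in $\eta$, so that the localized kernel is a smooth, Schwartz-in-$y$, compactly-supported-in-$x$ function and pairing with $u\in\mathscr{S}^\prime(\X)$ yields something in $C^\infty_0(\X)$. The paper phrases the far-part step via the transpose $P^t$ and the pairing $\langle P(1-\chi)u,\phi\rangle=\langle u,(1-\chi)P^t(\psi\phi)\rangle$ rather than writing $\chi P(1-\widetilde\chi)u$ directly as $\langle u,K_0(x,\cdot)\rangle$, but this is only a cosmetic difference in bookkeeping, not a different argument.
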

\begin{proof}
It is clear from the definitions that we have $Pu\in\mathscr{S}^\prime(\X)$. For any relatively compact open subset $\Omega\subset\X$ we can choose two positive test functions $\psi$ and $\chi$ of class $C^\infty_0(\X)$ such that: $\psi=1$ on $\overline{\Omega}$ and $\chi=1$ on a neighborhood $V_\psi$ of the support of $\psi$. Then $\chi u\in\mathcal{H}^{s+m}(\X)$ and we know that $P\chi u\in\mathcal{H}^s(\X)$. Thus the Lemma will follow if we prove that the restriction of $P(1-\chi)u$ to $\Omega$ is of class $C^\infty(\Omega)$. For that, let us choose $\phi\in C^\infty_0(\Omega)$ and notice that
$$
\left\langle P(1-\chi)u,\phi\right\rangle\ =\ \left\langle u,(1-\chi)P^t(\psi\phi)\right\rangle\ =\ \left\langle u,\int_\X K(x,y)\phi(y)dy\right\rangle,
$$
where for any $N\in\mathbb{N}$ we have the following equality (obtained by the usual integration by parts method):
$$
K(x,y)\ :=\ \big(1-\chi(x)\big)\psi(y)\int_\X e^{i<\eta,x-y>}|x-y|^{-2N}\big[\Delta_\eta^N p\big(\frac{x+y}{2},-\eta\big)\big]\dbar\eta.
$$
It is evident that $K\in\mathscr{S}(\X^2)$, so that $v(y):=<u(\cdot),K(\cdot,y)>$ is a function of class $\mathscr{S}(\X)$ and we have the equality:
$$
\left\langle P(1-\chi)u,\phi\right\rangle\ =\ <v,\phi>,
$$
from which we conclude that $P(1\,-\,\chi)u\in C^\infty(\Omega)$.
\end{proof}

\begin{corollary}\label{C.A.14}
The space $\mathcal{H}^s(\mathbb{T})$ can be identified with the usual Sobolev space of order $s$ on  the torus that is defined as $\mathcal{H}^s_{\text{\sf loc}}(\X)\cap\mathscr{S}^\prime_\Gamma(\X)$.
\end{corollary}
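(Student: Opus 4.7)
The plan is to use the identification $\mathfrak{i}:\mathscr{S}^\prime(\mathbb{T})\overset{\sim}{\rightarrow}\mathscr{S}^\prime_\Gamma(\mathcal{X})$ from Remark \ref{R.A.9} to view every $u\in\mathscr{S}^\prime(\mathbb{T})$ as a $\Gamma$-periodic tempered distribution on $\mathcal{X}$, and then to show that, under this identification, the condition $<D_\Gamma>^s u\in L^2(\mathbb{T})$ is equivalent to $u\in\mathcal{H}^s_{\text{\sf loc}}(\mathcal{X})$. The bridge between the two Sobolev conditions is the equality $<D>^s u=<D_\Gamma>^s u$ as elements of $\mathscr{S}^\prime_\Gamma(\mathcal{X})$ for every $u\in\mathscr{S}^\prime_\Gamma(\mathcal{X})$, which follows from Remark \ref{R.A.11}: the Fourier multiplier $<D>^s$ commutes with the translations $\tau_\gamma$ (for $\gamma\in\Gamma$), hence sends $\mathscr{S}^\prime_\Gamma(\mathcal{X})$ to itself, and applying it termwise to the Fourier series \eqref{A.16} produces exactly \eqref{A.19}.

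For the direct inclusion, assume $u\in\mathcal{H}^s_{\text{\sf loc}}(\mathcal{X})\cap\mathscr{S}^\prime_\Gamma(\mathcal{X})$. Apply Lemma \ref{L.A.13} to the symbol $p(\eta):=<\eta>^s\in S^s_1(\mathcal{X})$ (with $m=s$): we obtain $<D>^s u\in\mathcal{H}^0_{\text{\sf loc}}(\mathcal{X})=L^2_{\text{\sf loc}}(\mathcal{X})$. Since this distribution is $\Gamma$-periodic and locally square-integrable, its restriction to the cell $E$ has the same $L^2$-norm as its restriction to any translate of $E$, so $<D>^s u\in L^2(\mathbb{T})$. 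Using the equality $<D>^s u=<D_\Gamma>^s u$ noted above, we conclude $u\in\mathcal{H}^s(\mathbb{T})$.

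For the reverse inclusion, assume $u\in\mathcal{H}^s(\mathbb{T})$, so that $v:=<D_\Gamma>^s u=<D>^s u$ belongs to $L^2(\mathbb{T})\subset L^2_{\text{\sf loc}}(\mathcal{X})$. Now apply Lemma \ref{L.A.13} to the symbol $p(\eta):=<\eta>^{-s}\in S^{-s}_1(\mathcal{X})$ (so $m=-s$) and to $v\in\mathcal{H}^{s+m}_{\text{\sf loc}}(\mathcal{X})=L^2_{\text{\sf loc}}(\mathcal{X})$: this yields $u=<D>^{-s}v\in\mathcal{H}^s_{\text{\sf loc}}(\mathcal{X})$, which together with the periodicity of $u$ gives $u\in\mathcal{H}^s_{\text{\sf loc}}(\mathcal{X})\cap\mathscr{S}^\prime_\Gamma(\mathcal{X})$. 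The norms are clearly equivalent since both are essentially $\|<D>^s u\|_{L^2(E)}$. The only mildly delicate point, and the one I expect to spend a line on, is justifying $<D>^s u=<D_\Gamma>^s u$ as tempered distributions — ensuring that the Fourier-series expansion \eqref{A.19} does coincide with the Fourier-multiplier action on $\mathscr{S}^\prime_\Gamma(\mathcal{X})$; this is immediate from \eqref{A.18} once one knows that the series \eqref{A.16} converges in $\mathscr{S}^\prime(\mathcal{X})$ and that $<D>^s$ is continuous there.
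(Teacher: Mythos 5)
Your proof is correct and follows essentially the same route as the paper: both hinge on the identity $<D>^s=<D_\Gamma>^s$ on $\mathscr{S}^\prime_\Gamma(\mathcal{X})$ and invoke Lemma \ref{L.A.13} for the symbols $<\eta>^{\pm s}$ to pass between local Sobolev regularity and the torus condition. You merely spell out the two directions separately where the paper compresses them into a single chain of equivalences.
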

\begin{proof}
The Corollary follows from Lemma \ref{L.A.13} using the fact that we have the equality $<D_\Gamma>^s=<D>^s$ on $\mathscr{S}^\prime_\Gamma(\X)$ and thus for any $u\in\mathscr{S}^\prime_\Gamma(\X)$ the following relations hold:
$$
u\in\mathcal{H}^s(\mathbb{T})\ \Leftrightarrow\ <D_\Gamma>^su\in L^2(\mathbb{T})\ \Leftrightarrow\ <D>^su\in L^2_{\text{\sf loc}}(\X)\cap\mathscr{S}^\prime_\Gamma(\X)\ \Leftrightarrow\ u\in\mathcal{H}^s_{\text{\sf loc}}(\X)\cap\mathscr{S}^\prime_\Gamma(\X).
$$
\end{proof}

\begin{remark}\label{R.A.15}
Let us notice the following facts to be used in the paper.
\begin{enumerate}
\item For any pair of real numbers $(s,t)$ the application $<D_\Gamma>^s:\mathcal{H}^{t+s}(\mathbb{T})\rightarrow\mathcal{H}^s(\mathbb{T})$ is an isometric isomorphism.
\item For any pair $(u,v)$ of test functions from $\mathscr{S}(\mathbb{T})$ the following equalities are true:
$$
(u,v)_{\mathcal{H}^s(\mathbb{T})}\ =\ \left(<D_\Gamma>^su,<D_\Gamma>^sv\right)_{\mathbb{T}}\ =\ \left(u,<D_\Gamma>^{2s}v\right)_{\mathbb{T}}.
$$
\item The dual of $\mathcal{H}^s(\mathbb{T})$ can be canonically identified with the space $\mathcal{H}^{-s}(\mathbb{T})$. In fact, using also the above remark, we can identify the dual of $\mathcal{H}^s(\mathbb{T})$ with itself via the operator $<D_\Gamma>^{-2s}$.
\end{enumerate}
\end{remark}

\begin{ex}\label{E.A.16}
For any $s\in\mathbb{R}$ and for any $\xi\in\X^*$ we define the following operator:
\begin{equation}\label{A.20}
<D_\Gamma+\xi>^s:\mathscr{S}^\prime(\mathbb{T})\rightarrow\mathscr{S}^\prime(\mathbb{T}),\qquad<D_\Gamma+\xi>^su:=\underset{\gamma^*\in\Gamma^*}{\sum}<\gamma^*+\xi>^s\hat{u}(\gamma^*)\sigma_{\gamma^*}.
\end{equation}
Considering $u\in\mathscr{S}^\prime_\Gamma(\X)$ we evidently have that $<D_\Gamma+\xi>^su=<D+\xi>^su$.
\end{ex}

We define the following complex linear space:
\begin{equation}\label{A.21}
\mathcal{K}_{s,\xi}\ :=\ \left\{u\in\mathscr{S}^\prime(\mathbb{T})\,\mid\,<D_\Gamma+\xi>^su\in L^2(\mathbb{T})\right\},
\end{equation}
endowed with the quadratic norm:
\begin{equation}\label{A.22}
\|u\|^2_{\mathcal{K}_{s,\xi}}\ :=\ \left\|<D_\Gamma+\xi>^su\right\|^2_{L^2(\mathbb{T})}\ =\ |E|^{-1}\underset{\gamma^*\in\Gamma^*}{\sum}<\gamma^*+\xi>^{2s}|\hat{u}(\gamma^*)|^2
\end{equation}
that defines a structure of Hilbert space on it.

It is clear that $\mathcal{K}_{s,\xi}=\mathcal{H}^s(\mathbb{T})$ as complex vector spaces and for $\xi=0$ even as Hilbert spaces (having the same scalar product). Similar arguments to those in Example \ref{E.A.2} show that the family $\{\mathcal{K}_{s,\xi}\}_{\xi\in\X^*}$ has temperate variation.

Coming back to Corollary \ref{C.A.14} we can consider the elements of $\mathcal{K}_{s,\xi}$ as distributions from $\mathcal{H}^s_{\text{\sf loc}}(\X)\cap\mathscr{S}^\prime_\Gamma(\X)$ and we can define the spaces:
\begin{equation}\label{A.23}
\mathcal{F}_{s,\xi}\ :=\ \left\{u\in\mathscr{S}^\prime(\X)\,\mid\,\sigma_{-\xi}u\in\mathcal{K}_{s,\xi}\right\}.
\end{equation}
It will become a Hilbert space isometrically isomorphic with $\mathcal{K}_{s,\xi}$ (through the operator $\sigma_{-\xi}$) once we endow it with the norm:
\begin{equation}\label{A.24}
\|u\|_{\mathcal{F}_{s,\xi}}\ :=\ \left\|\sigma_{-\xi}u\right\|_{\mathcal{K}_{s,\xi}},\qquad\forall u\in\mathcal{F}_{s,\xi}.
\end{equation}

\begin{remark}\label{R.A.17}
Let us fix some $\xi\in\X^*$. 
\begin{enumerate}
\item Let us denote by:
\begin{equation}\label{A.25}
\mathscr{S}^\prime_\xi(\X)\ :=\ \left\{u\in\mathscr{S}^\prime(\X)\,\mid\,\tau_{-\gamma}u=e^{i<\xi,\gamma>}u,\forall\gamma\in\Gamma\right\}.
\end{equation}
Then $\sigma_{-\xi}:\mathscr{S}^\prime_\xi(\X)\rightarrow\mathscr{S}^\prime_\Gamma(\X)$ is an isomorphism having the inverse $\sigma_\xi$.
\item Let us notice that we can write:
$$
\mathcal{F}_{0,\xi}\ =\ \left\{u\in\mathscr{S}^\prime(\X)\,\mid\,\sigma_{-\xi}u\in L^2_{\text{\sf loc}}(\X)\cap\mathscr{S}^\prime_\Gamma(\X)\right\}\ =\ \mathscr{S}^\prime_\xi(\X)\cap L^2_{\text{\sf loc}}(\X)
$$
and conclude that we can identify $\mathcal{F}_{0,\xi}$ with $L^2(E)$ and notice that we have the equality of the norms $\|u\|_{\mathcal{F}_{0,\xi}}=\|u\|_{L^2(E)}$. In fact the isomorphism $\mathfrak{j}_\xi:\mathcal{F}_{0,\xi}\overset{\sim}{\rightarrow}L^2(E)$ is defined by taking the restriction to $E\subset\X$, i.e. $\mathfrak{j}_\xi u:=u|_E,\forall u\in\mathcal{F}_{0,\xi}$. We can obtain an explicit formula for its inverse $\mathfrak{j}_\xi^{-1}:L^2(E)\overset{\sim}{\rightarrow}\mathcal{F}_{0,\xi}$; for any $v\in L^2(E)$ we define a distribution $\tilde{v}_\xi$ that is equal to $\sigma_{-\xi}v$ on $E$ and is extended to $\X$ by $\Gamma$-periodicity. This clearly gives us a distribution from $L^2_{\text{\sf loc}}(\X)\cap\mathscr{S}^\prime_\Gamma(\X)$. One can easily see that we have $\mathfrak{j}_\xi^{-1}v=\sigma_\xi\tilde{v}_\xi$.
\item From \eqref{A.4} it follows that $<D+\xi>^s=\sigma_{-\xi}<D>^s\sigma_\xi$. Thus:
$$
\mathcal{F}_{s,\xi}\ =\ \left\{u\in\mathscr{S}^\prime(\X)\,\mid\,\sigma_{-\xi}u\in\mathcal{H}^s_{\text{\sf loc}}(\X)\cap\mathscr{S}^\prime_\Gamma(\X)\right\}\ =\ \mathscr{S}^\prime_\xi(\X)\cap\mathcal{H}^s_{\text{\sf loc}}(\X),
$$
and for any $u\in\mathcal{F}_{s,\xi}$ we have that
\begin{equation}\label{A.26}
\|u\|_{\mathcal{F}_{s,\xi}}\ =\ \left\|\sigma_{-\xi}u\right\|_{\mathcal{K}_{s,\xi}}\ =
\end{equation}
$$
=\ \left\|<D_\Gamma+\xi>^s\sigma_{-\xi}u\right\|_{L^2(\mathbb{T})}\ =\ \left\|\sigma_{-\xi}<D>^su\right\|_{L^2(E)}\ =\ \left\|<D>^su\right\|_{L^2(E)}.
$$
In particular we obtain that
\begin{equation}\label{A.27}
\mathcal{F}_{s,\xi}\ =\ \left\{u\in\mathscr{S}^\prime_\xi(\X)\,\mid\,<D>^su\in\mathcal{F}_{0,\xi}\right\},\qquad\|u\|_{\mathcal{F}_{s,\xi}}\ =\ \|<D>^su\|_{\mathcal{F}_{0,\xi}}.
\end{equation}
Moreover, if $s\geq0$ we have that $\mathcal{F}_{s,\xi}=\{u\in\mathcal{F}_{0,\xi}\,\mid\,<D>^su\in\mathcal{F}_{0,\xi}\}$.
\end{enumerate}
\end{remark}

\begin{definition}\label{D.A.18}
We shall denote by $S^m_\rho\big(\X\times\mathbb{T};\mathbb{B}(\mathcal{A}_\bullet;\mathcal{B}_\bullet)\big)$ the space of symbols $p\in S^m_\rho\big(\X^2;\mathbb{B}(\mathcal{A}_\bullet;\mathcal{B}_\bullet)\big)$ that are $\Gamma$-periodic with respect to the second variable (i.e. $p(x,y+\gamma,\xi)=p(x,y,\xi)$, $\forall(x,y)\in\X^2$, $\forall\xi\in\X^*$ and $\forall\gamma\in\Gamma$).
\end{definition}

In a similar way we define the spaces $S^m_\rho\big(\mathbb{T};\mathbb{B}(\mathcal{A}_\bullet;\mathcal{B}_\bullet)\big)$, $S^m_\rho\big(\X\times\mathbb{T};\mathbb{B}(\mathcal{A}_0;\mathcal{B}_0)\big)$, $S^m_\rho\big(\mathbb{T};\mathbb{B}(\mathcal{A}_0;\mathcal{B}_0)\big)$, $S^m_\rho\big(\X\times\mathbb{T}\big)$, $S^m_\rho(\mathbb{T})$, $S^m_{\rho,\epsilon}\big(\X\times\mathbb{T};\mathbb{B}(\mathcal{A}_\bullet;\mathcal{B}_\bullet)\big)$, $S^m_{\rho,\epsilon}\big(\X\times\mathbb{T};\mathbb{B}(\mathcal{A}_0;\mathcal{B}_0)\big)$, $S^m_{\rho,\epsilon}\big(\X\times\mathbb{T}\big)$. Let us notice that we have an evident identification of $S^m_{\rho,\epsilon}\big(\X;\mathbb{B}(\mathcal{A}_\bullet;\mathcal{B}_\bullet)\big)$ with a subspace of $S^m_{\rho,\epsilon}\big(\X\times\mathbb{T};\mathbb{B}(\mathcal{A}_\bullet;\mathcal{B}_\bullet)\big)$.

\begin{lemma}\label{L.A.18}
Under the hypothesis of Proposition \ref{P.A.7}, for any $a\in\X$ we have the equality:
\begin{equation}\label{A.28}
\tau_a\mathfrak{Op}^{A}(p)\ =\ \mathfrak{Op}^{\tau_aA}\big((\tau_a\otimes\id)p\big)\tau_a.
\end{equation}
\end{lemma}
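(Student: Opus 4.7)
The plan is to verify the identity directly on test functions $u\in\mathscr{S}(\mathcal{X};\mathcal{A}_0)$ using the oscillatory integral representation \eqref{A.8}. This suffices because, by Proposition \ref{P.A.7}, both sides are continuous linear operators on $\mathscr{S}(\mathcal{X};\mathcal{A}_0)$, and the identity will then extend to $\mathscr{S}^\prime(\mathcal{X};\mathcal{A}_0)$ by the continuity in Proposition \ref{P.A.7} (4). I start from $[\tau_a\mathfrak{Op}^A(p)u](x)=[\mathfrak{Op}^A(p)u](x-a)$, insert the defining integral, and then perform the change of variable $y\mapsto y-a$ in the inner integral to factor out a translation of $u$.

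After the substitution the integrand splits into four factors whose identification I treat separately. The phase $e^{i\langle\eta,x-y\rangle}$ is invariant under the shift and needs no modification. The argument of $u$ becomes $u(y-a)=[\tau_a u](y)$, which matches the right-hand side of \eqref{A.28}. The symbol takes the form $p\bigl(\tfrac{x+y}{2}-a,\eta\bigr)$, and by the convention $(\tau_af)(z)=f(z-a)$ (the same convention already used in \eqref{A.4} and Example \ref{E.A.4}, where $(\id\otimes\tau_{-\xi})p$ produces $p(y,\eta+\xi)$) this is exactly $\bigl[(\tau_a\otimes\id)p\bigr]\bigl(\tfrac{x+y}{2},\eta\bigr)$. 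Reassembling the four factors yields the integral defining $\mathfrak{Op}^{\tau_aA}\bigl((\tau_a\otimes\id)p\bigr)\tau_au$.

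The one geometric step is the identity
\begin{equation*}
\omega_A(x-a,y-a)\ =\ \omega_{\tau_aA}(x,y),\qquad\forall(x,y,a)\in\mathcal{X}^3,
\end{equation*}
which is the only point where the specific nature of the magnetic phase enters. Parametrising the segment $[x,y]$ by $s\mapsto(1-s)x+sy$ gives
\begin{equation*}
\int_{[x-a,y-a]}\!\!\!A\ =\ \int_0^1\bigl\langle A\bigl((1-s)x+sy-a\bigr),\,y-x\bigr\rangle\,ds\ =\ \int_0^1\bigl\langle(\tau_aA)\bigl((1-s)x+sy\bigr),\,y-x\bigr\rangle\,ds\ =\ \int_{[x,y]}\!\!\tau_aA,
\end{equation*}
since $(\tau_aA)(z)=A(z-a)$; exponentiating $-i$ times this identity yields the required equality of the $\omega$-factors.

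There is no genuine obstacle: the statement is a bookkeeping covariance property, and the calculation is a change of variables combined with the translation invariance of the line integral under simultaneous translation of both endpoints and the $1$-form. The only point requiring care is consistency of the translation convention between the base variables of $p$ and the point $a$ by which $A$ is translated; once one fixes $(\tau_af)(z)=f(z-a)$, all three occurrences of $\tau_a$ in \eqref{A.28} agree, and the computation closes.
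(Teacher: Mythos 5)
Your proof is correct and follows essentially the same route as the paper: apply the translation, change variables $y\mapsto y-a$ in the oscillatory integral, and reduce the claim to the covariance identity $\omega_A(x-a,y-a)=\omega_{\tau_aA}(x,y)$, verified by parametrising the line segment. The only cosmetic difference is that you spell out the translation convention $(\tau_af)(z)=f(z-a)$ explicitly, which the paper leaves implicit.
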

\begin{proof}
We start from equality \eqref{A.8} with $u\in\mathscr{S}\big(\X;\mathcal{A}_0\big)$ and we get:
$$
\left[\tau_a\mathfrak{Op}^{A}(p)u\right](x)\ =\ \int_\Xi e^{i<\eta,x-a-y>}\omega_A(x-a,y)\,p\big(\frac{x-a+y}{2},\eta\big)u(y)\,dy\,\dbar\eta\ =
$$
$$
=\ \int_\Xi e^{i<\eta,x-y>}\omega_A(x-a,y-a)\,p\big(\frac{x+y}{2}-a,\eta\big)u(y-a)\,dy\,\dbar\eta,\qquad\forall x\in\X.
$$
First let us recall that:
\begin{equation}\label{A.29}
\omega_A(x,y)\ =\ e^{-i\int_{[x,y]}A}\ =\ \exp\left\{i\left\langle(x-y),\int_0^1A\big((1-s)x+sy\big)ds\right\rangle\right\}.
\end{equation}
Let us notice that
$$
\int_{[x-a,y-a]}\hspace*{-1cm}A\hspace*{0.8cm}\ =\ -\left\langle(x-y),\int_0^1A\big((1-s)x+sy-a\big)ds\right\rangle
$$
and thus $\omega_A(x-a,y-a)=\omega_{\tau_aA}(x,y)$.
\end{proof}

\begin{lemma}\label{L.A.19}
For any symbol $p\in S^m_1(\mathbb{T})$ the pseudodifferential operator $P:=\mathfrak{Op}(p)$ induces on $\mathbb{T}$ an operator $P_\Gamma\in\mathbb{B}(\mathcal{K}_{s+m,0};\mathcal{K}_{s,0})$ for any $s\in\mathbb{R}$ and the application $S^m_1(\mathbb{T})\ni p\mapsto P_\Gamma\in\mathbb{B}(\mathcal{K}_{s+m,0};\mathcal{K}_{s,0})$ is continuous.
\end{lemma}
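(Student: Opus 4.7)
The plan is to prove the three assertions of the lemma (invariance of periodic distributions under $P$, boundedness $\mathcal{K}_{s+m,0}\to\mathcal{K}_{s,0}$, and continuity in the symbol) in this order, reducing the latter two to a single explicit estimate controlled by seminorms of $p$.

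First, I would apply Lemma~\ref{L.A.18} with vanishing magnetic field $A=0$ to obtain $\tau_\gamma P=\mathfrak{Op}\bigl((\tau_\gamma\otimes\id)p\bigr)\tau_\gamma$ for every $\gamma\in\Gamma$. Since the condition $p\in S^m_1(\mathbb{T})$ means precisely that $(\tau_\gamma\otimes\id)p=p$ for all $\gamma\in\Gamma$, this collapses to $\tau_\gamma P=P\tau_\gamma$, so $P$ leaves $\mathscr{S}^\prime_\Gamma(\X)$ invariant and, through the identification of Remark~\ref{R.A.9}, induces an operator $P_\Gamma$ on $\mathscr{S}^\prime(\mathbb{T})$.

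Next, Corollary~\ref{C.A.14} identifies $\mathcal{K}_{s,0}$ with $\mathcal{H}^s_{\text{\sf loc}}(\X)\cap\mathscr{S}^\prime_\Gamma(\X)$. Lemma~\ref{L.A.13}, applied to the symbol $p$ viewed as an element of $S^m_1(\X)$, then immediately gives that $P$ sends $\mathcal{H}^{s+m}_{\text{\sf loc}}(\X)\cap\mathscr{S}^\prime(\X)$ into $\mathcal{H}^s_{\text{\sf loc}}(\X)\cap\mathscr{S}^\prime(\X)$, so combined with Step~1 the operator $P_\Gamma$ maps $\mathcal{K}_{s+m,0}$ into $\mathcal{K}_{s,0}$ set-theoretically. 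Boundedness could already be extracted from the Closed Graph Theorem at this point, but since I also need continuity in $p$, I would instead establish an explicit estimate.

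For this quantitative step I would fix cutoffs $\psi,\chi\in C^\infty_0(\X)$ with $\psi\equiv1$ on a neighborhood of $\overline{E}$ and $\chi\equiv1$ on a neighborhood of $\operatorname{supp}\psi$, and decompose $\psi Pu=\psi P(\chi u)+\psi P\bigl((1-\chi)u\bigr)$. The first summand reduces to the classical Sobolev boundedness of $P\in\mathbb{B}(\mathcal{H}^{s+m}(\X);\mathcal{H}^s(\X))$, after noting that $\Gamma$-periodicity of $u$ makes $\|\chi u\|_{\mathcal{H}^{s+m}(\X)}$ controllable by $\|u\|_{\mathcal{K}_{s+m,0}}$ (only finitely many translated cells meet $\operatorname{supp}\chi$), with constants given by continuous seminorms of $p$. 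The second summand is a kernel-cutoff term: $\psi P(1-\chi)$ has Schwartz kernel $K_p(x,y)$ supported in $\{|x-y|\geq\delta\}$ for some $\delta>0$, so repeated integration by parts in $\eta$ via $e^{i\langle\eta,x-y\rangle}=|x-y|^{-2N}(-\Delta_\eta)^Ne^{i\langle\eta,x-y\rangle}$ (as in the proof of Lemma~\ref{L.A.13}) yields Schwartz seminorms of $K_p$ controlled by continuous seminorms of $p$; acting on the $\Gamma$-periodic $u$ through the periodized kernel $\sum_{\gamma\in\Gamma}K_p(x,y+\gamma)$ produces a smooth function estimated by $\|u\|_{\mathcal{K}_{s+m,0}}$. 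Multiplying by $\psi$ and restricting to $\overline{E}$, this delivers $\|P_\Gamma u\|_{\mathcal{K}_{s,0}}\leq C(p)\|u\|_{\mathcal{K}_{s+m,0}}$ with $C(p)$ a continuous seminorm of $p\in S^m_1(\mathbb{T})$, giving boundedness and continuity in $p$ in one stroke. The main obstacle I foresee is keeping careful track of the seminorm dependence in the periodization of the off-diagonal kernel, but the rapid polynomial decay produced by the integration-by-parts identity should make this routine.
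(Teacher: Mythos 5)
Your Steps 1 and 2 coincide with the paper's: invariance of $\mathscr{S}^\prime_\Gamma(\X)$ via Lemma \ref{L.A.18} with $A=0$, then the set-theoretic mapping property via Lemma \ref{L.A.13} and Corollary \ref{C.A.14}. Where you diverge is the quantitative Step 3, and I think your route there has a real gap, not merely details to be tidied.

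The paper first conjugates: $<D>^sP<D>^{-s-m}=\mathfrak{Op}(q)$ with $q\in S^0_1(\X)$ depending continuously on $p$, and then applies the localization argument from the proof of Lemma \ref{L.A.13} to the zero-order operator $Q:=\mathfrak{Op}(q)$ acting on $v:=<D>^{s+m}u\in L^2_{\text{\sf loc}}(\X)\cap\mathscr{S}^\prime_\Gamma(\X)$, getting $\|Qv\|_{L^2(E)}\leq C'_0(p)\|v\|_{L^2(F)}\leq C_N C'_0(p)\|u\|_{\mathcal{K}_{s+m,0}}$. You instead cut off $P$ directly and estimate $\|\psi Pu\|_{\mathcal{H}^s(\X)}$. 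But the target norm is $\|P_\Gamma u\|_{\mathcal{K}_{s,0}}=\left\|<D>^sPu\right\|_{L^2(E)}$, where $<D>^s$ acts on the full periodic $Pu$ \emph{before} any truncation; passing from this to $\|\psi Pu\|_{\mathcal{H}^s(\X)}$ costs you the commutator $[<D>^s,\psi]$, which is not negligible for arbitrary $s$ and never becomes order $\leq0$ by iteration. You also invoke the bound $\|\chi u\|_{\mathcal{H}^{s+m}(\X)}\leq C\|u\|_{\mathcal{K}_{s+m,0}}$, which is true but requires its own Fourier-series/Peetre argument (it is not a consequence of Leibniz for fractional or negative $s+m$), and for $s+m<0$ your off-diagonal kernel term must be read as a distributional pairing rather than an absolutely convergent integral --- the $L^1$-type bound you sketch only works for $s+m\geq0$. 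The conjugation trick is precisely what dissolves all three issues at once: after it the operator is order zero, the input is a genuine $L^2_{\text{\sf loc}}$ periodic function, and the cutoff argument of Lemma \ref{L.A.13} applies verbatim with visible seminorm control. Without it your approach would need roughly three extra lemmas to close.
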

\begin{proof}
From equality \eqref{A.28} with $A=0$ and from the fact that $(\tau_\gamma\otimes\id)p=p,\forall\gamma\in\Gamma$ we deduce that $P$ leaves $\mathscr{S}^\prime_\Gamma(\X)$ invariant and thus induces a linear and continuous operator $P_\Gamma:\mathscr{S}^\prime(\mathbb{T})\rightarrow\mathscr{S}^\prime(\mathbb{T})$. If $u\in\mathcal{K}_{s+m,0}=\mathcal{H}^{s+m}(\mathbb{T})$ we can write
$$
\left\|P_\Gamma u\right\|_{\mathcal{K}_{s,0}}\ =\ \left\|<D_\Gamma>^sP_\Gamma u\right\|_{L^2(\mathbb{T})}\ =\ \left\|<D>^sPu\right\|_{L^2(E)}\ =\ \left\|<D>^sP<D>^{-s-m}<D>^{s+m}u\right\|_{L^2(E)}.
$$
From the Weyl calculus we know that $<D>^sP<D>^{-s-m}=\mathfrak{Op}(q)$ for a well defined symbol $q\in S^0_1(\X)$ and the map $S^m_1(\mathbb{T})\ni p\mapsto q\in S^0_1(\X)$ is continuous; by Lemma \ref{L.A.13} we can find a strictly positive constant $C^\prime_0(p)$ (it is one of the defining seminorms for the topology of $S^m_1(\mathbb{T})$ and we can find a number $N\in\mathbb{N}$ (that does not depend on $p$ as seen in the proof of Lemma \ref{L.A.13}) such that
$$
\left\|<D>^sP<D>^{-s-m}v\right\|_{L^2(E)}\ \leq\ C^\prime_0(p)\|v\|_{L^2(F)},\qquad\forall v\in L^2_{\text{\sf loc}}(\X)\cap\mathscr{S}^\prime(\X),
$$
where $F:=\underset{\gamma\in\Gamma_N}{\cup}\tau_\gamma E$, and $\Gamma_N:=\{\gamma\in\Gamma\,\mid\,|\gamma|\leq N\}$. Let us consider now $v=<D>^{s+m}u\in L^2_{\text{\sf loc}}(\X)\cap\mathscr{S}^\prime_\Gamma(\X)$. We deduce that
$$
\|v\|^2_{L^2(F)}\ =\ \underset{|\gamma|\leq N}{\sum}\int_{\tau_\gamma E}|v(x)|^2dx\ \leq\ C_N^2\|v\|^2_{L^2(E)}\ =\ C_N^2\left\|<D>^{s+m}u\right\|_{L^2(E)}\ =\ C_N^2\|u\|^2_{\mathcal{K}_{s+m,0}}.
$$
We conclude that $\|P_\Gamma u\|_{\mathcal{K}_{s,0}}\ \leq\ C_NC^\prime_0(p)\|u\|_{\mathcal{K}_{s+m,0}}$.
\end{proof}

\begin{ex}\label{E.A.20}
For any symbol $p\in S^m_1(\mathbb{T})$ and for any point $\xi\in\X^*$ we know that $(\id\otimes\tau_{-\xi})p\in S^m_1(\mathbb{T})$ and due to Lemma \ref{L.A.19} the operator $P_\xi:=\mathfrak{Op}\big((\id\otimes\tau_{-\xi})p\big)$ induces on $\mathbb{T}$ a well defined operator $P_{\Gamma,\xi}\in\mathbb{B}(\mathcal{K}_{s+m,0};\mathcal{K}_{s,0}\big)$ for any $s\in\mathbb{R}$. From the same Lemma we deduce that the application $\X^*\ni\xi\mapsto P_{\Gamma,\xi}\in\mathbb{B}(\mathcal{K}_{s+m,0};\mathcal{K}_{s,0}\big)$ is continuous; taking into account that $\partial^\alpha_\xi P_\xi=\mathfrak{Op}\big((\id\otimes\tau_{-\gamma})(\id\otimes\partial^\alpha)p\big)$ we deduce that the previous application is in fact of class $C^\infty$.

Let us prove now that for any $s\in\mathbb{R}$ we have that
\begin{equation}\label{A.31}
P_{\Gamma,\xi}\ \in\ S^0_0\big(\mathbb{T};\mathbb{B}(\mathcal{K}_{s+m,\xi};\mathcal{K}_{s,\xi})\big)
\end{equation}
and the application
\begin{equation}\label{A.32}
S^m_1(\mathbb{T})\ni p\mapsto P_{\Gamma,\xi}\,\in\,S^0_0\big(\mathbb{T};\mathbb{B}(\mathcal{K}_{s+m,\xi};\mathcal{K}_{s,\xi})\big)
\end{equation}
is continuous.

These two last statements will follow once we have proved that for any $\alpha\in\mathbb{N}^d$ there exists $c_\alpha(p)$ defining seminorm of the topology of $S^m_1(\mathbb{T})$, such that
\begin{equation}\label{A.33}
\left\|\partial^\alpha_\xi P_{\Gamma,\xi}\right\|_{\mathbb{B}(\mathcal{K}_{s+m,\xi};\mathcal{K}_{s,\xi})}\ \leq\ c_\alpha(p),\ \forall\xi\in\X^*.
\end{equation}
It is clearly enough to prove the case $\alpha=0$. Then, using \eqref{A.4} we deduce that for any $u\in\mathcal{K}_{s+m,\xi}$ we have that:
$$
\left\|P_{\Gamma,\xi}u\right\|_{\mathcal{K}_{s,\xi}}\ =\ \left\|<D_\Gamma+\xi>^sP_{\Gamma,\xi}u\right\|_{L^2(\mathbb{T})}\ =\ \left\|<D+\xi>^sP_\xi u\right\|_{L^2(E)}\ =\ \left\|<D+\xi>^s\sigma_{-\xi}P\sigma_\xi u\right\|_{L^2(E)}\ =
$$
$$
=\ \left\|\sigma_{-\xi}<D>^sP\sigma_\xi u\right\|_{L^2(E)}\ =\ \left\|<D>^sP<D>^{-s-m}<D>^{s+m}\sigma_\xi u\right\|_{L^2(E)}\ =
$$
$$
=\ \left\|<D>^sP<D>^{-s-m}\sigma_\xi<D+\xi>^{s+m}u\right\|_{L^2(E)}.
$$
As in the proof of Lemma \ref{L.A.19} we deduce that 
$$
\left\|<D>^sP<D>^{-s-m}v\right\|_{L^2(E)}\ \leq\ C^\prime_0(p)\|v\|_{L^2(F)},\qquad\forall v\in L^2_{\text{\sf loc}}(\X)\cap\mathscr{S}^\prime(\X).
$$
We consider a vector $w:=<D+\xi>^{s+m}u\in L^2_{\text{\sf loc}}(\X)\cap\mathscr{S}^\prime_\Gamma(\X)$ and $v:=\sigma_\xi w$. Then
$$
\|v\|^2_{L^2(F)}\ =\ \|w\|^2_{L^2(F)}\ \leq\ C_N^2\|w\|^2_{L^2(E)}\ =\ C_N^2\left\|<D+\xi>^{s+m}u\right\|^2_{L^2(E)}\ =\ C_N^2\|u\|^2_{\mathcal{K}_{s+m,\xi}}.
$$
Thus \eqref{A.33} follows for $\alpha=0$ with $C_0(p)=C_NC_0^\prime(p)$.
\end{ex}

\begin{lemma}\label{L.A.21}
Let $p\in S^m_1(\mathbb{T})$ be a real elliptic symbol (i.e. $\exists C>0$, $\exists R>0$ such that $p(y,\eta)\geq C|\eta|^m$ for any $(y,\eta)\in\Xi$ with $|\eta|\geq R$), with $m>0$. Then the operator $P_\Gamma$ defined in Lemma \ref{L.A.19} is self-adjoint on the domain $\mathcal{K}_{m,0}$. Moreover, $P_\Gamma$ is lower semi-bounded and its graph-norm on $\mathcal{K}_{m,0}$ gives a norm equivalent to the defining norm of $\mathcal{K}_{m,0}$.
\end{lemma}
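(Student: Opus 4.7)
The plan is to establish the three conclusions — symmetry, lower semi\-boundedness, and self-adjointness on $\mathcal{K}_{m,0}$ (from which the equivalence of the graph norm with the defining norm of $\mathcal{K}_{m,0}$ follows by the closed graph theorem) — by combining the symbolic calculus on $\mathbb{T}$ with a G{\aa}rding inequality and an elliptic parametrix for $P$ descended to the torus as in Lemma \ref{L.A.19}.

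First I would verify that $P_\Gamma$ is symmetric on $\mathcal{K}_{m,0}$. Since $p$ is real, $P=\mathfrak{Op}(p)$ is formally self-adjoint on $\mathscr{S}(\mathcal{X})$; composing with the identification $\mathfrak{i}$ of Remark \ref{R.A.9}, symmetry transfers to $P_\Gamma$ on $\mathscr{S}(\mathbb{T})$, and by Lemma \ref{L.A.19} together with the density of $\mathscr{S}(\mathbb{T})$ in $\mathcal{K}_{m,0}=\mathcal{H}^m(\mathbb{T})$ one extends this symmetry to the whole domain. Next I would derive a G{\aa}rding inequality for $P_\Gamma$: using the real ellipticity of $p$, for $M$ sufficiently large the symbol $q:=\sqrt{p+M\langle\eta\rangle^m}$ lies in $S^{m/2}_1(\mathbb{T})$ and satisfies $\mathfrak{Op}(q)^*\mathfrak{Op}(q)=P+M\langle D\rangle^m+R$ with $R\in\mathfrak{Op}\!\left(S^{m-1}_1(\mathbb{T})\right)$. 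Combined with Lemma \ref{L.A.19} applied to $R$ (so that it is bounded from $\mathcal{K}_{m/2,0}$ to $\mathcal{K}_{-m/2,0}$), this yields constants $c_1>0$, $c_2\in\mathbb{R}$ with
\begin{equation*}
(P_\Gamma u,u)_{L^2(\mathbb{T})}\ \geq\ c_1\|u\|^2_{\mathcal{K}_{m/2,0}}\ -\ c_2\|u\|^2_{L^2(\mathbb{T})},\qquad\forall u\in\mathcal{K}_{m,0},
\end{equation*}
which in particular gives the lower semi\-boundedness of $P_\Gamma$.

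For self-adjointness I would show that for any $\lambda>c_2$ the operator $P_\Gamma+\lambda:\mathcal{K}_{m,0}\rightarrow L^2(\mathbb{T})$ is bijective. The G{\aa}rding estimate gives injectivity (and in fact coercivity of the associated form) and shows that the range is closed. To obtain surjectivity I would invoke elliptic regularity on the torus: ellipticity of $p$ lets one build, by the usual iterative scheme, a parametrix $Q=\mathfrak{Op}(q_\lambda)$ with $q_\lambda\in S^{-m}_1(\mathbb{T})$ such that $Q(P+\lambda)=\mathrm{Id}+R$ and $(P+\lambda)Q=\mathrm{Id}+R'$, with $R,R'\in\mathfrak{Op}\!\left(S^{-\infty}(\mathbb{T})\right)$; by the same translation-invariance argument that underlies Lemma \ref{L.A.19}, these operators descend to $Q_\Gamma$, $R_\Gamma$, $R'_\Gamma$ on $\mathscr{S}'(\mathbb{T})$. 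Then if $v\in L^2(\mathbb{T})$ satisfies $(P_\Gamma+\lambda)^*v=0$, one has $(P_\Gamma+\lambda)v=0$ distributionally, whence $v=-R_\Gamma v\in\mathcal{K}_{m,0}$; the coercivity then forces $v=0$. So $\ker((P_\Gamma+\lambda)^*)=\{0\}$, which combined with the closed range property gives surjectivity, and hence $P_\Gamma+\lambda$ (and therefore $P_\Gamma$) is self-adjoint on $\mathcal{K}_{m,0}$.

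Finally, the equivalence of the $\mathcal{K}_{m,0}$-norm with the graph norm of $P_\Gamma$ follows immediately from the parametrix identity $u=Q_\Gamma(P_\Gamma+\lambda)u-R_\Gamma u$: the continuity $Q_\Gamma\in\mathbb{B}(L^2(\mathbb{T});\mathcal{K}_{m,0})$ and $R_\Gamma\in\mathbb{B}(L^2(\mathbb{T});\mathcal{K}_{m,0})$ (the latter since $R_\Gamma$ is smoothing) yields $\|u\|_{\mathcal{K}_{m,0}}\leq C\bigl(\|P_\Gamma u\|_{L^2(\mathbb{T})}+\|u\|_{L^2(\mathbb{T})}\bigr)$, the reverse bound being Lemma \ref{L.A.19}.

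The main technical obstacle is the construction of the parametrix directly on $\mathbb{T}$ and the careful verification that the remainder operators $R,R'$ actually descend to bounded smoothing operators on $\mathscr{S}'(\mathbb{T})$ acting between the spaces $\mathcal{K}_{s,0}$; one must either transfer the standard Euclidean construction through $\Gamma$-periodicity, relying on the fact (already used in the proof of Lemma \ref{L.A.13}) that magnetic pseudodifferential operators preserve $\mathcal{H}^s_{\mathsf{loc}}\cap\mathscr{S}'_\Gamma(\mathcal{X})$, or alternatively reprove G{\aa}rding's inequality and the resolvent construction intrinsically on the torus using the Fourier series representation \eqref{A.16}. Everything else in the proof is a routine abstract-operator-theory consequence of these two symbolic-calculus ingredients.
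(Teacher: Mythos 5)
Your proof is correct, but it follows a genuinely different route from the paper's for the self-adjointness step, and it fills in some details that the paper leaves implicit. For self-adjointness the paper gives a shorter, more direct argument: it takes $u\in\mathcal{D}(P_\Gamma^*)$, shows $P_\Gamma u=f\in L^2(\mathbb{T})$ in the sense of distributions by using the symmetry on $\mathscr{S}(\mathbb{T})$, and then simply invokes elliptic regularity for pseudodifferential operators on the compact manifold $\mathbb{T}$ to conclude $u\in\mathcal{K}_{m,0}$, which yields $\mathcal{D}(P_\Gamma^*)\subset\mathcal{D}(P_\Gamma)$ directly — no need to pass through bijectivity of $P_\Gamma+\lambda$. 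Your route (coercivity gives closed range and injectivity, then a parametrix shows $\ker((P_\Gamma+\lambda)^*)=\{0\}$, hence surjectivity, hence self-adjointness) is a valid alternative and has the incidental advantage of producing the a priori estimate $\|u\|_{\mathcal{K}_{m,0}}\le C(\|P_\Gamma u\|_{L^2(\mathbb{T})}+\|u\|_{L^2(\mathbb{T})})$ explicitly from the parametrix identity, whereas the paper obtains the equivalence of norms abstractly from the Closed Graph Theorem once self-adjointness (hence closedness) is established. For the G{\aa}rding inequality the paper cites \eqref{A.36} without proof, while you sketch the standard square-root argument with $q:=\sqrt{p+M\langle\eta\rangle^m}$; this is fine, though one should note that the Weyl quantization actually gives $R$ of order $m-2$ (the first Moyal term vanishes since $\{q,q\}=0$), a stronger estimate than the $m-1$ you claim, which in any case suffices. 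The only thing to watch in your argument is that closedness of $P_\Gamma+\lambda$ (needed for ``closed range $\Leftrightarrow$ range $=\ker(A^*)^\perp$'') is not free; it follows from the same elliptic a priori estimate you derive from the parametrix, so the logic is not circular, but it is worth saying explicitly. Both approaches rest on elliptic regularity on $\mathbb{T}$; the paper's is leaner, yours is more self-contained.
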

\begin{proof}
Let us first verify the symmetry of $P_\Gamma$ on $\mathcal{K}_{m,0}$. Due to the density of $\mathscr{S}(\mathbb{T})$ in $\mathcal{K}_{m,0}$ and to the fact that $P_\Gamma\in\mathbb{B}\big(\mathcal{K}_{m,0};L^2(\mathbb{T})\big)$, it is enough to verify the symmetry of $P_\Gamma$ on $\mathscr{S}(\mathbb{T})$. Let us choose two vectors $u$ and $v$ from $\mathscr{S}(\mathbb{T})$ so that we have to prove the equality $(P_\Gamma u,v)_{L^2(\mathbb{T})}=(u,P_\Gamma v))_{L^2(\mathbb{T})}$ or equivalently
\begin{equation}\label{A.34}
(P u,v)_{L^2(E)}\ =\ (u,Pv))_{L^2(E)}.
\end{equation}
Identifying $\mathscr{S}(\mathbb{T})$ with $\mathscr{E}(\X)\cap\mathscr{S}^\prime_\Gamma(\X)$ and using the definition of the operator $P$ on the space $\mathscr{S}^\prime(\X)$ one easily verifies that $Pu$ also belongs to $\mathscr{E}(\X)\cap\mathscr{S}^\prime_\Gamma(\X)$ and is explicitely given by the following oscillating integral ($\forall x\in\X$):
\begin{equation}\label{A.35}
\big(Pu\big)(x)\ =\ \int_\Xi e^{i<\eta,x-y>}p\big(\frac{x+y}{2},\eta\big)\,u(y)\,dy\,\dbar\eta\ =
\end{equation}
$$
=\ \underset{\gamma\in\Gamma}{\sum}\int_{\tau_\gamma E}\int_{\X^*}e^{i<\eta,x-y>}p\big(\frac{x+y}{2},\eta\big)\,u(y)\,dy\,\dbar\eta\ =\ \underset{\gamma\in\Gamma}{\sum}\int_{E}\int_{\X^*}e^{i<\eta,x-y+\gamma>}p\big(\frac{x+y-\gamma}{2},\eta\big)\,u(y)\,dy\,\dbar\eta,
$$
the series converging in $\mathscr{E}(\X)$. Using the $\Gamma$-periodicity of $p$ we obtain that:
$$
(Pu,v)_{L^2(E)}\ =\ \int_E\big(Pu\big)(x)\overline{v(x)}\,dx\ =\ \underset{\gamma\in\Gamma}{\sum}\int_{E}\int_{E}\int_{\X^*}e^{i<\eta,x-y+\gamma>}p\big(\frac{x+y-\gamma}{2},\eta\big)\,u(y)\overline{v(x)}\,dx\,dy\,\dbar\eta\ =
$$
$$
=\ \int_{E}u(y)\overline{\left[\underset{\gamma\in\Gamma}{\sum}\int_{E}\int_{\X^*} e^{i<\eta,y-x-\gamma>}p\big(\frac{x+y+\gamma}{2},\eta\big)v(x)dx\,\dbar\eta\right]}dy\ =\ \int_Eu(y)\overline{\big(Pv\big)(y)}\,dy\ =(u,Pv)_{L^2(E)},
$$
and thus we proved the equality \eqref{A.34}.

In order to prove the self-adjointness of $P_\Gamma$ let us choose some vector $u\in\mathcal{D}\big(P_\Gamma^*\big)$; thus it exists $f\in L^2(\mathbb{T})$ such that we have the equality $(P_\Gamma \varphi,u)_{L^2(\mathbb{T})}=(\varphi,f)_{L^2(\mathbb{T})}$, $\forall\varphi\in\mathscr{S}(\mathbb{T})$. Using now the facts that $\mathscr{S}(\mathbb{T})$ is dense in $\mathscr{S}^\prime(\mathbb{T})$ and $P_\Gamma$ is symmetric on $\mathscr{S}(\mathbb{T})$, we deduce that 
$$
(\varphi,f)_{\mathbb{T}}\ =\ \left(P_\Gamma\varphi,u\right)_{\mathbb{T}}\ =\ \left(\varphi,P_\Gamma u\right)_{\mathbb{T}},\qquad\forall\varphi\in\mathscr{S}(\mathbb{T})
$$
and thus we obtain the equality $P_\Gamma u=f$ in $\mathscr{S}^\prime(\mathbb{T})$. By hypothesis $P_\Gamma$ is an elliptic pseudifferential operator of strictly positive order $m$, on the compact manifold $\mathbb{T}$, so that the usual regularity results imply that $u\in\mathcal{K}_{m,0}=\mathcal{D}(P_\Gamma)$. In conclusion $P_\Gamma$ is self-adjoint on the domain $\mathcal{K}_{m,0}$.

The lower semiboundedness property follows from the G\aa rding inequality:
\begin{equation}\label{A.36}
\left(P_\Gamma u,u\right)_{L^2(\mathbb{T})}\ \geq\ C^{-1}\|u\|^2_{\mathcal{K}_{m/2,0}}\,-\,C\|u\|^2_{L^2(\mathbb{T})},\qquad\forall u\in\mathcal{K}_{m,0}.
\end{equation}

The equivalence of the norms stated as the last point of the Lemma follows from the Closed Graph Theorem.
\end{proof}

\begin{remark}\label{R.A.22}
Under the Hypothesis of Lemma \ref{L.A.21}, the same proof also shows that for any $\xi\in\X^*$, the operator $P_{\Gamma,\xi}$ from Example \ref{E.A.20} is self-adjoint and lower semibounded on $L^2(\mathbb{T})$ on the domain $\mathcal{K}_{m,\xi}$. As in Remark \ref{R.A.17} we can identify $\mathcal{K}_{m,\xi}$ with $\mathcal{H}^m_{\text{\sf loc}}(\X)\cap\mathscr{S}^\prime_\Gamma(\X)$ (endowed with the norm $\left\|<D+\xi>^mu\right\|_{L^2(E)}$) and thus we can deduce that the operator $P_\xi$ is a self-adjoint operator in the space $L^2_{\text{\sf loc}}(\X)\cap\mathscr{S}^\prime_\Gamma(\X)$ on the domain $\mathcal{K}_{m,\xi}$. From \eqref{A.4} we know that $P=\sigma_\xi P_\xi\sigma_{-\xi}$ and we also know that $\sigma_\xi:\mathcal{K}_{s,\xi}\rightarrow\mathcal{F}_{s,\xi}$ is a unitary operator for any $s\in\mathbb{R}$ and for any $\xi\in\X^*$ and we conclude that the operator induced by $P$ in $\mathcal{F}_{0,\xi}$ is unitarily equivalent with the operator induced by $P_\xi$ in $\mathcal{K}_{0,\xi}\cong L^2_{\text{\sf loc}}(\X)\cap\mathscr{S}^\prime_\Gamma(\X)$. It follows that the operator $P$ acting in $\mathcal{F}_{0,\xi}$ with domain $\mathcal{F}_{m,\xi}$ is self-adjoint and lower semibounded.
\end{remark}

\subsection{Properties of magnetic pseudodifferential operators with operator-valued symbols}

\begin{theorem}\label{T.A.23}
Let us first consider the composition operation. Suppose chosen three families of Hilbert spaces with temperate variation $\{\mathcal{A}_\xi\}_{\xi\in\X^*}$, $\{\mathcal{B}_\xi\}_{\xi\in\X^*}$ and $\{\mathcal{C}_\xi\}_{\xi\in\X^*}$, two families of symbols $\{p_\epsilon\}_{|\epsilon|\leq\epsilon_0}\in S^m_{\rho,\epsilon}\big(\X;\mathbb{B}(\mathcal{B}_\bullet;\mathcal{C}_\bullet)\big)$ and $\{q_\epsilon\}_{|\epsilon|\leq\epsilon_0}\in S^{m^\prime}_{\rho,\epsilon}\big(\X;\mathbb{B}(\mathcal{A}_\bullet;\mathcal{B}_\bullet)\big)$ and a family of magnetic fields $\{B_\epsilon\}_{|\epsilon|\leq\epsilon_0}$ satisfying Hypothesis H.1 from Section \ref{S.1} with an associated family of vector potentials $\{A_\epsilon\}_{|\epsilon|\leq\epsilon_0}$ given by \eqref{0.28}. Then
\begin{enumerate}
\item There exist a family of symbols 
$$
\{p_\epsilon\sharp^{B_\epsilon}q_\epsilon\}_{|\epsilon|\leq\epsilon_0}\ \in\ S^{m+m^\prime}_{\rho,\epsilon}\big(\X;\mathbb{B}(\mathcal{A}_\bullet;\mathcal{C}_\bullet)\big),\quad\text{such that}\quad\mathfrak{Op}^{A_\epsilon}(p_\epsilon)\mathfrak{Op}^{A_\epsilon}(q_\epsilon)=\mathfrak{Op}^{A_\epsilon}(p_\epsilon\sharp^{B_\epsilon}q_\epsilon).
$$
\item The application 
\begin{equation}\label{A.37}
S^m_{\rho}\big(\X;\mathbb{B}(\mathcal{B}_\bullet;\mathcal{C}_\bullet)\big)\times S^{m^\prime}_{\rho,}\big(\X;\mathbb{B}(\mathcal{A}_\bullet;\mathcal{B}_\bullet)\big)\ni(p_\epsilon,q_\epsilon)\ \mapsto\ p_\epsilon\sharp^{B_\epsilon}q_\epsilon\in S^{m+m^\prime}_{\rho,}\big(\X;\mathbb{B}(\mathcal{A}_\bullet;\mathcal{C}_\bullet)\big)
\end{equation}
is continuous uniformly with respect to $\epsilon\in[-\epsilon_0,\epsilon_0]$.
\item There exists a family of symbols $\{r_\epsilon\}_{|\epsilon|\leq\epsilon_0}\in S^{m+m^\prime-\rho}_{\rho,\epsilon}\big(\X;\mathbb{B}(\mathcal{A}_\bullet;\mathcal{C}_\bullet)\big)$ having the following properties:
\begin{equation}\label{A.38}
\underset{\epsilon\rightarrow0}{\lim}\,r_\epsilon\ =\ 0\quad\text{in}\ S^{m+m^\prime-\rho}_{\rho}\big(\X;\mathbb{B}(\mathcal{A}_\bullet;\mathcal{C}_\bullet)\big)
\end{equation}
\begin{equation}\label{A.39}
p_\epsilon\sharp^{B_\epsilon}q_\epsilon\ =\ p_\epsilon\cdot q_\epsilon\ +\ r_\epsilon,\qquad\forall\epsilon\in[-\epsilon_0,\epsilon_0].
\end{equation}
\end{enumerate}
\end{theorem}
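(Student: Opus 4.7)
My plan is to carry the scalar magnetic Weyl calculus of \cite{MP1,IMP1} over to the operator-valued setting by essentially the same oscillatory-integral manipulations, replacing the pointwise scalar multiplication $p(Y)q(Z)$ by the composition in $\mathbb{B}(\mathcal{B}_0;\mathcal{C}_0)\circ\mathbb{B}(\mathcal{A}_0;\mathcal{B}_0)\to\mathbb{B}(\mathcal{A}_0;\mathcal{C}_0)$ wherever it occurs. For Part (1), I would take $u\in\mathscr{S}(\X;\mathcal{A}_0)$, substitute the defining oscillatory integral \eqref{A.8} for each of the two factors in $\mathfrak{Op}^{A_\epsilon}(p_\epsilon)\mathfrak{Op}^{A_\epsilon}(q_\epsilon)u$, and then invoke the gauge-invariant flux identity that already appears in the scalar calculus,
\begin{equation*}
\omega_{A_\epsilon}(x,z)\,\omega_{A_\epsilon}(z,y)\,\omega_{A_\epsilon}(y,x)\ =\ \Omega^{B_\epsilon}\big(x,z,y\big),
\end{equation*}
where $\Omega^{B_\epsilon}$ denotes the exponential of $(-i)$ times the flux of $B_\epsilon$ through the triangle with vertices $x,z,y$, which depends only on the magnetic field. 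This rewrites the product of the two $\omega_{A_\epsilon}$-factors as a single $\omega_{A_\epsilon}$ times the purely magnetic factor $\Omega^{B_\epsilon}$, identifying the composition as $\mathfrak{Op}^{A_\epsilon}$ applied to the oscillatory integral
\begin{equation*}
(p_\epsilon\sharp^{B_\epsilon}q_\epsilon)(x,\xi)\ =\ \pi^{-2d}\!\int\!\!\int e^{-2i\sigma\big((y_1,\eta_1),(y_2,\eta_2)\big)}\,\Omega^{B_\epsilon}\big(x,x+y_1-y_2,x-y_1+y_2\big)\,p_\epsilon(x+y_1,\xi+\eta_1)\,q_\epsilon(x+y_2,\xi+\eta_2)\,dy_1d\eta_1dy_2d\eta_2,
\end{equation*}
with the operator composition taken under the integral.

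For Part (2), I would apply the standard oscillatory-integral estimates exactly as in the scalar case: repeated integration by parts in $\eta_1,\eta_2$ using $<y_j>^{-2N}(1-\Delta_{\eta_j})^N e^{\pm i<\eta_j,\cdot>}=e^{\pm i<\eta_j,\cdot>}$ to produce decay in $y_1,y_2$, and integration by parts in $y_1,y_2$ to produce decay in $\eta_1,\eta_2$. Combined with the temperate-variation bound \eqref{A.1} for $\{\mathcal{A}_\xi\}$, $\{\mathcal{B}_\xi\}$, $\{\mathcal{C}_\xi\}$, the $BC^\infty$-bounds on $B_\epsilon$ (which control $\Omega^{B_\epsilon}$ and its derivatives uniformly in $\epsilon$), and the uniform $S^m_\rho$, $S^{m'}_\rho$ estimates on $p_\epsilon,q_\epsilon$, one extracts the desired seminorm bounds, establishing \eqref{A.37} uniformly in $\epsilon$.

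For Part (3), the decisive step is a Taylor expansion on two independent fronts. First, expand both symbols around the base point $(x,\xi)$ to first order with integral remainders in $(y_j,\eta_j)$; second, expand the magnetic factor $\Omega^{B_\epsilon}=1+\big(\Omega^{B_\epsilon}-1\big)$, where the correction vanishes with $B_\epsilon$ by Hypothesis H.1 since the flux is bilinear in $(y_1,y_2)$ with coefficients given by $BC^\infty$-norms of $B_\epsilon$. Integration of the leading constant term $1\cdot p_\epsilon(x,\xi)\cdot q_\epsilon(x,\xi)$ against the Weyl phase yields exactly $(p_\epsilon\cdot q_\epsilon)(x,\xi)$, while each remaining piece gains either a factor of $B_\epsilon$ or a monomial $(y_j,\eta_j)^\alpha$ that is absorbed into a $\xi$-derivative of $p_\epsilon$ or $q_\epsilon$ via the Weyl phase, lowering the order by $\rho$. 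Summing the contributions gives the decomposition $p_\epsilon\sharp^{B_\epsilon}q_\epsilon=p_\epsilon\cdot q_\epsilon+r_\epsilon$ with $r_\epsilon\in S^{m+m'-\rho}_{\rho,\epsilon}$.

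The main obstacle will be proving the limit $r_\epsilon\to 0$ in \eqref{A.38}. The remainder splits into a purely magnetic part, whose smallness is direct from $B_\epsilon\to 0$ in $BC^\infty$ together with the uniform $S^m_\rho$, $S^{m'}_\rho$ control of $p_\epsilon,q_\epsilon$, and a non-magnetic Moyal part involving derivatives that must be shown to vanish in the limit. Here one must exploit that Condition (3) of Definition \ref{D.A.5} applied to symbols interpreted in the $\X^2$-version forces the $x$-derivatives of $p_\epsilon$ (respectively $q_\epsilon$) entering the Moyal correction to tend to zero in $S^m_\rho$ (respectively $S^{m'}_\rho$) as $\epsilon\to 0$; the Moyal correction is built precisely from such mixed $(x,\xi)$-derivatives, and each summand carries at least one derivative that is killed in the limit. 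The care needed is to track that this killing occurs uniformly in the seminorms defining $S^{m+m'-\rho}_\rho$, which requires commuting the oscillatory-integral manipulations with the $\epsilon$-limit, and this is where most of the technical bookkeeping lies.
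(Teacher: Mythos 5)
Your overall plan tracks the paper's proof closely: the magnetic composition formula via the flux factor $\omega^{B_\epsilon}(x,y,z)$, the oscillatory-integral estimates for parts (1) and (2), and a first-order Taylor expansion plus integration by parts for part (3), with the key observation being that the remainder picks up either an $x$-derivative of $p_\epsilon$ or $q_\epsilon$ (which tend to zero by Condition (3) of Definition \ref{D.A.5}) or a derivative of the flux (which tend to zero by Hypothesis H.1). That is exactly the mechanism the paper uses.

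Two imprecisions deserve attention, since they could create genuine trouble if taken literally in the write-up. First, the claim that $\Omega^{B_\epsilon}-1$ ``vanishes with $B_\epsilon$'' is not correct as stated: the flux $F_\epsilon(x,y,z)$ is unbounded in $(y,z)$ even when $B_\epsilon$ is small, so $e^{-iF_\epsilon}-1$ need not be uniformly small. What is true, and what the paper invokes (Lemma 1.1 of \cite{IMP1}), is that the gradients $\nabla_x F_\epsilon$, $\nabla_y F_\epsilon$, $\nabla_z F_\epsilon$ have the structure $C_\epsilon(x,y,z)\,y + D_\epsilon(x,y,z)\,z$ with $C_\epsilon, D_\epsilon \to 0$ in $BC^\infty(\X^3)$; it is these derivatives, produced by integrations by parts on the oscillatory integral, that supply the vanishing factors. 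The flux itself is only bilinear in $(y,z)$ when $B_\epsilon$ is constant. Second, your assertion that each surviving monomial $(y_j,\eta_j)^\alpha$ is absorbed into a $\xi$-derivative is not accurate: $y_j$ pairs with $\partial_{\zeta_j}$ of the phase and gives a $\xi$-derivative of $q_\epsilon$, but $\eta_j$ pairs with $\partial_{z_j}$ and gives an $x$-derivative of $q_\epsilon$ (or lands on $\omega^{B_\epsilon}$). You correct this implicitly in your last paragraph when you say the Moyal correction carries at least one $x$-derivative which is killed in the limit, and the bookkeeping does close, but the earlier statement as written would not by itself establish either $r_\epsilon\in S^{m+m'-\rho}_{\rho}$ or $r_\epsilon\to 0$ without combining both halves of the symplectic pairing.
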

\begin{proof}
By a standard cut-off procedure, as in the proof of Proposition \ref{P.A.7} we may reduce the problem to the case of symbols with compact support in both arguments $(x,\xi)\in\Xi$. 

A direct computation using Stokes formula and the fact that $dB_\epsilon=0$ for any $\epsilon\in[-\epsilon_0,\epsilon_0]$ shows that for point (1) of the Theorem we may take the definition of the composition operation to be the following well defined integral formula:
\begin{equation}\label{A.40}
\big(p_\epsilon\sharp^{B_\epsilon}q_\epsilon\big)(X)\ =\ \int_\Xi\int_\Xi e^{-2i\dbl Y,Z\dbr}\,\omega^{B_\epsilon}(x,y,z)\,p_\epsilon(X-Y)q_\epsilon(X-Z)\,\dbar Y\,\dbar Z,
\end{equation}
where we used the notation $X:=(x,\xi)$, $Y:=(y,\eta)$, $Z:=(z,\zeta)$, $\dbl Y,Z\dbr:=<\eta,z>-<\zeta,y>$ and
$$
\omega^{B_\epsilon}(x,y,z):=e^{-iF_\epsilon(x,y,z)},\qquad F_\epsilon(x,y,z):=\int_{<x-y+z,x-y-z,x+y-z>}\hspace*{-2cm}B_\epsilon
$$
with $<a,b,c>$ the triangle with vertices $a\in\X$, $b\in\X$ and $c\in\X$.

A direct computation (see for example Lemma 1.1 from \cite{IMP1}) shows that all the vectors $\nabla_xF_\epsilon$, $\nabla_yF_\epsilon$ and $\nabla_zF_\epsilon$ have the form $C_\epsilon(x,y,z)y\,+\,D_\epsilon(x,y,z)z$ with $C_\epsilon$ and $D_\epsilon$ functions of class $BC^\infty\big(\X^3;\mathbb{B}(\X)\big)$ satisfying the conditions $\underset{\epsilon\rightarrow0}{\lim}\,C_\epsilon\,=\ \underset{\epsilon\rightarrow0}{\lim}\,D_\epsilon\,=\ 0$ in $BC^\infty\big(\X^3;\mathbb{B}(\X)\big)$. It follows easily then that the derivatives of $\omega^{B_\epsilon}(x,y,z)$ of order at least 1 are finite linear combinations of terms of the form $C_{(\alpha,\beta);\epsilon}y^\alpha z^\beta\omega^{B_\epsilon}(x,y,z)$ with $C_{(\alpha,\beta);\epsilon}\in BC^\infty(\X^3)$ satisfying the property $\underset{\epsilon\rightarrow0}{\lim}\,C_{(\alpha,\beta);\epsilon}=0$ in $BC^\infty(\X^3)$. Applying now the usual integration by parts with respect to the variables $\{y,z,\eta,\zeta\}$ we obtain that for any $N_j\in\mathbb{N}$ ($1\leq j\leq4$) and for any $X\in\Xi$ the following equality is true:
\begin{equation}\label{A.41}
\big(p_\epsilon\sharp^{B_\epsilon}q_\epsilon\big)(X)\ =\ \int_\Xi\int_\Xi e^{-2i\dbl Y,Z\dbr}<\eta>^{-2N_1}<\zeta>^{-2N_2}\big(\id-\frac{1}{4}\Delta_z\big)^{N_1}\big(\id-\frac{1}{4}\Delta_y\big)^{N_2}\times
\end{equation}
$$
\times\left[<y>^{-2N_3}<z>^{-2N_4}\omega^{B_\epsilon}(x,y,z)\left(\big(\id-\frac{1}{4}\Delta_\eta\big)^{N_4}p_\epsilon(X-Y)\right)\left(\big(\id-\frac{1}{4}\Delta_\zeta\big)^{N_3}q_\epsilon(X-Z)\right)\right]\dbar Y\,\dbar Z.
$$
First we apply the differentiation operators on the functions they act on and then we eliminate all the monomials of the form $y^\alpha z^\beta$ that appear from the differentiation of $\omega^{B_\epsilon}$ by integrating by parts using the formulas:
$$
y_je^{-2i\dbl Y,Z\dbr}\ =\ \frac{1}{2i}\partial_{\zeta_j}e^{-2i\dbl Y,Z\dbr},\quad z_je^{-2i\dbl Y,Z\dbr}\ =\ -\frac{1}{2i}\partial_{\eta_j}e^{-2i\dbl Y,Z\dbr}.
$$
These computations allow us to obtain the following estimation (for some $C>0$ and $N\in\mathbb{N}$):
\begin{equation}\label{A.42}
\left\|\big(p_\epsilon\sharp^{B_\epsilon}q_\epsilon\big)(X)\right\|_{\mathbb{B}(\mathcal{A}_\xi;\mathcal{C}_\xi)}\ \leq
\end{equation}
\begin{scriptsize}
$$
\leq\ C\underset{|\alpha|,|\beta|,|\gamma|,|\delta|\leq N}{\max}\ \int_\Xi\int_\Xi<\eta>^{-2N_1}<\zeta>^{-2N_2}<y>^{-2N_3}<z>^{-2N_4}\left\|\partial^\alpha_x\partial^\beta_\xi p_\epsilon(X-Y)\right\|_{\mathbb{B}(\mathcal{B}_\xi;\mathcal{C}_\xi)}\left\|\partial^\gamma_x\partial^\delta_\xi q_\epsilon(X-Z)\right\|_{\mathbb{B}(\mathcal{A}_\xi;\mathcal{B}_\xi)}\,\dbar Y\,\dbar Z,
$$
\end{scriptsize}
for any $\epsilon\in[-\epsilon_0,\epsilon_0]$. We use now \eqref{A.1} and \eqref{A.3} and obtain the following estimations valid for any $\epsilon\in[-\epsilon_0,\epsilon_0]$:
\begin{equation}\label{A.43}
\left\|\partial^\alpha_x\partial^\beta_\xi p_\epsilon(X-Y)\right\|_{\mathbb{B}(\mathcal{B}_\xi;\mathcal{C}_\xi)}\ \leq\ C<\eta>^{2M}\left\|\partial^\alpha_x\partial^\beta_\xi p_\epsilon(X-Y)\right\|_{\mathbb{B}(\mathcal{B}_{\xi-\eta};\mathcal{C}_{\xi-\eta})}\ \leq
\end{equation}
$$
\leq\ C<\eta>^{2M}<\xi-\eta>^{m-\rho|\beta|}\left(\underset{Z\in\Xi}{\sup}<\zeta>^{-m+\rho|\beta|}\left\|\big(\partial^\alpha_z\partial^\beta_\zeta p_\epsilon\big)(Z)\right\|_{\mathbb{B}(\mathcal{B}_{\zeta};\mathcal{C}_{\zeta})}\right).
$$
Repeating the same computations for the derivatives of $q_\epsilon$ and choosing suitable large exponents $N_j$ ($1\leq j\leq4$) in \eqref{A.42} we deduce the existence of two defining seminorms $|\cdot|_{n_1}$ and respectively $|\cdot|_{n_2}$ on the Fr\'{e}chet space $S^m_\rho\big(\X;\mathbb{B}(\mathcal{B}_\bullet;\mathcal{C}_\bullet)\big)$ and respectively on $S^{m^\prime}_\rho\big(\X;\mathcal{A}_\bullet;\mathcal{B}_\bullet)\big)$ such that we have the estimation:
\begin{equation}\label{A.44}
\underset{X\in\Xi}{\sup}<\xi>^{-(m+m^\prime)}\left\|\big(p_\epsilon\sharp^{B_\epsilon}q_\epsilon\big)(X)\right\|_{\mathbb{B}(\mathcal{A}_\xi;\mathcal{C}_\xi)}\ \leq\ |p_\epsilon|_{n_1}\,|q_\epsilon|_{n_2},\qquad\forall\epsilon\in[-\epsilon_0,\epsilon_0].
\end{equation}

The derivatives of $p_\epsilon\sharp^{B_\epsilon}q_\epsilon$ can be estimated in a similar way in order to conclude that $p_\epsilon\sharp^{B_\epsilon}q_\epsilon\in S^{m+m^\prime}_\rho\big(\X;\mathcal{A}_\bullet;\mathcal{C}_\bullet)\big)$ uniformly with respect to $\epsilon\in[-\epsilon_0,\epsilon_0]$ and that property (2) is valid.

Considering now the family of symbols $\{p_\epsilon\sharp^{B_\epsilon}q_\epsilon\}_{|\epsilon|\leq\epsilon_0}$, the hypothesis (2) and (3) from the Definition \ref{D.A.5} follow easily from \eqref{A.38} and \eqref{A.39}. 

In conclusion there is only point (3) that remains to be proved. By the same arguments as above we can once again assume that the symbols $p_\epsilon$ and $q_\epsilon$ have compact support. We begin by using \eqref{A.40} in the equality:
\begin{equation}\label{A.45}
p_\epsilon(X-Y)q_\epsilon(X-Z)\ =
\end{equation}
$$
=\ p_\epsilon(X)q_\epsilon(X)\,-\,\int_0^1\left[\left\langle Y,\nabla_Xp_\epsilon(X-tY)\right\rangle q_\epsilon(X-tZ)\,+\,p_\epsilon(X-tY)\left\langle Z,\nabla_X q_\epsilon(X-tZ)\right\rangle\right]dt.
$$
The first term on the right side of the equality \eqref{A.45} will produce the term $p_\epsilon q_\epsilon$ in the equality \eqref{A.39} (see also the Lemma 2.1 from \cite{IMP1}). Let us study now the term obtained by replacing \eqref{A.40} into \eqref{A.45}. We eliminate $Y$ and $Z$ by integration by parts as in the beginning of this proof taking also into account the following identities:
$$
\eta_je^{-2i\dbl Y,Z\dbr}\ =\ -\frac{1}{2i}\partial_{z_j}e^{-2i\dbl Y,Z\dbr},\quad\zeta_je^{-2i\dbl Y,Z\dbr}\ =\ \frac{1}{2i}\partial_{y_j}e^{-2i\dbl Y,Z\dbr}.
$$
These operations will produce derivatives of $p_\epsilon$ and $q_\epsilon$ with respect to $x\in\X$, that go to 0 for $\epsilon\rightarrow0$ in their symbol spaces topology and derivatives of $F_\epsilon$ with respect to $y$ and $z$; but these derivatives may be once again transformed by integrations by parts into factors of the form $C_\epsilon\in BC^\infty(\X^3)$ having limit 0 for $\epsilon\rightarrow0$ as elements from $BC^\infty(\X^3)$. Thus, the estimations proved in the first part of the proof  imply that the equality \eqref{A.39} holds with $r_\epsilon=\int_0^1s_\epsilon(t)dt$ with $s_\epsilon(t)\in S^{m+m^\prime-\rho}_\rho\big(\X;\mathbb{B}(\mathcal{A}_\bullet;\mathcal{C}_\bullet)\big)$ uniformly with respect to $(\epsilon,t)\in[-\epsilon_0,\epsilon_0]\times[0,1]$ and $\underset{\epsilon\rightarrow0}{\lim}\,s_\epsilon(t)\,=\,0$ in $S^{m+m^\prime-\rho}_\rho\big(\X;\mathbb{B}(\mathcal{A}_\bullet;\mathcal{C}_\bullet)\big)$ uniformly with respect to $t\in[0,1]$. We conclude that $r_\epsilon$ has the properties stated in the Theorem.
\end{proof}

\begin{remark}\label{R.A.24}
The proof of Theorem \ref{T.A.23} also implies the following fact (that we shall use in the paper):
{\it the operation $\sharp^{B_\epsilon}$ is well defined also as operation: $S^m_\rho\big(\X;\mathbb{B}(\mathcal{B}_\bullet;\mathcal{C}_\bullet)\big)\times S^{m^\prime}_\rho\big(\X;\mathbb{B}(\mathcal{A}_\bullet;\mathcal{B}_\bullet)\big)\rightarrow S^{m+m^\prime}_\rho\big(\X;\mathbb{B}(\mathcal{A}_\bullet;\mathcal{C}_\bullet)\big)$ being bilinear and continuous uniformly with respect to $\epsilon\in[-\epsilon_0,\epsilon_0]$.}
\end{remark}

\begin{remark}\label{R.A.25}
As in \cite{IMP2} one can define a family of symbols $\{q_{s,\epsilon}\}_{(s,\epsilon)\in\mathbb{R}\times[-\epsilon_0,\epsilon_0]}$ having the following properties:
\begin{enumerate}
\item $q_{s,\epsilon}\,\in\,S^s_1(\X)$ uniformly with respect to $\epsilon\in[-\epsilon_0,\epsilon_0]$,
\item $q_{s,\epsilon}\sharp^{B_\epsilon}q_{-s,\epsilon}\,=\,1$,
\item $\forall s>0$ we have that $q_{s,\epsilon}(x,\xi)=<\xi>^s+\mu$ with some sufficiently large $\mu>0$ and $q_{0,\epsilon}=1$.
\end{enumerate}
\end{remark}
Evidently that for any Hilbert space $\mathcal{A}$ we can identify the symbol $q_{s,\epsilon}$ with the operator-valued symbol $q_{s,\epsilon}\id_{\mathcal{A}}$ and thus we may consider that $q_{s,\epsilon}\in S^s_1\big(\X;\mathbb{B}(\mathcal{A})\big)$ uniformly with respect to $\epsilon\in[-\epsilon_0,\epsilon_0]$.  We shall use the notation $Q_{s,\epsilon}:=\mathfrak{Op}^{A_\epsilon}(q_{s,\epsilon})$.

\begin{proposition}\label{P.A.26}
Suppose given two Hilbert spaces $\mathcal{A}$ and $\mathcal{B}$ and for any $\epsilon\in[-\epsilon_0,\epsilon_0]$ a symbol $p_\epsilon\in S^m_0\big(\X;\mathbb{B}(\mathcal{A};\mathcal{B})\big)$, uniformly in $\epsilon\in[-\epsilon_0,\epsilon_0]$. Then for any $s\in\mathbb{R}$ the operator $\mathfrak{Op}^{A_\epsilon}(p_\epsilon)$ belongs to the space $\mathbb{B}\big(\mathcal{H}^{s+m}_{A_\epsilon}(\X)\otimes\mathcal{A};\mathcal{H}^s_{A_\epsilon}(\X)\otimes\mathcal{B}\big)$ uniformly with respect to $\epsilon\in[-\epsilon_0,\epsilon_0]$. Moreover, the norm of $\mathfrak{Op}^{A_\epsilon}(p_\epsilon)$ in the above Banach space is bounded from above by a seminorm of $p_\epsilon$ in $S^m_0\big(\X;\mathbb{B}(\mathcal{A};\mathcal{B})\big)$, uniformly with respect to $\epsilon\in[-\epsilon_0,\epsilon_0]$.
\end{proposition}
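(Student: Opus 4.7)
The plan is to reduce the general claim to the case $s=0$ by conjugating with the operators $Q_{s,\epsilon}$ of Remark A.25, and then to establish uniform $L^2$-boundedness in the operator-valued magnetic $S^0_0$ class via a Cotlar--Stein argument adapted from the scalar magnetic Calder\'on--Vaillancourt theorem of \cite{IMP1}.

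First I would reduce to $s=0$ as follows. Since $q_{s,\epsilon}\sharp^{B_\epsilon}q_{-s,\epsilon}=1$, one has $Q_{s,\epsilon}\,Q_{-s,\epsilon}=\id$ on $\mathscr{S}(\X)$, and regarding $q_{s,\epsilon}$ and $q_{-s-m,\epsilon}$ as $\mathbb{B}(\mathcal{B})$- and $\mathbb{B}(\mathcal{A})$-valued symbols via tensoring with the identity, one can write
\begin{equation*}
\mathfrak{Op}^{A_\epsilon}(p_\epsilon)\ =\ Q_{-s,\epsilon}\,\mathfrak{Op}^{A_\epsilon}(\widetilde p_\epsilon)\,Q_{s+m,\epsilon},\qquad \widetilde p_\epsilon\ :=\ q_{s,\epsilon}\,\sharp^{B_\epsilon}\,p_\epsilon\,\sharp^{B_\epsilon}\,q_{-s-m,\epsilon}.
\end{equation*}
By Theorem A.23 and Remark A.24, $\widetilde p_\epsilon\in S^0_0(\X;\mathbb{B}(\mathcal{A};\mathcal{B}))$ uniformly in $\epsilon$, with seminorms controlled by a single seminorm of $p_\epsilon$ in $S^m_0(\X;\mathbb{B}(\mathcal{A};\mathcal{B}))$. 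By the very definition of the magnetic Sobolev spaces, $Q_{s+m,\epsilon}\otimes\id_{\mathcal{A}}$ and $Q_{-s,\epsilon}\otimes\id_{\mathcal{B}}$ are (uniform in $\epsilon$) isometric isomorphisms $\mathcal{H}^{s+m}_{A_\epsilon}(\X)\otimes\mathcal{A}\to L^2(\X)\otimes\mathcal{A}$ and $L^2(\X)\otimes\mathcal{B}\to\mathcal{H}^{s}_{A_\epsilon}(\X)\otimes\mathcal{B}$, respectively. The claim therefore reduces to proving, for any $q\in S^0_0(\X;\mathbb{B}(\mathcal{A};\mathcal{B}))$,
\begin{equation*}
\bigl\|\mathfrak{Op}^{A_\epsilon}(q)\bigr\|_{\mathbb{B}(L^2(\X)\otimes\mathcal{A};\,L^2(\X)\otimes\mathcal{B})}\ \leq\ C\cdot\nu(q),
\end{equation*}
for some fixed continuous seminorm $\nu$ on $S^0_0(\X;\mathbb{B}(\mathcal{A};\mathcal{B}))$ and a constant $C$ independent of $\epsilon\in[-\epsilon_0,\epsilon_0]$.

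To obtain this, I would run a Cotlar--Stein scheme. Choose $\chi\in C^\infty_0(\Xi)$ with $\sum_{k\in\mathbb{Z}^{2d}}\tau_k\chi\equiv 1$ on $\Xi$, decompose $q=\sum_k q_k$ with $q_k:=q\cdot\tau_k\chi$, each supported in a bounded set translated by $k$ and uniformly bounded in $BC^\infty(\Xi;\mathbb{B}(\mathcal{A};\mathcal{B}))$, and set $T_{k,\epsilon}:=\mathfrak{Op}^{A_\epsilon}(q_k)$. Theorem A.23, combined with the oscillatory-integral representation (A.40) of $\sharp^{B_\epsilon}$, yields, by repeated integration by parts exactly as in the scalar treatment of \cite{IMP1}, the off-diagonal estimates
\begin{equation*}
\bigl\|T_{j,\epsilon}^*\,T_{k,\epsilon}\bigr\|_{\mathbb{B}(L^2(\X)\otimes\mathcal{A})}\ +\ \bigl\|T_{j,\epsilon}\,T_{k,\epsilon}^*\bigr\|_{\mathbb{B}(L^2(\X)\otimes\mathcal{B})}\ \leq\ C_N\,\langle j-k\rangle^{-2N}\,\nu(q)^2,
\end{equation*}
uniformly in $\epsilon$ for every $N\in\mathbb{N}$. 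The Cotlar--Stein lemma, whose statement and proof are purely Hilbert-space theoretic and insensitive to the fact that $\mathcal{A},\mathcal{B}$ may be infinite-dimensional, then produces the required uniform norm bound on the strongly convergent sum $\sum_k T_{k,\epsilon}=\mathfrak{Op}^{A_\epsilon}(q)$.

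The main obstacle is verifying these Cotlar--Stein estimates with constants genuinely independent of $\epsilon$: derivatives of the magnetic phase $\omega^{B_\epsilon}$ appearing in (A.40) must not spoil the uniformity in $\epsilon$ nor the decay in $|j-k|$. This is precisely the point exploited in the proof of Theorem A.23, where the derivatives of $\omega^{B_\epsilon}$ produce factors of the form $C_\epsilon(x,y,z)y+D_\epsilon(x,y,z)z$ with $C_\epsilon,D_\epsilon\in BC^\infty(\X^3;\mathbb{B}(\X))$ bounded \emph{uniformly} in $\epsilon$ as a direct consequence of Hypothesis H.1. In particular only the magnetic field $B_\epsilon$, and not the possibly unbounded vector potential $A_\epsilon$ from (0.29), enters the estimates, which preserves both uniformity in $\epsilon$ and gauge covariance throughout the argument.
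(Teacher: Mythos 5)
Your proposal follows the same structure as the paper's proof: both reduce to the case $m=s=0$ by conjugating with $Q_{-s,\epsilon}$ and $Q_{s+m,\epsilon}$, observe that $q_{s,\epsilon}\sharp^{B_\epsilon}p_\epsilon\sharp^{B_\epsilon}q_{-(s+m),\epsilon}\in S^0_0\big(\X;\mathbb{B}(\mathcal{A};\mathcal{B})\big)$ uniformly in $\epsilon$ by Theorem~\ref{T.A.23}/Remark~\ref{R.A.24}, and then invoke the $L^2$-boundedness of magnetic $\Psi$DOs of order $(0,0)$. The only difference is that the paper simply delegates this base case to the scalar Calder\'on--Vaillancourt theorem of \cite{IMP1} ("the same arguments as in the scalar case"), whereas you sketch the Cotlar--Stein almost-orthogonality argument explicitly and note that it carries over verbatim to operator-valued symbols; this is exactly the right way to fill in what the paper leaves implicit, and your observation that the gauge-covariant composition (A.40) makes only the bounded magnetic field $B_\epsilon$, not the unbounded vector potential $A_\epsilon$, enter the uniform estimates is the correct reason why uniformity in $\epsilon$ survives.
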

\begin{proof}
For $m=s=0$ the proposition may be proved by the same arguments as in the scalar case: $\mathcal{A}=\mathcal{B}=\mathbb{C}$ (see for example \cite{IMP1}). Also using the results from \cite{IMP1} we can see that for any $t\in\mathbb{R}$ the operator $Q_{s,\epsilon}$ belongs to the space $\mathbb{B}\big(\mathcal{H}^{t+s}_{A_\epsilon}(\X);\mathcal{H}^t_{A_\epsilon}(\X)\big)$ uniformly with respect to $\epsilon\in[-\epsilon_0,\epsilon_0]$. The proof of the general case folows now from the following identity:
$$
\mathfrak{Op}^{A_\epsilon}(p_\epsilon)\ =\ Q_{-s,\epsilon}\,Q_{s,\epsilon}\,\mathfrak{Op}^{A_\epsilon}(p_\epsilon)\,Q_{-(s+m),\epsilon}\,Q_{s+m,\epsilon}
$$
and the fact that $q_{s,\epsilon}\sharp^{B_\epsilon}p_\epsilon\sharp^{B_\epsilon}q_{-(s+m),\epsilon}$ is a symbol of class $S^0_0\big(\X;\mathbb{B}(\mathcal{A};\mathcal{B})\big)$ uniformly with respect to $\epsilon\in[-\epsilon_0,\epsilon_0]$ (as implied by the Remark \ref{R.A.24}).
\end{proof}

\begin{proposition}\label{P.A.27}
Suppose given a Hilbert space $\mathcal{A}$ and a bounded subset $\{p_\epsilon\}_{|\epsilon|\leq\epsilon_0}\subset S^0_\rho\big(\X;\mathbb{B}(\mathcal{A})\big)$ such that $\underset{\epsilon\rightarrow0}{\lim}\,p_\epsilon=0$ in this space of symbols. Then, for sufficiently small $\epsilon_0>0$ the following statements are true:
\begin{enumerate}
\item $\id+\mathfrak{Op}^{A_\epsilon}(p_\epsilon)$ is invertible in $\mathbb{B}\big(L^2(\X)\otimes\mathcal{A}\big)$ for any $\epsilon\in[-\epsilon_0,\epsilon_0]$.
\item It exists a bounded subset of symbols $\{q_\epsilon\}_{|\epsilon|\leq\epsilon_0}$ from $S^0_\rho\big(\X;\mathbb{B}(\mathcal{A})\big)$ such that $\underset{\epsilon\rightarrow0}{\lim}\,q_\epsilon=0$ in $S^0_\rho\big(\X;\mathbb{B}(\mathcal{A})\big)$ and the following equality holds:
\begin{equation}\label{A.46}
\big[\id\,+\,\mathfrak{Op}^{A_\epsilon}(p_\epsilon)\big]^{-1}\ =\ \id\,+\,\mathfrak{Op}^{A_\epsilon}(q_\epsilon).
\end{equation}
\end{enumerate}
\end{proposition}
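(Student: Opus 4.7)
The plan is to split the argument into a Hilbert-space level inversion on $L^2(\X)\otimes\mathcal{A}$ and a parallel symbol-level construction of the inverse inside $S^0_\rho(\X;\mathbb{B}(\mathcal{A}))$, then match the two via the continuity and injectivity of the magnetic quantization.

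First, for statement (1), I would apply Proposition~\ref{P.A.26} with $\mathcal{B}=\mathcal{A}$ and $s=m=0$: it supplies a constant $C>0$ and an integer $n\in\mathbb{N}$, independent of $\epsilon$, with $\|\mathfrak{Op}^{A_\epsilon}(p_\epsilon)\|_{\mathbb{B}(L^2(\X)\otimes\mathcal{A})}\leq C|p_\epsilon|_n$, where $|\cdot|_n$ denotes the $n$-th defining seminorm of $S^0_\rho(\X;\mathbb{B}(\mathcal{A}))$. Since $p_\epsilon\to 0$ in the Fréchet topology of $S^0_\rho$, shrinking $\epsilon_0$ forces $C|p_\epsilon|_n\leq 1/2$ for all $|\epsilon|\leq\epsilon_0$, so the operator-level Neumann series
\[
\big[\id+\mathfrak{Op}^{A_\epsilon}(p_\epsilon)\big]^{-1}\ =\ \id\ +\ \sum_{k\geq 1}(-1)^k\mathfrak{Op}^{A_\epsilon}(p_\epsilon)^k
\]
converges absolutely in $\mathbb{B}(L^2(\X)\otimes\mathcal{A})$, uniformly in $\epsilon$, giving statement (1).

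For statement (2), I would construct a candidate symbol $q_\epsilon$ by formally summing the Moyal--Neumann series $q_\epsilon:=\sum_{k\geq 1}(-1)^k p_\epsilon^{\sharp^{B_\epsilon}k}$; each term lies in $S^0_\rho(\X;\mathbb{B}(\mathcal{A}))$ by Theorem~\ref{T.A.23}(1) and Remark~\ref{R.A.24}. Once the series is shown to converge in the Fréchet topology of $S^0_\rho$ uniformly for $|\epsilon|\leq\epsilon_0$, the identity $\mathfrak{Op}^{A_\epsilon}(p_\epsilon^{\sharp^{B_\epsilon}k})=\mathfrak{Op}^{A_\epsilon}(p_\epsilon)^k$ combined with the continuity $S^0_\rho\ni r\mapsto\mathfrak{Op}^{A_\epsilon}(r)\in\mathbb{B}(L^2(\X)\otimes\mathcal{A})$ from Proposition~\ref{P.A.26} will identify $\mathfrak{Op}^{A_\epsilon}(q_\epsilon)$ with the operator-level Neumann sum minus $\id$. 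The symbol $q_\epsilon$ is determined unambiguously because $\mathfrak{Op}^{A_\epsilon}:\mathscr{S}'(\Xi)\to\mathbb{B}(\mathscr{S}(\X);\mathscr{S}'(\X))$ is injective (see \cite{MP1}), so relation \eqref{A.46} follows, and the vanishing $q_\epsilon\to 0$ in $S^0_\rho$ will be read off from the same uniform estimates.

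The main obstacle is the convergence of the symbolic Neumann series, because the naive bound extracted from Theorem~\ref{T.A.23}(2) is of the form $|p\sharp^{B_\epsilon}q|_n\leq C_n|p|_{n'}|q|_{n'}$ with $n'>n$ in general, so iterating loses regularity at every step and no geometric decay in $k$ is visible. To circumvent this I would revisit the oscillatory representation \eqref{A.40}--\eqref{A.41} of the composition and, using Hypothesis~H.1 to bound the magnetic phase $\omega^{B_\epsilon}$ and its derivatives uniformly in $\epsilon$, derive a Moser/Leibniz-type refinement
\[
|p\sharp^{B_\epsilon}q|_n\ \leq\ C_n\bigl(|p|_{n'}\,|q|_{\tilde n}\ +\ |p|_{\tilde n}\,|q|_{n'}\bigr),
\]
in which $|\cdot|_{\tilde n}$ is a fixed low-order seminorm (controlling only the symbol and a few derivatives) independent of $n$. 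Because $\{p_\epsilon\}_{|\epsilon|\leq\epsilon_0}$ is bounded in every seminorm of $S^0_\rho$ uniformly in $\epsilon$, iteration of this refinement yields $|p_\epsilon^{\sharp^{B_\epsilon}k}|_n\leq k\,C_n^{\,k}\,M_{n'}\,|p_\epsilon|_{\tilde n}^{\,k-1}$ with $M_{n'}$ independent of $\epsilon$ and $k$. Since $\tilde n$ is fixed and $p_\epsilon\to 0$ in $S^0_\rho$, shrinking $\epsilon_0$ once makes $|p_\epsilon|_{\tilde n}$ arbitrarily small for $|\epsilon|\leq\epsilon_0$; this delivers absolute convergence of the Moyal--Neumann series in every seminorm, uniform boundedness of $\{q_\epsilon\}$ in $S^0_\rho$, and $q_\epsilon\to 0$ in $S^0_\rho$ as $\epsilon\to 0$, completing the proof.
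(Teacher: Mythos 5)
Part~(1) of your proposal matches the paper's argument: Proposition~\ref{P.A.26} with $s=m=0$ plus $p_\epsilon\to0$ in the Fr\'echet topology gives $\|\mathfrak{Op}^{A_\epsilon}(p_\epsilon)\|_{\mathbb{B}(L^2(\X)\otimes\mathcal{A})}\leq 1/2$ after shrinking $\epsilon_0$, and the Neumann series converges in operator norm.

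Part~(2) however has a genuine gap. You correctly identify that the bound from Theorem~\ref{T.A.23}(2) is of the form $|p\sharp^{B_\epsilon}q|_n\leq C_n|p|_{n'}|q|_{n'}$ with $n'>n$, so the naive symbolic Neumann series loses regularity at each step. Your proposed fix --- a Moser-type refinement $|p\sharp^{B_\epsilon}q|_n\leq C_n(|p|_{n'}|q|_{\tilde n}+|p|_{\tilde n}|q|_{n'})$ with $\tilde n$ fixed and independent of $n$ --- is asserted but not proved, and it is far from clear that such a tame estimate holds for the magnetic Moyal product: the composition \eqref{A.40} involves infinitely many derivatives of both factors together with the phase $\omega^{B_\epsilon}$, and no such two-sided tame bound is established anywhere in the paper or the references it cites. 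More seriously, even granting the estimate, your derived bound $|p_\epsilon^{\sharp^{B_\epsilon}k}|_n\leq k\,C_n^k\,M_{n'}\,|p_\epsilon|_{\tilde n}^{k-1}$ requires $C_n|p_\epsilon|_{\tilde n}<1$ for each $n$ in order for the series in the $n$-th seminorm to converge; since there is no reason for $C_n$ to be bounded as $n\to\infty$, you would need $|p_\epsilon|_{\tilde n}$ to vanish, i.e. $\epsilon_0=0$, so the shrinking of $\epsilon_0$ cannot be done ``once'' as you claim. This is the well-known reason why spectral invariance of H\"ormander-type pseudodifferential algebras is not proved by summing the symbolic Neumann series.

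The paper avoids this obstacle entirely. It invokes the spectral invariance result of~\cite{IMP2} (Section~6.1), whose proof rests on Beals--Bony--type commutator criteria rather than on Neumann-series summation, to obtain directly a symbol $r_\epsilon\in S^0_\rho\big(\X;\mathbb{B}(\mathcal{A})\big)$, bounded uniformly in $\epsilon$, with $\big[\id+\mathfrak{Op}^{A_\epsilon}(p_\epsilon)\big]^{-1}=\mathfrak{Op}^{A_\epsilon}(r_\epsilon)$. The decay of the correction then comes from the elementary algebraic identity
\[
\big[\id+\mathfrak{Op}^{A_\epsilon}(p_\epsilon)\big]^{-1}\ =\ \id\,-\,\mathfrak{Op}^{A_\epsilon}(p_\epsilon)\,\mathfrak{Op}^{A_\epsilon}(r_\epsilon),
\]
so $q_\epsilon:=-p_\epsilon\sharp^{B_\epsilon}r_\epsilon$, and $q_\epsilon\to0$ in $S^0_\rho$ follows from the continuity of the Moyal product (Remark~\ref{R.A.24}), since $p_\epsilon\to0$ and $\{r_\epsilon\}$ is bounded. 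You should replace your symbolic Neumann-series construction by this invocation of spectral invariance and the one-line algebraic identity; without it, the convergence of your series is unsupported.
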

\begin{proof}
The first statement above is quite evident once we notice that following Proposition \ref{P.A.26} we can choose some small enough $\epsilon_0>0$ such that $\left\|\mathfrak{Op}^{A_\epsilon}(p_\epsilon)\right\|_{\mathbb{B}(L^2(\X)\otimes\mathcal{A})}\leq(1/2)$ for any $\epsilon\in[-\epsilon_0,\epsilon_0]$. By a straightforward modification of the arguments given in \textsection 6.1 from \cite{IMP2} in order to deal with operator-valued symbols, we deduce that there exists a bounded subset $\{r_\epsilon\}_{|\epsilon|\leq\epsilon_0}$ in $S^0_\rho\big(\X;\mathbb{B}(\mathcal{A})\big)$ such that:
\begin{equation}\label{A.47}
\big[\id\,+\,\mathfrak{Op}^{A_\epsilon}(p_\epsilon)\big]^{-1}\ =\ \mathfrak{Op}^{A_\epsilon}(r_\epsilon).
\end{equation}
The equality \eqref{A.46} follows if we notice that
$$
\big[\id\,+\,\mathfrak{Op}^{A_\epsilon}(p_\epsilon)\big]^{-1}\ =\ \id\,-\,\mathfrak{Op}^{A_\epsilon}(p_\epsilon)\big[\id\,+\,\mathfrak{Op}^{A_\epsilon}(p_\epsilon)\big]^{-1}\ =\ \id\,-\,\mathfrak{Op}^{A_\epsilon}(p_\epsilon)\mathfrak{Op}^{A_\epsilon}(r_\epsilon)
$$
and also that Remark \ref{R.A.24} implies that $q_\epsilon:=-p_\epsilon\sharp^{B_\epsilon}r_\epsilon$ has all the stated properties.
\end{proof}

\subsection{Relativistic Hamiltonians}

We shall close this subsection with the study of a property that connects the two relativistic Schr\"{o}dinger Hamiltonians $\mathfrak{Op}^{A_\epsilon}(h_R)$ and $\big[\mathfrak{Op}^{A_\epsilon}(h_{NR})\big]^{1/2}$ with $h_R(x,\xi):=<\xi>\equiv\sqrt{1+|\xi|^2}$ and $h_{NR}(x,\xi):=1+|\xi|^2\equiv<\xi>^2$. We shall use some arguments presented in \textsection 6.3 of \cite{IMP2}.
\begin{proposition}\label{P.A.28}
There exists a bounded subset $\{q_\epsilon\}_{|\epsilon|\leq\epsilon_0}$ of symbols from $S^0_1(\X)$ such that $\underset{\epsilon\rightarrow0}{\lim}\,q_\epsilon=0$ in $S^0_1(\X)$ and 
\begin{equation}\label{A.48}
\big[\mathfrak{Op}^{A_\epsilon}(h_{NR})\big]^{1/2}\ =\ \mathfrak{Op}^{A_\epsilon}(h_R)\,+\,\mathfrak{Op}^{A_\epsilon}(q_\epsilon).
\end{equation}
\end{proposition}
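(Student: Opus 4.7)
The overall strategy is to exploit the pointwise identity $h_R^2 = h_{NR}$ together with a perturbative square-root argument based on the integral representation $\sqrt{H} = \pi^{-1}\int_0^\infty H(H+t)^{-1} t^{-1/2}\,dt$, valid for $H\geq 1$. First I would compute the magnetic composition $h_R \sharp^{B_\epsilon} h_R$ directly from \eqref{A.40}. Because $h_R$ depends only on $\xi$, every non-trivial term produced by the integration-by-parts argument in the proof of Theorem \ref{T.A.23}(3) carries a derivative of the magnetic phase $\omega^{B_\epsilon}(x,y,z)$, hence a factor built from $B_\epsilon$ and its derivatives, all of which tend to zero in $BC^\infty(\X)$ by Hypothesis H.1. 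A priori this gives $h_R \sharp^{B_\epsilon} h_R - h_{NR} =: s_\epsilon$ in $S^1_1(\X)$ with $s_\epsilon \to 0$ in $S^1_1(\X)$, but there is an additional cancellation: the leading order-$1$ contribution is proportional to the contraction $B_{\epsilon,jk}(x)\,\partial_{\xi_j} h_R(\xi)\,\partial_{\xi_k} h_R(\xi)$, which vanishes identically by the antisymmetry of $B_{\epsilon,jk}$ against the symmetric tensor $\partial_\xi h_R \otimes \partial_\xi h_R$. Hence in fact $\{s_\epsilon\}$ is bounded in $S^0_1(\X)$ and $s_\epsilon \to 0$ in $S^0_1(\X)$.

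Next, for $\epsilon_0$ small enough, both $H_\epsilon := \mathfrak{Op}^{A_\epsilon}(h_{NR})$ and $T_\epsilon^2$, where $T_\epsilon := \mathfrak{Op}^{A_\epsilon}(h_R)$, are self-adjoint with spectra inside a common interval $[c,\infty)$ with $c > 0$ (using $H_\epsilon \geq 1$, the first step and Proposition \ref{P.A.26}). Setting $S_\epsilon := H_\epsilon^{1/2}$, the standard identity
\begin{equation*}
\sqrt{A} - \sqrt{B}\ =\ \frac{1}{\pi}\int_0^\infty \sqrt{t}\,(A+t)^{-1}(A-B)(B+t)^{-1}\,dt
\end{equation*}
applied with $A = T_\epsilon^2$ and $B = S_\epsilon^2 = H_\epsilon$, combined with $\sqrt{T_\epsilon^2} = T_\epsilon$ (positivity), yields
\begin{equation*}
T_\epsilon - S_\epsilon\ =\ -\frac{1}{\pi}\int_0^\infty \sqrt{t}\,(T_\epsilon^2 + t)^{-1}\,\mathfrak{Op}^{A_\epsilon}(s_\epsilon)\,(H_\epsilon + t)^{-1}\,dt.
\end{equation*}

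I would then use the magnetic calculus to represent each resolvent as $(H_\epsilon+t)^{-1} = \mathfrak{Op}^{A_\epsilon}(\rho^S_{\epsilon,t})$ and $(T_\epsilon^2+t)^{-1} = \mathfrak{Op}^{A_\epsilon}(\rho^T_{\epsilon,t})$ with $\rho^{S/T}_{\epsilon,t} \in S^{-2}_1(\X)$ controlled uniformly in $\epsilon$ by seminorms decaying in $t$; this is a parameter-dependent version of Proposition \ref{P.A.27} obtained by the methods of \cite{IMP2}. Composing via Theorem \ref{T.A.23}, the symbol of the integrand $\rho^T_{\epsilon,t} \sharp^{B_\epsilon} s_\epsilon \sharp^{B_\epsilon} \rho^S_{\epsilon,t}$ lies in $S^{-4}_1(\X)$ with all seminorms bounded by a constant times $\langle t\rangle^{-2}$ times a seminorm of $s_\epsilon$. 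The $t$-integral against $\sqrt{t}\,dt$ is therefore convergent seminorm-by-seminorm and defines a symbol $q_\epsilon \in S^0_1(\X)$ with $T_\epsilon - S_\epsilon = \mathfrak{Op}^{A_\epsilon}(q_\epsilon)$; since $s_\epsilon \to 0$ in $S^0_1(\X)$ while all other factors remain uniformly bounded, $q_\epsilon \to 0$ in $S^0_1(\X)$, completing the proof. The main obstacle is the uniform-in-$t$ analysis of the parametric resolvent symbols $\rho^{S/T}_{\epsilon,t}$: one must verify both $\epsilon$-uniform control and integrability against $\sqrt{t}\,dt$ in every seminorm, which is precisely the point where the ellipticity of $h_{NR}$ and the Gårding-type lower bounds enter decisively.
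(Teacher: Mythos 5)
Your proof takes a genuinely different route from the paper. The paper starts from the Cauchy-type functional-calculus representation
$\big[\mathfrak{Op}^{A_\epsilon}(h_{NR})\big]^{1/2} = \mathfrak{Op}^{A_\epsilon}(h_{NR})\,\mathfrak{Op}^{A_\epsilon}\!\left(-\tfrac{1}{2\pi i}\int_{-i\infty}^{i\infty}z^{-1/2}\big(\langle\xi\rangle^2-z\big)^-\,dz\right)$,
then replaces the magnetic $\sharp^{B_\epsilon}$-inverse $\big(\langle\xi\rangle^2-z\big)^-$ by the pointwise inverse plus a controlled remainder (via Theorem~\ref{T.A.23}(3) and a Neumann-series argument patterned on Proposition~\ref{P.A.27}), keeping uniform estimates in the spectral parameter $z\in i\mathbb{R}$. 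The main term then produces $\mathfrak{Op}^{A_\epsilon}(h_{NR})\mathfrak{Op}^{A_\epsilon}(h_R^{-1})$, which a final application of Theorem~\ref{T.A.23}(3) brings to $\mathfrak{Op}^{A_\epsilon}(h_R)$ plus a vanishing $S^0_1$ error. You instead compare $T_\epsilon^2$ and $H_\epsilon$ directly via $h_R\sharp^{B_\epsilon}h_R - h_{NR}$ and then exploit the real-line integral representation for $\sqrt{A}-\sqrt{B}$. Both approaches hinge on exactly the same technical inputs: parameter-dependent control of the magnetic resolvent symbol and the convergence of the spectral integral seminorm by seminorm, so neither is substantially shorter; the paper's route has the advantage of feeding directly on machinery already set up in \cite{IMP2}, while yours is perhaps more transparent about where the $S^0_1$ gain comes from, since you can see explicitly that the weight $\sqrt{t}\,(\langle\xi\rangle^2+t)^{-2}$ integrates to $C\langle\xi\rangle^{-1}$.

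One point that deserves scrutiny: you invoke an additional cancellation to claim $s_\epsilon := h_R\sharp^{B_\epsilon}h_R - h_{NR}\in S^0_1(\X)$, on the grounds that the order-$1$ term is proportional to $B_{\epsilon,jk}\,\partial_{\xi_j}h_R\,\partial_{\xi_k}h_R$ and vanishes by antisymmetry. This is plausible, but Theorem~\ref{T.A.23}(3) as stated only delivers the first-order expansion $p\sharp^{B_\epsilon}q = pq + r_\epsilon$ with $r_\epsilon\in S^{m+m'-\rho}_\rho$; you would have to prove a second-order expansion of the magnetic Moyal product to legitimize the $S^0_1$ claim, and the paper does not supply one. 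Fortunately the cancellation is unnecessary for your argument: with the unrefined estimate $s_\epsilon\in S^1_1(\X)$ and $s_\epsilon\to 0$ in $S^1_1$, the integrand symbol $\rho^T_{\epsilon,t}\sharp^{B_\epsilon}s_\epsilon\sharp^{B_\epsilon}\rho^S_{\epsilon,t}$ still lies in $S^{-3}_1$ with seminorms $\lesssim (\langle\xi\rangle^2+t)^{-2}\langle\xi\rangle^{1}$, and the $\int_0^\infty \sqrt{t}(\langle\xi\rangle^2+t)^{-2}\,dt = C\langle\xi\rangle^{-1}$ weight still lands the result in $S^0_1(\X)$, tending to zero. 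You should drop the unproven cancellation and run the argument with $s_\epsilon\in S^1_1$; the conclusion is unaffected and the proof becomes airtight modulo the parameter-dependent resolvent estimates, which you correctly flag as the technical crux and which are also the crux of the paper's own proof.
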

\begin{proof}
Following \cite{IMP2}, if we denote by $p^-$ the inverse of the symbol $p$ with respect to the composition $\sharp^{B_\epsilon}$,
\begin{equation}\label{A.49}
\big[\mathfrak{Op}^{A_\epsilon}(h_{NR})\big]^{1/2}\ =\ \mathfrak{Op}^{A_\epsilon}(h_{NR})\mathfrak{Op}^{A_\epsilon}\left(-\frac{1}{2\pi i}\int_{-i\infty}^{i\infty}{\rm z}^{-1/2}\big(<\xi>^2-{\rm z}\big)^-\,d{\rm z}\right).
\end{equation}
Recalling the proof of point (3) in Theorem \ref{T.A.23} we can easily prove that:
\begin{equation}\label{A.50}
\big(<\xi>^2-{\rm z}\big)\,\sharp^{B_\epsilon}\,\big(<\xi>^2-{\rm z}\big)^{-1}\ =\ 1\,+\,r_{\epsilon,{\rm z}}
\end{equation}
where $<{\rm z}>r_{\epsilon,{\rm z}}\in S^0_1(\X)$ uniformly for $(\epsilon,{\rm z})\in[-\epsilon_0,\epsilon_0]\times i\mathbb{R}$ and $\underset{\epsilon\rightarrow0}{\lim}\,<{\rm z}>r_{\epsilon,{\rm z}}=0$ in $S^0_1(\X)$ uniformly with respect to ${\rm z}\in i\mathbb{R}$. Following the proof of Proposition \ref{P.A.27}, for $\epsilon_0>0$ sufficiently small there exists a symbol $f_{\epsilon,{\rm z}}$ such that $<{\rm z}>f_{\epsilon,{\rm z}}\in S^0_1(\X)$ uniformly with respect to $(\epsilon,{\rm z})\in[-\epsilon_0,\epsilon_0]\times i\mathbb{R}$, $\underset{\epsilon\rightarrow0}{\lim}<{\rm z}>f_{\epsilon,{\rm z}}=0$ in $S^0_1(\X)$ uniformly with respect to ${\rm z}\in i\mathbb{R}$ and we also have
\begin{equation}\label{A.52}
\big(1\,+\,r_{\epsilon,{\rm z}}\big)^-\ =\ 1\,+\,f_{\epsilon,{\rm z}}.
\end{equation}

From \eqref{A.49} and from the properties of the symbol $r_{\epsilon,{\rm z}}$ it follows that we can define:
\begin{equation}\label{A.53}
\big(<\xi>^2-{\rm z}\big)^-\ :=\ \big(<\xi>^2-{\rm z}\big)^{-1}\,\sharp^{B_\epsilon}\,\big(1\,+\,f_{\epsilon,{\rm z}}\big)\ =\ \big(<\xi>^2-{\rm z}\big)^{-1}\,+\,\big(<\xi>^2-{\rm z}\big)^{-1}\,\sharp^{B_\epsilon}\,f_{\epsilon,{\rm z}}.
\end{equation}
Using \eqref{A.53} in \eqref{A.49} we notice that the term $\big(<\xi>^2-{\rm z}\big)^{-1}$ produces by magnetic quantization a term of the form $\mathfrak{Op}^{A_\epsilon}(h_{NR})\mathfrak{Op}^{A_\epsilon}(h_{R}^{-1})$ and using Theorem \ref{T.A.23} this operator may be put in the form $\mathfrak{Op}^{A_\epsilon}(h_{R})+\mathfrak{Op}^{A_\epsilon}(q_\epsilon^\prime)$ where $q_\epsilon^\prime\in S^0_1(\X)$ uniformly with respect to $\epsilon\in[-\epsilon_0,\epsilon_0]$ with $\underset{\epsilon\rightarrow0}{\lim}\,q_\epsilon^\prime=0$ in $S^0_1(\X)$. If we notice that $h_{NR}\sharp^{B_\epsilon}\big(h_{NR}-{\rm z}\big)^{-1}\in S^0_1(\X)$ uniformly with respect to $(\epsilon,{\rm z})\in[-\epsilon_0,\epsilon_0]\times i\mathbb{R}$, then we can see that the second term of \eqref{A.53} gives in \eqref{A.49} by magnetic quantization an expression of the form $\mathfrak{Op}^{A_\epsilon}(q_\epsilon^{\prime\prime})$ with $q_\epsilon^{\prime\prime}\in S^0_1(\X)$ uniformly with respect to $\epsilon\in[-\epsilon_0,\epsilon_0]$ and such that $\underset{\epsilon\rightarrow0}{\lim}\,q_\epsilon^{\prime\prime}=0$ in $S^0_1(\X)$.
\end{proof}

 {\bf Acknowledgements:}
R.Purice aknowledges the CNCSIS support
under the Ideas Programme, PCCE project no. 55/2008 {\it Sisteme
diferen\c{t}iale \^{\i}n analiza neliniar\u{a} \c{s}i aplica\c{t}ii}.
\newpage

E-mail:Viorel.Iftimie@imar.ro, Radu.Purice@imar.ro

\end{document}